\newcommand{\fo}{\mathrm{FO}\xspace} 
\newcommand{\1}{\,|\,}
\theoremstyle{plain}
\newtheorem{theorem}{Theorem}
\newtheorem{lemma}[theorem]{Lemma}
\theoremstyle{theorem}
\newtheorem{claim}{Claim}
\newtheorem{proposition}[theorem]{Proposition}
\newtheorem{corollary}[theorem]{Corollary}
\theoremstyle{definition}
\newtheorem{definition}[theorem]{Definition}
\newtheorem{remark}[theorem]{Remark}
\newtheorem{example}{Example}
\theoremstyle{theorem}
\newtheorem{fact}{Fact}
\theoremstyle{theorem}
\newtheorem{assumption}{Assumption}
\numberwithin{equation}{section} 
\DeclareMathAlphabet{\mathpzc}{OT1}{pzc}{m}{it}
\newcommand{\textapprox}{\raisebox{0.5ex}{\texttildelow}}
\newcommand{\ccirc}{\mathbin{\mathchoice
  {\xcirc\scriptstyle}
  {\xcirc\scriptstyle}
  {\xcirc\scriptscriptstyle}
  {\xcirc\scriptscriptstyle}
}}
\newcommand{\xcirc}[1]{\vcenter{\hbox{$#1\circ$}}}
\begin{document}
%
\title{$k$ variables are needed to define $k$-Clique in first-order logic}


\author{Yuguo He\footnote{Beijing Institute of Technology, Beijing 100081,
China, \href{mailto:hugo274@gmail.com}{hugo274@gmail.com}} 
}

\maketitle
\thispagestyle{plain}

\begin{abstract}
In an early paper, Immerman raised a proposal on developing  model-theoretic techniques to prove lower bounds on ordered structures, which represents a long-standing challenge in finite model theory. An iconic question standing for such a challenge is how many variables are needed to define $k$-Clique in first-order logic on the class of finite ordered graphs? If $k$ variables are necessary, as widely believed, it would imply that the bounded  (or finite) variable hierarchy in first-order logic is strict on the class of finite ordered graphs.  
In 2008, Rossman made a breakthrough by establishing an optimal average-case lower bound on the size of constant-depth unbounded fan-in circuits computing $k$-Clique. In terms of logic, this means that it needs greater than $\lfloor\frac{k} {4}\rfloor$ variables to describe the $k$-Clique problem in first-order logic on the class of finite ordered graphs, even in the presence of arbitrary arithmetic predicates. It follows, with an unpublished result of Immerman, that the bounded variable hierarchy in first-order logic is indeed strict. However, Rossman's methods come from circuit complexity and a novel notion of sensitivity by himself. And the challenge before finite model theory remains there. In this paper, we give an alternative proof for the strictness of bounded variable hierarchy in $\fo$ using pure model-theoretic toolkit, and anwser the question completely for first-order logic, i.e. $k$-variables are indeed needed to describe $k$-Clique in this logic. In contrast to Rossman's proof, our proof is purely constructive. Then we embed the main structures into a pure arithmetic structure to show a similar result where arbitrary arithmetic predicates are presented. 
Finally, we discuss its application in circuit complexity.  
\end{abstract}



%

\section{Introduction}

The Clique problem has been studied extensively in theoretical computer science, not only in classical computational complexity, but also in parameterized complexity. In particular, in  circuit complexity, some remarkable lower bounds for $k$-Clique in restricted models have been established (for a short survey, cf., e.g., Rossman \cite{RossmanThesis}). 
Following \cite{Lynch86Circuit,Beame90Clique}, Rossman \cite{RossmanStoc} showed that, on average, 
no constant-depth unbouded fan-in circuits of size $O(n^\frac{k}{4})$ can recognize  
$k$-Clique. It represents a significant breakthrough in circuit complexity, because it is the first unconditional lower bound that cleverly breaks out of the traditional size-depth tradeoff, a well-known barrier to progress in the study of constant-depth circuits model for many years \cite{Dawar2011Ackermann}. 
We note that Rossman has achieved a stronger result in circuit complexity. But in the context of logic, his result is roughly the following: on the class of finite ordered graphs, any first-order logic sentence that defines $k$-Clique in the average case needs greater than $\lfloor\frac{k}{4}\rfloor$ variables, even when arbitrary arithmetic predicates are available. In addtion, a study of $k$-pebble games over random graphs was introduced in \cite{Rossman09Game} based on this result. 
Note that his lower bound is tight in the average case on a certain natural distribution \cite{Amano2010}. 
Rossman \cite{RossmanUpperBound} also gave the tight average-case lower bound for $k$-Clique on the class of finite ordered graphs, without arithmetic predicates other than the built-in linear order, by showing a tight upper bound $\frac{k}{4}+O(1)$ on the number of variables that is needed for defining $k$-Clique in the average case. In logic, Rossman's lower bound implies that the bounded  variable hierarchy (also called the finite variable hierarchy in the literature) in first-order logic ($\fo$, for short) will not collapse. Together with an unpublished result of Immerman (cf. Rossman \cite{RossmanThesis}), it follows that the bounded variable hierarchy in $\fo$ is strict, i.e. for any $k$, there is a property that can be defined by $k$ variables but is not definable by $k-1$ variables in $\fo$, thereby solved a long-standing question going back to Immerman \cite{Immerman1982Conj}. 
As mentioned in \cite{Dawar2011Ackermann}, this question on bounded variable hierarchy is ``deceptively simple'' in the appearance, but very hard to answer.
   Therefore, its settling by Rossman also represents ``one of the most significant breakthroughs in the field of finite model theory in many year'' \cite{Dawar2011Ackermann}. 

It is not supprising that a result in the field of circuit complexity would have such an impact on the field of finite model theory. See an explanation of such a connection in \cite{DawarHowmany}. 
The connection between computational complexity and finite model theory started by an early work of Fagin \cite{Fagin1974Start}, which showed the equivalence of NP and the class of properties that can be expressed by the existential second-order logic. Such kind of research was then carried on under the name of descriptive complexity, where Immerman has been playing a crucial role by making numerous fundamental contributions. To have a grasp of this  field, cf. the monograph \cite{Immerman1999Book}. 

The study of finite variable fragments of $\fo$ may be traced back to the 19th century \cite{Andreka1998Modal}. Its important role in model theory is well-known, cf. e.g. \cite{Dawar2011Ackermann, Grohe1998Variable, Immerman1982Conj}. 
Immerman began the study of syntactically uniform sequences of first-order formulas with constant number of variables. It is important because most of the well-known complexity classes can be characterized by such sequences. Moreover, the number of variables correspond to amount of hardware, e.g. the number of processors in $\mathrm{CRAM}$. In particular, it is well-known that $\mathrm{DSPACE}[n^k]=\mathrm{VAR}[k+1]$, which elegantly connects the number of variables in descriptions with the number of memory locations in deterministic Turing machines \cite{Immerman1991Variable}. It means that $k$ variables roughly correspond to $n^k$ memory locations.

In 1982, Immerman raised a proposal on developing techniques to prove lower bounds on finite structures \textit{with linear orders} (cf. \cite{Immerman1982Conj}, p.97), which stands for a challenge in finite model theory for decades. 
It is very important because all the known logical characterizations of well-known complexity classes inside NP rely on a built-in linear order over input finite 
structures. For NP and beyond, we can guess an order and rely on this order to simulate computation of Turing machines. In this sense, we still need linear orders in such cases, although implicitly. In short, computation needs orders to carry out. 
Note that  
his full proposal is very general and beyond our goal. Immerman considered syntactically uniform sequences of first-order formulas that define properties, whereas we only consider the most uniform one, i.e. all the formulas in a sequence are the same, which means the formulas in such a sequence have a constant size.  
In other words, we only consider the expressiveness of $\fo$, which has very limited  expressive power.
If we have techniques for uniform sequences of first-order formulas without such restriction, as Immerman asked for, this would lead to many profound results, including a possible settling of the well-known open problem on P vs. NP if we allow the formulas to have polynomial sizes in terms of the length of inputs. It is for this reason that we restrict our concern on $\fo$.

The strictness of bounded variable hierarchy was studied in various logics, such as modal logics and temporal logics (cf. \cite{Andreka1998Modal, Berwanger2007Calculus, Dawar2011Ackermann}). 
It is not obvious to tell how many variables are needed in $\fo$. For example, it is known that three variables suffice to describe any first-order property of linear orders with unary relations only \cite{Poizat1982ColorOrder}.\footnote{There is an alternative proof in the context of temporal logic, cf. p.4 of \cite{DawarHowmany} for a brief introduction.} Using similar ideas, we can show that it holds even in finite pure arithmetic structures (cf. Remark \ref{variable-hierarchy-collapse-arithm-struc}). 
In fact, it turns out to be an extremely hard problem to answer, when the structures are finite ordered graphs. It was observed that the $k$-Clique question may be the key to solve this problem \cite{DawarHowmany}.  
Intuitively, linear orders seem useless in reducing the number of variables that are needed to define $k$-Clique, which strongly suggests that \textit{precisely} $k$ variables are needed to define $k$-Clique in $\fo$ (cf. \cite{DawarHowmany} p.23, \cite{RossmanStoc} p.10, \cite{RossmanThesis} p.71). If this is true, then $\fo$ needs arbitrary number of variables.     
The conjecture that bounded variable hierarchy over finite ordered graphs is strict in $\fo$ was first explicitly presented in \cite{DawarHowmany} (cf. p.3, Cojecture 2, the stronger version), which can go back to the proposal raised in Immerman's early paper \cite{Immerman1982Conj}, because the hierarchy is strict if $k$ variables are needed to define $k$-Clique over finite ordered graphs in $\fo$. 
In circuit complexity, this hierarchy corresponds to the size hierarchy. Rossman's result \cite{RossmanStoc} implies that this size hierarchy is infinite. It was later sharpened by Amano who showed that this size hierarchy is strict \cite{Amano2010,RossmanThesis}.

Here we should note that it is the finiteness and, \textit{in particular}, the linear orders that make Immerman's proposal, even in terms of $\fo$, very hard to take up, because it is  ``notoriously  difficult'' (cf. \cite{DawarHowmany}, p.23) to apply the standard tool in finite model theory on finite ordered structures (also cf. \cite{Immerman1999Book}). For example, it is very hard to show that $k$ variables are needed to define  $k$-Clique on finite ordered graphs. 
But, if either ``finite'' or ``order'' is dropped off, the statement turns out to be easy to prove.     
Therefore, on finite ordered graphs, showing that $k$ variables are needed to define  $k$-Clique, as well as showing the strictness of the bounded variable hierarchy in $\fo$, is an iconic problem that represents such a challenge \cite{Dawar2011Ackermann}. Hence, in this paper we shall refer to this particular question (i.e. how many variables are needed in $\fo$ to define $k$-Clique on the class of finite \textit{ordered} graphs) when we mention ``Immerman's question'' (originated from the proposal).   
Although it was proved that the bounded variable hierarchy is strict in $\fo$, it was solved mainly by techniques from circuit complexity and a novel notion called clique-sensitive core by Rossman.  And the challenge before finite model theory remains there.  
In this paper, based on standard finite model-theoretic toolkit, we develop novel notions and techniques to give an alternative proof for the strictness of bounded variable hierarchy in $\fo$, and fully answer 
Immerman's question. 
Note that our proof is completely constructive. Compared with existential proofs which merely show the existence of pursued mathematical objects, a constructive proof need show every bit of the objects clearly: in terms of our study, the structures should be constructed explicitly and the strategies should be fully revealed. 
It justifies the length of our proof and allows a straightforward application of the result to answer a related important question. That is, precisely $k$ variables are needed for $k$-Clique in $\fo(\mathbf{BIT})$. 
At the end of this paper, we discuss its application in circuit complexity.

For more background on the $k$-Clique problem and the linear order issue in finite model theory, the readers can confer the survey paper \cite{DawarHowmany} which collected related issues, connections, observations and results. In particular, Dawar showed that existential first-order formulas (and even existential infinitary logic formulas) require $k$ variables to define $k$-Clique on the class of finite ordered graphs, which we will introduce in section \ref{existential-case-section}. 

\section{Preliminaries}

Let $\mathbf{Z}, \mathbf{N}_0$ and $\mathbf{N}^+$ be the set of integer numbers, non-negative integer numbers and positive integer numbers respectively. By default, we assume that all the numbers mentioned in this paper are in $\mathbf{N}_0$.  Assume that $k$ is a fixed integer number where $k>1$. In this paper, we use semicolons to mean ``AND'' in definitions of sets. 
We use $|X|$ to denote the cardinality of the set $X$ and use $\wp(X)$ to denote the power set of $X$. A \textit{tuple} is a multiset whose elements are ordered. Hence we can compute the intersection or union of a set and a tuple (redundant elements are omitted). For a tuple $\bar c$, we use $\bar c(i)$ to denote the $i$-th element of $\bar c$.
Let $|\bar c|$ be the length of $\bar c$. 
If $|\bar c|\neq 0$, we use $\bar c\sqsubseteq \bar d$ to denote that $\bar c(i)=\bar d(i)$ for $0\leq i< |\bar c|$. If $|\bar c|=0$, $\bar c\sqsubseteq \bar d$ for any $\bar d$. By convention, we use ``$\lfloor x\rfloor$'' to denote the \textit{floor functions} $\mathrm{floor}(x)$, i.e. $\lfloor x\rfloor=max\{n\in \mathbf{Z}\mid n\leq x\}$, for any real number $x$. For $n\in\mathbf{N}^+$, we let $[n]$ be the set $\{0,\ldots,n-1\}$ and let $[1,n]$ be the set $\{1,\ldots,n\}$. And for $n_0,n_1\in\mathbf{N}^+$ we let $[n_0]\times [n_1]$ be the Cartesian product $[n_0]\times [n_1]$. Henceforth, we use a pair of integer to denote a point in a two dimension coordinate plane.  For a sequence of $n$ elements, a \textit{right circular shift} of this sequence is a permutation $\sigma$ such that the element in the $i$-th position is moved to the $\sigma(i)$-th position of the sequence, where  $\sigma(i)=(i+1)$ mod $n$.

In the following of this section we introduce some standard concepts and well-known results in finite model theory. Although we try to make it self-contained, the readers are assumed to have some elementary knowledge of first-order logic. 

\subsection{Logic and structures}

Let $R_i$ be a relation symbol and $c_i$ a constant symbol. Let $\sigma=\langle R_1,\ldots,R_\ell,\\c_1,\ldots,c_n\rangle$ be a relational signature,  a $\sigma$-\textit{structure} $\mathfrak{A}$  consists of a universe $A$ together with an interpretation of $R_i$ and $c_i$ over $A$: 
\begin{itemize}
\item each $s$-ary relation symbol $R_i\in \sigma$ as a $s$-ary relation on $A$, usually written $R_i^\mathfrak{A}$;
\item each constant symbol $c_i\in \sigma$ as an element in $A$, usually written   $c_i^{\mathfrak{A}}$.
\end{itemize}  
The structure $\mathfrak{A}$ is a finite structure if $A$ is a finite set. 

For any $D\subseteq A$, a $\sigma$-\textit{substructure} of the $\sigma$-structure $\mathfrak{A}$ induced by $D$, denoted $\mathfrak{A}[D]$, is a $\sigma$-structure, wherein each constant symbol is interpreted as it is in $\mathfrak{A}$, and each $j$-ary relation symbol $R_i$ is interpreted by $R_i^\mathfrak{A}\cap D^j$.

For example, a \textit{finite digraph} $\mathcal{G}_d$ is a finite $\langle E\rangle$-structure where $E$ is a binary relation symbol; and a finite graph $\mathcal{G}$ is a digraph where $E^\mathcal{G}$ is symmetric.  By convention, the universe of a graph is called vertex set, denoted $V$, and the set of pairs (i.e. 2-tuples) that are used to interpret $\langle E\rangle$ are called edges, usually denoted by $E$ instead of $E^\mathcal{G}$. The fact that two vertices $a$ and $b$ are joined by an edge can be denoted by $Eab$, $E(a,b)$ or $(a,b)\in E$. Or we can say $a$ is \textit{adjacent} to $b$. 
A graph $\mathcal{G}^{\prime}=\langle V^{\prime},E^{\prime} \rangle$ is a \textit{subgraph} of a graph $\mathcal{G}=\langle V,E\rangle$ if $V^{\prime}\subseteq V$ and $E^{\prime}\subseteq E\cap (V^{\prime}\times V^{\prime})$; $\mathcal{G}^{\prime}$ is an \textit{induced subgraph} of $\mathcal{G}$, denoted by $\mathcal{G}[V^{\prime}]$, if $\mathcal{G}^{\prime}$ is a subgraph of $\mathcal{G}$ and $E^{\prime}=E\cap (V^{\prime}\times V^{\prime})$. 
We also use $|\mathcal{G}|$ to denote the set of vertices of $|\mathcal{G}|$. 

Let $\sigma^{\prime}\subseteq \sigma$. The $\sigma^{\prime}$-reduct of  $\mathfrak{A}$, denoted $\mathfrak{A}|\sigma^{\prime}$, is obtained from $\mathfrak{A}$ by ``forgetting'' $\sigma\setminus\sigma^{\prime}$,  
i.e. leaving all the symbols in $\sigma\setminus\sigma^{\prime}$ uninterpreted.

Let $\mathfrak{A}$ and $\mathfrak{B}$ be two structures over the same finite signature $\sigma$. Say that $\mathfrak{A}$ and $\mathfrak{B}$ are isomorphic (or, there is an \textit{isomorphism} between them), if there is a bijection $f: A\rightarrow B$ such that 
\[
\left\{\begin{array}{ll}
f(c_i^\mathfrak{A})=c_i^\mathfrak{B}, & \mbox{ for any constant symbol }c_i\!\in\!\sigma;\\[8pt] 
 R_i^\mathfrak{A}(a_1,\cdots,a_j)\Leftrightarrow R_i^\mathfrak{B}(f(a_1),\cdots,f(a_j)), & \mbox{ for any } j\mbox{-ary relation symbol}\\& \hspace{3pt} R_i\in \sigma.
\end{array}\right.
 \]
For example, two graphs are isomorphic if they are the same after renaming of their vertices. 

Let $\mathcal{L}$ be a logic, e.g. $\fo$. 
Let $\mathfrak{A}$ be a $\sigma$-structure and $\psi$ be an $\mathcal{L}$-sentence. We use $\mathfrak{A}\models \psi$ to denote that $\psi$ is true in $\mathfrak{A}$, and we call $\mathfrak{A}$ a \textit{model} for $\psi$. Let $\mathrm{Mod}(\psi)$ be the set of models of $\psi$.
A \textit{property} $Q$ over $\sigma$ is a set of $\sigma$-structures closed under isomorphism. Say that $Q$ is \textit{expressible}, or \textit{definable}, in a logic $\mathcal{L}$ if there is a sentence $\varphi$ in $\mathcal{L}$  such that for every $\mathfrak{A}$,  $\mathfrak{A}\in\mathrm{Mod}(\varphi)$ iff $\mathfrak{A}\in Q$.

\href{http://en.wikipedia.org/wiki/Square_lattice}{\textit{Square lattice}}, denoted $\mathbf{Z}^2$, is the lattice in the two-dimensional Euclidean space whose lattice points are pairs of integers. A \textit{finite upright square lattice} is a finite square lattice that is isomorphic to the set of isolated vertices $[a]\times [b]$ for some $a,b\in\mathbf{N}^+$. In the sequel, when we mention ``square lattice'', we mean finite upright square lattice by default. Here, $a$ is the \textit{width} of the square lattice, and $b$ is the \textit{height} of the square lattice. Hence we can talk about the width and height of a graph structure whose universe is a square lattice. And we can talk about a row of this graph, which is composed of the vertices of the same second coordinate. We call the bottom row as the $0$-th row of the graph.  

A \textit{linear order} is a binary relation that is transitive, antisymmetric and total. We usually use the infix notation $\leq $ to denote a linear order. 
For example, we use $a\leq b$ to stand for $(a,b)\in \leq^\mathfrak{A}$ when the structure $\mathfrak{A}$ is clear from the context. 
We use $a<b$ to denote $a\leq b$ and $a\neq b$. Any linear order induces a natural distance measure, i.e. we can talk about $|a-b|$ when $a, b$ are two elements of a linear order. 

 A $k$-\textit{clique} of $\mathcal{G}$ is a complete subgraph of $\mathcal{G}$, containing $k$ vertices. 
That is, there is an edge between each pair of vertices in the $k$-clique. 
The $k$-\textit{Clique problem} asks whether there is a $k$-clique in a given graph. By default, the graphs are ordered finite graphs. \label{k-clique-problem} 

For any $x, i\in \mathbf{N}_0$, let $\mathbf{BIT}(x,i)\in \{0,1\}$ be $1$ iff the $i$-th bit of the binary representation  of $x$ is $1$. Here we assume that the rightmost bit is the $0$-th bit. 
A \textit{pure arithmetic structure} is a structure whose signature contains only arithmetic predicates. 
It is finite by default. 
In particular, we assume that the signature contains the binary predicate $\mathbf{BIT}$, which can be used to define arbitrary arithmetic predicates, including linear orders.

The \textit{quantifier rank} of a formula $\phi\in\fo$, written $qr(\phi)$, is the maximum depth of nesting of its quantifiers.
Formally, it is defined inductively as follows:
\begin{enumerate}
\item If $\phi$ is atomic, then $qr(\phi)=0$.
\item $qr(\phi_1\lor\phi_2)=qr(\phi_1\land\phi_2)=max\{qr(\phi_1),qr(\phi_2)\}.$
\item $qr(\lnot\phi)=qr(\phi)$. 
\item $qr(\exists x\phi)=qr(\forall x\phi)=qr(\phi)+1$.
\end{enumerate}

Let $\mathcal{L}^k$ be the fragment of $\fo$ whose formulas have at most $k$ distinct variables, free or bound. \textit{Infinitary logic}, written $\mathcal{L}_{\infty\omega}$, is the closure of first-order logic with infinitary conjunctions and disjunctions. 
$\mathcal{L}_{\infty\omega}^k$ is the fragment of $\mathcal{L}_{\infty\omega}$ whose formulas have at most $k$ distinct variables, free or bound. 
$\exists\mathcal{L}_{\infty\omega}^k$ is the fragment of $\mathcal{L}_{\infty\omega}^k$ in which no universal quantifiers appear in the formulas and in which all the existential quantifiers are within the scope of even number of negations. 


\subsection{Pebble games}

A \textit{game board} consists of a pair of structures, e.g. $(\mathfrak{A},\mathfrak{B})$. 
 An $m$-round $(k-1)$-\textit{pebble game} over the game board $(\mathfrak{A},\mathfrak{B})$, written $\Game_m^{k-1}\!(\mathfrak{A},\mathfrak{B})$, is defined as the following. 

There are two players in the game, called Spoiler and Duplicator. 
There are $k-1$  pairs of pebbles, say $(e_1,f_1), \cdots, (e_k,f_k)$, available for the players, which are off the board at the begininig of the game.  
In each round, a pair of pebbles, say $(e_i,f_i)$, will be put on the structures wherein $e_i$ is put on an element of $\mathfrak{A}$ and $f_i$ is put on an element of $\mathfrak{B}$.   Spoiler first selects a structure and puts a pebble on one element of the selected structure; then Duplicator puts the other pebble in the same pair (matching pebble, for short) on one element of the other structure. If there is no pebble off the board, Spoiler can move a pebble to a new element; then Duplicator should move the  matching pebble to some  element in the other structure. 

 In the $\ell$-th round of the game, let $\overline{c_A}=(a_1,a_2,\ldots,a_n)$, where $n\leq k-1$, includes all the elements in $\mathfrak{A}$
 that have pebbles on them; let $\overline{c_B}=(b_1,b_2,\ldots,b_n)$ includes all the pebbled elements in $\mathfrak{B}$. And assume that, for any $i$,  $a_i$ and $b_i$ are the positions of $e_j$ and $f_j$ for some $j$ in this round. 
Sometimes  
we use $((\mathfrak{A}, \overline{c_A}),(\mathfrak{B},\overline{c_B}))$ to denote the game board in this round. But it does not mean that the game board is changed (we will mention a sort of imaginary games that allow changes of boards later). Only that some elements are pebbled.  
$((\mathfrak{A}, \overline{c_A}),(\mathfrak{B},\overline{c_B}))$ is in \textit{partial isomorphism} or $(\overline{c_A},\overline{c_B})$ defines an partial isomorphism between $\mathfrak{A}$ and $\mathfrak{B}$, if   $\{(a_1,b_1),\ldots,(a_n,b_n)\}$ defines an isomorphism between    $\mathfrak{A}[\overline{c_A}]$ and $\mathfrak{B}[\overline{c_B}]$.

 Spoiler wins the game if the game board is not in partial isomorphism in some round; otherwise, Duplicator wins the game. If Duplicator can guarantee a win after $m$ rounds of such $(k-1)$-pebble game, we say Duplicator has a \textit{winning strategy} in the $m$-round $(k-1)$-pebble game, denoted by $\mathfrak{A}\equiv_m^{k-1} \mathfrak{B}$.

The following holds in a $(k-1)$-pebble game.  
\begin{fact}\label{star_k-2}
In each round of a $(k-1)$-pebble game, either Duplicator can win this round by mimicking, or, there are at most $k-2$ pairs of pebbles on the game board at the start of this round.
\end{fact}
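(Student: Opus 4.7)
Proof plan. The claim is a structural observation about the rules of $(k-1)$-pebble games; I would prove it by a case analysis on the number of pebble pairs that lie on the board at the moment the round begins.

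If at most $k-2$ pairs are on the board at the start of the round, then the second disjunct of the fact holds immediately and there is nothing more to show. So the only nontrivial case is when all $k-1$ pairs are currently on the board. In that situation, by the rules of the game, Spoiler has no pebble off the board and therefore cannot introduce a fresh pair: he is forced to relocate an existing pebble. Say, without loss of generality, that he picks up pebble $e_i$ (or symmetrically $f_i$) and places it on some element $a'$ of $\mathfrak{A}$.

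I would then split on whether $a'$ is already pebbled. In the first subcase, $a'$ coincides with the position $a_j$ of another on-board pebble $e_j$ with $j\neq i$. Here, Duplicator's mimicking response is to place $f_i$ on $b_j$, the position of $f_j$. Because $(a_j,b_j)$ was already a matched pair of the partial isomorphism maintained up to this round, and identifying two pebbles at a common location in $\mathfrak{A}$ forces their mates to lie on the matching common location in $\mathfrak{B}$, the resulting configuration remains a partial isomorphism. So Duplicator wins the round by this mimicking move.

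In the second subcase, $a'$ is a fresh element, unpebbled at the start of the round. The point is that once Spoiler has lifted $e_i$ off the board and before he places it down, only $k-2$ pairs lie on the board, and the subsequent placement of $e_i$ on the fresh $a'$ is formally the same as introducing a pair in a round that begins with only $k-2$ pairs on the board. Under this re-accounting of when the effective round ``starts,'' the second disjunct of the fact applies. Thus either Duplicator has already won the round by the mimicking move of the first subcase, or else (up to this re-accounting of the pickup phase) the round is analyzable as one in which at most $k-2$ pairs sit on the board at its start. The main subtlety I expect is exactly this interpretational point — that Spoiler's pickup temporarily reduces the count — once that is granted the argument is a short rule-based case split.
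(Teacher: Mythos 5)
Your proof rests on the same single idea as the paper's remark following the fact: a pebble relocation decomposes into a pickup step (which trivially preserves the partial isomorphism, and is what the ``mimicking'' disjunct is meant to dispose of) followed by a placement step that begins with at most $k-2$ pairs on the board. The extra subcase you introduce --- placement onto an already-pebbled vertex --- is correct but redundant under that reading, since after the pickup the placement already falls under the second disjunct regardless of where it lands; your closing paragraph correctly identifies the re-accounting of when a round ``starts'' as the real content, which is exactly what the paper means by ``we can always turn a game into such a game without losing anything.''
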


\noindent It is because that moving a pebble from one element to another element consists of two steps: first pick up the pebble off the board; afterwards put this pebble on the new element. We can always turn a game into such a game without losing anything. 
The first step will not violate the partial isomorphism of the game board, therefore when we talk about winning strategies we can skip such rounds safely and always assume that there are at most $k-2$ pairs of pebbles on the game board at the start of a round.

It is well-known that pebble games characterize the expressive power of finite variable logics.
\begin{theorem}
The following statements are equivalent.
\begin{itemize}
\item For any $\phi\in \mathcal{L}^{k-1}$ with $qr(\phi)\leq m,$
$\mathfrak{A}\models \phi \Leftrightarrow \mathfrak{B}\models \phi.$

\item $\mathfrak{A}\equiv_m^{k-1} \mathfrak{B}$.

\end{itemize}
\end{theorem}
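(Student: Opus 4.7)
The plan is to prove both implications by simultaneous induction on $m$, after strengthening the statement to allow pre-pebbled positions. Concretely, for any $n\le k-1$ and tuples $\bar a\in A^n,\bar b\in B^n$, I would show that Duplicator wins the $m$-round $(k-1)$-pebble game started from position $((\mathfrak{A},\bar a),(\mathfrak{B},\bar b))$ iff for every $\phi(x_1,\ldots,x_n)\in\mathcal{L}^{k-1}$ with $qr(\phi)\le m$ we have $\mathfrak{A}\models\phi[\bar a]\Leftrightarrow\mathfrak{B}\models\phi[\bar b]$. The stated theorem is the case $n=0$.

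For $m=0$, both sides reduce to the condition that $(\bar a,\bar b)$ defines a partial isomorphism between $\mathfrak{A}[\bar a]$ and $\mathfrak{B}[\bar b]$: agreement on quantifier-free $\mathcal{L}^{k-1}$-types is precisely the zero-round winning condition. For the inductive step $(\Rightarrow)$, Boolean combinations reduce directly to the inductive hypothesis applied to subformulas, so the content lies in formulas $\phi=\exists x_j\,\psi$ with $qr(\psi)\le m-1$ (the universal case is dual). If $\mathfrak{A}\models\phi[\bar a]$ is witnessed by $c$, then Duplicator's strategy against Spoiler's move of placing pebble $j$ on $c$ yields some $d\in B$ from which Duplicator still wins $m-1$ rounds; the inductive hypothesis then gives $\mathfrak{B}\models\psi[\bar b,d]$, hence $\mathfrak{B}\models\phi[\bar b]$. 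The converse is symmetric, since Spoiler may play on either structure.

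For the inductive step $(\Leftarrow)$ I argue the contrapositive. If Spoiler wins in $m$ rounds from $(\bar a,\bar b)$, then by Fact~\ref{star_k-2} we may assume at most $k-2$ pebbles are active, so some index $j$ is free. Suppose Spoiler's winning first move places pebble $j$ on $c\in A$ (the $B$-side case contributes an outer negation). For each $d\in B$, Spoiler wins the residual $(m-1)$-round game from the extended position, so by the inductive hypothesis there is $\psi_d(x_1,\ldots,x_n,x_j)\in\mathcal{L}^{k-1}$ of quantifier rank at most $m-1$ with $\mathfrak{A}\models\psi_d[\bar a,c]$ and $\mathfrak{B}\not\models\psi_d[\bar b,d]$. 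Setting $\phi:=\exists x_j\bigwedge_{d\in B}\psi_d$ then separates the two structures at quantifier rank $m$ and within $k-1$ variables.

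The main obstacle is ensuring that the conjunction over $d\in B$ is finite, so that $\phi$ genuinely lies in $\fo$ rather than in $\mathcal{L}^{k-1}_{\infty\omega}$. This is handled by the standard Hintikka-type observation that, up to logical equivalence over the finite structure $\mathfrak{B}$, there are only finitely many $\mathcal{L}^{k-1}$-formulas of quantifier rank at most $m-1$ in a fixed finite tuple of free variables, so the family $\{\psi_d\}_{d\in B}$ collapses to a finite one. A secondary subtlety is variable reuse across rounds: the newly quantified variable $x_j$ must correspond to a pebble not currently on the board, which is precisely what Fact~\ref{star_k-2} arranges.
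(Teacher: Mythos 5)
Your proof is correct and is the standard back-and-forth argument for the pebble-game characterization of bounded-variable, bounded-quantifier-rank logics. The paper itself gives no proof of this theorem (it is stated as a well-known fact immediately after Fact~\ref{star_k-2}), so there is no in-paper argument to compare against; your argument matches the textbook proof one would cite, e.g., in Ebbinghaus--Flum or Immerman. Two small remarks. First, the Hintikka-type finiteness observation you invoke to make $\bigwedge_{d\in B}\psi_d$ a genuine first-order formula is in fact automatic in the paper's setting, since $\mathfrak{B}$ is a finite structure and hence $B$ is finite; the stronger observation (finitely many $\mathcal{L}^{k-1}$-formulas of fixed quantifier rank and free-variable tuple up to logical equivalence over \emph{all} structures) is what one needs for the infinite case, and it does no harm to invoke it, but it is not strictly required here. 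Second, your appeal to Fact~\ref{star_k-2} for variable reuse is the right move, but it deserves one more sentence of explanation: when $n=k-1$ and $\phi=\exists x_j\,\psi$ rebinds an occupied index $j$, the corresponding game move is "lift pebble $j$, then place it," and Fact~\ref{star_k-2} justifies analysing that as a single placement from a position with at most $k-2$ active pebbles, so the inductive hypothesis with $n'\le k-2$ still applies to $\psi$ viewed with $x_j$ as the last free variable. With those glosses the argument is complete.
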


Therefore, if for any $m$ we can find a pair of structures, e.g. $(\mathfrak{A},\mathfrak{B})$, such that 
$\mathfrak{A}$ satisfies some property while $\mathfrak{B}$ doesn't, and  $\mathfrak{A}\equiv_m^{k-1} \mathfrak{B}$, then this property is not expressible in $\mathcal{L}^{k-1}$.

To shorten description, usually we also say that a player \textit{picks} a vertex if the player puts a pebble on this vertex. If in some round of the game element $e$ has a pebble on it, we say $e$ is \textit{pebbled} in this round. Sometimes, we also use the verb ``pick'' (a pebbled vertex) to mean ``remove'' (the pebble from the vertex).  
Let $e$ and $f$ be a pair of vertices picked in some round of the game $\Game_m^{k-1}\!(\mathfrak{A},\mathfrak{B})$, with $e$ picked in $\mathfrak{A}$ and $f$ picked in $\mathfrak{B}$. We use $e\Vdash f$ to denote it. And for any two sets $X$ and $Y$, if there is a bijection $\eta: X\mapsto Y$ such that 
for any $e\in X$ $e\Vdash \eta(e)$,  then we use $X\Vdash\! Y$ to denote it. If the sets are ordered, i.e. they are tuples, then the bijection simply maps the $i$-th element (or called item) of $X$ to the $i$-th element of $Y$. 
Somtimes we also use $e\Vdash f$ to denote the \textit{pair of vertices} $e$ and $f$, where $e\Vdash f$. 
\label{def-Vdash}

If Spoiler can only put pebbles on elements of $\mathfrak{A}$ and can play for arbitrary number of rounds, then such $(k-1)$-pebble games characterize exactly the expressive power of $\exists\mathcal{L}_{\infty\omega}^{k-1}$. 
On the other hand, if the players can use arbitrary number of pebbles in the games, such games are called ($m$-round) Ehrenfeucht-Fra\"iss\' e\xspace games, written $\Game_m\!(\mathfrak{A},\mathfrak{B})$.

\section{Bounded variable hierarchy in $\exists \mathcal{L}_{\infty\omega}^{k}$} \label{existential-case-section}
To help understand some bits of the main proof, we first consider a simpler problem and a simple structure $\mathcal{B}_k$.

In a two dimension coordinate plane, the \textit{coordinate congruence number} (or, coordinate residue class number)  of a vertex $(x,y)$ in the plane, denoted by $\mathbf{cc}(x,y)$, is defined as the following: 
\begin{eqnarray}\label{def-coordinate-congruence-number}
\mathbf{cc}(x,y):=x+y \mbox{ mod } k-1
\end{eqnarray} 
Note that there are $k-1$ different values for coordinate congruence numbers. 

\begin{definition}
 $\mathcal{B}_k$ is an ordered graph over the universe $[k-1]\times [k]$ and the linear order is defined by  the lexicographic ordering on the Cartesian product $[k]\times [k-1]$. That is, $(x_i,y_i)\leq(x_j,y_j)$ if $y_i<y_j$ or $y_i=y_j\land x_i\leq x_j$.  
A vertex $(x_i,y_i)$ is adjacent to another vertex $(x_j,y_j)$ 
if and only if $y_i\neq y_j$ and $\mathbf{cc}(x_i,y_i)\neq \mathbf{cc}(x_j,y_j)$. 
\hfill \ensuremath{\divideontimes}
\end{definition}
It is easy to see that  
$\mathcal{B}_k$ has no $k$-clique, by pigeonhole principle.

In the following we introduce a result by Dawar (cf. \cite{DawarHowmany},  p27) and use this chance to introduce some bits of the ideas that are used in our main proofs. 
\begin{theorem}  \label{existential-case}
For each $k$, there is a formula of $\exists \mathcal{L}^{k}$ that is not equivalent to any formula of $\exists \mathcal{L}^{k-1}_{\infty\omega}$ on ordered graphs.
\end{theorem}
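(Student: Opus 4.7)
The plan is to reduce non-definability in $\exists\mathcal{L}^{k-1}_{\infty\omega}$ to a statement about a pebble game, using the well-known characterization that $\exists\mathcal{L}^{k-1}_{\infty\omega}$-equivalence of $\mathcal{A},\mathcal{B}$ (restricted to preserving the truth of $\exists\mathcal{L}^{k-1}_{\infty\omega}$-sentences from $\mathcal{A}$ to $\mathcal{B}$) is captured by the existential $(k-1)$-pebble game of unbounded length in which Spoiler plays only inside $\mathcal{A}$. First I would exhibit the $\exists\mathcal{L}^k$-formula witnessing membership in the upper level: the sentence $\exists x_1\cdots\exists x_k\bigwedge_{1\le i<j\le k}\bigl(x_i\neq x_j\wedge E(x_i,x_j)\bigr)$ defines $k$-clique using exactly $k$ variables.

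Next I would construct the pair $(\mathcal{A}_k,\mathcal{B}_k)$ of ordered graphs. For $\mathcal{B}_k$ I use the structure from Definition above: a $(k{-}1)\times k$ grid, lexicographically ordered, with $(x_i,y_i)\sim(x_j,y_j)$ iff $y_i\neq y_j$ and $\mathbf{cc}(x_i,y_i)\neq\mathbf{cc}(x_j,y_j)$ where $\mathbf{cc}$ is taken mod $k-1$. The paper already notes that $\mathcal{B}_k$ has no $k$-clique by pigeonhole. For $\mathcal{A}_k$ I would take the analogous $k\times k$ grid with $\mathbf{cc}$ taken mod $k$; then the first column $\{(0,0),(0,1),\ldots,(0,k{-}1)\}$ is a $k$-clique because its rows are distinct and its residues $0,1,\ldots,k-1$ are all distinct.

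Duplicator's strategy exploits the slack produced by having one fewer pebble than residue classes or rows. By Fact~\ref{star_k-2}, at the start of each round there are at most $k-2$ pebbles on the board, so the pebbled vertices of $\mathcal{A}_k$ occupy at most $k-2$ rows (out of $k$) and at most $k-2$ residue classes (out of $k$). When Spoiler places a new pebble on $(x,y)\in\mathcal{A}_k$, Duplicator maintains a dynamic partial injection $\pi$ from "used rows of $\mathcal{A}_k$" to "used rows of $\mathcal{B}_k$" and a second partial injection $\rho$ from "used residues mod $k$" to "used residues mod $k{-}1$", extends whichever of these is not already defined on the corresponding coordinate of $(x,y)$ (room exists: $k-(k-2)\ge 2$ and $(k-1)-(k-2)\ge 1$), and finally places the pebble at the unique column in row $\pi(y)$ having residue $\rho((x+y)\bmod k)$. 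Adjacency is automatically preserved because in both structures edges are determined solely by the pair (row, residue), and $\pi,\rho$ are injections that preserve equality/inequality of coordinates.

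The main obstacle will be keeping the lexicographic order right while doing the above, since the "column" $x'=(\rho(\cdot)-\pi(y))\bmod(k{-}1)$ is forced and may not land in the order-interval Spoiler carved out. I would handle this by choosing fresh images for $\pi$ and $\rho$ with foresight: when Spoiler opens a new row, Duplicator picks the image row from among the unused rows of $\mathcal{B}_k$ in the correct order-interval (at least two are always available among $k$ rows with $\le k-2$ used, guaranteeing one in any interval determined by the existing pebbles); when Spoiler introduces a new residue, Duplicator chooses the image residue so that, after the shift by $-\pi(y)\bmod(k{-}1)$, the forced column $x'$ falls on the correct side of every column already pebbled in row $\pi(y)$. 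A short case analysis (same-row vs.\ new-row moves, fresh-residue vs.\ repeated-residue) together with the residue-slack inequality $(k-1)-(k-2)\ge 1$ and the row-slack inequality $k-(k-2)\ge 2$ shows that a valid choice always exists, so Duplicator survives every round and the game goes on indefinitely, yielding the desired non-definability in $\exists\mathcal{L}^{k-1}_{\infty\omega}$.
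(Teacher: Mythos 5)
Your choice of $\mathcal{A}_k$ departs from the paper's in a way that creates a genuine gap. The paper takes $\mathcal{A}_k$ to be just the $k$-vertex clique $\{(0,0),(0,1),\ldots,(0,k-1)\}$ — a single column — not a $k\times k$ grid. Because every vertex of the paper's $\mathcal{A}_k$ has the same first coordinate, the linear order on $\mathcal{A}_k$ is determined by the row index alone; Duplicator simply mirrors the row in $\mathcal{B}_k$ (both have exactly $k$ rows) and uses the residue-slack $(k-1)-(k-2)\ge 1$ to pick a column making her new vertex adjacent to everything already pebbled. The order question never arises.

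By enlarging $\mathcal{A}_k$ to a $k\times k$ grid, you hand Spoiler an entire second axis to exploit, and the strategy you propose cannot cope with it. Concretely, take $k=4$, so the game has $3$ pebbles. Spoiler plays $(1,0)$ and then $(3,0)$, both in row $0$ of $\mathcal{A}_4$ (non-adjacent, residues $1$ and $3$ mod $4$). Under your strategy, Duplicator fixes $\pi(0)=b_0$ and $\rho(1),\rho(3)\in\{0,1,2\}$, and the columns in $\mathcal{B}_4$ are then \emph{forced} to $c_1=(\rho(1)-b_0)\bmod 3$ and $c_2=(\rho(3)-b_0)\bmod 3$ with $c_1<c_2$. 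Now Spoiler's third move is either $(0,0)$ (to the left of both) or $(2,0)$ (strictly between them); the former requires the forced $c_3<c_1$, the latter requires $c_1<c_3<c_2$. Since $\{c_1,c_2,c_3\}=\{0,1,2\}$, the first case demands $c_1=1,c_2=2$ while the second demands $c_1=0,c_2=2$ — incompatible choices that must be committed \emph{before} Spoiler reveals his third move. So no amount of ``foresight'' in picking $\rho$ can save the row-injection/residue-injection strategy; Spoiler wins. (A smarter Duplicator who abandons the row-injection — e.g.\ placing the three images in distinct rows with the same residue to preserve non-adjacency — can survive this particular sequence, but that is a different strategy from the one you describe and you give no argument for it.) The fix is simply to use the paper's $\mathcal{A}_k$: a bare $k$-clique in a single column, which makes the ordering half of the partial isomorphism automatic and leaves only the easy residue-counting argument.
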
 
\begin{proof}
Our tool is the variant of $(k-1)$-pebble games for $\exists\mathcal{L}_{\infty\omega}^{k-1}$, where the game board is $(\mathcal{A}_k,\mathcal{B}_k)$. Here $\mathcal{A}_k$ is a $k$-clique that is composed of the vertices $(0,0),(0,1),\ldots,(0,k-1)$ with a linear order defined as that of $\mathcal{B}_k$. Recall that Spoiler is required to pick only in  $\mathcal{A}_k$. 
Observe that Duplicator is able to ensure that the subgraph of $\mathcal{B}_k$ induced by the pebbles in $\mathcal{B}_k$ is a complete graph in each round. In particular, to ensure a $(k-1)$-clique in $\mathcal{B}_k$ Duplicator needs only pick those vertices such that for any two picked vertices $(x_i,y_i), (x_j,y_j)$, $\mathbf{cc}(x_i,y_i)\neq\mathbf{cc}(x_j,y_j)$ if $y_i\neq y_j$.  
The \textit{main point} 
is that, in each round, Duplicator can always find a vertex $(x,y)$ for any $y$ such that $\mathbf{cc}(x,y)$ is different from that of all the pebbled vertices, if there are no more than $k-2$ pebbles on a structure (cf. Fact  \ref{star_k-2}). 
\end{proof}

We introduce this proof instead of others (e.g. a proof of algebraic flavor) because the ideas presented in this proof can shed some light on Lemma \ref{no-missing-edges_xi-1}, which will be used in the proof of the main Lemma \ref{main-lemma} (cf. Strategy \ref{xi-1}). 

Note that $\mathcal{B}_k$ is the same as the structure introduced by Dawar, if we circular shift the vertices of the $i$-th row $i$ times to the right. 
Moreover, we shall see that such right circular shifts can prevent Duplicator from the so called ``boundary checkout strategy'' of Spoiler (cf. p. \pageref{page-boundary-checkout-strategy}). Nevertheless, to help the readers understand the intuitions behind the constructions and to make the proofs as less involved as possible, in the following sections we first introduce the original structures, then shift the vertices afterwards.    

\section{Outline of the remainder of the paper}
The rest of the paper is organized as follows. We first introduce a proof for the special case where $k=3$ in section \ref{special-cases}. It is a good place to bring forword a key notion called ``(structural) abstraction''. We index the vertices of our graphs and view the graphs in different scales, each of which is a distinctive abstraction. A higher abstraction characterizes some key feature of lower abstractions. And Duplicator uses strategies over abstractions to decide her picks in the original games. In this viewpoint, 
we reduce the original games to games over abstractions. 
That is, the players are also playing a game over some specific abstraction in each round, in addition to the original game: each pick are projected to this abstraction and Duplicator need only ensure partial isomorphisms over this abstraction to win this round. 

In section \ref{section-structures}, we will introduce a pair of graphs $\mathfrak{A}_{k,m}^*$ and $\mathfrak{B}_{k,m}^*$. 
Before this, we introduce a notion called board history, which characterizes reasonable evolutions of a game board, and ``embed'' it into every vertex of $\mathfrak{A}_{k,m}^*$ and $\mathfrak{B}_{k,m}^*$ to construct a pair of ordered graphs $\mathfrak{A}_{k,m}$ and $\mathfrak{B}_{k,m}$, for the game board. In the process of creating $\mathfrak{B}_{k,m}^*$, we need a notion called congruence label, based on which the key notion type label is defined, which  \textit{roughly} tells us how a vertex of some label is connected to another vertex of other label. An element in the definition of type lable is a set $\Omega$, based on which we forbid some sort of edges in  $\mathfrak{B}_{k,m}^*$. And such missing of edges characterizes some global feature of some subgraphs of $\mathfrak{B}_{k,m}^*$, thereby distinguishing one subgraph from another. The notion ``abstraction'' is somehow based on such features. The pair of main structures are $\widetilde{\mathfrak{A}}_{k,m}$ and $\widetilde{\mathfrak{B}}_{k,m}$, which are obtained from $\mathfrak{A}_{k,m}$ and $\mathfrak{B}_{k,m}$ by right circular shifting of vertices. 

In section \ref{winning-strategy}, we use $\widetilde{\mathfrak{A}}_{k,m}$ and $\widetilde{\mathfrak{B}}_{k,m}$ for the game board to prove, by a simultaneous induction, that Duplicator has a winning strategy in the game, thereby 
$k$-variables are needed for $k$-Clique in $\fo$. But 
instead of studying it over the main structures directly, we study it over $\widetilde{\mathfrak{A}}_{k,m}^*$ and $\widetilde{\mathfrak{B}}_{k,m}^*$, and classify the vertices of these two graphs into $m$ sets (the $(i-1)$-th set $\mathbb{X}_{i-1}^*$ subsumes the $i$-th set $\mathbb{X}_{i}^*$), each of which induces a graph, i.e. an abstraction, that resembles $\mathcal{B}_k$ to some extent. The $i$-th abstraction is an induced subgraph of the $(i-1)$-th abstraction. 
 We shall see that the Duplicator has a winning strategy in the original game if she has a winning strategy in a so called associated game over abstractions and changing board.  
In such an associated game, if Duplicator is not able to respond Spoiler by picking a vertex in the $i$-th abstraction, then Duplicator resorts to the $(i-1)$-th  abstraction for a solution. 
In the games, Duplicator can \textit{force} the games played over some \textit{specific} abstraction, which, when necessary, enables herself to find a solution in the closest lower abstraction in each round.   

In section \ref{chapter-circuit-bound}, based on a not well-known but still reasonable assumption, we show that $n^{k-1}$ gates not suffice to compute $k$-Clique on \textit{DLOGTIME-uniform} families of constant-depth unbounded fan-in circuits. In section \ref{lowerbound-in-FO(BIT)} we first show that $k$-variables are needed to define $k$-Clique in $\fo(\mathbf{BIT})$, by embedding the main structures (cf. section \ref{section-structures}) in a pure arithmetic structure introduced by Schweikardt and Schwentick \cite{SchweikardtLinearOrder}. Afterwards, in section \ref{circuit-lowerbound} we translate this result in logic to a size lower bound in circuit complexity. We first 
show that $O(n^{\frac{k-3}{2}})$ gates not suffice to compute $k$-Clique, using the standard translation and an observation that the bounded variable hierarchy in $\fo$ collapses to $\fo^3$ in the pure arithmetic. Afterwards, based on an assumption, we get the believed tight lower bound via a notion called succinct regular circuits,  whose structures respect the ``logical structure'' of first-order formulas.

In section \ref{conclusions}, we summarize our main results and discuss future work and open questions.

\section{The structures}

\subsection{Vertex index, structural abstractions and games over abstractions: the case where  $k=3$}\label{special-cases}
In this section we prove that our main result holds in the special case where $k=3$. That is, $3$ variables are needed to define $3$-Clique in $\fo$. In other words, Duplicator has a winning strategy in the $2$-pebble games of arbitrary finite rounds. The case where $k=3$ is quite different from other cases: it is much simpler than the cases where $k>3$ (see the subsequent sections), but much more difficult than the case where $k=2$.   
Note that it is trivial when $k=2$. 
For any $m\in\mathbf{N}^+$, $\mathfrak{B}_{2,m}$ is simply a graph of two isolated vertices with arbitrary order on them. $\mathfrak{A}_{2,m}$ is built from $\mathfrak{B}_{2,m}$ by adding one edge between these two vertices. Duplicator simply mimics Spoiler, which is a winning strategy in an $m$-round $1$-pebble game over the game board $(\mathfrak{A}_{2,m},\mathfrak{B}_{2,m})$, for arbitrary $m$.

For the special case where $k=3$,    
 we introduce a proof that is most suitable to cast light on some of the concepts and ideas that will be used in  the subsequent sections. In particular, we introduce a key concept called ``(structural) abstraction'', as well as pebble games over abstractions, which is also crucial in the subsequent sections. In addition to giving a proof for this special case,  
we hope this can offer some intuition for the following more technical constructions and proofs. Note that almost all the lemmas introduced in this section will be used in section \ref{section-structures} and section \ref{winning-strategy}. 

Firstly, we construct a structure $\mathfrak{B}_{3,m}^\prime$ via a process that can be called ``(iterative) structural expansion''.\footnote{Note that it is different from the concepts ``expansion'' and ``extension'' in model theory, as defined in the classical textbook by Chang and Keisler.} Instead of a formal definition, which is easy to give, we explain it briefly by the following example. We first construct a structure, called $\mathfrak{B}_{3,m}^\prime[\mathbb{X}_m^*]$, whose universe is a square lattice and whose width is $\gamma_0^*$. Then, we use it as a ``skeleton'' or ``blueprint'' to build a lager structure, called $\mathfrak{B}_{3,m}^\prime[\mathbb{X}_{m-1}^*]$. The basic ``brick'' we shall use to build base on the blueprint can be anything. But here the brick we use is similar to $\mathfrak{B}_{3,m}^\prime[\mathbb{X}_m^*]$ itself. More precisely, we ``expand'', or replace, every vertex by a successive vertices. Hence any ``path'' (not necessary connected) of the ``skeleton'' that is from the bottom to the top corresponds to a set of vertices of  $\mathfrak{B}_{3,m}^\prime[\mathbb{X}_{m-1}^*]$, which is isomorphic to a square lattice. We call such a square lattice (not necessary upright) a ``brick''. Such bricks are either isomorphic or very similar. Once we get $\mathfrak{B}_{3,m}^\prime[\mathbb{X}_{m-1}^*]$, whose width is $\gamma_1^*$, we take it as a new ``skeleton'' and use it to build $\mathfrak{B}_{3,m}^\prime[\mathbb{X}_{m-2}^*]$, and so on, until we get $\mathfrak{B}_{3,m}^\prime[\mathbb{X}_1^*]$, i.e. the structure $\mathfrak{B}_{3,m}^\prime$ we want, whose width is $\gamma_{m-1}^*$. 
Once $\mathfrak{B}_{3,m}^\prime$ is obtained, we create a new structure $\mathfrak{B}_{3,m}$, as well as $\widetilde{\mathfrak{B}}_{3,m}$, based on it. 
In the following we define $\mathfrak{B}_{3,m}^\prime$ formally.       

For any $m, i\in \mathbf{N}^+$, where $m\geq 3$ and $0< i<m$, let 
\begin{align}
\gamma_0^* &:=4m \label{gamma_0-star-k3}\\
\gamma_i^* &:=4(m-i)\gamma_{i-1}^* \label{gamma_i-star-k3}
\end{align}

For $x\in [\gamma_{m-1}^*]$ and $1\leq i\leq j\leq m$, let 
\begin{align}
\beta_{m-j}^{m-i} &:=\frac{\gamma_{m-i}^*}{\gamma_{m-j}^*}\label{beta-func}\\
[x]_i &:=\lfloor x/\beta_{m-i}^{m-1}\rfloor \label{abstraction-func}\\ 
\llparenthesis x\rrparenthesis_i &:=[x]_i\beta_{m-i}^{m-1}+\frac{1}{2}\sum_{1<\ell\leq i}\beta_{m-\ell}^{m-1}\label{llprrp-func}
\end{align}  

Note that $\beta_{m-j}^{m-i}=\displaystyle\prod_{m-j\leq \ell <m-i} \frac{\gamma_{\ell+1}^*}{\gamma_{\ell}^*}=4^{j-i}\times\frac{(j-1)!}{(i-1)!}$. By convention, $0!=1!=1$. Hence $\gamma_{m-1}^*=\gamma_0^*\beta_0^{m-1}=m!\times 4^m$.

Obviously, the structure $\mathfrak{B}_{3,m}^\prime$ is big. So we put a remark in the appendix to illustrate some essence of the notion structural expansion. Cf. Remark \ref{remark-expansion}.

The readers should be aware of the difference between the notations $[x]_i$ and $[x]$. The latter is seldom used in our paper, which most often appears in the Cartesian product when we define the universe of a structure.\footnote{By Wikipedia, Gauss introduced the notation $[x]$ for the \href{http://en.wikipedia.org/wiki/Floor_and_ceiling_functions}{floor function} in 1808, which remained the standard until 1962 when there is need to distinguish the notation of ceiling functions from that of floor functions. In our paper, no ceiling functions are involved. Moreover, we need a notation to distinguish it from the standard notation $\lfloor x\rfloor$. Hence we adopt and alter Gauss's notation here, i.e. using $[x]_i$ to denote a special kind of floor functions, as defined in (\ref{abstraction-func}).}  

Let 
\begin{equation}
\mathbb{X}_1^*:=[\gamma_{m-1}^*]\times [3].\label{X_1-star-k3}
\end{equation}
For  $1<i\leq m$, let 
\begin{equation}\label{def-eqn-X_i-star}
\mathbb{X}_i^*:=\{(x,y)\in\mathbb{X}_1^*\mid x=\llparenthesis x\rrparenthesis_i\}.
\end{equation}

The structure  
$\mathfrak{B}_{3,m}^\prime$ is an ordered graph over the universe $\mathbb{X}_1^*$, wherein 
the linear order is defined as the lexicographic ordering over $[3]\times [\gamma_{m-1}^*]$. And 
for any pair of vertices $(x_i,y_i)$ and $(x_j,y_j)$, if $(x_i,y_i)\in\mathbb{X}_p^*$ implies $(x_j,y_j)\in\mathbb{X}_p^*$, and $\ell$ is the maximum in $[1,m]$ s.t. $(x_i,y_i)\in\mathbb{X}_\ell^*$, then $(x_i,y_i)$ is adjacent to $(x_j,y_j)$ if and only if $y_i\neq y_j$ and $\mathbf{cc}([x_i]_\ell,y_i)\neq\mathbf{cc}([x_j]_\ell,y_j)$.\footnote{We shall see that $\mathbb{X}_{t+1}^*\subseteq\mathbb{X}_{t}^*$ for any $t$, due to Lemma \ref{i=0theni-1=0}. We will define a concept called ``vertex index'' (cf. Definition \ref{vertex-index}) and we shall see that $(x,y)\in\mathbb{X}_{t}^*-\mathbb{X}_{t+1}^*$ if and only if the index of $(x,y)$ is $t$, for $1\leq t<m$.}  

We can regard the universe of $\mathfrak{B}_{3,m}^\prime$ as a square lattice, whose width is $\gamma_{m-1}^*$ and whose height is $3$. Its lattice points are the set of elements of $\mathbb{X}_1^*$.  
We can define the $i$-th abstraction of the structure as the induced graph $\mathfrak{B}_{3,m}^{\prime}[\mathbb{X}_i^*]$, whose universe is a square (sub)lattice of $\mathfrak{B}_{3,m}^\prime$ and whose width is $\gamma_{m-i}^*$. 
For instance, the $m$-th abstraction is $\mathfrak{B}_{3,m}^{\prime}[\mathbb{X}_m^*]$, who has width $\gamma_0^*$ and height $3$. We take it that the $i$-th abstraction is a higher abstraction relative to the $j$-th abstraction if $i>j$. 
Note that, for any $0\leq j\leq i\leq m-1$, we can regard $\beta_j^i$ successive vertices in the $(m-i)$-th abstraction as one ``vertex''  in the $(m-j)$-th abstraction. More precisely, we can take it that each row of  $\mathfrak{B}_{3,m}^{\prime}[\mathbb{X}_{m-i}^*]$ is divided evenly into $\gamma_{m-j}^*$ intervals of the same length $\beta_j^i$, where each vertex in $\mathbb{X}_{m-j}^*$ is roughly in the middle of some interval that is composed of vertices in  $\mathbb{X}_{m-i}^*$.

So far we regard  $\mathfrak{B}_{3,m}^{\prime}[\mathbb{X}_{m-j}^*]$ as an abstraction of  $\mathfrak{B}_{3,m}^{\prime}[\mathbb{X}_{m-i}^*]$. 
Besides ``(structual) abstraction'', we may also consider the dual concept, i.e.  ``(structual) expansion'', which describes the reverse side. That is, the following two statements are equivalent: 
\begin{itemize}
\item $\mathfrak{B}_{3,m}^{\prime}[\mathbb{X}_i^*]$ is an abstraction of $\mathfrak{B}_{3,m}^{\prime}[\mathbb{X}_{i-1}^*]$: $\beta_{m-i}^{m-i+1}$ vertices in the $(i-1)$-th abstraction are encapsulated into one vertex in the $i$-th abstraction;

\item $\mathfrak{B}_{3,m}^{\prime}[\mathbb{X}_{i-1}^*]$ is an expansion of  $\mathfrak{B}_{3,m}^{\prime}[\mathbb{X}_i^*]$: every vertex of $\mathfrak{B}_{3,m}^{\prime}[\mathbb{X}_i^*]$ is replaced by $\beta_{m-i}^{m-i+1}$ successive vertices. 

\end{itemize}

Note that, for any $(x,y)\in\mathbb{X}_1^*-\mathbb{X}_i^*$, we can regard $[x]_i$ as a sort of ``abstraction'', which tells us the ``(relative) position'' of $(x,y)$ in  $\mathfrak{B}_{3,m}^{\prime}[\mathbb{X}_i^*]$. So we call $[x]_i$ the ``$i$-th relative first coordinate of $(x,y)$'' and $([x]_i,y)$ the ``$i$-th relative position of the vertex $(x,y)$''.\footnote{Imagining that, if we look at a picture from far away, then many vertices in a row might seem as one.}
And for any $(x,y)\in \mathbb{X}_j^*$ where $1\leq j<i\leq m$, the vertex  
$(\llparenthesis x\rrparenthesis_{i},y)$ , which is a lattice point of $\mathfrak{B}_{3,m}^{\prime}[\mathbb{X}_i^*]$, can be regarded as the \textit{projection} of $(x,y)$ (a lattice point of $\mathfrak{B}_{3,m}^{\prime}[\mathbb{X}_j^*]$) in the $i$-th abstraction,  because $(\llparenthesis x\rrparenthesis_{i},y)\in \mathbb{X}_{i}^*$ and $[\llparenthesis x\rrparenthesis_{i}]_{i}=[x]_{i}$, which is unique for $(x,y)$ by the following lemma.

\begin{lemma}\label{projection}
Let $1\leq i, j\leq m$. For any $(x,y)\in \mathbb{X}_j^*$ and  $(x^{\prime},y)\in \mathbb{X}_{i}^*$, if $[x^{\prime}]_{i}=[x]_{i}$, then $x^{\prime}=\llparenthesis x\rrparenthesis_{i}$.
\end{lemma}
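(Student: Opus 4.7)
The plan is to prove this by direct unfolding of the relevant definitions; no induction or combinatorial argument is needed. The key observation is that the expression defining $\llparenthesis x\rrparenthesis_i$ in (\ref{llprrp-func}) depends on $x$ only through the single quantity $[x]_i$: the summation $\frac{1}{2}\sum_{1<\ell\leq i}\beta_{m-\ell}^{m-1}$ is a constant offset determined by $i$ alone, independent of the input. Therefore, any two elements of $[\gamma_{m-1}^*]$ that share the same $i$-th relative first coordinate must produce the same value under $\llparenthesis\cdot\rrparenthesis_i$.

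More concretely, I would proceed in three short steps. First, since $(x',y)\in\mathbb{X}_i^*$, the defining condition (\ref{def-eqn-X_i-star}) forces $x'=\llparenthesis x'\rrparenthesis_i$. Second, I expand this using (\ref{llprrp-func}) to obtain
\[
x' \;=\; [x']_i\,\beta_{m-i}^{m-1} \;+\; \tfrac{1}{2}\sum_{1<\ell\leq i}\beta_{m-\ell}^{m-1}.
\]
Third, I substitute the hypothesis $[x']_i=[x]_i$ into the right-hand side, which yields exactly the defining expression for $\llparenthesis x\rrparenthesis_i$ in (\ref{llprrp-func}). Hence $x' = \llparenthesis x\rrparenthesis_i$, as required.

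There is no real obstacle; the only slightly subtle point is that the hypothesis $(x,y)\in\mathbb{X}_j^*$ is not actually used in the manipulation beyond ensuring that $x\in[\gamma_{m-1}^*]$ so that $[x]_i$ is a well-defined non-negative integer. This is worth noting because it clarifies that the lemma expresses a purely arithmetic fact about the projection map $x\mapsto\llparenthesis x\rrparenthesis_i$: the unique lattice point in the $i$-th abstraction whose $i$-th relative first coordinate equals $[x]_i$ is precisely the projection $\llparenthesis x\rrparenthesis_i$. This uniqueness is exactly what makes the informal term ``projection'' introduced just before the lemma well-defined, and it is what will let later arguments use $\llparenthesis x\rrparenthesis_i$ unambiguously when reducing games in a lower abstraction to games in a higher one.
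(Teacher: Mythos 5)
Your proof is correct and follows exactly the same three-step unfolding as the paper's own argument: use $(x',y)\in\mathbb{X}_i^*$ to get $x'=\llparenthesis x'\rrparenthesis_i$, expand via the definition of $\llparenthesis\cdot\rrparenthesis_i$, and substitute $[x']_i=[x]_i$. Your remark that the hypothesis $(x,y)\in\mathbb{X}_j^*$ plays no role beyond well-definedness is a small but accurate clarification that the paper does not make explicit.
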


From now on, we call $(\llparenthesis x\rrparenthesis_i,y)$ the projection of $(x,y)$
 in the $i$-th abstraction. 
This lemma says that, if $(x^\prime,y)$ is a lattice point of $\mathfrak{B}_{3,m}^\prime[\mathbb{X}_i^*]$, and $(x,y)$ has the same $i$-th relative position as $(x^\prime,y)$, then $(x^\prime,y)$ is the projection of $(x,y)$ in the $i$-th abstraction. 

The following lemma says that $\mathbb{X}_i^*$ subsumes $\mathbb{X}_j^*$ if $i\leq j$. Hence a vertex is in lower abstractions if it is in some higher abstraction. That is, for the square lattice $\mathfrak{B}_{3,m}^\prime$, a lattice point of the square (sub)lattice $\mathfrak{B}_{3,m}^\prime[\mathbb{X}_j^*]$ is also a lattice point of the square (sub)lattice $\mathfrak{B}_{3,m}^\prime[\mathbb{X}_i^*]$. 
\begin{lemma}\label{i=0theni-1=0}
For any $i$ where $1\leq i\leq m$, if $(x,y)\in\mathbb{X}_i^*$, then $(x,y)\in\mathbb{X}_{j}^*$ for any $1\leq j\leq i$.
\end{lemma}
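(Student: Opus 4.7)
The inclusion $\mathbb{X}_i^* \subseteq \mathbb{X}_1^*$ is immediate from the definitions, so the real content is the single-step inclusion $\mathbb{X}_i^* \subseteq \mathbb{X}_{i-1}^*$ for every $2 \le i \le m$; the lemma then follows by iterating. Writing $a_i := \beta_{m-i}^{m-1}$ and $s_i := \tfrac{1}{2}\sum_{1<\ell\le i} a_\ell$ so that $\llparenthesis x\rrparenthesis_i = [x]_i\, a_i + s_i$, my plan is to show that if $x = \llparenthesis x\rrparenthesis_i$ then $x = [x]_{i-1}\, a_{i-1} + s_{i-1}$ with $[x]_{i-1} = \lfloor x/a_{i-1}\rfloor$, i.e., $x = \llparenthesis x\rrparenthesis_{i-1}$.

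The formula $\beta_{m-j}^{m-i} = 4^{j-i}(j-1)!/(i-1)!$ noted just after (\ref{llprrp-func}) gives $a_i = 4^{i-1}(i-1)!$, and hence $a_i/a_{i-1} = 4(i-1)$, a positive integer that is at least $4$ when $i \ge 2$. Substituting $a_i = 4(i-1)\,a_{i-1}$ and $s_i = s_{i-1} + \tfrac{1}{2} a_i$ into $x = [x]_i\, a_i + s_i$ yields the key identity
\[
x \;=\; a_{i-1}\cdot 2(i-1)\bigl(2[x]_i + 1\bigr) \,+\, s_{i-1},
\]
so the expected value of $[x]_{i-1}$ is $2(i-1)(2[x]_i+1)$, and the whole argument reduces to verifying the remainder bound $0 \le s_{i-1} < a_{i-1}$.

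This bound is the only delicate point and I would prove it by a short induction, using $a_{i-1} \le a_i/4$: the base case is $s_1 = 0 < 1 = a_1$, and the inductive step is
\[
s_i \;=\; s_{i-1} + \tfrac{1}{2} a_i \;<\; a_{i-1} + \tfrac{1}{2} a_i \;\le\; \tfrac{1}{4} a_i + \tfrac{1}{2} a_i \;=\; \tfrac{3}{4} a_i \;<\; a_i.
\]
With $0 \le s_{i-1} < a_{i-1}$ in hand, the displayed identity forces $\lfloor x/a_{i-1}\rfloor = 2(i-1)(2[x]_i+1)$, so $[x]_{i-1} = 2(i-1)(2[x]_i+1)$ and $\llparenthesis x\rrparenthesis_{i-1} = [x]_{i-1}\,a_{i-1} + s_{i-1} = x$, establishing $(x,y) \in \mathbb{X}_{i-1}^*$. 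The only obstacle anywhere in the argument is keeping the offsets $s_i$ synchronized with the scale changes $a_i/a_{i-1}$, and the remainder bound is exactly what makes this synchronization go through cleanly.
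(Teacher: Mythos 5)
Your proof is correct for the setting in which the lemma is first stated (the $k=3$ definitions of Section 5.1), and the route is genuinely cleaner than the paper's. Both arguments reduce to showing that, with $a_i=\beta_{m-i}^{m-1}$ and $s_i=\frac{1}{2}\sum_{1<\ell\le i}a_\ell$, the quantity $[x]_i a_i+\frac{1}{2}a_i$ is exactly $[x]_{i-1}a_{i-1}$; the paper gets there by a two-sided contradiction (its equation (A.4) in the appendix), whereas you derive $[x]_{i-1}=2(i-1)(2[x]_i+1)$ in one pass from the floor definition, using the remainder bound $0\le s_{i-1}<a_{i-1}$ (your induction) together with the evenness of $a_i/a_{i-1}$. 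Your remainder bound $s_{i-1}<a_{i-1}$ is sharper than the paper's $\sum_{1<j\le i-1}\beta_{m-j}^{m-1}<\beta_{m-i}^{m-1}$ (i.e.\ $2s_{i-1}<a_i$), and it is exactly the right form for the floor argument, so the direct computation replaces the contradiction cleanly.

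One caveat worth flagging: you anchor the argument to the $k=3$ closed form $\beta_{m-j}^{m-i}=4^{j-i}(j-1)!/(i-1)!$, hence $a_i/a_{i-1}=4(i-1)$. This lemma is re-used verbatim in Section~\ref{section-structures}, where $\gamma_i^*$ is redefined (via $\mathpzc{U}_{m-i}^*$ and powers of $2$) and the factorial closed form no longer holds. The paper's proof is deliberately written against the weaker, definition-independent facts it actually needs — the ratio bound $\gamma_{m-i+1}^*/\gamma_{m-i}^*>i-2$ and the divisibility of $\frac{1}{2}\beta_{m-i}^{m-1}$ by $\beta_{m-i+1}^{m-1}$ — so it carries over unchanged. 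Your argument adapts with no structural change (replace $4(i-1)$ by the generic even integer $q_i:=a_i/a_{i-1}\ge 4$ and write $[x]_{i-1}=q_i[x]_i+q_i/2$), but as written it only covers $k=3$; you should restate the key identity in terms of $q_i$ rather than of the factorial formula so that it applies under both sets of definitions.
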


In other words,  $(x,y)\in\mathbb{X}_i^*$ implies that $x=\llparenthesis x\rrparenthesis_j$ for $1\leq j\leq i$. 
Because of this lemma, it is meaningful to introduce the following important concept, by which we can index the vertices of $\mathfrak{B}_{3,m}^\prime$.  
\begin{definition}\label{vertex-index}
The index of $(x,y)\in \mathbb{X}_1^*$, written $\mathrm{idx}(x,y)$, is the maximum $i$, where $1\leq i\leq m$, such that $(x,y)\in \mathbb{X}_i^*$. 
\hfill\ensuremath{\divideontimes}
\end{definition}
 By Lemma  \ref{i=0theni-1=0}, $(x,y)$ has index $i$ if and only if $(x,y)\in \mathbb{X}_i^*-\mathbb{X}_{i+1}^*$, for $1\leq i<m$; and $\mathrm{idx}(x,y)\geq j$ if and only if  $(x,y)\in \mathbb{X}_j^*$, for $1\leq j\leq m$.

Note that $\mathfrak{B}_{3,m}^\prime$ has many $3$-cliques, i.e. triangles. We can index these triangles such that the index of a triangle is the smallest index of its vertices. We can generalize this concept to abitrary $k$ as the following.  

\begin{definition}
 A $k$-clique $C_k$, where $|C_k|\subset \mathbb{X}_1^*$, has index $i$ if $i$ is the maximum  in the range $[1,m]$ such that  $|C_k|\subset \mathbb{X}_i^*$.
\hfill\ensuremath{\divideontimes}
\end{definition}

By definitions, we have the following easy observations, whose proofs are straightforward.

\begin{lemma}\label{lattice-point-high-is-lower}
For any vertex $(x,y)$ of index $i$ and $j\leq i$, we have
\begin{equation*}
x=\llparenthesis x\rrparenthesis_j. 
\end{equation*}  
\end{lemma}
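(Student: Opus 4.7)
The plan is to derive the equality directly from Definition \ref{vertex-index}, Lemma \ref{i=0theni-1=0}, and the defining equation (\ref{def-eqn-X_i-star}) of $\mathbb{X}_j^*$. By Definition \ref{vertex-index}, saying that $(x,y)$ has index $i$ means $(x,y) \in \mathbb{X}_i^*$. Lemma \ref{i=0theni-1=0} then upgrades this to $(x,y) \in \mathbb{X}_j^*$ for every $1 \leq j \leq i$. So the remaining task is to argue that $(x,y) \in \mathbb{X}_j^*$ forces $x = \llparenthesis x\rrparenthesis_j$, which is almost tautological.

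First I would handle the case $j > 1$: this is literally the definition (\ref{def-eqn-X_i-star}), since $\mathbb{X}_j^* = \{(x,y) \in \mathbb{X}_1^* \mid x = \llparenthesis x\rrparenthesis_j\}$. Then I would dispose of the boundary case $j = 1$ separately, because (\ref{def-eqn-X_i-star}) is only stated for $1 < i \leq m$. Here I would unfold (\ref{llprrp-func}): the summation index set $\{\ell : 1 < \ell \leq 1\}$ is empty, so the sum contributes $0$; and by (\ref{beta-func}), $\beta_{m-1}^{m-1} = \gamma_{m-1}^*/\gamma_{m-1}^* = 1$. Combining with (\ref{abstraction-func}), we get $\llparenthesis x\rrparenthesis_1 = [x]_1 \cdot 1 = \lfloor x / 1 \rfloor = x$, so the claim $x = \llparenthesis x\rrparenthesis_1$ holds trivially for every $(x,y) \in \mathbb{X}_1^*$.

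There is no substantive obstacle: the lemma is a bookkeeping reformulation of Lemma \ref{i=0theni-1=0} in the ``unprojected'' notation. The only mild subtlety is that (\ref{def-eqn-X_i-star}) excludes $j = 1$ by convention, so that case has to be handled by hand via the empty-sum and $\beta_{m-1}^{m-1} = 1$ observations above.
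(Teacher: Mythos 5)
Your proof is correct and follows essentially the same route as the paper's one-line remark beneath the lemma: invoke Lemma \ref{i=0theni-1=0} to place $(x,y)$ in $\mathbb{X}_j^*$, then read off $x = \llparenthesis x\rrparenthesis_j$ from (\ref{def-eqn-X_i-star}). Your explicit handling of the $j=1$ boundary case (empty sum, $\beta_{m-1}^{m-1}=1$, so $\llparenthesis x\rrparenthesis_1 = x$) is a correct and slightly more careful version of what the paper leaves implicit.
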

By lemma \ref{i=0theni-1=0}, if $(x,y)$ has index $i$ and $j\leq i$, then $(x,y)$ is already a vertex in $\mathbb{X}_j^*$. Therefore, by definition, the projection of $(x,y)$ in the $j$-th abstraction is itself.

\begin{lemma}\label{proj-abs}
For any $(x,y)\in\mathbb{X}_1^*$ and $i\leq j$, we have 
\begin{enumerate}[(1)]
\item $[\llparenthesis x\rrparenthesis_i]_j=[x]_j.$

\item $\llparenthesis\llparenthesis x\rrparenthesis_i\rrparenthesis_j=\llparenthesis x\rrparenthesis_j.$

\end{enumerate}
\end{lemma}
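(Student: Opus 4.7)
My plan is to prove part (1) by a direct floor-function computation and then deduce part (2) in one line from it. The key algebraic observation is the telescoping identity $\beta_{m-j}^{m-1}=\beta_{m-j}^{m-i}\,\beta_{m-i}^{m-1}$, which is immediate from $\beta^{v}_{u}=\gamma^*_{v}/\gamma^*_{u}$. Consequently, dividing both $x$ and $\llparenthesis x\rrparenthesis_i$ by $\beta_{m-j}^{m-1}$ produces the same leading term $[x]_i/\beta_{m-j}^{m-i}$, up to two small non-negative perturbations that I will show are harmless.

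For part (1), write $[x]_i=q\,\beta_{m-j}^{m-i}+s$ with integer $0\le s<\beta_{m-j}^{m-i}$ and $x=[x]_i\,\beta_{m-i}^{m-1}+r'$ with $0\le r'<\beta_{m-i}^{m-1}$. Then both $x/\beta_{m-j}^{m-1}$ and $\llparenthesis x\rrparenthesis_i/\beta_{m-j}^{m-1}$ equal $q+s/\beta_{m-j}^{m-i}$ plus a perturbation: $r'/\beta_{m-j}^{m-1}$ in the first case, and $\varepsilon:=\tfrac{1}{2\beta_{m-j}^{m-1}}\sum_{1<\ell\le i}\beta_{m-\ell}^{m-1}$ in the second. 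Because $s/\beta_{m-j}^{m-i}\le 1-1/\beta_{m-j}^{m-i}$, it suffices to bound each perturbation strictly by $1/\beta_{m-j}^{m-i}$ to force both floors to equal $q$. The first bound is immediate since $r'<\beta_{m-i}^{m-1}=\beta_{m-j}^{m-1}/\beta_{m-j}^{m-i}$, giving $[x]_j=q$.

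The only non-trivial step, and what I expect to be the main (though mild) obstacle, is showing $\varepsilon<1/\beta_{m-j}^{m-i}$. From the recurrence $\gamma^*_t=4(m-t)\gamma^*_{t-1}$ one obtains $\beta_{m-\ell}^{m-1}=4(\ell-1)\,\beta_{m-\ell+1}^{m-1}$ for $2\le\ell<m$, so the summands form a geometric-like series dominated by the largest term $\beta_{m-i}^{m-1}$ with common ratio at most $1/4$; hence $\sum_{1<\ell\le i}\beta_{m-\ell}^{m-1}\le\tfrac{4}{3}\beta_{m-i}^{m-1}$ and, using the telescoping identity once more, $\varepsilon\le\tfrac{2}{3\beta_{m-j}^{m-i}}<1/\beta_{m-j}^{m-i}$, as required. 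This gives $[\llparenthesis x\rrparenthesis_i]_j=q=[x]_j$. Part (2) then falls out by plugging part (1) into the definition of $\llparenthesis\cdot\rrparenthesis_j$: $\llparenthesis\llparenthesis x\rrparenthesis_i\rrparenthesis_j=[\llparenthesis x\rrparenthesis_i]_j\,\beta_{m-j}^{m-1}+\tfrac12\sum_{1<\ell\le j}\beta_{m-\ell}^{m-1}=[x]_j\,\beta_{m-j}^{m-1}+\tfrac12\sum_{1<\ell\le j}\beta_{m-\ell}^{m-1}=\llparenthesis x\rrparenthesis_j$, completing the proof.
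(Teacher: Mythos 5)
Your proof is correct and follows essentially the same route as the paper's: both perturbations (the remainder of $x$ modulo $\beta_{m-i}^{m-1}$ and the centering term $\tfrac12\sum_{1<\ell\le i}\beta_{m-\ell}^{m-1}$ in $\llparenthesis x\rrparenthesis_i$) are shown to be strictly smaller than $\beta_{m-i}^{m-1}$, hence too small to change the floor after division by $\beta_{m-j}^{m-1}=\beta_{m-j}^{m-i}\beta_{m-i}^{m-1}$, and (2) then follows from (1) by substitution into the definition of $\llparenthesis\cdot\rrparenthesis_j$. The only difference is that you re-derive the bound on the half-sum via a geometric-series estimate from the $\gamma^*_t$ recurrence, whereas the paper just cites the inequality \eqref{app-i=0theni-1=0-eqn0} already established inside the proof of Lemma \ref{i=0theni-1=0}.
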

In particular, by \textit{(1)}, we have $[\llparenthesis x\rrparenthesis_i]_i=[x]_i$. 
\textit{(2)} says that the projection of $(x,y)$ to the $j$-th abstraction can be regarded as a process wherein we first project $(x,y)$ to the $(i+1)$-th abstraction, then to the $(i+2)$-th abstraction, and so on, until we project $(\llparenthesis x\rrparenthesis_{j-1},y)$ to the $j$-th abstraction, i.e. projecting to $(\llparenthesis x\rrparenthesis_j,y)$.

We immediately have the following observation, as a corollary of \textit{(2)} of Lemma \ref{proj-abs}.  
\begin{lemma}\label{proj-greater-index}
For any $(x,y)\in\mathbb{X}_1^*$ and $i\in [1,m]$, we have 
\begin{equation*}
\mathrm{idx}(\llparenthesis x\rrparenthesis_i,y)\geq i.
\end{equation*}
\end{lemma}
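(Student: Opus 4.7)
The plan is to derive this as an essentially one-step corollary of Lemma \ref{proj-abs}(2), so the approach amounts mostly to unpacking definitions. First I would translate the conclusion via the definition of vertex index: by Definition \ref{vertex-index} together with Lemma \ref{i=0theni-1=0}, one has $\mathrm{idx}(x',y)\geq i$ if and only if $(x',y)\in\mathbb{X}_i^*$. Applying this with $(x',y):=(\llparenthesis x\rrparenthesis_i,y)$, the task reduces to showing that $(\llparenthesis x\rrparenthesis_i,y)\in\mathbb{X}_i^*$.

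Next I would unfold the definition of $\mathbb{X}_i^*$ from (\ref{def-eqn-X_i-star}): a vertex $(x',y)\in\mathbb{X}_1^*$ belongs to $\mathbb{X}_i^*$ exactly when $x'=\llparenthesis x'\rrparenthesis_i$. Specialising $x':=\llparenthesis x\rrparenthesis_i$, the goal becomes the ``idempotence'' identity
\[
\llparenthesis x\rrparenthesis_i=\llparenthesis\llparenthesis x\rrparenthesis_i\rrparenthesis_i.
\]
But this is precisely the $j=i$ instance of Lemma \ref{proj-abs}(2), which asserts $\llparenthesis\llparenthesis x\rrparenthesis_i\rrparenthesis_j=\llparenthesis x\rrparenthesis_j$ for all $i\leq j$. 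Hence the displayed equality holds, so $(\llparenthesis x\rrparenthesis_i,y)\in\mathbb{X}_i^*$, and the conclusion follows.

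There is no real obstacle here: the content has already been absorbed into Lemma \ref{proj-abs}, and the argument is just a chain of definitional equivalences. The only auxiliary sanity check worth noting is the boundary case $i=1$, which is automatic because $\mathbb{X}_1^*$ contains every vertex by (\ref{X_1-star-k3}); this is consistent with the fact that $\llparenthesis x\rrparenthesis_1=x$ follows from $\beta_{m-1}^{m-1}=1$ and the empty sum in (\ref{llprrp-func}). Also, it should be remarked that the lemma is a useful dual to Lemma \ref{lattice-point-high-is-lower}: whereas the latter says that a vertex of high index is already its own projection at every lower abstraction, the present statement says that projecting any vertex up to the $i$-th abstraction automatically raises its index to at least $i$, so projections behave as expected.
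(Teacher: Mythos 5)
Your proof is correct and follows exactly the route the paper sketches: reduce to showing $(\llparenthesis x\rrparenthesis_i,y)\in\mathbb{X}_i^*$, i.e.\ the idempotence $\llparenthesis x\rrparenthesis_i=\llparenthesis\llparenthesis x\rrparenthesis_i\rrparenthesis_i$, which is the $j=i$ instance of Lemma \ref{proj-abs}(2). The paper itself states this lemma as an immediate corollary of Lemma \ref{proj-abs}(2) and gives precisely this one-line reduction, so your argument matches the intended proof.
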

This lemma says that the projection of $(x,y)$ in the $i$-th abstraction is a vertex in $\mathbb{X}_i^*$, which is obvious. To prove it, we need only show that $(\llparenthesis x\rrparenthesis_i,y)\in\mathbb{X}_i^*$, i.e. $\llparenthesis x\rrparenthesis_i=\llparenthesis \llparenthesis x\rrparenthesis_i\rrparenthesis_i$.

\begin{fact}\label{specialcase-fact-surrounding}
For any vertex $(x,y)$ of index $i$ where $1<i\leq m$, there are exactly  $\beta_{m-i}^{m-i+1}-1$ vertices of index $i-1$ such that the projections of these vertices in the $i$-th abstraction is exactly $(x,y)$. Moreover, these vertices, together with $(x,y)$, are successive which form an interval. And $(x,y)$ is in the middle of this interval.   
\end{fact}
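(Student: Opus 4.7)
The plan is to prove Fact~\ref{specialcase-fact-surrounding} by a direct computation unpacking the definitions of $[\cdot]_i$ and $\llparenthesis\cdot\rrparenthesis_i$, working inside the interval of first coordinates $I:=[\,[x]_i\beta_{m-i}^{m-1},\,([x]_i+1)\beta_{m-i}^{m-1}\,)$ of those $x'$ that share the $i$-th relative position with $(x,y)$.

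First I would characterize which vertices are of interest. A vertex $(x',y)$ projects to $(x,y)$ in the $i$-th abstraction iff $\llparenthesis x'\rrparenthesis_i=x$, equivalently -- by Lemma~\ref{projection} applied to $(x,y)\in\mathbb{X}_i^*$ -- iff $[x']_i=[x]_i$, i.e.\ $x'\in I$. Among those, the vertices of index at least $i-1$ are the $x'$ of the form $x'=\llparenthesis x'\rrparenthesis_{i-1}=[x']_{i-1}\beta_{m-i+1}^{m-1}+c$, where $c:=\tfrac{1}{2}\sum_{1<\ell\leq i-1}\beta_{m-\ell}^{m-1}$; they therefore form an arithmetic progression in $x'$ of common difference $\beta_{m-i+1}^{m-1}$.

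Next I would count the members of this progression inside $I$. Because $\beta_{m-i}^{m-1}=\beta_{m-i}^{m-i+1}\cdot\beta_{m-i+1}^{m-1}$, the left endpoint $L:=[x]_i\beta_{m-i}^{m-1}$ is an integer multiple of the common difference; and a short geometric estimate (using $\gamma_j^*/\gamma_{j-1}^*=4(m-j)\geq 4$ for $j\leq m-1$) gives $0\leq c<\beta_{m-i+1}^{m-1}$. Hence the admissible values of $[x']_{i-1}$ are precisely the $\beta_{m-i}^{m-i+1}$ consecutive integers $[x]_i\beta_{m-i}^{m-i+1},\ldots,([x]_i+1)\beta_{m-i}^{m-i+1}-1$. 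Among these $\beta_{m-i}^{m-i+1}$ vertices of index $\geq i-1$, Lemma~\ref{projection} forces the unique one of index $\geq i$ in $I$ to be $\llparenthesis x\rrparenthesis_i=x$ itself; subtracting yields exactly $\beta_{m-i}^{m-i+1}-1$ vertices of index exactly $i-1$ projecting to $(x,y)$, and together with $(x,y)$ they make up the entire arithmetic progression -- i.e.\ successive vertices forming an interval.

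For the ``middle'' assertion I would read off $\llparenthesis x\rrparenthesis_i=L+\tfrac{1}{2}\beta_{m-i}^{m-1}+c$, so that $(x,y)$ sits at $[x]_{i-1}=[x]_i\beta_{m-i}^{m-i+1}+\tfrac{1}{2}\beta_{m-i}^{m-i+1}$ inside the progression; since $\beta_{m-i}^{m-i+1}=4(i-1)$ is even, this is an integer landing at the central position of the $4(i-1)$-element progression. The only real obstacle is the bookkeeping step of verifying $c<\beta_{m-i+1}^{m-1}$, which is what guarantees that the half-integer offsets inherited from higher abstractions never push a progression member out of $I$ and hence that the count comes out cleanly; it reduces to the bound $\tfrac{1}{2}\bigl(1+\tfrac{1}{4(i-2)}+\tfrac{1}{16(i-2)(i-3)}+\cdots\bigr)<1$, which holds for any $i\geq 2$.
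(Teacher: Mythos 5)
Your proof is correct and follows essentially the same route as the paper's: identify the block $I$ of first coordinates whose $i$-th relative position equals $[x]_i$, recognize the index-$\geq(i-1)$ vertices inside it as an arithmetic progression of common difference $\beta_{m-i+1}^{m-1}$, count $\beta_{m-i}^{m-i+1}$ of them via Lemma~\ref{projection}, and locate $(x,y)$ at offset $\tfrac{1}{2}\beta_{m-i}^{m-i+1}$. The one thing you make explicit that the paper waves away with ``it is easy to verify'' is the bound $c<\beta_{m-i+1}^{m-1}$, which is exactly what guarantees the clean count; that bookkeeping is a worthwhile addition.
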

\noindent\underline{\textit{Proof of Fact:}}\\[6pt]
\indent The first part of this claim is obvious: these vertices are exactly the set of vertices $([x]_i\beta_{m-i}^{m-i+1}+\ell,y)$ for any $0\leq \ell<\frac{1}{2}\beta_{m-i}^{m-i+1}$ or $\frac{1}{2}\beta_{m-i}^{m-i+1}<\ell<\beta_{m-i}^{m-i+1}$. For any other vertex, it is easy to verify that its projection in the $i$-th abstraction is either less than or greater than $(x,y)$ with respect to the linear order. 

To prove the second part of this claim, we observe that the (relative) position of $(x,y)$, i.e. $(\llparenthesis x\rrparenthesis_i,y)$ (cf. Lemma \ref{lattice-point-high-is-lower}), in the $(i-1)$-th abstraction is  $\lfloor \llparenthesis x\rrparenthesis_i/\beta_{m-i+1}^{m-1}\rfloor$.  Note that  
$\lfloor \llparenthesis x\rrparenthesis_i/\beta_{m-i+1}^{m-1}\rfloor=[x]_i\beta_{m-i}^{m-i+1}+\frac{1}{2}\beta_{m-i}^{m-i+1}$. 
Hence, there are $\frac{1}{2}\beta_{m-i}^{m-i+1}$ successive vertices of index $i-1$ that are on the left side of $(x,y)$ in the $y$-th row, and there are $\frac{1}{2}\beta_{m-i}^{m-i+1}-1$  successive vertices of index $i-1$ that are on the right side of $(x,y)$. This concludes the claim. 

\noindent\underline{\textit{Q.E.D. of Fact.}}\\[-5pt]

The following is a direct corollary of this fact.
\begin{fact}\label{specialcase-fact-surrounding-upto}
For any vertex $(x,y)$ of index $i$ where $1<i\leq m$, there are exactly  $\beta_{m-i}^{m-1}-1$ vertices of index up to $i-1$ such that the projections of these vertices in the $i$-th abstraction is exactly $(x,y)$. Moreover, these vertices, together with $(x,y)$, are successive which form an interval. And $(x,y)$ is roughly in the middle of this interval.   
\end{fact}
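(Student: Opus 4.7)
Although the statement is flagged as an immediate corollary of Fact~\ref{specialcase-fact-surrounding} (which only handles the single level ``index exactly $i-1$''), I find it cleanest to prove it directly by counting the entire fibre of the projection map into the $i$-th abstraction in one shot, rather than iterating the one-level count. Concretely, first I would identify the fibre, then subtract off the unique point in it that sits in $\mathbb{X}_i^*$, and finally locate $(x,y)$ within the fibre interval.

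For step one, since $\mathrm{idx}(x,y)=i$, Lemma~\ref{lattice-point-high-is-lower} gives $x=\llparenthesis x\rrparenthesis_i$, so a vertex $(x',y')\in\mathbb{X}_1^*$ projects to $(x,y)$ in the $i$-th abstraction iff $y'=y$ and $\llparenthesis x'\rrparenthesis_i=x$. By~\eqref{llprrp-func} this is equivalent to $y'=y$ and $[x']_i=[x]_i$ (and Lemma~\ref{projection} with $j=1$ records exactly the same equivalence), i.e.\ to $x'$ lying in the interval
\begin{equation*}
I \;:=\; \bigl[\,[x]_i\,\beta_{m-i}^{m-1},\;([x]_i+1)\,\beta_{m-i}^{m-1}\,\bigr)
\end{equation*}
of $\beta_{m-i}^{m-1}$ consecutive integers. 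Hence the fibre consists of exactly $\beta_{m-i}^{m-1}$ vertices in row $y$, and these vertices form a consecutive run in row $y$; this already takes care of the ``successive interval'' part of the claim.

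For step two, I would check how many elements of this fibre have index at least $i$. A vertex $(x',y)$ has index $\geq i$ iff $(x',y)\in\mathbb{X}_i^*$, i.e.\ $x'=\llparenthesis x'\rrparenthesis_i$; but by~\eqref{llprrp-func} the value $\llparenthesis x'\rrparenthesis_i$ is constant on $I$ and equal to $\llparenthesis x\rrparenthesis_i=x$. So the unique element of the fibre of index $\geq i$ is $(x,y)$ itself, and the remaining $\beta_{m-i}^{m-1}-1$ elements have index at most $i-1$, which is the asserted count. For step three (``roughly in the middle''), I would just compare $x$ to the midpoint of $I$: the midpoint is $[x]_i\,\beta_{m-i}^{m-1}+\tfrac{1}{2}\beta_{m-i}^{m-1}$, whereas \eqref{llprrp-func} yields $x = [x]_i\,\beta_{m-i}^{m-1}+\tfrac{1}{2}\sum_{1<\ell\leq i}\beta_{m-\ell}^{m-1}$, so the discrepancy is the tail $\tfrac{1}{2}\sum_{1<\ell<i}\beta_{m-\ell}^{m-1}$; the ratio $\beta_{m-i+1}^{m-1}/\beta_{m-i}^{m-1}=1/(4(i-1))$ from~\eqref{gamma_0-star-k3}--\eqref{gamma_i-star-k3} turns this into a geometric sum bounded by a small constant fraction of $|I|=\beta_{m-i}^{m-1}$.

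There is no real obstacle: the whole claim unfolds from the definitions~\eqref{abstraction-func}--\eqref{llprrp-func} once one pins down the fibre. The only mildly delicate point is making ``roughly in the middle'' precise, and the explicit offset $x-\bigl([x]_i+\tfrac12\bigr)\beta_{m-i}^{m-1}=\tfrac{1}{2}\sum_{1<\ell<i}\beta_{m-\ell}^{m-1}$ together with the ratio above does exactly that.
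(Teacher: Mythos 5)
Your proof is correct, but it takes a genuinely different route from the paper. The paper's one-line proof iterates Fact~\ref{specialcase-fact-surrounding} level by level: the fibre of the projection $\mathbb{X}_1^* \to \mathbb{X}_i^*$ is layered as a tower of single-step fibres $\mathbb{X}_{j-1}^* \to \mathbb{X}_j^*$, each of size $\beta_{m-j}^{m-j+1}$, and the telescoping identity $\beta_{m-i}^{m-1}=\prod_{1<j\leq i}\beta_{m-j}^{m-j+1}$ collapses the product; the ``successive interval'' and ``roughly middle'' claims are then silently inherited from the one-level version. Your proof instead identifies the full fibre in one stroke: the floor-function definition~\eqref{abstraction-func} of $[\cdot]_i$ directly yields the interval $I=\bigl[\,[x]_i\beta_{m-i}^{m-1},\,([x]_i+1)\beta_{m-i}^{m-1}\bigr)$ of length $\beta_{m-i}^{m-1}$, from which the count, the consecutiveness, and (via the explicit offset $x-\bigl([x]_i+\frac{1}{2}\bigr)\beta_{m-i}^{m-1}=-\frac{1}{2}\sum_{1<\ell<i}\beta_{m-\ell}^{m-1}$, bounded in magnitude by $\frac{1}{4}\beta_{m-i}^{m-1}$ using the ratio $\beta_{m-i+1}^{m-1}/\beta_{m-i}^{m-1}=1/(4(i-1))$) the ``roughly middle'' location all fall out at once. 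What your route buys is self-containedness: it does not invoke Fact~\ref{specialcase-fact-surrounding} or the implicit compatibility of projections across levels (Lemma~\ref{proj-abs}(2)), and it makes the ``roughly middle'' assertion quantitative rather than merely inherited. One small caveat: Lemma~\ref{projection} as stated is one-directional (from $[x']_i=[x]_i$ it gives $x'=\llparenthesis x\rrparenthesis_i$), not the full equivalence you attribute to it; but since you also derive the equivalence directly from~\eqref{llprrp-func}, which is all you actually use, this costs you nothing.
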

\noindent\underline{\textit{Proof of Fact:}}\\[6pt]
\indent Just note that $$\beta_{m-i}^{m-1}=\displaystyle\prod_{1<j\leq i}\beta_{m-j}^{m-j+1}.$$

\noindent\underline{\textit{Q.E.D. of Fact.}}\\[-5pt]

Assume that $0\leq j< i\leq m-1$. 
For any vertex $(x^\star,y)$ of index $m-j$, we call those $\beta_j^i-1$ vertices, whose indices are greater than or equal to $m-i$ but less than $m-j$ and whose projections in the $(m-j)$-th abstraction are  $(x^\star,y)$, \textit{the vertices in $\mathbb{X}_{m-i}^*$ that surround $(x^\star,y)$}. 
 For example, for any $(x,y)$ where $\mathrm{idx}(x,y)=m-i$, $(x,y)$ is a vertex that surrounds $(x^\star,y)$ if $[x]_{m-j}=[x^\star]_{m-j}$. 
And each vertex of index $m-j$ is surrounded by $\beta_j^{m-1}-1$ vertices of lower abstractions, i.e. the vertices in $\mathbb{X}_1^*$. 
Therefore, each vertex of index $m$ is surrounded by $\beta_0^1-1$ vertices of index $m-1$, where this vertex of index $m$ is in the middle of the interval that is composed of these surrounding vertices in $\mathbb{X}_{m-1}^*$; each vertex in $\mathbb{X}_{m-1}^*$, i.e. a vertex of index $m-1$ or $m$, is also surrounded by $\beta_1^2-1$ vertices of index $m-2$, and so on. 

A direct corollary of Fact \ref{specialcase-fact-surrounding-upto} is the following fact. 
\begin{fact}\label{unit-distance}
For any $(x,y),(x^\prime,y)\in\mathbb{X}_i^*$, we have
\begin{equation*}
|x-x^\prime|=c\beta_{m-i}^{m-1}, 
\end{equation*}
for some $c\in \mathbf{N}_0$. 
\end{fact}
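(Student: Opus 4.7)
The plan is to observe that Fact \ref{unit-distance} is really just a direct unpacking of the definitions introduced in equations (\ref{llprrp-func}) and (\ref{def-eqn-X_i-star}), and no combinatorial ingenuity should be needed.

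First I would recall that, by the defining condition of $\mathbb{X}_i^*$ in (\ref{def-eqn-X_i-star}), any $(x,y)\in\mathbb{X}_i^*$ satisfies $x=\llparenthesis x\rrparenthesis_i$, and similarly $x^\prime=\llparenthesis x^\prime\rrparenthesis_i$. Next I would substitute the explicit formula (\ref{llprrp-func}) for $\llparenthesis\cdot\rrparenthesis_i$ into both, yielding
\[
x=[x]_i\,\beta_{m-i}^{m-1}+\tfrac{1}{2}\sum_{1<\ell\leq i}\beta_{m-\ell}^{m-1},\qquad x^\prime=[x^\prime]_i\,\beta_{m-i}^{m-1}+\tfrac{1}{2}\sum_{1<\ell\leq i}\beta_{m-\ell}^{m-1}.
\]
The key point is that the summation term is identical for $x$ and $x^\prime$: it depends only on the abstraction index $i$ (and on $m$), but not on the particular first coordinate. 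Consequently it cancels upon subtraction, leaving $x-x^\prime=\bigl([x]_i-[x^\prime]_i\bigr)\beta_{m-i}^{m-1}$.

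Taking absolute values then gives $|x-x^\prime|=\bigl|[x]_i-[x^\prime]_i\bigr|\,\beta_{m-i}^{m-1}$, so the required constant is $c:=\bigl|[x]_i-[x^\prime]_i\bigr|$. Since $[x]_i$ and $[x^\prime]_i$ are non-negative integers (they are values of the floor-style function defined in (\ref{abstraction-func})), their absolute difference lies in $\mathbf{N}_0$, which is exactly what is asked.

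Since the argument is essentially algebraic, there is no real obstacle. The only thing to be mindful of is invoking the right definitions in the right order, namely using (\ref{def-eqn-X_i-star}) to replace $x$ by $\llparenthesis x\rrparenthesis_i$ before expanding via (\ref{llprrp-func}); doing it in the wrong order would leave the awkward floor term $[x]_i$ expressed in terms of a not-yet-simplified $x$ and obscure the cancellation. Everything else is routine arithmetic, and the fact can be stated and proved in a couple of lines.
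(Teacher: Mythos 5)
Your proof is correct. It is also slightly more direct than the route the paper takes: the paper presents Fact~\ref{unit-distance} as a corollary of Fact~\ref{specialcase-fact-surrounding-upto} (which in turn rests on Fact~\ref{specialcase-fact-surrounding}), i.e.\ it packages the arithmetic into a geometric statement about each vertex of $\mathbb{X}_i^*$ being surrounded by an interval of exactly $\beta_{m-i}^{m-1}$ lower-index vertices, and then reads off the spacing from that picture. You instead unpack \eqref{def-eqn-X_i-star} and \eqref{llprrp-func} directly and cancel the offset term $\tfrac{1}{2}\sum_{1<\ell\leq i}\beta_{m-\ell}^{m-1}$, obtaining $x-x'=\bigl([x]_i-[x']_i\bigr)\beta_{m-i}^{m-1}$ in one step. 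Both proofs ultimately rely on the same identity for $\llparenthesis\cdot\rrparenthesis_i$; the paper's route has the advantage that it sits naturally in the sequence of ``surrounding/interval'' facts it develops for later use, while yours is shorter and self-contained. One minor point worth flagging: \eqref{def-eqn-X_i-star} is only stated for $1<i\leq m$, so for completeness you should note that the case $i=1$ is trivial ($\beta_{m-1}^{m-1}=1$, and every $x$ satisfies $x=\llparenthesis x\rrparenthesis_1$), which your algebra in fact still covers.
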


We shall introduce pebble games over abstractions. The following observation is crucial for such games. 
\begin{lemma}\label{abstraction-strategy-premier} 
For any $1<\xi\leq m$ and $a,a^\prime\in [\gamma_{m-1}^*]$, if $a-\llparenthesis a\rrparenthesis_{\xi}=a^\prime-\llparenthesis a^\prime\rrparenthesis_{\xi}$, then the following hold:
\begin{enumerate}[(1)]
\item $\llparenthesis a\rrparenthesis_{\xi}-\llparenthesis a\rrparenthesis_{\xi-1}=\llparenthesis a^\prime\rrparenthesis_{\xi}-\llparenthesis a^\prime\rrparenthesis_{\xi-1}$

\item  $a-\llparenthesis a\rrparenthesis_{\xi-1}=a^\prime-\llparenthesis a^\prime\rrparenthesis_{\xi-1}$.\\[-1pt]

\end{enumerate}
\end{lemma}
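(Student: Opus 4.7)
My plan is to unpack the definition of $\llparenthesis\cdot\rrparenthesis_\xi$ and reduce everything to a modular-arithmetic statement about $a$ and $a'$. From equation (\ref{llprrp-func}) we have $\llparenthesis a\rrparenthesis_\xi = [a]_\xi\,\beta_{m-\xi}^{m-1} + \tfrac{1}{2}\sum_{1<\ell\leq\xi}\beta_{m-\ell}^{m-1}$, and since $[a]_\xi\,\beta_{m-\xi}^{m-1} = a - (a\bmod\beta_{m-\xi}^{m-1})$ by the definition (\ref{abstraction-func}) of $[a]_\xi$, we get
\[
a - \llparenthesis a\rrparenthesis_\xi \;=\; (a\bmod\beta_{m-\xi}^{m-1}) \;-\; \tfrac{1}{2}\sum_{1<\ell\leq\xi}\beta_{m-\ell}^{m-1}.
\]
The constant (second) summand does not depend on $a$, so the hypothesis $a-\llparenthesis a\rrparenthesis_\xi = a'-\llparenthesis a'\rrparenthesis_\xi$ is equivalent to $a\bmod\beta_{m-\xi}^{m-1} = a'\bmod\beta_{m-\xi}^{m-1}$, i.e. $a\equiv a'\pmod{\beta_{m-\xi}^{m-1}}$.

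Next I would show that $\beta_{m-\xi+1}^{m-1}$ divides $\beta_{m-\xi}^{m-1}$. Using (\ref{beta-func}) and (\ref{gamma_i-star-k3}), the ratio is $\beta_{m-\xi}^{m-1}/\beta_{m-\xi+1}^{m-1} = \gamma_{m-\xi+1}^*/\gamma_{m-\xi}^* = 4(\xi-1)$, a positive integer because $\xi>1$. Consequently the congruence $a\equiv a'\pmod{\beta_{m-\xi}^{m-1}}$ forces the weaker congruence $a\equiv a'\pmod{\beta_{m-\xi+1}^{m-1}}$, so $a\bmod\beta_{m-\xi+1}^{m-1} = a'\bmod\beta_{m-\xi+1}^{m-1}$.

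For part (1), I rewrite $\llparenthesis a\rrparenthesis_\xi - \llparenthesis a\rrparenthesis_{\xi-1}$ by the same unpacking trick:
\[
\llparenthesis a\rrparenthesis_\xi - \llparenthesis a\rrparenthesis_{\xi-1} = \bigl((a\bmod\beta_{m-\xi+1}^{m-1}) - (a\bmod\beta_{m-\xi}^{m-1})\bigr) + \tfrac{1}{2}\beta_{m-\xi}^{m-1},
\]
obtained by writing $\llparenthesis a\rrparenthesis_i = a - (a\bmod\beta_{m-i}^{m-1}) + \tfrac{1}{2}\sum_{1<\ell\leq i}\beta_{m-\ell}^{m-1}$ for $i=\xi,\xi-1$ and subtracting. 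Both residues on the right-hand side have been shown to agree for $a$ and $a'$, and the last term is a constant, giving (1).

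For part (2), the simplest route is to add the given hypothesis to the equality established in (1): indeed $a-\llparenthesis a\rrparenthesis_{\xi-1} = (a-\llparenthesis a\rrparenthesis_\xi) + (\llparenthesis a\rrparenthesis_\xi - \llparenthesis a\rrparenthesis_{\xi-1})$, and each summand is invariant when we replace $a$ by $a'$. No step looks genuinely hard; the only point that needs a moment of care is the divisibility $\beta_{m-\xi+1}^{m-1}\mid\beta_{m-\xi}^{m-1}$, which relies on the explicit product formula for the $\gamma_i^*$'s and on $\xi>1$ (so the quotient $4(\xi-1)$ is a positive integer).
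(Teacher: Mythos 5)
Your proof is correct, and it takes a genuinely different and arguably cleaner route than the paper's. The paper proves (1) by contradiction: it assumes $\llparenthesis a\rrparenthesis_{\xi}-\llparenthesis a\rrparenthesis_{\xi-1}\neq\llparenthesis a^\prime\rrparenthesis_{\xi}-\llparenthesis a^\prime\rrparenthesis_{\xi-1}$, invokes Fact \ref{unit-distance} to argue that the difference of two lattice points of $\mathbb{X}_{\xi-1}^*$ is a multiple of $\beta_{m-\xi+1}^{m-1}$ and that a "unit" disagreement at the $(\xi-1)$-th level dwarfs the residues $|a-\llparenthesis a\rrparenthesis_{\xi-1}|$, then does a sign-by-sign case analysis to contradict $a-\llparenthesis a\rrparenthesis_{\xi}=a^\prime-\llparenthesis a^\prime\rrparenthesis_{\xi}$, and finally obtains (2) from (1). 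You instead expand $\llparenthesis a\rrparenthesis_i = a-(a\bmod\beta_{m-i}^{m-1})+\tfrac12\sum_{1<\ell\leq i}\beta_{m-\ell}^{m-1}$ directly, observe that the hypothesis is therefore \emph{equivalent} to $a\equiv a'\pmod{\beta_{m-\xi}^{m-1}}$, and then use the divisibility $\beta_{m-\xi+1}^{m-1}\mid\beta_{m-\xi}^{m-1}$ (the quotient being $\gamma_{m-\xi+1}^*/\gamma_{m-\xi}^*\in\mathbf{N}^+$, true in both the $k=3$ and general definitions of $\gamma_i^*$) to push the congruence down one level. This makes both (1) and (2) immediate algebraic identities in the residues, with no case analysis and no appeal to Fact \ref{unit-distance}. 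What your approach buys is a shorter, verification-friendly argument that also isolates the exact invariant ($a\bmod\beta_{m-\xi}^{m-1}$) preserved by the hypothesis; what the paper's approach buys is consistency with the "one step in a high abstraction is huge in a low one" intuition it repeatedly uses in later, harder lemmas.
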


For $i\in [k]$, let 
\begin{equation}
tr(i):=(i\,\, \mbox{mod } k-1)\times\sum_{1\leq p\leq m}\beta_{m-p}^{m-1}.
\end{equation}

\textit{The structure}  $\widetilde{\mathfrak{B}}_{3,m}$ is constructed from  
$\mathfrak{B}_{3,m}^\prime$ by 
\begin{enumerate}
\item removing a set of edges: for any vertex $(x,1)\in\mathbb{X}_\ell^*-\mathbb{X}_{\ell+1}^*$ (i.e. $\mathrm{idx}(x,1)=\ell<m$) where $[x]_\ell$ is even, we delete the following edges between $(x,1)$ and any vertex in $\Omega_{x}$ where  
\begin{equation}\label{def-Omega_xy}
\Omega_{x}:=\{(u,v)\in\mathbb{X}_{\ell+1}^* \mid (u,v) \mbox{ is not adjacent to } (\llparenthesis x\rrparenthesis_{\ell+1},1)\}; 
\end{equation}
 
\item circular shifting the vertices of the $i$-th row for $tr(i)$ times to the right. \label{def-circular-shifting}

\end{enumerate}

We call the structure constructed from $\mathfrak{B}_{3,m}^\prime$ after the first step (i.e. before the circular shifts) $\mathfrak{B}_{3,m}$. 
Note that, all the Lemmas mentioned so far continue to work with such adaption. 
In the following we are mainly interested in \textit{this} structure and all the results are created for \textit{this} structure. The shifts will be met and used \textit{only} when we discuss ``4$^\diamond$'' in the proof of Lemma \ref{winning-strategy-in-k=3} (cf. p. \pageref{page-4-diamond}) and  a strategy of Spoiler called ``boundary checkout strategy'' (cf. p. \pageref{page-boundary-checkout-strategy}).

\begin{figure}[htbp]
\centering
\includegraphics[trim = 0mm 0mm 0mm 0mm, clip, scale=0.57]{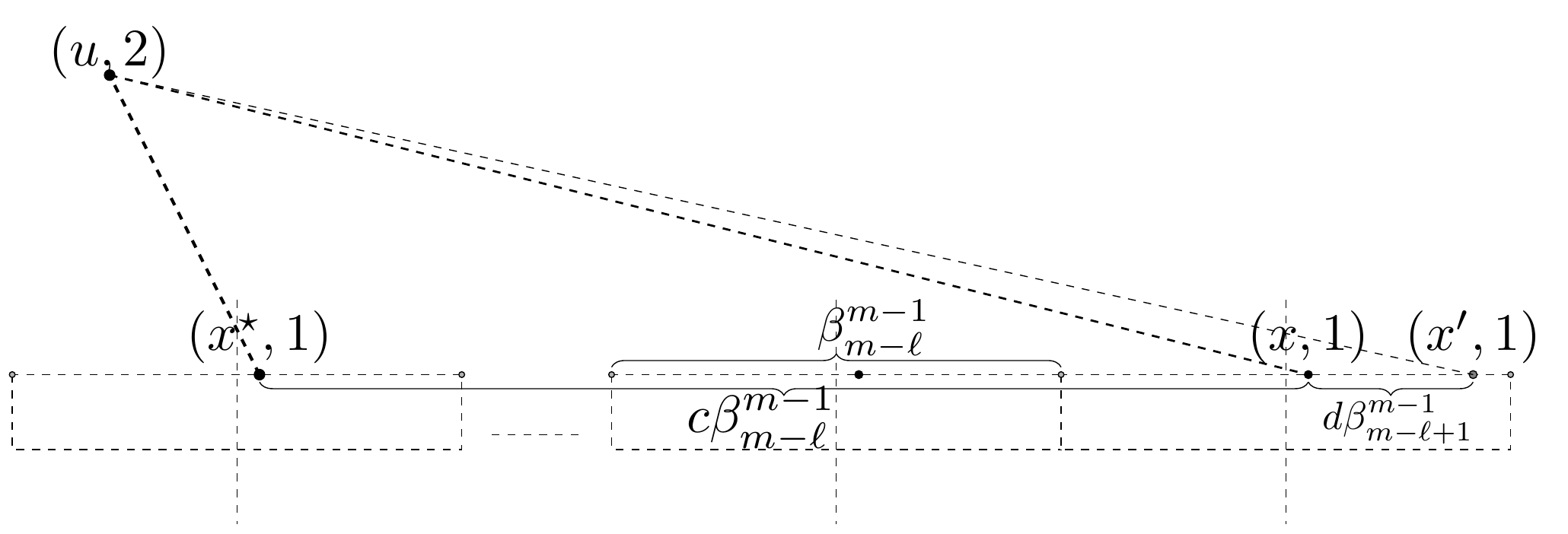}
\caption{From $\mathfrak{B}_{3,m}^\prime$ to $\mathfrak{B}_{3,m}$: some edges are forbidden. Here $x^\star=\llparenthesis x\rrparenthesis_{\ell+1}$. Suppose $\mathrm{idx}(x,1)=\ell$, $\mathrm{idx}(x^\prime,1)=\ell-1$, and  $(u,2),(x^\star,1)\in\mathbb{X}_{\ell+1}^*$. Assume $c$ and $d$ are even.}
\label{fig-k=3-trans-remove}
\end{figure}

In Fig. \ref{fig-k=3-trans-remove}, $c$ is even, i.e. $[x]_i$ is even ($\because (x^\star,1)\in\mathbb{X}_{\ell+1}^*$; we shall see it shortly). Then, by \eqref{def-Omega_xy}, $(x,1)$ is not adjacent to $(u,2)$ since $(x^\star,1)$ is not adjacent to $(u,2)$. Similarly, $(x^\prime,1)$ is not adjacent to $(u,2)$ since $(x,1)$ is not adjacent to $(u,2)$ and $d$ is even. Hence, the missing of an edge in higher abstraction (e.g. the one between $(x^\star,1)$ and $(u,2)$) will propagate to lower abstractions (e.g.  the one between $(x^\prime,1)$ and $(u,2)$).

By Fact \ref{specialcase-fact-surrounding-upto}, for any vertex $(x,y)$ of index $\ell$ where $1<\ell\leq m$, there are exactly  $\beta_{m-\ell}^{m-1}-1$ vertices of index up to $\ell-1$ such that the projections of these vertices in the $\ell$-th abstraction is exactly $(x,y)$. Moreover, these vertices, together with $(x,y)$, are successive which form an interval (i.e. the dashed rectangle in Fig. \ref{fig-k=3-trans-remove}). And $(x,y)$ is roughly in the middle of this interval.

\textbf{Note that}, for the sake of convenience, here we regard the leftmost vertex of the $\ell$-th row of $\widetilde{\mathfrak{B}}_{3,m}$ as $(\gamma_{m-1}^*-tr(\ell),\ell)$ instead of $(0,\ell)$.\footnote{\label{footnote-boundary} In this viewpoint, we regard  ``$(x,y)$'' as a name or label for the associated vertex. Then we can preserve the definitions, such as \eqref{def-coordinate-congruence-number}. A shortcoming of such treatment is that we have to be cautious when computing the distance of two vertices in a row. It is possibly no more the difference of the ``first coordinates''. Fortunately, most often we can think of $\mathfrak{B}_{3,m}$ instead of $\widetilde{\mathfrak{B}}_{3,m}$. Only when we discuss the reason ``4$^\diamond$ can be ensured'' (cf. p. \pageref{page-4-diamond}) or when we meet ``boundary checkout strategy'' (cf. p. \pageref{page-boundary-checkout-strategy}), we should switch to $\widetilde{\mathfrak{B}}_{3,m}$. }  
Because both $\gamma_{m-1}^*$ and $\beta_{m-p}^{m-1}$ are divisible by $(k-1)\beta_{m-i}^{m-1}$ for any $i<p$,  we have the following observation. 
\begin{lemma}\label{cc-boundary-vertex-is-0}
For any $\ell\in [k]$ and $1\leq i\leq m$,
$$\mathbf{cc}([\gamma_{m-1}^*-tr(\ell)]_i,\ell)=0.$$ 
\end{lemma}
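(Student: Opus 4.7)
The plan is a direct unfolding of the definitions, leveraging the two divisibility facts highlighted in the paragraph immediately preceding the lemma. By definition of $\mathbf{cc}$, we have $\mathbf{cc}([\gamma_{m-1}^{*}-tr(\ell)]_i,\ell)=([\gamma_{m-1}^{*}-tr(\ell)]_i+\ell)\bmod(k-1)$, so the goal reduces to establishing the single congruence $[\gamma_{m-1}^{*}-tr(\ell)]_i\equiv -\ell\pmod{k-1}$. Setting $a:=\ell\bmod(k-1)$, we have $tr(\ell)=a\cdot\sum_{p=1}^{m}\beta_{m-p}^{m-1}$, so everything can be reduced to arithmetic with the $\beta$'s.

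The natural first step is to split the defining sum into three pieces, cut at the pivot index $i$:
$$\sum_{p=1}^{m}\beta_{m-p}^{m-1}\;=\;S_{-}+\beta_{m-i}^{m-1}+S_{+},\qquad S_{-}:=\sum_{p<i}\beta_{m-p}^{m-1},\;\;S_{+}:=\sum_{p>i}\beta_{m-p}^{m-1}.$$
The cited divisibility assertion says both $\gamma_{m-1}^{*}$ and every summand of $S_{+}$ are divisible by $(k-1)\beta_{m-i}^{m-1}$. Dividing through by $\beta_{m-i}^{m-1}$ therefore makes the contributions of $\gamma_{m-1}^{*}$ and $S_{+}$ disappear modulo $k-1$: concretely, $\gamma_{m-1}^{*}/\beta_{m-i}^{m-1}=\gamma_{m-i}^{*}\equiv 0$ and $S_{+}/\beta_{m-i}^{m-1}\equiv 0\pmod{k-1}$. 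A short geometric-style estimate, using the explicit ratio $\beta_{m-(j+1)}^{m-1}/\beta_{m-j}^{m-1}=1/(4(j-1))$ extracted from \eqref{gamma_i-star-k3} and \eqref{beta-func}, shows $S_{-}<\beta_{m-i}^{m-1}$, so $S_{-}$ only influences the remainder of the floor division.

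Putting this together, one performs the floor division $[\gamma_{m-1}^{*}-tr(\ell)]_i=\gamma_{m-i}^{*}-a(1+S_{+}/\beta_{m-i}^{m-1})+\delta$, where $\delta$ is a bounded integer correction arising from the fractional piece $aS_{-}/\beta_{m-i}^{m-1}$. Reducing modulo $k-1$, the first two terms collapse to $-a$, and the design of $\gamma_0^{*},\gamma_i^{*}$ and $tr$ (in particular the factors $4(m-j)$ in \eqref{gamma_0-star-k3}--\eqref{gamma_i-star-k3}, which inject an additional multiple of $k-1$ at every telescoping step) ensures that $\delta$ contributes nothing modulo $k-1$. Substituting $a\equiv\ell\pmod{k-1}$ finishes the proof.

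The genuinely delicate part is controlling the floor correction $\delta$; everything else is just bookkeeping of divisibilities. This is the technical reason the construction uses the very specific growth $\gamma_{i}^{*}=4(m-i)\gamma_{i-1}^{*}$ rather than an arbitrary rapidly growing sequence, and why the shift $tr(\ell)$ is defined using exactly the sum $\sum_{p}\beta_{m-p}^{m-1}$ so that the cutoff between $S_{-}$ and $S_{+}$ is clean at every abstraction level $i$. The verification is a short arithmetic check and fits comfortably in a few lines.
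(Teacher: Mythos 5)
Your proposal faithfully reproduces the one-line observation the paper uses, splits $\sum_p\beta_{m-p}^{m-1}$ at the pivot $i$, and correctly notes that the $\gamma_{m-1}^*$ and $S_+$ contributions vanish modulo $k-1$ once divided by $\beta_{m-i}^{m-1}$, and that $aS_-<\beta_{m-i}^{m-1}$. Up to that point you match the paper's intended reasoning (the paper in fact gives no proof beyond that one sentence). However, the final step contains a genuine gap: the assertion that ``$\delta$ contributes nothing modulo $k-1$'' is wrong. When $a>0$ and $i\ge 2$ the fractional piece $aS_-/\beta_{m-i}^{m-1}$ lies strictly between $0$ and $1$, so the floor subtracts exactly one from the clean integer part, i.e.\ $\delta=-1$. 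But $-1\not\equiv 0\pmod{k-1}$ for $k\ge 3$, and nothing about the factors $4(m-j)$ in the $\gamma$'s can repair an additive $-1$; those factors inject multiples of $k-1$ into $\gamma_{m-i}^*$ and $S_+$, but $\delta$ is not a scaled quantity.

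This is not a presentational issue you can talk around: the congruence fails on concrete numbers. Take $k=3$, $m=3$, $\ell=1$, $i=2$. Then $\gamma_2^*=384$, $\beta_2^2=1$, $\beta_1^2=4$, $\beta_0^2=32$, so $tr(1)=37$ and $\gamma_2^*-tr(1)=347$. One gets $[347]_2=\lfloor 347/4\rfloor=86$, which is even, so $\mathbf{cc}(86,1)=87\bmod 2=1\ne 0$. (At $i=1$ everything is fine, because there $S_-=0$ and $\delta=0$; the lemma works there.) So the claimed congruence holds only when $aS_-=0$, i.e.\ $i=1$ or $\ell\equiv 0\pmod{k-1}$; for $i\ge 2$ and $0<\ell<k-1$ the floor shifts the answer by one residue class. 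As a smaller point, the ratio you cite, $\beta_{m-(j+1)}^{m-1}/\beta_{m-j}^{m-1}=1/(4(j-1))$, is inverted; by \eqref{gamma_i-star-k3} this ratio is $\gamma_{m-j}^*/\gamma_{m-j-1}^*=4j$, and the decaying sum bound you need is on $\beta_{m-j}^{m-1}/\beta_{m-(j+1)}^{m-1}$. That slip does not affect the $S_-$ estimate, but the $\delta$ step is fatal, and no rearrangement of the same telescoping computation will make it vanish.
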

It implies that, for any $i$ and $0<\ell<k-1$ (i.e. $\ell\not\equiv 0$ (mod $k-1$)),   
\begin{equation}\label{boundary-nequiv-0-mod-2}
[\gamma_{m-1}^*-tr(\ell)]_i\not\equiv 0 \hspace{3pt} (\mbox{mod }k-1).
\end{equation}

Only when $\ell=0$ or $\ell=k-1$, we have that, for any $i$,  
\begin{equation}\label{boundary-equiv-0-over-BottomCeiling}
[\gamma_{m-1}^*-tr(\ell)]_i\equiv 0 \hspace{3pt} (\mbox{mod }k-1).
\end{equation}

By Fact \ref{specialcase-fact-surrounding} and Fact \ref{specialcase-fact-surrounding-upto}, we know that the index of a vertex in $\mathbb{X}_i^*$ should be $i$, if it is a boundary vertex (i.e. either the leftmost or the rightmost) of the $i$-th abstraction. Similarly, we have the following observation.
\begin{lemma}\label{boundary-index-over-abstractions}
For any $1\leq i\leq m$ and $\ell\in [k]$,
$$\mathrm{idx}(\llparenthesis\gamma_{m-1}^*-tr(\ell)\rrparenthesis_i,\ell)=i.$$
\end{lemma}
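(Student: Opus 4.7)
The plan is to prove the two inequalities $\mathrm{idx}\geq i$ and $\mathrm{idx}\leq i$ separately.

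The lower bound $\mathrm{idx}(\llparenthesis\gamma_{m-1}^*-tr(\ell)\rrparenthesis_i,\ell)\geq i$ is an immediate application of Lemma~\ref{proj-greater-index} with $x=\gamma_{m-1}^*-tr(\ell)$ and $y=\ell$, since the projection onto the $i$-th abstraction automatically lands in $\mathbb{X}_i^*$.

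For the matching upper bound, the case $i=m$ is vacuous (indices range over $[1,m]$), so fix $1\leq i<m$ and set $w=\gamma_{m-1}^*-tr(\ell)$ (read as $0$ when $tr(\ell)=0$, consistently with the shift convention) and $v=\llparenthesis w\rrparenthesis_i$. I must rule out $(v,\ell)\in\mathbb{X}_{i+1}^*$. By Lemma~\ref{proj-abs}(2), this membership is equivalent to $\llparenthesis w\rrparenthesis_i=\llparenthesis w\rrparenthesis_{i+1}$. Unfolding definition~(\ref{llprrp-func}), cancelling the common half-sum $\frac{1}{2}\sum_{1<p\leq i}\beta_{m-p}^{m-1}$, and dividing through by $\beta_{m-i}^{m-1}$, the putative equality collapses to the single congruence
\[
[w]_i\equiv 2i\pmod{4i},
\]
where I have used $\beta_{m-i-1}^{m-1}/\beta_{m-i}^{m-1}=4i$. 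The task is therefore to verify that this congruence \emph{fails}.

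The key arithmetic observation is that $\gamma_{m-1}^*$ together with every tail term $\beta_{m-p}^{m-1}$ with $p>i$ appearing inside $tr(\ell)$ is divisible by $\beta_{m-i-1}^{m-1}=4i\cdot\beta_{m-i}^{m-1}$, so the residue $[w]_i\bmod 4i$ is determined entirely by the head portion $(\ell\bmod(k-1))\sum_{p=1}^{i}\beta_{m-p}^{m-1}$. Using $\beta_{m-p}^{m-1}=4^{p-1}(p-1)!$, the head is dominated by its $p=i$ term and stays strictly below one period $\beta_{m-i-1}^{m-1}$ (here $k=3$). A short floor calculation, split on whether $\ell\equiv 0\pmod{k-1}$, yields $[w]_i\bmod 4i\in\{0,\,4i-2\}$ (with an innocuous substitution $4i-1$ when $i=1$), so in particular $[w]_i\not\equiv 2i\pmod{4i}$, completing the upper bound.

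Conceptually, in the spirit of the remark preceding the lemma, the shift amount $tr(\ell)$ was designed precisely so that $(v,\ell)$ plays the role of a boundary vertex of $\mathbb{X}_i^*$ in the $\ell$-th row of $\widetilde{\mathfrak{B}}_{3,m}$. Fact~\ref{specialcase-fact-surrounding} applied at level $i+1$ then forbids index $i+1$, since any such vertex must be flanked by $\frac{1}{2}\beta_{m-i-1}^{m-i}=2i$ vertices of $\mathbb{X}_i^*\setminus\mathbb{X}_{i+1}^*$ strictly to its left in the same row. The main obstacle in a careful write-up is reconciling this boundary intuition with the cyclic shift convention, especially the edge case $tr(\ell)=0$ where the label ``$\gamma_{m-1}^*$'' sits outside $[\gamma_{m-1}^*]$ and has to be identified with $0$.
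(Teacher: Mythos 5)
Your proof is correct, and it takes a genuinely different route from the paper's. The paper's own argument is a one-sentence geometric appeal: it observes that $(\llparenthesis\gamma_{m-1}^*-tr(\ell)\rrparenthesis_i,\ell)$ sits very close (the paper says at distance ``only $\ell$'') to the right end of the $\ell$-th row of the $i$-th abstraction, so by Fact~\ref{specialcase-fact-surrounding} it cannot be flanked on the right by the required $\tfrac{1}{2}\beta_{m-i-1}^{m-i}-1$ lower-index vertices, forcing its index to be exactly $i$; the $\geq i$ half is the same invocation of Lemma~\ref{proj-greater-index} you use. Your main argument instead converts ``$\mathrm{idx}=i+1$'' into a single modular obstruction $[w]_i\equiv\tfrac{1}{2}\beta_{m-i-1}^{m-i}\pmod{\beta_{m-i-1}^{m-i}}$, i.e. $[w]_i\equiv 2i\pmod{4i}$ for $k=3$, and verifies by an explicit residue computation (splitting the sum $\sum_p\beta_{m-p}^{m-1}$ into a tail divisible by $\beta_{m-i-1}^{m-1}$ and a head strictly below one period, then taking a ceiling) that this congruence fails, landing on $0$, $4i-2$, or $3$ depending on the case. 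What your version buys is rigor: the paper's ``distance is only $\ell$'' is loose enough to be misleading (the difference $\llparenthesis\gamma_{m-1}^*\rrparenthesis_i-\llparenthesis\gamma_{m-1}^*-tr(\ell)\rrparenthesis_i$ is large in absolute terms, and the claim $\ell<\tfrac{1}{2}\beta_{m-i-1}^{m-i}-1$ is even numerically off at $i=1$, $\ell=2$, where the slack is rescued only by $\ell\equiv 0\bmod 2$), whereas your computation makes the boundary intuition an actual arithmetic statement and handles the $i=1$ and $tr(\ell)=0$ corner cases explicitly. The only stylistic nit is the word ``vacuous'' for $i=m$ --- that case is not empty, just trivially true from $\mathrm{idx}\leq m$ together with the lower bound --- and you should keep the parenthetical ``(here $k=3$)'' prominent, since the constant $4i$ is specific to the $\beta$-formulas of this section and would become $\beta_{m-i-1}^{m-i}=2^i\mathpzc{U}_i^*$ in the general setting.
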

\begin{proof}
The reason is simple. Just observe that, in $\mathfrak{B}_{k,m}$, $(\llparenthesis \gamma_{m-1}^*\rrparenthesis_i,\ell)$ is the rightmost vertex whose index is $i$,  
 and that the distance between $(\llparenthesis \gamma_{m-1}^*\rrparenthesis_i,\ell)$ and $(\llparenthesis\gamma_{m-1}^*-tr(\ell)\rrparenthesis_i,\ell)$ is only $\ell$, much less than $\frac{1}{2}\beta_{m-i-1}^{m-i}-1$. As a consequence, the index of $(\llparenthesis\gamma_{m-1}^*-tr(\ell)\rrparenthesis_i,\ell)$ cannot be $i+1$. On the other hand, by Lemma \ref{proj-greater-index}, $(\llparenthesis\gamma_{m-1}^*-tr(\ell)\rrparenthesis_i,\ell)\in \mathbb{X}_i^*$. This concludes the claim. 
\end{proof}

Let 
\begin{equation}
mid:=2m\beta_0^{m-1}+\frac{1}{2}\sum_{1<j\leq m}\beta_{m-j}^{m-1}.
\end{equation}

By the defintion, we know that $mid=\llparenthesis mid\rrparenthesis_m$, thereby $(mid,y)\in\mathbb{X}_m^*$ for any $y$. Note that $mid$ is roughly half of $\gamma_{m-1}^*$.\footnote{Just observe that $\gamma_{m-1}^*=4m\beta_0^{m-1}$ and that $\beta_j^{m-1}$ is much smaller than $\beta_0^{m-1}$ when $1<j<m$.} The structure 
$\mathfrak{A}_{3,m}$ is built from $\mathfrak{B}_{3,m}$ by adding an edge between $(mid,0)$ and $(mid,2)$. Call the endpoints of this edge critical points. 

The structure $\widetilde{\mathfrak{A}}_{3,m}$ is obtained from $\mathfrak{A}_{3,m}$ by the same circular shifts as the way we obtain $\widetilde{\mathfrak{B}}_{3,m}$. In other words, $\widetilde{\mathfrak{A}}_{3,m}$ is also obtained from $\widetilde{\mathfrak{B}}_{3,m}$ by adding an edge between $(mid,0)$ and $(mid,2)$. 

For each set $\mathbb{X}_i^*$, we define the $i$-th abstraction of the structure $\mathfrak{A}_{3,m}$ ($\mathfrak{B}_{3,m}$ resp.) by $\mathfrak{A}_{3,m}[\mathbb{X}_i^*]$ ($\mathfrak{B}_{3,m}[\mathbb{X}_i^*]$ resp.). For any $(u,v)$, call $\mathbf{cc}([u]_\ell,v)$ \textit{the coordinate congruence number of $(u,v)$ in the $\ell$-th abstraction}.  

The following statement is a straightforward but important observation. Recall that $k=3$ in this section.

\begin{lemma} \label{cm=depth}
If $1\leq j<i$ and $(x,y)\in \mathbb{X}_i^*$, then $[x]_{j}\equiv 0$ (mod $k-1$). 
\end{lemma}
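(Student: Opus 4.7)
The plan is to extract a closed expression for $[x]_j$ and then read off the divisibility directly from the arithmetic structure of $\gamma_i^*$. The starting point is that Lemma \ref{i=0theni-1=0} gives $(x,y) \in \mathbb{X}_j^*$ for free (since $j < i$), so \eqref{def-eqn-X_i-star} supplies two parallel representations of the same integer:
\begin{equation*}
x \;=\; [x]_i\,\beta_{m-i}^{m-1} + \tfrac{1}{2}\!\!\sum_{1 < \ell \leq i}\!\!\beta_{m-\ell}^{m-1} \;=\; [x]_j\,\beta_{m-j}^{m-1} + \tfrac{1}{2}\!\!\sum_{1 < \ell \leq j}\!\!\beta_{m-\ell}^{m-1}.
\end{equation*}
Equating these, cancelling the common prefix $\tfrac{1}{2}\sum_{1 < \ell \leq j}\beta_{m-\ell}^{m-1}$, and dividing by $\beta_{m-j}^{m-1}$ (using the identity $\beta_{m-a}^{m-1}/\beta_{m-j}^{m-1} = \beta_{m-a}^{m-j}$ for every $a \geq j$ appearing), I expect to arrive at
\begin{equation*}
[x]_j \;=\; [x]_i\,\beta_{m-i}^{m-j} \;+\; \tfrac{1}{2}\!\!\sum_{j < \ell \leq i}\!\!\beta_{m-\ell}^{m-j}.
\end{equation*}

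From here the claim reduces to showing that each summand on the right is divisible by $k-1 = 2$. I will invoke the explicit factorization noted just after \eqref{llprrp-func}, namely $\beta_{m-b}^{m-a} = 4^{b-a}(b-1)!/(a-1)!$ for $b \geq a$. Every $\beta$-coefficient on the right has $b - a \geq 1$ (namely $i > j$ and $\ell > j$), so each is divisible by the leading factor $4$. Hence $[x]_i\,\beta_{m-i}^{m-j}$ is even, and each $\tfrac{1}{2}\beta_{m-\ell}^{m-j}$ is an even integer. A sum of even integers is even, which gives $[x]_j \equiv 0 \pmod{k-1}$ as required.

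I do not foresee a substantive obstacle: the argument is arithmetic bookkeeping that the definition \eqref{gamma_i-star-k3} of $\gamma_i^*$ was engineered to make clean. The factor $4(m-i)$ in that recursion is precisely what supplies a factor of $4$ (hence of $2 = k-1$) at every level of the expansion tower, which is why a lattice point of a higher abstraction automatically lands on a coordinate whose congruence number vanishes in every strictly lower abstraction. The only mild care needed is to apply the cancellation correctly so that no residual floor/fractional-part issue clouds the derivation of the closed form; but since all terms in the prefix being cancelled are integer-valued and the scaling by $\beta_{m-j}^{m-1}$ divides every remaining summand exactly, the computation goes through cleanly.
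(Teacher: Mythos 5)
Your derivation reaches the same closed form $[x]_j = [x]_i\,\beta_{m-i}^{m-j} + \tfrac{1}{2}\sum_{j<\ell\leq i}\beta_{m-\ell}^{m-j}$ as the paper's proof, and the divisibility conclusion then follows. There is a small but genuine difference in route: the paper computes $[x]_j=\lfloor x/\beta_{m-j}^{m-1}\rfloor$ directly after substituting $x=\llparenthesis x\rrparenthesis_i$, and must argue that the contribution of the $p\leq j$ summands is absorbed by the floor before the integer closed form appears; your detour through Lemma \ref{i=0theni-1=0}, which gives $x=\llparenthesis x\rrparenthesis_j$ as a second exact expansion so the common prefix simply cancels by subtraction, avoids the floor manipulation entirely and is the more transparent derivation of the same identity. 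One caveat you should be aware of: the final step you lean on, $\beta_{m-b}^{m-a}=4^{b-a}(b-1)!/(a-1)!$, is the $k=3$-specific form coming from the recursion \eqref{gamma_0-star-k3}--\eqref{gamma_i-star-k3}. The paper's own proof is phrased abstractly---``by definition, both $\beta_{m-i}^{m-j}$ and $\tfrac{1}{2}\beta_{m-p}^{m-j}$ are divisible by $k-1$''---precisely because this lemma is reused later on the structures of Section \ref{section-structures} (e.g.\ in the proof of Lemma \ref{B_k-has-no-k-clique}), where the $k-1$ factor is contributed by $\mathpzc{U}_i^*$ rather than by a power of $4$. So your argument cleanly settles the $k=3$ instance; to cover the general reuse, swap the $4^{b-a}$ observation for the fact that each ratio $\gamma_{\ell+1}^*/\gamma_{\ell}^*$ appearing in the $\beta$'s carries a factor of $2\,\mathpzc{U}_{m-\ell-1}^*$, hence of $2(k-1)$, so the same parity argument goes through with $k-1$ in place of $2$.
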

In other words, 
$\mathbf{cc}([x]_{j},y)=y$ mod $k-1$. 
That is, for any vertex in higher abstraction, its coordinate congruence number in lower abstractions is completely determined by its second coordinate.   
Hence, in the case where $k=3$, for any $(u,v)\in\mathbb{X}_{\ell+1}^*$, $\mathbf{cc}([u]_\ell,v)=v$ mod $2$ since $[u]_\ell$ is even.  This is crutial for the following vital observation, which leads to the notion ``abstraction''.   

\begin{remark}\label{ExplanationOfAbstraction-specalcase}
For any vertex $(x,y)\in\mathbb{X}_2^*$ and a number $i$ where $i<\mathrm{idx}(x,y)=t$, we call the set of successive vertices \textit{surrounding} $(x,y)$  the $i$-th \textit{complete expansion} of $(x,y)$, denoted $cex(x,y,i)$, if their relative positions in the $t$-th abstraction are the same as that of $(x,y)$. More precisely, $cex(x,y,i)=\{(u,y)\in\mathbb{X}_1^*\mid \mathrm{idx}(u,y)\leq i; \llparenthesis u\rrparenthesis_{t}=x\}\cup\{(x,y)\}$. 
 We can regard $cex(x,y,i)$ as \textit{one object} \label{def-special-object} 
that contains $(x,y)$. For example, the object $cex(mid,0,i)$  contains the critical point $(mid,0)$ and those vertices whose indices are no more than $i$ and their relative position in the $m$-th abstraction is $(mid,0)$.  
The reason we regard $\mathfrak{B}_{3,m}[\mathbb{X}_i^*]$ as an ``abstraction'' of $\mathfrak{B}_{3,m}$ is not only because we can regard $\beta_{m-i}^{m-1}$ elements in the first abstraction as one element in the $i$-th abstraction, but also because whether two vertices are adjacent in the $i$-th abstraction will determine the adjacency of some of the vertices in the lower  abstractions, due to the missing of edges in the process we produce $\mathfrak{B}_{3,m}$ from $\mathfrak{B}_{3,m}^{\prime}$ 
(cf. (\ref{def-Omega_xy}), the definition of $\Omega_{x}$; when $k>3$, cf. ``$\!\restriction\!\Omega$'' in Definition \ref{type-label}). What's more, for any $(x_0,y_0),(x_1,y_1),(x_0^\prime,y_0),(x_1^\prime,y_1)\in\mathbb{X}_{t}^*-\mathbb{X}_{t+1}^*$, the subgraph induced by $cex(x_0,y_0,t-1)$ and $cex(x_1,y_1,t-1)$ is isomorphic to the subgraph induced by  $cex(x_0^\prime,y_0,t-1)$ and $cex(x_1^\prime,y_1,t-1)$ if and only if the adjacency between $(x_0,y_0)$ and $(x_1,y_1)$ is the same as that between  $(x_0^\prime,y_0)$ and $(x_1^\prime,y_1)$.\footnote{That is, $(x_0,y_0)$ is adjacent to $(x_1,y_1)$ iff $(x_0^\prime,y_0)$ is adjacent to $(x_1^\prime,y_1)$.} Here we give a brief intuitive explanation. A strict proof is very verbos, which can be found in Remark \ref{special-locally-isom}. 
The isomorphism is defined by the bijection $h$ such that $h(x_i)=x_i^\prime$, where $i\in\{0,1\}$, and $u-x_i=h(u)-h(x_i)$ for any $(u,y_i)$ in the subgraph induced by $cex(x_0,y_0,t-1)$ and $cex(x_1,y_1,t-1)$. From Fact \ref{specialcase-fact-surrounding} and Fact \ref{specialcase-fact-surrounding-upto}, we \textit{can} see that $\mathrm{idx}(u,y_i)=\mathrm{idx}(h(u),y_i)$. We confess that this is word-of-mouth. 
Suppose that $\mathrm{idx}(u,y_i)=\ell$. 
By lemma \ref{cm=depth} and the fact that $(u-x_i)/\beta_{m-\ell}^{m-1}=(h(u)-h(x_i))/\beta_{m-\ell}^{m-1}$, we have $\mathbf{cc}([u]_\ell,y_i)=\mathbf{cc}([h(u)]_\ell,y_i)$. 
Moreover, we can show that $\mathbf{cc}([u]_j,y_i)=\mathbf{cc}([h(u)]_j,y_i)$ for $\ell<j\leq t$. For a strict proof,  cf. Remark \ref{special-locally-isom}. 
These facts justify the following observations.  
Firstly, observe that the subgraph induced by $cex(x_0,y_0,t-1)$ and $cex(x_1,y_1,t-1)$ is isomorphic to the subgraph induced by  $cex(x_0^\prime,y_0,t-1)$ and $cex(x_1^\prime,y_1,t-1)$, without considering the missing of edges due to the process we create $\mathfrak{B}_{3,m}$ from $\mathfrak{B}_{3,m}^\prime$. 
Secondly, by definition (\ref{def-Omega_xy}), we know that the \textit{nonadjacency} of $(x_0,y_0)$ and $(x_1,y_1)$ will propagate to lower abstractions.  Assume that $(x_0,y_0)$ is not adjacent to $(x_1,y_1)$ and so are $(x_0^\prime,y_0)$ and $(x_1^\prime,y_1)$. Let $(u,y_1)$ be a vertex of index $\ell=t-1$ such that $\llparenthesis u\rrparenthesis_t=x_1$ (note that it is in the subgraph induced by $cex(x_0,y_0,t-1)$ and $cex(x_1,y_1,t-1)$). 
Then, by the previous observations that $\mathrm{idx}(u,y_1)=\mathrm{idx}(h(u),y_1)$ and $\mathbf{cc}([u]_{t-1},y_1)=\mathbf{cc}([h(u)]_{t-1},y_1)$, the vertex $(u,y_1)$ is not adjacent to $(x_0,y_0)$ iff either $[u]_{t-1}$ is even 
  or $\mathbf{cc}([x_0]_{t-1},y_0)=\mathbf{cc}([u]_{t-1},y_1)$. Similarly, $(h(u),y_1)$ is not adjacent to $(h(x_0),y_0)$ iff either $[h(u)]_{t-1}$ is even 
  or $\mathbf{cc}([h(x_0)]_{t-1},y_0)=\mathbf{cc}([h(u)]_{t-1},y_1)$. That is, the adjacency of $(x_0,y_0)$ and $(u,y_1)$ is the same as that of $(h(x_0),y_0)$ and $(h(u),y_1)$: the missing of edges is propagated from the $t$-th abstraction to the $(t-1)$-th abstraction. Such propagations will be the same in two isomorphic structures when they are toward to lower abstractions. In summary, the adjacency of $(x_0,y_0)$ and $(x_1,y_1)$ determines the unique feature of the subgraph induced by $cex(x_0,y_0,t-1)$ and $cex(x_1,y_1,t-1)$. We can generalize it by introducing more vertices in $\mathbb{X}_t^*$.
  For example, assume that the vertices $(x,0)$, $(x,1)$ and $(x,2)$ have the same adjacency as the vertices  $(x^\prime,0)$, $(x^\prime,1)$ and $(x^\prime,2)$; and assume that they are vertices in $\mathbb{X}_t^*-\mathbb{X}_{t+1}^*$. Then the  subgraph induced by $cex(x,0,t-1)$, $cex(x,1,t-1)$ and $cex(x,2,t-1)$ is isomorphic to the subgraph induced by $cex(x^\prime,0,t-1)$, $cex(x^\prime,1,t-1)$ and $cex(x^\prime,2,t-1)$.  
This somehow justifies the notion  ``abstraction''.  For any $i,j,\ell$ where $i>j>p$, $\mathfrak{B}_{3,m}[\mathbb{X}_i^*]$ is an ``abstraction'' of  $\mathfrak{B}_{3,m}[\mathbb{X}_j^*]$, which is also an ``abstraction'' of  $\mathfrak{B}_{3,m}[\mathbb{X}_p^*]$. Hence ``abstraction'' is relative. And each abstraction is a sketch of the structure with respect to some ``scale''.   

A strict argument for the above observation is verbose and involved, cf. the proofs of Lemma \ref{approxi-copy-cat}, \ref{approxi-copy-cat-1} and Lemma \ref{corollary-approxi-copy-cat} for insights, which are used in a more general and complicated setting. But this perhaps helps: the way we construct the structure via iterative structural expansion enforces the isomorphism of neighbourhoods of vertices of identical index. 
\hfill\ensuremath{\divideontimes}
\end{remark}


Clearly, $\mathfrak{A}_{3,m}$ has triangles, which implies that $\widetilde{\mathfrak{A}}_{3,m}$ has triangles. In particular, $\mathfrak{A}_{3,m}$ has 
 a triangle formed by the set of vertices $$\{(mid,0), (mid,1), (mid,2)\},$$ because all the vertices have index $m$, which implies that $\Omega_{mid}=\emptyset$, and $\mathbf{cc}([mid]_m,i)=i$ mod $2$, which implies that both $(mid,0)$ and $(mid,2)$ are adjacent to $(mid,1)$. In  contrast, we have the following observation. 
\begin{fact}\label{B_3_m-is-trianglefree}
$\mathfrak{B}_{3,m}$ has no triangle.  
\end{fact}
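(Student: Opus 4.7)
The plan is to prove by downward induction on $\ell$ from $m$ to $1$ that $\mathfrak{B}_{3,m}[\mathbb{X}_\ell^*]$ is triangle-free; the case $\ell = 1$ is Fact~\ref{B_3_m-is-trianglefree} since $\mathbb{X}_1^*$ is the whole universe. The base case $\ell = m$ is immediate: every vertex in $\mathbb{X}_m^*$ has index $m$, so no edge between two such vertices is touched by the removal rule \eqref{def-Omega_xy} (which requires $\ell<m$), and adjacency at level $m$ reduces via $\mathbf{cc}([x]_m, y) \equiv [x]_m + y \pmod 2$ to the standard pigeonhole impossibility of three pairwise distinct residues in $\{0, 1\}$.

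For the inductive step, assume no triangle in $\mathfrak{B}_{3,m}[\mathbb{X}_{\ell+1}^*]$ and suppose for contradiction that $\{v_0, v_1, v_2\}$ is a triangle in $\mathfrak{B}_{3,m}[\mathbb{X}_\ell^*]$. Since each edge requires distinct rows and the graph has height $3$, write $v_p = (x_p, p)$ and let $i_p = \mathrm{idx}(v_p) \geq \ell$. If every $i_p \geq \ell + 1$ the triangle already lives in $\mathfrak{B}_{3,m}[\mathbb{X}_{\ell+1}^*]$, contradicting the hypothesis; so $P := \{p : i_p = \ell\}$ is nonempty. Using $\mathbf{cc}([x_p]_\ell, p) \equiv [x_p]_\ell + p \pmod 2$ together with Lemma~\ref{cm=depth} (which pins $[x_q]_\ell \equiv 0 \pmod 2$ whenever $i_q > \ell$), each pairwise adjacency at level $\min(i_p, i_q) = \ell$ turns into a parity statement. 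A short case-by-case pigeonhole then shows that when $P$ is any of $\{0,1,2\}$, $\{0,1\}$, $\{0,2\}$, $\{1,2\}$, $\{0\}$, or $\{2\}$, the three residues $[x_0]_\ell + 0, [x_1]_\ell + 1, [x_2]_\ell + 2 \pmod 2$ are forced pairwise distinct in $\{0, 1\}$, an impossibility already in $\mathfrak{B}_{3,m}^\prime$.

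The only remaining case is $P = \{1\}$, i.e.\ the row-$1$ vertex uniquely attains the minimum, so $i_1 = \ell < i_0, i_2$ and hence $\ell < m$. The same parity analysis at level $\ell$ now forces $[x_1]_\ell$ to be even, so $v_1$ is subject to the edge-removal rule \eqref{def-Omega_xy}; reading that rule with its natural recursive semantics (the adjacency defining $\Omega_{x_1}$ is adjacency in $\mathfrak{B}_{3,m}$, which is well-founded because the projection $v_1^\sharp := (\llparenthesis x_1\rrparenthesis_{\ell+1}, 1)$ has strictly larger index by Lemma~\ref{proj-greater-index} and is handled first by the top-down processing), the survival of the edges $v_0 v_1$ and $v_1 v_2$ in $\mathfrak{B}_{3,m}$ means exactly that $v_0, v_2 \notin \Omega_{x_1}$, i.e.\ both are adjacent to $v_1^\sharp$ in $\mathfrak{B}_{3,m}$. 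Since $v_1^\sharp \in \mathbb{X}_{\ell+1}^*$ and $v_0, v_2 \in \mathbb{X}_{\ell+1}^*$ (because $i_0, i_2 \geq \ell+1$), and since $v_0 v_2 \in \mathfrak{B}_{3,m}$ was in the original triangle, $\{v_0, v_1^\sharp, v_2\}$ is a triangle in $\mathfrak{B}_{3,m}[\mathbb{X}_{\ell+1}^*]$, contradicting the induction hypothesis.

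The main obstacle I anticipate is twofold: first, making the parity pigeonhole in the $P \neq \{1\}$ subcases uniform and transparent across all six index-configurations (rather than enumerating them by hand); and second, justifying the recursive reading of $\Omega_{x_1}$ carefully enough that the projection step lifts the triangle from $\mathbb{X}_\ell^*$ to $\mathbb{X}_{\ell+1}^*$ within $\mathfrak{B}_{3,m}$ itself (not merely $\mathfrak{B}_{3,m}^\prime$), which is exactly what the downward induction needs in order to close the argument.
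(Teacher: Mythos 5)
Your proof is correct and takes essentially the same route as the paper's: the paper picks a triangle of maximum index and contradicts maximality by lifting it to a higher abstraction via the $\Omega$ rule, which is just the maximal-counterexample dual of your downward induction, and its pigeonhole step (identifying $(a,0),(c,2)\in Q$ and $(b,1)\in P$) isolates exactly the case $P=\{1\}$ that your case analysis singles out as the only one not already killed by parity. Your explicit observation that $\Omega_{x}$ must be read recursively top-down, and that this recursion is well-founded because the projection in \eqref{def-Omega_xy} lands in $\mathbb{X}_{\ell+1}^*$, is a useful clarification that the paper's proof uses implicitly.
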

\noindent\underline{\textit{Proof of Fact:}}\\[4pt]
We prove it by contradiction. 
Assume that there are triangles in $\mathfrak{B}_{3,m}$ and $C_3$ is such a triangle that has the maximum index, say $t$. Note that $t$ cannot be $m$, for otherwise there are two vertices that have the same coordinate congruence number in the $m$-th abstraction  by  the pigeonhole principle. Similarly, 
$C_3$ must contain both vertices in $\mathbb{X}_t^*-\mathbb{X}_{t+1}^*$ and vertices in $\mathbb{X}_{t+1}^*$, due to the pigeonhole principle. Let $|C_3|=\{(a,0),(b,1),(c,2)\}$, inasmuch as the second coordinates of the vertices of $C_3$ must be different. 
  Let $P=\{(x,y)\in \mathbb{X}_{t}^*-\mathbb{X}_{t+1}^*\mid (x,y)\in |C_3|\}$. And  let 
 $Q=\{(x,y)\in \mathbb{X}_{t+1}^* \mid (x,y)\in |C_3|\}$. Note that $P\cap Q=\emptyset$. By Lemma \ref{i=0theni-1=0}, the set of vertices of $C_k$ is exactly $P\cup Q$.

Let $cC_3:=\{\mathbf{cc}([x]_t,y)\1 (x,y)\in |C_3|\}$.  
Since there are $3$ elements in $C_3$ and $|cC_3|\leq 2$, by pigeonhole principle, there are two vertices such that their coordinate congruence numbers in the $t$-th abstraction are the same. If one of them is in $P$, then there is no edge between them, by definition. Therefore, to have a triangle, both of them should be in $Q$. Recall that, by Lemma \ref{cm=depth}, $\mathbf{cc}([x]_t,y)=y$ mod $2$ for any $(x,y)\in \mathbb{X}_{t+1}^*$. Therefore, their coordinate congruence numbers in the $t$-th abstraction should be $0$. In other words, these two vertices are $(a,0)$ and $(c,2)$. Note that $(b,1)\in P$ since $P\neq \emptyset$ and $\mathbf{cc}([b]_t,1)$ should be $1$,  for otherwise $(b,1)$ is not adjacent to both $(a,0)$ and $(c,2)$. In other words, $[b]_t$ is even. Note that $(\llparenthesis b\rrparenthesis_{t+1},1)$ is either not adjacent to $(a,0)$ or not adjacent to $(c,2)$, for otherwise there is a $k$-clique whose index is greater than $t$. That is, either $(a,0)$ or $(c,2)$ is in $\Omega_{b}$. Therefore,  either $(a,0)$ or $(c,2)$ is not adjacent to $(b,1)$. A contradiction occurs. 

\noindent\underline{\textit{Q.E.D. of Fact.}}\\[-6pt]

As a direct corollary, we have  
\begin{equation}
\widetilde{\mathfrak{B}}_{3,m} \mbox{ has no triangle.}
\end{equation}

Note that the universes of $\mathfrak{A}_{3,m}$ and $\mathfrak{B}_{3,m}$ are square lattices that have the width of $\gamma_{m-1}^*$ and the height of $3$. For each row of a lattice, there is a linear order that is a segment of the original linear order of the universe of the structure.  
There are three such  linear orders in a structure or an abstraction, corresponding to three distinct rows. We can use three intervals to describe these orders. For $0\leq i\leq 2$, we use $[(0,i),(\gamma_{m-j}^*,i)]$ to describe the linear orders in the $j$-th abstraction. 

Since we can view the structures from different scales, which correspond to different abstractions, it is important to know how the linear orders in different abstractions are related to each other. The following lemma says that
an object is ahead of another one in the $(i-1)$-abstraction if it is ahead of that object in the $i$-th abstraction.  Recall that the following lemmas are about the structures $\mathfrak{A}_{3,m}$ and $\mathfrak{B}_{3,m}$, not about $\widetilde{\mathfrak{A}}_{3,m}$ and $\widetilde{\mathfrak{B}}_{3,m}$ unless explicitly stated. 
\begin{lemma}\label{HighOrder-is-LowOrder}
For  $2\leq i\leq m$ and any $x_1,x_2\in \mathbb{X}_1^*$, we have 
 $$[x_1]_i<[x_2]_i \Rightarrow [x_1]_{i-1}<[x_2]_{i-1}.$$ 
\end{lemma}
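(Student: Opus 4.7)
The plan is to exploit the multiplicative tower structure of the $\beta$-factors and rewrite both $[x]_i$ and $[x]_{i-1}$ via the same Euclidean division of $x$. By the definition \eqref{beta-func} we have $\beta_{m-j}^{m-i} = \prod_{m-j \le \ell < m-i} (\gamma_{\ell+1}^*/\gamma_\ell^*)$, which immediately yields the nesting identity
\[
\beta_{m-i}^{m-1} \;=\; \beta_{m-i}^{m-i+1} \cdot \beta_{m-i+1}^{m-1}.
\]
This is the only algebraic fact about the $\beta$'s we need, and it is where the indices $i$ and $i-1$ become related.

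Next I would apply the definition \eqref{abstraction-func} and write $x = [x]_i \cdot \beta_{m-i}^{m-1} + r$ with $0 \le r < \beta_{m-i}^{m-1}$. Substituting the nesting identity gives
\[
x \;=\; [x]_i \cdot \beta_{m-i}^{m-i+1} \cdot \beta_{m-i+1}^{m-1} + r,
\]
and dividing by $\beta_{m-i+1}^{m-1}$ then taking floors (using $0 \le \lfloor r/\beta_{m-i+1}^{m-1}\rfloor < \beta_{m-i}^{m-i+1}$) yields the clean relation
\[
[x]_{i-1} \;=\; [x]_i \cdot \beta_{m-i}^{m-i+1} + s, \qquad 0 \le s < \beta_{m-i}^{m-i+1}.
\]

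With this formula in hand the lemma is immediate. Assuming $[x_1]_i < [x_2]_i$, write $[x_p]_{i-1} = [x_p]_i \cdot \beta_{m-i}^{m-i+1} + s_p$ for $p = 1, 2$ with $0 \le s_p < \beta_{m-i}^{m-i+1}$, and chain
\[
[x_1]_{i-1} \;=\; [x_1]_i \cdot \beta_{m-i}^{m-i+1} + s_1 \;<\; ([x_1]_i + 1) \cdot \beta_{m-i}^{m-i+1} \;\le\; [x_2]_i \cdot \beta_{m-i}^{m-i+1} \;\le\; [x_2]_{i-1}.
\]
There is no real obstacle here: the statement is ultimately a routine fact about nested floor functions whose divisors form a multiplicative chain. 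The only point to be slightly careful about is applying the bound $s_1 < \beta_{m-i}^{m-i+1}$ to strictly absorb the ``$+1$'' gap between $[x_1]_i$ and $[x_2]_i$, which is exactly what lets strict inequality propagate from the coarser to the finer abstraction.
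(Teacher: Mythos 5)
Your proof is correct and takes essentially the same route as the paper: both hinge on the nesting identity $\beta_{m-i}^{m-1}=\beta_{m-i}^{m-i+1}\cdot\beta_{m-i+1}^{m-1}$ with $\beta_{m-i}^{m-i+1}\in\mathbf{N}^+$, and both absorb the gap between $[x_1]_i$ and $[x_2]_i$ by floor-function arithmetic after dividing by $\beta_{m-i+1}^{m-1}$. Your intermediate reformulation $[x]_{i-1}=[x]_i\cdot\beta_{m-i}^{m-i+1}+s$ with $0\le s<\beta_{m-i}^{m-i+1}$ is a clean packaging of the same computation the paper does inline.
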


The intuition is simple. 
Imagine that we have a graph which is drawn on a grid. Making the cells of the  gird smaller by introducing more columns, which evenly divides a cell into a bunch of smaller cells, will not change the order of two points in the graph. 
Assume that $[x_1]_i<[x_2]_i$. Each row of the $(i-1)$-th abstraction can be divided evenly into $\gamma_{m-i}^*$ intervals of length $\beta_{m-i}^{m-i+1}$, and each interval has a single vertex of index $i$, which is roughly in the middle of this interval.  
We can regard each interval as a bucket. These buckets have a natural linear order induced from the original one. 
Then it is clear that a vertex in a ``bucket'' $\mathrm{BU}_1$ is ahead of a vertex in another ``bucket'' $\mathrm{BU}_2$ in the original linear order if $\mathrm{BU}_1$ is ahead of $\mathrm{BU}_2$ in the induced linear order. 

It implies that  
 an object is ahead of another one in some lower abstraction if it is ahead of that object in any higher abstraction.  So a direct corollary of Lemma \ref{HighOrder-is-LowOrder} is that  
\begin{equation}\label{corollary-HighOrder-is-LowOrder}
[x_1]_i<[x_2]_i \Rightarrow x_1<x_2.
\end{equation}

Recall that $k=3$. 
\begin{lemma}\label{conquer-boundary-strategy}
For any $(x,y)$,$(x^\prime,y)\in\mathbb{X}_1^*$ and for any $p$ where $1\leq p\leq m$, if 
\begin{enumerate}[(1)]
\item 
$x-\llparenthesis x\rrparenthesis_p=x^\prime-\llparenthesis x^\prime\rrparenthesis_p$, 

\item 
$[\llparenthesis x\rrparenthesis_p]_q\equiv [\llparenthesis x^\prime\rrparenthesis_p]_q \,\,(\mbox{mod }k-1), \mbox{ where }q\leq min\{\mathrm{idx}(\llparenthesis x\rrparenthesis_p,y),\mathrm{idx}(\llparenthesis x^\prime\rrparenthesis_p,y)\}$, 

\end{enumerate}
then, for any $i$ where $1\leq i\leq q$ and any $j$ where $1\leq j\leq p$, 
\begin{equation}\label{eqn-boundary-strategy-1} 
[\llparenthesis x\rrparenthesis_j]_i\equiv [\llparenthesis x^\prime\rrparenthesis_j]_i \hspace{3pt}(\mbox{mod }k-1).
\end{equation} 
\end{lemma}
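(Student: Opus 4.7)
The plan is to identify a single constant $\Delta := \llparenthesis x\rrparenthesis_j - \llparenthesis x'\rrparenthesis_j$ that does not depend on $j\leq p$, show that $\Delta$ is divisible by $(k-1)\,\beta_{m-i}^{m-1}$ for every $i\leq q$, and then read \eqref{eqn-boundary-strategy-1} off from the floor identity $\lfloor (a + c\beta)/\beta\rfloor = \lfloor a/\beta\rfloor + c$ (valid for $c\in\mathbf{Z}$, $\beta>0$) applied to the definition $[\,\cdot\,]_i = \lfloor\,\cdot\,/\beta_{m-i}^{m-1}\rfloor$.

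First I would upgrade hypothesis~(1) to every smaller index by a finite induction. Starting from $x - \llparenthesis x\rrparenthesis_p = x' - \llparenthesis x'\rrparenthesis_p$, I apply part~(2) of Lemma~\ref{abstraction-strategy-premier} successively with $\xi = p,\, p-1,\,\ldots,\, 2$ to obtain $x - \llparenthesis x\rrparenthesis_j = x' - \llparenthesis x'\rrparenthesis_j$ for every $1\leq j\leq p$. Rearranging shows that $\Delta = x - x'$ is indeed independent of $j$.

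Next I would pin down the arithmetic form of $\Delta$ using hypothesis~(2). Because $q\leq \mathrm{idx}(\llparenthesis x\rrparenthesis_p,y)$, Lemma~\ref{lattice-point-high-is-lower} applied to the vertex $(\llparenthesis x\rrparenthesis_p, y)$ yields $\llparenthesis x\rrparenthesis_p = \llparenthesis\llparenthesis x\rrparenthesis_p\rrparenthesis_q$, and similarly for $x'$. Expanding each side via the definition \eqref{llprrp-func} and subtracting makes the common half-sum cancel, leaving
$$\Delta = d\cdot \beta_{m-q}^{m-1}, \qquad d := [\llparenthesis x\rrparenthesis_p]_q - [\llparenthesis x'\rrparenthesis_p]_q.$$
Hypothesis~(2) says precisely that $d\equiv 0\pmod{k-1}$.

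To finish, fix any $i\leq q$ and $j\leq p$. The product rule implicit in \eqref{beta-func} gives $\beta_{m-q}^{m-1} = \beta_{m-q}^{m-i}\cdot\beta_{m-i}^{m-1}$, so $\Delta$ is the integer multiple $d\,\beta_{m-q}^{m-i}$ of $\beta_{m-i}^{m-1}$. Pulling this multiple through the floor gives
$$[\llparenthesis x\rrparenthesis_j]_i = \lfloor(\llparenthesis x'\rrparenthesis_j + \Delta)/\beta_{m-i}^{m-1}\rfloor = [\llparenthesis x'\rrparenthesis_j]_i + d\cdot\beta_{m-q}^{m-i},$$
and the right-hand side is congruent to $[\llparenthesis x'\rrparenthesis_j]_i$ modulo $k-1$ since $d$ is. I do not expect any real obstacle: the whole argument is bookkeeping through the iterated definition of $\llparenthesis\cdot\rrparenthesis$, and the only point I would slow down on is verifying the divisibility $\beta_{m-i}^{m-1}\mid\beta_{m-q}^{m-1}$ for $i\leq q$, which is immediate from the product formula.
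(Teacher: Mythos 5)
Your proof is correct, and it is cleaner than the paper's. Both proofs share the crucial step of upgrading the hypothesis $x-\llparenthesis x\rrparenthesis_p = x'-\llparenthesis x'\rrparenthesis_p$ to all levels $j\leq p$ via Lemma~\ref{abstraction-strategy-premier} (so that $\Delta := \llparenthesis x\rrparenthesis_j - \llparenthesis x'\rrparenthesis_j = x-x'$ is independent of $j$), but they diverge in how they handle the congruences. The paper first establishes the intermediate claim $[\llparenthesis x\rrparenthesis_p]_i \equiv [\llparenthesis x'\rrparenthesis_p]_i \pmod{k-1}$ for all $i\leq q$ (equation~\eqref{eqn-boundary-strategy} in the appendix) through a somewhat delicate case analysis: first on whether $q$ is strictly below or equal to $\min\{\mathrm{idx}(\llparenthesis x\rrparenthesis_p,y),\mathrm{idx}(\llparenthesis x'\rrparenthesis_p,y)\}$, then on the relative order of the two indices and the value of $[\llparenthesis x\rrparenthesis_p]_{\ell'}\bmod (k-1)$, each branch invoking Lemma~\ref{cm=depth}; it then extends to $j<p$ by dividing the constant difference by $\beta_{m-i}^{m-1}$, without spelling out why that quotient is an integer. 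Your approach replaces the entire case analysis with a single computation: using Lemma~\ref{lattice-point-high-is-lower} (justified by $q\leq\mathrm{idx}(\llparenthesis x\rrparenthesis_p,y)$) to write $\llparenthesis x\rrparenthesis_p = \llparenthesis\llparenthesis x\rrparenthesis_p\rrparenthesis_q$ and expanding via~\eqref{llprrp-func}, you get the closed form $\Delta = d\,\beta_{m-q}^{m-1}$ where $d$ is divisible by $k-1$ by hypothesis~(2). The factorization $\beta_{m-q}^{m-1}=\beta_{m-q}^{m-i}\beta_{m-i}^{m-1}$ then makes the divisibility $\beta_{m-i}^{m-1}\mid\Delta$ transparent, and the congruence falls out of the floor identity in one line. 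Your route is more uniform: it does not care whether $q$ sits strictly below the indices or matches one of them, and it makes the integrality of $\Delta/\beta_{m-i}^{m-1}$ explicit where the paper leaves it implicit. What the paper's case analysis buys is local: it shows directly when the residues are all $0$ (via Lemma~\ref{cm=depth}), which echoes the way the lemma is later applied in the game strategy, but for the purpose of proving the statement itself your direct computation is the better argument.
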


In particular, when $i=j$, \eqref{eqn-boundary-strategy-1} is equivalent to the following: 
$$[x]_i\equiv [x^\prime]_i \hspace{3pt}(\mbox{mod }k-1). 
$$

Recall that the leftmost vertex of the $i$-th row of $\widetilde{\mathfrak{B}}_{3,m}$ is $(\gamma_{m-1}^*-tr(i),i)$. Cf. Footnote \ref{footnote-boundary}. 
Let $\widetilde{\mathfrak{A}}_{3,m}^+$ and $\widetilde{\mathfrak{B}}_{3,m}^+$ be built from $\widetilde{\mathfrak{A}}_{3,m}$ and $\widetilde{\mathfrak{B}}_{3,m}$ respectively  by adding a set of constants 
\begin{equation}\label{def-boundary-constants}
\left\{(a,b) \mid a=\gamma_{m-1}^*-tr(b)\mbox{ or } a=\gamma_{m-1}^*-tr(b)-1; b\in[0,2]\right\}.
\end{equation} 
We can take it that the constants are interpreted as extra immovable ``pebbles'' on the boundaries of rows of the structures. Call them \textit{boundary constants}. 
It is easy to see that 
\begin{equation}\label{specialcase-equiv-plus-2-equiv}
\widetilde{\mathfrak{A}}_{3,m}^+\equiv_m^{2} \widetilde{\mathfrak{B}}_{3,m}^+ \mbox{ implies } \widetilde{\mathfrak{A}}_{3,m}\equiv_m^{2} \widetilde{\mathfrak{B}}_{3,m}.
\end{equation} 

In the following we prove the main result of this section. 
\begin{lemma}\label{winning-strategy-in-k=3}
For any $m\geq 3$,
\[
\widetilde{\mathfrak{A}}_{3,m}^+\equiv_m^{2} \widetilde{\mathfrak{B}}_{3,m}^+.
\]
\end{lemma}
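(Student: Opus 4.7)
\medskip

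\noindent\textbf{Proof proposal.} The plan is to exhibit an explicit winning strategy for Duplicator and verify it by induction on the round number, keeping track of an \emph{active abstraction level} $\xi(r) \in [1,m]$ that records how ``deep'' into the family of nested abstractions $\widetilde{\mathfrak{B}}_{3,m}[\mathbb{X}_m^*] \subseteq \cdots \subseteq \widetilde{\mathfrak{B}}_{3,m}[\mathbb{X}_1^*]$ the play has descended. The initial value is $\xi(0)=m$ and the invariant to maintain is: after round $r$, if the single surviving pair of pebbles (guaranteed by Fact \ref{star_k-2}) is $a \Vdash b$, then (i) the projections of $a$ and $b$ into the $\xi(r)$-th abstraction have equal relative first coordinates $[\cdot]_{\xi(r)}$, and equal values of $a-\llparenthesis a\rrparenthesis_{\xi(r)}$ and $b-\llparenthesis b\rrparenthesis_{\xi(r)}$ (so Lemma \ref{abstraction-strategy-premier} applies), and (ii) the pair $(\widetilde{\mathfrak{A}}_{3,m}^+,a)$ and $(\widetilde{\mathfrak{B}}_{3,m}^+,b)$ agree on every atomic formula involving $a,b$ and the boundary constants.

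\medskip

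When Spoiler plays in round $r+1$, I would split on the index $t$ of the newly picked vertex $a'$ (say in $\widetilde{\mathfrak{A}}_{3,m}^+$, the other case being symmetric). If $t \geq \xi(r)$, the situation already lives in the $\xi(r)$-th abstraction, which structurally resembles $\mathcal{B}_3$ (Section \ref{existential-case-section}): by the same coordinate-congruence counting used in the proof of Theorem \ref{existential-case}, Duplicator can find a matching vertex $b'$ of the same relative position, second coordinate, and coordinate congruence number, keeping $\xi(r+1) := \xi(r)$. If instead $t < \xi(r)$, Duplicator ``descends'' by setting $\xi(r+1) := t$: she first re-interprets the already-pebbled $b$ via Lemma \ref{abstraction-strategy-premier}, which gives her the matching data at level $t$, and then selects $b'$ in $\mathbb{X}_t^*$ using the complete-expansion description (Fact \ref{specialcase-fact-surrounding-upto}) around $b$'s projection. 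Lemmas \ref{projection}, \ref{proj-abs}, \ref{proj-greater-index} and \ref{HighOrder-is-LowOrder} ensure that the descent is internally consistent across all intermediate abstractions.

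\medskip

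The two genuine threats require separate treatment. First, Spoiler may place a pebble on a critical point, say $(mid,0)$ in $\widetilde{\mathfrak{A}}_{3,m}^+$, trying to exploit the extra edge to $(mid,2)$. Duplicator mirrors into $\widetilde{\mathfrak{B}}_{3,m}^+$ by a vertex of index $m$ (which is the index of $mid$), keeping $\xi$ at $m$; because Fact \ref{B_3_m-is-trianglefree} forbids triangles in $\mathfrak{B}_{3,m}$, the response has to be chosen so that whichever vertex Spoiler pebbles next will be answered by an edge-matching vertex, and here the $m$ rounds of budget are exactly what is needed to guarantee enough fresh vertices at each descending abstraction. Second, Spoiler's so-called ``boundary checkout strategy'' attempts to distinguish the two structures by running pebbles up against the leftmost/rightmost vertex in some row; this is exactly where the circular shifts (item \ref{def-circular-shifting} in the construction of $\widetilde{\mathfrak{B}}_{3,m}$) and the boundary constants (\ref{def-boundary-constants}) intervene. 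Using Lemma \ref{cc-boundary-vertex-is-0}, Lemma \ref{boundary-index-over-abstractions} and Lemma \ref{conquer-boundary-strategy}, Duplicator's descent through abstractions remains coherent with the boundary constants, so that the boundary checkout cannot force a partial-isomorphism failure.

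\medskip

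The hard part is showing that the budget of $m$ rounds is not exceeded before Duplicator runs out of abstraction levels to descend into. I would argue by a potential-function argument: $\xi(r)$ is monotonically non-increasing in $r$, and it is strictly decreased only when Spoiler forces a descent. Each strict descent ``costs'' Spoiler a round since it requires placing a pebble whose index is strictly smaller than any previously pebbled index in that game epoch; because the indices lie in $[1,m]$, at most $m-1$ strict descents are possible, so after $m$ rounds Duplicator either still operates at $\xi \geq 1$ (which is always available, being the whole structure) or has legitimately matched at every step. Together with (\ref{specialcase-equiv-plus-2-equiv}), this yields $\widetilde{\mathfrak{A}}_{3,m}^+ \equiv_m^2 \widetilde{\mathfrak{B}}_{3,m}^+$, as desired.
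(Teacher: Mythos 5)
Your proposal captures the correct high-level architecture — Duplicator plays over abstractions, maintains a level parameter $\xi$, projects pebbles via $\llparenthesis\cdot\rrparenthesis_\xi$, and uses the ``hr-copycat'' offset condition — but there is a genuine gap in your descent rule, and it destroys the invariant that the rest of your argument relies on.

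You set $\xi(r+1) := t$ whenever Spoiler plays a vertex of index $t < \xi(r)$. This conflates the \emph{index of the newly pebbled vertex} with the \emph{level at which partial isomorphism is tracked}. In the paper's strategy these are different things: when Spoiler plays $(x,y)$ with $\mathrm{idx}(x,y)=t<\xi-1$, Duplicator does \emph{not} descend to level $t$. She regards the pick as if $(\llparenthesis x\rrparenthesis_\xi,y)$ (or $(\llparenthesis x\rrparenthesis_{\xi-1},y)$) had been picked, responds at level $\xi$ or $\xi-1$ on the abstraction side, and then places the actual pebble at offset $x-\llparenthesis x\rrparenthesis_\xi$ (or $x-\llparenthesis x\rrparenthesis_{\xi-1}$) from her abstract response, so that her pick also has index $t$, but the level $\xi$ decreases by at most one per round. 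This slow decay is exactly what maintains $\theta<\xi$ (condition $5^\diamond$), the invariant that guarantees she has room to satisfy the abstraction-order-condition on the next descent. Under your rule Spoiler can immediately collapse the stack: if he plays an index-$1$ vertex in round $2$, then $\xi(2)=1$ and $\theta=m-2>1$ for $m\geq 4$; there is no lower abstraction to resort to, the abstraction-order-condition can no longer be guaranteed, and the boundary-checkout strategy becomes effective.

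Your potential-function argument is also off for the same reason: you bound the \emph{number} of strict descents by $m-1$, but what must be bounded is the \emph{total depth} descended; a single round can drop $\xi$ by up to $m-1$ levels in your scheme. The paper's budget works precisely because each descent is by exactly one, so $m$ rounds of decrement starting from $\xi=m$ always leaves $\xi\geq\theta+1\geq 1$. You should replace the rule ``$\xi(r+1):=t$'' with ``$\xi(r+1)\in\{\xi(r),\xi(r)-1\}$, decreasing only when no choice at the current level respects the abstraction-order-condition and edge/boundary matching,'' and then verify — using Lemma~\ref{abstraction-strategy-premier} and Claim~\ref{special-board-abstraction-downward-propagae} — that the hr-copycat condition propagates from $\xi$ to $\xi-1$, which is where Remark~\ref{ExplanationOfAbstraction-specalcase} does the real work. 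With that correction the rest of your outline (handling critical points, and handling the boundary constants via Lemma~\ref{cc-boundary-vertex-is-0}, \ref{boundary-index-over-abstractions}, \ref{conquer-boundary-strategy}) does align with the paper's proof.
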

In each round, replying Spoiler's pick by a vertex of the same row is a basic element in the strategy of  Duplicator. Assume that Spoiler picks $(x,y)$ in some structure and Duplicator responds with $(x^\prime,y)$ in the other structure. 
Say that \textit{the game (and the board) is over the $i$-th abstraction},  
if $u-\llparenthesis u\rrparenthesis_i=u^\prime-\llparenthesis u^\prime\rrparenthesis_i$ for any pair of pebbled vertices $(u,v)\Vdash (u^\prime,v)$, and the projections of pebbled vertices in the $i$-th abstraction define a partial isomorphism. 

\begin{claim}\label{special-board-abstraction-downward-propagae}
For any $i\in [2,m]$, if the game board is over the $i$-th abstraction, then it is also over the $(i-1)$-th abstraction.
\end{claim}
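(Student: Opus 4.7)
The plan is to check, for every pebbled pair $(u,v)\Vdash (u',v)$, both defining clauses of ``the game (and board) is over the $(i-1)$-th abstraction'': (a) the offset equality $u-\llparenthesis u\rrparenthesis_{i-1}=u'-\llparenthesis u'\rrparenthesis_{i-1}$, and (b) that the natural bijection on the level-$(i-1)$ projections is a partial isomorphism. Clause (a) is immediate from part~(2) of Lemma~\ref{abstraction-strategy-premier} applied with $\xi=i$ to the hypothesis $u-\llparenthesis u\rrparenthesis_i=u'-\llparenthesis u'\rrparenthesis_i$. In tandem, part~(1) of the same lemma yields $\llparenthesis u\rrparenthesis_i-\llparenthesis u\rrparenthesis_{i-1}=\llparenthesis u'\rrparenthesis_i-\llparenthesis u'\rrparenthesis_{i-1}$, so each level-$(i-1)$ projection occupies the same relative position inside the level-$i$ cell of its level-$i$ projection as does its image on the other side.

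Clause (b) is where the real work happens. The natural map $(\llparenthesis u\rrparenthesis_{i-1},v)\mapsto(\llparenthesis u'\rrparenthesis_{i-1},v)$ must preserve indices, $\mathbf{cc}$-values at all relevant abstraction levels, the linear order, the boundary constants from \eqref{def-boundary-constants}, and adjacency. The equal intra-cell offsets together with Fact~\ref{specialcase-fact-surrounding-upto} pin down the index of each level-$(i-1)$ projection. Lemma~\ref{conquer-boundary-strategy} (with $p=i$, fed by the intra-cell offset equality and the $\mathbf{cc}$-equalities inherited from the level-$i$ partial isomorphism) then yields $[\llparenthesis u\rrparenthesis_{i-1}]_q\equiv[\llparenthesis u'\rrparenthesis_{i-1}]_q\pmod{k-1}$ for every relevant $q$; combined with Lemma~\ref{proj-abs} this secures $\mathbf{cc}$-matching at every level $\leq i-1$ as well as every level $\geq i$. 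Lemma~\ref{HighOrder-is-LowOrder} and the common cell-offsets preserve the linear order within each row, and the boundary constants are preserved automatically because each pebbled pair shares its offset from the row boundary.

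Adjacency in $\mathfrak{B}_{3,m}$ between two level-$(i-1)$ projections is decided at their maximum common abstraction level. If that level is $\geq i$, the answer is already determined by the level-$i$ partial isomorphism. If it equals exactly $i-1$, the base rule of $\mathfrak{B}_{3,m}^\prime$ (distinct rows and different $\mathbf{cc}$ at level $i-1$) is respected by the invariants above; the only subtle point is the $\Omega_x$-deletion from \eqref{def-Omega_xy}. That deletion is triggered by (i) the parity of $[\cdot]_{i-1}$ (that is, $[\cdot]_{i-1}\bmod(k-1)$, which for $k=3$ is parity) and (ii) nonadjacency of the one-step-higher projections, both of which are already on the list of preserved invariants. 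I expect this last bookkeeping around the downward propagation of missing $\Omega$-edges, as illustrated in Figure~\ref{fig-k=3-trans-remove} and Remark~\ref{ExplanationOfAbstraction-specalcase}, to be the main obstacle; but it reduces cleanly to the parity identities of Lemma~\ref{conquer-boundary-strategy} together with the hypothesis that the board is already in partial isomorphism at level~$i$.
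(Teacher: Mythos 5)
Your proposal takes essentially the same route as the paper: part~(2) of Lemma~\ref{abstraction-strategy-premier} for the offset equality at level $i-1$, and the structural fact that expansions of adjacency-equivalent level-$i$ cells are isomorphic (Remark~\ref{ExplanationOfAbstraction-specalcase}) for the partial-isomorphism clause, with order handled by the same monotonicity observations. You simply unpack the Remark into the explicit parity and index lemmas it rests on, which is more detail than the paper gives but is not a different argument; the one assertion you treat as free — that the level-$i$ partial isomorphism forces matching $\mathbf{cc}$-values and indices — is precisely what the Remark is also tacitly using (via adjacency to the boundary constants of the $^+$ structures), so there is no new gap relative to the original.
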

\noindent\underline{\textit{Proof of Claim:}}\\[6pt]
\indent The argument is simple. First, by Lemma \ref{abstraction-strategy-premier}, we have that $u-\llparenthesis u\rrparenthesis_{i-1}=u^\prime-\llparenthesis u^\prime\rrparenthesis_{i-1}$ for any pair of pebbled vertices $(u,v)\Vdash (u^\prime,v)$. Second, by Remark \ref{ExplanationOfAbstraction-specalcase}, we have that the projections of pebbled vertices in the $(i-1)$-th abstraction define a partial isomorphism w.r.t edges. Moreover, it is also easy to see that partial isomorphism w.r.t order can also be preserved.  

\noindent\underline{\textit{Q.E.D. of Claim.}}\\[-3pt]

Duplicator's strategy works over  abstractions. 
That is, in each round Duplicator plays a related game over some specific abstraction and uses it to decide her pick in the original game.  
We use $\xi$ to remind Duplicator in which abstraction she should play in the current round of the game over abstractions. \textit{More precisely}, $\xi$ is the maximum $i$ such that the game board is over the $i$-th abstraction. 
At the beginning, $\xi=m$, i.e., in Duplicator's mind, 
the players are playing in the highest abstraction in the first round of the related game. We use $\theta$ to denote how many rounds are left at the start of the current round. At the beginning, $\theta=m$. After each round, $\theta$ decreases by one automatically and the game, both the original and the one above abstractions, moves to the next round. 
In each round, $\xi$ remains unchanged if Duplicator can respond properly such that the game board is still over the $\xi$-th abstraction. 
However, if Duplicator cannot do so, she tries to seek a solution in the closest lower abstraction, which will be explained in page \pageref{def-resort-to-closest-abs}.  

Occasionally, we say that ``Duplicator picks an object''. By this we mean that she picks a vertex, and this vertex is in this object, by default in the $\xi$-th abstraction (cf. p. \pageref{def-special-object}, Remark \ref{ExplanationOfAbstraction-specalcase}).  

To prevent from voilating partial isomorphism due to linear orders, in each round Duplicator should  ensure the following requirements in the first place, when she makes her picks.   
Assume that the current round is the $\ell$-th round. Although the game board is $(\widetilde{\mathfrak{A}}_{3,m}^+,\widetilde{\mathfrak{B}}_{3,m}^+)$, the following is stated w.r.t. $(\mathfrak{A}_{3,m},\mathfrak{B}_{3,m})$. Recall that, the circular shifts are introduced only to tackle 4$^\diamond$ (cf. p. \pageref{page-4-diamond}) and so called ``boundary checkout strategy'' of Spoiler (cf. p. \pageref{page-boundary-checkout-strategy}). 

\begin{enumerate}[(1)]
      \item If $[x]_\xi< m-\ell$ or  
            $\gamma_{m-\xi}^*-[x]_\xi< m-\ell$, then\\   
            \indent $\hspace{20pt}[x^\prime]_\xi=[x]_\xi$; 

      \item If  $m-\ell\leq [x]_\xi\leq \gamma_{m-\xi}^*-m+\ell$,  
            then\\ 
            \indent $\hspace{20pt} m-\ell\leq  [x^\prime]_\xi\leq  
            \gamma_{m-\xi}^*-m+\ell$. 

    \end{enumerate}

Call the above the requirement of linear order over abstractions, or \textit{abstraction-order-condition} (for $k=3$). 

We shall show that Duplicator can ensure this requirement in the first place, meanwhile the game board is in partial isomorphism, after each round. 

So far, we defaultly assume that Spoiler picks a vertex in the $\xi$-th abstraction in the current round of the original game, wherein two types of games, i.e. the original game and the corresponding game over abstractions, are coincided.   
However, suppose that Spoiler tries to break this assumption by picking a vertex in the $i$-th abstraction where $i<\xi$, in the original game. In such case, Duplicator regards it as if $(\llparenthesis x\rrparenthesis_\xi, y)$ were picked, and responds with $(x^\prime,y)$ such that $(\llparenthesis x^{\prime}\rrparenthesis_\xi, y)$ is the vertex she will pick to respond  the picking of $(\llparenthesis x\rrparenthesis_\xi, y)$ using her strategy that works in the $\xi$-th abstraction (or, when she cannot do it, responds  $(\llparenthesis x\rrparenthesis_{\xi-1}, y)$ using her strategy that works in the $(\xi-1)$-th abstraction, wherein a solution is ensured); meanwhile, Duplicator ensures that 
\begin{equation}\label{eqn-hr-copycat}
x^{\prime}-\llparenthesis x^{\prime}\rrparenthesis_\xi=x-\llparenthesis x\rrparenthesis_\xi \;(\mbox{or } x^{\prime}-\llparenthesis x^{\prime}\rrparenthesis_{\xi-1}=x-\llparenthesis x\rrparenthesis_{\xi-1}, \mbox{ in the other case)}.
\end{equation} 
 Duplicator is a copycat in the sense of \eqref{eqn-hr-copycat}, which is called \textit{horizontal-residue-copycat} (\textbf{hr-copycat}, in short).\footnote{Observation that  $x^{\prime}-\llparenthesis x^{\prime}\rrparenthesis_{\xi}=x-\llparenthesis x\rrparenthesis_{\xi}$ if and only if  $x^{\prime}-\llparenthesis x^{\prime}\rrparenthesis_\xi\equiv x-\llparenthesis x\rrparenthesis_\xi$ (mod $\beta_{m-\xi}^{m-1}$). As a consequence, we give this name. Similar thing can be find in Remark \ref{remark-ommit-mod}.}
\label{def-hr-copycat}
In other words, Duplicator resorts to the game over abstractions to determine her pick in the original game. 
In this way, Duplicator can reduce the game to a game over some specific abstraction in each round, i.e. either the $\xi$-th abstraction or the $(\xi-1)$-th abstraction. We shall see that Duplicator can win the games over abstractions in each round, provided that $\theta<\xi$.
And we shall see that $\theta<\xi$ is preserved throughout the game after the first round, which ensures that Duplicator is able to resort to the closest lower abstraction for a solution when necessary.   

In the following we explain Duplicator's strategy in more detail, using a simultaneous induction as follows, and show that it is a winning strategy. Note that, whenever we say that ``(Duplicator) wins this round'', we mean that she not only wins in the original pebble games, but also wins in the corresponding pebble games over the $\xi$-th abstraction.   
 
\begin{proof}
This proof is by induction, wherein we show that the follows are preserved after each round. (recall that we always assume that Spoiler picks $(x,y)$ and Duplicator responds with $(x^\prime,y)$ in the current round)
\begin{enumerate}[1$^\diamond$]\label{special-winning-condtion-set}
\item $x-\llparenthesis x\rrparenthesis_\xi=x^\prime-\llparenthesis x^\prime\rrparenthesis_\xi$. 
\item The abstraction-order-condition holds. 

\item The board, without considering the (projections of) boundary constants (cf. \eqref{def-boundary-constants}), is in partial isomorphism over the $\xi$-th abstraction w.r.t.  edges. 

\item 3$^\diamond$ holds even if the boundaries of rows of the $\xi$-th abstraction are occupied with extra immovable pebbles.\footnote{They are not counted in the $k-1$ pairs of pebbles.}

\item $\theta<\xi$ after the first round.

\item The game board is in partial isomorphism. 
\end{enumerate}

We shall see that 1$^\diamond$\textapprox 4$^\diamond$ implies 6$^\diamond$, according to Remark \ref{ExplanationOfAbstraction-specalcase}.\footnote{In other words, Duplicator has a winning strategy in the orignial game if she has a winning strategy in the game over abstractions, provided that she is a hr-copycat in each round.}
 Moreover, although all of the conditions should be ensured simultaneously, in the game Duplicator will first try to ensure 2$^\diamond$, then 3$^\diamond$, then 4$^\diamond$, and then 1$^\diamond$ and 5$^\diamond$. 

In any round, Duplicator will first try to pick $(x^\prime,y)$ such that 1$^\diamond$\textapprox 4$^\diamond$ hold.\footnote{Because of 1$^\diamond$, Duplicator will first try to make it such that $(x^\prime,y)\in \mathbb{X}_\xi^*$ if $(x,y)\in\mathbb{X}_\xi^*$.} If she cannot find such a vertex, she resorts to the $(\xi-1)$-the abstraction.  
In the following of this section, whenever we say ``\textit{Duplicator resorts to the $(\xi-1)$-th abstraction}'', we mean that Duplicator tries to ensure 1$^\diamond$\textapprox 4$^\diamond$, wherein ``$\xi$'' is replaced by ``$\xi-1$'' in these requirements; 
 and $\xi:=\xi-1$ \textit{at the end of this round}.\label{def-resort-to-closest-abs}\footnote{Such a treatment makes it possible for a discussion involving both ``$\xi$'' and ``$\xi-1$'', without introducing additional symbols. 

Note that $\mathrm{idx}(\llparenthesis x^\prime\rrparenthesis_{\xi-1},y)=\xi-1$: by Lemma \ref{proj-greater-index}, $\mathrm{idx}(\llparenthesis x^\prime\rrparenthesis_{\xi-1},y)\geq\xi-1$; by Lemma \ref{proj-abs}, $\llparenthesis \llparenthesis x^\prime\rrparenthesis_{\xi-1}\rrparenthesis_{\xi}=\llparenthesis x^\prime\rrparenthesis_{\xi}$; then, by definition \eqref{def-eqn-X_i-star}, $\llparenthesis x^\prime\rrparenthesis_{\xi-1}=\llparenthesis x^\prime\rrparenthesis_{\xi}$ if $\mathrm{idx}(\llparenthesis x^\prime\rrparenthesis_{\xi-1},y)>\xi-1$.} Note that in this case Duplicator regards it as if Spoiler picked ``$(\llparenthesis x\rrparenthesis_{\xi-1}, y)$'' in current round and she replies in such a way that the projection of her pick in the $(\xi-1)$-th abstraction is her response over this abstraction.   
We shall see that, 4$^\diamond$ can be ensured if  if $[x^\prime]_{\xi-1}\equiv 0$ (mod $2$) when Duplicator resorts to the $(\xi-1)$-th abstraction: in this case $[x]_{\xi-1}\equiv[x^\prime]_{\xi-1}$ (mod 2), which meets \textit{(2)} of Lemma \ref{conquer-boundary-strategy}. 

\textbf{Basis}: In the first round, Duplicator simply mimics. Clearly, she wins this round. $\xi$ is unchanged, whereas $\theta:=\theta-1$. Therefore, $\theta<\xi$ after the first round, i.e. 5$^\diamond$ holds in the following rounds if $\xi$ is to be decreased by \textit{at most} one in each round. Obviously, the abstraction-order-condition holds at the end of the first round, and the other conditions hold.  

\textbf{Induction Step}: 
Suppose that Duplicator can win the first $\ell-1$ rounds where $1<\ell\leq m$, and 1$^\diamond$\textapprox 5$^\diamond$ hold,  
 we prove that she can also win the $\ell$-th round, i.e. 6$^\diamond$ holds, and  1$^\diamond$\textapprox 5$^\diamond$ are also preserved. Recall that we assume that Spoiler picks $(x,y)$ and Duplicator picks $(x^{\prime},y)$ in the $\ell$-th round, i.e. the current round. Moreover, we assume that there is one pair of pebbles on the board at the start of the $\ell$-th round. If there is no such a pair, Duplicator simply mimics Spoiler in this round, as in the first round. 
Assume that $(a,b), (a^{\prime},b)$  are the pair of pebbles on the board at the start of the $\ell$-th round, and that $(a,b)$, $(x,y)$ are in the same structure. By induction hypothesis, $a-\llparenthesis a\rrparenthesis_\xi=a^\prime-\llparenthesis a^\prime\rrparenthesis_\xi$. 

\textit{Assume that} $(x,y)\in \mathbb{X}_\xi^*$, which implies that $x-\llparenthesis x\rrparenthesis_\xi=0$. In such case Duplicator will first \textit{try to} reply with $(x^\prime,y)\in \mathbb{X}_\xi^*$ such that 2$^\diamond$\textapprox 4$^\diamond$ hold. 
If she can do it, 1$^\diamond$ is also ensured since $x^\prime-\llparenthesis x^\prime\rrparenthesis_\xi=0$. And so is 5$^\diamond$. 

Suppose that $y=b$. It is clear that $3^\diamond$ holds.  Moreover, we have the following observation.   
\begin{claim}\label{SpecialCase-linear-order}
On condition that 2$^\diamond$ and 5$^\diamond$ hold at the start of the $\ell$-th round, $\mbox{2}^\diamond$ can be preserved after this round, at the price of decreasing $\xi$ by at most one.
\end{claim}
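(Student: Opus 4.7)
The plan is a case analysis on where $[x]_\xi$ lies in the $\xi$-th abstraction, exploiting the fact that by Fact \ref{star_k-2} at most one prior pebble pair $(a,b) \Vdash (a',b)$ sits on the board at the start of the $\ell$-th round.

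I will first treat the \emph{boundary case}: $[x]_\xi < m-\ell$ or $\gamma_{m-\xi}^* - [x]_\xi < m-\ell$. Since $(x,y) \in \mathbb{X}_\xi^*$, I let Duplicator set $x' := x$, so clause (1) of 2$^\diamond$ is immediate. For partial-order consistency with the prior pair, the inductive hypothesis asserts that 2$^\diamond$ held at round $\ell-1$ with the stricter threshold $m-\ell+1$; hence either $[a]_\xi = [a']_\xi$, or both $[a]_\xi$ and $[a']_\xi$ lie in the old middle region $[m-\ell+1,\gamma_{m-\xi}^*-m+\ell-2]$, which is disjoint from the current boundary strip. In either subcase the comparison between $[x]_\xi$ and $[a]_\xi$ matches that between $[x']_\xi$ and $[a']_\xi$.

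Next the \emph{middle case}: $m-\ell \le [x]_\xi \le \gamma_{m-\xi}^* - m + \ell$. If no prior pair is present, Duplicator simply takes $[x']_\xi = [x]_\xi$ and clause (2) is trivial. Otherwise the inductive hypothesis places $[a']_\xi$ strictly inside the new middle region, so at least one slot remains on each side of $[a']_\xi$ within that region. Duplicator selects $x' \in \mathbb{X}_\xi^*$ whose abstract position sits on the same side of $[a']_\xi$ as $[x]_\xi$ does of $[a]_\xi$ (and equal to $[a']_\xi$ when $[x]_\xi = [a]_\xi$), which yields 2$^\diamond$.

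The only scenario in which the $\xi$-th abstraction is too coarse to absorb the pick while simultaneously respecting the edge/isomorphism constraints $3^\diamond$--$4^\diamond$ that will be layered on later is the one forcing Duplicator to descend. In that event she moves to the $(\xi-1)$-th abstraction: by (\ref{gamma_i-star-k3}) and (\ref{beta-func}) the row length multiplies by $\beta_{m-\xi}^{m-\xi+1}$, and Lemma \ref{abstraction-strategy-premier} transports the abstraction-level residue from level $\xi$ to $\xi-1$, while Lemma \ref{HighOrder-is-LowOrder} guarantees that the order constraints propagate downward. Re-running the boundary/middle dichotomy at the finer resolution now succeeds, and $5^\diamond$ is preserved automatically: from $\theta < \xi$ together with $\theta' = \theta - 1$ and $\xi' \ge \xi - 1$ we get $\theta' \le \theta - 1 < \xi - 1 \le \xi'$.

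The hard part will be verifying that \emph{one} descent always suffices: one must check that the positional fineness $\beta_{m-\xi}^{m-\xi+1}$ available at level $\xi-1$ dominates the boundary width $m-\ell$, using the bound $m-\ell \le \xi - 2$ inherited from $5^\diamond$ together with the super-exponential growth of $\gamma_i^*$ encoded in (\ref{gamma_i-star-k3}); the calculation is routine but must interlock with the downstream analysis of $3^\diamond$ and $4^\diamond$, which is where the granularity gained by the descent is actually consumed.
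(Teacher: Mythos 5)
Your argument lands on the same decisive inequality the paper uses: $\beta_{m-\xi}^{m-\xi+1}=4(\xi-1)>4(m-\ell)$, which follows from 5$^\diamond$ (giving $\xi>\theta=m-\ell+1$) and the recursion \eqref{gamma_i-star-k3}, so that after descent every pebbled vertex has more than $2(m-\ell)$ vertices of index $\xi-1$ on each side (Fact~\ref{specialcase-fact-surrounding}) and therefore sits in the middle zone of the $(\xi-1)$-th abstraction, giving Duplicator full freedom. You identify the right ingredients but defer this as ``routine'' when it is the only non-trivial step; meanwhile the detailed boundary/middle casework at level~$\xi$ is not needed (the paper goes straight to the descent case), and the appeal to super-exponential growth and to 3$^\diamond$/4$^\diamond$ is tangential here — only the single-step multiplier $4(\xi-1)$ matters, and the claim concerns 2$^\diamond$ alone.
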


\noindent\underline{\textit{Proof of Claim: }}\\[-9pt]

By her strategy, if Duplicator cannot pick a vertex in the $\xi$-th abstraction that satisfies 2$^\diamond$, then she resorts to the $(\xi-1)$-th abstraction, where she can \textit{always} find a solution. Here is a brief argument. By definition, $\theta+\ell=m+1$. Hence $\xi>\theta= m-\ell+1$. 
Observe that $(\llparenthesis a\rrparenthesis_{\xi-1},b)$ and $(\llparenthesis a^\prime\rrparenthesis_{\xi-1},b)$ satisfy (2) of the requirement: the number of vertices in the $(\xi-1)$-th abstraction, which surround a vertex of index $\xi$, is $\beta_{m-\xi}^{m-\xi+1}=4(\xi-1)>4(m-\ell)$.\footnote{We also take this vertex into the account of the number.} 
Therefore, by Fact \ref{specialcase-fact-surrounding}, there are at least $2(m-\ell)$  vertices of index $\xi-1$ that are on the left side (right side, resp.) of the leftmost (rightmost, resp.) vertex of index $\xi$ in any row of $\mathfrak{B}_{3,m}$.   
That is, any pebbled vertex in $\mathbb{X}_\xi^*$ is away from the leftmost vertex or the rightmost vertex in the $(\xi-1)$-th abstraction, thereby satisfying (2) of the requirement, provided that Duplcator picks a vertex of index $\xi-1$ such that it is  away from both boundaries of a row in the $(\xi-1)$-th abstraction.

\noindent\underline{\textit{Q.E.D. of Claim}. }\\[-6pt]

By induction hypothesis, 2$^\diamond$ also holds for $a$ and $a^\prime$, if $x$ is substituted with $a$ and $x^\prime$ is substituted with $a^\prime$. 
Therefore it is easy for Duplicator to ensure that $[x]_\xi\leq [a]_\xi \mbox{ iff } [x^\prime]_\xi\leq [a^\prime]_\xi$: if $[x]_\xi\leq [a]_\xi$ ($[x]_\xi\geq [a]_\xi$ resp.) and $[a]_\xi<m-\ell$ ($\gamma_{m-\xi}^*-[a]_\xi< m-\ell$, resp.) then Duplicator picks $(x^\prime,y)$ s.t. $[x^\prime]_\xi=[x]_\xi$; otherwise $[a]_\xi$ and $[a^\prime]_\xi$ are far away from the two boundaries of the $b$-th row of the $\xi$-th abstractions. In the former case, $2^\diamond$ clearly holds after the $\ell$-th round.
However, in the latter case, $2^\diamond$ not necessarily holds after the $\ell$-th round. For example, there are more than $m-\ell+1$ vertices of index $\xi$ on the left side of $(\llparenthesis a\rrparenthesis_\xi,b)$, whereas there are exactly $m-\ell$ such vertices on the left side of $(\llparenthesis a^\prime\rrparenthesis_\xi,b)$. If Spoiler picks $(x,y)$ s.t. $[x]_\xi=[a]_\xi-1$, then no matter how Duplicator responds, $2^\diamond$ will never hold in the $\xi$-th abstraction.  
 Nevertheless, in such case, if Duplicator resorts to the $(\xi-1)$-th abstraction, then $2^\diamond$ can be ensured in the $(\xi-1)$-th abstraction after the $\ell$-th round. That is, $[x]_{\xi-1}\leq [a]_{\xi-1} \mbox{ iff } [x^\prime]_{\xi-1}\leq [a^\prime]_{\xi-1}$ and (2) of abstraction-order-condition is met by Claim \ref{SpecialCase-linear-order}. 
If $[x]_{\xi-1}\neq [a]_{\xi-1}$, then by \eqref{corollary-HighOrder-is-LowOrder}, either $x<a$  and $x^\prime<a^\prime$, or $x>a$ and $x^\prime>a^\prime$.  If $[x]_{\xi-1}=[a]_{\xi-1}$, then $[x^\prime]_{\xi-1}=[a^\prime]_{\xi-1}$.   
By induction hypothesis, $a-\llparenthesis a\rrparenthesis_\xi=a^\prime-\llparenthesis a^\prime\rrparenthesis_\xi$, which also implies that $a-\llparenthesis a\rrparenthesis_{\xi-1}=a^\prime-\llparenthesis a^\prime\rrparenthesis_{\xi-1}$, due to Lemma \ref{abstraction-strategy-premier}.  Duplicator can ensure that $x-\llparenthesis x\rrparenthesis_\xi=x^\prime-\llparenthesis x^\prime\rrparenthesis_\xi$ or $x-\llparenthesis x\rrparenthesis_{\xi-1}=x^\prime-\llparenthesis x^\prime\rrparenthesis_{\xi-1}$ depending on whether Duplicator has to resort to the $(\xi-1)$-th abstraction for a solution. That is, 1$^\diamond$ can be ensured.\footnote{If $(x^\prime,y)\in\mathbb{X}_\xi^*$, then  $x-\llparenthesis x\rrparenthesis_\xi=x^\prime-\llparenthesis x^\prime\rrparenthesis_\xi=0$; if $(x^\prime,y)\in\mathbb{X}_{\xi-1}^*$, then $x-\llparenthesis x\rrparenthesis_{\xi-1}=x^\prime-\llparenthesis x^\prime\rrparenthesis_{\xi-1}=0$. To see the latter, we need only show that $x-\llparenthesis x\rrparenthesis_{\xi-1}=0$, which is obvious. Recall that $(x,y)\in\mathbb{X}_\xi^*$. By Lemma \ref{i=0theni-1=0}, $(x,y)\in\mathbb{X}_{\xi^*-1}$. Therefore, $x-\llparenthesis x\rrparenthesis_{\xi-1}=0$.} 
Therefore, 
in both of the cases, $x\leq a \mbox{ iff }x^\prime\leq a^\prime$. 

Moreover, 4$^\diamond$ should be ensured. 
If it cannot, again, Duplicator resorts to the $(\xi-1)$-th abstraction, picking $(x^\prime,y)\in \mathbb{X}_{\xi-1}^*-\mathbb{X}_\xi^*$ such that $[x^\prime]_{\xi-1}\equiv 0$ (mod $2$). By Lemma \ref{proj-abs}, it implies that $[\llparenthesis x^\prime\rrparenthesis_{\xi-1}]_{\xi-1}\equiv 0$ (mod $2$). 
Then by Lemma \ref{cm=depth}, which tells us that $[x]_{\xi-1}\equiv 0$ (mod $2$) and hence $[\llparenthesis x\rrparenthesis_{\xi-1}]_{\xi-1}\equiv 0$ (mod $2$),\footnote{Because $(x,y)\in\mathbb{X}_{\xi-1}$ by Lemma \ref{i=0theni-1=0}, $\llparenthesis x\rrparenthesis_{\xi-1}=x$.} and by Lemma \ref{conquer-boundary-strategy},  $[\llparenthesis x\rrparenthesis_j]_i\equiv [\llparenthesis x^\prime\rrparenthesis_j]_i\equiv 0$ (mod $2$), for $1\leq i\leq \xi-1$ and $1\leq j\leq \xi-1$. Let $n:=\xi-1$.  \label{page-4-diamond}
It means that once $[\llparenthesis x\rrparenthesis_{\xi-1}]_n\equiv [\llparenthesis x^\prime\rrparenthesis_{\xi-1}]_n$ (mod $2$) holds, it still holds for them  in the following rounds, despite of how much $\xi$ and $n$ are decreased. 
It implies that, for  $i<\xi$ and any $(c,d)$ where $([c]_i,d)$ is a boundary vertex (either the leftmost or the rightmost) of the $i$-th abstraction, $(c,d)\in \Omega_x$ iff $(c,d)\in \Omega_{x^\prime}$ if $y=1$ and $d\neq y$.\footnote{To see it, we need analyze two cases. Note that it is easier to do it when $k=3$. But the following arguments works even when $k>3$, wherein the structures are generated in the similar way as we construct  $\widetilde{\mathfrak{B}}_{3,m}^+$. 
If $0<d<k-1$,  $(c,d)\notin \Omega_x$ and $(c,d)\notin \Omega_{x^\prime}$: it is trivial when $d=1$; if $d\neq 1$ (i.e. $k>3$), it is because of \eqref{boundary-nequiv-0-mod-2}. 
Suppose that $d=0$ or $d=k-1$. By definition,   $\Omega_x\cap\mathbb{T}=\Omega_{x^\prime}\cap\mathbb{T}=\emptyset$ where $\mathbb{T}$ is the set of vertices $(x,0)$ or $(x,k-1)$ that has index $\xi-1$, which implies that $(\llparenthesis c\rrparenthesis_{\xi-1},d)\notin \Omega_x$ and $(\llparenthesis c\rrparenthesis_{\xi-1},d)\notin \Omega_{x^\prime}$, by Lemma \ref{boundary-index-over-abstractions}.}  
Moreover, by Lemma \ref{cc-boundary-vertex-is-0}, $\mathbf{cc}([c]_t,d)=0$ mod $2$ for any $t$; and by Lemma \ref{cm=depth}, $\mathbf{cc}([\llparenthesis x\rrparenthesis_{\xi-1}]_p,y)=\mathbf{cc}([\llparenthesis x^\prime\rrparenthesis_{\xi-1}]_p,y)=y$ mod $2$, if $p=\mathrm{idx}(c,d)<\xi-1$.  
Therefore, 4$^\diamond$ holds. Then, by definition, $(c,d)$ is adjacent to $(x,y)$ if and only if $(c,d)$ is adjacent to $(x^\prime,y)$ in the other structure. 

In all the cases, $\theta$ is decreased by one, whereas $\xi$ is at most decreased by one. Hence 5$^\diamond$ is preserved.  
To summarize, when $(x,y)\in \mathbb{X}_\xi^*$, 
Duplicator has a winning strategy in the case  $y=b$, and 1$^\diamond$\textapprox 5$^\diamond$ hold.  

Now assume that $b\neq y$.  By definition,  $(a,b)<(x,y)$ iff $b<y$, and $(a^\prime,b)<(x^\prime,y)$ iff $b<y$. 

\textbf{Firstly}, suppose that Spoiler picks the vertex $(x,y)$ in $\mathbb{X}_\xi^*$. \label{page-special-strategy-firstcase}
Duplicator \textit{first tries to find} all the vertices,  whose index is in the range $[\xi,\mathrm{idx}(x,y)]$, that can ensure the abstraction-order-condition. These vertices are the candidates that Duplicator will possibly pick. 
Then she chooses the subset of vertices of them such that 3$^\diamond$ and 4$^\diamond$ hold (i.e. $(x^\prime,y)$ is in this subset). Note that 3$^\diamond$ says that $(\llparenthesis x^\prime\rrparenthesis_\xi,y)$ is adjacent to $(\llparenthesis a^\prime\rrparenthesis_\xi,b)$ if and only if $(\llparenthesis x\rrparenthesis_\xi,y)$ is adjacent to $(\llparenthesis a\rrparenthesis_\xi,b)$. And 4$^\diamond$ says that this holds even if extra pebbles  are put on the boundaries of rows  of the $\xi$-th abstraction(recall that when we talk about 4$^\diamond$, the structures involved are $\widetilde{\mathfrak{A}}_{3,m}^+$ and $\widetilde{\mathfrak{B}}_{3,m}^+$ instead of $\mathfrak{A}_{3,m}^+$ and $\mathfrak{B}_{3,m}^+$).  
If Duplicator can respond in this way, then she obviously wins this round, provided that 1$^\diamond$ holds, and $\xi$ remains unchanged. Hence $\theta<\xi$ is preserved. Note that 1$^\diamond$ holds because $(x,y),(x^\prime,y)\in\mathbb{X}_\xi^*$ which implies that $x^\prime-\llparenthesis x^\prime\rrparenthesis_\xi=x-\llparenthesis x\rrparenthesis_\xi=0$. 

\textbf{Secondly}, \textit{suppose that she cannot do so}.\label{page-special-strategy-secondcase} Then Duplicator resorts to the $(\xi-1)$-th abstraction (considering the projections of pebbled vertices in the $(\xi-1)$-th abstraction), wherein she can \textit{always} do it in this way because $\theta<\xi$. 
In such case, Duplicator picks a vertex of index $\xi-1$, i.e.  
$\mathrm{idx}(x^\prime,y)=\xi-1$. 1$^\diamond$ holds since $x^\prime-\llparenthesis x^\prime\rrparenthesis_{\xi-1}=x-\llparenthesis x\rrparenthesis_{\xi-1}=0$ ($\because (x,y)\in\mathbb{X}_{\xi-1}^*$ by Lemma \ref{i=0theni-1=0}). 
Recall that she will ensure  2$^\diamond$ in the first place, which is easy. 
It remains to show that Duplicator has a strategy to ensure 3$^\diamond$ in the same time. 
Since Duplicator cannot respond properly in the $\xi$-th abstraction, it means that one of the following two cases holds (note that $x=\llparenthesis x\rrparenthesis_\xi$ if $(x,y)\in \mathbb{X}_\xi^*$), if $(x^\prime,y)\in\mathbb{X}_\xi^*$: 
\begin{itemize}
\item $(\llparenthesis x\rrparenthesis_\xi,y)$ is adjacent to $(\llparenthesis a\rrparenthesis_\xi,b)$, whereas $(\llparenthesis x^\prime\rrparenthesis_\xi,y)$ is not adjacent to $(\llparenthesis a^\prime\rrparenthesis_\xi,b)$;
 
\item $(\llparenthesis x\rrparenthesis_\xi,y)$ is not adjacent to $(\llparenthesis a\rrparenthesis_\xi,b)$, whereas $(\llparenthesis x^\prime\rrparenthesis_\xi,y)$ is adjacent to $(\llparenthesis a^\prime\rrparenthesis_\xi,b)$. 

\end{itemize}
It happens when Duplicator has to pick such a vertex due to (1) of the abstraction-order-condition (recall that she will first try to ensure 2$^\diamond$ before ensuring 3$^\diamond$). In particular, Spoiler can choose the leftmost (or rightmost) vertex  of a row in the $\xi$-th abstraction. 
For example, Spoiler can simply pick the leftmost vertex of a row in a structure, and Duplicator \textit{has to} pick the leftmost vertex of the same row in the other structure. We call this strategy of Spoiler ``\textit{boundary checkout strategy}''. \label{page-boundary-checkout-strategy} 
Clearly such strategy will not violate 4$^\diamond$ since we have shown that, in the worst case, Duplicator can resort to the $(\xi-1)$-th abstraction such that 4$^\diamond$ can be ensured (cf. p.  \pageref{page-4-diamond}).   
Moreover, in the following we show that 2$^\diamond$ and, in particular, 3$^\diamond$ can also be ensured if Duplicator resorts to the $(\xi-1)$-th abstraction.  
 
Assume that $([x]_\xi,y)$ is \textit{not} the leftmost vertex or the rightmost vertex of the $y$-th row of the $\xi$-th abstraction.   
Recall that Duplicator has to resort to the $(\xi-1)$-th abstraction and that  $\frac{1}{2}\beta_{m-\xi}^{m-\xi+1}>2(m-\ell)$. 
In this case, there are so many vertices of index $\xi-1$, which surround $(x,y)$, that $(x,y)$ (i.e. the vertex $(\llparenthesis x\rrparenthesis_{\xi-1},y)$) is away from the two boundaries of the $y$-th row of the $(\xi-1)$-th abstraction. 
That is,  there are at least $m-\ell$ vertices of index $\xi-1$ that are between the vertex $([x]_{\xi-1},y)$ and a boundary vertex of $y$-th row of $(\xi-1)$-th abstraction. It 
implies that the abstraction-order-condition can be met only if the same thing holds for $([x^\prime]_{\xi-1},y)$  (cf. (2) of the abstraction-order-condition, where the structures are $\mathfrak{A}_{3,m}$ and $\mathfrak{B}_{3,m}$ instead of $\widetilde{\mathfrak{A}}_{3,m}$ and $\widetilde{\mathfrak{B}}_{3,m}$), which is easy.

Suppose that $\mathrm{idx}(\llparenthesis a^\prime\rrparenthesis_\xi,b)=t.$ By Lemma \ref{proj-greater-index}, $t\geq \xi$. Recall that Duplicator has to 
 resorts to the $(\xi-1)$-th abstraction, wherein Spoiler uses the boundary checkout strategy. That is, in all the cases she picks $(x^\prime,y)$ such that $\mathrm{idx}(x^\prime,y)=\xi-1$ and $[x^\prime]_{\xi-1}\equiv 0$ (mod $2$). 
By Lemma \ref{proj-abs}, $[\llparenthesis x^\prime\rrparenthesis_{\xi-1}]_{\xi-1}=[x^\prime]_{\xi-1}$. It implies that 4$^\diamond$ holds, as have been explained (cf. p. \pageref{page-4-diamond}). In the following we show that 3$^\diamond$ also can be ensured. Note that 3$^\diamond$ means that  $(\llparenthesis x^\prime\rrparenthesis_{\xi-1},y)$ is adjacent to  $(\llparenthesis a^\prime\rrparenthesis_{\xi-1},b)$ if and only if $(\llparenthesis x\rrparenthesis_{\xi-1},y)$ is adjacent to $(\llparenthesis a\rrparenthesis_{\xi-1},b)$. 
In other words, by Lemma \ref{lattice-point-high-is-lower}, Duplicator need to show that 
\begin{equation}\label{eqn-3-diamond-in-xi-1}
(x^\prime,y) \mbox{ is adjacent to } (\llparenthesis a^\prime\rrparenthesis_{\xi-1},b)\Leftrightarrow (x,y) \mbox{ is adjacent to }(\llparenthesis a\rrparenthesis_{\xi-1},b).
\end{equation}
We explain it case by case. 
\begin{enumerate}[(I)] 
\item Both $y$ mod $2=0$ and $b$ mod $2=0$. 

 By definition,  $\Omega_x=\Omega_{x^\prime}=\Omega_a=\Omega_{a^\prime}=\emptyset$.  Moreover, $[\llparenthesis a^\prime\rrparenthesis_\xi]_{\xi-1}$ $\equiv$ $[\llparenthesis a \rrparenthesis_\xi]_{\xi-1}$ $\equiv$ $[x]_{\xi-1}$ $\equiv 0$ (mod $2$), by Lemma \ref{cm=depth}. By Lemma \ref{conquer-boundary-strategy}, $[\llparenthesis a^\prime\rrparenthesis_{\xi-1}]_{\xi-1}\equiv [\llparenthesis a \rrparenthesis_{\xi-1}]_{\xi-1}$ (mod $2$). 
 
\item $y=1$ and $b$ mod $2=0$.

\begin{itemize}
\item $(x,y)$ is adjacent to $(\llparenthesis a\rrparenthesis_{\xi-1},b)$: 

We shall see shortly that Duplicator can pick $(x^\prime,y)$ such that, for any $i$ where $\xi\leq i\leq m$, 
\begin{equation}\label{eqn-ajacency-primise}
\mathbf{cc}([x^\prime]_i,y)\neq \mathbf{cc}([a^\prime]_i,b) \mbox{ and } \mathrm{idx}(\llparenthesis x^\prime\rrparenthesis_i,y)=i.
\end{equation}

Provided that \eqref{eqn-ajacency-primise} holds, in the following we show that 
\begin{equation}\label{eqn-adjacency-special-xi}
(\llparenthesis x^\prime\rrparenthesis_\xi,y) \mbox{ is adjacent to } (\llparenthesis a^\prime\rrparenthesis_{\xi},b),
\end{equation}
and use this to show that
\begin{equation}\label{eqn-adjacency-special_xi-1}
(x^\prime,y) \mbox{ is adjacent to } (\llparenthesis a^\prime\rrparenthesis_{\xi-1},b). 
\end{equation}

Firstly, note that $\mathrm{idx}(\llparenthesis x^\prime\rrparenthesis_i,y)=i$ for $\xi-1\leq i\leq m$, because $\mathrm{idx}(\llparenthesis x^\prime\rrparenthesis_{\xi-1},y)=\xi-1$ ($\because$ $\mathrm{idx}(x^\prime,y)=\xi-1$, $(x^\prime,y)\in\mathbb{X}_{\xi-1}^*$. It means that $x^\prime=\llparenthesis x^\prime\rrparenthesis_{\xi-1}$); moreover, $\mathrm{idx}(\llparenthesis a^\prime\rrparenthesis_i,b)\geq i=\mathrm{idx}(\llparenthesis x^\prime\rrparenthesis_i,y)$. 

Secondly, by Lemma \ref{proj-abs},  $\mathbf{cc}([x^\prime]_i,y)\neq \mathbf{cc}([a^\prime]_i,b)$ implies that $\mathbf{cc}([\llparenthesis x^\prime\rrparenthesis_i]_i,y)\neq \mathbf{cc}([\llparenthesis a^\prime\rrparenthesis_i]_i,b)$. 
Also by Lemma \ref{proj-abs}, for any $i$ where $\xi\leq i\leq m$, $\llparenthesis \llparenthesis a^\prime\rrparenthesis_{\xi}\rrparenthesis_i=\llparenthesis a^\prime\rrparenthesis_i$. 
Therefore, $$\mathbf{cc}([\llparenthesis x^\prime\rrparenthesis_i]_i,y)\neq \mathbf{cc}([\llparenthesis \llparenthesis a^\prime\rrparenthesis_{\xi-1}\rrparenthesis_i]_i,b),$$ for $\xi\leq i\leq m$. It means that $(\llparenthesis a^\prime\rrparenthesis_{\xi},b)\notin \Omega_{\llparenthesis x^\prime\rrparenthesis_\xi}$. 

As a consequence, \eqref{eqn-adjacency-special-xi} holds since  $\mathbf{cc}([\llparenthesis x^\prime\rrparenthesis_\xi]_\xi,y)\neq \mathbf{cc}([\llparenthesis a^\prime\rrparenthesis_\xi]_\xi,b)$ and $(\llparenthesis a^\prime\rrparenthesis_{\xi},b)\notin \Omega_{\llparenthesis x^\prime\rrparenthesis_\xi}$. This implies that $(\llparenthesis a^\prime\rrparenthesis_{\xi},b)\notin \Omega_{x^\prime}$.

 Recall that $\mathrm{idx}(\llparenthesis a^\prime\rrparenthesis_\xi,b)\geq \xi$. Hence, $\mathbf{cc}([\llparenthesis a^\prime\rrparenthesis_{\xi}]_{\xi-1},b)=0$, by Lemma \ref{cm=depth}. Recall that $[x^\prime]_{\xi-1}\equiv 0$ (mod $2$). As a consequence, $\mathbf{cc}([\llparenthesis a^\prime\rrparenthesis_{\xi}]_{\xi-1},b)\neq\mathbf{cc}([x^\prime]_{\xi-1},y)$. Hence, $(x^\prime,y)$ is adjacent to $(\llparenthesis a^\prime\rrparenthesis_{\xi},b)$.  
  
Recall that $\mathrm{idx}(\llparenthesis a^\prime\rrparenthesis_{\xi-1},b)\geq \xi-1$.    
Suppose that $\mathrm{idx}(\llparenthesis a\rrparenthesis_{\xi-1},b)=\xi-1$. \label{page-special-strategy-arg} 
Then $[\llparenthesis a\rrparenthesis_{\xi-1}]_{\xi-1}\equiv 0$ (mod $2$), for otherwise $(x,y)$ is \textit{not} adjacent to $(\llparenthesis a\rrparenthesis_{\xi-1},b)$.\footnote{Note that $\mathbf{cc}([\llparenthesis a\rrparenthesis_{\xi-1}]_{\xi-1},b)=1$ if $[\llparenthesis a\rrparenthesis_{\xi-1}]_{\xi-1}\equiv 1$ (mod $2$). On the other hand, by Lemma \ref{cm=depth}, $\mathbf{cc}([x]_{\xi-1},y)=1$, because $(x,y)\in \mathbb{X}_\xi^*$. Therefore, $(x,y)$ is \textit{not} adjacent to $(\llparenthesis a\rrparenthesis_{\xi-1},b)$, since $\mathbf{cc}([x]_{\xi-1},y)=\mathbf{cc}([\llparenthesis a\rrparenthesis_{\xi-1}]_{\xi-1},b)$.\label{footnote-special-stra-noadj}} Then, similarly, $[\llparenthesis a^\prime\rrparenthesis_{\xi-1}]_{\xi-1}\equiv 0$ (mod $2$).\footnote{Recall that $a-\llparenthesis a\rrparenthesis_\xi=a^\prime-\llparenthesis a^\prime\rrparenthesis_\xi$. Let $\Delta:=a-\llparenthesis a\rrparenthesis_\xi$. By definition, we have  $[a]_{\xi-1}-[\llparenthesis a\rrparenthesis_\xi]_{\xi-1}=\lfloor \Delta/\beta_{m-\xi+1}^{m-1}\rfloor$ because it is easy to see that $[\llparenthesis a\rrparenthesis_\xi]_{\xi-1}\in\mathbb{N}^+$ (cf. \eqref{llprrp-func}). Similarly,  $[a^\prime]_{\xi-1}-[\llparenthesis a^\prime\rrparenthesis_\xi]_{\xi-1}=\lfloor \Delta/\beta_{m-\xi+1}^{m-1}\rfloor$. 
Hence $[a]_{\xi-1}-[\llparenthesis a\rrparenthesis_\xi]_{\xi-1}=[a^\prime]_{\xi-1}-[\llparenthesis a^\prime\rrparenthesis_\xi]_{\xi-1}$. By Lemma \ref{proj-greater-index}, $(\llparenthesis a\rrparenthesis_\xi,b),(\llparenthesis a^\prime\rrparenthesis_\xi,b)\in\mathbb{X}_\xi^*$. Hence, $[\llparenthesis a\rrparenthesis_\xi]_{\xi-1}\equiv [\llparenthesis a^\prime\rrparenthesis_\xi]_{\xi-1}\equiv 0$ (mod $2$), by Lemma \ref{cm=depth}. Hence, $[a^\prime]_{\xi-1}\equiv [a]_{\xi-1}$ (mod $2$). By Lemma \ref{proj-abs}, $[\llparenthesis a^\prime\rrparenthesis_{\xi-1}]_{\xi-1}\equiv [\llparenthesis a\rrparenthesis_{\xi-1}]_{\xi-1}$ (mod $2$).\label{footnote-isom-for-copycat}}
Recall that $\mathrm{idx}(x^\prime,y)=\mathrm{idx}(\llparenthesis a^\prime\rrparenthesis_{\xi-1},b)=\xi-1$ and $[x^\prime]_{\xi-1}\equiv 0$ (mod $2$). 
Hence $\mathbf{cc}([\llparenthesis a^\prime\rrparenthesis_{\xi-1}]_{\xi-1},b)\neq \mathbf{cc}([x^\prime]_{\xi-1},y)$ determines the adjacency of $(x^\prime,y)$ and $(\llparenthesis a^\prime\rrparenthesis_{\xi-1},b)$, which implies that \eqref{eqn-adjacency-special_xi-1} holds.
 
Now suppose that $\mathrm{idx}(\llparenthesis a\rrparenthesis_{\xi-1},b)>\xi-1$. 
Because $a-\llparenthesis a\rrparenthesis_\xi=a^\prime-\llparenthesis a^\prime\rrparenthesis_\xi$, by Lemma \ref{abstraction-strategy-premier},  $a^\prime-\llparenthesis a^\prime\rrparenthesis_{\xi-1}=a-\llparenthesis a\rrparenthesis_{\xi-1}\neq 0$. Hence $\mathrm{idx}(\llparenthesis a^\prime\rrparenthesis_{\xi-1},b)>\xi-1$.  Therefore, by Lemma \ref{cm=depth}, $\mathbf{cc}([\llparenthesis a^\prime\rrparenthesis_{\xi-1}]_{\xi-1},b)\\\neq \mathbf{cc}([x^\prime]_{\xi-1},y)$, which implies that \eqref{eqn-adjacency-special_xi-1} holds.

We get the desired result on condition that \eqref{eqn-ajacency-primise} holds. Now we give a process, by which Duplicator does can choose a vertex for $(x^\prime,y)$ to satisfy \eqref{eqn-ajacency-primise}, meanwhile satisfying 1$^\diamond$, 2$^\diamond$ and 4$^\diamond$. Duplicator first chooses a vertex of index $m$, say $(x_m^\prime,y)$, such that $\mathbf{cc}([x_m^\prime]_m,y)\neq \mathbf{cc}([a^\prime]_m,b)$. Afterwards, she chooses a vertex of index $m-1$, say $(x_{m-1}^\prime,y)$, from $cex(x_m^\prime,y,m-1)$ such that $\mathbf{cc}([x_{m-1}^\prime]_{m-1},y)\neq \mathbf{cc}([a^\prime]_{m-1},b)$. Then she chooses a vertex  of index $m-2$, say $(x_{m-2}^\prime,y)$, from $cex(x_{m-1}^\prime,y,m-2)$ such that $\mathbf{cc}([x_{m-2}^\prime]_{m-2},y)\neq \mathbf{cc}([a^\prime]_{m-2},b)$, and so on. Finally, she chooses $(x^\prime,y)\in\mathbb{X}_{\xi-1}^*-\mathbb{X}_{\xi}^*$ from the object $cex(x_{\xi}^\prime,y,\xi-1)$ such that $\mathbf{cc}([x^\prime]_{\xi-1},y)\neq \mathbf{cc}([\llparenthesis a^\prime\rrparenthesis_\xi]_{\xi-1},b)$. Note that this final step implies that $[x^\prime]_{\xi-1}\equiv 0$ (mod $2$).\footnote{To see it, just note that only when $[x^\prime]_{\xi-1}\equiv 0$ (mod $2$) it is possible that $\mathbf{cc}([x^\prime]_{\xi-1},y)$$\neq$$\mathbf{cc}([\llparenthesis a^\prime\rrparenthesis_\xi]_{\xi-1},b)$: $\mathbf{cc}([x^\prime]_{\xi-1},y)=1$ and $\mathbf{cc}([\llparenthesis a^\prime\rrparenthesis_\xi]_{\xi-1},b)=0$ (by Lemma \ref{cm=depth}). It means that $[\llparenthesis x\rrparenthesis]_{\xi-1}\equiv [\llparenthesis x^\prime\rrparenthesis]_{\xi-1}\equiv 0$ (mod $2$), which ensures 4$^\diamond$. Moreover, 1$^\diamond$ holds since $x^\prime-\llparenthesis x^\prime\rrparenthesis_{\xi-1}=x-\llparenthesis x\rrparenthesis_{\xi-1}=0$.} 
In the process Duplicator \textit{can} choose a vertex that is at least one vertex away from the boundaries of the $y$-th row of the $i$-th abstraction where $\xi\leq i\leq m$, i.e. $0<[x_i^\prime]_i<\gamma_{m-i}^*-1$ (for the sake of convenience, suppose that we are talking about $\mathfrak{A}_{3,m}$ and $\mathfrak{B}_{3,m}$ instead of $\widetilde{\mathfrak{A}}_{3,m}$ and $\widetilde{\mathfrak{B}}_{3,m}$)). It implies that the  abstraction-order-condition can be ensured because $\frac{1}{2}\beta_{m-\xi}^{m-\xi+1}>2(m-\ell)$.  By definition (cf. Remark \ref{ExplanationOfAbstraction-specalcase} for the definition of ``$cex(x,y,i)$''),  $[x^\prime]_i=[x_i^\prime]_i$ for $\xi-1\leq i\leq m$. Then by Lemma \ref{projection}, we have  $\llparenthesis x^\prime\rrparenthesis_i=x_i^\prime$. As a consequence, Duplicator can pick $(x^\prime,y)$ to satisfy \eqref{eqn-ajacency-primise}. Moreover, in the process Duplicator has the freedom to pick a vertex that is not a critical point or abstractions of a critical point, i.e. $[x_i^\prime]_i\neq [mid]_i$ (note that the current round is not the first round). 

\item $(x,y)$ is not adjacent to $(\llparenthesis a\rrparenthesis_{\xi-1},b)$: 

By Lemma \ref{proj-greater-index}, $\mathrm{idx}(\llparenthesis a^\prime\rrparenthesis_{\xi-1},b)\geq\xi-1$. 
Assume that 
$$\mathrm{idx}(\llparenthesis a^\prime\rrparenthesis_{\xi-1},b)=\xi-1.$$ Then we have  $\mathrm{cc}([\llparenthesis a\rrparenthesis_{\xi-1}]_{\xi-1},b)=1$, for otherwise $(x,y)$ is  adjacent to $(\llparenthesis a\rrparenthesis_{\xi-1},b)$: by Lemma \ref{cm=depth}, we have  $\mathrm{cc}([x]_{\xi-1},y)=1$; moreover, $(\llparenthesis a\rrparenthesis_{\xi-1},b)\notin \Omega_{x}$ since $\mathrm{idx}(x,y)>\mathrm{idx}(\llparenthesis a\rrparenthesis_{\xi-1},b)$. 
By Remark \ref{ExplanationOfAbstraction-specalcase} (also cf. 
Fact \ref{specialcase-fact-surrounding} and Fact \ref{specialcase-fact-surrounding-upto}), we have  $\mathrm{idx}(\llparenthesis a^\prime\rrparenthesis_{\xi-1},y)=\mathrm{idx}(\llparenthesis a\rrparenthesis_{\xi-1},y)$.   
By Footnote \ref{footnote-isom-for-copycat}, we have $[\llparenthesis a^\prime\rrparenthesis_{\xi-1}]_{\xi-1}\equiv [\llparenthesis a\rrparenthesis_{\xi-1}]_{\xi-1}$ (mod $2$). 
Hence, $\mathrm{cc}([\llparenthesis a^\prime\rrparenthesis_{\xi-1}]_{\xi-1},b)=1$. 
Note that $\mathrm{cc}([x^\prime]_{\xi-1},y)=\mathrm{cc}([x]_{\xi-1},y)=1$. Therefore, $(x^\prime,y)$ is not adjacent to $(\llparenthesis a^\prime\rrparenthesis_{\xi-1},b)$ since $\mathrm{cc}([x^\prime]_{\xi-1},y)=\mathrm{cc}([\llparenthesis a^\prime\rrparenthesis_{\xi-1}]_{\xi-1},b)$. Now suppose that  
$$\mathrm{idx}(\llparenthesis a^\prime\rrparenthesis_{\xi-1},b)=t^\prime>\xi-1.$$  
Duplicator first finds a vertex $(x_{t^\prime}^\prime,y)$ such that its index is $t^\prime$ and $\mathbf{cc}([x_{t^\prime}^\prime]_{t^\prime},y)=\mathbf{cc}([\llparenthesis a^\prime\rrparenthesis_{\xi-1}]_{t^\prime},b)$. The vertex $(x_t^\prime,y)$ can be one that is neither the leftmost nor the rightmost vertex of index $t^\prime$.   
  Then she chooses $(x_{t^\prime-1}^\prime,y)$ from $cex(x_{t^\prime}^\prime,y,t^\prime-1)$ such that its index is $t^\prime-1$ and $[x_{t^\prime-1}^\prime]_{t^\prime-1}\equiv 0$ (mod $2$); and so on, until she chooses $(x_{\xi}^\prime,y)$ from $cex(x_{\xi+1}^\prime,y,\xi)$ such that its index is $\xi$ and  $[x_{\xi}^\prime]_\xi\equiv 0$ (mod $2$). 
Finally, she chooses  $(x^\prime,y)$ from $cex(x_{\xi}^\prime,y,\xi-1)$ such that  $\mathrm{idx}(x^\prime,y)=\xi-1$ and $[x^\prime]_{\xi-1}\equiv 0$ (mod $2$). 
By definition and Lemma \ref{projection}, $(\llparenthesis a^\prime\rrparenthesis_{\xi-1},b)\in \Omega_{x^\prime}$. 
Therefore, $(x^\prime,y)$ is not adjacent to $(\llparenthesis a^\prime\rrparenthesis_{\xi-1},b)$.   

\end{itemize}

\item   $y$ mod $2=0$ and $b=1$.

If $(x,y)$ is not adjacent to $(\llparenthesis a\rrparenthesis_{\xi-1},b)$, Duplicator can use the same processe introduced in (II) to pick a vertex for $(x^\prime,y)$ such that $(x^\prime,y)$ is not adjacent to $(\llparenthesis a^\prime\rrparenthesis_{\xi-1},b)$. In the following we assume that $(x,y)$ is adjacent to $(\llparenthesis a\rrparenthesis_{\xi-1},b)$. 

Note that $\mathbf{cc}([x^\prime]_{\xi-1},y)=0$.  The following arguments are similar to that in (II), cf. page \pageref{page-special-strategy-arg} and related footnotes, i.e. Footnote \ref
{footnote-special-stra-noadj} and Footnote \ref{footnote-isom-for-copycat}. 
Suppose that $\mathrm{idx}(\llparenthesis a\rrparenthesis_{\xi-1},b)=\xi-1$. Then $[\llparenthesis a^\prime\rrparenthesis_{\xi-1}]_{\xi-1}\equiv 0$ (mod $2$). Hence, $\mathbf{cc}([\llparenthesis a^\prime\rrparenthesis_{\xi-1}]_{\xi-1},b)=1\neq \mathbf{cc}([x^\prime]_{\xi-1},y)$. Now suppose that $\mathrm{idx}(\llparenthesis a\rrparenthesis_{\xi-1},b)>\xi-1$. It implies that $\mathrm{idx}(\llparenthesis a^\prime\rrparenthesis_{\xi-1},b)>\xi-1$. Therefore, by Lemma \ref{cm=depth}, $\mathbf{cc}([\llparenthesis a^\prime\rrparenthesis_{\xi-1}]_{\xi-1},b)\neq \mathbf{cc}([x^\prime]_{\xi-1},y)$. 
Moreover, $(x^\prime,y)\notin \Omega_{\llparenthesis a^\prime\rrparenthesis_{\xi-1}}$  because, by Lemma \ref{proj-greater-index}, $\mathrm{idx}(\llparenthesis a^\prime\rrparenthesis_{\xi-1},b)\geq \mathrm{idx}(x^\prime,y)=\xi-1$.  
 It implies that $(x^\prime,y)$ is adjacent to $(\llparenthesis a^\prime\rrparenthesis_{\xi-1},b)$ if $(x,y)$ is adjacent to $(\llparenthesis a\rrparenthesis_{\xi-1},b)$.

\end{enumerate} 
By Lemma \ref{abstraction-strategy-premier}, $a-\llparenthesis a\rrparenthesis_{\xi}=a^\prime-\llparenthesis a^\prime\rrparenthesis_{\xi}$ implies that $a-\llparenthesis a\rrparenthesis_{\xi-1}=a^\prime-\llparenthesis a^\prime\rrparenthesis_{\xi-1}$.  
 Note that all the vertices $(x,y)$, $(x^\prime,y)$, $(\llparenthesis a\rrparenthesis_{\xi-1},b)$, and $(\llparenthesis a^\prime\rrparenthesis_{\xi-1},b)$ are in $\mathbb{X}_{\xi-1}^*$. Then by Remark \ref{ExplanationOfAbstraction-specalcase}, we have that $(x^\prime,y)$ is  adjacent to $(a^\prime,b)$  if and only if  $(x,y)$ is adjacent to $(a,b)$. \\[-3pt]

\textbf{Thirdly}, \label{page-third-case-k=3}
if Spoiler picks a vertex $(x,y)$ in $\mathbb{X}_{\xi-1}^*-\mathbb{X}_\xi^*$. The ideas are very similar to the last argument, i.e. Duplicator resorts to the $(\xi-1)$-th abstraction for a solution. Just note that Duplicator need ensure that $\mathrm{idx}(x^\prime,y)=\xi-1$, which means that it is not a critical point, and $\mathbf{cc}([x^\prime]_{\xi-1},y)=\mathbf{cc}([x]_{\xi-1},y)$. 

\textbf{Fourthly}, so far we assume that Spoiler picks a vertex $(x,y)$ in $\mathbb{X}_\xi^*$. 
Now suppose that he picks a vertex in $\mathbb{X}_1^*-\mathbb{X}_{\xi-1}^*$, i.e. a vertex of index less than $\xi-1$. 
Duplicator has a simple strategy that we have mentioned briefly before. That is, she regards it as if $(\llparenthesis x\rrparenthesis_\xi,y)$, or $(\llparenthesis x\rrparenthesis_{\xi-1},y)$, were picked, and responds with $(x^\prime,y)$ such that\label{k=3-strategy-3}\footnote{In fact, (i) and (ii) include all the cases needed to discuss  because we can take it that the first case (cf. p. \pageref{page-special-strategy-firstcase}, $b\neq y$), i.e. Spoiler picks a a vertex $(x,y)$ in $\mathbb{X}_\xi^*$ and Duplicator can reply properly with $(x^\prime,y)\in\mathbb{X}_\xi^*$, is a special case of (i) where $x^\prime-\llparenthesis x^\prime\rrparenthesis_\xi=x-\llparenthesis x\rrparenthesis_\xi=0$.} 
\begin{enumerate}[(i)]
\item if she can respond properly to the picking of $(\llparenthesis x\rrparenthesis_\xi,y)$ in the $\xi$-th abstraction:\\
 $(\llparenthesis x^\prime\rrparenthesis_\xi,y)$ is the vertex she would pick to respond to the ``picking'' of  $(\llparenthesis x\rrparenthesis_\xi,y)$;  meanwhile, she let $x^\prime-\llparenthesis x^\prime\rrparenthesis_\xi=x-\llparenthesis x\rrparenthesis_\xi$; 

\item otherwise, she resorts to the $(\xi-1)$-th abstraction for a solution:\\
$(\llparenthesis x^\prime\rrparenthesis_{\xi-1},y)$ is the vertex she would pick to respond the ``picking'' of $(\llparenthesis x\rrparenthesis_{\xi-1},y)$, and $x^\prime-\llparenthesis x^\prime\rrparenthesis_{\xi-1}=x-\llparenthesis x\rrparenthesis_{\xi-1}$. 
\end{enumerate} 

Observe that such strategy implies that $\mathrm{idx}(x,y)=\mathrm{idx}(x^\prime,y)$ in this case (cf. 
Fact \ref{specialcase-fact-surrounding} and Fact \ref{specialcase-fact-surrounding-upto}, or Remark \ref{ExplanationOfAbstraction-specalcase}). 
In the case (i), Duplicator's strategy can help her win this round: as have been explained in Remark \ref{ExplanationOfAbstraction-specalcase}, it means that $(x,y)$ is adjacent to $(a,b)$ if and only if  $(x^\prime,y)$ is adjacent to $(a^\prime,b)$  because  $(\llparenthesis x\rrparenthesis_{\xi},y)$ is adjacent to $(\llparenthesis a\rrparenthesis_{\xi},b)$ if and only if  $(\llparenthesis x^\prime\rrparenthesis_{\xi},y)$ is adjacent to $(\llparenthesis a^\prime\rrparenthesis_{\xi},b)$, by her strategy.

Similarly, recall that Duplicator has a winning strategy over the $(\xi-1)$-th abstraction, if we regard it as if $(\llparenthesis x\rrparenthesis_{\xi-1},y)$ instead of $(x,y)$ were picked in this round (cf. p. \pageref{page-special-strategy-secondcase}).  That is, the projection of all the pebbled vertices to the $(\xi-1)$-th abstraction, including $(\llparenthesis x\rrparenthesis_{\xi-1},y)$ and $(\llparenthesis x^\prime\rrparenthesis_{\xi-1},y)$, will form a new game board that is in partial isomorphism. Recall that $x-\llparenthesis x\rrparenthesis_{\xi-1}=x^\prime-\llparenthesis x^\prime\rrparenthesis_{\xi-1}$.  According to Remark \ref{ExplanationOfAbstraction-specalcase},  $(x,y)$ is adjacent to $(u,v)$ if and only if  $(x^\prime,y)$ is adjacent to $(u^\prime,v)$.

All in all, her strategy  is a winning strategy in such 2-pebble games. 
\end{proof}

From Duplicator's strategy we can see that, at the \textit{end} of the current round,   
$\xi$ is the maximum number in $[1,m]$ that makes 1$^\diamond$ hold. 

By \eqref{specialcase-equiv-plus-2-equiv}, we have  
$$\widetilde{\mathfrak{A}}_{3,m}\equiv_m^{2} \widetilde{\mathfrak{B}}_{3,m}.$$
As a corollary, it is easy to see that it needs and only needs $3$ variables to define $3$-clique in $\fo$ on finite ordered graphs (cf. the proof of Theorem \ref{main-theorem} in Section \ref{winning-strategy} for the details).\\[2pt] 

\textit{Could we make the structures smaller}, such that we can have a feeling of how the structures look like? The answer is yes, at the price of complicating the arguments a little bit. With further thinking, we can remove all the vertices of index $1$ from $\mathfrak{A}_{3,m}$ and $\mathfrak{B}_{3,m}$ and the result still holds. It is because Duplicator can ensure that $\xi=m$ after the first \textit{two} rounds. 

Essentially there are only two changes in the structures. Firstly, there are $m-1$ abstractions in a structure. Secondly, there are $4(m-1)$ vertices in the $m$-th abstraction. 

For any $m, i\in \mathbf{N}^+$, where $m\geq 3$ and $0< i<m-1$, let 
\begin{align*}
\gamma_0^* &:=4(m-1)\\
\gamma_i^* &:=4(m-i-1)\gamma_{i-1}^*
\end{align*}

For $x\in[\gamma_{m-2}^*]$ and $2\leq i\leq j\leq m$, let 
\begin{align*}
\beta_{m-j}^{m-i} &:=\frac{\gamma_{m-i}^*}{\gamma_{m-j}^*}\\
[x]_i &:=\lfloor x/\beta_{m-i}^{m-2}\rfloor\\
\llparenthesis x\rrparenthesis_i &:=[x]_i\beta_{m-i}^{m-2}+\frac{1}{2}\sum_{2<\ell\leq i}\beta_{m-\ell}^{m-2}
\end{align*}  

Now $\beta_{m-j}^{m-i}=\displaystyle\prod_{m-j\leq \ell <m-i} \frac{\gamma_{\ell+1}^*}{\gamma_{\ell}^*}=4^{j-i}\times\frac{(j-2)!}{(i-2)!}$. And  $\gamma_{m-2}^*=\gamma_0^*\beta_0^{m-2}=(m-1)!\times 4^{m-1}$. 

Recall that Duplcator simply mimics Spoiler in the first round. In the following rounds, Duplicator continues mimicking until both of $cex(mid,0,m-1)$ and $cex(mid,2,m-1)$ have pebbled vertices in \textit{one}  structure and it is Spoiler who picks one of them \textit{in this round}. Recall that we always assume that Spoiler picks $(x,y)$ in this round (i.e. current round). 
Suppose w.l.o.g. that $y=2$. In this ``icebreaking'' round, if $(x,y)$ is a vertex of $\mathfrak{A}_{3,m}$, Duplicator need only ensure that 
\begin{itemize}
\item $\mathbf{cc}([x^\prime]_m,y)\neq \mathbf{cc}([mid]_m,0)=0$; 
\item  $x^\prime-\llparenthesis x^\prime\rrparenthesis_m=x-\llparenthesis x\rrparenthesis_m$;
\item $[x^\prime]_m\neq 0$ and $[x^\prime]_m\neq \gamma_0^*-1$ (that is, $(\llparenthesis x^\prime\rrparenthesis_m,y)$ is away from the boundaries of the $y$-th row of the $m$-th abstraction).  

\end{itemize}

Note that there are more than one vertex that Duplicator can choose to satisfy these conditions.

If $(x,y)$ is a vertex of $\mathfrak{B}_{3,m}$, Duplicator need only ensure that
\begin{itemize}
\item $\mathbf{cc}([x^\prime]_m,y)=\mathbf{cc}([mid]_m,0)=0$; 
\item $\llparenthesis x^\prime\rrparenthesis_m\neq mid$;
\item  $x^\prime-\llparenthesis x^\prime\rrparenthesis_m=x-\llparenthesis x\rrparenthesis_m$;
\item $[x^\prime]_m\neq 0$ and $[x^\prime]_m\neq \gamma_0^*-1$. 

\end{itemize}  
Recall that we right circular shift the middle row such that 4$^\diamond$ holds. In the previous analysis, we assume that immovable ``pebbles'' are put on the boundaries of rows at the start of games. Now we take away such assumption and study directly the game $\Game_m^{2}(\widetilde{\mathfrak{A}}_{3,m},\widetilde{\mathfrak{B}}_{3,m})$ instead of the game 
$\Game_m^{2}(\widetilde{\mathfrak{A}}_{3,m}^+,\widetilde{\mathfrak{B}}_{3,m}^+)$. 
 As a consequence, $\xi=m$ after the first two rounds.\footnote{Here is an \underline{\textit{EXAMPLE}} for the first two rounds of the games. In the first round, Spoiler picks $(mid,0)$; Duplicator replies with $(mid,0)$ in the other structure. In the second round, Spoiler picks $(mid,2)$ in $\widetilde{\mathfrak{A}}_{k,m}$; Duplicator responds by picking $(x,2)$ in $\widetilde{\mathfrak{B}}_{3,m}$ where $(x,2)\in\mathbb{X}_m^*$, $\mathbf{cc}([x]_m,2)=1$ and $[x]_m\neq 0,\gamma_0^*-1$. Therefore, after these two rounds, $\xi$ is still $m$. Although 4$^\diamond$ no more holds if there are additional immovable pebbles on the boundaries of rows, Spoiler can ``show'' it only when he picks $(c,1)$ where $c$ is the projection of the leftmost vertex in the $m$-th abstraction, because there are no such additional pebbles on the game board $(\widetilde{\mathfrak{A}}_{3,m},\widetilde{\mathfrak{B}}_{3,m})$.  
This will use one more round. And Duplicator can resort to the $(m-1)$-th abstraction for a solution. 
Note that $(x,2)$, as well as $(mid,2)$, is adjacent to $(c^\prime,1)$ where $c^\prime$ is the projection of the leftmost vertex in the $(m-1)$-th abstraction.}  
Therefore, the game (over abstractions) will never go into the first abstraction.  
Such treatment saves one more round for Duplicator, i.e. she can win $i+1$ rounds in  $\Game_m^{2}(\widetilde{\mathfrak{A}}_{3,m},\widetilde{\mathfrak{B}}_{3,m})$ if she can win $i$ rounds in $\Game_m^{2}(\widetilde{\mathfrak{A}}_{3,m}^+,\widetilde{\mathfrak{B}}_{3,m}^+)$.

Moreover, the time Duplicator stops mimicking is the time when those two pebbles are put in different rows: one is on a vertex of the bottom row and the other is on a vertex of the top row in a structure. 
Hence we can make the length of a row a bit smaller. 
It is especially useful when we try to draw a picture for the structures. 
It is for these two reasons that we can change the previous definitions a little bit, while almost all the arguments remain the same, except that 
\begin{itemize}
\item we substitute $\beta_{m-i}^{m-1}$, $\mathbb{X}_1^*$, $\gamma_{m-1}^*$ with $\beta_{m-i}^{m-2}$, $\mathbb{X}_2^*$, $\gamma_{m-2}^*$;
\item we need to take the second round into account when proving the induction basis, as have just been introduced;
\item the abstraction-order-condition is adapted as follows:
    \begin{enumerate}
      \item If $[x]_\xi< m-1-\ell$ or  
            $\gamma_{m-\xi}^*-[x]_\xi< m-1-\ell$, then\\   
            \indent $\hspace{20pt}[x^\prime]_\xi=[x]_\xi$; 

      \item If  $m-1-\ell\leq [x]_\xi\leq \gamma_{m-\xi}^*-m+1+\ell$,  
            then\\ 
            \indent $\hspace{20pt} m-1-\ell\leq  [x^\prime]_\xi\leq  
            \gamma_{m-\xi}^*-m+1+\ell$. 

    \end{enumerate}

\item $mid:=2(m-1)\beta_0^{m-2}+\frac{1}{2}\sum_{2<j\leq m}\beta_{m-j}^{m-2}$.
\end{itemize}
 Note that, now  $\xi>\theta+1= m-\ell+2$ for $2<\ell\leq m$ since $\xi=m$ after the first two rounds, therefore Claim \ref{SpecialCase-linear-order} still holds.


However, the structures are still a bit big even for the simplest cases.  Fig. \ref{fig:A_3_3-and-3rd-abstraction}  shows $\mathfrak{A}_{3,3}$ on the left side and the third abstraction  $\mathfrak{A}_{3,3}[\mathbb{X}_3^*]$ on the right side, both of which are rotated $90^{\circ}$ counterclockwise. Note that the vertex with a label ``$\mathrm{mid}0$'' is just the vertex $(mid,0)$, and the vertex with a label ``$\mathrm{mid}1$'' is the vertex $(mid,2)$. 

The black nodes in the graph represent the vertices of the third abstracton of $\mathfrak{A}_{3,3}$, whereas the grey nodes represent those vertices in the second abstraction. 
We group the verties of each row of the structure $\mathfrak{A}_{3,3}$ by blue dashed rectangles. The vertices in the same rectangle have the same ``position'' in the third abstraction. That is, for any $(u, v)$ and $(u^\prime,v)$, $[u]_3=[u^\prime]_3$ if they are in the same rectangle. To simplify the picture, we only show the blue dashed rectangles in the highest row of  $\mathfrak{A}_{3,3}$ and ommit all the others.


 By a simple counting, we know that there are $8$ triangles in $\mathfrak{A}_{3,3}$.\footnote{There are four vertices, whose index is $3$, whose coordinate congruence number is $0$ in the third abstraction, and whose second coordinate is $1$: that is, the vertices $(2,1)$, $(10,1)$ and $(18,1)$ and $(26,1)$. The two vertices with labels ``$\mathrm{mid0}$'', ``$\mathrm{mid1}$'', and any one of these four  vertices can form a triangle. In addition, there are four vertices, whose index is $2$, whose coordinate congruence number is $0$ in the second abstraction, and whose second coordinate is $1$: that is, the vertices $(0,1)$, $(8,1)$, $(16,1)$ and $(24,1)$.} If we remove the red edge of $\mathfrak{A}_{3,3}$, we obtain the structure $\mathfrak{B}_{3,3}$, which is triangle-free by Fact \ref{B_3_m-is-trianglefree}. Obviously, for any $m\geq 3$, the girth of $\mathfrak{A}_{3,3}$ is $3$ and the girth of $\mathfrak{B}_{3,3}$ is $4$. See, e.g., the shortest cycle that consists of the vertices $(mid,0)$, $(mid,1)$, $(mid,2)$ and $(2,1)$ in $\mathfrak{B}_{3,3}$.

\begin{figure}
\centering
\vspace*{-20mm}
\begin{tikzpicture}[scale=0.14]


\foreach \i in {0,2,...,10,12,14}{
    \foreach \j in {0,1,...,15} {
           \draw[color=black!50]  plot[smooth, tension=.7] coordinates {(0,2+10*\i) (24,7+10*\j)};
    }      
}

\foreach \i in {0,1,...,15}{
   \foreach \j in {0,1,...,15} {
           \draw[color=black!50]  plot[smooth, tension=.7] coordinates {(0,7+10*\i) (24,2+10*\j)};
     }      
}

\foreach \i in {1,3,...,15}{
    \foreach \j in {0,1,...,15} {
           \draw[color=black!50]  plot[smooth, tension=.7] coordinates {(0,2+10*\i) (24,7+10*\j)};
    }      
}


\foreach \i in {2,42,82,122 }{
 \foreach \j in {2,42,82,122} { 
       \draw[color=black!50]  plot[smooth, tension=.7] coordinates {(12,\i) (24,\j)};     
       \draw[color=black!50]  plot[smooth, tension=.7] coordinates {(12,\i) (24,\j+10)};            
      }
 }
 
 \foreach \i in {12,52,92,132}{
 \foreach \j in {12,52,92,132}{ 
       \draw[color=black!50]  plot[smooth, tension=.7] coordinates {(12,\i) (24,\j)};     
       \draw[color=black!50]  plot[smooth, tension=.7] coordinates {(12,\i) (24,\j-10)};            
      }
 }
 
\foreach \i in {7,47,87,127 }{
     \foreach \j in {7,47,87,127} {
       \draw[color=black!50]  plot[smooth, tension=.7] coordinates {(12,\i) (24,\j)};     
       \draw[color=black!50]  plot[smooth, tension=.7] coordinates {(12,\i) (24,\j+10)};            
      }
 } 
 
 \foreach \i in {17,57,97,137 }{
     \foreach \j in  {17,57,97,137 }{
       \draw[color=black!50]  plot[smooth, tension=.7] coordinates {(12,\i) (24,\j)};     
       \draw[color=black!50]  plot[smooth, tension=.7] coordinates {(12,\i) (24,\j-10)};            
      }
 } 
 
\foreach \i in {22,62,102,142}{
 \foreach \j in {22,62,102,142}{ 
       \draw[color=black!50]  plot[smooth, tension=.7] coordinates {(12,\i) (24,\j)};     
       \draw[color=black!50]  plot[smooth, tension=.7] coordinates {(12,\i) (24,\j+10)};            
      }
 }
 
 \foreach \i in {32,72,112,152}{
 \foreach \j in  {32,72,112,152}{ 
       \draw[color=black!50]  plot[smooth, tension=.7] coordinates {(12,\i) (24,\j)};     
       \draw[color=black!50]  plot[smooth, tension=.7] coordinates {(12,\i) (24,\j-10)};            
      }
 }
 
\foreach \i in {27,67,107,147 }{
     \foreach \j in {27,67,107,147 } {
       \draw[color=black!50]  plot[smooth, tension=.7] coordinates {(12,\i) (24,\j)};     
       \draw[color=black!50]  plot[smooth, tension=.7] coordinates {(12,\i) (24,\j+10)};            
      }
 } 
 
 \foreach \i in {37,77,117,157 }{
     \foreach \j in  {37,77,117,157}{
       \draw[color=black!50]  plot[smooth, tension=.7] coordinates {(12,\i) (24,\j)};     
       \draw[color=black!50]  plot[smooth, tension=.7] coordinates {(12,\i) (24,\j-10)};            
      }
 } 
 

\foreach \i in {2,42,82,122 }{
 \foreach \j in {2,42,82,122} { 
       \draw[color=black!50]  plot[smooth, tension=.7] coordinates {(0,\i) (12,\j)};     
       \draw[color=black!50]  plot[smooth, tension=.7] coordinates {(0,\i) (12,\j+10)};            
      }
 }
 
 \foreach \i in {12,52,92,132}{
 \foreach \j in {12,52,92,132}{ 
       \draw[color=black!50]  plot[smooth, tension=.7] coordinates {(0,\i) (12,\j)};     
       \draw[color=black!50]  plot[smooth, tension=.7] coordinates {(0,\i) (12,\j-10)};            
      }
 }
 
\foreach \i in {7,47,87,127 }{
     \foreach \j in {7,47,87,127} {
       \draw[color=black!50]  plot[smooth, tension=.7] coordinates {(0,\i) (12,\j)};     
       \draw[color=black!50]  plot[smooth, tension=.7] coordinates {(0,\i) (12,\j+10)};            
      }
 } 
 
 \foreach \i in {17,57,97,137 }{
     \foreach \j in  {17,57,97,137 }{
       \draw[color=black!50]  plot[smooth, tension=.7] coordinates {(0,\i) (12,\j)};     
       \draw[color=black!50]  plot[smooth, tension=.7] coordinates {(0,\i) (12,\j-10)};            
      }
 } 
 
\foreach \i in {22,62,102,142}{
 \foreach \j in {22,62,102,142 }{ 
       \draw[color=black!50]  plot[smooth, tension=.7] coordinates {(0,\i) (12,\j)};     
       \draw[color=black!50]  plot[smooth, tension=.7] coordinates {(0,\i) (12,\j+10)};            
      }
 }
 
 \foreach \i in {32,72,112,152}{
 \foreach \j in  {32,72,112,152}{ 
       \draw[color=black!50]  plot[smooth, tension=.7] coordinates {(0,\i) (12,\j)};     
       \draw[color=black!50]  plot[smooth, tension=.7] coordinates {(0,\i) (12,\j-10)};            
      }
 }
 
\foreach \i in {27,67,107,147 }{
     \foreach \j in {27,67,107,147 } {
       \draw[color=black!50]  plot[smooth, tension=.7] coordinates {(0,\i) (12,\j)};     
       \draw[color=black!50]  plot[smooth, tension=.7] coordinates {(0,\i) (12,\j+10)};            
      }
 } 
 
 \foreach \i in {37,77,117,157 }{
     \foreach \j in  {37,77,117,157 }{
       \draw[color=black!50]  plot[smooth, tension=.7] coordinates {(0,\i) (12,\j)};     
       \draw[color=black!50]  plot[smooth, tension=.7] coordinates {(0,\i) (12,\j-10)};            
      }
 }



 

   \foreach \y in {12,32,...,152} {
 
              \draw[very thick] (0,\y) -- (24,\y);       
   }

  \foreach \n in {-2,-1,0,1,2,3,4}{
  
            \draw[color=black,very thick]  plot[smooth, tension=.7] coordinates {(0,112-20*\n) (19,100-20*\n) (24,92-20*\n)};
            \draw[color=black,very thick]  plot[smooth, tension=.7] coordinates {(0,92-20*\n) (5,100-20*\n) (24,112-20*\n)};
}

\foreach \i in {0,1,2,3,4,5}{
       \draw[color=black,very thick]    plot[smooth, tension=.7] coordinates {(12,52+20*\i) (24,12+20*\i)};
}

\foreach \i in {0,1,2,3,4,5}{
       \draw[color=black,very thick]    plot[smooth, tension=.7] coordinates {(0,52+20*\i) (12,12+20*\i)};
}


\draw[color=black,very thick]    plot[smooth, tension=.7] coordinates {(24,52) (12,12)  };
\draw[color=black,very thick]    plot[smooth, tension=.7] coordinates {(24,72) (12,32)  };
\draw[color=black,very thick]    plot[smooth, tension=.7] coordinates {(24,92) (12,52)  };
\draw[color=black,very thick]    plot[smooth, tension=.7] coordinates {(24,112) (12,72)  };
\draw[color=black,very thick]    plot[smooth, tension=.7] coordinates {(24,132) (12,92)  };

\draw[color=black,very thick]    plot[smooth, tension=.7] coordinates {(12,52) (0,12)  };
\draw[color=black,very thick]    plot[smooth, tension=.7] coordinates {(12,72) (0,32)  };
\draw[color=black,very thick]    plot[smooth, tension=.7] coordinates {(12,92) (0,52)  };
\draw[color=black,very thick]    plot[smooth, tension=.7] coordinates {(12,112) (0,72)  };
\draw[color=black,very thick]    plot[smooth, tension=.7] coordinates {(12,132) (0,92)  };

\draw[color=yellow,very thick]   plot[smooth, tension=.7] coordinates {(12,132) (24,52)};
\draw[color=yellow,very thick]   plot[smooth, tension=.7] coordinates {(12,132) (0,52)};
\draw[color=yellow,very thick]   plot[smooth, tension=.7] coordinates {(52,132) (64,52)};
\draw[color=yellow,very thick]   plot[smooth, tension=.7] coordinates {(52,132) (40,52)};
\draw [color=yellow,very thick] plot[smooth, tension=.7] coordinates {(0,132) (12,52)};
\draw[color=yellow,very thick]  plot[smooth, tension=.7] coordinates {(24,132) (12,52)};
\draw [color=yellow,very thick] plot[smooth, tension=.7] coordinates {(40,132) (52,52)};
\draw[color=yellow,very thick]  plot[smooth, tension=.7] coordinates {(64,132) (52,52)};

\draw[color=yellow,very thick]   plot[smooth, tension=.7] coordinates {(12,72) (24,152)};
\draw[color=yellow,very thick]   plot[smooth, tension=.7] coordinates {(12,72) (0,152)};
\draw[color=yellow,very thick]   plot[smooth, tension=.7] coordinates {(12,152) (24,72)};
\draw[color=yellow,very thick]   plot[smooth, tension=.7] coordinates {(12,152) (0,72)};


\draw[blue, thick,dashed] (31.5,12) -- (31.5,132);

   \foreach \y in {12,32,...,132,152} {
 
              \draw[very thick] (40,\y) -- (64,\y);       
   }

  \foreach \n in {-2,-1,0,1,2,3,4}{
  
            \draw[color=black,very thick]  plot[smooth, tension=.7] coordinates {(40,112-20*\n) (59,100-20*\n) (64,92-20*\n)};
            \draw[color=black,very thick]  plot[smooth, tension=.7] coordinates {(40,92-20*\n) (45,100-20*\n) (64,112-20*\n)};
}
  \foreach \n in {0,1,2,3,4}{
          \draw[color=green,very thick]  plot[smooth, tension=.7] coordinates {(64,72+20*\n) (57,56+20*\n) (52,49+20*\n) (49,44+20*\n) (47,38+20*\n) (40,12+20*\n)};
          \draw[color=green,very thick]   plot[smooth, tension=.7] coordinates {(40,72+20*\n) (47,56+20*\n) (52,49+20*\n) (55,44+20*\n) (57,38+20*\n) (64,12+20*\n)};

 }

\foreach \i in {0,1,2,3,4,5}{
       \draw[color=black,very thick]    plot[smooth, tension=.7] coordinates {(52,52+20*\i) (64,12+20*\i)};
}

\foreach \i in {0,1,2,3,4,5}{
       \draw[color=black,very thick]    plot[smooth, tension=.7] coordinates {(40,52+20*\i) (52,12+20*\i)};
       
}

\draw[color=black,very thick]    plot[smooth, tension=.7] coordinates {(64,52) (52,12)  };
\draw[color=black,very thick]    plot[smooth, tension=.7] coordinates {(64,72) (52,32)  };
\draw[color=black,very thick]    plot[smooth, tension=.7] coordinates {(64,92) (52,52)  };
\draw[color=black,very thick]    plot[smooth, tension=.7] coordinates {(64,112) (52,72)  };
\draw[color=black,very thick]    plot[smooth, tension=.7] coordinates {(64,132) (52,92)  };

\draw[color=black,very thick]    plot[smooth, tension=.7] coordinates {(52,52) (40,12)  };
\draw[color=black,very thick]    plot[smooth, tension=.7] coordinates {(52,72) (40,32)  };
\draw[color=black,very thick]    plot[smooth, tension=.7] coordinates {(52,92) (40,52)  };
\draw[color=black,very thick]    plot[smooth, tension=.7] coordinates {(52,112) (40,72)  };
\draw[color=black,very thick]    plot[smooth, tension=.7] coordinates {(52,132) (40,92)  };

\draw[color=yellow,very thick]   plot[smooth, tension=.7] coordinates {(52,92) (64,12)};
\draw[color=yellow,very thick]   plot[smooth, tension=.7] coordinates {(52,112) (64,32)};

\draw[color=yellow,very thick]   plot[smooth, tension=.7] coordinates {(40,92) (52,12)};
\draw[color=yellow,very thick]   plot[smooth, tension=.7] coordinates {(40,112) (52,32)};

\draw[color=yellow,very thick]    plot[smooth, tension=.7] coordinates {(64,92) (52,12)};
\draw[color=yellow,very thick]   plot[smooth, tension=.7] coordinates {(64,112) (52,32)};
\draw[color=yellow,very thick]  plot[smooth, tension=.7] coordinates {(52,92) (40,12)};
\draw[color=yellow,very thick]   plot[smooth, tension=.7] coordinates {(52,112) (40,32)};

\draw[color=yellow,very thick]   plot[smooth, tension=.7] coordinates {(52,72) (64,152)};
\draw[color=yellow,very thick]   plot[smooth, tension=.7] coordinates {(52,72) (40,152)};
\draw[color=yellow,very thick]   plot[smooth, tension=.7] coordinates {(52,152) (64,72)};
\draw[color=yellow,very thick]   plot[smooth, tension=.7] coordinates {(52,152) (40,72)};

\foreach \i in {0,1,2}{
\draw [blue, very thick]  plot[smooth, tension=.7] coordinates {(40,112+\i*20) (46,77+\i*20) (52,60+\i*20) (58,45+\i*20) (64,12+\i*20)};
\draw [blue, very thick]   plot[smooth, tension=.7] coordinates {(64,112+\i*20) (58,77+\i*20) (52,60+\i*20) (46,45+\i*20) (40,12+\i*20)};
}
\draw[color=yellow,very thick]   plot[smooth, tension=.7] coordinates {(12,92) (24,12)};
\draw[color=yellow,very thick]   plot[smooth, tension=.7] coordinates {(12,112) (24,32)};

\draw[color=yellow,very thick]   plot[smooth, tension=.7] coordinates {(0,92) (12,12)};
\draw[color=yellow,very thick]   plot[smooth, tension=.7] coordinates {(0,112) (12,32)};

  \foreach \n in {0,1,2,3,4}{
          \draw[color=green,very thick]  plot[smooth, tension=.7] coordinates {(24,72+20*\n) (17,56+20*\n) (12,49+20*\n) (9,44+20*\n) (7,38+20*\n) (0,12+20*\n)};
          \draw[color=green,very thick]   plot[smooth, tension=.7] coordinates {(0,72+20*\n) (7,56+20*\n) (12,49+20*\n) (15,44+20*\n) (17,38+20*\n) (24,12+20*\n)};

 }

\draw[color=yellow,very thick]    plot[smooth, tension=.7] coordinates {(24,92) (12,12)};
\draw[color=yellow,very thick]   plot[smooth, tension=.7] coordinates {(24,112) (12,32)};
\draw[color=yellow,very thick]  plot[smooth, tension=.7] coordinates {(12,92) (0,12)};
\draw[color=yellow,very thick]   plot[smooth, tension=.7] coordinates {(12,112) (0,32)};

\foreach \i in {0,1,2} {
\draw [blue, very thick]  plot[smooth, tension=.7] coordinates {(0,112+\i*20) (6,77+\i*20) (12,60+\i*20) (18,45+\i*20) (24,12+\i*20)};
\draw [blue, very thick]   plot[smooth, tension=.7] coordinates {(24,112+\i*20) (18,77+\i*20) (12,60+\i*20) (6,45+\i*20) (0,12+\i*20)};
}

\draw[color=red,very thick]  (39.671,92) node (v1) {} arc (100:80:71);

\foreach \i in {0,1,2,3,4,5,6,7} {
        \draw[color=blue, dashed]  (39,19+20*\i) rectangle (41,0+20*\i);    
}

\draw[color=red,very thick]  (-0.329,92) node (v1) {} arc (100:80:71);

\foreach \i in {0,1,2,3,4,5,6,7} {
        \draw[color=blue, dashed]  (-1,19+20*\i) rectangle (1,0+20*\i);    
}



\node [draw,outer sep=0,inner sep=1,minimum size=10] at (-5,92) {mid1};
\node [draw,outer sep=0,inner sep=1,minimum size=10] at (28,92) {mid0};
\node [draw,outer sep=0,inner sep=1,minimum size=10] at (35,92) {mid1};
\node [draw,outer sep=0,inner sep=1,minimum size=10] at (68,92) {mid0};

\node [draw,outer sep=0,inner sep=1,minimum size=10] at (27,52) {a};
\node [draw,outer sep=0,inner sep=1,minimum size=10] at (27,132) {b};
\node [draw,outer sep=0,inner sep=1,minimum size=10] at (67,52) {a};
\node [draw,outer sep=0,inner sep=1,minimum size=10] at (67,132) {b};

              \path[draw,fill=red] (0,92) circle (10pt);     
              \path[draw,fill=red] (12,92) circle (10pt);     
              \path[draw,fill=red] (24,92) circle (10pt);     
              \path[draw,fill=red] (40,92) circle (10pt);   
              \path[draw,fill=red] (52,92) circle (10pt);   
              \path[draw,fill=red] (64,92) circle (10pt);

\draw[color=black,very thick]    plot[smooth, tension=.7] coordinates {(64,152) (52,112)};
\draw[color=black,very thick]    plot[smooth, tension=.7] coordinates {(52,152) (40,112)};

\draw[color=black,very thick]    plot[smooth, tension=.7] coordinates {(24,152) (12,112)};
\draw[color=black,very thick]    plot[smooth, tension=.7] coordinates {(12,152) (0,112)};

\draw[color=brown,very thick]    plot[smooth, tension=.7] coordinates {(40,152) (46,104) (52,66) (58,47) (64,12)};
\draw[color=brown,very thick]    plot[smooth, tension=.7] coordinates {(64,152) (58,104) (52,66) (46,47) (40,12)};

\draw[color=brown,very thick]    plot[smooth, tension=.7] coordinates {(0,152) (6,104) (12,66) (18,47) (24,12)};
\draw[color=brown,very thick]    plot[smooth, tension=.7] coordinates {(24,152) (18,104) (12,66) (6,47) (0,12)};

\draw[color=pink,very thick]   plot[smooth, tension=.7] coordinates {(52,12) (64,132)};
\draw[color=pink,very thick]   plot[smooth, tension=.7] coordinates {(52,12) (40,132)};

\draw[color=pink,very thick]   plot[smooth, tension=.7] coordinates {(52,32) (64,152)};
\draw[color=pink,very thick]   plot[smooth, tension=.7] coordinates {(52,32) (40,152)};

\draw[color=pink,very thick]   plot[smooth, tension=.7] coordinates {(52,152) (64,32)};
\draw[color=pink,very thick]   plot[smooth, tension=.7] coordinates {(52,152) (40,32)};

\draw[color=pink,very thick]   plot[smooth, tension=.7] coordinates {(52,132) (64,12)};
\draw[color=pink,very thick]   plot[smooth, tension=.7] coordinates {(52,132) (40,12)};


\draw[color=pink,very thick]   plot[smooth, tension=.7] coordinates {(12,12) (24,132)};
\draw[color=pink,very thick]   plot[smooth, tension=.7] coordinates {(12,12) (0,132)};

\draw[color=pink,very thick]   plot[smooth, tension=.7] coordinates {(12,32) (24,152)};
\draw[color=pink,very thick]   plot[smooth, tension=.7] coordinates {(12,32) (0,152)};

\draw[color=pink,very thick]   plot[smooth, tension=.7] coordinates {(12,152) (24,32)};
\draw[color=pink,very thick]   plot[smooth, tension=.7] coordinates {(12,152) (0,32)};

\draw[color=pink,very thick]   plot[smooth, tension=.7] coordinates {(12,132) (24,12)};
\draw[color=pink,very thick]   plot[smooth, tension=.7] coordinates {(12,132) (0,12)};

\foreach \x in {0,12,24} {
   \foreach \y in {2,7,...,157} {
 
              \path[draw,fill=black!30](\x,\y) circle (8pt);       
   }
}

\foreach \x in {0,12,24} {
   \foreach \y in {12,32,...,152} {
   
              \path[draw,fill=black](\x,\y) circle (8pt);       
   }
}

\foreach \x in {40,52,64} {
   \foreach \y in {2,7,...,157}{ 
 
              \path[draw,fill=black!30](\x,\y) circle (8pt);       
    }
}

\foreach \x in {40,52,64} {
   \foreach \y in {12,32,...,152}  {
       {

              \path[draw,fill=black](\x,\y) circle (8pt);       

    }
}
}
\end{tikzpicture}
\caption{\label{fig:A_3_3-and-3rd-abstraction} $\mathfrak{A}_{3,3}$ (left side) and its third abstraction $\mathfrak{A}_{3,3}[\mathbb{X}_3^*]$. Removing the red edge from $\mathfrak{A}_{3,3}$, we obtain the structure $\mathfrak{B}_{3,3}$.}
\end{figure}
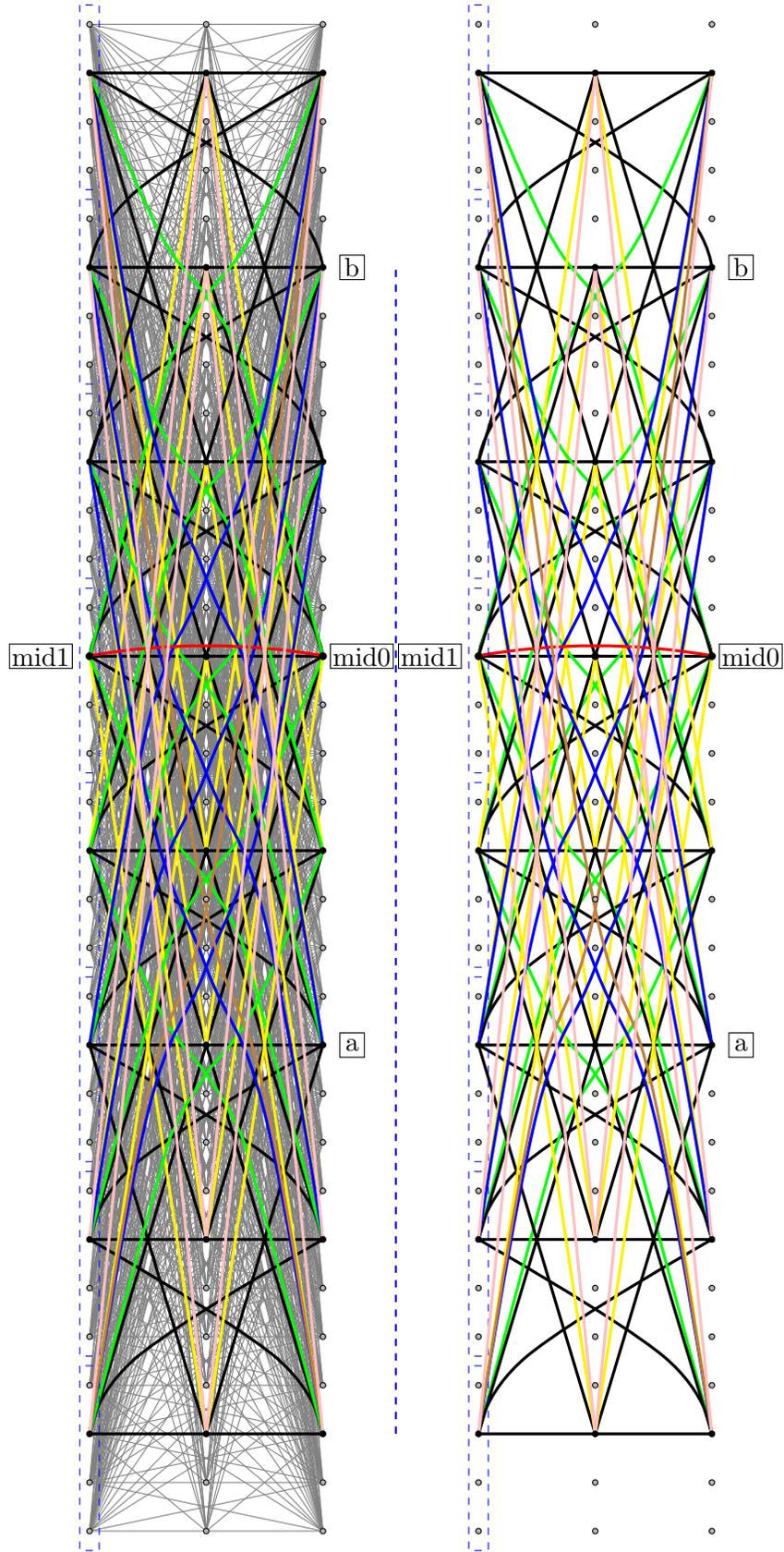


It is easy to see that Duplicator has a winning strategy in $\Game_3^2(\widetilde{\mathfrak{A}}_{3,3},\widetilde{\mathfrak{B}}_{3,3})$. In other words, in this special case we don't have to right circular shift the mid row (the $1$-th row) of the structures. Nevertheless,  Fig. \ref{fig:A_3_3-and-3rd-abstraction} is still not easy to handle directly insomuch as edges crisscross one another in a fashion that deters ``observation'', let alone to play a game over this game board. 
Hence it is better to study a piece of the structures to understand the key concept ``abstraction'', as shown in Fig. \ref{piece-of-structure}.

\begin{figure}[htbp]
\centering
\includegraphics[trim= 25mm 0mm 0mm 0mm, scale=0.7]{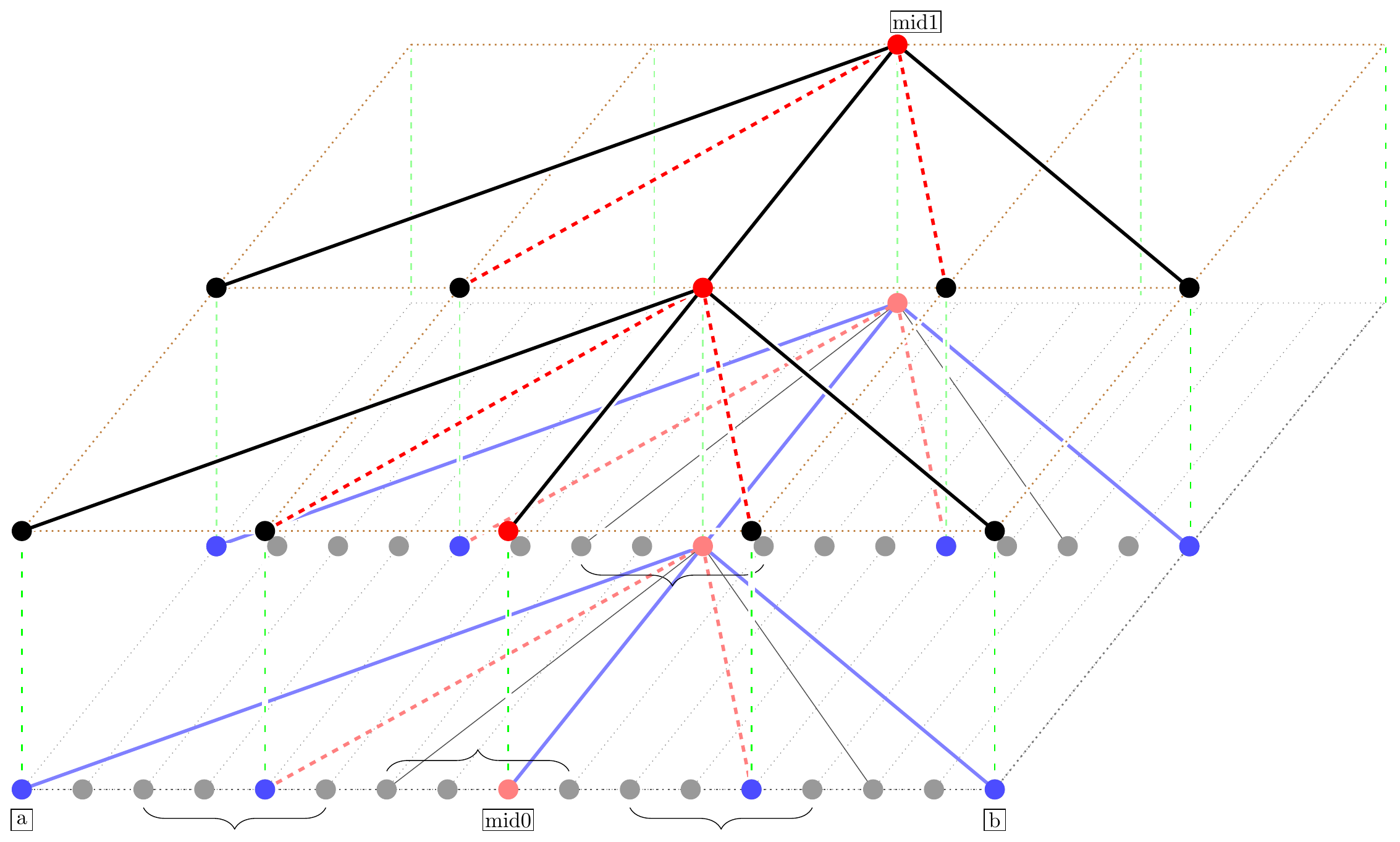}
\caption{A piece of $\mathfrak{A}_{3,3}$. To simplify the figure, in this small piece we only show the edges between $(mid,2)$ (i.e. the vertex with label ``$\mathrm{mid}1$'') and those vertices whose second coordinate is $1$ and the edges between $(mid,1)$ and those vertices whose second coordinate is $0$. For example, the red edge between $(mid,2)$ and $(mid,0)$ is ommitted in this figure. }
\label{piece-of-structure}
\end{figure}

In Fig. \ref{piece-of-structure}, we not only give a small piece of $\mathfrak{A}_{3,3}$, but also give its third abstraction and overlap them in a way that delivers a bit  intuition: we can watch the graph from different scales. Compared with the higher abstractions, the lower abstractions show more details of the original structure, thereby in the finer scales. Note that we use braces instead of dashed rectangles in Fig. \ref{piece-of-structure}. The subgraph induced by $cex(14,0,2)$ and $cex(mid,1,2)$ is isomorphic to that induced by $cex(22,0,2)$ and $cex(mid,1,2)$, for both $(14,0)$ and $(22,0)$ are not adjacent to $(mid,1)$. By contrast, The subgraph induced by $cex(14,0,2)$ and $cex(mid,1,2)$ is \textit{not} isomorphic to that induced by $cex(mid,0,2)$ and $cex(mid,1,2)$, because $(mid,0)$ is adjacent to $(mid,1)$ while $(14,0)$ is not. In general, the adjacency of vertices of the $i$-th abstraction determines the adjacency of all the middle row vertices of the $(i-1)$-th abstraction  whose $(i-1)$-th relative first coordinates are even.\footnote{The ``adjacency of a vertex'' tells us how the vertex is connected to the other vertices.}\\[-2pt]   

\noindent\textit{\textbf{Further Remark}}    

We create the structures $\mathfrak{A}_{3,3}$ and $\mathfrak{B}_{3,3}$ to ensure that Duplicator has a winning strategy in a $3$-round $2$-pebble game.
We can further simplify the structures. In fact, the structure on the right side of Fig. \ref{fig:A_3_3-and-3rd-abstraction} (the third abstraction) can be taken as the structure $\mathfrak{A}_{3,3}$. What's more, we even don't have to right circular shift the mid row of the structures in this very special case. It is not only because the additional immovable pebbles on the boundaries are not necessary a part of the games, but also based the simple observation that, in the last round, Duplicator need only ensure that the game board is in partial isomorphism, either w.r.t. edges or w.r.t. orders. 
 
It is possible to make the structures even smaller. For example, in the case where $k=m=3$, there exists a pair of structures $\mathfrak{A}_{3,3}$ and $\mathfrak{B}_{3,3}$, each of which only has $12$ vertices. See Fig. \ref{fig:A_3_3-and-B_3_3}. 
 Duplicator's strategy in this special case is similar to the one we introduced before. Duplicator simply mimics Spoiler in the first round. And in the following rounds  Duplicator continues mimicking until both of $(mid,0)$ and $(mid,2)$ are  ``pebbled'' 
in one structure and it is Spoiler who picked one of them in this round. Suppose that in this icebreaking round Spoiler picked a critical point in $\mathfrak{A}_{3,3}$.  
In such case, Duplicator can pick either $(1,0)$ or $(1,2)$ in $\mathfrak{B}_{3,3}$, depending on which critical point is picked by Spoiler. Clearly Duplicator wins this round.  
Hence there is at most one round left, which is easy for Duplicator to handle. For example, assume that  both of $(mid,0)$ and $(mid,2)$ are pebbled in $\mathfrak{A}_{3,3}$ and $(1,2)$ is pebbled in $\mathfrak{B}_{3,3}$. Then, if Spoiler moves the pebble on $(mid,0)$ to $(0,1)$ in $\mathfrak{A}_{3,3}$, Duplicator need only move the pebble on $(mid,0)$ to  $(1,1)$ in $\mathfrak{B}_{3,3}$. 

\begin{figure}
\centering
\hspace*{-13mm}
\begin{tikzpicture}[xscale=0.20,yscale=0.23]

\draw[blue, thick,dashed] (31.5,32) -- (31.5,92);

\foreach \z in {0,1}{ 

   \foreach \y in {32,52,...,92} {
 
              \draw[very thick] (40-40*\z,\y) -- (64-40*\z,\y);       
   }

  \foreach \n in {1,2,3}{
  
            \draw[color=black,very thick]  plot[smooth, tension=.7] coordinates {(40-40*\z,112-20*\n) (64-40*\z,92-20*\n)};
            \draw[color=black,very thick]  plot[smooth, tension=.7] coordinates {(40-40*\z,92-20*\n) (64-40*\z,112-20*\n)};
}
  \foreach \n in {1}{
          \draw[color=green,very thick]  plot[smooth, tension=.7] coordinates {(64-40*\z,72+20*\n)  (40-40*\z,12+20*\n)};
          \draw[color=green,very thick]   plot[smooth, tension=.7] coordinates {(40-40*\z,72+20*\n) (64-40*\z,12+20*\n)};

 }

\foreach \i in {1,2}{
       \draw[color=black,very thick]    plot[smooth, tension=.7] coordinates {(52-40*\z,52+20*\i) (64-40*\z,12+20*\i)};
}

\foreach \i in {1,2}{
       \draw[color=black,very thick]    plot[smooth, tension=.7] coordinates {(40-40*\z,52+20*\i) (52-40*\z,12+20*\i)};
       
}

\draw[color=black,very thick]    plot[smooth, tension=.7] coordinates {(64-40*\z,72) (52-40*\z,32)  };
\draw[color=black,very thick]    plot[smooth, tension=.7] coordinates {(64-40*\z,92) (52-40*\z,52)  };

\draw[color=black,very thick]    plot[smooth, tension=.7] coordinates {(52-40*\z,72) (40-40*\z,32)  };
\draw[color=black,very thick]    plot[smooth, tension=.7] coordinates {(52-40*\z,92) (40-40*\z,52)  };

} 

\draw[color=red,very thick]  (-0.329,71.9214) node (v1) {} arc (100:80:71);


\foreach \x in {0,12,24} {
   \foreach \y in {32,52,...,92} {
   
              \path[draw,fill=black](\x,\y) circle (8pt);       
   }
}

\foreach \x in {40,52,64} {
   \foreach \y in {32,52,...,92}  {
       {

              \path[draw,fill=black](\x,\y) circle (8pt);       

    }
}
}

\node [draw,outer sep=0,inner sep=1,minimum size=10] at (-3,72) {mid1};
\node [draw,outer sep=0,inner sep=1,minimum size=10] at (27,72) {mid0};
\node [draw,outer sep=0,inner sep=1,minimum size=10] at (37,72) {mid1};
\node [draw,outer sep=0,inner sep=1,minimum size=10] at (67,72) {mid0};

\draw [red,very thick,dotted] (0.5392,32.0027) arc (-100:-80:66);

\end{tikzpicture}
\caption{\label{fig:A_3_3-and-B_3_3} $\mathfrak{A}_{3,3}$ (left side) and $\mathfrak{B}_{3,3}$ (right side). Note that $(x,0)$ is \textit{not} adjacent to $(x,2)$ for any $x$, which is exemplified by a \textit{dotted} red arc in $\mathfrak{A}_{3,3}$. We omit all other such dotted arcs in the figure.  
This game board is simplified to the extent that permits the readers to actually play and check the outcome of the $3$-round $2$-pebble game over it.
} 
\end{figure}
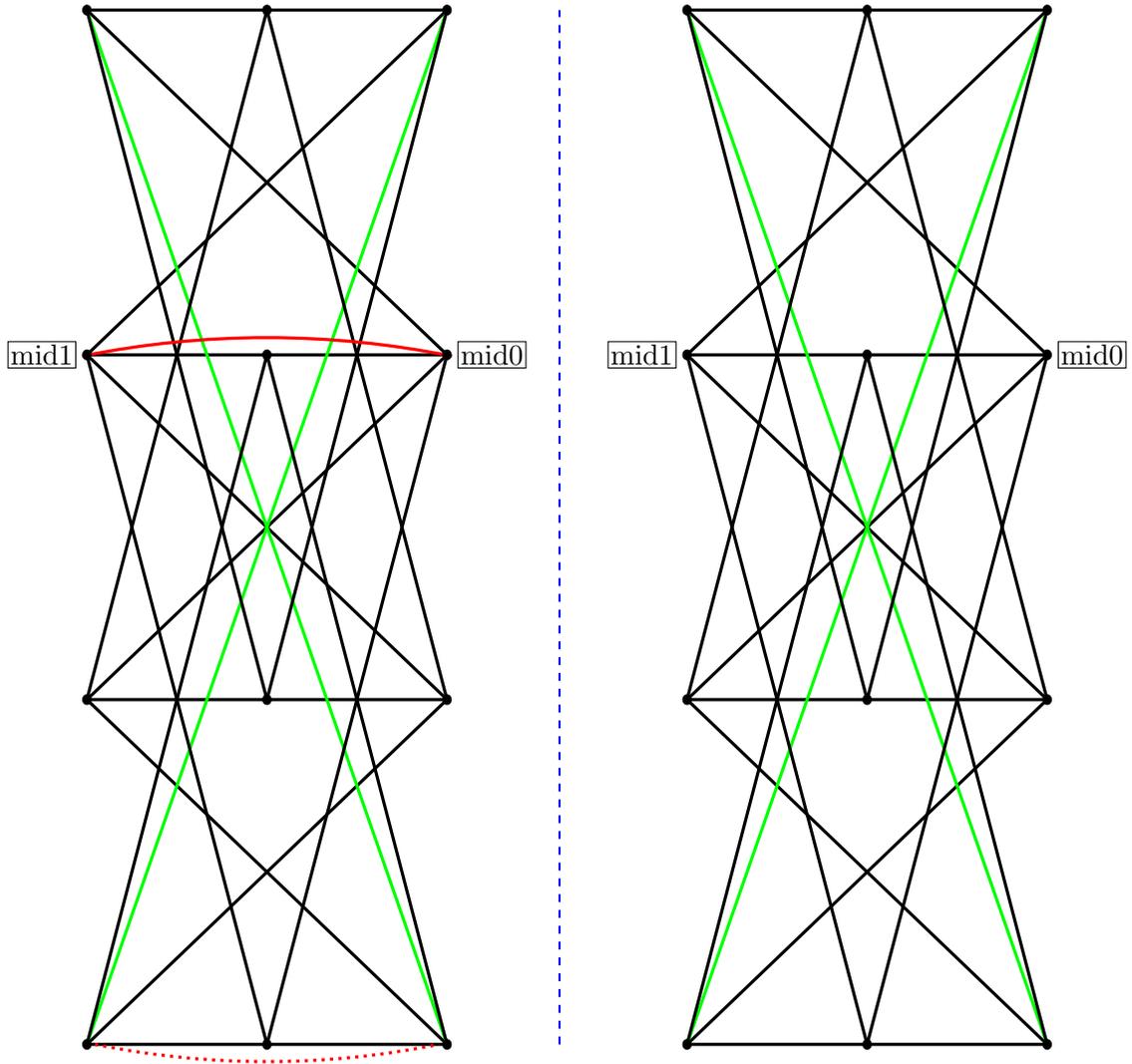



Similarly, we can construct $\mathfrak{A}_{3,4}$ and $\mathfrak{B}_{3,4}$, each of which only contains $45$ vertices. However, on the one hand it is still a bit complicated; on the other hand, it is hard to generalize. Therefore we do not put the picture here. 

In conclusion, it is possible to make the structures smaller in some special cases. Nevertheless, we shall not consider it in the following more general cases, to make the arguments as simple as possible.

\subsection{Type labels, board histories, and structures with a structure} \label{section-structures}
Since we have already proved the special cases where $k=2, 3$ in the last section,   
henceforth we assume that $k\geq 4$. Compared with the special cases, including the case where Spoiler is only allowed to pick in $\mathcal{A}_k$ (cf. Section \ref{existential-case-section}), the general cases suddenly become complicated for the following reasons. 
\begin{enumerate}[(1$^\star$)]
\item The two graphs in the game board should be sufficiently similar so that the choosing of structure in each round will not immediately help Spoiler find the difference. 
By contrast, the graphs we constructed for the special case werein Spoiler only picks in $\mathcal{A}_k$ is so different that Spoiler can find the difference immediately if he is allowed to choose $\mathcal{B}_k$.  

\item In the case where $k=3$, there are two pebbles in total, which can only tell us whether two pebbled vertices are adjacent or not. With more pebbles, the adjacency of a vertex to its neighbours that can be detected is greatly complicated.  

\item In the case where Spoiler only picks in $\mathcal{A}_k$, each vertex in $\mathcal{A}_k$ has the same type and the pebbled vertices induce only cliques. That is, the newly picked vertex is adjacent to all the pebbled vertices. In the general case, as we have many different types of vertices, pebbled vertices may induce different subgraphs, possibly not in favor of Duplicator. 
  
\item With more ``types'' of vertices, it is much easier for Spoiler to find the difference simply by picking in a relative ``small'' range (w.r.t. the linear order) of a row persistently, when the graphs are ordered.  
It is particularly for this reason that makes the linear order issue notoriously infavorable for explicit constructions. 
\end{enumerate}

As a consequence, we need to create some new concepts and techniques to deal with the difficulty. For (1$^\star$), we forbbid some edges to ensure that no $k$-clique (cf., e.g., $\mathrm{RngNum}(\cdot,\cdot)$ and $\mathrm{sgn}(\cdot,\cdot)$ functions) exists in one of the structure. For (2$^\star$), we introduce complicated notion of ``type'' labels (cf. Definition \ref{type-label-star}). For (3$^\star$), we can first find a vertex that is adjacent to (the projection of) all the pebbled vertices (in some specific abstraction) (cf. Lemma \ref{no-missing-edges_xi-1}); afterwards, with ``small'' adjustion, we can find the vertex that has the adjacency to the pebbled vertices in the way we want, cf. Strategy 2, (\ref{xi-1-simuluation-1}). For (4$^\star$), we introduce novel concepts and techniques, e.g. board histories, to tackle this issue. 

Note that, the concepts and lemmas introduced in the last section will continue to be used in the following proofs.

For the sake of strictness, in the following we define some numbers that will be used later (cf. Definition \ref{type-label}, \ref{iterative-expansion}, \ref{type-label-star}, \ref{B_km-star}). 
The readers may choose to skip these definitions temporarily and recall them only when necessary. 
The intuition behind these somewhat strange and elabrated definitions is that we want to use the values of $x, y$ (especially the value of $x$) of a vertex $(x,y)$ to determine the type label of $(x,y)$, which tells us how $(x,y)$ is connected to the other vertices.

In the following we define  
 $\mathpzc{U}_i^*$, $\eta_i^*$, $\gamma_i^*$, $\beta_{j}^{i}, [x]_{i},  \llparenthesis x\rrparenthesis_{i}$,  and $\mathbb{X}_i^*$
 using simultaneous induction, which means that some notations will possibly be used before they are defined. 

We use a $\mathpzc{U}_i^*$-tuple to denote \textit{one} unit in the $i$-th abstraction of the structures to be constructed.  
Let $\mathpzc{U}_m^*:=k-1$. Note that the factor ``$k-1$'' is the number of distinct coordinate congruence numbers. 
And we usually regard every $k-1$ successive vertices as one object. 

For any $1\leq i<m$, let 
\begin{align} \label{hbar}
\eta_i^* &:=(k-1)\times cl_{i+1}^*, \mbox{ where } \\
&cl_{i+1}^*=\left(2k\cdot\gamma_{m-i-1}^*+\Sigma_{j=1}^{k-2} (k\cdot\gamma_{m-i-1}^*)^j\right); \label{def-num-cl_i}\\
\mathpzc{U}_i^* &:=3\times 2^{\binom{k-2}{2}}\times\eta_i^*. \label{def-hbar-i}
\end{align}

To understand what the value $cl_i^*$ stands for, cf. page \pageref{page-def-L-lists-for-cl}. And cf. page \pageref{{page-def-U_m}} for the role of $\mathpzc{U}_i^*$. Cf. page \pageref{page-SW-fun} 
for an explanation of the factors  $2^{\binom{k-2}{2}}$, and page \pageref{page-constant5} 
for an explanation of the constant ``$3$'' in (\ref{def-hbar-i}).    

 Let $\gamma_{i}^*$ ($0\leq i<m$) be defined as the following:
\begin{align}  
\gamma_{0}^* &:= 2^m\times \mathpzc{U}_m^*; \label{gamma_0-star}\\
\gamma_{i}^* &:=2^{m-i}\times \mathpzc{U}_{m-i}^*\times \gamma_{i-1}^*.\label{gamma_i-gamma_i-1}\\
 \mathbb{X}_1^* &:=[\gamma_{m-1}^*]\times [k]. \label{X_1-star}
\end{align}  

Note that ``$2^m$'', as well as $2^{m-i}$, comes from Fact \ref{linear-orders-1}. It helps to ensure that Spoiler cannot win the game simply by exploiting pure linear orders.\footnote{Later we shall see that Duplicator will resort to the $(m-i)$-th abstraction if she cannot respond properly without violating a linear order requirement in the $(m-i+1)$-th abstraction. From the definition (\ref{gamma_i-gamma_i-1}), we can see that $2^{m-i}$ objects (each object contains $\mathpzc{U}_{m-i}^*$ objects in the $(m-i)$-th abstraction, which include all the different types of vertices of index $m-i$. Cf. Definition \ref{vertex-index} for the concept vertex index) are ``generated (or expanded)'' from one object of the $(m-i+1)$-th abstraction. As a consequence,  Duplicator will find that every interval (delimited by pebbles or boundaries of the structures) in the $(m-i)$-th abstraction is sufficiently large even if the interval is of length one in the $(m-i+1)$-th abstraction. It is for this reason that we use $2^{m-i}$ instead of $2^m$ in definition (\ref{gamma_i-gamma_i-1}).} Compared with the corresponding definitions \ref{gamma_0-star-k3}, \ref{gamma_i-star-k3} (cf. the special case where $k=3$), where $4m$ suffice to make it and $ \mathpzc{U}_j^*=1$ for any $j$, in the general cases we need $2^m$ and $\mathpzc{U}_j^*$ needs to be sufficiently large to take account of all \textit{types} of vertices.

We shall define a notion called board history in Definition \ref{def-board-history}. In the following we use $bh^\#$ to denote the number of all possible board histories (including invalid ones). 
\begin{equation}\label{def-num-bh}
bh^\# :=m(k\cdot\gamma_{m-1}^*+1)^{k-1}.
\end{equation}

Similarly  we can define $\gamma_i$ and $\mathbb{X}_1$  as follows.
\begin{align}
\gamma_{0} &:= 2^m\times \mathpzc{U}_m^*\times m\times bh^\#; \label{gamma_0-star}\\
\gamma_{i} &:=2^{m-i}\times \mathpzc{U}_{m-i}^*\times \gamma_{i-1}^*\times m\times bh^\#, \mbox{ where } 0<i<m.\label{gamma_i-gamma_i-1}\\
 \mathbb{X}_1&:=[\gamma_{m-1}^*]\times [k]\times [m]\times [bh^\#]. \label{X_1-star}
\end{align}   

 By the definition, we have that $\gamma_i=m\times bh^\#\times\gamma_i^*$.

The notations $\beta_j^i$, $[x]_i$ and $\llparenthesis x\rrparenthesis_i$ are defined in (\ref{beta-func})\textapprox(\ref{llprrp-func}). We copy them here to remind the readers. 

For any $0\leq j\leq i\leq m-1$, \\[-8pt]
\begin{equation*}
\beta_{j}^{i}:=\frac{\gamma_{i}^*}{\gamma_{j}^*}\\[-3pt]
\end{equation*}

Note that $\frac{\gamma_{\ell}^*}{\gamma_{\ell-1}^*}=2^{m-\ell}\times \mathpzc{U}_{m-\ell}^*$ for any $1\leq \ell<m$, by (\ref{gamma_i-gamma_i-1}). That is, $\frac{\gamma_{\ell}^*}{\gamma_{\ell-1}^*}\in\mathbf{N}^+$ since, obviously, $\mathpzc{U}_{m-\ell}^*\in\mathbf{N}^+$.\footnote{
By definition, $\mathpzc{U}_{m}^*\in\mathbf{N}^+$. $\therefore \gamma_0^*\in\mathbf{N}^+$. Note that $\gamma_0^*$ is the width of the $m$-th abstraction of the structure $\mathfrak{B}_{k,m}^*$ we are going to construct. $\mathpzc{U}_{m-1}^*$ is then defined based on $\gamma_0^*$, which is in $\mathbf{N}^+$. Then $\gamma_1^*$ is defined based on $\mathpzc{U}_{m-1}^*$. Then based on $\gamma_1^*$, we can define  $\mathpzc{U}_{m-2}^*$. Afterwards,  $\gamma_2^*$ is defined based on $\mathpzc{U}_{m-2}^*$, which is in $\mathbf{N}^+$. 
And so on. In this process, 
$\gamma_{m-i-1}^*\in\mathbf{N}^+$ for any $i$, $\therefore cl_{i+1}^*\in\mathbf{N}^+$; $\therefore$ $\eta_i^*\in\mathbf{N}^+$. As a consequence, $\mathpzc{U}_{m-\ell}^*\in\mathbf{N}^+$ for any $\ell$.} 
Therefore, 
$\beta_j^i\in\mathbf{N}^+$ because $\beta_j^i=\displaystyle\prod_{j\leq \ell <i} \frac{\gamma_{\ell+1}^*}{\gamma_{\ell}^*}$.  

For any $0\leq x<\gamma_{m-1}^*$ and $1\leq i\leq m$, let 
\begin{align*}
[x]_{i} &:=\lfloor x/\beta_{m-i}^{m-1} \rfloor\\
\displaystyle\llparenthesis x\rrparenthesis_{i} &:=[x]_{i}\beta_{m-i}^{m-1}+\frac{1}{2}\sum_{1<j\leq i}\beta_{m-j}^{m-1} 
\end{align*}

 For $2\leq i\leq m$, let 
\begin{equation}\label{X_i}
\mathbb{X}_i^*:=\{(x,y)\in\mathbb{X}_1^* \1  x=\llparenthesis x\rrparenthesis_{i}\}. 
\end{equation}

 Note that $|\mathbb{X}_i^*|=k\times\gamma_{m-1}^*/\beta_{m-i}^{m-1}=k\times\gamma_{m-i}^*$. 
That is, $\gamma_{m-i}^*$ describes the width of the finite upright square lattice whose lattice points are the set of elements of $\mathbb{X}_i^*$. 
We will define a pair of structures $\mathfrak{A}_{k,m}^*$ and $\mathfrak{B}_{k,m}^*$ in Definition \ref{B_km-star}, whose universe are $\mathbb{X}_1^*$. Hence $\gamma_{m-1}^*$ is the width of the universe of $\mathfrak{A}_{k,m}^*$ and $\mathfrak{B}_{k,m}^*$. 
These two structures have $m$ abstractions as $\mathfrak{A}_{3,m}$ and $\mathfrak{B}_{3,m}$ do. In some sense, $\mathfrak{B}_{k,m}^*[\mathbb{X}_m^*]$ is  similar to $\mathfrak{B}_{3,m}[\mathbb{X}_m^*]$, except that we forbid some edges (for the definition of $\mathfrak{B}_{k,m}^*$, cf. Definition \ref{B_km-star}). Each vertex of $\mathfrak{B}_{k,m}^*$, as well as $\mathfrak{A}_{k,m}^*$, has an index that is defined in the same way as Definition \ref{vertex-index}.  
From the viewpoint of iterative structural expansion, we can  construct $\mathfrak{B}_{k,m}^*[\mathbb{X}_{m-1}^*]$ from $\mathfrak{B}_{k,m}^*[\mathbb{X}_m^*]$ and so on, until we obtain $\mathfrak{B}_{k,m}^*[\mathbb{X}_1^*]$, i.e. $\mathfrak{B}_{k,m}^*$. But now the situation is much more complicated than the special case:  in the $i$-th  abstraction, we need to create all \textit{types} of vertices in this abstraction by forbidding some edges, and put these varied distinct types of vertices (in the same row) into one ``\textit{u}nit'' or object, i.e. an interval of $\mathpzc{U}_{i}^*$ vertices.  
Note the \textbf{resemblance} between \ref{gamma_0-star}\textapprox  \ref{X_1-star} and \ref{gamma_0-star-k3}, \ref{gamma_i-star-k3}, \ref{X_1-star-k3}, if 
we regard every successive squence of $\mathpzc{U}_{i}^*$ vertices as one object in a row of the $i$-th abstraction of the structures. For instance, we can regard every $\mathpzc{U}_m^*$ successive vertices in a row of the $m$-th abstraction of $\mathfrak{A}_{k,m}^*$ as one object, or one \textit{u}nit. 
We can call it an $\mathpzc{U}_m^*$-tuple. \label{{page-def-U_m}} We shall see, in Definition \ref{type-label-star}, how the value $[x]_i$ mod $\mathpzc{U}_i^*$ determines the way $(x,y)$ is connected to the other vertices, provided that  $\mathrm{idx}(x,y)=i$. 

For any $(x,y)\in \mathbb{X}_p^*$, we  use $(\llbracket x\rrbracket_p,y)$ to denote the $\mathpzc{U}_p^*$-tuple of vertices $\{(x_{min},y),(x_1,y),\ldots,(x_{_{\mathpzc{U}_p^*-1}},y)\}$ wherein, for any $i\in [1,\mathpzc{U}_p^*-1]$,   
\begin{enumerate}[(1)] \label{llbracket-min}
\item $(x_{min},y)\in \mathbb{X}_p^*$;  
\item  $\lfloor [x_{min}]_p/\mathpzc{U}_{p}^*\rfloor=\lfloor [x]_p/\mathpzc{U}_{p}^*\rfloor=[x_{min}]_p/\mathpzc{U}_{p}^*$;
\item $\mathrm{idx}(x_i,y)=p$; 
\item $[x_i]_p=[x_{min}]_p+i$. 
\end{enumerate}
(1) means that $\mathrm{idx}(x_{min},y)\geq p$; (2) implies that $\gamma_{m-p}^*$ can be divided by $\mathpzc{U}_{p}^*$, and $[x_{min}]_p$ and $[x]_p$ is in the same $\mathpzc{U}_p^*$-tuple where $[x_{min}]_p$ is the first element of this tuple; (4) means that, in the $p$-th abstraction, the distance between $(x_i,y)$ and $(x_{min},y)$ is $i$.  
We use $\llbracket x\rrbracket_p^{min}$ to denote $[x_{min}]_p$. 
We introduce $(\llbracket x\rrbracket_p,y)$ to denote such an interval that contains all the different types of vertices of index $p$.

Once $\mathfrak{A}_{k,m}^*$ and $\mathfrak{B}_{k,m}^*$ are constructed, whose universes are $\mathbb{X}_1^*$, another pair of structures $\mathfrak{A}_{k,m}$ and $\mathfrak{B}_{k,m}$ can be constructed based on them, whose universes are $\mathbb{X}_1$.  
In the sequel, we define a notation $x^\flat$ for any $x\in[\gamma_{m-1}]$ as follows  
\begin{equation}
x^\flat:=x \mbox{ mod } (\gamma_{m-1}^*\times k)
\end{equation}
We associate a vertex $(x,y)$ in $\mathfrak{A}_{k,m}$ with a vertex $(x^\flat,y)$  in a ``flat'' structure $\mathfrak{A}_{k,m}^*$ that ``forgets'' \textit{board history} information (cf. Definition \ref{def-board-configuration} and Definition \ref{def-board-history}) of vertices of $\mathfrak{A}_{k,m}$. We can take it that each vertex $(x,y)$ in $\mathfrak{A}_{k,m}$ is a pair $((x^\flat,y),h_{xy})$ where $(x^\flat,y)$ is a vertex in $\mathfrak{A}_{k,m}^*$ and $h_{xy}$ describes the board history associated with $(x^\flat,y)$. Later, we shall see that all the vertices associated with the same board history are arranged together w.r.t. the first coordinate.

For $1<i\leq m$, let 
\begin{equation}
\mathbb{X}_i:=\{(x,y)\in\mathbb{X}_1\mid (x^\flat,y)\in \mathbb{X}_i^*\}. 
\end{equation}

We can also assign an index for each vertex $(x,y)$ of $\mathfrak{A}_{k,m}$, as well as  $\mathfrak{B}_{k,m}$, with $\mathrm{idx}(x^\flat,y)$. Therefore, $\mathrm{idx}(x,y)=i$ if and ony if $(x,y)\in \mathbb{X}_i-\mathbb{X}_{i+1}$, provided that $i<m$. 

\begin{definition}\label{def-board-configuration}
A \textbf{board configuration}
 is a $(k-1)$-tuple $\{\mathbb{X}_1^*\cup\{(*,*)\}\}^{k-1}$.\footnote{An alternative definition is the following. A board configuration is a $(k-1)$-tuple $\{(\mathbb{X}_1^*,n)\cup\{((*,*),0)\}\}^{k-1}$, where $n\in [1,k-1]$, such that the sum of the second item of the pairs is no more than $k-1$. This version of definition tells us how many pebbles are put on the same vertex.}
  A \textit{valid board configuration} is a board configuration where ``$(*,*)$''s can only appear in the tail of the tuple. Let $\mathbb{BC}$ be the set of all valid configurations.  
For any $(x,y)\in\mathbb{X}_1$, we use $(x,y)[\mathrm{BC}]$
to denote the board configuration associating with $(x,y)$. Moreover, for any valid board configuration $\mathbb{Z}$, we use $|\mathbb{Z}|$ to denote the number of elements of $\mathbb{Z}$  that is not $(*,*)$. $\mathbb{Z}$ is an \textit{empty board configuration}, denoted by $\mathrm{BC}_\emptyset$, if $|\mathbb{Z}|=0$.  
\hfill\ensuremath{\divideontimes}
\end{definition}

In this paper, when we talk about a board configuration, it is a valid one by default. 
We use $(x,y)[\mathrm{BC}]\ccirc(u,v)$, where $(u,v)\in\mathbb{X}_1$, to denote that the first ``$(*,*)$'' in the tuple $(x,y)[\mathrm{BC}]$, if there is a  $(*,*)$ and $(u^\flat,v)$ is not in the tuple, is replaced by $(u^\flat,v)$.
$(x_i,y_i)[\mathrm{BC}]$ describes the setting when exactly those vertices $(u,v)$ in  $(x_i,y_i)[\mathrm{BC}]$ are ``supposed to'' be pebbled, though not necessary true.  
A  configuration $(x_i,y_i)[\mathrm{BC}]$ can evolve to another configuration $(x_j,y_j)[\mathrm{BC}]$ in one round of a reasonable game over board $(\mathfrak{A}_{k,m}^*,\mathfrak{B}_{k,m}^*)$, denoted $(x_i,y_i)\triangleright(x_j,y_j)$, if and only if either $(x_i,y_i)[\mathrm{BC}]\ccirc(x_i,y_i)=(x_j,y_j)[\mathrm{BC}]$, 
  or   $(x_i,y_i)[\mathrm{BC}]\!=\!(x_j,y_j)[\mathrm{BC}]\ccirc(x_i,y_i)$. That is, it is either $(x_i,y_i)[\mathrm{BC}]\!=\!(x_j,y_j)[\mathrm{BC}]$, or $(x_j,y_j)[\mathrm{BC}]$  evolves from $(x_i,y_i)[\mathrm{BC}]$ by adding or removing $(x_i^\flat,y_i)$. And if $|(x_i,y_i)[\mathrm{BC}]|=k-1$, then $(x_i,y_i)[\mathrm{BC}]$ could only evolve to  $(x_j,y_j)[\mathrm{BC}]$ by removing $(x_i^\flat,y_i)$; if $|(x_i,y_i)[\mathrm{BC}]|=0$, then  $(x_i,y_i)[\mathrm{BC}]$ could only evolve to  $(x_j,y_j)[\mathrm{BC}]$ by  adding $(x_i^\flat,y_i)$.

For any board configuration $\mathbb{Z}$, we can use $\mathbb{Z}-\{(*,*)\}$ to denote the tuple that is obtained from $\mathbb{Z}$ by removing all the ``$(*,*)$''. For convenience, we also use $(x,y)\mathrm{BC}$ to denote $(x,y)\mathrm{BC}-\{(*,*)\}$. Hence such truncated board configurations may have less than $k-1$ elements.  

In the following we define an important concept that reflects intrinsic ``logic'' of evolution of games, but \textit{not} the \textit{real} evolution of games. 
\begin{definition}\label{def-board-history}
For any $(x,y)\in\mathbb{X}_1$, the \textbf{board history} of $(x,y)$, denoted by $\chi(x,y)\!\restriction\! \mathrm{BH}$,  consists of a sequence of $m$ board configurations, written $(\mathrm{BC}_0,\ldots,\mathrm{BC}_{m-1})$. A valid board history need satisfy the following requirements. 
\begin{itemize}
\item $\mathrm{BC}_i$ is a valid board configuration, for any $i$;
\item let   $\mathrm{i}_{\mathrm{cur}}^{x,y}\!:=\!\chi(x,y)\!\!\restriction\!\! \mathrm{bc}+1$ and $(x_{_{\mathrm{i}_{\mathrm{cur}}^{x,y}}},y_{_{\mathrm{i}_{\mathrm{cur}}^{x,y}}}):=(x,y)$; for $0\!\leq\! j\!<\! \mathrm{i}_{\mathrm{cur}}^{x,y}$, $\mathrm{BC}_j\!=\!(x_{j+1},y_{j+1})[\mathrm{BC}]$, for some $(x_{j+1},y_{j+1})\!\in\! \mathbb{X}_1$; in particular, 
\begin{enumerate}[(i)]
\item $\chi(x,y)\!\restriction\! \mathrm{BH}(\mathrm{i}_{\mathrm{cur}}^{x,y}-1)=(x,y)[\mathrm{BC}]$;
 
\item 
$\mathrm{BC}_0=\mathrm{BC}_\emptyset$;
\end{enumerate}   

\item for $1\leq j<\mathrm{i}_{\mathrm{cur}}^{x,y}$, $(x_j,y_j)\triangleright(x_{j+1},y_{j+1})$; henceforth, for any $i$ we use  $(x,y)\!\!\restriction\!\!_{\mathrm{H}}^i$ to denote $(x_i,y_i)$, e.g.  $(x,y)\!=\!(x,y)\!\!\restriction\!\!_{\mathrm{H}}^{\mathrm{i}_{\mathrm{cur}}^{x,y}}$;  
\item for $\mathrm{i}_{\mathrm{cur}}^{x,y}\leq j\leq m-1$, $\mathrm{BC}_j=((*,*),\ldots,(*,*))$. 
\hfill $\divideontimes$
\end{itemize}  
\end{definition}

For convenience, the denotation ``$\chi(x,y)\!\restriction\! \mathrm{BH}$'' coincides with that of type label, a notion which will be introduced later (cf. Definition \ref{type-label}). By default,  a board history is a valid board history. The (actual) \textit{length} of the board history of $(x,y)$ is $\chi(x,y)\!\!\restriction\!\! \mathrm{bc}$, i.e. $\mathrm{i}_{\mathrm{cur}}^{x,y}-1$. 
Say that $\chi(x,y)\!\!\restriction\!\! \mathrm{BH}$ is a \textit{void board history}, 
if all board configurations in $\chi(x,y)\!\!\restriction\!\! \mathrm{BH}$ are empty, i.e.  $\chi(x,y)\!\!\restriction\!\! \mathrm{bc}=0$. 
 We use $\chi(x,y)\!\!\restriction\!\! \mathrm{BH}\!\ccirc\! \mathrm{BC}_l$, where $\mathrm{BC}_l$ is a nonempty board configuration (i.e. not a sequence of $((*,*),\!\ldots,\!(*,*))$), to denote that the board history of $(x,y)$ is extended by one more board configuration, i.e. the first  $((*,*),\!\ldots,\!(*,*))$ in the tail is replaced by $\mathrm{BC}_l$. For convenience, we also use $\chi(x,y)\!\restriction\! \mathrm{BH}$ to denote the board history of $(x,y)$ wherein all the empty board configurations in the tail are removed. Hence such board histories may have less than $m$ elements.

A board history $\chi(x_j,y_j)\!\!\restriction\!\! \mathrm{BH}$ can legally evolve from another board history $\chi(x_i,y_i)\!\!\restriction\!\! \mathrm{BH}$ in one step, denoted $(x_i,y_i)\!\xrightarrow[\mathrm{BC}]{1}\! (x_j,y_j)$, if and only if
\begin{enumerate}[(1)]
\item $\mathrm{i}_{\mathrm{cur}}^{x_j,y_j}=\mathrm{i}_{\mathrm{cur}}^{x_i,y_i}$+1; and 
 
\item 
$\chi(x_i,y_i)\!\restriction\! \mathrm{BH}\sqsubseteq\chi(x_j,y_j)\!\restriction\! \mathrm{BH}$; and

\item $(x_i,y_i)\triangleright (x_j,y_j)$.

\end{enumerate}
If $\chi(x_p,y_p)\!\!\restriction\!\! \mathrm{BH}$ can evolve from  $\chi(x_i,y_i)\!\!\restriction\!\! \mathrm{BH}$ using $l$ steps, and   $(x_p,y_p)\xrightarrow[\mathrm{BC}]{1} (x_j,y_j)$, then $\chi(x_j,y_j)\!\!\restriction\!\! \mathrm{BH}$ can evolve from  $\chi(x_i,y_i)\!\!\restriction\!\! \mathrm{BH}$ using $l+1$ steps.  
  We use  $(x_i,y_i)\xrightarrow[\mathrm{BC}]{*} (x_j,y_j)$ to denote that   $\chi(x_j,y_j)\!\!\restriction\!\! \mathrm{BH}$ can evolve from  $\chi(x_i,y_i)\!\!\restriction\!\! \mathrm{BH}$ using  $\mathrm{i}_{\mathrm{cur}}^{x_j,y_j}-\mathrm{i}_{\mathrm{cur}}^{x_i,y_i}$ steps.  

For valid configurations $(x_i,y_i)[\mathrm{BC}]$ and $(x_j,y_j)[\mathrm{BC}]$, if $(x_i,y_i)\xrightarrow[\mathrm{BC}]{*} (x_j,y_j)\\\land (x_i,y_i)[\mathrm{BC}]\sqsubseteq 
\!(x_j,y_j)[\mathrm{BC}]$, or $(x_j,y_j)\xrightarrow[\mathrm{BC}]{*} (x_i,y_i)\land (x_j,y_j)[\mathrm{BC}]\sqsubseteq\! (x_i,y_i)[\mathrm{BC}]$, we say that the board histories are in \textit{continuity} (in the default direction\footnote{That is,  $\chi(x_j,y_j)\!\!\restriction\!\! \mathrm{BH}$ can be evolved from $\chi(x_i,y_i)\!\!\restriction\!\! \mathrm{BH}$ if $\chi(x_j,y_j)\!\!\restriction\!\! \mathrm{bc}>\chi(x_i,y_i)\!\!\restriction\!\! \mathrm{bc}$, and vice versa. Moreover, we can also define a notion ``$\sqsubset$'' akin to $\subset$ and give an alternative definition for continuity based on it.}), written $(x_i,y_i)\xrightarrow[\mathrm{BC}]{con.} (x_j,y_j)$ or $(x_j,y_j)\xrightarrow[\mathrm{BC}]{con.} (x_i,y_i)$ respectively. 
Note that only valid board histories can be in continuity. By the transitivity of the binary relations $\xrightarrow[\mathrm{BC}]{*}$ and $\subseteq$, the binary relation $\xrightarrow[\mathrm{BC}]{con.}$ is also transitive. 

The \textit{initial segment of board history} $\chi(x,y)\!\restriction\! \mathrm{BH}$ of length $\ell$, written by $\chi(x,y)\!\restriction\! \mathrm{IBH}[\ell]$, is composed of the first $\ell+1$ board configurations in $\chi(x,y)\!\restriction\! \mathrm{BH}$. We use $(x,y)_{[\ell]}\xrightarrow[\mathrm{BC}]{con.} (x^\prime,y)$ to denote that $\chi(x,y)\!\restriction\! \mathrm{IBH}[\ell]$ can evolve to and is in continuity with  $\chi(x^\prime,y)\!\restriction\! \mathrm{BH}$.

In the following we introduce some constructions that explain the numbers $cl_i^*$ in \eqref{def-num-cl_i}.  
We can create a row of vertices that is isomorphic to $\mathfrak{B}_{k,m}^*[\mathbb{X}_i^*]|\langle \leq \rangle$.   
We denote such a tuple by $\mathfrak{L}_i$. Note that the elements of $\mathfrak{L}_i$ have the same second coordinate, whereas those of $\mathfrak{B}_{k,m}^*[\mathbb{X}_i^*]|\langle \leq \rangle$  have $k$ different second coordinates.  
Now we make a larger tuple of vertices in a row, denoted $\mathfrak{L}_i^+$, by concatenating two copies of $\mathfrak{L}_i$ with $k-2$ more lists $\mathfrak{L}_i^j$, where $1\leq j\leq k-2$ and the productions are Cartesian products. \textbf{Note that} $|\mathfrak{L}_i^+|=cl_i^*$. \label{page-def-L-lists-for-cl} That is,  
we want to use a number modulo $cl_i^*$ to encode game boards with up to $k-2$ ``pebbled'' vertices over the $i$-th abstraction.  And we shall see what the number means in Definition \ref{type-label}.  
Note that, $\mathfrak{L}_i^+$ is only a small piece of an $\mathpzc{U}_i^*$-tuple in $\mathfrak{B}_{k,m}^*$.  

We define a function $\mathrm{RngNum}(\cdot,\cdot)$ as the following.

For any $1\leq l\leq m$ and any $(x_i,y_i)\in \mathbb{X}_1$, 
\begin{equation}
\mathrm{RngNum}(x_i^\flat,l):=
\left\lfloor \frac{[x_i^\flat]_{l}\mbox{ mod }\mathpzc{U}_{l}^*}{\frac{1}{3}\mathpzc{U}_{l}^*} \right\rfloor-1. 
\end{equation}


Note that there are exactly $3$ different values for $\mathrm{RngNum}$. It explains \textit{the constant ``$3$''} that appears in (\ref{def-hbar-i}) \label{page-constant5}

Because $[x_i^\flat]_{l}\mbox{ mod }\mathpzc{U}_{l}^*=0$ if $l<\mathrm{idx}(x_i,y_i)$, we immediately have the following observation. 
\begin{lemma}\label{rngnum-is_-1}
If $\mathrm{idx}(x_i,y_i)>l$, then 
$$\mathrm{RngNum}(x_i^\flat,l)=-1.$$ 
\end{lemma}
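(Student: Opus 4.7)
The plan is to reduce the claim to the purely numerical statement $[x_i^\flat]_l \bmod \mathpzc{U}_l^* = 0$, from which $\mathrm{RngNum}(x_i^\flat,l) = \lfloor 0/(\tfrac{1}{3}\mathpzc{U}_l^*)\rfloor - 1 = -1$ follows immediately from the definition. So everything reduces to verifying that $\mathpzc{U}_l^*$ divides $[x_i^\flat]_l$ under the hypothesis $\mathrm{idx}(x_i,y_i) > l$.

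First I would use Lemma~\ref{i=0theni-1=0}: since $(x_i^\flat,y_i) \in \mathbb{X}_{\mathrm{idx}(x_i,y_i)}^*$ and $l+1 \leq \mathrm{idx}(x_i,y_i)$, we get $(x_i^\flat,y_i) \in \mathbb{X}_{l+1}^*$, so by the defining equation~\eqref{X_i} of $\mathbb{X}_{l+1}^*$,
\begin{equation*}
x_i^\flat \;=\; \llparenthesis x_i^\flat \rrparenthesis_{l+1} \;=\; [x_i^\flat]_{l+1}\,\beta_{m-l-1}^{m-1} + \tfrac{1}{2}\sum_{1<j\leq l+1}\beta_{m-j}^{m-1}.
\end{equation*}
Next I combine two ingredients from the definitions in~\eqref{gamma_0-star}--\eqref{gamma_i-gamma_i-1}: the relation $\beta_{m-l-1}^{m-1} = \beta_{m-l-1}^{m-l}\cdot\beta_{m-l}^{m-1}$ and the key identity $\beta_{m-l-1}^{m-l} = \gamma_{m-l}^*/\gamma_{m-l-1}^* = 2^l\mathpzc{U}_l^*$. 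Splitting off the $j=l+1$ and $j=l$ terms of the sum yields
\begin{equation*}
x_i^\flat \;=\; \mathpzc{U}_l^*\,\beta_{m-l}^{m-1}\bigl(2^l[x_i^\flat]_{l+1} + 2^{l-1}\bigr) \;+\; \tfrac{1}{2}\beta_{m-l}^{m-1} \;+\; \tfrac{1}{2}\sum_{1<j<l}\beta_{m-j}^{m-1}.
\end{equation*}
The first summand is a genuine multiple of $\mathpzc{U}_l^*\cdot\beta_{m-l}^{m-1}$, so it contributes a multiple of $\mathpzc{U}_l^*$ to $[x_i^\flat]_l = \lfloor x_i^\flat/\beta_{m-l}^{m-1}\rfloor$.

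The only (mild) obstacle is to check that the floor picks up nothing from the remaining two terms; i.e.\ that their sum lies in $[\tfrac{1}{2}\beta_{m-l}^{m-1},\,\beta_{m-l}^{m-1})$. The leading piece is exactly $\tfrac{1}{2}\beta_{m-l}^{m-1}$, so only the tail $\tfrac{1}{2}\sum_{1<j<l}\beta_{m-j}^{m-1}$ needs a bound. For this I use that $\beta_{m-j}^{m-1}/\beta_{m-l}^{m-1} = 1/\beta_{m-l}^{m-j}$, and each step ratio satisfies $\beta_{m-l}^{m-l+1} = 2^{l-1}\mathpzc{U}_{l-1}^*$ (and similarly for deeper ones), which is huge compared to $l$. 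Hence the tail is dominated by a rapidly decaying geometric-style sum, giving $\tfrac{1}{2}\sum_{1<j<l}\beta_{m-j}^{m-1} < \tfrac{l-2}{2}\beta_{m-l+1}^{m-1} \ll \tfrac{1}{2}\beta_{m-l}^{m-1}$. So the sum of the two remaining pieces strictly exceeds $\tfrac{1}{2}\beta_{m-l}^{m-1}$ but remains strictly below $\beta_{m-l}^{m-1}$, the floor contributes zero, and $[x_i^\flat]_l = \mathpzc{U}_l^*(2^l[x_i^\flat]_{l+1} + 2^{l-1})$. For $l\geq 1$ this is divisible by $\mathpzc{U}_l^*$, finishing the proof.
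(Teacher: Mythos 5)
Your proposal is correct and fills in exactly the computation that the paper leaves implicit (the paper proves this lemma by simply asserting $[x_i^\flat]_l\bmod\mathpzc{U}_l^*=0$ when $l<\mathrm{idx}(x_i,y_i)$). You could shorten the floor/tail-bound step by reusing the identity $[x]_l=[x]_{l+1}\beta_{m-l-1}^{m-l}+\tfrac{1}{2}\beta_{m-l-1}^{m-l}=\mathpzc{U}_l^*\bigl(2^l[x]_{l+1}+2^{l-1}\bigr)$ already derived in the proof of Lemma~\ref{cm=depth}, which makes the divisibility immediate.
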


Likewise, for any $(x_i,y_i)\in \mathbb{X}_1^*$, we can define $\mathrm{RngNum}$ in the similar way except that the superscript ``$\flat$'' should be removed from the definition. 

Assume that, like $(x_i,y_j)$, $(x_j,y_j)$ is also in $\mathbb{X}_1$ and that  $min\{\mathrm{idx}(x_i^\flat,y_i),\\\mathrm{idx}(x_j^\flat,y_j)\}=t$ and $y_i\neq y_j$, we define a function $\mathrm{sgn}: \mathbb{X}_1^* \times \mathbb{X}_1^*\mapsto\{0,1\}$ as follows. This function will be used in Definition \ref{iterative-expansion}. Note that, when $y_i,y_j\in [1,k-2]$, the vaule of $\mathrm{sgn}((x_i^\flat,y_i),(x_j^\flat,y_j))$ is meaningless since it will not be used.  From here on we assume that either $y_i\equiv 0$ mod $k-1$ or $y_j\equiv 0$ mod $k-1$. 
Let $\mathrm{sgn}((x_i^\flat,y_i),(x_j^\flat,y_j))=\mathrm{sgn}((x_j^\flat,y_j),(x_i^\flat,y_i))$, where  
\begin{itemize}
\item  $\mathrm{idx}(x_i^\flat,y_i)=\mathrm{idx}(x_j^\flat,y_j)=t$:\\ 
$\mathrm{sgn}((x_i^\flat,y_i),(x_j^\flat,y_j))=0$. 

\item $\mathrm{idx}(x_i^\flat,y_i)>t=\mathrm{idx}(x_j^\flat,y_j)$ (it is symmetric when $\mathrm{idx}(x_j^\flat,y_j)>t$):\\  
$\mathrm{sgn}((x_i^\flat,y_i),(x_j^\flat,y_j))=1$ if and only if 
one of the following holds
\begin{itemize}
\item $k-1=y_i>y_j>0$ and $\mathrm{RngNum}(x_j^\flat,t)=0$; 

\item $k-1>y_j>y_i=0$ and $\mathrm{RngNum}(x_j^\flat,t)=1$. 

\end{itemize}

\end{itemize}

The following observation is direct. 
\begin{lemma}\label{sgn-equal-0}
Suppose that $min\{\mathrm{idx}(x_i^\flat,y_i),\mathrm{idx}(x_j^\flat,y_j)\}=t$  and $\mathrm{RngNum}(x_i^\flat,t)=\mathrm{RngNum}(x_j^\flat,t)=-1$. Then $$\mathrm{sgn}((x_i^\flat,y_i),(x_j^\flat,y_j))=0.$$  
\end{lemma}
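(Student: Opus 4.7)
The plan is to prove this by a direct case analysis on the definition of $\mathrm{sgn}$, since the hypothesis on $\mathrm{RngNum}$ values essentially kills every case that could produce the value $1$. The definition of $\mathrm{sgn}((x_i^\flat,y_i),(x_j^\flat,y_j))$ splits into two regimes based on whether the two indices are both equal to $t$ or exactly one exceeds $t$, so I will treat these two regimes in turn.

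First I will dispose of the symmetric case where $\mathrm{idx}(x_i^\flat,y_i)=\mathrm{idx}(x_j^\flat,y_j)=t$: by the first bullet of the definition of $\mathrm{sgn}$, the value is $0$ unconditionally, so the hypothesis on $\mathrm{RngNum}$ is not even needed here. Next I will handle the asymmetric case, say $\mathrm{idx}(x_i^\flat,y_i)>t=\mathrm{idx}(x_j^\flat,y_j)$ (the other direction being symmetric). In this case $\mathrm{sgn}=1$ can hold only if one of the two listed subcases applies, and each such subcase explicitly requires $\mathrm{RngNum}(x_j^\flat,t)\in\{0,1\}$; but by hypothesis $\mathrm{RngNum}(x_j^\flat,t)=-1$, so neither subcase is triggered and $\mathrm{sgn}=0$.

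One small sanity check I would include is a remark (not a subproof) that the hypothesis is internally consistent: by Lemma \ref{rngnum-is_-1}, whenever $\mathrm{idx}(x_i^\flat,y_i)>t$ we automatically have $\mathrm{RngNum}(x_i^\flat,t)=-1$, so the stated hypothesis on $\mathrm{RngNum}(x_i^\flat,t)$ is only a genuine restriction in the case $\mathrm{idx}(x_i^\flat,y_i)=t$ (and symmetrically for $j$). This ensures the lemma is not vacuous and clarifies which assumption is doing work in each branch.

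There is essentially no obstacle here; the lemma is a direct unpacking of definitions and its role is bookkeeping, namely to record that when both endpoints fall in the ``middle third'' of their $\mathpzc{U}_\ell^*$-tuples (the $\mathrm{RngNum}=-1$ regime), the $\mathrm{sgn}$-contribution vanishes. The only thing to be careful about is that the definition of $\mathrm{sgn}$ is given only under the standing assumption that either $y_i\equiv 0 \pmod{k-1}$ or $y_j\equiv 0 \pmod{k-1}$, so I will inherit that assumption silently and not attempt to prove anything outside that regime.
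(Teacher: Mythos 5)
Your proof is correct and matches what the paper intends: the paper gives no explicit proof, just the remark that ``the following observation is direct,'' and your case split on the definition of $\mathrm{sgn}$ is exactly that direct unpacking. The equal-index case gives $\mathrm{sgn}=0$ unconditionally, and the unequal-index case can only return $1$ when $\mathrm{RngNum}$ of the lower-index endpoint is $0$ or $1$, which the hypothesis excludes (with the symmetry clause of the definition handling which of $i,j$ is the lower one).
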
 

By definition, we know that, for any $(x,y)$ where $0<y<k-1$, $\mathrm{idx}(x,y)=i$  and $\mathrm{RngNum}(x,i)\neq -1$, $(x,y)$ is either not adjacent to $(a,k-1)$ or not adjacent to $(b,0)$ where $\mathrm{idx}(a,k-1)>i$ and $\mathrm{idx}(b,0)>i$. Note that in this case $\mathrm{RngNum}(a,i)=\mathrm{RngNum}(b,i)=-1$. 


Let $\mathbb{X}_1^\downarrow:=\{0,\ldots,\gamma_{m-1}^*-1\}\times \{1,\ldots,k-2\}$.

In this paper, for a natural number $n$, we use $(n)_{2;\binom{k-2}{2}}$ to denote the binary representation of $n$, a $0\textnormal{-}1$ string of length exactly $\binom{k-2}{2}$.

\label{page-SW-fun} If $\mathrm{idx}(x_i^\flat,y_i)=\mathrm{idx}(x_j^\flat,y_j)=t$, 
we define a function $\mathrm{SW}: \mathbb{X}_1^\downarrow\times \mathbb{X}_1^\downarrow\mapsto \{0,1\}^{\binom{k-2}{2}}$. 
For any $(x,y)\in\mathbb{X}_t^*-\mathbb{X}_{t+1}^*$, $g(x):=0$ if 
 $[x]_{t}\mbox{ mod }\mathpzc{U}_{t}^*\!<\!\frac{1}{3}\mathpzc{U}_{t}^*$;  $g(x):=\lfloor[x]_{t}/\eta_t^*\rfloor\mbox{ mod } 2^{\binom{k-2}{2}}$, otherwise.
Let 
\begin{equation}\label{def-SW-func}
\mathrm{SW}\left((x_i^\flat,y_i),(x_j^\flat,y_j)\right):=\left(\left|
g(x_i^\flat)-g(x_j^\flat)\right|\right)_{2;\binom{k-2}{2}}
\end{equation}
$\mathrm{SW}\left((x_i^\flat,y_i),(x_j^\flat,y_j)\right)$ is undefined if  $\mathrm{idx}(x_i^\flat,y_i)\neq\mathrm{idx}(x_j^\flat,y_j)$.

For any $(x,y),(x^\prime,y^\prime)\in\mathbb{X}_1^*$ where $0<y<y^\prime<k-1$ and $\mathrm{idx}(x,y)=\mathrm{idx}(x^\prime,y^\prime)$, we use  a $(k-2)$-by-$(k-2)$ symmetric $0$-$1$ matrix, say $M=(a_{i,j})$, to encode $\mathrm{SW}((x,y),(x^\prime,y^\prime))$ \textit{succinctly}. That is, only the entry $a_{i,j}$ where $k-2\geq j> k-1-i \geq 1$ has a valid value. 
Let $\hat{q}(y,y^\prime):=y^\prime-y+\Sigma_{s=0}^{y-2} (k-3-s)$ if $y<y^\prime$; undefined, otherwise. We use $a_{k-1-y,y^\prime}$ to denote the $(\hat{q}(y,y^\prime)-1)$-th bit of $\mathrm{SW}((x,y),(x^\prime,y^\prime))$.  In other words, 
we use $a_{k-2,2}$ to denote the $0$-th bit of $\mathrm{SW}((x,y),(x^\prime,y^\prime))$, and $a_{k-2,3}$ for the $1$-th bit, $\ldots$, and $a_{k-3,3}$ for the $(k-3)$-th bit, and so on. Note that $a_{2,k-2}$ is the leftmost bit of $\mathrm{SW}((x,y),(x^\prime,y^\prime))$. 
We use $a_{k-1-y,y^\prime}\!\in\!\{0,1\}$ to tell whether the edge between $(x,y)$ and $(x^{\prime},y^\prime)$ should be ``switched  off'' or not (cf. Definition \ref{iterative-expansion}, 2 (e)).  


\begin{example}\label{example-0-1-matrix-SW}
There are several ways to encode $\mathrm{SW}(\cdot,\cdot)$. Perhaps our choice is not the most standard one. Hence we give an example in the following.

Assume that $k=7$ and that the pair of vertices $(x,2),(x^\prime,4)\in \mathbb{X}_1^*$, where $\mathrm{idx}(x,2)=\mathrm{idx}(x^\prime,4)$, satisfy that
$$\mathrm{SW}((x,2),(x^\prime,4))=1011100011.$$
 
The matrix $(a_{ij})$ is shown in the following.

\begin{equation*}
\left(\begin{array}{ccccc}
\times &\times &\times &\times &\times\\
\times &\times &\times &\times &1\\
\times &\times &\times &1 &0\\
\times &\times &0 &\textbf{\textit{1}} &1\\
\times &1 &1 &0 &0\\
\end{array}\right) 
\end{equation*}

Note that, the topmost ``$1$'' in the position of the second row from the top and the fifth column from the left is the value of $a_{2,5}$. 

By definition,  
$\hat{q}(2,4)=6$. From the matrix, we can see that the $5$-th bit of $\mathrm{SW}((x,2),(x^\prime,4))$ is $1$. 

Hence $\mathbf{BIT}(\mathrm{SW}((x,2),(x^\prime,4)),\hat{q}(2,4))=1$.  
\hfill\ensuremath{\divideontimes}
\end{example}

We will define the structures $\mathfrak{A}_{k,m}^*$ and $\mathfrak{B}_{k,m}^*$ in Definition \ref{B_km-star}. We will use $E_*^A$ to denote the set of edges of $\mathfrak{A}_{k,m}^*$ and $E_*^B$ to denote that of $\mathfrak{B}_{k,m}^*$. Moreover, we use 
 $E_*$ to denote the set of edges of either $\mathfrak{A}_{k,m}^*$ or 
 $\mathfrak{B}_{k,m}^*$.

We associate every vertex of $\mathfrak{B}_{k,m}^*$ with a label, called congruence label, that is related to the coordinate congruence number in the $i$-th abstraction where $i$ is the index of this vertex. Assume that $m\geq k$.
\begin{definition}\label{def-congruence-labels-B}
The \textbf{\textit{set} of congruence labels} of $\mathfrak{B}_{k,m}^*$, denoted by $\mathbf{Cl}_i$, are defined as  follows. Note that we use underline to stand for a \textit{string}.  For example, $\underline{x}$ stands for some sort of encoding of $x$. 
\begin{equation}\label{def-Cl_m}
\mathbf{Cl}_m :=\displaystyle\{\underline{n,j;m;\mathsf{R};\emptyset}\mid n\in[k-1]; j\in [k];\mathsf{R}\in\{-1,0,1\}\}; 
\end{equation} 
For $1\!<i\leq\!m$,
\begin{multline}\label{def-Cl_i}
\mathbf{Cl}_{i-1}:=\{\underline{n,j;i-1;\mathsf{R};M} \1 n\in[k-1];j\in [k];\\\mathsf{R}\in\{-1,0,1\};M\subseteq \mathbf{Cl}_i\}\cup \mathbf{Cl}_i.
\end{multline}
\hfill \ensuremath{\divideontimes}
\end{definition}

\begin{remark}\label{remark-congruence-labels}
 In  (\ref{def-Cl_m}), $n$ is intended to denote the coordinate congruence number of a vertex in the $m$-th abstraction; while $j$ is the second coordinate of the vertex and $m$ is its index. $\mathsf{R}$ is intended to denote a value for $\mathrm{RngNum}(\cdot,\cdot)$. 
In a moment we shall introduce a related notion, i.e. a congruence label associating with a vertex, cf. Definition \ref{congruence-label} and Definition \ref{type-label-star}. Every vertex is associated with a congruence label. 
For example, in $\mathfrak{B}_{k,m}^*$, if a vertex $(x,y)$ has a congruence label of $\underline{1,2;m;-1;\emptyset}$, it implies that $\mathrm{idx}(x,y)=m$, $y=2$, $\mathbf{cc}([x]_m,y)=1$ and $\mathrm{RngNum}(x,m)=-1$.  
In (\ref{def-Cl_i}), ``$i-1$'' denotes the index of a  vertex, and we use $M$ to denote that the vertex is \textit{not} adjacent to any vertex whose congruence label is in $M$. Note that, the index of any vertex, whose congruence label is in $\mathbf{Cl}_i$, is greater than or equal to $i$.  
\hfill \ensuremath{\divideontimes}
\end{remark}

Note that the size of $\mathbf{Cl}_i$ is completely determined by $k$, which is \textit{not} equivalent to $cl_i^*$ or $cl_i$. 

In the following, for the sake of conciseness, in the first place we define  $\mathfrak{B}_{k,m}$ (cf. Definition \ref{iterative-expansion}), and closely related concepts ``type label'' and ``congruence label'' for elements in $\mathbb{X}_1$. 
But the readers can choose to read and understand Definition \ref{type-label-star} and Definition \ref{B_km-star} first, which are relatively more simple, and come back here only when necessary. Although reading them will require an understanding of Definition \ref{type-label}\textapprox Definition \ref{iterative-expansion}, the readers can ignore the parts related to board histories at the moment. After understanding these notions and structures, the readers should continue to read and understand Definition \ref{type-label}\textapprox Definition \ref{iterative-expansion}. Note that we define the congruence label and type label of a vertex, as well as the (edges of) structure $\mathfrak{B}_{k,m}^*$,   \textit{simultaneously}. Cf. page \pageref{page-def-L-lists-for-cl} for the definition of $\mathfrak{L}_{i}^+$. 

\begin{definition} \label{type-label}
For any $(x,y)\in \mathbb{X}_1$, the\textbf{ type label} of $(x,y)$, denoted by
 $\chi(x,y)$, is defined as follows. 

Assume that $\mathrm{idx}(x^\flat,y)=i$. Let $\chi(x,y):=\underline{\underline{\mathfrak{a},y;i;S};\mathrm{BH};\mathrm{bc};\Omega}$ where \\[-20pt]
\begin{itemize}
\item 
\begin{equation}  \label{i:b}
\mathfrak{a}=\mathbf{cc}([x^\flat]_i,y);\\[-7pt]
\end{equation}

\item  $\mathrm{BH}$ (board history) is the $j^{\prime}$-th element of $\mathfrak{J}^+$ where
\begin{equation}\label{def-value-board-histroy}
j^{\prime}=\lfloor x/(\gamma_{m-1}^*\times k)\rfloor \mbox{ mod } bh^\#;\\[-12pt]
\end{equation}

\item  $\mathrm{bc}$ is the (actual) length of the board history of $(x,y)$ where 
\begin{equation}\label{def-length-history}
\mathrm{bc}=\lfloor x/(\gamma_{m-1}^*\times k\times bh^\#)\rfloor \mbox{ mod } m;
\end{equation}

\item if $i=m$ then 
$S=\Omega=\emptyset$; otherwise (i.e. $1\leq i<m$), 

\item  $\Omega$ is defined as follows:
\begin{itemize}
\item  $y\in\{0,k-1\}$:  $\Omega=\emptyset$;
 
\item  $y\in [1,k-2]$: $\Omega =\{(u^\flat,v)\in\mathbb{X}_i^*\1 (u,v)\in\mathbb{X}_i; v\neq y; [x^\flat]_i\equiv  [u^\flat]_i\equiv 0 \hspace{3pt}(\mbox{mod } k-1); \left((\llparenthesis u^\flat\rrparenthesis_{i+1},v),(\llparenthesis x^\flat\rrparenthesis_{i+1},y)\right)\!\notin\! E_*;  
v\in\{0,k-1\}\rightarrow (u^\flat,v)\in\mathbb{X}_{i+1}^*\land\mathrm{sgn}((\llparenthesis x^\flat\rrparenthesis_{i+1},y),(\llparenthesis u^\flat\rrparenthesis_{i+1},v))=0\}$; 

\end{itemize}

\item 
 $S\subseteq \mathbf{Cl}_{i+1}$ is determined by the $j$-th element of $\mathfrak{L}_{i+1}^+$ where 
\begin{itemize}
\item 
\begin{equation}
 j=\lfloor [x^\flat]_i/(k-1) \rfloor \mbox{ mod } cl_{i+1}^*;
\end{equation}
\item if $\!j\!<\!\gamma_{m-i-1}^*$ or $j\geq cl_{i+1}^*\!-\!\gamma_{m-i-1}^*$ then $S\!:=\!\emptyset$; otherwise, 
assume that this $j$-th element is a $d$-tuple (note that  $1\!\leq\! d\!\leq\! k\!-\!2$) $((u_1,\!v_1),\cdots,(u_d,\!v_d))\in(\mathbb{X}_{i+1}^*)^d$, then
\begin{equation}
S:=\displaystyle\bigcup_{1\leq i\leq d}\{\mathbf{cl}(u_i,v_i) \}.
\end{equation} 

\end{itemize}


\end{itemize}
\hfill\ensuremath{\divideontimes}
\end{definition}




In Definition \ref{iterative-expansion} where $\mathfrak{B}_{k,m}$ is defined, we shall see that type labels determine whether a vertex having some type lable is \textit{NOT} adjacent to another vertex that has another type label.  
Henceforth, we use $\chi(x,y)\!\!\restriction\!\! S$  to denote $S$ in $\chi(x,y)$, and  similarly for $\chi(x,y)\!\!\restriction\!\! \Omega$,  $\chi(x,y)\!\!\restriction\!\!\mathrm{BH}$, and $\chi(x,y)\!\!\restriction\!\!\mathrm{bc}$. Note that, \eqref{def-value-board-histroy} says that, in a row of the structure, every successive sequence of vertices of length $\gamma_{m-1}^*\times k$ have been associated with the same board history and $j^\prime$ determines the board history associated with $(x,y)$; \eqref{def-length-history} tells us about the length of board history that is associated with $(x,y)$. Moreover, in Definition \ref{iterative-expansion} we shall see how edges are forbbiden based on $\chi(x,y)\!\!\restriction\!\! \Omega$,  $\chi(x,y)\!\!\restriction\!\! S$ and $\chi(x,y)\!\!\restriction\!\! \mathrm{BH}$.

\begin{definition}\label{congruence-label}
In the structure $\mathfrak{B}_{k,m}$, for any $(x,y)\in \mathbb{X}_1$ and any $1\leq i\leq m$, let $\mathrm{idx}(x^\flat,y)=i$.  The \textit{congruence label}  of $(x,y)$, denoted by
 $\mathbf{cl}(x,y)$, is defined as the following:
\begin{equation}\label{upper-congruence-label}
\mathbf{cl}(x,y):=\underline{\mathbf{cc}([x^\flat]_i,y),y;i;\mathsf{R}_i;\chi(x,y)\!\!\restriction\!\! S} \mbox{ where }\mathsf{R}_i=\mathrm{RngNum}(x^\flat,i).
\end{equation}
\hfill\ensuremath{\divideontimes}
\end{definition}
The readers can cf. Example \ref{example-congr-label} for an explanation on what a congruence label is meant in some structure (i.e. $\mathfrak{B}_{k,m}^*$) that will be defined soon. 

We use $(u,v)\twoheadrightarrow (x,y)$ to denote the formula $\psi$, where 
\begin{equation}
\psi=(u,v)\xrightarrow[\mathrm{BC}]{con.}(x,y)\land (u^\flat,v)\in (x,y)[\mathrm{BC}]
\end{equation}

Note that $\twoheadrightarrow$ is a strict preorder, i.e. a transitive relation. 

We use $(x_j,y_j)\rightsquigarrow (x_i,y_i)$ to denote the formula $\varphi$, where \\[-15pt]
\begin{equation}\label{varphi-for-rightsquiggarrow}
\varphi=(x_i,y_i)\twoheadrightarrow (x_j,y_j) 
\land \mathbf{cl}(x_i,y_i)\in\chi(x_j,y_j)\!\!\restriction\!\!S \land (x_j^\flat,y_j)\notin (x_i,y_i)[\mathrm{BC}]
\end{equation}


Now we introduce the pair of important structures, a sort of structures with (temporal) ``structures''.
\begin{definition} \label{iterative-expansion}
$\mathfrak{B}_{k,m}$  is a $\langle E,\leq\rangle$-\textbf{structure}, i.e. ordered graph, over the universe $\mathbb{X}_1$ and the linear order is defined as the follows. 
For any vertices $(x_i,y_i)$ and $(x_j,y_j)$ of $\mathfrak{B}_{k,m}$, $(x_i,y_i)<(x_j,y_j)$ if $y_i<y_j$. Suppose that $y_i=y_j$. 
For any vertex $(x,y)$ of $\mathfrak{B}_{k,m}$, the value $\lfloor x/(\gamma_{m-1}^*\times k)\rfloor$ mod $bh^\#$ is supposed to determine the board history (cf. Definition \ref{type-label}, (\ref{def-value-board-histroy})) associated with the vertex $(x,y)$.   
Hence, we need an \textit{ordering of different board histories}.\label{page-history-order} 
 Here, we define the ordering as the following: firstly we can define a lexicographic ordering on the game configurations: $\mathrm{BC}_1$ is less than $\mathrm{BC}_2$ if $|\mathrm{BC}_1|<|\mathrm{BC}_2|$ (hence the empty game board is the minimal element in the order); now for any invalid board history $h_1$ and any valid one $h_2$, $h_1<h_2$;  
for valid board histories in $N_1\times \cdots\times N_m$ where $N_i$ stands for the set of game configurations, the linear order is defined by the lexicographic ordering on the Cartesian product $N_1\times \ldots\times N_m$. Note that we only take the \textit{actual} board histories into account. Hence a shorter board history is ahead of a longer board history.  

Then, on condition that $(x_i,y_i)$ and $(x_j,y_j)$ have different board histories, $(x_i,y_i)<(x_j,y_j)$ if $\chi(x_i,y_i)\!\!\restriction\!\!\mathrm{BH}<\chi(x_j,y_j)\!\!\restriction\!\!\mathrm{BH}$. 

Suppose that $\chi(x_i,y_i)\!\!\restriction\!\!\mathrm{BH}=\chi(x_j,y_j)\!\!\restriction\!\!\mathrm{BH}$,\footnote{In other words, $\lfloor x_i/(\gamma_{m-1}^*\times k)\rfloor \mbox{ mod } bh^\#=\lfloor x_j/(\gamma_{m-1}^*\times k)\rfloor \mbox{ mod } bh^\#$.} then $(x_i,y_i)\leq (x_j,y_j)$ if $x_i$ mod $\gamma_{m-1}^*\times k$ is no more than  $x_j$ mod $\gamma_{m-1}^*\times k$.  

The edge set $E^B$ of $\mathfrak{B}_{k,m}$ is defined as follows.  

For any $1\leq t\leq m$ and  for any vertices $(x_i,y_i)$ and $(x_j,y_j)$ 
\begin{enumerate}[1]
\item $(x_i,y_i)$ is not adjacent to $(x_j,y_j)$ if $y_i=y_j$.\\[5pt] \textit{In the following} assume that $y_i\neq y_j$. 

\item If $\mathrm{min}\{\mathrm{idx}(x_i,y_i), \mathrm{idx}(x_j,y_j)\}=t$, then $((x_i,y_i),(x_j,y_j))\in E^B$ if and only if all the following conditions hold: 
\begin{enumerate}
\item $((x_i^\flat,y_i)(x_j^\flat,y_j))\in E_*^B$ (cf. Definition \ref{B_km-star});

\item Either $(x_i,y_i)\twoheadrightarrow (x_j,y_j)$, or $(x_j,y_j)\twoheadrightarrow (x_i,y_i)$; 

\item $(x_j,y_j)\!\notin\! \{(u,v)\!\in\! \mathbb{X}_1 \1  (x_i,y_i)\rightsquigarrow (u,v)\}$, and\\ $(x_j,y_j)\!\notin\! \{(u,v)\!\in\! \mathbb{X}_1 \1 (u,v)\rightsquigarrow (x_i,y_i)\}$;

\item $(x_j^\flat,y_j)\notin \chi(x_i,y_i)\!\!\restriction\!\! \Omega$ and $(x_i^\flat,y_i)\notin \chi(x_j,y_j)\!\!\restriction\!\! \Omega$;  


\item \underline{If either $y_i\in \{0,k-1\}$ or $y_j\in \{0,k-1\}$}: 
\begin{equation*} 
\mathbf{cc}([x_i^\flat]_{t},y_i)\neq \mathbf{cc}([x_j^\flat]_{t},y_j) \mbox{ and }
    \mathrm{sgn}((x_i^\flat,y_i),(x_j^\flat,y_j))=0;
\end{equation*}
\underline{If $y_i,y_j\in [1,k-2]$}:
\begin{itemize}
\item If $\mathrm{idx}(x_i^\flat,y_i)\neq\mathrm{idx}(x_j^\flat,y_j)$: 

\indent $\hspace{10pt}\mathbf{cc}([x_i^\flat]_{t},y_i)\neq \mathbf{cc}([x_j^\flat]_{t},y_j)$;

\item If $\mathrm{idx}(x_i^\flat,y_i)=\mathrm{idx}(x_j^\flat,y_j)=t$:\\[-18pt]
 \begin{multline}
(\mathbf{cc}([x_i^\flat]_{t},y_i)-\mathbf{cc}([x_j^\flat]_{t},y_j))\times(y_i-y_j)\\ \hspace{15pt} \times(-1)^{\mathbf{BIT}(\mathrm{SW}((x_i^\flat,y_i),(x_j^\flat,y_j)),\hat{q}(y_i,y_j))}\!>\!0. \nonumber  
\end{multline}

\end{itemize}

\end{enumerate}

\end{enumerate}

$\mathfrak{A}_{k,m}$ is constructed from $\mathfrak{B}_{k,m}$
 by adding  a set $E^+$ of edges where  
$E^+=\{((x_i,y_i),(x_j,y_j))\mid (x_i^\flat,y_i),(x_j^\flat,y_j)\!\in\! \mathbb{X}_m^*; y_i,y_j=0\mbox{ or } k-1; y_i\neq y_j; [x_i^\flat]_m\equiv [x_j^\flat]_m\equiv 0 \hspace{3pt}(\mbox{mod }k-1); (x_i,y_i)\twoheadrightarrow (x_j,y_j)\mbox{ or } (x_j,y_j)\twoheadrightarrow (x_i,y_i)\}$.
\hfill\ensuremath{\divideontimes}
\end{definition}

As the way we define type label, congruence label and $\mathfrak{B}_{k,m}$ over the universe $\mathbb{X}_1$, we  define the dual notions over $\mathbb{X}_1^*$ \textit{simultaneously}. 
\begin{definition} \label{type-label-star}
In the structure $\mathfrak{B}_{k,m}^*$, for any $(x,y)\in \mathbb{X}_1^*$, the \textit{congruence label} of $(x,y)$, also denoted $\mathbf{cl}(x,y)$, is defined as the dual one in Definition \ref{congruence-label}, except that ``$\flat$''s are removed. 
The readers are suggested to confer Example \ref{example-congr-label} in the appendix. 

For any $(x,y)\in \mathbb{X}_1^*$, the\textit{ type label} of $(x,y)$, also denoted $\chi(x,y)$, is defined as follows.  

Assume that $\mathrm{idx}(x,y)=i$. Let $\chi(x,y):=\underline{\underline{\mathfrak{a},y;i;S};\Omega}$ where\footnote{Here $\underline{\mathfrak{a},y;i;S}$ represents a congruence label. Also confer Remark \ref{remark-congruence-labels}.} \\[-14pt]
\begin{itemize}
\item 
$\mathfrak{a}=\mathbf{cc}([x]_i,y)$;
\item If $i=m$  then $S=\Omega=\emptyset$; otherwise (i.e. $1\leq i<m$), 
\begin{itemize}
\item  if  $y\in\{0,k-1\}$ then  $\Omega=\emptyset$; otherwise, $\Omega\!=\!\{(u,v)\!\in\!\mathbb{X}_i^*\1 [x]_i\!\equiv\! [u]_i\!\equiv\! 0 \mbox{ mod } k-1; \left((\llparenthesis u\rrparenthesis_{i+1},v),(\llparenthesis x\rrparenthesis_{i+1},y)\right)\notin E_*; v\in \{0,k-1\}\rightarrow  (u,v)\in\mathbb{X}_{i+1}^*\land\mathrm{sgn}((\llparenthesis x\rrparenthesis_{i+1},y),(\llparenthesis u\rrparenthesis_{i+1},v))=0\};$ 

\item 
 $S\!\subseteq\! \mathbf{Cl}_{i+1}$ is determined by the $j$-th element of $\mathfrak{L}_{i+1}^{*+}$ where
 $j\!=\!\lfloor [x]_i/(k-1)\rfloor \!\mbox{ mod }\! cl_{i+1}^*$; if $j<\gamma_{m-i-1}^*$  or $j\geq cl_{i+1}^*-\gamma_{m-i-1}^*$ then $S:=\emptyset$; otherwise, 
 assume that this $j$-th element is $((u_1,\!v_1),\cdots,\\(u_d,\!v_d))\!\in\!(\mathbb{X}_{i+1}^*)^d$, then 
$S:=\bigcup_{1\leq i\leq d} \{\mathbf{cl}(u_i,v_i)\}$.


\end{itemize}

\end{itemize}

\hfill\ensuremath{\divideontimes}
\end{definition}

 Note that, $\mathbf{cl}(u,v)=\mathbf{cl}(u^\flat,v)$ and  
 $\chi(x,y)\!\!\restriction\!\! S=\chi(x^\flat,y)\!\!\restriction\!\! S$ for a vertex $(x,y)\in \mathbb{X}_1$. In fact, the two version of ``type label'' introduced in Definition \ref{type-label} and Definition \ref{type-label-star} are also very similar except that the former one also take into account the board history associating with a vertex. 

\begin{definition}\label{B_km-star}
$\mathfrak{B}_{k,m}^*$ is defined similarly as $\mathfrak{B}_{k,m}$, except that
\begin{itemize}
\item $\mathbb{X}_i$, for any $i$, is replaced by $\mathbb{X}_i^*$; 
\item remove 2 (a), (b) in Definition \ref{iterative-expansion}; 

\item revise 2 (c) as the following:\\ $(x_j,y_j)\!\notin\! \{(u,v)\!\in\! \mathbb{X}_1^* \1  \mathbf{cl}(u,v)\!\in\! \chi(x_i,y_i)\!\!\restriction\!\!S\}$;\\ $(x_j,y_j)\!\notin\! \{(u,v)\!\in\! \mathbb{X}_1^* \1 \mathbf{cl}(x_i,y_i)\!\in\! \chi(u,v)\!\!\restriction \!\!S\}$;

\item remove all the superscripts ``$\flat$'' in Definition \ref{iterative-expansion}.
\end{itemize}
$\mathfrak{A}_{k,m}^*$ is constructed from $\mathfrak{B}_{k,m}^*$ 
 by adding  a set $E_*^+$ of edges where  
$E_*^+=\{((x_i,y_i),(x_j,y_j))\mid (x_i,y_i),(x_j,y_j)\in \mathbb{X}_m^*; y_i,y_j=0\mbox{ or } k-1; y_i\!\neq\! y_j; [x_i]_m\!\equiv\! [x_j]_m\equiv 0 \hspace{3pt}(\mbox{mod } k-1)\}$. We call endpoints of such edges \textit{critical points}.

For each $t$ in $[1,m]$ we define a function $\gimel_t^b: |\mathfrak{B}_{k,m}^*|\mapsto \mathbb{BC}$, whose concrete definition will be decided later (cf. page \pageref{def-virtual-game}). In addition,   
we define $\mathfrak{B}_{k,m}^{*\gimel_t^b}$ similarly as $\mathfrak{B}_{k,m}^*$, except that we revise 2) c) as the following:\\
$(x_j,y_j)\notin \{(u,v)\!\in\! \mathbb{X}_1^* \1 (u,v)\in \gimel_t^b(x_i,y_i) \land \mathbf{cl}(u,v)\in \chi(x_i,y_i)\!\!\restriction\!\!S\}$,  and\\ $\indent\hspace{20pt}(x_j,y_j)\notin \{(u,v)\!\in\! \mathbb{X}_1^* \1 (x_i,y_i)\in \gimel_t^b(u,v)\land \mathbf{cl}(x_i,y_i)\in \chi(u,v)\!\!\restriction\!\!S\}$. 


Similarly, we can define $\gimel_t^a$ and $\mathfrak{A}_{k,m}^{*\gimel_t^a}$.
\hfill\ensuremath{\divideontimes}
\end{definition}
 
 Note that $\gimel_t^b(x,y)$ is different from $(x,y)[\mathrm{BC}]$. The latter is a notion defined w.r.t. $\mathfrak{A}_{k,m}$ or $\mathfrak{B}_{k,m}$. 

We shall find in section \ref{winning-strategy} that $\mathfrak{A}_{k,m}^*$ and $\mathfrak{B}_{k,m}^*$, as well as $\mathfrak{A}_{k,m}^{*\gimel_t^a}$ and $\mathfrak{B}_{k,m}^{*\gimel_t^b}$, are the structures that we will study in the pebble games, which are the ``associated'' structures of $\mathfrak{A}_{k,m}$ and $\mathfrak{B}_{k,m}$. In fact, it is a good idea to understand the definition of them before reading Definition \ref{iterative-expansion}. 




\begin{definition}\label{structural-abstraction}
For any $1\leq t\leq m$, the $t$-th \textbf{abstraction} of $\mathfrak{B}_{k,m}$ ($\mathfrak{A}_{k,m}$ resp.), denoted   $\mathfrak{B}_{k,m}^{(t)}$ ( $\mathfrak{A}_{k,m}^{(t)}$ resp.), is $\mathfrak{B}_{k,m}[\mathbb{X}_{t}]$ ($\mathfrak{A}_{k,m}[\mathbb{X}_{t}]$ resp.).
\hfill\ensuremath{\divideontimes}
\end{definition}

Note that the graph $\mathfrak{A}_{k,m}$ and  $\mathfrak{B}_{k,m}$ are just   $\mathfrak{A}_{k,m}^{(1)}$ and $\mathfrak{B}_{k,m}^{(1)}$ respectively. 
We use $\mathbb{X}_{t}^A$, $E^A$ to denote the set $\mathbb{X}_{t}$, $E$ of $\mathfrak{A}_{k,m}$, and use $\mathbb{X}_{t}$, $E$ to denote the set $\mathbb{X}_{t}$, $E$ of either $\mathfrak{A}_{k,m}$ or $\mathfrak{B}_{k,m}$. \textit{The abstractions of  $\mathfrak{A}_{k,m}^*$ and  $\mathfrak{B}_{k,m}^*$ can be defined similarly}. We call $\mathbf{cc}([x]_t,y)$ the ``coordinate congruence number in the $t$-th abstraction'', for any $(x,y)\!\in\!\mathbb{X}_1$. 

We get a pair of structures $\widetilde{\mathfrak{A}}_{k,m}^*$ and $\widetilde{\mathfrak{B}}_{k,m}^*$ from $\mathfrak{A}_{k,m}^*$ and $\mathfrak{B}_{k,m}^*$, just like the way we obtain $\widetilde{\mathfrak{A}}_{3,m}$ and $\widetilde{\mathfrak{B}}_{3,m}$ from $\mathfrak{A}_{3,m}$ and $\mathfrak{B}_{3,m}$, by circular shifting the vertices of the $i$-th row for $tr(i)$ times to the right (cf. p. \pageref{def-circular-shifting}).  Similarly, we can define $\widetilde{\mathfrak{B}}_{k,m}^{*\gimel_t^b}$ etc.   
Recall that, for $i\in [k]$,  
$tr(i)=(i\,\, \mbox{mod } k-1)\times\sum_{1\leq p\leq m}\beta_{m-p}^{m-1}$. 
We obtain a pair of structures $\widetilde{\mathfrak{A}}_{k,m}$ and $\widetilde{\mathfrak{B}}_{k,m}$ from $\mathfrak{A}_{k,m}$ and $\mathfrak{B}_{k,m}$, by moving each vertex $(x,y)$ to $(x^\prime,y)$ where $x^\prime=(x+tr(y))\mbox{ mod }(\gamma_{m-1}^*\times k)$. They are the pair of main structures that form the game board. 
We shall meet these newly created structures in the next section, i.e. Section \ref{winning-strategy}. 
Abuse of denotations, we still call the edge sets of $\widetilde{\mathfrak{B}}_{k,m}^*$ and $\widetilde{\mathfrak{B}}_{k,m}$ as $E_*^B$ and $E^B$ respectively.  

Recall that we regard every $\mathpzc{U}_{i}^*$-tuple of vertices as one object in $\mathfrak{A}_{k,m}^*$, and such an object includes all types of vertices. 

\begin{lemma}\label{universal-simulator}
Let $1\leq r<i\leq m$. For any $(e,f)\in \mathbb{X}_r^*$,  $\mathfrak{a}\!\in\! [k-1]$, $\ell\in\{-1,0,1\}$, 
and $w\!\in\! \wp(\mathbf{Cl}_{i})$, there is  $(e^{\prime},f)\in\mathbb{X}_r^*$ in the $\mathpzc{U}_{r}^*$-tuple $(\llbracket e\rrbracket_{r},f)$ such that 
$$\mathbf{cl}(e^{\prime},f)=\underline{\mathfrak{a},f;r;\ell;w}.$$ 
\end{lemma}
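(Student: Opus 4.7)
The statement is a pure combinatorial enumeration inside a single $\mathpzc{U}_r^*$-interval, powered by the factorization
\[
\mathpzc{U}_r^* \;=\; 3\cdot 2^{\binom{k-2}{2}}\cdot (k-1)\cdot cl_{r+1}^{*}.
\]
My plan is to exhibit three nested, essentially independent degrees of freedom inside the tuple $(\llbracket e\rrbracket_r,f)=\{(x_{min},f),(x_1,f),\ldots,(x_{\mathpzc{U}_r^{*}-1},f)\}$, each controlled by one factor of this product, and one extra factor $2^{\binom{k-2}{2}}$ of redundancy. By the definition of the $\mathpzc{U}_r^{*}$-tuple we have $[x_i]_r=\llbracket e\rrbracket_r^{min}+i$ with $\llbracket e\rrbracket_r^{min}$ divisible by $\mathpzc{U}_r^{*}$, so $[x_i]_r\bmod \mathpzc{U}_r^{*}=i$; moreover $\mathrm{idx}(x_i,f)=r$ for $i\geq 1$, which guarantees that the third slot of $\mathbf{cl}(x_i,f)$ is indeed $r$.

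First I would split the tuple into three consecutive blocks of length $\mathpzc{U}_r^{*}/3$. By the definition of $\mathrm{RngNum}$, the $(\ell+2)$-th block (indexing by $\ell\in\{-1,0,1\}$) is exactly $\{i : \mathrm{RngNum}(x_i,r)=\ell\}$, so selecting the appropriate block realizes the prescribed $\ell$. Next, inside such a block of length $2^{\binom{k-2}{2}}\,(k-1)\,cl_{r+1}^{*}$, the map $i\mapsto \mathbf{cc}([x_i]_r,f)=([x_i]_r+f)\bmod(k-1)$ is purely periodic with period $k-1$; restricting to the subsequence of indices on which this value equals $\mathfrak{a}$ yields an arithmetic progression with common difference $k-1$ and length $2^{\binom{k-2}{2}}\,cl_{r+1}^{*}$.

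Along this arithmetic progression the parameter $j:=\lfloor[x_i]_r/(k-1)\rfloor\bmod cl_{r+1}^{*}$ that governs the $S$-component increases by $1$ at each step, so it passes through every value of $[cl_{r+1}^{*}]$ exactly $2^{\binom{k-2}{2}}$ times, and in particular through every interior value satisfying $\gamma_{m-r-1}^{*}\leq j<cl_{r+1}^{*}-\gamma_{m-r-1}^{*}$. By the construction of $\mathfrak{L}_{r+1}^{*+}$ as the concatenation of two copies of $\mathfrak{L}_{r+1}$ with the Cartesian-product lists $\mathfrak{L}_{r+1}^{1},\ldots,\mathfrak{L}_{r+1}^{k-2}$, the interior portion enumerates every nonempty tuple of length at most $k-2$ over $\mathbb{X}_{r+1}^{*}$, hence every admissible nonempty value of the $S$-component is realized; the boundary portion realizes $S=\emptyset$. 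Since the target $w\in\wp(\mathbf{Cl}_i)\subseteq\wp(\mathbf{Cl}_{r+1})$ must coincide with a set of the form $\chi(x_i,f)\!\restriction\!S$, it appears among the values enumerated above, and the corresponding vertex $(e',f):=(x_i,f)$ has $\mathbf{cl}(e',f)=\underline{\mathfrak{a},f;r;\ell;w}$.

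The main obstacle is essentially bookkeeping: verifying the divisibilities (that $\mathpzc{U}_r^{*}$ is divisible by $3(k-1)\,cl_{r+1}^{*}$, which is immediate from the definition) and properly interpreting ``$w\in\wp(\mathbf{Cl}_i)$'' as ranging precisely over the $S$-values that can actually occur for a vertex of index $r$, since the enumeration packaged into $\mathfrak{L}_{r+1}^{*+}$ is exactly what determines the realizable $S$-values. Once this is pinned down, the proof reduces to a transparent index calculation based on the factorization of $\mathpzc{U}_r^{*}$.
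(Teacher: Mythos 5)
Your proof is correct and follows essentially the same route as the paper's own argument: both exploit the factorization $\mathpzc{U}_r^*=3\cdot 2^{\binom{k-2}{2}}\cdot(k-1)\cdot cl_{r+1}^*$ to assign one factor to $\mathrm{RngNum}$ (the thirds), one to $\mathbf{cc}$ (residue mod $k-1$), and one to the $S$-component (the cycle of length $cl_{r+1}^*$ in $\lfloor[x_i]_r/(k-1)\rfloor$), with $2^{\binom{k-2}{2}}$ as slack; the paper just performs these steps in the opposite order (first fix $\mathfrak{a},w$ via $[e']_r\bmod\eta_r^*$, then shift by multiples of $\tfrac13\mathpzc{U}_r^*$ to fix $\ell$). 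You also correctly flag the same loose end the paper leaves implicit, namely that $w\in\wp(\mathbf{Cl}_i)$ must be read as an $S$-value realizable by some $d$-tuple over $\mathbb{X}_{r+1}^*$ with $d\le k-2$.
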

We shall see in the Main Lemma \ref{main-lemma} (cf. Strategy 2, (2-5)) that Lemma \ref{universal-simulator} gives Duplicator the freedom to ensure her picked vertex satisfying some conditions: Lemma \ref{universal-simulator} allows Duplicator to choose $\mathbf{cl}(e^{\prime},f)$ freely. Lemma \ref{universal-simulator} is also used in the proof of Lemma \ref{no-missing-edges_xi-1}.


It is clear that $\mathfrak{A}_{k,m}$ contains $k$-cliques. In particular, 
the following lemma says that there is a $k$-clique roughly in the middle of the structure with respect to the first coordinate.
\begin{lemma}\label{A-km-has-clique}
The subgraph of $\mathfrak{A}_{k,m}$ induced by a set of vertices $(x_0,0),\ldots,$\\$(x_{k-1},k-1)$ is a $k$-clique, where 
$x_i^\flat=\frac{1}{2}\gamma_{m-1}^*+\frac{1}{2}\sum_{1<j\leq m}\beta_{m-j}^{m-1}$ for any $i$, and $(x_0,0)\twoheadrightarrow (x_1,1)\twoheadrightarrow\cdots
\twoheadrightarrow (x_{k-1},k-1)$. 
\end{lemma}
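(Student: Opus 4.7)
\medskip

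\noindent\textbf{Proof plan.} The plan is to verify, case by case on the pair $(i,j)$ with $0\le i<j\le k-1$, that each pair $(x_i,i),(x_j,j)$ is joined by an edge in $\mathfrak{A}_{k,m}$. The first step is to observe that, since $x_i^\flat=\tfrac{1}{2}\gamma_{m-1}^*+\tfrac{1}{2}\sum_{1<j\le m}\beta_{m-j}^{m-1}$ equals $\llparenthesis x_i^\flat\rrparenthesis_m$ (this follows from $[x_i^\flat]_m=\tfrac{1}{2}\gamma_0^*=2^{m-1}(k-1)\in\mathbf{N}^+$, using \eqref{llprrp-func} and \eqref{gamma_0-star}), every $(x_i^\flat,i)$ lies in $\mathbb{X}_m^*$ and therefore has index $m$. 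Consequently $\chi(x_i,i)\!\restriction\! S=\emptyset$ and $\chi(x_i,i)\!\restriction\!\Omega=\emptyset$ by Definition \ref{type-label}, which will make conditions 2(c) and 2(d) of Definition \ref{iterative-expansion} trivial. Also $[x_i^\flat]_m\equiv 0\pmod{k-1}$ by the computation above.

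\medskip

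The second step splits into three cases. Case (A): $\{i,j\}=\{0,k-1\}$. Here the edge comes from $E^+$; I will check directly that the defining conditions of $E^+$ are met, using $[x_i^\flat]_m\equiv 0\pmod{k-1}$ and the chain of $\twoheadrightarrow$ relations assumed in the hypothesis. Case (B): exactly one of $i,j$ lies in $\{0,k-1\}$, the other in $[1,k-2]$. Case (C): both $i,j\in[1,k-2]$. For Cases (B) and (C) I must show that 2(a) of Definition \ref{iterative-expansion} holds, i.e., $((x_i^\flat,i),(x_j^\flat,j))\in E_*^B$. Since $\chi\!\restriction\! S$ and $\chi\!\restriction\!\Omega$ are empty at index $m$, the analogues of 2(c) and 2(d) in Definition \ref{B_km-star} are vacuous. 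For the analogue of 2(e): all vertices have index $m$, so $\mathbf{cc}([x_i^\flat]_m,i)=i\bmod(k-1)$ by Lemma \ref{cm=depth} (adapted to the general case). In Case (B), the two congruence numbers are $i\bmod(k-1)$ and $j\bmod(k-1)$, which are distinct; since both vertices have the same index $m$, by definition $\mathrm{sgn}((x_i^\flat,i),(x_j^\flat,j))=0$. In Case (C), the values $i,j\in[1,k-2]$ are distinct modulo $k-1$, so the two congruence numbers differ again; moreover, since $x_i^\flat=x_j^\flat$, we have $g(x_i^\flat)=g(x_j^\flat)$, hence $\mathrm{SW}((x_i^\flat,i),(x_j^\flat,j))$ is the all-zero string by \eqref{def-SW-func}, so $\mathbf{BIT}(\mathrm{SW},\hat q(i,j))=0$ and $(-1)^{\mathbf{BIT}(\cdots)}=1$. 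The inequality then reduces to $(\mathbf{cc}(\cdots,i)-\mathbf{cc}(\cdots,j))(i-j)>0$, which is routine.

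\medskip

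The third step is to check the remaining conditions 2(b), 2(c), 2(d) of Definition \ref{iterative-expansion} itself for $\mathfrak{A}_{k,m}$ (as opposed to $\mathfrak{B}_{k,m}^*$). Condition 2(b) is precisely the hypothesis $(x_0,0)\twoheadrightarrow(x_1,1)\twoheadrightarrow\cdots\twoheadrightarrow(x_{k-1},k-1)$ combined with transitivity of $\twoheadrightarrow$. For 2(c), note that $\rightsquigarrow$ requires $\mathbf{cl}(\cdot)\in\chi(\cdot)\!\restriction\! S$ (see \eqref{varphi-for-rightsquiggarrow}), but $\chi(x_i,i)\!\restriction\! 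S=\emptyset$ for every $i$, so no such relation holds. Condition 2(d) is immediate from $\chi(x_i,i)\!\restriction\!\Omega=\emptyset$.

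\medskip

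\noindent\textbf{Main obstacle.} The only genuinely non-routine point is Case (C), where the elaborate sign condition of Definition \ref{iterative-expansion} must be unraveled; the key insight is that the common value $x_i^\flat=x_j^\flat$ forces $\mathrm{SW}$ to vanish identically, so the bit-parity factor collapses to $+1$ and the product becomes a manifest positive quantity. The minor bookkeeping nuisance is verifying $[x_i^\flat]_m\equiv 0\pmod{k-1}$ from the explicit formulas \eqref{gamma_0-star}, \eqref{llprrp-func}, which is essentially the same computation that underlies the existence of the central $k$-clique at the ``$mid$'' column in the $k=3$ case.
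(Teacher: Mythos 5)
Your proposal is correct and follows essentially the same route as the paper's proof: establish that all $(x_i^\flat,i)$ lie in $\mathbb{X}_m^*$ (so the $\Omega$ and $S$ data are empty), verify the $\mathbf{cc}$ and $\mathrm{SW}$/$\mathrm{sgn}$ conditions for $E_*^B$, and invoke transitivity of $\twoheadrightarrow$ to discharge the board-history condition. The only (minor, favorable) differences are cosmetic: you derive $\mathrm{sgn}=0$ directly from the equal-index clause of its definition and $\mathrm{SW}=0$ from $x_i^\flat=x_j^\flat$ rather than computing $\mathrm{RngNum}=-1$ and $g=0$ as the paper does, and you make explicit that the $(x_0,0),(x_{k-1},k-1)$ edge comes from $E^+$ — a point the paper leaves implicit by concluding about $\mathfrak{A}_{k,m}^*$ rather than $\mathfrak{B}_{k,m}^*$.
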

\begin{proof}
First, we show that, for any $i$, $(x_i^\flat,i)\in\mathbb{X}_m^*$ and $\mathbf{cc}([x_i^\flat]_m,i)=i$ mod $k-1$. 

It is because $\llparenthesis x_i^\flat\rrparenthesis_m=\lfloor(\frac{1}{2}\gamma_{m-1}^*+\frac{1}{2}\sum_{1<j\leq m}\beta_{m-j}^{m-1})/\beta_0^{m-1}\rfloor\beta_0^{m-1}+\\\frac{1}{2}\sum_{1<j\leq m}\beta_{m-j}^{m-1}=\lfloor\frac{1}{2}\gamma_{m-1}^*/\beta_0^{m-1}\rfloor\beta_0^{m-1}+\frac{1}{2}\sum_{1<j\leq m}\beta_{m-j}^{m-1}=x_i^\flat$. The last equation is based on the easy observation that $\frac{1}{2}\gamma_{m-1}^*$, which equals $\frac{1}{2}\gamma_0^*\beta_0^{m-1}$, is divisible by $\beta_0^{m-1}$, since $\gamma_0^*$ is even. Therefore, $(x_i^\flat,i)\in\mathbb{X}_m^*$. 
  Moreover, $[x_i^\flat]_m=\frac{1}{2}\gamma_0$, which is divisible by $k-1$. Hence, $\mathbf{cc}([x_i^\flat]_m,i)=i$ mod $k-1$. 

Second, $g(x_i)=0$, for $[x_i^\flat]_{m}\mbox{ mod }\mathpzc{U}_{m}^*=\frac{1}{2}\gamma_0 \mbox{ mod }\mathpzc{U}_{m}^*=2^{m-1}\mathpzc{U}_m^*$  mod $\mathpzc{U}_{m}^*$ $=0<2^{\binom{k-2}{2}}\times\eta_m^*$. Therefore, for any $i,j$, $\mathrm{SW}((x_i^\flat,i),(x_j^\flat,j))=0$. If $i,j\in [1,k-2]$, then the condition  e) of 2) in Definition \ref{iterative-expansion} clearly holds. Assume that either $i$ or $j$ is either $0$ or $k-1$. In this case, we need only notice that $\mathrm{RngNum}(x_i^\flat,m)=\mathrm{RngNum}(x_j^\flat,m)=-1$ because $[x_i^\flat]_{m}\mbox{ mod }\mathpzc{U}_{m}^*=[x_j^\flat]_{m}\mbox{ mod }\mathpzc{U}_{m}^*=0<2^{\binom{k-2}{2}}\times\eta_m^*$.  

Third, for any $i,j$, $\mathbf{cl}(x_i^\flat,i)\!\notin\! (x_j^\flat,j)\!\!\restriction\!\!S$ and $(x_i^\flat,i)\!\notin\! (x_j^\flat,j)\!\!\restriction\!\!\Omega$, since we've already proved that $(x_i^\flat,i),(x_j^\flat,j)\in\mathbb{X}_m^*$ for any $i, j$. Note that $\mathbf{cl}(x_i^\flat,i)\notin (x_j^\flat,j)\!\!\restriction\!\!S$  means that the formula $(x_j^\flat,j)\rightsquigarrow (x_i^\flat,j)$ does not hold. 

Fourth, the binary relation ``$\twoheadrightarrow$'' is transitive. Therefore, $(x_0,0)\twoheadrightarrow(x_1,1)\twoheadrightarrow\cdots
\twoheadrightarrow (x_{k-1},k-1)$ implies that $(x_i,i)\twoheadrightarrow (x_j,j)$ for any $0\leq i<j\leq k-1$. 

Then by Definition \ref{B_km-star}, the set of vertices $(x_0^\flat,0),\ldots,(x_{k-1}^\flat,k-1)$ is a $k$-clique in $\mathfrak{A}_{k,m}^*$. As a consequence, by Definition \ref{iterative-expansion}, the set of vertices $(x_0,0),\ldots,(x_{k-1},k-1)$ is also a $k$-clique in $\mathfrak{A}_{k,m}$, since $(x_i,i)\twoheadrightarrow (x_j,j)$ for any $0\leq i<j\leq k-1$. 
\end{proof}
Note that $(x_0,0)\twoheadrightarrow(x_1,1)\twoheadrightarrow\cdots
\twoheadrightarrow (x_{k-1},k-1)$ implies that $\chi(x_0,0)\!\!\restriction\!\!\mathrm{BH}$ is a void board history, and $\chi(x_i,i)\!\!\restriction\!\!\mathrm{BH}$ consists of $i$ nonempty board configurations where $\chi(x_i,i)\!\!\restriction\!\!\mathrm{BH}(j)=\chi(x_i,i)\!\!\restriction\!\!\mathrm{BH}(j-1)\ccirc (x_{j-1},j-1)$ for $0<j\leq i$.  

Lemma \ref{A-km-has-clique} implies that $\widetilde{\mathfrak{A}}_{k,m}$ has $k$-cliques. 

 Although it is relatively easy to see that $\mathfrak{A}_{k,m}$ has $k$-cliques, it is not obvious whether $\mathfrak{B}_{k,m}$ has a $k$-clique or not. The following lemma answers this question.

\begin{lemma}\label{B_k-has-no-k-clique}
$\mathfrak{B}_{k,m}$ has no $k$-clique. 
\end{lemma}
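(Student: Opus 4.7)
My plan is to prove the stronger statement that $\mathfrak{B}_{k,m}^{*}$ has no $k$-clique; by clause 2(a) of Definition \ref{iterative-expansion}, every edge of $\mathfrak{B}_{k,m}$ requires the $\flat$-projected endpoints to be adjacent in $\mathfrak{B}_{k,m}^{*}$, so a $k$-clique in $\mathfrak{B}_{k,m}$ would descend to one in $\mathfrak{B}_{k,m}^{*}$. Arguing by contradiction, among all $k$-cliques of $\mathfrak{B}_{k,m}^{*}$ I fix one $C=\{(a_0,0),\ldots,(a_{k-1},k-1)\}$ of maximum index $t$; rows are independent sets (clause~1 of the edge definition), so $C$ contains exactly one vertex per row. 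Partition the rows as $P=\{i:\mathrm{idx}(a_i,i)=t\}$ and $Q=\{i:\mathrm{idx}(a_i,i)>t\}$.

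The case $t=m$ is eliminated by pigeonhole: the $k$ values $\mathbf{cc}([a_i]_m,i)$ lie in $[k-1]$, so two coincide, and clause~2(e) (either the direct $\mathbf{cc}$-inequality when at least one row is in $\{0,k-1\}$, or the strict sign-product condition in the mid-row/same-index case) forbids the resulting edge. For $t<m$, the same pigeonhole at level $t$, combined with Lemma~\ref{cm=depth} (which pins $\mathbf{cc}([a_i]_t,i)=i\bmod(k-1)$ whenever $i\in Q$), forces the coinciding pair to lie in $Q$ and to be exactly $\{0,k-1\}$. Hence $\{0,k-1\}\subseteq Q$ and $P\subseteq[1,k-2]$; adjacency at level $t$ from every $r\in P$ to the two extreme-row vertices forces $\mathbf{cc}([a_r]_t,r)\neq 0$, and pushing the pigeonhole through the remaining $k-2$ middle rows pins the mid-row $\mathbf{cc}$-values to a bijection onto $\{1,\ldots,k-2\}$. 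In particular, at least one $r\in P$ satisfies $[a_r]_t\equiv 0\pmod{k-1}$ (otherwise all mid-row $\mathbf{cc}$-values at level $t$ would come exclusively from the $Q$-vertices, which already take values equal to their row number, so every $r\in P$ must realise $\mathbf{cc}([a_r]_t,r)=r$, i.e., $[a_r]_t\equiv 0$).

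The heart of the argument is a projection/propagation step. Let $C':=\{(\llparenthesis a_i\rrparenthesis_{t+1},i):i\in[k]\}\subseteq\mathbb{X}_{t+1}^{*}$. If $C'$ were a $k$-clique of $\mathfrak{B}_{k,m}^{*}$, it would have index $\geq t+1$, contradicting the maximality of $t$. So there is a pair $i\neq j$ for which the projections $(\llparenthesis a_i\rrparenthesis_{t+1},i)$ and $(\llparenthesis a_j\rrparenthesis_{t+1},j)$ are non-adjacent in $\mathfrak{B}_{k,m}^{*}$. I then invoke the $\Omega$-mechanism of Definition~\ref{type-label-star}, feeding into clause~2(d) of Definition~\ref{iterative-expansion}: picking $r\in P$ with $[a_r]_t\equiv 0\pmod{k-1}$ as found above, any $j\neq r$ with $[a_j]_t\equiv 0\pmod{k-1}$ whose $(t+1)$-projection is non-adjacent to $(\llparenthesis a_r\rrparenthesis_{t+1},r)$ lies in $\chi(a_r,r)\!\restriction\!\Omega$. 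For $j\in\{0,k-1\}$ the extra requirements $(a_j,j)\in\mathbb{X}_{t+1}^{*}$ and $\mathrm{sgn}=0$ come for free: the first since $\{0,k-1\}\subseteq Q$, and the second by Lemma~\ref{sgn-equal-0} because $\mathrm{RngNum}(\cdot,t)=-1$ for both projected endpoints by Lemma~\ref{rngnum-is_-1}. Thus clause~2(d) forces $(a_r,r)$ to be non-adjacent to $(a_j,j)$ in $C$, the desired contradiction.

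The main obstacle is ensuring that the non-adjacent projection pair we extract in $C'$ is one to which the $\Omega$-propagation applies. I expect to handle this by a case split on where the non-adjacent pair lies: if it involves a vertex in rows $\{0,k-1\}$ (automatically in $Q$ and compatible with the $\Omega$-condition via the $\mathrm{sgn}$ calculation above), we are done by the preceding paragraph; if instead both endpoints are mid-row projections, I argue symmetrically using the congruence-label set $S$ and the $\rightsquigarrow$ relation of clause~2(c), whose role is precisely to transport non-adjacency of higher-abstraction vertices to non-adjacency of their lower-abstraction refinements for middle rows (parallel to what $\Omega$ does for the extreme rows). In every subcase the existence of a suitable $r\in P$ with $[a_r]_t\equiv 0\pmod{k-1}$, together with the level-$t$ bijection of $\mathbf{cc}$-values established earlier, lets us identify a non-edge inside $C$, completing the contradiction.
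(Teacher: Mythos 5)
Your proposal tracks the paper's proof faithfully for the reduction to $\mathfrak{B}_{k,m}^{*}$, the maximality setup, the ruling out of $t=m$, and the observation that the colliding $\mathbf{cc}$-pair is $\{0,k-1\}\subseteq Q$. The gap appears at the pivotal step where you assert that ``at least one $r\in P$ satisfies $[a_r]_t\equiv 0\pmod{k-1}$.'' That assertion is false and the parenthetical justification does not hold up: the mid-row $\mathbf{cc}$-values at level $t$ form a bijection onto $\{1,\ldots,k-2\}$, but nothing prevents this bijection from permuting the $P$-rows nontrivially among themselves, so it is entirely possible that \emph{every} $r\in P$ has $[a_r]_t\not\equiv 0\pmod{k-1}$. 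That is exactly the paper's case (3), and it is the reason the construction carries the $\mathrm{SW}$ and $\mathrm{RngNum}$ functions at all: without them $\mathfrak{B}_{k,m}^{*}$ \emph{would} contain $k$-cliques with ``misplaced'' mid-row vertices. The paper handles this case by locating an inverted pair $(x,y),(x',y')\in P$ with $(\mathbf{cc}([x]_t,y)-\mathbf{cc}([x']_t,y'))(y-y')<0$; if $\mathrm{SW}((x,y),(x',y'))=0$ the sign condition in clause 2(e) kills the edge directly, and otherwise $\mathrm{SW}\neq 0$ forces $\mathrm{RngNum}\neq -1$ for one endpoint, whereupon the $\mathrm{sgn}$-function kills the edge from that vertex to one of the extreme-row vertices in $Q$. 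None of this appears in your proposal; you never invoke $\mathrm{SW}$.

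A secondary issue is the closing paragraph, where you propose handling a mid-row/mid-row non-adjacent pair in $C'$ via the set $S$ and the $\rightsquigarrow$ relation of clause 2(c). This misattributes mechanisms. The sets $\chi(\cdot)\!\restriction\!S$ and $\rightsquigarrow$ (and clauses 2(b)--(c)) govern edge deletions driven by board-history continuity, which plays no role in this lemma; indeed the paper explicitly notes that it proves the absence of $k$-cliques ``even if we do not consider the missing of edges defined in 2) c) of Definition~\ref{B_km-star}.'' The propagation of higher-abstraction non-adjacency to lower abstractions is done entirely by $\chi(\cdot)\!\restriction\!\Omega$ (Definition~\ref{type-label-star}), which does cover mid-row targets as well as extreme-row ones. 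What $\Omega$ cannot do, and what it is not meant to do, is to handle the $[a_r]_t\not\equiv 0$ situation; that is the separate $\mathrm{SW}/\mathrm{RngNum}$ mechanism you have omitted. To repair the proof you should split on whether all vertices of $C_k$ satisfy $[a_i]_t\equiv 0\pmod{k-1}$ (the paper's case~2, where your $\Omega$-propagation argument is sound) or some $P$-vertex violates it (the paper's case~3, needing the $\mathrm{SW}$ argument).
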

\begin{proof}
Suppose that $\mathfrak{B}_{k,m}^*$ has no $k$-clique, then $\mathfrak{B}_{k,m}$ also has no $k$-clique, by virtue of 2) a) in Definition \ref{iterative-expansion}. Therefore, we need only prove that $\mathfrak{B}_{k,m}^*$ has no $k$-clique. 

Assume for the purpose of a contradiction that there are $k$-cliques in $\mathfrak{B}_{k,m}^*$ and $C_k$ is such a $k$-clique that has the maximum index, say $t$, among all the $k$-cliques. $t$ cannot be $m$, for otherwise there are two vertices that have the same coordinate congruence number in the $m$-th abstraction, 
by the pigeonhole principle. 
According to the definition, the second coordinates of the vertices of $C_k$ must be different. That is, for each $i\in [0,k-1]$, there is a unique vertex whose second coordinate is $i$.  Let $P=\{(x,y)\!\in\! |C_k|\mid (x,y)\!\in\! \mathbb{X}_{t}^*\!-\!\mathbb{X}_{t+1}^*\}$. And  let 
 $Q=\{(x,y)\in |C_k| \mid (x,y)\in \mathbb{X}_{t+1}^*\}$. Note that, $P\cap Q=\emptyset$,  
and the set of vertices of $C_k$ is exactly $P\cup Q$.

Let $cC_k\!:=\!\{\mathbf{cc}([x]_t,y)\1 (x,y)\!\in\! |C_k|\}$. 
Since there are $k$ elements in $C_k$ and $|cC_k|\leq k-1$, by pigeonhole principle, there are two vertices $(a^{\star},b^{\star}),(c^{\star},d^{\star})$ such that $\mathbf{cc}([a^{\star}]_t,b^{\star})\!=\!\mathbf{cc}([c^{\star}]_t,d^{\star})$. If $(a^{\star},b^{\star})\!\in\! P$ or $(c^{\star},d^{\star})\!\in\! P$, then by Definition \ref{iterative-expansion}, there is no edge between these two vertices. Therefore, to have a $k$-clique, both $(a^{\star},b^{\star})$ and $(c^{\star},d^{\star})$ should be in $Q$. Recall Lemma \ref{cm=depth}, for any $(x,y)\in Q$, $\mathbf{cc}([x]_t,y)=y$ mod $k-1$. Therefore, $\mathbf{cc}([a^{\star}]_t,b^{\star})=\mathbf{cc}([c^{\star}]_t,d^{\star})=0$. In other words, $\{b^{\star},d^{\star}\}=\{0,k-1\}$. 
Assume without loss of generality that
\begin{equation}\label{no-k-clique-eqn1}
b^{\star}=0 \mbox{ and } d^{\star}=k-1.
\end{equation}  
There are three cases need to consider. 
\begin{enumerate}[(1)]
\item $Q=\emptyset$: As have just explained, we have $(a^{\star},b^{\star})$ in Q. Hence  a contradiction occurs.
\item $Q\neq\emptyset$, and for any vertex $(x,y)$ of $C_k$, $[x]_t\equiv 0$ (mod $k-1$):\\
Let $P^{\prime}\!=\!\{(u,\!v)\!\in\!\mathbb{X}_{t+1}^*\1 \exists (u^{\prime},\!v)\!\in\! P \mbox{ s.t. }  [u^{\prime}]_{t+1}\!=\![u]_{t+1}\}$.
The second coordinates of the $k$ vertices of $P^{\prime}\cup Q$, all of which are in $\mathbb{X}_{t+1}^*$, are different. Since $C_k$ is the $k$-clique that has the maximum index, hence  $\mathfrak{B}_{k,m}[P^{\prime}\!\cup\! Q]$ cannot be a $k$-clique and there are two vertices, say $(a,b)$ and $(c,d)$, of $P^{\prime}\cup Q$ such that $((a,b)(c,d))\notin E_*^B$. 
\begin{enumerate}[i)]
\item $(a,b)$, $(c,d)$ are vertices of $C_k$: Straightforward contradiction.
\item $(a,b)\in |C_k|$ while $(c,d)\notin |C_k|$ (the case when $(c,d)\in |C_k|$ while $(a,b)\notin |C_k|$ is symmetric): Because $(c,d)\notin |C_k|$, $(c,d)\notin Q$. 
Then by (\ref{no-k-clique-eqn1}), $d\in [1,k-2]$.  
Let $(c^{\prime},d)\in P$ where $[c^{\prime}]_{t+1}=[c]_{t+1}$. By Lemma \ref{projection}, $c=\llparenthesis c^{\prime}\rrparenthesis_{t+1}$. 
By Lemma \ref{cm=depth}, $[a]_t\equiv [c]_t\equiv 0$ (mod $k-1$). 
Note that $\llparenthesis a\rrparenthesis_{t+1}=a$. 
Then by the definition of $\restriction\!\! \Omega$ and Definition \ref{iterative-expansion},  $(a,b)\in\chi(c^\prime,d)\!\!\restriction\!\!\Omega$ (cf. Definition \ref{type-label}), which means that $(a,b)$ is not adjacent to $(c^\prime,d)$. A contradiction occurs. 

\item $(a,b),(c,d)\notin |C_k|$: By (\ref{no-k-clique-eqn1}), we have $b,d\in [1,k-2]$.  
Let $(a^{\prime},b),(c^{\prime},d)\in P$ such that $[a^{\prime}]_{t+1}=[a]_{t+1}$ and $[c^{\prime}]_{t+1}=[c]_{t+1}$. 
By definition, either $(a^\prime,b)\in \chi(c^\prime,d)\!\!\restriction\!\!\Omega$ or $(c^\prime,d)\in \chi(a^\prime,b)\!\!\restriction\!\!\Omega$. In other words, $(a^\prime,b)$ is not adjacent to $(c^\prime,d)$. 
 we arrive at a contradiction.

\end{enumerate} 

\item $Q\neq\emptyset$ and there exists a vertex $(x,y)\in P$ such that $[x]_t\not\equiv$ 0 mod $k-1$:

Recall (\ref{no-k-clique-eqn1}) that if  $\mathbf{cc}([a^{\star}]_t,b^{\star})=\mathbf{cc}([c^{\star}]_t,d^{\star})$ then $(a^{\star},b^{\star}),(c^{\star},d^{\star})$ are in $Q$ and  $\{b^{\star},d^{\star}\}=\{0,k-1\}$. Hence $0<y<k-1$.   
Now consider the vertices in $P$. Their second coordinates are in the range $[1,k-2]$.  
Imagine that we have a sequence of slots numbered by $0,1,\ldots,k-1$, some of which are already occupied with billiards balls, i.e. a ball with a number $i$ is filled in the $i$-th slot. In particular, the 0-th and $(k-1)$-th slots are filled. And we want to fill the left slots with balls in the same way, i.e. we want to fill the $i$-th empty slot with a ball labelled with $i$. If we put a ball to a slot in some wrong way, then we can find two slots whose balls are in disorder: there are $l_1,l_2,s_1,s_2\in [1,k-2]$ such that $(l_1-l_2)(s_1-s_2)<0$ and a ball with label $l_1$ is filled in the $s_1$-th slot and a ball with label $l_2$ is filled in the $s_2$-th slot. In our context,  a vertex $(u,v)\in P$ is a ``ball'', and the number $\mathbf{cc}([u]_t,v)$ is the label on it. The $i$-th row of $\mathfrak{B}_{k,m}$ is the $i$-th slot. Since there is a vertex $(x,y)$ ($0<y<k-1$) which is not in the right ``slot'', i.e. $[x]_t\not\equiv 0$ (mod $k-1$) (hence $(x,y)$ must be a vertex in $P$), the sequence of the vertices of $P$ is in disorder. It means that there is another vertex $(x^{\prime},y^{\prime})\in |C_k|$ ($0<y^{\prime}<k-1$) such that $[x^{\prime}]_t\not\equiv 0$ (mod $k-1$) (hence $(x^{\prime},y^{\prime})$ must be a vertex in $P$, and  $(x^{\prime},y^{\prime})$ is also not in the right ``slot'') and $(\mathbf{cc}([x]_t,y)-\mathbf{cc}([x^{\prime}]_t,y^{\prime}))(y-y^{\prime})<0$. 
If $\mathrm{SW}((x,\!y),(x^{\prime},\!y^{\prime}))=0$ then by Definition \ref{iterative-expansion} there is no edge between $(x,y)$ and $(x^{\prime},y^{\prime})$. So we arrive at a contradiction to the assumption that $C_k$ is a clique. Hence we assume that $\mathrm{SW}((x,y),(x^{\prime},y^{\prime}))\!\neq\! 0$, which implies that either $\mathrm{RngNum}(x,t)\!\neq\! -1$ or $\mathrm{RngNum}(x^{\prime},t)\!\neq\! -1$. Assume without loss of generality that $\mathrm{RngNum}(x,t)\neq -1$.  
Therefore, either $(a^{\star},b^{\star})$ or $(c^{\star},d^{\star})$ is not adjacent to $(x,y)$, by definition.   
\end{enumerate}
\end{proof}

Lemma \ref{B_k-has-no-k-clique} implies that $\widetilde{\mathfrak{B}}_{k,m}$ has no $k$-clique. 

Note that, in the proof we show that $\mathfrak{B}_{k,m}^*$ contains no $k$-clique even if we do not consider the missing of edges defined in  2) c) of Definition \ref{B_km-star}, which is important for the following observation.

\begin{corollary}
For any $t$ in $[1,m]$ and any $\gimel_t^b$, $\mathfrak{B}_{k,m}^{*\gimel_t^b}$ has no $k$-clique.
\end{corollary}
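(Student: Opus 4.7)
The plan is to observe that this corollary falls out of the proof of Lemma \ref{B_k-has-no-k-clique} with essentially no extra work, by leveraging the remark the author already flagged just above the corollary statement.

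First I would note that, by Definition \ref{B_km-star}, the only point at which $\mathfrak{B}_{k,m}^{*\gimel_t^b}$ differs from $\mathfrak{B}_{k,m}^*$ is in condition 2)c) of the edge definition: the $\gimel_t^b$-variant replaces the global ``nonadjacency via the $S$-label mechanism'' clause with the restricted clause that additionally requires $(u,v)\in \gimel_t^b(x_i,y_i)$ (and symmetrically). Every other clause (the $\Omega$-condition 2)d), the $\mathbf{cc}$/$\mathrm{sgn}$/$\mathrm{SW}$ conditions in 2)e), and the structural conditions 2)a), 2)b)) is inherited unchanged from $\mathfrak{B}_{k,m}^*$. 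In particular, $\mathfrak{B}_{k,m}^{*\gimel_t^b}$ can, in general, have \emph{more} edges than $\mathfrak{B}_{k,m}^*$, so triangle/clique-freeness does not transfer for free by a subgraph argument — it must come from the structure of the proof itself.

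Next I would carefully revisit the proof of Lemma \ref{B_k-has-no-k-clique} and check which clauses of the edge definition are actually invoked to derive a nonedge in a hypothetical $k$-clique $C_k$ of maximal index $t$. The pigeonhole step produces two vertices $(a^\star,b^\star),(c^\star,d^\star)$ with matching coordinate congruence number in the $t$-th abstraction; the subsequent case analysis derives nonedges using: (i) identity between members of $|C_k|$ in Case (2)i (no edge clause needed); (ii) the $\Omega$-clause 2)d) in Cases (2)ii and (2)iii; and (iii) the $\mathrm{SW}$/order-inversion argument from 2)e) in Case (3). At no point does the proof invoke the $S$-label clause 2)c); this is exactly the remark placed by the author immediately after the proof. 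Since condition 2)c) is the single clause that is weakened when passing from $\mathfrak{B}_{k,m}^*$ to $\mathfrak{B}_{k,m}^{*\gimel_t^b}$, every nonedge derivation in the proof continues to hold verbatim in $\mathfrak{B}_{k,m}^{*\gimel_t^b}$.

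Therefore the proof plan is: assume for contradiction that $\mathfrak{B}_{k,m}^{*\gimel_t^b}$ contains a $k$-clique, pick one of maximal index $t<m$ (the case $t=m$ is ruled out by pigeonhole on coordinate congruence numbers in the $m$-th abstraction, independently of 2)c)), and then rerun the three cases of the proof of Lemma \ref{B_k-has-no-k-clique} word-for-word, noting that each nonedge produced there remains a nonedge in $\mathfrak{B}_{k,m}^{*\gimel_t^b}$ because it uses only clauses 2)a),2)b),2)d),2)e). The resulting contradiction yields the corollary. I do not anticipate a genuine obstacle: the only thing that could go wrong is an unnoticed implicit use of 2)c), so the careful step is simply auditing each subcase — in particular Cases (2)ii and (2)iii, where one must confirm that the conclusion ``$(a,b)\in\chi(c',d)\restriction\Omega$'' (hence a nonedge) is derived from the $\Omega$-clause and not from any $S$-label consideration.
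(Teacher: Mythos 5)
Your proposal is correct and matches the paper's own (implicit) proof exactly: the paper states the corollary immediately after the remark that the proof of Lemma \ref{B_k-has-no-k-clique} never invokes clause 2)c), and offers no further argument, so your careful audit of the cases (showing only 2)a), 2)b), 2)d), 2)e) are used) is precisely the justification the paper intends. You also correctly flag the one subtlety — that $\mathfrak{B}_{k,m}^{*\gimel_t^b}$ may have \emph{more} edges, so a lazy subgraph argument does not work — which shows you understood why the clause-by-clause audit is actually needed.
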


\section{$k$-Clique needs $k$ variables in $\fo$: virtual games and associated games over changing board} \label{winning-strategy}

In this section we introduce our main result, i.e. Duplicator has a winning strategy in the $(k-1)$-pebble  game over the game board $(\widetilde{\mathfrak{A}}_{k,m},\widetilde{\mathfrak{B}}_{k,m})$ (cf. Lemma \ref{main-lemma}). As have explained, most of the time we can study the game board $(\mathfrak{A}_{k,m},\mathfrak{B}_{k,m})$ instead of $(\widetilde{\mathfrak{A}}_{k,m},\widetilde{\mathfrak{B}}_{k,m})$.

We are able to prove the following lemma, which will be used shortly to prove the next crucial observation, i.e. Lemma \ref{no-missing-edges_xi-1}.




\begin{lemma}\label{flexibility-in-same-abstraction-1}
Assume that $P\subset \mathbb{X}_r^*-\mathbb{X}_{r+1}^*$ and $l=|P|\leq k-2$, and for any $(u_i,v_i),(u_j,v_j)$, $v_i\neq v_j$ if $(u_i,v_i)\in P$ and $(u_j,v_j)\in P$. Then for any string $w_1w_2\cdots w_l\in\{0,1\}^l$ and any $y\in [1,k-2]$, where $y\neq v_i$ for any $(u_i,v_i)\in P$, there is  $(x,y)\in\mathbb{X}_r^*-\mathbb{X}_{r+1}^*$ such that for any $(u_i,v_i)\in P$,  
\begin{equation}\label{SW-arbitrary} 
\mathbf{BIT}(\mathrm{SW}((x,y),(u_i,v_i)),\hat{q}(y,v_i))=w_i.  
\end{equation}
\end{lemma}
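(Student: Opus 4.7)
The plan is to split the proof into a ``parametrization'' step and a ``bit-pattern realizability'' step. First I fix any $\mathpzc{U}_r^*$-tuple $(\llbracket x_0\rrbracket_r,y)$ in row $y$; by item~(3) of the definition of this tuple, all but possibly the leftmost of its $\mathpzc{U}_r^*$ members belong to $\mathbb{X}_r^*-\mathbb{X}_{r+1}^*$, so the index requirement is automatic for any member I pick. Using the decomposition $\mathpzc{U}_r^*=3\cdot 2^{\binom{k-2}{2}}\cdot\eta_r^*$ and writing $p:=[x]_r\bmod\mathpzc{U}_r^*$, the definition of $g$ shows that for $p\geq\tfrac{1}{3}\mathpzc{U}_r^*$ one has $g(x)=\lfloor p/\eta_r^*\rfloor\bmod 2^{\binom{k-2}{2}}$. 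Consequently, writing $N:=2^{\binom{k-2}{2}}$, each value $c\in\{0,1,\ldots,N-1\}$ is realized by $g(x)$ on two blocks of $\eta_r^*$ consecutive positions within the tuple, giving full flexibility over $g(x)$ while keeping the index equal to~$r$.

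With this parametrization in hand, I set $d_i:=g(u_i)\in[0,N)$ and $p_i:=\hat{q}(y,v_i)$ (interpreted symmetrically when $v_i<y$). The formula $\hat{q}(a,b)=b-a+\sum_{s=0}^{a-2}(k-3-s)$ identifies $\hat{q}$ with a bijection from the $\binom{k-2}{2}$ unordered pairs in $[1,k-2]$ onto $\{1,\ldots,\binom{k-2}{2}\}$, so distinctness of the $v_i$ forces distinctness of the $p_i$. The statement thus reduces to the following combinatorial claim: for any $d_1,\ldots,d_l\in[0,N)$, any distinct $p_1,\ldots,p_l$, and any target bits $w_1,\ldots,w_l$, there is some $c\in[0,N)$ with $\mathbf{BIT}(|c-d_i|,p_i)=w_i$ for every $i\leq l$. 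Once such a $c$ is found, pulling it back through the parametrization above produces the desired vertex $(x,y)\in\mathbb{X}_r^*-\mathbb{X}_{r+1}^*$, for which $\mathrm{SW}((x,y),(u_i,v_i))=(|c-d_i|)_{2;\binom{k-2}{2}}$ carries the prescribed bit at every position $\hat{q}(y,v_i)$.

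The main obstacle is the combinatorial claim itself, because subtraction borrows and the outer absolute value couple bit positions and spoil any naive independence argument. I would attack it by induction on $l$, processing the constraints in decreasing order of the positions $p_i$ and building $c$ from the highest bits downwards: at each step, with the bits of $c$ strictly above $p_i$ already pinned down by previously satisfied constraints, one observes that toggling bit $p_i$ of $c$ flips bit $p_i$ of $|c-d_i|$ up to at most a single borrow from below, which can be absorbed by the remaining lower-position constraints that are yet to be imposed. A cleaner alternative is a counting argument: each set $S_i:=\{c\in[0,N):\mathbf{BIT}(|c-d_i|,p_i)=w_i\}$ has density exactly $\tfrac{1}{2}$ on any window of $2^{p_i+1}$ consecutive integers (since within such a window the map $c\mapsto|c-d_i|$ is an involution that exchanges the two halves by bit $p_i$), and a short inclusion--exclusion estimate exploiting $l\leq k-2$ together with the distinctness of the $p_i$ gives $|\bigcap_i S_i|\geq N/2^l-O(1)>0$, from which the existence of the required $c$ is immediate.
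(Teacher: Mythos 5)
Your parametrization step is essentially the same as the paper's: both reduce the lemma to freely choosing the value of $g(x)$ within a $\mathpzc{U}_r^*$-tuple (keeping the index equal to $r$), and both then set $c:=g(x)$, $d_i:=g(u_i)$, $p_i:=\hat{q}(y,v_i)$ and aim to realize a prescribed bit pattern at the distinct positions $p_i$ in the numbers $|c-d_i|$. That reduction is correct and matches the paper.

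The gap is in the bit-pattern-realizability step. Your counting argument rests on the claim that $S_i=\{c:\mathbf{BIT}(|c-d_i|,p_i)=w_i\}$ has density exactly $\tfrac12$ on every window of $2^{p_i+1}$ consecutive integers, ``since $c\mapsto|c-d_i|$ is an involution that exchanges the two halves by bit $p_i$.'' That is false whenever $d_i$ lies inside the window, where the map is not an involution of the window and the fold at $d_i$ destroys the balance. For instance with $d_i=2$, $p_i=1$, window $\{0,1,2,3\}$ one gets $|c-d_i|\in\{2,1,0,1\}$ whose bit~$1$ values are $\{1,0,0,0\}$, density $\tfrac14$. The resulting error in $|S_i|$ is $\Theta(2^{p_i})$, which for the largest position is $\Theta(N)$, so the claimed bound $|\bigcap_i S_i|\ge N/2^{l}-O(1)$ does not follow, and there is no independence argument in sight to recover it. Your inductive alternative (process positions from high to low, toggle bit $p_i$ of $c$) is not developed enough to repair this: adding or subtracting $2^{p_i}$ from $c$ can launch a carry or borrow chain from position $p_i$ up through positions $p_j>p_i$ in one or more of the $|c-d_j|$, and the chain direction is different for indices with $c>d_j$ versus $c<d_j$, so the already-satisfied higher constraints are not automatically preserved. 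The paper resolves exactly this difficulty with the opposite scan order (low to high), using the $\mathrm{trip}(\cdot)$ increments and the truncation $\downarrow[0,\hat{q}(\cdot)]$ to control which higher bits get disturbed, followed by an explicit inner loop that re-adjusts the already-processed lower positions whenever they break. You would need a comparable bookkeeping argument, or a genuinely independent reason why the carries are harmless, before either of your two routes can be called a proof.
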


The following observation, together with Lemma \ref{universal-simulator}, gives Duplicator the freedom in the scenario when she uses Strategy \ref{xi-1} (cf. the following main Lemma \ref{main-lemma}). Recall that, in the proof of Theorem \ref{existential-case}, Duplicator is always able to pick a vertex that is adjacent to all the pebbled vertices in $\mathcal{B}_k$. The following lemma (in particularly \textit{(3)}) roughly says the similar thing. It allows Duplicator to first choose a vertex that is adjacent to all the pebbled vertices. Afterwards, Duplicator can adjust and make her pick by looking for a proper one around this vertex.  
\begin{lemma}\label{no-missing-edges_xi-1}
Let $1<t\leq m$. For any multiset $H$ of $k-2$ vertices $(x_1,y_1),\ldots,\\(x_{k-2},y_{k-2})\!\in\!\mathbb{X}_{t}^*$ and $y\!\in\! \{0,\cdots\!,k\!-\!1\}-\{y_i \mid 1\leq i\leq k-2\}$, 
the following hold. 
\begin{enumerate}[(1)]
\item  If there is $0\leq c\leq k-1$ such that $c\neq y$ mod $k-1$ and $c\neq y_i$ mod $k-1$ for any $1\leq i\leq k-2$, then 
for any $(x^{\star},y)\in \mathbb{X}_{t-1}^*$, there is a vertex $(x^\sharp,y)\in (\llbracket x^{\star}\rrbracket_{t-1},y)$ such that $((x^\sharp,y),(x_i,y_i))\!\in\! E_*$ for any $(x_i,y_i)\in H$;

\item  For any $(x^{\prime},y)\in \mathbb{X}_{t-1}^*$ where $(x^{\prime},y)\!\!\restriction\!\! S=\emptyset$,  
if $(\llparenthesis x^{\prime}\rrparenthesis_t,y)$ is adjacent to every vertex in $H$, then there is a vertex $(x^{\prime\prime},y)\in (\llbracket x^{\prime}\rrbracket_{t-1},y)$ s.t. $\mathrm{idx}(x^{\prime\prime},y)=t-1$, and $((x^{\prime\prime},y),(x_i,y_i))\in E_*$ for any $(x_i,y_i)\in H$; 

\item On condition that there are $(x_i,y_i),(x_j,y_j)\in H$ s.t. $x_i\neq x_j$ and $y_i=y_j$,  there is $(x,y)\!\in\! \mathbb{X}_{t-2}^*-\mathbb{X}_{t-1}^*$ s.t.   
$((x,y),(x_i,y_i))\!\in\! E_*$ for any $(x_i,y_i)\in H$, $[x]_{t-2}\equiv 0\hspace{3pt} (\mbox{mod }k-1)$, $g(x)=0$  and $\mathrm{RngNum}(x,t-2)=-1$. 

\item 
On condition that $y_i\neq y_j$ for any $(x_i,y_j)\neq (x_j,y_j)$, there is $(x,y)\!\in\! \mathbb{X}_{t-1}^*-\mathbb{X}_{t}^*$ such that    
$((x,y),(x_i,y_i))\!\in\! E_*$ for any $(x_i,y_i)\in H$, $[x]_{t-1}\equiv 0\hspace{3pt} (\mbox{mod }k-1)$, $g(x)=0$  and $\mathrm{RngNum}(x,t-1)=-1$. 
\end{enumerate}
\end{lemma}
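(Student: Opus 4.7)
\bigskip

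\noindent\textbf{Proof plan for Lemma \ref{no-missing-edges_xi-1}.}

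The plan is to construct the required vertex in each case by invoking Lemma \ref{universal-simulator} to prescribe a congruence label, and then to discharge, one by one, the conditions from Definition \ref{B_km-star} that could prevent the desired edges from appearing in $E_*$. The high-level idea is that by choosing the congruence label with $\mathrm{RngNum} = -1$ and $S = \emptyset$, most obstructions collapse, and only the $\mathbf{cc}$-inequality and the $\Omega$-condition need genuine combinatorial input.

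For part (1), I would take any vertex $x^\star \in \mathbb{X}_{t-1}^*$ and apply Lemma \ref{universal-simulator} inside the $\mathpzc{U}_{t-1}^*$-tuple $(\llbracket x^\star\rrbracket_{t-1},y)$ to produce $(x^\sharp,y) \in \mathbb{X}_{t-1}^*$ whose congruence label is $\underline{c,y;t-1;-1;\emptyset}$ for the value $c$ supplied by the hypothesis. Since all $(x_i,y_i) \in H$ have $\mathrm{idx}(x_i,y_i) \geq t > t-1$, Lemma \ref{cm=depth} gives $\mathbf{cc}([x_i]_{t-1},y_i) = y_i \bmod (k-1)$, which differs from $c = \mathbf{cc}([x^\sharp]_{t-1},y)$ by choice of $c$; clause (e) of Definition \ref{iterative-expansion} is then settled together with $\mathrm{sgn} = 0$ from $\mathrm{RngNum}(x^\sharp,t-1) = -1$ (via Lemma \ref{sgn-equal-0}). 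The $\rightsquigarrow$ conditions of clause (c) of Definition \ref{B_km-star} are inactive because $S(x^\sharp) = \emptyset$ and the index of $x^\sharp$ is strictly less than that of every $x_i$, so $\mathbf{cl}(x^\sharp,y) \notin \chi(x_i,y_i)\!\restriction\! S$ (labels in $S$ are drawn from $\mathbf{Cl}_{\mathrm{idx}(x_i,y_i)+1} \subseteq \mathbf{Cl}_{t}$). For the $\Omega$-clause, if $c \neq 0$ then $\chi(x^\sharp,y)\!\restriction\!\Omega = \emptyset$ automatically, and $(x^\sharp,y) \notin \chi(x_i,y_i)\!\restriction\!\Omega$ because that membership would require $[x^\sharp]_{\mathrm{idx}(x_i,y_i)} \equiv 0\ (\mathrm{mod}\ k-1)$, which conflicts with the projection behaviour; if $c=0$ is the only option available, I would instead select $x^\sharp$ inside $(\llbracket x^\star\rrbracket_{t-1},y)$ with the additional freedom provided by Lemma \ref{universal-simulator} to ensure the relevant projections $\llparenthesis x^\sharp\rrparenthesis_{t}$ coincide with those of $x^\star$ or of a legitimate neighbour, so that $\Omega$ membership fails by the adjacency clause in its definition.

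Part (2) is analogous, but with the added constraint that the chosen vertex has index exactly $t-1$; this is handled by picking $(x^{\prime\prime},y)$ inside $(\llbracket x^\prime\rrbracket_{t-1},y)$ outside the minimum slot (cf.\ items (3)--(4) of the definition of $\llbracket \cdot \rrbracket_{t-1}$), so that $\mathrm{idx}(x^{\prime\prime},y)=t-1$ is automatic. The assumption $(x^\prime,y)\!\restriction\! S = \emptyset$ plus the assumed adjacency of $(\llparenthesis x^\prime\rrparenthesis_t,y)$ with each element of $H$ means that the $\Omega$-obstruction at level $t$ has already been ruled out, so the projection-propagation captured by the $\Omega$-clause passes down to level $t-1$ for the new vertex, and the $\mathbf{cc}$/sgn check proceeds as in part (1).

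Parts (3) and (4) step one further level down in abstraction, where the target vertex has index $t-2$ or $t-1$ lower than that of every element of $H$, so the minimum index in clause (e) of Definition \ref{iterative-expansion} is achieved at the target vertex. In part (3), since some $y_i = y_j$ forces the adjacency check to involve the $\mathrm{SW}$-condition whenever the chosen vertex meets these in the same row, I would additionally impose $g(x)=0$, which by \eqref{def-SW-func} yields $\mathrm{SW}((x,y),(x_i,y_i)) = 0$ vacuously whenever indices differ, removing the sign-toggle obstruction; combined with $[x]_{t-2}\equiv 0\ (\mathrm{mod}\ k-1)$ and $\mathrm{RngNum}(x,t-2)=-1$ this collapses the remaining clauses. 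Part (4) is similar but easier because distinct $y_i$'s make the ``same-row'' subcase of (e) never fire. The existence of a lattice point with the prescribed $(\mathbf{cc},\mathrm{RngNum},g,[x]_{t-2}\bmod(k-1))$-signature inside the enlarged tuple $(\llbracket\cdot\rrbracket_{t-2},y)$ (resp.\ $(\llbracket\cdot\rrbracket_{t-1},y)$) follows from Lemma \ref{universal-simulator}, augmented by Lemma \ref{flexibility-in-same-abstraction-1} if the same-index subcase of (e) ever reappears.

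\textbf{The main obstacle} I anticipate is the $\Omega$-clause when the only admissible value of $c$ is $0$, because then $\chi(x^\sharp,y)\!\restriction\!\Omega$ is genuinely nonempty and its interaction with the adjacencies of the projections $\llparenthesis x^\sharp\rrparenthesis_{t},\llparenthesis x_i\rrparenthesis_{t}$ must be tracked explicitly. My intended remedy is to exploit the fact that Lemma \ref{universal-simulator} leaves enough freedom inside the $\mathpzc{U}_{t-1}^*$-tuple to align the higher projection $\llparenthesis x^\sharp\rrparenthesis_t$ with a vertex already known to be adjacent to all $\llparenthesis x_i\rrparenthesis_t$ (in case (2) that vertex is supplied by the hypothesis; in cases (1), (3), (4) one invokes the lemma inductively, terminating at level $m$ where $\mathbb{X}_m^*$ is dense enough by \eqref{gamma_0-star} to contain such a vertex), so that neither direction of the $\Omega$-membership can trigger.
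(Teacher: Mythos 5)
Your high-level strategy — invoke Lemma~\ref{universal-simulator} to prescribe a congruence label, then discharge the clauses of Definition~\ref{B_km-star} one at a time — is the right one, and it matches the paper's for parts~(1) and~(2). However there is a conceptual error in your treatment of the $\Omega$-clause that propagates through the rest of the argument. You frame the difficult case as ``when the only admissible value of $c$ is $0$'', and even propose a workaround for it; but $c=0$ is not the relevant dichotomy. The set $\chi(x^\sharp,y)\!\restriction\!\Omega$ is nonempty only when $[x^\sharp]_{t-1}\equiv 0\pmod{k-1}$, and since $\mathbf{cc}([x^\sharp]_{t-1},y)=[x^\sharp]_{t-1}+y\pmod{k-1}$, that happens precisely when $c\equiv y\pmod{k-1}$ — which the hypothesis of~(1) explicitly forbids. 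So $\Omega=\emptyset$ holds automatically for every admissible $c$, including $c=0$, and there is no exceptional case. Your proposed rescue (aligning $\llparenthesis x^\sharp\rrparenthesis_t$ with a ``legitimate neighbour'' via an induction up to level $m$) is therefore a fix for a non-problem; worse, it misleads you into thinking the same inductive mechanism covers parts~(3) and~(4), where the real work of the proof lives.

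That is where the genuine gap is. In parts~(3) and~(4) you \emph{must} take $[x]_{t-2}\equiv 0$ (resp.\ $[x]_{t-1}\equiv 0$) $\pmod{k-1}$, which forces $\Omega\neq\emptyset$, and then you have to show $(x_i,y_i)\notin\chi(x,y)\!\restriction\!\Omega$ for every element of $H$. Since membership in $\Omega$ is decided by non-adjacency of the projections at the next level up, this requires exhibiting a vertex of index $t-1$ (or $t$) whose projections to \emph{every} abstraction $j\in[t,m]$ are adjacent to the corresponding projections of each $(x_i,y_i)$. The paper does this explicitly: it builds $(x^\dag,y)$ of index $t$ level by level, at each step invoking Lemma~\ref{universal-simulator} to choose the $\mathbf{cc}$-value and Lemma~\ref{flexibility-in-same-abstraction-1} to control $\mathrm{SW}$, and then reduces part~(3) to~(1) followed by~(2). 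Your sketch compresses all this into ``one invokes the lemma inductively, terminating at level $m$'', which names the right tools but supplies none of the construction. Finally, your motivation for the constraint $g(x)=0$ in parts~(3) and~(4) — that it makes $\mathrm{SW}$ ``vacuously zero whenever indices differ'' — is not right either: $\mathrm{SW}$ is simply undefined when the indices differ, so the same-index subcase of clause~(e) never fires for the chosen $(x,y)$ and there is no sign-toggle to kill; the requirement $g(x)=0$ is carried along because the main lemma later needs it (e.g.\ in the boundary-checkout discussion), not because it closes any clause in Lemma~\ref{no-missing-edges_xi-1} itself.
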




Now we introduce our main lemma, which asserts that Duplicator has a winning strategy in any $m$-round $(k-1)$-pebble game over the game board $(\widetilde{\mathfrak{A}}_{k,m}, \widetilde{\mathfrak{B}}_{k,m})$. 
\begin{lemma}\label{main-lemma}
$\widetilde{\mathfrak{A}}_{k,m}\equiv_m^{k-1} \widetilde{\mathfrak{B}}_{k,m}$, for  $4\leq k$ and $(k-1)(k-2)<m$. 
\end{lemma}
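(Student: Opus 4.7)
The plan is to generalise the proof of Lemma \ref{winning-strategy-in-k=3} (the $k=3$ case) by a simultaneous induction on the round number that maintains an enriched invariant set playing the role of the conditions $\mbox{1}^\diamond$--$\mbox{6}^\diamond$. Duplicator will again play over abstractions: she tracks an index $\xi\in[1,m]$, the largest index such that the board is over the $\xi$-th abstraction, and $\theta=m-\ell+1$ rounds remaining, and she responds to each Spoiler pick $(x,y)$ by the hr-copycat rule $x'-\llparenthesis x'\rrparenthesis_\xi=x-\llparenthesis x\rrparenthesis_\xi$. Because $(k-1)(k-2)<m$, the refined abstraction-order-condition keeps $\theta<\xi$ with enough slack that Duplicator can afford to decrement $\xi$ at most once per round for the whole game; this is where the quadratic hypothesis on $m$ enters, replacing the linear bound $m\geq 3$ used for $k=3$.

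For a fresh pick or a move, Duplicator first projects to $(\llparenthesis x\rrparenthesis_\xi,y)$ and searches in the $\mathpzc{U}_\xi^*$-tuple $(\llbracket \llparenthesis x\rrparenthesis_\xi\rrbracket_\xi,y)$ for an answer $(x',y)\in\mathbb{X}_\xi^*$ that preserves 1$^\diamond$--4$^\diamond$: the hr-copycat identity, the abstraction-order-condition, partial isomorphism of projections over the $\xi$-th abstraction (w.r.t.\ both edges and order), and robustness under the boundary constants of \eqref{def-boundary-constants}. If such a target does not exist, she resorts to the $(\xi-1)$-th abstraction and decrements $\xi$ at the end of the round. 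Executing the resort step is the main new difficulty over $k=3$: Duplicator must simultaneously match edges to up to $k-2$ pebbled vertices, respect $\chi(x',y)\!\restriction\!\Omega$ and $\chi(x',y)\!\restriction\! S$, and cope with the $\mathrm{sgn}$/$\mathrm{SW}/\mathrm{RngNum}$ machinery of Definition \ref{iterative-expansion}, 2(e). The plan is to use Lemma \ref{no-missing-edges_xi-1} to first locate a vertex in $\mathbb{X}_{t-1}^*-\mathbb{X}_t^*$ adjacent to every pebbled projection (choosing the applicable part (1)--(4) according to which rows and indices appear), then use Lemma \ref{universal-simulator} to adjust inside the $\mathpzc{U}_{\xi-1}^*$-tuple to realise any prescribed congruence label $\underline{\mathfrak{a},y;\xi-1;\mathsf{R};w}$, and finally Lemma \ref{flexibility-in-same-abstraction-1} to set the $\mathrm{SW}$-bits so that same-row conflicts from clause 2(e) are avoided. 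The case split $y\in\{0,k-1\}$ versus $y\in[1,k-2]$ mirrors the split in the definitions and handles obstacle (3$^\star$) from the beginning of Section \ref{section-structures}.

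Board histories provide the device for obstacle (4$^\star$). The plan is to argue that in fact Duplicator is implicitly playing the associated game on $(\mathfrak{A}_{k,m}^{*\gimel_t^a},\mathfrak{B}_{k,m}^{*\gimel_t^b})$ where $\gimel^a_t,\gimel^b_t$ record exactly the history of placements up to round $t$; the $\rightsquigarrow$-edges blacked out in Definition \ref{iterative-expansion}, 2(c), thereby coincide with those forbidden by the actual play, so partial isomorphism on the associated structure lifts to partial isomorphism on $(\widetilde{\mathfrak{A}}_{k,m},\widetilde{\mathfrak{B}}_{k,m})$. Because the ordering on $\mathbb{X}_1$ groups each board history into a block of $\gamma_{m-1}^*\times k$ consecutive first coordinates (the order introduced on page \pageref{page-history-order}), Duplicator can always pick inside the block matching the current play without disturbing the abstraction-order-condition. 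The circular shifts $tr(\cdot)$ are used exactly as for $k=3$: when Spoiler tries the boundary-checkout strategy, Duplicator responds inside the $(\xi-1)$-th abstraction with $[x']_{\xi-1}\equiv 0\pmod{k-1}$, and Lemmas \ref{cc-boundary-vertex-is-0}, \ref{boundary-index-over-abstractions}, \ref{conquer-boundary-strategy} show that the $\Omega$-sets at the boundary match.

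The closing argument is that invariants 1$^\diamond$--4$^\diamond$ entail 6$^\diamond$ by the local-isomorphism principle of Remark \ref{ExplanationOfAbstraction-specalcase}, now extended to $k>3$ via the fact that agreeing on the $\xi$-th abstraction (including $\Omega$, $S$, $\mathrm{RngNum}$, and the relevant bits of $\mathrm{SW}$) forces isomorphism of the induced subgraph on all complete expansions $cex(\cdot,\cdot,\xi-1)$ of the pebbled vertices. The part I expect to be the real obstacle is verifying this local-isomorphism principle in the general case: the four-coordinate type label $\chi(x,y)$ carries much more information than in the $k=3$ proof, and the argument has to track how forbidden edges in the $\xi$-th abstraction propagate down through the interaction between $\chi\!\restriction\!\Omega$, $\chi\!\restriction\! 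S$, the $\mathrm{SW}$ matrix, and the board-history component, in both $\mathfrak{A}$ and $\mathfrak{B}$ simultaneously. Once this propagation lemma is established (a several-page, largely mechanical verification extending Lemmas \ref{approxi-copy-cat}--\ref{corollary-approxi-copy-cat} cited in Remark \ref{ExplanationOfAbstraction-specalcase}), the induction closes exactly as in the $k=3$ proof, giving $\widetilde{\mathfrak{A}}_{k,m}\equiv_m^{k-1}\widetilde{\mathfrak{B}}_{k,m}$ and hence, by Lemma \ref{A-km-has-clique} and Lemma \ref{B_k-has-no-k-clique}, the desired separation.
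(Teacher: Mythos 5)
Your plan matches the paper's architecture: abstraction-level games tracking $\xi$ and $\theta$, hr-copycat, resorting to the $(\xi-1)$-th abstraction, board histories via $\gimel_t^a,\gimel_t^b$, boundary defence via the circular shifts, and the right auxiliary lemmas (\ref{no-missing-edges_xi-1}, \ref{universal-simulator}, \ref{flexibility-in-same-abstraction-1}). But there are two genuine gaps between this sketch and a proof, and they are where the paper has to introduce new machinery rather than ``close exactly as in the $k=3$ proof.''

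First, the exact hr-copycat rule $x'-\llparenthesis x'\rrparenthesis_\xi = x-\llparenthesis x\rrparenthesis_\xi$ that you carry over from $k=3$ cannot in general be maintained once $k\geq 4$. Duplicator must adjust her offset inside a $\mathpzc{U}_i^*$-tuple to realize the required congruence label, $\chi\!\restriction\! S$-set, SW-bits and RngNum value, and the offset that achieves this typically differs from Spoiler's. The paper therefore replaces the exact identity by the \emph{approximate} hr-copycat condition (\ref{main-diamond-xi}$^\diamond$) with eight sub-conditions (i)--(viii) specifying precisely which data survive and at what granularity (offset only modulo $\beta_{m-i-1}^{m-i}$, relative tuple position only when the $S$-sets agree, etc.). Identifying this weaker invariant, and proving it is still strong enough to push partial isomorphism down one abstraction (Remark \ref{partial-isom-propagate}), is not a mechanical extension of Lemmas \ref{approxi-copy-cat}--\ref{corollary-approxi-copy-cat}; it is a design decision that the $k=3$ case does not force on you, because there $\mathpzc{U}_i^*=1$ and there is no label to adjust.

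Second, you never address the $(\varkappa)$ obstruction: Spoiler can win on the linear order if two pebbled pairs lie in the same $l_\xi$-unit yet have opposite fine-scale orders. For $k\geq 4$ this can be forced through the $\chi\!\restriction\! S$ component, and defeating it while simultaneously preserving $\mathrm{S}_i^A\Vdash\mathrm{S}_i^B$ (condition (iv)) is the hardest step of Strategy \ref{t<xi}. The paper handles it with a recursive ``game-reduction'' device --- pairs of auxiliary games, one an Ehrenfeucht-Fra\"iss\' e\xspace game on pure linear orders and the other a one-round pebble game on a changing board \eqref{eqn3-1-round-game-reduction}, played per element of the $p$-tuple encoded in $\chi\!\restriction\! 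S$ --- together with Claim \ref{board-history-evolutions}, which needs its own induction to align the orderings of board histories. None of this appears in your plan; the $k=3$ proof never encounters it because $\chi\!\restriction\! S=\emptyset$ there. This is the part your phrase ``largely mechanical verification'' understates.
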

At each round of the game, Duplicator's strategy first 
works in some specific abstraction of the associated 
structures $\widetilde{\mathfrak{A}}_{k,m}^*$ and $\widetilde{\mathfrak{B}}_{k,m}^*$, which will be explained soon in the proof. 
Suppose that in the current round the players are playing in the $\xi$-th abstraction of the structures. That is, for any pebbled vertex $(u,v)$, Duplicator regards $(\llparenthesis u\rrparenthesis_\xi,v)$, instead of $(u,v)$, as been pebbled.  
As in the proof of Theorem \ref{existential-case}, an indispensable component of a  strategy of Duplicator is to ensure that the players pick a pair of pebbles in the same row of the structures in each round. For each $i$, the $i$-th row of the structures consists of several intervals delimited by pebbled vertices. 
When we talk about intervals, they are not overlapped. 
Note that, at the beginning of the game, there is only one interval $[(0,i),(\gamma_{m-\xi}^*-1,i)]$ for the $i$-th row.\footnote{Such ``interval'' does not really exists since there is no pebble or delimiter that marks its boundary. It is an imaginary interval that exists in Duplicator's mind.} In each round of the game, Duplicator ensures that $\widetilde{\mathfrak{A}}_{k,m}^{*(\xi)}$ and $\widetilde{\mathfrak{B}}_{k,m}^{*(\xi)}$ have the same number of intervals in the same row. And if Spoiler puts a pebble in the $j^\star$-th interval of a row, so does Duplicator in the other structure in the same row, for any $j^\star$.  

In the following we introduce the basic ideas that will be used to deal with the linear orders. 
By a folklore knowledge (cf. Remark \ref{remark-linear-orders-1}), we know that it is impossible for Spoiler to find the difference between two linear orders if their lengths are large enough. 

\begin{fact}\label{linear-orders-1}
For any $m\geq m^\prime\geq 0$, if $\mathcal{O}_a,\mathcal{O}_b$ are linear orders of length greater than or equal to $2^m-1$, then $\mathcal{O}_a\equiv_m^{m^\prime} \mathcal{O}_b$.
\end{fact}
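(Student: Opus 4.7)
My plan is to treat this as the folklore Ehrenfeucht--Fra\"iss\'e result for finite linear orders, proved by induction on $m$. First I would reduce to the case $m'=m$: since having fewer pebbles can only restrict Spoiler (any play in $\Game_m^{m'}$ is a play in $\Game_m^m$ in which Spoiler never uses more than $m'$ pebbles simultaneously), a winning strategy for Duplicator in the $m$-pebble, $m$-round game yields one in the $m'$-pebble, $m$-round game whenever $m'\leq m$. So it suffices to prove $\mathcal{O}_a\equiv_m^{m} \mathcal{O}_b$ whenever both orders have length $\geq 2^m-1$.

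The induction is on $m$. The base $m=0$ is trivial. For the step, Duplicator maintains the following invariant: after $\ell$ rounds the pebbled positions split each of $\mathcal{O}_a,\mathcal{O}_b$ into the same number of intervals in the same left-to-right order, and each pair of corresponding intervals either has exactly the same length or both have length $\geq 2^{m-\ell}-1$. The invariant implies partial isomorphism with respect to $\leq$ at every stage.

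To preserve the invariant in the $(\ell{+}1)$-st round, suppose Spoiler plays inside a pair of corresponding intervals $I_a\subseteq\mathcal{O}_a$, $I_b\subseteq\mathcal{O}_b$ (moves outside any such pair cannot happen, as the pebbled endpoints already match). If $|I_a|=|I_b|$, Duplicator mimics literally. Otherwise both have length $\geq 2^{m-\ell}-1$; say Spoiler picks $a\in I_a$ which splits $I_a$ into pieces of size $p$ and $q$ with $p+q=|I_a|-1$. If both $p,q\geq 2^{m-\ell-1}-1$, Duplicator picks $b\in I_b$ whose corresponding split has both sides $\geq 2^{m-\ell-1}-1$, which exists because $|I_b|\geq 2^{m-\ell}-1\geq 2(2^{m-\ell-1}-1)+1$. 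If $p<2^{m-\ell-1}-1$ (symmetrically for $q$), Duplicator picks $b$ so that its left piece has length exactly $p$, noting that the right piece then has length $|I_b|-1-p\geq 2^{m-\ell}-2-p\geq 2^{m-\ell-1}-1$. This exhibits the invariant at round $\ell+1$.

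After $m$ rounds we need only check partial isomorphism, which the invariant forces directly. The one subtle point, which I expect to be the main bookkeeping task rather than a genuine obstacle, is verifying that the invariant behaves correctly when Spoiler reuses an existing pebble (i.e., picks the same vertex again, or a vertex coinciding with a pebbled vertex in the other structure); in both cases the induced ``new'' split is trivial and Duplicator responds by placing her pebble on the matching already-pebbled vertex. With that case handled, the inductive step is complete and the fact follows.
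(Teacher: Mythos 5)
Your proof is correct and takes essentially the same approach as the paper: the paper (in Remark~\ref{remark-linear-orders-1}) also reduces the $m'$-pebble claim to the $m$-round Ehrenfeucht--Fra\"iss\'e game and uses precisely the same interval-splitting invariant (its condition $(\mbox{apx-}1)$: corresponding intervals are either exactly equal in length or both of length $\geq 2^{m-\ell}-1$). The only stylistic difference is that the paper cites the EF-game fact as folklore and records the invariant rather than re-deriving it, whereas you spell out the bookkeeping; the substance is identical.
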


Let $\ell_c$ be some number in $[1,m]$. 
In the $\ell_c$-th round of the  game over abstractions where $\xi> m-\ell_c$,  
recall that all the picked vertices are ``projected'' in $\mathbb{X}_\xi^*$, and 
assume that Spoiler ``picks'' vertex $(c,y)$ in the interval $[(a,y),(b,y)]$, where $c=\llparenthesis x\rrparenthesis_\xi$ and $(x,y)$ is the actually picked vertex in the associated game, thus splitting the  interval $[(a,y),(b,y)]$ into two smaller intervals $[(a,y),(c,y)]$ and $[(c,y),(b,y)]$. And assume that the corresponding interval in the other structure is $[(a^{\prime},y),(b^{\prime},y)]$. Note that all these vertices mentioned, e.g. $(a,y)$, are in $\mathbb{X}_{\xi}^*$. 
Let   $l_\xi:=\mathpzc{U}_\xi^*\cdot\beta_{m-\xi}^{m-1}$. If we regard every $l_\xi$ successive vertices as \textit{one object}, \label{def-main-object}
we get a linear order $\preceq^\xi$ induced from the original linear orders of the structures: $u\preceq^\xi u^{\prime}$ if and only if $\lfloor u/l_\xi\rfloor\leq \lfloor u^{\prime}/l_\xi\rfloor$, for any $u,u^{\prime}$ in $\mathbb{X}_1^*$. In the $\ell_c$-th round, Duplicator picks $(x^\prime,y)$ to respond Spoiler. Let $c^{\prime}:=\llparenthesis x^\prime\rrparenthesis_\xi$. Duplicator needs to ensure that $(c^\prime,y)$ is in the interval $[(a^{\prime},y),(b^{\prime},y)]$ such that the following condition, called \textit{abstraction-order-condition}, holds:  
 for any $1\leq i\leq \xi$, 
\begin{enumerate}[a)]
\item if $0<\lfloor c/l_i\rfloor\!-\!\lfloor a/l_i\rfloor <2^{m-\ell_c}-1 \!\mbox{ then }\\ \indent \hspace{10pt}\lfloor c^{\prime}/l_i\rfloor-\lfloor a^{\prime}/l_i\rfloor=\lfloor c/l_i\rfloor-\lfloor a/l_i\rfloor$; otherwise,
\item if $0<\lfloor b/l_i\rfloor-\lfloor c/l_i\rfloor < 2^{m-\ell_c}-1$ then\\  $\indent\hspace{10pt}\lfloor b^{\prime}/l_i\rfloor-\lfloor c^{\prime}/l_i\rfloor=\lfloor b/l_i\rfloor-\lfloor c/l_i\rfloor$; otherwise,

\item if $\lfloor c/l_i\rfloor-\lfloor a/l_i\rfloor=0$ or $\lfloor b/l_i\rfloor-\lfloor c/l_i\rfloor=0$, then $[c]_i-[a]_i=[c^\prime]_i-[a^\prime]_i$ or $[b]_i-[c]_i=[b^\prime]_i-[c^\prime]_i$ respectively; otherwise,

\item 
$\lfloor c^{\prime}/l_i\rfloor-\lfloor a^{\prime}/l_i\rfloor\geq 2^{m-\ell_c}-1$ and $\lfloor b^{\prime}/l_i\rfloor-\lfloor c^{\prime}/l_i\rfloor\geq 2^{m-\ell_c}-1$. 

\end{enumerate}
Note that, this strategy implies that, if Spoiler puts a pebble on the vertex  that is already pebbled, so does Duplicator; and if Spoiler picks a new vertex, so does Duplicator in the other structure. We call  $[\lfloor c/l_\xi\rfloor,\lfloor a/l_\xi\rfloor]$ ``\textit{unabridged interval}'' in the $\xi$-th abstraction, for any $(c,y),(a,y)\in\mathbb{X}_\xi^*$. Note that, such concepts like induced linear orders and unabridged intervals allow Duplicator takes care of the linear orders first, meanwhile leave space for the considerations for the partial isomorphism issue with respect to edges. 

If the abstraction-order-condition can be preserved, 
then Duplicator can win the game over the pair of (pure) induced linear orders. 
If  this requirement cannot be satisfied, then Duplicator looks for the $(\xi-1)$-th abstraction for a solution: assume that the length of any unabridged interval in the  $\mathbf{\xi}$-th abstractions, e.g. $[\lfloor c/l_\xi\rfloor,\lfloor a/l_\xi\rfloor]$, is at least 1, then the following always holds:\footnote{Recall that $(a,y),(c,y),(a^\prime,y),(c^\prime,y)\in \mathbb{X}_\xi^*$. Because $\lfloor c/l_\xi\rfloor\neq\lfloor a/l_\xi\rfloor$, by induction hypothesis we have $\lfloor c^\prime/l_\xi\rfloor\neq\lfloor a^\prime/l_\xi\rfloor$. Note that $\xi\geq\theta>m-\ell_c$. It implies that $\lfloor c^\prime/l_{\xi-1}\rfloor-\lfloor a^\prime/l_{\xi-1}\rfloor\geq \frac{\beta_{m-\xi}^{m-\xi+1}}{\mathpzc{U}_{\xi-1}^*}=2^{\xi-1}\geq 2^{m-\ell_c}$.} 
\begin{equation}\label{xi-order-requirement-ensured}
\lfloor c^{\prime}/l_{\xi-1}\rfloor-\lfloor a^{\prime}/l_{\xi-1}\rfloor\geq 2^{m-\ell_c}; \lfloor b^{\prime}/l_{\xi-1}\rfloor-\lfloor c^{\prime}/l_{\xi-1}\rfloor\geq 2^{m-\ell_c}.
\end{equation}
 By Fact \ref{linear-orders-1}, since the linear orders are large enough now, it allows Duplicator to respond Spoiler properly for one more round, at the price that $\xi$ decreases by 1. 
By Lemma \ref{HighOrder-is-LowOrder}, if distinct pebbles are put on distinct objects of the induced linear orders, then the orders are preserved in the lower abstractions. In Remark \ref{abstract-order-in-main-lemma}, we discuss the order issue in more detail along this line. 

However, 
Spoiler still has a way to win the game via linear orders, if there are two pairs of vertices $(u_1,v)$, $(u_1^{\prime},v)$, $(u_2,v)$, and $(u_2^{\prime},v)$ in the $\xi$-th abstraction of the structures, such that \label{page-def-varkappa}
\begin{itemize}
\item $(u_1,v)\Vdash (u_1^{\prime},v)$ and $(u_2,v)\Vdash (u_2^{\prime},v)$;

\item $\lfloor u_1/l_\xi\rfloor=\lfloor u_2/l_\xi\rfloor$;

\item $u_1<u_2\Leftrightarrow u_2^{\prime}<u_1^{\prime}$. \hfill  $(\varkappa)$

\end{itemize}

We use ``\textbf{virtual games}'' to denote the kinds of pebble games wherein the players play the games in their mind (following the usual rules) without really putting pebbles on the board. 
Now we introduce a sort of imaginary \textbf{games over changing boards}. That is, the game board can be different in each round. Certainly, this is not a precise definition for it doesn't tell us how the game board changes. In the following we introduce a specific kind of such games. Moreover, they are a sort of virtual games. 
Recall that, in $\widetilde{\mathfrak{A}}_{k,m}$ and $\widetilde{\mathfrak{B}}_{k,m}$, every vertex is associated with a board history, which is supposed to reflect reasonable ``evolution logic''  of the game. If Spoiler picks $(x,y)$ in a structure, say $\widetilde{\mathfrak{A}}_{k,m}$, 
Duplicator uses virtual games to determine the board history of the vertex $(x^\prime,y)$ she is going to pick. Note that,  \textit{from here on} we assume that $(x,y),(x^\prime,y)\in\mathbb{X}_1$ (in this proof). For simplicity, here we assume that no vertex is pebbled before Spoiler picking $(x,y)$.\footnote{In case when some vertices are pebbled, the reader can cf. the proof of Claim \ref{board-history-evolutions} because we need to take account of the order of the board histories of pebbled vertices when playing the virtual games.} 
A virtual game in such a simple setting consists of $\mathrm{i}_{\mathrm{cur}}^{x,y}-1$ ``virtual rounds''. No vertex is pebbled at the beginning of this virtual game. Spoiler ``picks'' according to $\chi(x,y)\!\!\restriction\!\!\mathrm{BH}(j)$, for $j=1$ to $\mathrm{i}_{\mathrm{cur}}^{x,y}-1$, and Duplicator ``replies'' in the other structure, i.e.  in $\widetilde{\mathfrak{B}}_{k,m}$. \label{def-virtual-game} In the following we define the \textit{virtual game board} at the beginning of  the $j$-th round, for any $j$ in $[1,\mathrm{i}_{\mathrm{cur}}^{x,y}]$. Let  $\mathbb{Z}_{xy}^j:=\chi(x,y)\!\!\restriction\!\!\mathrm{BH}(j-1)$ and $\mathbb{Z}_{x^\prime y}^j:=\chi(x^\prime,y)\!\!\restriction\!\!\mathrm{BH}(j-1)$. \label{gimel-ell-current}
Firstly, for any vertex $(e,f)\in\mathbb{X}_1^*$, $\gimel_j^a(e,f)=\mathbb{Z}_{xy}^j$ if $(e,f)\notin \mathbb{Z}_{xy}^j-\{(*,*)\}$; otherwise, $\gimel_j^a(e,f)=\mathbb{Z}_{xy}^\ell$ where $\ell=max\{i\in [2,j]\mid (e,f)\in\mathbb{Z}_{xy}^{i}\land (e,f)\notin\mathbb{Z}_{xy}^{i-1}\}$. Similarly, $\gimel_j^b(e,f)=\mathbb{Z}_{x^\prime y}^j$ if $(e,f)\notin \mathbb{Z}_{x^\prime y}^{j}-\{(*,*)\}$;  otherwise, $\gimel_j^b(e,f)=\mathbb{Z}_{x^\prime y}^{\ell}$ where $\ell=max\{i\in [2,j]\mid (e,f)\in\mathbb{Z}_{x^\prime y}^{{i}}\land (e,f)\notin\mathbb{Z}_{x^\prime y}^{{i-1}}\}$. 
For example, $\gimel_1^a(e,f)=\gimel_1^b(e,f)=\mathrm{BC}_\emptyset$, for any $(e,f)\in\mathbb{X}_1^*$. 
At the start of the $j$-th round, the game board is $((\widetilde{\mathfrak{A}}_{k,m}^{*\gimel_j^a},\mathbb{Z}_{xy}^j),
(\widetilde{\mathfrak{B}}_{k,m}^{*\gimel_j^b},\mathbb{Z}_{x^\prime y}^{j}))$. 
   Duplicator's ``virtual responses'' determine the board history of the vertex that she should actually pick. 
Recall that the players do not really use pebbles in such virtual games.  The strategy Duplicator uses in such virtual games will be introduced soon, cf. Strategy \ref{play-in-xi-abs}\textapprox Strategy \ref{t<xi}.  

\begin{proof}
We prove that Duplicator has a winning strategy in an $m$-round $(k\!-\!1)$-pebble game.
 Let $\overline{c_A}$ be the set of pebbled vertices in  $\widetilde{\mathfrak{A}}_{k,m}$ at the start of the current round and $\overline{c_B}$ be the set of pebbled vertices in $\widetilde{\mathfrak{B}}_{k,m}$, both in the natural order. As explained in Fact \ref{star_k-2}, we  assume that the lengths of $\overline{c_A}$ and $\overline{c_B}$ are less than or equal to $k-2$.  


Soon, we shall define a condition (\ref{main-diamond-xi}$^\diamond$) in page \pageref{def-xi}. Say that \textit{the game (and the board) is over the $i$-th abstraction},  
if (\ref{main-diamond-xi}$^\diamond$) holds for any pair of pebbled vertices, e.g. $(x^\flat,y)\Vdash (x^{\prime\flat},y)$, and the projections of pebbled vertices in the $i$-th abstraction define a partial isomorphism. 
And $\xi$ is the maximum number in $[1,m]$ that makes (\ref{main-diamond-xi}$^\diamond$) hold at the end of the current round. 

We use $\xi$ to remind Duplicator  which abstraction of the structures she should take care of at the \textit{start} of the current round.  At the start of the game, let $\xi:=m$. 
 
And we use $\theta$ to record how many rounds are still available to Spoiler at the start of the current round. In other words, the current round is the $(m+1-\theta)$-th round. Let $\ell_c:=m+1-\theta$. 
 Note that $\theta:=\theta-1$ after each round by default.


\textit{From now on, in all the cases}, we assume that Spoiler picks $(x,y)$ in $\widetilde{\mathfrak{A}}_{k,m}$ such that $\mathrm{idx}(x^\flat,y)=t$ for some $t\in [1,m]$. The case when he picks in $\widetilde{\mathfrak{B}}_{k,m}$ is similar. Correspondingly, \textit{in all the cases}, assume that Duplicator picks $(x^{\prime}\!,y)$ in $\widetilde{\mathfrak{B}}_{k,m}$. Moreover, assume that  $\mathrm{idx}(x^{\prime\flat}\!,y)\!=\!t^{\prime}$ for some $t^\prime\in [1,m]$. 

Firstly, we define some ordered sets that are used in defining Duplicator's strategy. 
In the sequel we use $\overrightarrow{c_A}$ and $\overrightarrow{c_B}$ to denote the following (ordered) sets (in the natural picking order). 
\begin{equation}\label{def-overrightarrow-c_A}
\begin{split}
\overrightarrow{c_A}&:=\{(u,v)\mid (u,v)\in (x,y)[\mathrm{BC}];v\neq y\};\\
\overrightarrow{c_B}&:=\{(u,v) \mid (u,v)\in (x^\prime,y)[\mathrm{BC}];v\neq y\}. 
\end{split}
\end{equation}
Note that $|\overrightarrow{c_A}|=|\overrightarrow{c_B}|\leq k-2$ because of Fact \ref{star_k-2}. 
Now, we define the following ordered sets, whose orders inherit $\overrightarrow{c_A}$ and $\overrightarrow{c_B}$. 
\begin{multline*}
Z:=\{(u,v)\!\in\! \overrightarrow{c_{A}} \1 \mathrm{idx}(u,v)\!=\!q\!\geq\! t; \mathbf{cc}([u]_t,v)\!=\!\mathbf{cc}([x^\flat]_t,y); \\\indent\hspace{13pt} \mbox{either } (x^\flat,y) \mbox{ or } (u,v) \mbox{ is not a critical point of } \widetilde{\mathfrak{A}}_{k,m}^*\}\cup\\ 
 \indent\hspace{10pt}\{(u,v)\!\in\! \overrightarrow{c_A}\1 \mathrm{idx}(u,v)\!=\!q\!<\!t;  \mathbf{cc}([u]_q,v)\!=\!\mathbf{cc}([x^\flat]_q,y)\}; 
\end{multline*}
$\indent\hspace{-6pt} Z^{\geq\xi}:=Z\cap\mathbb{X}_\xi^*; Z^{<\xi}:=Z-Z^{\geq\xi}$;\\[-20pt]
\begin{multline*} 
U:=\{(u,v)\!\in\! \overrightarrow{c_A} \1 y,v\in [1,k-2];
\mathrm{idx}(u,v)\!=\!t;(\mathbf{cc}([u]_t,v)\!-\!\mathbf{cc}([x^\flat]_t,y))\times\\ 
(v-y)\times(-1)^{\mathbf{BIT}(\mathrm{SW}((u,v),(x^\flat,y)),\hat{q}(v,y))}<0\} \end{multline*} 
$\indent\hspace{-5pt} R:=\{(u,v)\!\in\! \overrightarrow{c_{A}} \1  \mathbf{cl}(u,v)\in\chi(x,y)\!\!\restriction\!\! S\}$;\footnote{The definition is equivalent if we replace $\chi(x,y)\!\!\restriction\!\! S$ by $\chi(x^\flat,y)\!\!\restriction\!\! S$.}\\
$\indent\hspace{-7pt} R^{\geq\xi}:=R\cap\mathbb{X}_\xi^*$; $R^{<\xi}:=R-R^{\geq\xi}$. 
\begin{equation*}
 \indent\hspace{-78pt} D:=\{(u,v)\!\in\! \overrightarrow{c_A} \1  (u,v)\!\in\!\chi(x^\flat,y)\!\!\restriction\! \Omega \mbox{ or } (x^\flat,y)\!\in\!\chi(u,v)\!\!\restriction\! \Omega\};\\[-5pt]
\end{equation*}
$\indent\hspace{-7pt} D^{\geq\xi}:=D\cap\mathbb{X}_\xi^*; D^{<\xi}:=D-D^{\geq\xi}$;\\
$\indent\hspace{-7pt} T:=\{(u,v)\!\in\! \overrightarrow{c_A} \1 v \mbox{ or }y\!\in\!\{0,k-1\};v\neq y;
 \mathrm{sgn}((u,v),(x^\flat,y))\!=\!1\}$.\\[-1pt]

Correspondingly, 
\begin{multline*}
Z^{\prime}:=\{(u,v)\in \overrightarrow{c_B}\1 \mathrm{idx}(u,v)\!=q<t^{\prime}; \mathbf{cc}([u]_q,v)=\mathbf{cc}([x^{\prime\flat}]_q,y)\}\cup\\
\{(u,v)\in \overrightarrow{c_B}\1 \mathrm{idx}(u,v)=q\geq t^{\prime};\mathbf{cc}([u]_{t^{\prime}},v)=\mathbf{cc}([x^{\prime\flat}]_{t^{\prime}},y)\};
\end{multline*}
$\indent Z^{\prime\geq\xi}:=Z^{\prime}\cap\mathbb{X}_\xi^*; Z^{\prime <\xi}:=Z^{\prime}-Z^{\prime\geq\xi}$.\\[-8pt]

Similarly, if we replace $\overrightarrow{c_A}$, $t$ and $x$ by $\overrightarrow{c_B}$, $t^{\prime}$ and $x^{\prime}$ respectively, we obtain $U^{\prime}$, $R^{\prime}$, $D^{\prime}$, $T^\prime$ and $R^{\prime\geq\xi}$ etc.


For each set $X$ that is just defined, we define an associated set $X^{(\xi)}$ by substituting $u$ with  $\llparenthesis u\rrparenthesis_{\xi}$ and $x$ with $\llparenthesis x\rrparenthesis_{\xi}$.
For example, we define $D^{(\xi)}$ as the following set.
\begin{equation*}
  D^{(\xi)}\!:=\!\{(\llparenthesis u\rrparenthesis_\xi,v) \1 (u,v)\in\overrightarrow{c_A};(\llparenthesis u\rrparenthesis_{\xi},\!v)\!\in\!\chi(\llparenthesis x^\flat\rrparenthesis_\xi,y)\!\!\restriction\! \Omega \mbox{ or } \\(\llparenthesis x^\flat\rrparenthesis_\xi,y)\!\in\!\chi(\llparenthesis u\rrparenthesis_{\xi},v)\!\!\restriction\! \Omega\}. 
\end{equation*}

Similarly, we can define $Z^{(\xi)}$, $U^{(\xi)}$, $R^{(\xi)}$, and $T^{(\xi)}$. And correspondingly we can define the due sets $Z^{\prime(\xi)}$ etc. 

Recall that $\overline{c_A}$ is the set of pebbled vertices in  $\widetilde{\mathfrak{A}}_{k,m}$ at the start of the current round. Let
\begin{align}\label{def-widetilde-c_A}
\begin{split}
\widetilde{c_A}&:=\{(u,v)\in\overline{c_A}\mid (u,v)\twoheadrightarrow (x,y)\mbox{ or }(x,y)\twoheadrightarrow (u,v);v\neq y\};\\
\widetilde{c_B}&:=\{(u,v)\in\overline{c_B}\mid (u,v)\twoheadrightarrow (x^\prime,y)\mbox{ or }(x^\prime,y)\twoheadrightarrow (u,v);v\neq y\}.
\end{split}
\end{align}

Note that, by definition, $(u,v)\twoheadrightarrow (x,y)$ implies that $(u^\flat,v)\in (x,y)[\mathrm{BC}]$. Therefore, The set $\{(u^\flat,v)\mid (u,v)\in \widetilde{c_A}\}$ is a subset of $\overrightarrow{c_A}$. Similarly, $\overrightarrow{c_B}$ subsumes $\{(u^\flat,v)\mid (u,v)\in \widetilde{c_B}\}$. 

For any set $X$ in $\{Z,U,\cdots,T\}-\{R\}$, we define a related set $\widetilde{X}$, by substituting $(u,v)\in\overrightarrow{c_A}$ with $(u,v)\in\widetilde{c_A}$, and substituting $(u,v)$ with $(u^\flat,v)$ in the defining part (the statements behind ``$\mid$''), and substituting $\mathbb{X}_\xi^*$ with $\mathbb{X}_\xi$. For example, we define $\widetilde{D}$ as the following set.
\begin{equation*}
  \widetilde{D}:=\{(u,v)\!\in\! \widetilde{c_A} \1  (u^\flat,v)\!\in\!\chi(x^\flat,y)\!\!\restriction\! \Omega \mbox{ or } (x^\flat,y)\!\in\!\chi(u^\flat,v)\!\!\restriction\! \Omega\}.
\end{equation*}
And, $\widetilde{D}^{\geq\xi}:=\widetilde{D}\cap\mathbb{X}_\xi; \widetilde{D}^{<\xi}:=\widetilde{D}-\widetilde{D}^{\geq\xi}$. 

In addition, we define $\widetilde{R}$ as the following set.
\begin{equation}
\widetilde{R}:=\{(u,v)\in\widetilde{c_A}\mid (x,y)\rightsquigarrow (u,v)\mbox{ or }(u,v)\rightsquigarrow (x,y)\}. 
\end{equation}

Moreover, $\widetilde{R}^{\geq\xi}:=\widetilde{R}\cap\mathbb{X}_\xi;  \widetilde{R}^{<\xi}:=\widetilde{R}-\widetilde{R}^{\geq\xi}$. 

Correspondingly, we can define $\widetilde{Z}^{\prime}$ etc.  

We can also define $\widetilde{Z}^{(\xi)}$ in the way like $Z^{(\xi)}$.  

Note that $\overrightarrow{c_A}\Vdash \overrightarrow{c_B}$, where ``$\Vdash$'' is defined in page \pageref{def-Vdash}. Now we define a similar denotation ``$\Vdash_i$'' as the follows. For any vertex $(u,v)\in\mathbb{X}_1^*$ picked in a game,   $(\llparenthesis u\rrparenthesis_i,v)\Vdash_i (\llparenthesis u^\prime\rrparenthesis_i,v)$ if $(u,v)\Vdash (u^\prime,v)$. Similarly, for two sest $X$ and $X^\prime$ of vertices we can define $X \Vdash_i X^\prime$ in the usual way.\\[0pt] 


Similar to the proof of Lemma \ref{winning-strategy-in-k=3}, this proof is also by simultaneous induction, wherein we show that the follows are preserved after each round. Let $\mathrm{S}_i^A:=\{(u,v)\!\in\! \overrightarrow{c_{A}} \1  \mathbf{cl}(\llparenthesis u\rrparenthesis_i,v)\in\chi(\llparenthesis x^\flat\rrparenthesis_i,y)\!\!\restriction\!\! S\}$ and $\mathrm{S}_i^B:=\{(u,v)\!\in\! \overrightarrow{c_{B}} \1 \mathbf{cl}(\llparenthesis u\rrparenthesis_i,v)\in\chi(\llparenthesis x^{\prime\flat}\rrparenthesis_i,y)\!\!\restriction\!\! S\}$. 
\begin{enumerate}[(1$^\diamond$)]
\item \label{theta-lessthan-xi} $\theta<\xi$ (after the first round); 
\item \label{xi-order-condition-hold} The abstraction-order condition holds; moreover, Duplicator's choice can prevent $(\varkappa)$ from occurring (cf. page \pageref{page-def-varkappa} for ``$(\varkappa)$'');  

\item \label{condition-for-DuWin-*}
Duplicator can win this round in the corresponding \textit{associated games over the $\xi$-th abstractions}. That is, 
$$Z^{(\xi)}\cup U^{(\xi)}\cup R^{(\xi)}\cup D^{(\xi)}\cup T^{(\xi)}\Vdash_\xi Z^{\prime(\xi)}\cup U^{\prime(\xi)}\cup R^{\prime(\xi)}\cup D^{\prime(\xi)}\cup T^{\prime(\xi)};$$ 



\item \label{main-diamond-boundary-strategy}
Duplicator's choice makes the ``boundary checkout strategy'' (cf. p.  \pageref{page-boundary-checkout-strategy}) ineffective in the next round. In other words, it means that the game board over the $(\xi-1)$-th abstraction is in partial isomophism w.r.t. edges even if we take it as if the boundaries of rows of the $(\xi-1)$-th abstraction were occupied with extra immovable pebbles.

\item \label{main-diamond-xi}
If $t,t^\prime<\xi-1$ then,\footnote{By definition of vertex index, we have $x^\flat-\llparenthesis x^\flat\rrparenthesis_\xi=x^{\prime\flat}-\llparenthesis x^{\prime\flat}\rrparenthesis_\xi=0$ if $t,t^\prime\geq \xi$; $x^\flat-\llparenthesis x^\flat\rrparenthesis_{\xi-1}=x^{\prime\flat}-\llparenthesis x^{\prime\flat}\rrparenthesis_{\xi-1}=0$ if $t=t^\prime=\xi-1$.} for $t\leq i<\xi$, on condition that $\llparenthesis x^\flat\rrparenthesis_i\neq \llparenthesis x^\flat\rrparenthesis_\xi$, \label{def-xi} 
\begin{enumerate}[(i)]
\item $\mathrm{idx}(\llparenthesis x^\flat\rrparenthesis_i,y)=\mathrm{idx}(\llparenthesis x^{\prime\flat}\rrparenthesis_i,y)$; 

\item $\mathrm{cc}([x^\flat]_i,y)=\mathrm{cc}([x^{\prime\flat}]_i,y)$;

\item 
$\llbracket \llparenthesis x^\flat\rrparenthesis_i\rrbracket_{i}^{min}-[\llparenthesis x^\flat\rrparenthesis_\xi]_{i}\equiv\llbracket \llparenthesis x^{\prime\flat}\rrparenthesis_i\rrbracket_{i}^{min}-[\llparenthesis x^{\prime\flat}\rrparenthesis_\xi]_{i}$ (mod $\beta_{m-i-1}^{m-i}$);\\ $\llbracket \llparenthesis x^\flat\rrparenthesis_i\rrbracket_{i}^{min}\leq [\llparenthesis x^\flat\rrparenthesis_\xi]_{i}$ iff $\llbracket \llparenthesis x^{\prime\flat}\rrparenthesis_i\rrbracket_{i}^{min}\leq [\llparenthesis x^{\prime\flat}\rrparenthesis_\xi]_{i}$;\footnote{Note that, it is equivalent to the condition that  $x^\flat\leq \llparenthesis x^\flat\rrparenthesis_\xi$ iff $x^{\prime\flat}\leq \llparenthesis x^{\prime\flat}\rrparenthesis_\xi$.}\\
$\llparenthesis x^\flat\rrparenthesis_i-\llbracket \llparenthesis x^\flat\rrparenthesis_i\rrbracket_{i}^{min}=\llparenthesis x^{\prime\flat}\rrparenthesis_i-\llbracket \llparenthesis x^{\prime\flat}\rrparenthesis_i\rrbracket_{i}^{min}$ if $\mathrm{S}_i^A=\mathrm{S}_i^B$;

\item $\mathrm{S}_i^A\Vdash \mathrm{S}_i^B$; 

\item $\mathrm{RngNum}(\llparenthesis x^\flat\rrparenthesis_i,t)=\mathrm{RngNum}(\llparenthesis x^{\prime\flat}\rrparenthesis_i,t)$; 

\item $g(\llparenthesis x^\flat\rrparenthesis_i)=g(\llparenthesis x^{\prime\flat}\rrparenthesis_i)$;

\item For any vertex $(u,y)\in\overline{c_A}$ and $(u^{\prime},y)\in\overline{c_B}$ where $(u,y)\!\Vdash\! (u^{\prime},y)$, if $\llbracket\llparenthesis u^\flat\rrparenthesis_i\rrbracket_{i}^{min}\!=\!\llbracket\llparenthesis x^\flat \rrparenthesis_i\rrbracket_{i}^{min}$, then  $\llparenthesis u^\flat\rrparenthesis_i\leq \llparenthesis x^\flat\rrparenthesis_i$ iff $\llparenthesis u^{\prime\flat}\rrparenthesis_i\leq\llparenthesis x^{\prime\flat}\rrparenthesis_i$; 

\item $(a^\prime,b)\leq (\llparenthesis x^{\prime\flat}\rrparenthesis_i,y)$ iff $(a,b)\leq (\llparenthesis  x^\flat\rrparenthesis_i,y)$, for any $(a,b)\in \{(e,f)\in \mathbb{X}_{1}^*\mid (u,v)\in\overline{c_A};\mathrm{idx}(u^\flat,v)<t;(e,f)\in\chi(\llparenthesis u^\flat\rrparenthesis_i,v)\!\!\restriction\!\! S\}$ and corresponding $(a^\prime,b)$.\footnote{That is, $(a^\prime,b)$ is the $j$-th item in the tuple encoding  $\chi(\llparenthesis u^{\prime\flat}\rrparenthesis_i,v)\!\!\restriction\!\! S$ if $(a,b)$ is the $j$-th item in the tuple encoding  $\chi(\llparenthesis u^{\flat}\rrparenthesis_i,v)\!\!\restriction\!\! S$. We shall see that it is possible because $|\chi(\llparenthesis u^{\flat}\rrparenthesis_i,v)\!\!\restriction\!\! S|=|\chi(\llparenthesis  u^{\prime\flat}\rrparenthesis_i,v)\!\!\restriction\!\! S|$ if $\mathrm{idx}(u^\flat,v)<t<\xi-1$.} 
\end{enumerate}


\item  \label{condition-for-DuWin}
The associated game board is still in partial isomorphism after picking $(x^\flat,y)$ and $(x^{\prime\flat},y)$. That is, 
\begin{equation}
Z\cup U\cup R\cup D\cup T\Vdash Z^{\prime}\cup U^{\prime}\cup R^{\prime}\cup D^{\prime}\cup T^\prime.
\end{equation}

\end{enumerate}  

Call (\ref{theta-lessthan-xi}$^\diamond$)\textapprox (\ref{condition-for-DuWin}$^\diamond$) a \textit{winning-condition-set} of Duplicator. (\ref{xi-order-condition-hold}$^\diamond$) and (\ref{condition-for-DuWin-*}$^\diamond$) ensure partial isomorphism in the games over the $\xi$-th abstraction (the former takes care of the linear order, while the latter  takes care of edges). And we use (\ref{main-diamond-xi}$^\diamond$) to ensure that, if the associated game board is in partial isomorphism over the $\xi$-th abstraction at the start of the current round, then it also holds over the $(\xi-1)$-th abstraction (on condition that Duplicator uses the auxiliary games \eqref{eqn-1-round-S-game_xi} in Strategy \ref{play-in-xi-abs}, and \eqref{eqn-1-round-S-game_xi-1} in Strategy \ref{xi-1}, and \eqref{eqn4-1-round-game-reduction}  \eqref{eqn3-1-round-game-reduction} in Strategy \ref{t<xi}; 
cf. Remark \ref{partial-isom-propagate}). Note that (iii) of (\ref{main-diamond-xi}$^\diamond$) implies that $x^\flat-\llparenthesis x^\flat\rrparenthesis_\xi$ roughly equals $x^{\prime\flat}-\llparenthesis x^{\prime\flat}\rrparenthesis_\xi$.\footnote{It is because a unit of difference in higher abstraction means a huge difference in lower abstractions. This observation is also used in the proof of Lemma \ref{corollary-approxi-copy-cat}.} 
Hence (\ref{main-diamond-xi}$^\diamond$) is also called the \textit{approximate hr-copycat condition}. 

Later on, when we describe and discuss Duplicator's strategy, we shall delay the discussion of (\ref{main-diamond-boundary-strategy}$^\diamond$) to the end. We treat it in this way to avoid unnecessary repetition of arguments. 

Moreover, \textit{most of the time} the readers can take $((\mathfrak{A}_{k,m},\overline{c_A}),(\mathfrak{B}_{k,m},\overline{c_B}))$ as the game board instead of $((\widetilde{\mathfrak{A}}_{k,m},\overline{c_A}),(\widetilde{\mathfrak{B}}_{k,m},\overline{c_B}))$. Only when we discuss (4$^\diamond$) should we switch back to the real game board  $((\widetilde{\mathfrak{A}}_{k,m},\overline{c_A}),(\widetilde{\mathfrak{B}}_{k,m},\overline{c_B}))$.

\vspace{6pt}
\noindent\textbf{Basis:} At the beginning of the game, $\overline{c_A}$ and $\overline{c_B}$ are empty. Hence $\overrightarrow{c_A}$ and $\overrightarrow{c_B}$ are empty. 
Recall that Spoiler picks $(x,y)$ and Duplicator replies $(x^\prime,y)$.
In the first round, suppose that $\chi(x,y)\!\!\restriction\!\!\mathrm{bc}\!=\!0$, Duplicator simply mimics Spoiler's picking. 
 Since the signature contains no unary relation symbol and the graphs contain no self-loop,  
 the board is in partial isomorphism and Duplicator wins the first  round. The value of   
 $\xi$ is still $m$ at the end of the first round. But $\theta\!:=\!\theta\!-\!1=m\!-\!1$. Hence the winning-condition-set can be ensured.
 
In this situation Duplicator is a copycat. Certainly she is also an exact hr-copycat (recall p. \pageref{def-hr-copycat} for the notion ``hr-copycat'').  Note that this simple copycat strategy also works in the (associated) games over changing boards wherein a game board  constitutes a pair of ``flat'' structures which have no pebbled vertices at the start of the first round. 
 
 However, Spoiler can ``cheat'' by picking the first vertex with an arbitrary board history at the beginning. In this case, Duplicator is no more an exact hr-copycat. 
 Instead, using Strategy \ref{play-in-xi-abs}\textapprox Strategy \ref{t<xi}, which will be introduced soon, Duplicator first plays a \textit{virtual game} that determines the board history of $(x^{\prime},y)$ (cf. page \pageref{gimel-ell-current}); then she replies Spoiler in the associated structures, using Strategy \ref{play-in-xi-abs}\textapprox Strategy \ref{t<xi}. 


\vspace{4pt}
\noindent\textbf{Induction Step:}
Assume that Duplicator wins the first $m-\theta$ rounds. We prove that she can also win the $(m-\theta+1)$-th round, and the winning-condition-set is preserved.  
In \textit{all the cases}, soon we shall see that Duplicator follows some supplementary basic strategies. Henceforth,  we always assume that $(u,v)\!\in\!\overline{c_A}$ and $(u,v)\!\Vdash\! (u^{\prime},v)$.
\begin{enumerate}[B-1]
\item Duplicator gives the abstraction-order-condition the highest priority.\footnote{Duplicator first finds the ``allowed'' positions for picking in accordance with the abstraction-order-condition. Usually such positions are assembled into intervals rather than isolated points. Generally such intervals have vertices of all the necessary type labels. Hence, Duplicator can first use it to determine the approximate position for her pick, then selects a vertex of appropriate type label using Strategy \ref{play-in-xi-abs}\textapprox Strategy \ref{t<xi}, which will be introduced soon.}


\item Duplicator always ensures that 
\begin{equation}\label{picking-in-cl}    
\mathbf{cl}(u,v)\neq\mathbf{cl}(x,y)\Leftrightarrow \mathbf{cl}(u^{\prime},v)\neq\mathbf{cl}(x^{\prime},y).
\end{equation}

\item Duplicator uses virtual games to determine the board histories of her picked vertices.  


\end{enumerate}

Note that Duplicator can ensure B-1 because of (\ref{xi-order-requirement-ensured}). Based on it, Duplicator can know the approximate position for her pick. 

Confer Remark \ref{remark-why-pick-in-cl} for the reason that Duplicator should follow B-2. 


We are able to prove the following claim: if an edge is forbidden in one structure due to discontinuities, so is the corresponding edge in the other structure. 
\begin{claim}\label{board-history-evolutions}
If Duplicator stick to B-3, she has a way to ensure that, for any $(u,v), (u^\prime,v)$ where $(u,v)\!\Vdash\! (u^{\prime},v)$,   
\begin{enumerate}[(i)]
\item 
\begin{itemize}
\item  $(u,v)\twoheadrightarrow (x,y)$ if and only if $(u^{\prime},v)\twoheadrightarrow (x^{\prime},y)$.

\item $(x,y)\twoheadrightarrow (u,v)$ if and only if $(x^{\prime},y)\twoheadrightarrow (u^{\prime},v)$. 

\end{itemize}

\item If $y=v$, then $\lfloor x/(\gamma_{m-1}^*\times k)\rfloor$ mod $bh^\#$ $\leq \lfloor u/(\gamma_{m-1}^*\times k)\rfloor$ mod $bh^\#$ if and only if $\lfloor x^\prime/(\gamma_{m-1}^*\times k)\rfloor$ mod $bh^\#$ $\leq \lfloor u^\prime/(\gamma_{m-1}^*\times k)\rfloor$ mod $bh^\#$. 
\end{enumerate}
\end{claim}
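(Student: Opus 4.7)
The plan is to exploit B-3 recursively. When Spoiler picks $(x,y)$, Duplicator first runs the virtual game from page \pageref{def-virtual-game} to determine $\chi(x',y)\!\!\restriction\!\!\mathrm{BH}$: at each virtual round $j\in[1,\mathrm{i}_{\mathrm{cur}}^{x,y}-1]$ she treats the board $((\widetilde{\mathfrak{A}}_{k,m}^{*\gimel_j^a},\mathbb{Z}_{xy}^j),(\widetilde{\mathfrak{B}}_{k,m}^{*\gimel_j^b},\mathbb{Z}_{x'y}^j))$ as a real game board and invokes the very same $(k-1)$-pebble winning strategy (Strategy \ref{play-in-xi-abs}\textapprox Strategy \ref{t<xi}) to reply to Spoiler's prescribed virtual move. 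The outcome is a sequence of virtual responses on the $\widetilde{\mathfrak{B}}_{k,m}^*$ side that, via \eqref{def-value-board-histroy}--\eqref{def-length-history}, determines the board history Duplicator encodes into the first coordinate of her real pick $(x',y)$. I expect the $\Vdash$-correspondence to be preserved at every virtual round, i.e.\ whenever some $(u^\flat,v)$ sits in $\mathbb{Z}_{xy}^j$, the matched $(u'^\flat,v)$ sits in $\mathbb{Z}_{x'y}^j$, and the two are added and removed from their respective configurations in lock-step.

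For (i), I would argue as follows. Assume $(u,v)\twoheadrightarrow(x,y)$, so $(u,v)\xrightarrow[\mathrm{BC}]{con.}(x,y)$ and $(u^\flat,v)\in(x,y)[\mathrm{BC}]$. By continuity there is a least virtual round $j^*$ at which $(u^\flat,v)$ is placed into the configuration, after which it persists through rounds $j^*,\ldots,\mathrm{i}_{\mathrm{cur}}^{x,y}-1$. The recursive simulation replies with the matched virtual pick $(u'^\flat,v)$ at round $j^*$, and because Duplicator copies Spoiler's add/remove actions exactly, this vertex persists in $\mathbb{Z}_{x'y}^j$ for all $j\geq j^*$. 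Hence $(u'^\flat,v)\in(x',y)[\mathrm{BC}]$ and $\chi(u',v)\!\!\restriction\!\!\mathrm{BH}\sqsubseteq\chi(x',y)\!\!\restriction\!\!\mathrm{BH}$, giving $(u',v)\twoheadrightarrow(x',y)$. The converse direction comes from swapping the roles of the two structures in the virtual game, and the symmetric case $(x,y)\twoheadrightarrow(u,v)$ is handled by running the simulation from $(u,v)$'s side.

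For (ii) with $y=v$, I would appeal to the total order on board histories defined on page \pageref{page-history-order}, which first compares $\mathrm{bc}$-values and then compares configurations lex-by-lex by their $|\cdot|$-values. Both invariants are preserved under the $\Vdash$-correspondence because the virtual game adds and removes one element at a time on each side synchronously, so that $|\mathbb{Z}_{xy}^j|=|\mathbb{Z}_{x'y}^j|$ and $\chi(x,y)\!\!\restriction\!\!\mathrm{bc}=\chi(x',y)\!\!\restriction\!\!\mathrm{bc}$. Thus the lexicographic ordering between $\chi(x,y)\!\!\restriction\!\!\mathrm{BH}$ and $\chi(u,v)\!\!\restriction\!\!\mathrm{BH}$ coincides with that between $\chi(x',y)\!\!\restriction\!\!\mathrm{BH}$ and $\chi(u',v)\!\!\restriction\!\!\mathrm{BH}$, and by \eqref{def-value-board-histroy} this translates into the equivalence of $\lfloor x/(\gamma_{m-1}^*\times k)\rfloor\bmod bh^\#$ and $\lfloor u/(\gamma_{m-1}^*\times k)\rfloor\bmod bh^\#$ comparisons claimed in (ii).

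The main obstacle will be justifying that the virtual game is actually playable: it is not run on $\widetilde{\mathfrak{A}}_{k,m}^*,\widetilde{\mathfrak{B}}_{k,m}^*$ directly but on the modified structures $\widetilde{\mathfrak{A}}_{k,m}^{*\gimel_j^a},\widetilde{\mathfrak{B}}_{k,m}^{*\gimel_j^b}$, and the bookkeeping maps $\gimel_j^a,\gimel_j^b$ must be set precisely as on page \pageref{gimel-ell-current} so that (a) each recursive call to the winning strategy sees a board state the strategy can actually handle, and (b) Duplicator's virtual picks respect any already-pebbled vertex $(u,v)\in\overline{c_A}$ with $(u,v)\xrightarrow[\mathrm{BC}]{con.}(x,y)$, in the sense that the virtual round at which $(u^\flat,v)$ enters $\mathbb{Z}_{xy}^j$ must yield $(u'^\flat,v)$ as the reply rather than some freshly chosen vertex of the correct type. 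Verifying that $\gimel_j^a,\gimel_j^b$ mesh consistently with the rest of the winning-condition-set, and that the recursion on $\chi(x,y)\!\!\restriction\!\!\mathrm{bc}$ terminates with these invariants intact, is the substantive technical work.
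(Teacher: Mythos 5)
Your outline for (i) captures the right framework — run the virtual game and propagate the $\Vdash$-correspondence through its rounds — but you never actually close the gap you flag at the end, and that gap is the whole content of the claim. When $(u^\flat,v)$ enters $\mathbb{Z}_{xy}^{j^*}$, nothing in the basic virtual-game machinery forces Duplicator's reply to be $(u'^\flat,v)$ rather than some fresh vertex of the correct congruence label; the strategy (Strategy~1–3) selects by type and order constraints, not by identity. The paper resolves this by a genuine induction on the set of already-pebbled vertices: it identifies $(u^\star,v^\star)$ as a minimal-length pebbled vertex with $(x,y)\xrightarrow[\mathrm{BC}]{*}(u^\star,v^\star)$, forces Duplicator to make $\chi(x',y)\!\!\restriction\!\!\mathrm{BH}$ an initial segment of $\chi(u^{\star\prime},v^\star)$'s history, and then leans on the transitivity of $\xrightarrow[\mathrm{BC}]{*}$ together with Fact~\ref{boardHistory-never-converge-1} to propagate the correspondence to all other pebbled vertices. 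It also has to split on the case where no such $(u^\star,v^\star)$ exists, searching instead for a pebbled $(a,b)$ whose initial board-history segment matches $(u_c,v_c)$'s. Your "Duplicator copies Spoiler's add/remove actions exactly" is precisely the step that needs to be proved, not assumed, and without the $(u^\star,v^\star)$/transitivity scaffolding the persistence argument in your second paragraph does not go through.

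For (ii) the gap is more serious. You argue that $|\mathbb{Z}_{xy}^j|=|\mathbb{Z}_{x'y}^j|$ and $\chi(x,y)\!\!\restriction\!\!\mathrm{bc}=\chi(x',y)\!\!\restriction\!\!\mathrm{bc}$ imply that the lexicographic comparison of $\chi(x,y)\!\!\restriction\!\!\mathrm{BH}$ vs.~$\chi(u,v)\!\!\restriction\!\!\mathrm{BH}$ coincides with that of $\chi(x',y)\!\!\restriction\!\!\mathrm{BH}$ vs.~$\chi(u',v)\!\!\restriction\!\!\mathrm{BH}$. That does not follow. Size equality across matched pairs says nothing about the comparison at the \emph{divergence point} of two distinct histories on the same side: once $\mathrm{bc}$ and all $|\mathrm{BC}_j|$ agree, the outcome of the comparison is decided by which vertex was "picked" at the first round where the two configurations differ, and that comparison is an order comparison between vertices, not sizes. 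The paper proves this part by building the tree of board configurations rooted at $\mathrm{BC}_\emptyset$, observing that the two trees (for $\overline{c_A}$ and $\overline{c_B}$) are isomorphic, and then at each divergence node invoking Ehrenfeucht--Fra\"iss\'e games over the relevant pure linear orders (the intervals over the $\xi$-th or $(\xi-1)$-th abstraction delimited by the two diverging picks), using \eqref{xi-order-requirement-ensured} and Remark~\ref{abstract-order-in-main-lemma} to guarantee these intervals are long enough for Duplicator to win. None of this is present in your argument, so part (ii) is unsupported as written.
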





Claim \ref{board-history-evolutions} \textit{(ii)} says that the board history of $(x,y)$ is less than (or equals to) that of $(u,v)$ (in the order explained in page \pageref{page-history-order}) if and only if the board history of $(x^\prime,y)$ is less than (or equals to) that of $(u^\prime,v)$. Note that in the case when equal holds, the order of the picked vertices is taken care of by the abstraction-order-condition introduced before. Moreover, from the proof of Claim \ref{board-history-evolutions}, we can see that board histories of pebbled vertices satisfy(\mbox{apx-}1) (cf. Remark \ref{remark-linear-orders-1}) if we regard board histories as objects of the linear orders.
 Hence, for example, we know that $\lfloor x^\prime/(\gamma_{m-1}^*\times k)\rfloor$ mod $bh^\#=\lfloor u^\prime/(\gamma_{m-1}^*\times k)\rfloor+1$ mod $bh^\#$ if and only if $\lfloor x/(\gamma_{m-1}^*\times k)\rfloor$ mod $bh^\#=\lfloor u/(\gamma_{m-1}^*\times k)\rfloor+1$ mod $bh^\#$. 
 
Because of Claim \ref{board-history-evolutions} \textit{(i)}, we can safely assume that $(x,y)[\mathrm{BC}]$ is a valid board  configuration, for otherwise $(x,y)$ and $(x^\prime,y)$ are isolated vertices in respective structures; and we need only focus on the set of vertices that are in continuity with $(x,y)$ and $(x^\prime,y)$, in the following case by case discussion of Duplicator's strategy. 
Note that, by this claim, $\widetilde{c_A}\Vdash\widetilde{c_B}$. Moreover, $\widetilde{c_A}$ has $n$ distinctive vertices if and only if $\widetilde{c_B}$ does, where $0\leq n\leq k-1$. 

We describe Duplicator's strategy using simultaneous induction, in addition to B-1\textapprox B-3. But instead of studying the game directly, it is more convenient to study the game by a specific \textbf{associated game over  changing board} wherein the game board is $(\widetilde{\mathfrak{A}}_{k,m}^{*\gimel_1^a},\widetilde{\mathfrak{B}}_{k,m}^{*\gimel_1^b})$ at the start. Recall that $\gimel_1^a(e,f)=\gimel_1^b(e,f)=\mathrm{BC}_\emptyset$, for any $(e,f)\in\mathbb{X}_1^*$. Cf. page \pageref{gimel-ell-current} for the definition of $\gimel_i^a$ and $\gimel_i^b$. 
The game board at the start of the $\ell_c$-th round consists of a pair of structures $\widetilde{\mathfrak{A}}_{k,m}^{*\gimel_{\ell_c}^a}$ and $\widetilde{\mathfrak{B}}_{k,m}^{*\gimel_{\ell_c}^b}$;
in this round the players pick  $(x^{\flat},y)\in\mathbb{X}_1^*$ in the associated game $\Game_{m-\ell_c+1}^{k-1}((\widetilde{\mathfrak{A}}_{k,m}^{*\gimel_{\ell_c}^a},(x,y)[\mathrm{BC}]),
(\widetilde{\mathfrak{B}}_{k,m}^{*\gimel_{\ell_c}^b},(x^\prime,y)[\mathrm{BC}]))$  if they pick $(x,y)$ in the original game  $\Game_{m-\ell_c+1}^{k-1}((\widetilde{\mathfrak{A}}_{k,m},\overline{c_A}),(\widetilde{\mathfrak{B}}_{k,m},\overline{c_B}))$.

Claim \ref{board-history-evolutions} \textit{(i)} tells us that we don't have to consider the board histories when we study the original game restricted to edges; and Claim \ref{board-history-evolutions} \textit{(ii)} tells us that we don't have to be worry about the order issue if we ignore the board histories when we study the original game restricted to linear orders, which is relatively clear. 
It implies that it is possible for Duplicator to have a winning strategy in the original game $\Game_m^{k-1}((\widetilde{\mathfrak{A}}_{k,m},\overline{c_A}),(\widetilde{\mathfrak{B}}_{k,m},\overline{c_B}))$ if she has a winning strategy in the associated game over changing board. Soon we shall see the justification in  Strategy \ref{play-in-xi-abs}. 
Therefore, 
we can talk about something like ``\textit{Spoiler picks $(x^\flat,y)$ in $\widetilde{\mathfrak{A}}_{k,m}^{*\gimel_{\ell_c}^a}$}'' instead of ``\textit{Spoiler picks $(x,y)$ in $\widetilde{\mathfrak{A}}_{k,m}$}''. 
And we shall see that the following strategy is a winning strategy  for Duplicator in the specific associated game over changing board. Note that, games over changing boards are a sort of auxiliary games. That is, we do not play them alone. But they are the basis of the original game and the virtual games. 
We reduce the original game to the corresponding associated game and virtual games over changing board. And in both of these two kinds of imaginary games, Duplicator relies on the following strategy, i.e. Strategy \ref{play-in-xi-abs}\textapprox Strategy \ref{t<xi}, case by case. Note that, in the virtual games for bord histories, the imaginary ``pebbled'' vertices (not the vertices really picked) form the game configuration associated with the vertex to be ``picked'' in the virtual round.   
  

Note that, by induction hypothesis, the game board is in partial isomorphism over the $\xi$-th abstraction at the start of the current round. We can show that this also holds over the $(\xi-1)$-th abstraction.\footnote{Cf. Remark \ref{partial-isom-propagate}. The readers are suggested to read the arguments a bit later, i.e. after reading Strategy \ref{t<xi}. It implies that (\ref{condition-for-DuWin}$^\diamond$) is entailed by the other conditions, in particular (\ref{condition-for-DuWin-*}$^\diamond$).} 

The readers can choose to read a brief sketch (Remark \ref{strategy-sketch}, appendix) of the following strategy before diving directly into the details. 

In the following, ``strategy $i$'' stands for the shorthand of ``strategy for the case $i$''. 
As we have discussed the order issue in details in Remark \ref{abstract-order-in-main-lemma}, which will be used by Duplicator to stick to B-1, we use Strategy \ref{play-in-xi-abs}\textapprox Strategy \ref{t<xi}, explained in the following, to mainly determine the type label of $(x^\prime,y)$, i.e. how this vertex is adjacent to other vertices.  

\begin{enumerate}
\item \label{play-in-xi-abs} 
$\langle$ Strategy 1 $\rangle$
Suppose that Spoiler picks a vertex $(x,y)\in\mathbb{X}_\xi^A$ (recall that $\mathrm{idx}(x^\flat,y)=t$; in other words, we assume $t\geq \xi$ in this case). And suppose that  
Duplicator can pick $(x^{\prime},y)\in \mathbb{X}_{\xi}^B$ such that (\ref{xi-order-condition-hold}$^\diamond$), (\ref{condition-for-DuWin-*}$^\diamond$),  (\ref{main-diamond-boundary-strategy}$^\diamond$), and (\ref{condition-for-DuWin}$^\diamond$) hold. 

Note that, the set $\widetilde{Z}\cup \widetilde{U}\cup \widetilde{R}\cup \widetilde{D}\cup \widetilde{T}$ is precisely the set of vertices that are not adjacent to $(x,y)$ in the current round, if do not consider the missing of edges due to 2) b) of Definition \ref{iterative-expansion}.   
Indeed, by Claim \ref{board-history-evolutions}, we don't have to consider it. 

In the following we explain why (\ref{condition-for-DuWin-*}$^\diamond$) and (\ref{condition-for-DuWin}$^\diamond$) matter. 

If $(x,y)\twoheadrightarrow (u,v)$ for some pebbled vertex $(u,v)$, then Duplicator simply let $(x^\prime,y)$ be $(u^\prime,v)\!\!\restriction\!\!_\mathrm{H}^{\mathrm{i}_{\mathrm{cur}}^{x,y}}$, and 
we are able to show that (\ref{condition-for-DuWin}$^\diamond$) and (\ref{condition-for-DuWin-*}$^\diamond$) hold (cf. Remark \ref{remark-strategy2}). Henceforth, we assume that $\lnot((x,y)\twoheadrightarrow(u,v))$ holds for any pebbled $(u,v)$. By Claim \ref{board-history-evolutions}, $\lnot((x^\prime,y)\twoheadrightarrow(u^\prime,v))$ holds for any pebbled $(u^\prime,v)$. 
That is,  one of the following two cases holds: 
\begin{enumerate}[(i)]
\item $(u,v)\twoheadrightarrow (x,y)$: by Claim \ref{board-history-evolutions}, $(u^\prime,v)\twoheadrightarrow (x^\prime,y)$;  

\item $\lnot((u,v)\twoheadrightarrow (x,y))$: by Claim \ref{board-history-evolutions},  $\lnot((u^\prime,v)\twoheadrightarrow (x^\prime,y))$, which implies that $((x,y),(u,v))\notin E^{A}\land ((x^\prime,y),(u^\prime,v))
\notin E^{B}$, due to 2 (b) of Definition \ref{iterative-expansion}. 
\end{enumerate} 
Therefore, we need only consider (i). As a consequence, henceforth we can assume that 
$\lnot((u,v)\rightsquigarrow (x,y))\land
\lnot((u^\prime,v)\rightsquigarrow
 (x^\prime,y))$ holds. That is, in this case 
\begin{equation}\label{xi-1-simuluation-2}
\widetilde{R}^{<\xi}=\widetilde{R}^{\prime<\xi}=\emptyset. 
\end{equation}
It implies that we need only consider the case that\footnote{Since $\mathrm{idx}(x,y)\geq \xi$, it is easy to see that $\{(u,v)\in\widetilde{c_A}\mid (x,y)\rightsquigarrow (u,v)\}\subseteq \widetilde{R}^{\geq\xi}$ because $\mathbf{cl}(u,v)\in\chi(x,y)\!\!\restriction\!\! S$ implies that $\mathrm{idx}(u,v)>\mathrm{idx}(x,y)\geq \xi$. On the other hand, for any $(u,v)\in\widetilde{c_A}$, $(u,v)\rightsquigarrow (x,y)$ would imply that $(x,y)\xrightarrow[\mathrm{BH}]{con.}(u,v)$.} 
\begin{equation}
\widetilde{R}=\widetilde{R}^{\geq\xi}=\{(u,v)\in\widetilde{c_A}\mid (x,y)\rightsquigarrow (u,v)\}.
\end{equation}
That is, in this case (i), $\widetilde{R}=\{(u,v)\in \overline{c_A}\mid (u,v)\twoheadrightarrow (x,y)\land \mathbf{cl}(u,v)\in \chi(x,y)\!\!\restriction\!\! S\}$.  

Therefore, by Claim \ref{board-history-evolutions}, Duplicator need only consider (\ref{condition-for-DuWin}$^\diamond$) in the associated game to ensure the following in the original game, in the case (i): 
 \begin{equation}\label{condition-for-DuWin-1}
\widetilde{Z}\cup \widetilde{U}\cup \widetilde{R}\cup \widetilde{D}\cup \widetilde{T}\Vdash \widetilde{Z}^{\prime}\cup \widetilde{U}^{\prime}\cup \widetilde{R}^{\prime}\cup \widetilde{D}^{\prime}\cup \widetilde{T}^{\prime}.
\end{equation}

In other words,   
if Duplicator can ensure (\ref{condition-for-DuWin}$^\diamond$), then Spoiler cannot win this round in the original game. Similarly, if Duplicator can ensure (\ref{condition-for-DuWin-*}$^\diamond$), then Spoilver cannot win this round over the $\xi$-th abstraction in the original game, which in turn helps to ensure (\ref{condition-for-DuWin}$^\diamond$).\\[-1pt]  

By convention, we use $\circ$ to denote the concatenation of two tuples. 
Let $\overrightarrow{c_A}_s^\xi$  be an ordered set defined as the following. 
\begin{multline}\label{def-c_A_s-xi}
\overrightarrow{c_A}_s^\xi:=\{(\llparenthesis a\rrparenthesis_\xi,b)\mid (a,b)\in (x,y)[\mathrm{BC}]\}\ccirc\{(\llparenthesis e\rrparenthesis_{\xi},f)\mid (a,b)\in (x,y)[\mathrm{BC}];\\\mathrm{idx}(a,b)<\xi;(e,f)\in\chi(a,b)\!\!\restriction\!\! S\}.
\end{multline}
Likewise, $\overrightarrow{c_B}_s^\xi$ is defined in the similar way except that $(x,y)[\mathrm{BC}]$ is replaced by $(x^\prime,y)[\mathrm{BC}]$. The elements of $\overrightarrow{c_A}_s^\xi$ and $\overrightarrow{c_B}_s^\xi$ are in the natural order inheriting from $(x,y)[\mathrm{BC}]$ and $(x^\prime,y)[\mathrm{BC}]$. Note that $|\overrightarrow{c_A}_s^\xi|=|\overrightarrow{c_B}_s^\xi|<m$. Therefore, by Remark \ref{remark-linear-orders-1}, 
Duplicator can ensure that she wins the following one round game wherein she picks $(x^{\prime\flat},y)$ to respond to Spoiler's pick of $(x^\flat,y)$, in accordance with the abstraction-order-condition:  
\begin{equation}\label{eqn-1-round-S-game_xi}
\Game_1 ((\widetilde{\mathfrak{A}}_{k,m}^*[\mathbb{X}_\xi^*]|\langle \leq\rangle,\overrightarrow{c_A}_s^{\xi}),(\widetilde{\mathfrak{B}}_{k,m}^*[\mathbb{X}_\xi^*]|\langle \leq\rangle,\overrightarrow{c_B}_s^{\xi}).
\end{equation}  

Furthermore, $\theta:=\theta-1$ at the end of this round, and $\xi$ remains unchanged. That is, (\ref{theta-lessthan-xi}$^\diamond$) holds. 

If Duplicator's pick can ensure  (\ref{xi-order-condition-hold}$^\diamond$), (\ref{condition-for-DuWin-*}$^\diamond$), (\ref{main-diamond-boundary-strategy}$^\diamond$), and (\ref{condition-for-DuWin}$^\diamond$), then she not only wins this round, but also wins it over the $\xi$-th abstraction. Otherwise, Duplicator uses Strategy \ref{xi-1} if she cannot ensure these conditions. 

 
\item \label{xi-1} $\langle$ Strategy 2 $\rangle$ Spoiler picks a vertex in $\mathbb{X}_{\xi}^A$ and  Duplicator cannot pick an appropriate vertex in the $\xi$-th abstraction that satisfies  (\ref{xi-order-condition-hold}$^\diamond$), (\ref{condition-for-DuWin-*}$^\diamond$) and (\ref{condition-for-DuWin}$^\diamond$). It implies that $\mathbf{cl}(x,y)\neq \mathbf{cl}(u,v)$ for any pebbled $(u,v)$ in $\widetilde{\mathfrak{A}}_{k,m}$, for otherwise Duplicator can simply follow B-2, and the conditions (\ref{xi-order-condition-hold}$^\diamond$), (\ref{condition-for-DuWin-*}$^\diamond$) and (\ref{condition-for-DuWin}$^\diamond$) can be ensured. 
 In such a case, Duplicator picks a vertex $(x^{\prime},y)\in\mathbb{X}_{\xi-1}\!-\!\mathbb{X}_{\xi}$ s.t. $[x^{\prime\flat}]_{\xi-1}\equiv 0$ (mod $k-1$) (hence B-2 is followed), and resorts to the $(\xi-1)$-th abstraction for a solution.

 Note that, by definition, 
 \begin{equation}
 R^{<\xi-1}=R^{\prime<\xi-1}=\emptyset.
 \end{equation} 

In the following we show that Duplicator can ensure  (\ref{condition-for-DuWin-*}$^\diamond$) and (\ref{condition-for-DuWin}$^\diamond$). Let $M_A:=Z^{(\xi-1)}\cup R^{(\xi-1)}\cup D^{(\xi-1)}\cup U^{(\xi-1)}\cup T^{(\xi-1)}$ and $M_B$ be such a set that $M_A\Vdash_{\xi-1} M_B$.\footnote{In other words, $M_B=\{(\llparenthesis a^\prime\rrparenthesis_{\xi-1},b)\mid (a,b)\in\overrightarrow{c_A};(a,b)\Vdash (a^\prime,b);(\llparenthesis a\rrparenthesis_{\xi-1},b)\in M_A\}$.}
 Note that, in this case where $\mathrm{idx}(x^\flat,y)\geq \xi$, $M_A=Z^{(\xi-1)}\cup R^{(\xi-1)}\cup D^{(\xi-1)}\cup T^{(\xi-1)}\cup U$, 
because $\mathrm{idx}(\llparenthesis u^\flat\rrparenthesis_{\xi-1},v)=\xi$ implies that $\mathrm{idx}(u^\flat,v)=\xi$.\footnote{By Lemma \ref{lattice-point-high-is-lower}, if $\mathrm{idx}(u^\flat,v)=\xi-1$, then $\mathrm{idx}(\llparenthesis u^\flat\rrparenthesis_{\xi-1},v)=\xi-1$; if $\mathrm{idx}(u^\flat,v)>\xi$, then $\mathrm{idx}(\llparenthesis u^\flat\rrparenthesis_{\xi-1},v)=\mathrm{idx}(u^\flat,v)>\xi$.}  
\begin{enumerate}[(\ref{xi-1}-1)]
\item By Lemma \ref{flexibility-in-same-abstraction-1}, Duplicator can ensure that $U^{\prime}=\emptyset$. 

\item  
By virtue of Lemma \ref{no-missing-edges_xi-1}, 
Duplicator can ensure that\footnote{\label{page-pebbled-in-same-row} Note that the above holds only when all the pebbled vertices are in different rows. If some of them are in the same row, then Duplicator need to resort to the $(\xi-2)$-th abstraction due to \textit{(3)} of Lemma \ref{no-missing-edges_xi-1}, which seems that $2m$ abstractions are needed for a structure instead of $m$ abstractions. However, even in this case, we can show that $m$ abstractions suffice for our purpose. The simplest way for Duplicator is to ensure that all the pebbled vertices in the same row have distinct indices. This is possible if she always resorts to the $(\xi-1)$-th abstraction whenever the row, in which she is going to put a pebble, already has a pebble. Note that in such case \textit{(3)} of Lemma \ref{no-missing-edges_xi-1} can be revised s.t. ``$t-2$'' (``$t-1$'' resp.) is replaced by $t-1$ ($t$ resp.), and the argument for it is similar to \textit{(4)} of Lemma \ref{no-missing-edges_xi-1}.}
\begin{itemize}
\item   $Z^{\prime(\xi)}\cup R^{\prime(\xi)}\cup D^{\prime(\xi)}=\emptyset$,
\item   $[x^{\prime\flat}]_{\xi-1}\equiv 0$ (mod  $k-1$),
\item   $\mathrm{RngNum}(x^{\prime\flat},\xi-1)=-1$. 
\end{itemize}

Soon we shall see in Strategy \ref{t<xi} that (\ref{main-diamond-xi}$^\diamond$) is ensured. Then by Lemma \ref{approxi-copy-cat} and Lemma \ref{corollary-approxi-copy-cat}, we have   
\begin{equation}\label{no-missing-edge-before-simuluation-1}
Z^{\prime(\xi-1)}\cup R^{\prime(\xi-1)}\cup D^{\prime(\xi-1)}=\emptyset.
\end{equation} 
By Lemma \ref{proj-greater-index}, $\mathrm{idx}(\llparenthesis u^{\prime\flat}\rrparenthesis_{\xi-1},v)\geq \xi-1$. 
If $\mathrm{idx}(\llparenthesis u^{\prime\flat}\rrparenthesis_{\xi-1},v)=\xi-1$, then $T^\prime=\emptyset$, by definition. 
Now assume that $\mathrm{idx}(\llparenthesis u^{\prime\flat}\rrparenthesis_{\xi-1},v)>\xi-1$. 
By Lemma \ref{rngnum-is_-1}, we have $\mathrm{RngNum}(\llparenthesis u^{\prime\flat}\rrparenthesis_{\xi-1},\xi-1)=-1$. Therefore, by Lemma \ref{sgn-equal-0}, $T^\prime=\emptyset$ since $\mathrm{RngNum}(x^{\prime\flat},\xi-1)=-1$. 
Therefore, we have 
\begin{equation}\label{strate2-eqn-before-adaption}
Z^{\prime\geq\xi-1}\cup R^{\prime\geq\xi-1}\cup D^{\prime\geq\xi-1}\cup T^{\prime\geq\xi-1}=\emptyset.
\end{equation}     

\item  So far, the vertex Duplicator selected usually does not satisfy  (\ref{condition-for-DuWin-*}$^\diamond$). Therefore, Duplicator need fine-tune  $(x^{\prime},y)$ a little bit: rename the currently selected vertex as $(x^{\star},y)$, and find a new value for $x^{\prime}$ such that (\ref{condition-for-DuWin-*}$^\diamond$) holds. 
 By Lemma \ref{universal-simulator}, for any $S^{\prime}\!\in\! \wp(\mathbf{Cl}_{\xi})$ and any $(x^{\star},y)\!\in\! \mathbb{X}_{\xi-1}^B$, the sequence of $\mathpzc{U}_{\xi-1}^*$ successive vertices $(\llbracket x^{\star\flat}\rrbracket_{\xi-1},y)$ contains at least one vertex $(x^{\dagger},y)$ where  $\mathrm{idx}(x^{\dagger\flat},y)=\xi-1$ and $\chi(x^{\dagger},y)\!\!\restriction\!\! S\cap\{\mathbf{cl}(\llparenthesis a\rrparenthesis_{\xi-1},v)\mid (a,b)\in\overrightarrow{c_B}\}=S^{\prime}$. Hence Duplicator can simply pick $(x^{\prime},y)=(x^{\dagger},y)$ to ensure that
 \begin{equation}\label{xi-1-simuluation-1}
 S^{\prime}=\{\mathbf{cl}(e,f) \1 (e,f)\!\in\! M_B 
\}. 
\end{equation}

By (\ref{xi-1-simuluation-1}),  for any $(a,b)\in M_A$ and $(a,b)\Vdash_{\xi-1}(c,d)$, $((a,b),(x^\flat,y))\\\notin E^A_*$ if and only if $((c,d),(x^{\prime\flat},y))\notin E^B_*$. In other words, (\ref{condition-for-DuWin-*}$^\diamond$) holds if we replace all the $(\xi)$ by $(\xi-1)$ in the superscripts of the sets. 
Therefore, 
\begin{multline}\label{strategy3_abstraction_xi-1_iso}
Z^{\geq\xi-1}\cup U\cup R^{\geq\xi-1}\cup D^{\geq\xi-1}\cup T^{\geq\xi-1}\Vdash\\ Z^{\prime\geq\xi-1}\cup U^{\prime}\cup  R^{\prime\geq\xi-1}\cup D^{\prime\geq\xi-1}\cup T^{\prime\geq\xi-1}.
\end{multline} 


\item Because  of Lemma \ref{cm=depth} and (\ref{main-diamond-xi}$^\diamond$) (ii),
\begin{equation}\label{no-missing-edge-before-simuluation-2}
 Z^{<\xi-1}=Z^{\prime<\xi-1}
\end{equation} 

\item 
Because of Strategy \ref{t<xi}, Duplicator is approximately a hr-copycat (cf. (\ref{main-diamond-xi}$^\diamond$) (iii)). By Lemma \ref{approxi-copy-cat-1}, for any $(u,\!v)\!\in\! \overline{c_A}$ where $\mathrm{idx}(u^\flat\!,v)\!\!=\!\!i^{\star}\!\!<\!\!\xi-1$, 
 $\llbracket u^\flat\rrbracket_{i^{\star}}^{min}\!-\![\llparenthesis u^\flat\rrparenthesis_{\xi-1}]_{i^{\star}}\!=\!\llbracket u^{\prime\flat}\rrbracket_{i^{\star}}^{min}\!-\![\!\llparenthesis u^{\prime\flat}\rrparenthesis_{\xi-1}\!]_{i^{\star}}$. 
 By  
Strategy 3, $\mathrm{idx}(u^{\prime\flat}\!,v)\!=\!i^{\star}$ and  $\mathbf{cc}([u^\flat]_{i^{\star}}\!,v)\!=\!\mathbf{cc}([u^{\prime\flat}]_{i^{\star}}\!,v)$;  
by Lemma \ref{cm=depth}, $[x^\flat]_i\equiv[x^{\prime\flat}]_i\equiv 0$ (mod $k-1$) for any $i<\xi-1$; 
then by Lemma \ref{corollary-approxi-copy-cat} and the definition of ``$\restriction\!\!\Omega$'',  missing of an edge between $(x^\flat,y)$ and $(\llparenthesis u^\flat\rrparenthesis_{\xi-1},v)$, if there is one, would propagate downward to lower abstractions coincidently with missing of an edge between $(x^{\prime\flat},y)$ and $(\llparenthesis u^{\prime\flat}\rrparenthesis_{\xi-1},v)$ that behaves alike.    
Therefore,  we have
\begin{equation}\label{no-missing-edge-before-simuluation-3}
D^{<\xi-1}=D^{\prime<\xi-1}.
\end{equation}

\item Because of (\ref{main-diamond-xi}$^\diamond$) (v), for any pair of pebbled vertices $(u,v)\Vdash (u^\prime,v)$, where  $i^\star=\mathrm{idx}(u,v)=\mathrm{idx}(u^\prime,v)<\xi-1$, we have  $\mathrm{RngNum}(u^\flat,i^\star)=\mathrm{RngNum}(u^{\prime\flat},i^\star)$. 
Recall that $\mathrm{idx}(x^\flat,y)\geq \xi>\mathrm{idx}(x^{\prime\flat},y)=\xi-1$. 
Therefore, by definition, 
\begin{equation}\label{equiv-T_xi-1}
T^{<\xi-1}\Vdash T^{\prime<\xi-1}. 
\end{equation}  
\end{enumerate}
   
Putting the observations, i.e. (\ref{strategy3_abstraction_xi-1_iso}), (\ref{xi-1-simuluation-2}), (\ref{no-missing-edge-before-simuluation-2}),  (\ref{no-missing-edge-before-simuluation-3}), and (\ref{equiv-T_xi-1}), all together, it  implies that (\ref{condition-for-DuWin}$^\diamond$) holds for this newly selected vertex $(x^{\prime},y)$. 
After this round, $\xi:=\xi-1$ and $\theta:=\theta-1$. Hence (\ref{theta-lessthan-xi}$^\diamond$) still holds. 
Because Duplicator resorts to lower abstraction, thereby the first part of (\ref{xi-order-condition-hold}$^\diamond$) holds, due to (\ref{xi-order-requirement-ensured}).  And $(\varkappa)$ will not occur. Suppose on the contrary that it occurs, and $(x_0,y)\Vdash (x_0^\prime,y)$ are the pair of vertices that make it happen. Note that $\mathrm{idx}(x_0,y),\mathrm{idx}(x_0^\prime,y)<\xi$. If $x^\flat\neq \llparenthesis x_0^\flat\rrparenthesis_\xi$, then, by induction hypothesis, Duplicator can choose to pick $(x^\prime,y)$ such that $\llparenthesis x^{\prime\flat}\rrparenthesis_\xi\neq \llparenthesis x_0^\flat\rrparenthesis_\xi$. Hence $\lfloor x^{\prime\flat}/l_{\xi-1}\rfloor\neq \lfloor x_0^{\prime\flat}/l_{\xi-1}\rfloor$. A contradiction occurs. If $x^\flat=\llparenthesis x_0^\flat\rrparenthesis_\xi$, then Duplicator simply pick $(x^\prime,y)$ such that $x^{\prime\flat}=\llparenthesis x_0^\flat\rrparenthesis_\xi$ and she wins this round because she wins the last round by induction hypothesis \footnote{It is similar to the situation where she picks a ``pebbled'' vertex if Spoiler picks the corresponding ``pebbled'' vertex in the game over the $\xi$-th abstraction. Here ``pebbled'' vertex can be the projection of a realy pebbled vertex  in the $\xi$-th abstraction.} But in this case $(x^{\prime\flat},y)$ is a vertex in $\mathbb{X}_\xi^*$. That is, she doesn't have to resort to the $(\xi-1)$-th abstraction for a solution, thereby no need to use Strategy \ref{xi-1}. We arrive at a contradiction again.  In short, Duplicator can ensure that (\ref{theta-lessthan-xi}$^\diamond$)\textapprox (\ref{condition-for-DuWin}$^\diamond$) hold in this case.  

In the arguments, we haven't taken the boundary vertices into account yet, which is mainly handled in the dissussion of (\ref{main-diamond-boundary-strategy}$^\diamond$). Recall that we delay such discussion to the end of this proof. Here we only mention one thing. The decision of $(x^{\prime\flat},y)\!\!\restriction\!\! S$ has not considered the boundary vertices, despite that it should. Nevertheless, it is not a big issue, because Duplicator can adapt her pick in the following simple way when necessary: if the projection of a boundary vertex in the $(\xi-1)$-th abstraction is in $(x^{\flat},y)\!\!\restriction\!\! S$, then she add it in $(x^{\prime\flat},y)\!\!\restriction\!\! S$ too. Note that Duplicator has the freedom to do it. \label{Stra-2-boundary-S} 

Last but not least, the following is easy to observe: Duplicator can win the following one round game wherein she picks $(x^{\prime\flat},y)$ to reply the pick of $(x^\flat,y)$, in accordance with the abstraction-order-condition: 
\begin{equation}\label{eqn-1-round-S-game_xi-1}
\Game_1 ((\widetilde{\mathfrak{A}}_{k,m}^*[\mathbb{X}_{\xi-1}^*]|\langle \leq\rangle,\overrightarrow{c_A}_s^{\xi-1}),(\widetilde{\mathfrak{B}}_{k,m}^*[\mathbb{X}_{\xi-1}^*]|\langle \leq\rangle,\overrightarrow{c_B}_s^{\xi-1}). 
\end{equation} 
Here, $\overrightarrow{c_A}_s^{\xi-1}$ is defined similar to \eqref{def-c_A_s-xi}, except that $\xi$ is replaced by $\xi-1$. $\overrightarrow{c_B}_s^{\xi-1}$ is defined likewise. While playing the game \eqref{eqn-1-round-S-game_xi-1}, we can first regard each $\mathpzc{U}_{\xi-1}^*$-tuple as a unit. 

We've shown that Duplcator can win this round if she resorts to the $(\xi-1)$-th abstraction to respond the picking of $(x^\flat,y)\in\mathbb{X}_\xi^*$. In fact, this strategy also works if $\mathrm{idx}(x^\flat,y)=\xi-1$. The argment is \textit{very} similar to the one just introduced. Duplicator picks $(x^\prime,y)$ such that $\mathrm{idx}(x^{\prime\flat},y)=\xi-1$. In addition, she ensures that $\mathbf{cc}([x^\flat]_{\xi-1},y)=\mathbf{cc}([x^{\prime\flat}]_{\xi-1},y)$,  $g(x^\flat)=g(x^{\prime\flat})$ and $\mathrm{RngNum}(x^\flat,\xi-1)=\mathrm{RngNum}(x^{\prime\flat},\xi-1)$.\footnote{Note that, to this end, we need to adapt Lemma \ref{no-missing-edges_xi-1} a little bit, which is very easy. The point is that, by Lemma \ref{cm=depth}, $\mathbf{cc}(\llparenthesis u^\flat\rrparenthesis_\xi,v)=\mathbf{cc}(x^\flat,y)$ iff $\mathbf{cc}(\llparenthesis u^{\prime\flat}\rrparenthesis_\xi,v)=\mathbf{cc}(x^{\prime\flat},y)$. Therefore, the lemma can be adapted to take care of the situation where $\mathbf{cc}(\llparenthesis u^\flat\rrparenthesis_\xi,v)=\mathbf{cc}(x^\flat,y)$ and  $\mathbf{cc}(\llparenthesis u^{\prime\flat}\rrparenthesis_\xi,v)=\mathbf{cc}(x^{\prime\flat},y)$. Likewise, it is easy to see that the values of $\mathrm{RngNum}(\cdot,\cdot)$ and $g(\cdot)$ will not cause a problem to  \eqref{strategy3_abstraction_xi-1_iso}.}  
Also cf. the corresponding case (the third case) introduced in the proof of Lemma \ref{winning-strategy-in-k=3}, in page \pageref{page-third-case-k=3}. 

\item \label{t<xi} $\langle$ Strategy 3 $\rangle$ Spoiler picks a vertex $(x,y)$ in $\widetilde{\mathfrak{A}}_{k,m}$ where $\mathrm{idx}(x^\flat,y)=t<\xi-1$.  
   Recall that, in the associated game, Spoiler also picks a vertex $(x^\flat,y)$ in $\widetilde{\mathfrak{A}}_{k,m}^*$. Duplicator regards it as if $(\llparenthesis x^\flat\rrparenthesis_\xi,y)$ (or $(\llparenthesis x^\flat\rrparenthesis_{\xi-1},y)$ resp.) is also picked at the same time, and
picks a vertex $(x^{\prime\flat},y)$ whose index is also $t$ in the other structure such that $(\llparenthesis x^{\prime\flat}\rrparenthesis_{\xi},y)$ (or $(\llparenthesis x^{\prime\flat}\rrparenthesis_{\xi-1},y)$ resp.) is the vertex Duplicator will pick to respond the picking of $(\llparenthesis x^\flat\rrparenthesis_\xi,y)$ (or $(\llparenthesis x^{\prime\flat}\rrparenthesis_{\xi-1},y)$ resp.) using  strategy \ref{play-in-xi-abs} or strategy \ref{xi-1} (cf. the last paragraph). 
 It means that, if strategy \ref{play-in-xi-abs} and, in particular,  strategy \ref{xi-1} work well as claimed,  for any pair of pebbles $(u^\flat,v)$ and $(u^{\prime\flat},v)$ on the board, the following holds for $s=\xi$ or $\xi-1$ depending on which strategy is adopted.
\begin{eqnarray}\label{strategy-3-eqn1}
((\llparenthesis x^\flat\rrparenthesis_s,y),\!(\llparenthesis u^\flat\rrparenthesis_s,v))\!\in\! E_*^{A} 
\!& \Leftrightarrow &\!
((\llparenthesis x^{\prime\flat}\rrparenthesis_s,y),\!(\llparenthesis u^{\prime\flat}\rrparenthesis_s,v))\!\in\! E_*^{B} 
\end{eqnarray}
That is, (\ref{condition-for-DuWin-*}$^\diamond$) holds. 

We can assume that $s=\xi$. It is similar when $s=\xi-1$. By Remark \ref{special-locally-isom}, the neighbourhood of $(\llparenthesis x^\flat\rrparenthesis_\xi,y)$ is the same as that of $(\llparenthesis x^{\prime\flat}\rrparenthesis_\xi,y)$: they contain a lot of vertices of the same indices and the same coordinate congruence numbers in the same abstractions; moreover, a unit of difference in higher abstraction means a huge difference in lower abstractions w.r.t. distance of first coordinates. 

Duplicator uses the following process to pick $(x^{\prime\flat},y)$ to meet (\ref{main-diamond-xi}$^\diamond$). She first finds a vertex, say $(z_{\xi-1},y)\in\mathbb{X}_1^*$, 
such that $(\llparenthesis z_{\xi-1}\rrparenthesis_{\xi},y)$ is the vertex she will pick to reply the picking of $(\llparenthesis x^\flat\rrparenthesis_{\xi},y)$, and $\mathrm{idx}(z_{\xi-1},y)=\mathrm{idx}(\llparenthesis x^\flat\rrparenthesis_{\xi-1},y)$ if $\mathrm{idx}(\llparenthesis x^\flat\rrparenthesis_{\xi-1},y)<\xi$, and a special variation of (\ref{main-diamond-xi}$^\diamond$) is met where $i$ is fixed to $\xi-1$ and ``$x^{\prime\flat}$'' is replaced by ``$z_{\xi-1}$''. Moreover, $z_{\xi-1}=\llparenthesis z_{\xi-1}\rrparenthesis_{\xi}$ 
if $\llparenthesis x^\flat\rrparenthesis_{\xi-1}=\llparenthesis x^\flat\rrparenthesis_{\xi}$. 
We are able to show that she can find such a vertex in a $\mathpzc{U}_{\xi-1}^*$-tuple. Afterwards she finds a vertex, say  $(z_{\xi-2},y)\in\mathbb{X}_1^*$, 
such that $\llparenthesis z_{\xi-2}\rrparenthesis_{\xi-1}=z_{\xi-1}$,  
$\mathrm{idx}(z_{\xi-2},y)=\mathrm{idx}(\llparenthesis x^\flat\rrparenthesis_{\xi-2},y)$ if $\mathrm{idx}(\llparenthesis x^\flat\rrparenthesis_{\xi-2},y)<\xi$, and a special variation of (\ref{main-diamond-xi}$^\diamond$) is met where $i$ is fixed to $\xi-2$ and ``$x^{\prime\flat}$'' is replaced by ``$z_{\xi-2}$''. Moreover, $z_{\xi-2}=\llparenthesis z_{\xi-2}\rrparenthesis_{\xi-1}$ 
if $\llparenthesis x^\flat\rrparenthesis_{\xi-2}=\llparenthesis x^\flat\rrparenthesis_{\xi-1}$. And so on. 
Note that, once $(z_{\xi-2},y)$ is chosen, the special variation of (\ref{main-diamond-xi}$^\diamond$) is also met where $i$ is fixed to $\xi-1$ and ``$x^{\prime\flat}$'' is replaced by ``$\llparenthesis z_{\xi-2}\rrparenthesis_{\xi-1}$'' (i.e. ``$z_{\xi-1}$''). 
Finally, she picks the vertex $(x^{\prime\flat},y)$ such that $\llparenthesis x^{\prime\flat}\rrparenthesis_{t+1}=z_{t+1}$, 
$\mathrm{idx}(x^{\prime\flat},y)=\mathrm{idx}(x^\flat,y)$, and a special variation of (\ref{main-diamond-xi}$^\diamond$) is met where $i$ is fixed to $t$. 
Moreover, $x^{\prime\flat}=z_{t+1}$ if $x^\flat=\llparenthesis x^\flat\rrparenthesis_{t+1}$. In short, Duplicator can  use this process to pick the vertex $(x^{\prime\flat},y)$ such that $\llparenthesis x^{\prime\flat}\rrparenthesis_i=z_i$, which implies that (\ref{main-diamond-xi}$^\diamond$) is met.\footnote{Note that, if $\mathrm{S}_i^A\Vdash \mathrm{S}_i^B$ and $\mathrm{S}_i^A=\mathrm{S}_i^B$ for any $t\leq i<\xi$, then (\ref{main-diamond-xi}$^\diamond$)(iii) implies that 
$\llbracket x^\flat\rrbracket_{t}^{min}-[\llparenthesis x^\flat\rrparenthesis_\xi]_{t}\equiv\llbracket x^{\prime\flat}\rrbracket_{t}^{min}-[\llparenthesis x^{\prime\flat}\rrparenthesis_\xi]_{t}$ (mod $\beta_{m-\xi}^{m-r}$). By Remark \ref{special-locally-isom}, this in turn implies that (\ref{main-diamond-xi}$^\diamond$) (i), (ii) hold. With a little more thought, we know that  (\ref{main-diamond-xi}$^\diamond$) (v) (vi) also hold, and (iv) also holds provided that $i>t$. Therefore, by Remark  \ref{remark-ommit-mod}, in this special case Duplicator can also simply pick $(x^{\prime\flat},y)$ s.t. 
$\llbracket x^\flat\rrbracket_{t}^{min}-[\llparenthesis x^\flat\rrparenthesis_\xi]_{t}=\llbracket x^{\prime\flat}\rrbracket_{t}^{min}-[\llparenthesis x^{\prime\flat}\rrparenthesis_\xi]_{t}$.}
   
In the following, we use ``$i=t$'' as an example to explain how to find a vertex satisfies the special variation of (\ref{main-diamond-xi}$^\diamond$) where $i$ is fixed to $t$, provided that $(z_{t+1},y)$ is already determined. Henceforth, when we mention (\ref{main-diamond-xi}$^\diamond$), we mean this special variation unless otherwise specified. The arguments for  other variations  are very similar. At the same time, we show that (\ref{xi-order-condition-hold}$^\diamond$) and (\ref{condition-for-DuWin}$^\diamond$) can be met.  
Observe that, Duplicator has the freedom to pick a vertex to ensure that  (\ref{main-diamond-xi}$^\diamond$) (ii),(\ref{main-diamond-xi}$^\diamond$) (iii), (\ref{main-diamond-xi}$^\diamond$) (v)  and (\ref{main-diamond-xi}$^\diamond$) (vi) hold simultaneously. (\ref{main-diamond-xi}$^\diamond$) (i) is already met since $t^\prime=t$. Note that, $\llparenthesis x^\flat\rrparenthesis_t=x^\flat$ and $\llparenthesis x^{\prime\flat}\rrparenthesis_t=x^{\prime\flat}$ since $\mathrm{idx}(x^\flat,y)=\mathrm{idx}(x^{\prime\flat},y)=t$.\\[-11pt] 

(\ref{main-diamond-xi}$^\diamond$) (v) implies that ${\mathrm{sgn}((x^\flat\!,y),(u^\flat\!,v))}\!=\!\mathrm{sgn}((x^{\prime\flat},y),(u^{\prime\flat},v))$ for any pebbled pair of vertices $(u,v)\!\Vdash\! (u^{\prime},v)$.
It means that $T\Vdash T^\prime$. 
Note that 
(\ref{main-diamond-xi}$^\diamond$) (ii) implies that
 $[x^\flat]_{t}\equiv[x^{\prime\flat}]_{t}$ (mod $k-1$). Therefore,  Duplicator can ensure that  
 $Z\Vdash Z^{\prime}$ (cf. Lemma \ref{cm=depth}) and $U\Vdash U^{\prime}$ (cf. (\ref{main-diamond-xi}$^\diamond$) (vi)). 

Lemma \ref{cm=depth} and (\ref{main-diamond-xi}$^\diamond$) (ii) imply that any pair of pebbled vertices in $\mathbb{X}_t$ in respective structures have the same coordinate congruence number in the $t$-th abstraction. 
Together with  (\ref{strategy-3-eqn1}) and (\ref{main-diamond-xi}$^\diamond$) (i) (the full version), as well as the full version of (\ref{main-diamond-xi}$^\diamond$) (ii),  
it implies that $D^{\geq t}\Vdash D^{\prime\geq t}$  holds, since missing of edges in higher abstractions propagates to lower abstractions coincidently in these two structures. For the similar reason, $D^{<t}\Vdash D^{\prime <t}$ also holds. Hence, we have $D\Vdash\! D^{\prime}$. \label{page-D-Vadash-D-prime}

 In the following we give the justification that (\ref{main-diamond-xi}$^\diamond$) (viii) can be ensured.
   In short, it is because that Duplicator has a winning strategy in the games over sufficiently large pure linear orders. Let $(a,b)$ be defined as in page \pageref{main-diamond-xi}. 
 First, she can make it by ensuring that  $(\llbracket\llparenthesis a^\prime\rrparenthesis_{t}\rrbracket_{t}^{min},b)<(\llbracket x^{\prime\flat}\rrbracket_{t}^{min},y)$ if $(\llbracket\llparenthesis a\rrparenthesis_{t}\rrbracket_{t}^{min},b)<(\llbracket x^\flat\rrbracket_{t}^{min},y)$, or $(\llbracket\llparenthesis a^\prime\rrparenthesis_{t}\rrbracket_{t}^{min},b)>(\llbracket x^{\prime\flat}\rrbracket_{t}^{min},y)$ if $(\llbracket\llparenthesis  a\rrparenthesis_{t}\rrbracket_{t}^{min},b)>(\llbracket x^\flat\rrbracket_{t}^{min},y)$. 
Duplicator can achieve this by an auxiliary game over pure linear orders to determine the value for $\llbracket x^{\prime\flat}\rrbracket_{t}^{min}$. 
 It means that Duplicator is able to win the following game, wherein Spoiler picks $(x^\flat,y)$ and she replies with $(x^{\prime\flat},y)$. 
\begin{equation}\label{eqn4-1-round-game-reduction} 
\Game_1((\widetilde{\mathfrak{A}}_{k,m}^*[\mathbb{X}_t^*]|\langle \leq \rangle,\overrightarrow{c_A}_s^t),(\widetilde{\mathfrak{B}}_{k,m}^*[\mathbb{X}_t^*]|\langle \leq \rangle,\overrightarrow{c_B}_s^t)).
\end{equation} 
Here, $\overrightarrow{c_A}_s^{t}$ is defined similar to \eqref{def-c_A_s-xi}, except that $\xi$ is replaced by $t$. $\overrightarrow{c_B}_s^{t}$ is defined likewise. Note that,  $|\overrightarrow{c_A}_s^{t}|=|\overrightarrow{c_B}_s^{t}|<m$, and that $\gamma_t^*$, as well as $\gamma_t^*/\mathpzc{U}_{t}^*$, is much greater than $2^m$. Therefore, by Remark \ref{remark-linear-orders-1}, Duplicator has a winning strategy. 
Afterwards, Duplicator resorts to the (virtual) game \eqref{eqn3-1-round-game-reduction} to determine the type label of $(x^{\prime\flat},y)$, which will be introduced soon.  
Second,  assume that $(\llbracket\llparenthesis a\rrparenthesis_{t}\rrbracket_{t}^{min},b)=(\llbracket x^\flat\rrbracket_{t}^{min},y)$ and $(\llbracket\llparenthesis a^\prime\rrparenthesis_{t}\rrbracket_{t}^{min},b)=(\llbracket x^{\prime\flat}\rrbracket_{t}^{min},y)$. It implies that $b=y$, as well as $\llbracket\llparenthesis a\rrparenthesis_{t}\rrbracket_{t}^{min}=\llbracket x^\flat\rrbracket_{t}^{min}$ and $\llbracket\llparenthesis a^\prime\rrparenthesis_{t}\rrbracket_{t}^{min}=\llbracket x^{\prime\flat}\rrbracket_{t}^{min}$. Moreover, Duplicator can make it that  $|\chi(\llparenthesis a
\rrparenthesis_{t},b)\!\!\restriction\!\! S|=|\chi(\llparenthesis a^\prime
\rrparenthesis_{t},b)\!\!\restriction\!\! S|$ and $|\chi(x^\flat,y)\!\!\restriction\!\! S|=|\chi(x^{\prime\flat},y)\!\!\restriction\!\! S|$, provided that $\mathrm{idx}(u^{\flat},v)<\xi-1$ and $\mathrm{idx}(x^{\flat},y)<\xi-1$.  
Clearly, (\ref{main-diamond-xi}$^\diamond$) (viii) holds if $|\chi(x^\flat,y)\!\!\restriction\!\! S| \neq |\chi(\llparenthesis a
\rrparenthesis_{t},b)\!\!\restriction\!\! S|\neq 0$.\footnote{By Lemma \ref{proj-greater-index},  $\mathrm{idx}(\llparenthesis a\rrparenthesis_{t},b)\geq t$. If $\mathrm{idx}(\llparenthesis a\rrparenthesis_{t},b)>t$, it implies that $|\chi(\llparenthesis a
\rrparenthesis_{t},b)\!\!\restriction\!\! S|=|\chi(\llparenthesis a^\prime
\rrparenthesis_{t},b)\!\!\restriction\!\! S|=0$, which means that 
$(a,b)<(x^\flat,y)$ and $(a^\prime,b)<(x^{\prime\flat},y)$. If $\mathrm{idx}(\llparenthesis a\rrparenthesis_{t},b)=t$, then, by definition, $(a,b)<(x^\flat,y)$ if $0<|\chi(\llparenthesis a
\rrparenthesis_{t},b)\!\!\restriction\!\! S|<|\chi(x^\flat,y)\!\!\restriction\!\! S|$; and $(a,b)>(x^\flat,y)$ if $|\chi(\llparenthesis a
\rrparenthesis_{t},b)\!\!\restriction\!\! S|>|\chi(x^\flat,y)\!\!\restriction\!\! S|>0$.} If $|\chi(x^\flat,y)\!\!\restriction\!\! S|=|\chi(\llparenthesis a
\rrparenthesis_{t},b)\!\!\restriction\!\! S|\neq 0$, Duplicator resorts to the (virtual) game \eqref{eqn3-1-round-game-reduction}, which not only ensures (\ref{main-diamond-xi}$^\diamond$) (viii) but also determines the type label of $(x^{\prime\flat},y)$. Finally, note that (\ref{main-diamond-xi}$^\diamond$) (viii) is easy to ensure if at least one of $|\chi(\llparenthesis a
\rrparenthesis_{t},b)\!\!\restriction\!\! S|$ and $|\chi(x^\flat,y)\!\!\restriction\!\! S|$ is $0$.


(\ref{main-diamond-xi}$^\diamond$) (vii) prevents ($\varkappa$) from occurring. Itself is not difficult to ensure. The problem is whether Duplicator can ensure it without voilating (\ref{main-diamond-xi}$^\diamond$) (iv). In other words, Duplicator need find a way to satisfy (\ref{main-diamond-xi}$^\diamond$) (vii) and (\ref{main-diamond-xi}$^\diamond$) (iv) simultaneously. 
Recall that, we don't have to consider the cases where there is $(a,b)\in\overline{c_A}$ and the length of the board history of $(a,b)$ is greater than that of $(x,y)$, because in such cases Duplicator can resort to the board history of $(a^\prime,b)$, where $(a,b)\Vdash (a^\prime,b)$, to determine $(x^\prime,y)$. Therefore, in the following we assume that $(a,b)\twoheadrightarrow (x,y)$ for any $(a,b)\in\overline{c_A}$.  

Now we show that Duplicator can ensure (\ref{main-diamond-xi}$^\diamond$) (vii) and (\ref{main-diamond-xi}$^\diamond$) (iv) simultaneously. Note that (\ref{main-diamond-xi}$^\diamond$) (vii) will not occur if $\llbracket u^\flat\rrbracket_{t}^{min}\!\neq\!\llbracket x^\flat \rrbracket_{t}^{min}$, or $\llbracket u^\flat\rrbracket_{t}^{min}\!=\!\llbracket x^\flat \rrbracket_{t}^{min}$ but $|\chi(x^\flat,y)\!\!\restriction\!\! S|\neq |\chi(u^\flat,v)\!\!\restriction\!\! S|$. Therefore, we assume that $\llbracket u^\flat\rrbracket_{t}^{min}\!=\!\llbracket x^\flat \rrbracket_{t}^{min}$ and $|\chi(x^\flat,y)\!\!\restriction\!\! S|=|\chi(u^\flat,v)\!\!\restriction\!\! S|$. 

Assume that $|\chi(x^\flat,y)\!\!\restriction\!\! S|=p$. If $p=0$, Duplicator
simply plays the game over abstractions (resort to strategy 1 or 2) so that  $x^\flat-\llparenthesis x^\flat\rrparenthesis_{t+1}=x^{\prime\flat}-\llparenthesis x^{\prime\flat}\rrparenthesis_{t+1}$ (assume that the game is over the $\xi$-th abstraction; it is similar if the game is over the $(\xi-1)$-th abstraction). So (\ref{main-diamond-xi}$^\diamond$) (vii) holds and $\left|\chi(x^{\prime\flat},y)\!\!\restriction\!\! S\right|=0$. The latter, i.e. $\left|\chi(x^{\prime\flat},y)\!\!\restriction\!\! S\right|=|\chi(x^{\flat},y)\!\!\restriction\!\! S|=0$,  implies that (\ref{main-diamond-xi}$^\diamond$) (iv) holds.  Henceforth assume that $p\geq 1$. 
It means that the value of 
$j^\star=\lfloor [x^\flat]_t/(k-1)\rfloor \mbox{ mod } cl_{t+1}^*$ (recall that $cl_{t+1}^*=2|\mathbb{X}_{t+1}^*|+\Sigma_{i=1}^{k-2} |\mathbb{X}_{t+1}^*|^i$) falls in the range $[(|\mathbb{X}_{t+1}^*|+\Sigma_{i=1}^{p-1} |\mathbb{X}_{t+1}^*|^i), (|\mathbb{X}_{t+1}^*|+\Sigma_{i=1}^{p} |\mathbb{X}_{t+1}^*|^i)-1]$. In other words, $j^\star-(|\mathbb{X}_{t+1}^*|+\Sigma_{i=1}^{p-1} |\mathbb{X}_{t+1}^*|^i)$ encodes a $p$-tuple $((x_1,y_1),\ldots,(x_p,y_p))\in |\mathbb{X}_{t+1}^*|^p$. Duplicator resorts to virtual game to determine her pick. She will pick $(x^\prime,y)$ such that 
$(\lfloor [x^{\prime\flat}]_t/(k-1)\rfloor \mbox{ mod } cl_{t+1}^*)-(|\mathbb{X}_{t+1}^*|+\Sigma_{i=1}^{p-1} |\mathbb{X}_{t+1}^*|^i)$ encodes a $p$-tuple $((x_1^\prime,y_1),\ldots,(x_p^\prime,y_p))\in |\mathbb{X}_{t+1}^*|^p$. 
Suppose that $(u_1,y)\ldots (u_r,y)$ are those pebbled vertices s.t. $\llbracket u_i^\flat\rrbracket_t^{min}\!=\!\llbracket x^\flat\rrbracket_t^{min}$ and $|\chi(u_i,y)\!\!\restriction\!\! S|=p$. Similarly, $(u_1^\prime,y)\ldots (u_r^\prime,y)$ are the corresponding vertices where $(u_i,y)\!\Vdash\! (u_i^\prime,y)$ in the original game, which implies that $|\chi(u_i^\prime,y)\!\!\restriction\!\! S|=p$ due to Strategy \ref{t<xi}. 
Clearly, $0\leq r\leq k-2$. Note that $(\varkappa)$ occurs only if $r>0$.

Here we assume that $r>0$. The case $r=0$ is similar.  
Suppose that $\mathrm{idx}(x_j,y_j)=t_j$. 
For any $l\in [1,r]$, let $\lfloor [u_l^\flat]_t/(k-1)\rfloor \mbox{ mod } cl_{t+1}^*-(|\mathbb{X}_{t+1}^*|+\Sigma_{i=1}^{p-1} |\mathbb{X}_{t+1}^*|^i)$ encodes a $p$-tuple $((u_{l1},v_{l1}),\ldots,(u_{lp},v_{lp}))\in |\mathbb{X}_{t+1}^*|^p$. 
Let $$H_{xy}^S:=\{i\mid (x_i,y_i)\in (x,y)[\mathrm{BC}] \mbox{ and }(x_i,y_i) \mbox{ is in the }p\mbox{-tuple}\}.\footnote{Here the $p$-tuple is $((x_1,y_1),\ldots,(x_p,y_p))$.}$$ 
 We can play the following pairs of games. The first one in a pair is an Ehrenfeucht-Fra\"iss\' e\xspace game over pure linear orders, which is used to determine the possible positions that is in accordance with the abstraction-order-condition; the second one is a $1$-round $(k-1)$-pebble game wherein the order is ``ignored'' temporarily and  Spoiler picks $(x_j,y_j)$ and she replies with  $(x_j^{\prime},y_j)$ in the $j$-th round, for $j\in [1,p]-H_{xy}^S$. Note that the second game is used to determine the type label of $(x^{\prime\flat},y)$. 
 Moreover, for the games over pure linear orders, they are played \textit{successively}\footnote{By default, once a vertex is picked, a pebble is on it unless the players lift the pebble. 
 As a consequence, the picking of $(x_{j_1},y_{j_1})$ will influence the picking of $(x_{j_2},y_{j_2})$, if $j_1,j_2\in [1,p]-H_{xy}^S$ and $(x_{j_2},y_{j_2})$ is picked later. Note that, the purpose of playing the games successively is that Duplictor need to ensure that $(u_{ij_1}^\prime,v_{ij_1})\leq (u_{ij_2}^\prime,v_{ij_2})$ iff $(x_{j_1}^\prime,y_{j_1})\leq (x_{j_2}^\prime,y_{j_2})$ where $j_1,j_2\in [1,p]$ and $(x_{j_1}^\prime,y_{j_1}),(x_{j_2}^\prime,y_{j_2})$ are in the $p$-tuple associated to $\chi(x^\prime,y)\!\!\restriction\!\! S$. Therefore, the order with which these games are played successively is not important.}; by contrast, for the pebble games, they are played \textit{independently}. By combining these two games Duplicator can decide what should  $(x_j^\prime,y_j)$ be. More precisely, in the first game, Duplicator uses the following strategy.    
 \begin{itemize}
 \item If $t_j\geq \xi$, then Duplicator resorts to \eqref{eqn-1-round-S-game_xi} or \eqref{eqn-1-round-S-game_xi-1} to determine the unabridged interval where $(x_j^\prime,y_j)$ should reside. 
 
 \item If $t_j<\xi$, then Duplicator first uses the game \ref{eqn4-1-round-game-reduction} to determine $(\llbracket x_j^\prime\rrbracket_{t_j}^{min},y_j)$.  

\end{itemize} 
In the second game, Duplicator follows the following strategy.
 \begin{itemize}
 \item If $t_j\geq \xi$, then Duplicator resorts to Strategy \ref{play-in-xi-abs}  and Strategy \ref{xi-1} to determine the type label of  $(x_j^\prime,y_j)$. Cf., in particular, \eqref{xi-1-simuluation-1}.  
 
 \item If $t_j<\xi$, then Duplicator uses the following one round (virtual) game \ref{eqn3-1-round-game-reduction} to determine the type label of $(x_j^\prime,y_j)$. 
\begin{equation}\label{eqn3-1-round-game-reduction}
\Game_1^{k-1}((\widetilde{\mathfrak{A}}_{k,m}^*,(x,y)[\mathrm{BC}]),(\widetilde{\mathfrak{B}}_{k,m}^*,(x^\prime,y)[\mathrm{BC}])) 
\end{equation}  
Note that in this case it may incur recursive calls. In the recursion a pair of games will be played to decide one pick, as just described. 
It is a sort of \textit{game reductions} from lower abstraction to higher abstraction because $\mathrm{idx}(x_j^\prime,y_j)>t$, and will finally return a valid type label for $(x_j^{\prime},y_j)$ since $\mathrm{idx}(x_j^\prime,y_j)$ cannot be greater than $m$. Note that, in the recursions, ``$(x,y)[\mathrm{BC}]$'', as well as ``$(x^\prime,y)[\mathrm{BC}]$'', is fixed in \eqref{eqn3-1-round-game-reduction}.      

\end{itemize}

 Moreover, because  the game boards are in partial isomorphism at the end of the games, it means that the following holds when $j\notin H_{xy}^S$.  
 \begin{equation}\label{eqn2-1-round-game-reduction}
 u_{ij}\leq x_j \mbox{ if and only if } u_{ij}^\prime\leq x_j^\prime. 
\end{equation}   
 Note that $(x_j,y_j)\notin R$ and $(x_j^\prime,y_j)\notin R^\prime$, since $j\notin H_{xy}^S$.   

 For any $j\in H_{xy}^S$, Duplicator can determine $(x_j^\prime,y_j)$ such that $(x_j,y_j)\Vdash (x_j^\prime,y_j)$ in the virtual game that she uses to determine $(x^\prime,y)[\mathrm{BC}]$. It implies that (\ref{main-diamond-xi}$^\diamond$) (iv) holds, since $(x_j,y_j)\in R$ iff $(x_j^\prime,y_j)\in R^\prime$ for $j\in [1,p]$. To see that (\ref{main-diamond-xi}$^\diamond$) (vii) also holds, we need only show that \eqref{eqn2-1-round-game-reduction} also holds when $j\in H_{xy}^S$. 
If both of $(u_{ij},v_{ij})$ and $(x_j,y_j)$ are in $(x,y)[\mathrm{BC}]$, then it is determined by the fact that Duplicator wins the virtual game that determines the board history. Now suppose that $(u_{ij},v_{ij})\notin (x,y)[\mathrm{BC}]$ and $(x_j,y_j)\in (x,y)[\mathrm{BC}]$. There are two cases. First, assume that $(x_j,y_j)$ is picked before $(u_i,v_i)$ in the virtual game that determines the board history of $(x^\prime,y)$. Obviously, \eqref{eqn2-1-round-game-reduction} holds when $(u_i,v_i)$ is picked in the original game: 
the point is that $(x_j,y_j)$ belongs to $(u_i,v_i)[\mathrm{BC}]$.   
 Second,  assume that $(x_j,y_j)$ is picked after $(u_i,v_i)$ in the virtual game. Recall that  $j\in H_{xy}^S$ and that $\mathrm{idx}(x_j,y_j)=t_j$.  If $t_j\geq \xi-1$, then \eqref{eqn2-1-round-game-reduction} holds, due to \eqref{eqn-1-round-S-game_xi} or \eqref{eqn-1-round-S-game_xi-1}. Suppose that $t_j<\xi-1$. Then $\mathrm{idx}(x_j^\prime,y_j)$ should be $t_j$, due to Strategy \ref{t<xi}. In such case \eqref{eqn2-1-round-game-reduction} still holds because (\ref{main-diamond-xi}$^\diamond$) (viii) can be ensured.\footnote{Note that, in the applying of (\ref{main-diamond-xi}$^\diamond$) (viii) and its argument, ``$(x^\flat,y)$'' and $(x^{\prime\flat},y)$ in (\ref{main-diamond-xi}$^\diamond$) (viii) should be replaced by $(x_j,y_j)$ and $(x_j^\prime,y_j)$ respectively; and ``$(u,v)$'' should be replaced by $(u_i,y)$ for some $i\in [1,r]$.}    

With this somewhat sophisticated argument we have shown that (\ref{xi-order-condition-hold}$^\diamond$) can be ensured, in accordance with (\ref{main-diamond-xi}$^\diamond$), in particular (iv), and with  (\ref{condition-for-DuWin}$^\diamond$). 


Note that, after this round, $\xi$ is either unchanged or decreased by one.
And, as usual, $\theta:=\theta-1$. Hence (\ref{theta-lessthan-xi}$^\diamond$)  holds.  
\end{enumerate}

Now we need to check whether (\ref{main-diamond-boundary-strategy}$^\diamond$) can be ensured. Recall that we need switch back to the game board  $((\widetilde{\mathfrak{A}}_{k,m},\overline{c_A}),(\widetilde{\mathfrak{B}}_{k,m},\overline{c_B}))$ for the discussion. 
To simplify the following discussion, we can safely assume that Duplicator's picks ensure that $\mathrm{idx}(x^{\prime\flat},y)\leq \mathrm{idx}(x^\flat,y)$, and $\mathrm{idx}(x^{\prime\flat},y)=\xi-1$ if $\mathrm{idx}(x^{\prime\flat},y)<\mathrm{idx}(x^\flat,y)$. Recall that $\mathrm{idx}(x^\flat,y)=t$ and $\mathrm{idx}(x^{\prime\flat},y)=t^\prime$. 
Note that  
$[x^{\prime\flat}]_{\xi-1}\equiv 0$ (mod $k-1$) if $t\geq \xi$.    
Firstly, assume that $t>t^\prime=\xi-1$. The case where $\mathrm{idx}(x^\prime,y)=\mathrm{idx}(x,y)\geq \xi$ and $\mathrm{idx}(x^\prime,y)=\mathrm{idx}(x,y)=\xi-1$ are similar.\footnote{In the latter case, just observe that the values of $\mathrm{RngNum}(\cdot,\cdot)$, $g(\cdot)$ and $\restriction\!\! S$ will not cause a problem to (\ref{main-diamond-boundary-strategy}$^\diamond$), provided Duplicator uses Strategy \ref{xi-1} (cf. the last paragraph of it).} There are several cases need to discuss. 
Suppose that $0<y<k-1$. 
Then, for any boundary vertex $(a,b)\in\mathbb{X}_1$ where $0<b<k-1$,  
\begin{equation}\label{strategy-for-boundary-attack}
(\llparenthesis a^\flat\rrparenthesis_{\xi-1},b)\notin\chi(x^{\prime\flat},y)\!\!\restriction \!\Omega  \mbox{ and }(\llparenthesis a^\flat\rrparenthesis_{\xi-1},b)\notin\chi(x^\flat,y)\!\!\restriction \!\Omega,
\end{equation} 
simply because $[\llparenthesis a^\flat\rrparenthesis_{\xi-1}]_{\xi-1}\not\equiv 0$ (mod $k-1$) (recall that $a^\flat=\gamma_{m-1}^*-tr(b)$ and $[\llparenthesis a^\flat\rrparenthesis_{\xi-1}]_{\xi-1}=[a^\flat]_{\xi-1}\not\equiv 0$ (mod $k-1$); cf. \eqref{boundary-nequiv-0-mod-2}). If $b=0$ or $b=k-1$, again by definition, \eqref{strategy-for-boundary-attack} holds simply because $\mathrm{idx}(\llparenthesis a^\flat\rrparenthesis_{\xi-1},b)=\xi-1=t^\prime<t$ (cf. Lemma \ref{boundary-index-over-abstractions}). In addition, by definition, $\mathrm{sgn}((\llparenthesis a^\flat\rrparenthesis_{\xi-1},b),(x^\flat,y))=1$ and $\mathrm{sgn}((\llparenthesis a^\flat\rrparenthesis_{\xi-1},b),(x^{\prime\flat},y))=1$.   
By Lemma \ref{cm=depth}, $\mathbf{cc}([x^\flat]_{\xi-1},y)=y$ mod ($k-1$), which equals  $\mathbf{cc}([x^{\prime\flat}]_{\xi-1},y)$. According to Strategy \ref{xi-1}  (cf. p. \pageref{Stra-2-boundary-S}) $(\llparenthesis a^\flat\rrparenthesis_{\xi-1},b)\in \chi(x^{\prime\flat},y)\!\!\restriction \! S$ iff $(\llparenthesis a^\flat\rrparenthesis_{\xi-1},b)\in \chi(x^\flat,y)\!\!\restriction \! S$. 
Furthermore, $g(\llparenthesis a^\flat\rrparenthesis_{\xi-1})=0$ and $g(x^{\prime\flat})$ can be chosen to $0$ (cf. Lemma \ref{no-missing-edges_xi-1}), which implies that $\mathrm{SW}((\llparenthesis a^\flat\rrparenthesis_{\xi-1},b),(x^{\prime\flat},y))=0$. It means that $(\mathbf{cc}([\llparenthesis a^\flat\rrparenthesis_{\xi-1}]_{\xi-1},b)-\mathbf{cc}([x^{\prime\flat}]_{\xi-1},y))\times  
(b-y)\times(-1)^{\mathbf{BIT}(\mathrm{SW}((\llparenthesis a^\flat\rrparenthesis_{\xi-1},b),(x^{\prime\flat},y)),\hat{q}(b,y))}>0$.  
Therefore, $(\llparenthesis a^\flat\rrparenthesis_{\xi-1},b)$ is adjacent to $(x^{\prime\flat},y)$ if and only if $(\llparenthesis a^\flat\rrparenthesis_{\xi-1},b)$ is adjacent to $(x^\flat,y)$. 
Now suppose that $y=0$ or $k-1$. Because $\mathbf{cc}([x^{\prime\flat}]_{\xi-1},y)=\mathbf{cc}([x^\flat]_{\xi-1},y)=\mathbf{cc}([\llparenthesis a^\flat\rrparenthesis_{\xi-1}]_{\xi-1},b)=0$, $(\llparenthesis a^\flat\rrparenthesis_{\xi-1},b)$ is not adjacent to $(x^{\prime\flat},y)$ and $(x^\flat,y)$. \label{discussion-boundary-checkout} 

Secondly, assume that $t=t^\prime<\xi$ (cf. Strategy \ref{t<xi}). The arguments are similar except one point. By the last arguments, now we can ensure that either (i) $(\llparenthesis a\rrparenthesis_\xi,b)$ is adjacent to $(\llparenthesis x^\flat\rrparenthesis_\xi,y)$ if and only if $(\llparenthesis a\rrparenthesis_\xi,b)$, in the other structure, is adjacent to $(\llparenthesis x^{\prime\flat}\rrparenthesis_\xi,y)$, or (ii) similar to (i) but $\xi$ is substituted with $\xi-1$. Suppose that (i) holds. The other case is similar. The point is that, if $b=0$ or $k-1$, we need to show that $(\llparenthesis a\rrparenthesis_\xi,b)\in\chi(x^\flat,y)\!\!\restriction\!\!\Omega$ if and only if $(\llparenthesis a\rrparenthesis_\xi,b)\in\chi(x^{\prime\flat},y)\!\!\restriction\!\!\Omega$. But this is clear now, the argument needed is similar to the one that we show that $D\Vdash D^\prime$ (cf. page \pageref{page-D-Vadash-D-prime}). 

In summary, the boundary checkout strategy is not effective for Spoiler over the ``flat'' game board $((\widetilde{\mathfrak{A}}_{k,m}^*,(x,y)[\mathrm{BC}]),(\widetilde{\mathfrak{B}}_{k,m}^*,(x^\prime,y)[\mathrm{BC}]))$ or over the ``flat'' changing boards. It implies that the boundary checkout strategy is not effective for Spoiler over the game board $((\widetilde{\mathfrak{A}}_{k,m},\overline{c_A}),(\widetilde{\mathfrak{B}}_{k,m},\overline{c_B}))$ as well. 

All in all, Duplicator can ensure that (\ref{theta-lessthan-xi}$^\diamond$)\textapprox (\ref{condition-for-DuWin}$^\diamond$) hold throughout the game. 
\end{proof}

Note that the game board may change for each round in a virtual game. Indeed, if game board change is allowed in a normal game, then there is no need to introduce $(\widetilde{\mathfrak{A}}_{k,m},\widetilde{\mathfrak{B}}_{k,m})$ at all. The strategy described in the proof of Lemma \ref{main-lemma} work over the ``flat'' associated board  $(\widetilde{\mathfrak{A}}_{k,m}^*,\widetilde{\mathfrak{B}}_{k,m}^*)$ directly if board change is allowed. In other words, games over changing bord is the basis of all the other games. In such games, we use some auxiliary games to help Duplicator make her decision. In particular, in Strategy 3, we have used a technique, called game reductions, to prevent Spoiler from winning the associated game simply by picking continuously inside a $\mathpzc{U}_i^*$-tuple for some $i$ (cf. \eqref{eqn4-1-round-game-reduction} and \eqref{eqn3-1-round-game-reduction}). 
We also use them to determine Duplicator's picks in the virtual games wherein Strategy \ref{t<xi} applies.  

Fig. \ref{Fig-reduction-between-abstractions} gives some hints on a way Spoiler can use to detect the difference between structures $\widetilde{\mathfrak{A}}_{k,m}^*$ and $\widetilde{\mathfrak{B}}_{k,m}^*$, wherein Duplicator has to resort to game reductions to reply properly. In the figure, a node is either black or red representing un-pebbled (in black) and pebbled (in red) vertices respectively. We assume that the vertex ``$\mathrm{a}$'' is picked in the last round and the vertex ``$\mathrm{b}$'' is picked in the current round. 
The dotted (in blue) and dashed (in red) arrows together indicate the set ``$\restriction\!\! S$'' associated with a vertex.  Moreover, 
we use dashed arrows to indicate the set $R$. Here we regard the vertex b as the vertex ``$(x^\flat,y)$''. Note that, although 
dotted arrows \textit{and} dashed arrows are used to indicate the edges forbidden in $\widetilde{\mathfrak{A}}_{k,m}^*$ or $\widetilde{\mathfrak{B}}_{k,m}^*$, they are not necessary forbidden in the changed boards, wherein only the dashed arrow have to be forbidden.  For instance, from the figure we know that both $\mathbf{cl}(\mathrm{c})$ and $\mathbf{cl}(\mathrm{d})$ are in $\chi(\mathrm{b})\!\!\restriction\!\!S$, or $\chi(x^\flat,y)\!\!\restriction\!\!S$; and  $(x,y)\rightsquigarrow \mathrm{c}^\star$ where $\mathrm{c}^\star$ is in $\overline{c_A}$ such that $\mathrm{c}$ is the corresponding vertex in $\overrightarrow{c_A}$.  
After playing the auxiliary virtual games over the linear orders, Duplicator is able to determine $b^\prime$, i.e. $(x^{\prime\flat},y)$, which she should pick to respond the picking of $b$ by Spoiler. That is, we use the auxiliary games to determine the picks of Duplicator in the game over changing board that is associated to the original game. 

Spoiler can make it that $\mathrm{a}$ and $\mathrm{b}$ be very closed. 
For example, $\mathrm{b}=\mathrm{a}-1$. In such case we have $\mathrm{b}^\prime=\mathrm{a}^\prime-1$, provided that Duplicator responds properly, i.e. preserving the abstraction-order-condition. 
Note that, in the case $\mathrm{b}=\mathrm{a}-1$, the result of virtual games should witness that $\mathrm{b}^\prime=\mathrm{a}^\prime-1$. 

\begin{figure}[]
\hspace*{11mm}
\includegraphics[trim = 0mm 0mm 0mm 0mm, clip, scale=0.43]{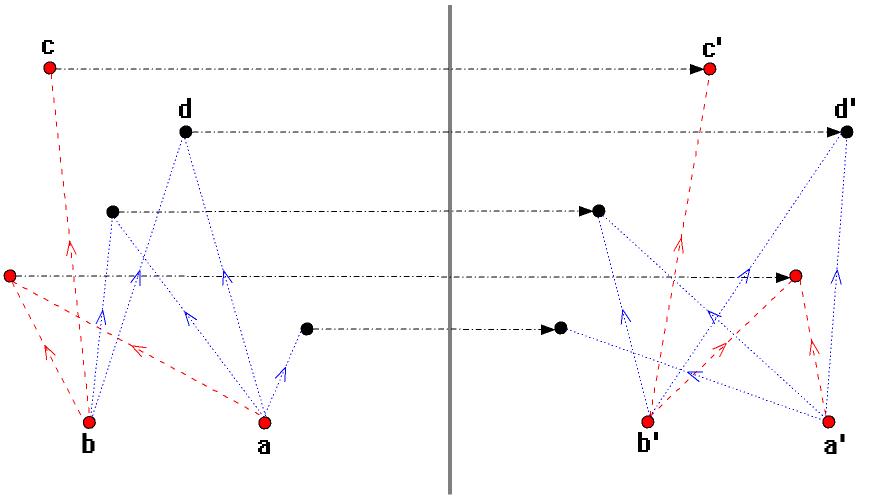}
\caption{The left part is in $\widetilde{\mathfrak{A}}_{k,m}^*$; the right part is in $\widetilde{\mathfrak{B}}_{k,m}^*$. Suppose that $\mathrm{idx}(\mathrm{a})=\mathrm{idx}(\mathrm{b})=t<\xi$, and that $\mathrm{a}$ is picked in the last round and $\mathrm{b}$ is picked in the current round. The ponit is that $\mathrm{a}$ and $\mathrm{b}$ can be very closed.}
\label{Fig-reduction-between-abstractions}
\end{figure}

Now we are able to prove our main result of this paper, using Lemma \ref{main-lemma}. 
\begin{theorem}\label{main-theorem}
For any $k$, $k$ variables are necessary and sufficient to describe $k$-Clique  in $\fo$ on finite ordered graphs. 
\end{theorem}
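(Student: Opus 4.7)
The plan is to split the statement into sufficiency and necessity. For sufficiency, I would write down the explicit $\mathcal{L}^{k}$ sentence
\[
\exists x_1\cdots\exists x_k\Bigl(\bigwedge_{1\leq i<j\leq k}(x_i\neq x_j\wedge E(x_i,x_j))\Bigr),
\]
which uses exactly $k$ variables and defines $k$-Clique on any (ordered) graph. So I would devote at most a single sentence to this direction.

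The bulk of the proof will be the lower bound, and I would argue by contradiction. Suppose some $\fo$ sentence $\varphi$ with at most $k-1$ distinct variables defines $k$-Clique on finite ordered graphs. Let $m:=\max\{qr(\varphi),(k-1)(k-2)+1,3\}$, and consider the structures $\widetilde{\mathfrak{A}}_{k,m}$ and $\widetilde{\mathfrak{B}}_{k,m}$ constructed in Section \ref{section-structures}. By Lemma \ref{A-km-has-clique} (together with the observation that right-circular shifts preserve the $k$-clique witnessed there) we have $\widetilde{\mathfrak{A}}_{k,m}\models \varphi$, while by Lemma \ref{B_k-has-no-k-clique} we have $\widetilde{\mathfrak{B}}_{k,m}\not\models\varphi$. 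On the other hand, Lemma \ref{main-lemma} gives $\widetilde{\mathfrak{A}}_{k,m}\equiv_m^{k-1}\widetilde{\mathfrak{B}}_{k,m}$ (for $k\geq 4$), and the pebble-game/logic correspondence theorem stated earlier then forces $\widetilde{\mathfrak{A}}_{k,m}\models\varphi \Leftrightarrow \widetilde{\mathfrak{B}}_{k,m}\models\varphi$, a contradiction.

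I would next handle the small cases separately to cover the hypothesis $k\geq 4$ of Lemma \ref{main-lemma}. For $k=2$ the claim is immediate: a single variable can only express unary-definable properties of a signature without unary relation symbols, and containing an edge is not such a property (the pair of structures described at the start of Section \ref{special-cases} witnesses this). For $k=3$, Lemma \ref{winning-strategy-in-k=3} together with the implication \eqref{specialcase-equiv-plus-2-equiv} yields $\widetilde{\mathfrak{A}}_{3,m}\equiv_m^{2}\widetilde{\mathfrak{B}}_{3,m}$, and the same contradiction argument applies, because $\widetilde{\mathfrak{A}}_{3,m}$ contains a triangle while $\widetilde{\mathfrak{B}}_{3,m}$ is triangle-free (Fact \ref{B_3_m-is-trianglefree}).

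The main obstacle has already been discharged by Lemma \ref{main-lemma}; at the level of the theorem itself the only subtlety I anticipate is verifying that the circular-shift passage from $\mathfrak{A}_{k,m},\mathfrak{B}_{k,m}$ to $\widetilde{\mathfrak{A}}_{k,m},\widetilde{\mathfrak{B}}_{k,m}$ does not create/destroy $k$-cliques, which follows because shifting permutes the vertices of each row without touching the edge relation between distinct rows and no edges exist within a row. Once this is noted, the proof of Theorem \ref{main-theorem} is essentially a one-line application of the three lemmas plus the standard game-logic theorem, together with the observation that $m$ can be chosen freely as a function of $qr(\varphi)$.
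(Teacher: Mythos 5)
Your proposal is correct and follows essentially the same route as the paper's own proof: sufficiency via the explicit $k$-variable sentence, and necessity by contradiction using Lemma~\ref{main-lemma} (and Lemma~\ref{winning-strategy-in-k=3} for $k=3$, with $k=2$ handled separately) together with the fact that $\widetilde{\mathfrak{A}}_{k,m}$ has a $k$-clique while $\widetilde{\mathfrak{B}}_{k,m}$ does not. Your version is slightly more explicit about choosing $m$ (you use the actual hypothesis $(k-1)(k-2)<m$ of Lemma~\ref{main-lemma}, which is more faithful than the $(k-2)^2<m$ the paper writes) and about why the circular shifts preserve the presence/absence of a $k$-clique, but the argument is the same.
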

\begin{proof}
Suppose that there is a $\mathcal{L}^{k^{\prime}}$ formula, where $k^{\prime}<k$, to describe $k$-Clique, and assume that its quantifier rank is $m$. We can safely assume that $k^{\prime}<k\leq m$ and $(k-2)^2<m$ if $k>3$. If it is not true, we can define another logically equivalent $\mathcal{L}^{k-1}$ formula by artificially increasing the quantifier rank of the formula. Consequently, Spoiler has a winning strategy in the game $\Game_m^{k-1}(\widetilde{\mathfrak{A}}_{k,m},\widetilde{\mathfrak{B}}_{k,m})$, which is in contradiction to Lemma \ref{winning-strategy-in-k=3} and 
 Lemma \ref{main-lemma}.\footnote{Recall that, the simple case where $k=2$ is already proved at the start of section \ref{special-cases}.} Therefore, $k$ variables are needed to define $k$-Clique over finite ordered graphs. 

On the other hand, the following first-order formula describes $k$-Clique:
$\exists x_1\cdots\exists x_k \bigwedge_{i\neq j} (\lnot(x_i=x_j)\land E(x_i,x_j))$. Here we use $x_i$ to denote a vertex. Therefore, $k$ variables are sufficient to define $k$-Clique.
\end{proof}

As a direct corollary of Theorem \ref{main-theorem}, we have the following well-known result, which was first proved by Rossman \cite{RossmanStoc}. 
\begin{corollary}
The bounded variable hierarchy in $\fo$ is strict.
\end{corollary}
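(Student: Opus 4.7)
The plan is to derive the corollary immediately from Theorem \ref{main-theorem}, since all the hard work has already been absorbed into that theorem. Recall that ``strictness of the bounded variable hierarchy in $\fo$'' means that for every $k \geq 2$ there exists a property expressible in $\mathcal{L}^k$ but not in $\mathcal{L}^{k-1}$; equivalently, $\mathcal{L}^{k-1} \subsetneq \mathcal{L}^k$ as fragments of $\fo$ (in the sense of their respective classes of definable properties).

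First I would fix an arbitrary integer $k \geq 2$ and take the $k$-Clique property as the witness separating $\mathcal{L}^{k-1}$ from $\mathcal{L}^k$. The upper bound is immediate: the sentence
\[
\exists x_1 \cdots \exists x_k \bigwedge_{1 \leq i < j \leq k} \bigl( \lnot (x_i = x_j) \land E(x_i, x_j) \bigr)
\]
belongs to $\mathcal{L}^k$ and defines $k$-Clique on all (finite) graphs, so in particular on finite ordered graphs. Hence $k$-Clique lies in $\mathcal{L}^k$.

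For the lower bound I would invoke Theorem \ref{main-theorem}, which asserts that $k$ variables are necessary to define $k$-Clique in $\fo$ over the class of finite ordered graphs. Thus $k$-Clique is not definable in $\mathcal{L}^{k-1}$ on finite ordered graphs, and \emph{a fortiori} not definable in $\mathcal{L}^{k-1}$ over the wider class of all finite graphs: a $\mathcal{L}^{k-1}$-sentence defining $k$-Clique on all finite graphs would in particular define it on the ordered ones (with the order forgotten), contradicting Theorem \ref{main-theorem}. This establishes $\mathcal{L}^{k-1} \subsetneq \mathcal{L}^k$ for this particular $k$.

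Since $k$ was arbitrary, the chain
\[
\mathcal{L}^1 \subsetneq \mathcal{L}^2 \subsetneq \mathcal{L}^3 \subsetneq \cdots \subsetneq \mathcal{L}^k \subsetneq \mathcal{L}^{k+1} \subsetneq \cdots
\]
is strict at every level, which is exactly the statement that the bounded variable hierarchy in $\fo$ does not collapse. There is essentially no obstacle left at this stage: the whole difficulty was concentrated in the construction of the structures $\widetilde{\mathfrak{A}}_{k,m}, \widetilde{\mathfrak{B}}_{k,m}$ and Duplicator's winning strategy in Lemma \ref{main-lemma}, which together underwrite Theorem \ref{main-theorem}. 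The corollary itself is just a packaging of the fact that a single family of separating properties ($k$-Clique for $k = 2, 3, \ldots$) witnesses the strictness at every level simultaneously.
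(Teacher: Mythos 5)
Your proposal is correct and takes essentially the same route as the paper: the corollary is obtained by invoking Theorem \ref{main-theorem} directly, with the $k$-variable existential sentence giving the upper bound and the theorem giving the lower bound, exactly as the paper intends when it labels this a ``direct corollary.'' The only addition — the \emph{a fortiori} remark about unordered graphs — is harmless and correct, but the paper interprets the corollary specifically over finite ordered graphs, so that extra step is not needed.
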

 That is, for any $k$, over the finite ordered graphs there is a property that is expressible by $k$ variables, but not expressible by $k-1$ variables in $\fo$.\footnote{When infinite ordered structures are concerned, it was proved by Venema \cite{Venema1990Infinite}.} In other words, first order logic needs infinite many variables. 
 Hence we have given an alternative proof for this important result in finite model theory, based on pure finite model-theoretic tools.

\section{Worst-case lower bound of $k$-Clique on constant-depth circuits}
\label{chapter-circuit-bound}

Recall that an ordered graph is a graph with a linear order in the background. In this section, we fist show that precisely $k$ variables are needed to define $k$-Clique on the class of graphs with arbitrary arithmetic background relations, a result akin to Theorem \ref{main-theorem} except that the graphs are equipped with $\mathbf{BIT}$ in the background. It is another well-known challenge to play pebble games on such kind of structures, which has its root in circuit complexity \cite{Immerman1999Book}. Note that $\mathbf{BIT}$ predicate can be used to define arbitrary arithmetic predicates including linear orders (for a survey, cf. \cite{Schweikardt05Arithmetic}). 
Supprisingly, due to a work of Schweikardt and Schwentick \cite{SchweikardtLinearOrder}, it turns out that this challenge is very similar to the challenge caused by linear orders. 
In section \ref{lowerbound-in-FO(BIT)} we embeds the main structures in section \ref{section-structures} in the pure arithmetic structure in \cite{SchweikardtLinearOrder} to show that $k$ variables are needed to define $k$-Clique in $\fo(\mathbf{BIT})$. 
Afterwards, in section \ref{circuit-lowerbound} we show that this result implies a worst-case lower bound of $k$-Clique on constant-depth unbounded fan-in circuits.  

Here we assume that the readers are familiar with the ideas and notations in the paper \cite{SchweikardtLinearOrder}, wherein a sort of clever construction is presented to show that $\mathbf{BIT}$ can be replaced by two special linear orders. 
Because of its constructive nature, it offers a tool that can bridge the gap between pure linear orders and pure arithmetic. 
Abuse of notations, we also use $\prec$ to denote one of the linear orders, as in \cite{SchweikardtLinearOrder}. The readers should not confuse it with the induced linear order $\preceq^{\xi}$ introduced in the main Lemma \ref{main-lemma}. Another linear order is $<$.   

In the following, we briefly sketch related basic ideas of \cite{SchweikardtLinearOrder}. The paper \cite{SchweikardtLinearOrder} introduced a pure arithmetic structure whose elements are ordered in a specific way. We can also regard it as a set of isolated vertices ordered and organized as an isosceles right triangle (cf. Fig 1 of \cite{SchweikardtLinearOrder}) in a two dimension coordinate plane. Note that, there is a bijection between these two  universes.  
For any vertices $(x_1,y_1)$ and $(x_2,y_2)$, $(x_1,y_1)<(x_2,y_2)$ if $x_1<x_2$ or $(x_1=x_2 \mbox { and } y_1<y_2)$. This is called the bottom-to-top, left-to-right, column major order. On the other hand, we define $\prec_0$ as the left-to-right, bottom-to-top, row major order: $(x_1,y_1)<(x_2,y_2)$ if $y_1<y_2$ or $(y_1=y_2 \mbox { and } x_1<x_2)$. Confer p3 of \cite{SchweikardtLinearOrder}. Furthermore, we introduce two unary relations $C$ and $Q$. We use $C$ to encode a binary representation of $x+1$, and use $Q$ to encode a binary representation of $\binom{x+2}{2}$.   
Schweikardt and Schwentick showed that $\fo(<,\prec_0, C,Q)$ has the same expressive power as $\fo(\mathbf{BIT})$.  

Then it is shown that $\fo(<,\prec_0, C,Q)$ is equivalent to $\fo(<,\prec)$ in terms of expressive power, where    
$\prec$ is a special linear order that can be used to encode the two unary relations. More precisely, if $\ell$ is sufficiently large, we can use the order $\prec$ on every \textit{complete interval} (cf. p10 of \cite{SchweikardtLinearOrder} for this important concept) $\{(x,y+1),\ldots,(x,y+\ell-1)\}$ to encode $C, Q$ on the elements $(x,y),(x,y+1),\ldots,(x,y+3\ell-1)$. Note that an order corresponds to a permutation, say $\pi_i$, where $i$ is represented by a $0$-$1$ string of length $6\ell$. This string can be used to encode a unary relation on the $6\ell$ elements. Hence, for any $(x,y)$, we are able to know whether $(x,y)\in C$ or $(x,y)\in Q$. 
  
\subsection{$k$ variables are needed to define $k$-Clique in $\fo(\mathbf{BIT})$}
\label{lowerbound-in-FO(BIT)}
We first briefly outline the ideas. Suppose that we have a pair of sufficiently large isomorphic arithmetic structures as described in \cite{SchweikardtLinearOrder}. Here, by ``sufficiently large'' we mean that $\ell$ is such a big number that we can embed our main structure $\widetilde{\mathfrak{A}}_{k,m}$ or $\widetilde{\mathfrak{B}}_{k,m}$ (cf. section \ref{section-structures}) into a piece of a complete interval where those two linear orders, i.e. $<$ and $\prec$, coincide: we embed a copy of the structure into it in the similar way we make the list $\mathfrak{L}_1$ (cf. p. \pageref{page-def-L-lists-for-cl}) with a difference: instead of fixing the second coordinates, now we fix the first coordinates to a constant.   

Due to Stirling's formula, we assume that $n!=e^{c_{n}\cdot n\ln n}=2^{(\log_2 e) c_n\cdot n\ln n}$, for some $c_{n}\in\mathbf{R}^+$. 
Note that, $c_{n}\rightarrow 1$ if $n\rightarrow \infty$. In fact,  $c_{n}\approx \frac{9}{10}$ when $n\geq e^{10}$.  
We choose a sufficiently large natural number $\ell$ such that $\ell> max\left\{e^{\frac{5}{c_{\ell-1}}}+1, \gamma_{m-1}\right\}$. 
Recall that $\gamma_{m-1}$ is a number depending only on $k$ and $m$. 
As a consequence,  
$(\ell-1)!>2^{(\log_2 e)c_{\ell-1} (\ell-1) \ln e^{\frac{5}{c_{\ell-1}}}}=2^{5(\log_2 e)(\ell-1)}> 2^{6\ell}$, and $\ell>\gamma_{m-1}$. 

Therefore, these two structures are roughly a (huge) set of isolated vertices in the form of an isosceles right triangle except that, in a pair of corresponding complete intervals, there lies the pair of twisted structures isomorphic to $\widetilde{\mathfrak{A}}_{k,m}$ and $\widetilde{\mathfrak{B}}_{k,m}$ respectively. To make it easier, we assign the permutaion of this pair of complete intervals,  say $\pi_i$ for some $i\in\{0,1\}^{6\ell}$, to an order that is isomorphic to the natural order of an initial segment of $\mathbf{N}_0$. Note that, the linear orders defined in $\widetilde{\mathfrak{A}}_{k,m}$ and $\widetilde{\mathfrak{B}}_{k,m}$ are also isomorphic to the natural order of an initial segment of $\mathbf{N}_0$, thereby isomorphic to this order. It allows us to embed $\widetilde{\mathfrak{A}}_{k,m}$ and $\widetilde{\mathfrak{B}}_{k,m}$ into this pair of complete intervals without rearranging the vertices.   

It remains to show that Duplicator has a winning strategy over this pair of sparse structures. To this end, we refer to a simple strategy composition as follows. Observe that the embedded structures  $\widetilde{\mathfrak{A}}_{k,m}$ and $\widetilde{\mathfrak{B}}_{k,m}$ are disjoined with other parts of the structures. As a consequence, when Spoiler picks in an embedded structure, Duplicator resorts to the strategy introduced in Section \ref{winning-strategy}, since $<$ and $\prec$ coincide (hence reduced to one order); when he picks in other parts of the structure, Duplicator is simply a copycat. Obviously, such a composed strategy works for Duplicator in the $m$-round $k$-pebble games. Afterwards, using an argument similar to that of Theorem \ref{main-theorem}, we arrive at our claim. 
\begin{theorem}\label{thm-k-variable-BIT}
$k$ variables are needed for $k$-Clique in $\fo(\mathbf{BIT})$.
\end{theorem}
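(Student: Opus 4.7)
The plan is to reduce the $\fo(\mathbf{BIT})$ question to the ordered-graph case already settled by Lemma~\ref{main-lemma}, leveraging the constructive equivalence of Schweikardt and Schwentick \cite{SchweikardtLinearOrder}: every $\fo(\mathbf{BIT})$-sentence is equivalent, over sufficiently large structures, to an $\fo(<,\prec)$-sentence interpreted over the pure arithmetic structure arranged as an isosceles right triangle, where $\prec$ is a special linear order that encodes the auxiliary unary predicates $C$ and $Q$ via the permutation $\pi_i$ of each complete interval. Hence it suffices to exhibit, for every $m$, a pair of $\langle E,<,\prec\rangle$-structures $\mathfrak{C}_{k,m}$ and $\mathfrak{D}_{k,m}$ on the Schweikardt--Schwentick triangle such that $\mathfrak{C}_{k,m}$ contains a $k$-clique, $\mathfrak{D}_{k,m}$ does not, and Duplicator wins the $m$-round $(k-1)$-pebble game over $(\mathfrak{C}_{k,m},\mathfrak{D}_{k,m})$.

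First I would choose the side length $\ell$ so that $\ell>\gamma_{m-1}$ and $(\ell-1)!>2^{6\ell}$; by the Stirling estimate recalled in the excerpt, $\ell>\max\{e^{5/c_{\ell-1}}+1,\gamma_{m-1}\}$ suffices. The second inequality ensures that, among the $(\ell-1)!$ permutations available on some complete interval of length $\ell-1$, there is one which is the identity permutation $\pi_0$, and the first inequality ensures this interval has room to host a full copy of $\widetilde{\mathfrak{A}}_{k,m}$ or $\widetilde{\mathfrak{B}}_{k,m}$. Fix such a complete interval $I$ and assign $\pi_0$ to it, so that on $I$ the two orders $<$ and $\prec$ coincide. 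I would then embed a copy of $\widetilde{\mathfrak{A}}_{k,m}$ into $I$ to form $\mathfrak{C}_{k,m}$ and a copy of $\widetilde{\mathfrak{B}}_{k,m}$ into $I$ to form $\mathfrak{D}_{k,m}$, aligning the natural order of the embedded structure with the restriction of $<$ (equivalently $\prec$) to $I$. Outside $I$, both $\mathfrak{C}_{k,m}$ and $\mathfrak{D}_{k,m}$ carry no edges and carry exactly the same orders $<$ and $\prec$, so they are isomorphic off $I$.

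Duplicator's strategy is a straightforward composition. Whenever Spoiler pebbles a vertex inside the embedded copy of $\widetilde{\mathfrak{A}}_{k,m}$ or $\widetilde{\mathfrak{B}}_{k,m}$, Duplicator replies inside the corresponding embedded copy using the winning strategy from Lemma~\ref{main-lemma}; this is sound because, restricted to $I$, the $\langle <,\prec\rangle$-reduct collapses to a single linear order which is exactly the one used by the main lemma, and no $E$-edges cross $\partial I$. Whenever Spoiler pebbles a vertex outside $I$, Duplicator simply copies the move to the matching vertex in the other structure, which preserves partial isomorphism since the off-$I$ parts are literally identical. Combining the two sub-strategies yields a winning strategy in the $m$-round $(k-1)$-pebble game over $(\mathfrak{C}_{k,m},\mathfrak{D}_{k,m})$. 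Arguing as in Theorem~\ref{main-theorem}, no $\fo(<,\prec)$-sentence with fewer than $k$ variables can separate the two structures, and by the Schweikardt--Schwentick equivalence the same holds for $\fo(\mathbf{BIT})$. The matching upper bound is the usual sentence $\exists x_1\cdots\exists x_k\bigwedge_{i\neq j}(x_i\neq x_j\wedge E(x_i,x_j))$.

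The main obstacle I anticipate is ensuring that the two orders interact cleanly with the embedding, so that the $\prec$-order restricted to $I$ really collapses to $<$ and no ambient arithmetic leak lets Spoiler distinguish embedded positions by, say, comparing them with positions in other complete intervals. The choice of $\pi_0$ as the identity on $I$, together with the exponential gap $(\ell-1)!>2^{6\ell}$ guaranteeing that this permutation is available among those that must simultaneously encode the unary relations on the rest of the triangle, is exactly what neutralises this obstacle; verifying it rigorously will require a careful traversal of the Schweikardt--Schwentick encoding to confirm that the identity permutation on $I$ is consistent with the global choices of $\pi_i$ used off $I$ to reconstruct $C$ and $Q$.
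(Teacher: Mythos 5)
Your proposal follows the paper's proof essentially step for step: choose $\ell$ large enough that $(\ell-1)!>2^{6\ell}$ and $\ell>\gamma_{m-1}$, fix a complete interval whose permutation is the natural order so that $<$ and $\prec$ collapse there, embed $\widetilde{\mathfrak{A}}_{k,m}$ and $\widetilde{\mathfrak{B}}_{k,m}$ into that interval, and compose the Lemma~\ref{main-lemma} strategy inside with a copycat strategy outside. The consistency obstacle you flag at the end (matching the identity permutation against the global $C,Q$-encoding) is also left informal in the paper itself, whose ``Further Remark'' concedes that understanding the binary patterns of $C$ and $Q$ is nontrivial, so your caveat is well placed rather than a defect of your argument relative to the paper's.
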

 Note that this lower bound is optimal, as the one given by Theorem \ref{main-theorem}.\\     

\noindent\textit{\textbf{Further Remark}}
 
Could we reduce the size of the astonishing huge structures just introduced? Or what is the tight lower bound on such a size?  
Although we have proved our claim constructively, we believe that there exist smaller constructions. To have such a construction coincide with the ideas of \cite{SchweikardtLinearOrder}, in our viewpoint, it would require an undertanding of the patterns of binary encodings of $C$ and $Q$, which is not trivial. Obviously, it is preferable if the patterns are periodical. 
To have an impression of how mysterious would such patterns be, the readers could read, for example, a paper by Rowland \cite{Rowland2009BinaryRegularity}. 

\subsection{Size lower bound of $k$-Clique on constant-depth circuits}
\label{circuit-lowerbound}

We assume the readers know standard concepts and notations in circuit complexity.  
It is well-known that $\fo$ is closely related to constant-depth circuits \cite{Barringtion1990Uniformity,DawarHowmany,Denenberg1986circuits,Gurevich1984Circuit2Logic,Immerman1983languageThat,Immerman1989Parallel,Immerman1999Book}. In particular, the introducing of $\mathbf{BIT}$ as a background relation allows a suitable form of uniformity for circuit families, thereby establishing the equivalence between $\fo(\mathbf{BIT})$ and \textit{DLOGTIME-uniform} $\mathrm{AC}^0$ (or \textit{FO-uniform} $\mathrm{AC}^0$). In this section we always assume the presence of $\mathbf{BIT}$. Moreover, we always assume that $k\geq 5$ and that $n$ is the cardinality of the vertex set of input graph.   
Since we resort frequently to \cite{Denenberg1986circuits,Barringtion1990Uniformity} for inspiration and insights,  we asume that the readers are familiar with the notions introduced in these papers.
In particular, we will use a key notion called \textit{regular circuit} (cf. \cite{Denenberg1986circuits}, p237), which is defined as follows. By DeMorgan's law, we can assume that the negations only appear in the input level of circuits. It will not influence the size of a circuit significantly. 

A circuit takes an (ordinary) encoding of a structure as its input. Here we only talk about graphs, and by ``ordinary'' we mean the usual binary encodings of graphs.  The order of a graph is the cardinality of its vertex set. The \textit{order} of a circuit is the order of the input graph. Usually we use $C_n$ to denote a circuit of order $n$. A circuit $C_n$ is formatted w.r.t. $n$ and $\langle E\rangle$. That is, in the context of our concern, there is a surjection from the inputs to atoms $E(a,b)$, $a=b$, $\mathbf{BIT}(a,b)$ or their negations where $a, b$ are vertices of the input graph.  
Note that, those arithmetic literals, e.g. $a=b$ and $\mathbf{BIT}(a,b)$, can be replaced by two constant inputs $0$ and $1$, because their values are independent of the inputs.  

A circuit $C_n$ is regular if the following hold. 
\begin{enumerate}[(1)]
\item Its structure is symmetric (satisfying some conditions such that the circuit structure completely respects the syntactic structure of some first-order sentence as well as its evaluations on assignments, cf. \cite{Denenberg1986circuits}, p.236\textapprox p.237); hence   
 its wires can be labeled in a way that reflects the syntactic structure of the sentence (cf. (2)). It implies that the following hold.  
\begin{enumerate}[(a)]
\item Its gates (without considering inputs) induce a tree where the output gate is the root of this tree;\\ 
\item  Each of its inner nodes of the tree, if we do not regard the inputs as leaves, has either $n$ children or two children, depending on whether it corresponds to a quantifier or a logical operator, i.e. $\land $ or $\lor$;\\

\end{enumerate}

\item  The wires are labelled as the follows (cf. \cite{Denenberg1986circuits} page 236).   
Let $\Gamma_n=\{0,1,\ldots,n-1,\#_L,\#_R\}$. Let $f_w$ be a permutation of $\{0,1,\ldots,n-1\}$, which can be extended to a permutation (aslo called $f_w$) of $\Gamma_n^*$ by setting $f_w(\#_L)=\#_L$, $f_w(\#_R)=\#_R$, and $f_w(c_1c_2\ldots c_\ell)=c_1^\prime c_2^\prime\ldots c_\ell^\prime$, where $f_w(c_i)=c_i^\prime$ for each $i$. 
Let $\bar x,\bar y\in \Gamma_n^*$ and $\bar z\in [n]^*$. Assume that $\bar x $ is a wire of $C_n$. 
Then for any $f_w$, the following hold.
\begin{enumerate}[(i)]
\item $f_w(\bar x)$ is a wire of $C_n$;

\item $\bar x$ and $f(\bar x)$ are outputs from gates of the same type or are both input wires;

\item $f_w(\bar y)$ is a child of $f_w(\bar x)$ if $\bar y$ is a child of $\bar x$;

\item\hspace{-5pt}* if $\bar x$ is an input wire whose formula label is an atomic formula $P\bar z$ then the formula lable of $f_w(\bar x)$ is $Pf_w(\bar z)$, provided that the predicate $P$ is neither $\leq$ nor $\mathbf{BIT}$.     
\end{enumerate}

\end{enumerate}

A regular circuit has order $n$ if the cardinality of the universe of the structure for the input is $n$. 

Note that we have given a slightly different symmetric in the labelling, i.e. (2) (iv)*, from the original definition. The original definition of regular circuits are the circuits that completely respect the syntactic structures of  formulas and semantic requirement of logical queries to ensure closure under isomorphism, while we require that the closure under isomorphism holds when ignoring arithmetic predicates.
By \cite{Denenberg1986circuits}, a family of circuits, if there is any, recognize a first-order graph property invariant to the permutations of vertices only if there is a family of regular circuits, wherein every circuit is symmetric to ensure this property to be closed under isomorphism. Nevertheless, since we are talking about ordered graphs, requirement of symmetry should be tailored because ordered graphs are isomorphic if and only if they are the same \cite{Immerman1999Book}. 
In this case it corresponds to ``general expression'' in \cite{Barringtion1990Uniformity}.  
Fortunately, $k$-Clique is order-invariant, which allows a variant of the symmetry in the labelling, i.e. (2) (iv).  

Note that regular circuits, or general expressions, are not ``space efficient'' (for it is essentially a tree) so that it does not rule out the possibility that a family of much more succinct cricuits can recognize the same property. 
Therefore, we relax the requirement that such a circuit should be a tree (without considering inputs).   
That is, for any such circuit $C_n$ (\textit{succinct regular circuit}, for short) in this family,  albeit still retaining (2) (cf. the definition of ``regular circuit''), its wires may be succintly arranged, i.e. the output of a gate can be many. For, and only for, convenience, we also require that all the children of a gate are ``uniform'' in  case that these children corresponding to a quantified variable: the structures of the subcircuits (ignore the inputs), whose outputs are these children, are isomorphic. It makes the succinct regular circuit look more like a regular circuit. Call this \textit{uniform-children condition}. Note that such condition is implicitly a part of the definition of regular circuits.

Succinct regular circuits reduce logically equivalent subformulas, thereby giving more succinct representations. Obviously regular circuits are special kind of succinct regular circuits. 
Note that, for any succinct circuits $C$, a regular circuit can be obtained from $C$ by straightforward ``unraveling'', a process of producing copies of subcircuits. Although the size of the regular circuit may explode, the number of variables it needs for translating a circuit into a formula will not be changed.

Recall a well-known result of Barrington et al. \cite{Barringtion1990Uniformity}, which connects first-order definable uniformity to \textit{DLOGTIME}-uniform. 
\begin{fact}\label{fo-equal-AC0}
The following are equivalent.
\begin{enumerate}
\item $\mathcal{L}$ is first-order definable.
\item $\mathcal{L}$ is recognized by a $\textit{DLOGTIME-uniform}$ family of constant-depth, unbounded fan-in, polynomial-size circuits. 
\item $\mathcal{L}$ is recognized by a first-order definable  family of such circuits.
\item $\mathcal{L}$ is recognized by a \textit{DLOGTIME-uniform} family of constant-depth, polynomial-size general expression. 
\item $\mathcal{L}$ is recognized by a first-order definable family of such expressions. 
\end{enumerate}
\end{fact}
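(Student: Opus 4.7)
The plan is to establish the equivalences by a cycle $1\Rightarrow 3\Rightarrow 2\Rightarrow 1$, together with the two equivalences $2\Leftrightarrow 4$ and $3\Leftrightarrow 5$ that pass from circuits to general expressions.

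First I would show $1\Rightarrow 3$ by a direct syntactic translation. Given an $\fo$-sentence $\varphi$ of quantifier rank $r$ in the signature $\langle E,\leq,\mathbf{BIT}\rangle$, construct for each $n$ the circuit $C_n$ whose gate-tree mirrors the parse tree of $\varphi$: a subformula $\exists x_i\psi$ becomes an unbounded fan-in $\lor$-gate of fan-in $n$ whose $j$-th child is the circuit computing $\psi[x_i:=j]$, $\forall$ becomes $\land$, the Boolean connectives become binary gates, and atomic subformulas evaluated at a tuple of elements become input wires (the arithmetic atoms becoming the constants $0$ or $1$). The depth is $O(|\varphi|)$, the size is $O(n^{r})$, and the connection relation is plainly describable by an $\fo$-formula whose quantifier rank depends only on $|\varphi|$.

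Next I would do $3\Rightarrow 2$. If an $\fo$-formula defines the direct-connection language $\mathrm{LDCL}$ of the circuit family, then $\mathrm{LDCL}$ is decidable in $\mathrm{DLOGTIME}$, because gate names are $O(\log n)$-bit strings and $\fo$ over $\mathbf{BIT}$ on such bit-strings is exactly $\mathrm{DLOGTIME}$-computable (this is the classical equivalence of $\fo(\mathbf{BIT})$ on binary strings with $\mathrm{DLOGTIME}$, which the paper already relies on implicitly). For $2\Rightarrow 1$, we simulate circuit evaluation by an $\fo(\mathbf{BIT})$-sentence $\chi$: each gate is coded by a tuple of constantly-many elements of the input universe (since size is polynomial), and at each of the constantly many levels we introduce one alternating block of quantifiers over tuples, using the $\mathrm{DLOGTIME}$ (hence $\fo(\mathbf{BIT})$) description of gate type and wiring to determine the local transition. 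Since the depth is a constant independent of $n$, the resulting $\chi$ is a single first-order sentence, not a schema.

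To finish, $4$ and $5$ are obtained by unravelling and sharing. Given a succinct family satisfying item $2$ or $3$, unravel the DAG into a tree; because depth is constant and fan-in is polynomial, the resulting tree (i.e.\ general expression) has size polynomial in $n$, and uniformity (or $\fo$-definability) of the original circuit transfers to the tree because every path in the tree is coded by a constant-length tuple of gate indices, which is again $\fo(\mathbf{BIT})$/$\mathrm{DLOGTIME}$ accessible. Conversely, every general expression is \emph{a fortiori} a circuit, giving $4\Rightarrow 2$ and $5\Rightarrow 3$ for free. The main obstacle is the direction $2\Rightarrow 1$: one must encode the non-trivial $\mathrm{DLOGTIME}$ uniformity predicate inside an $\fo(\mathbf{BIT})$-sentence without inflating the quantifier depth by more than a constant, which is exactly where the $\fo(\mathbf{BIT})=\mathrm{DLOGTIME}$-uniform $\mathrm{AC}^0$ machinery (originating in Barrington--Immerman--Straubing) does the real work; the other implications are essentially bookkeeping once this correspondence is in hand.
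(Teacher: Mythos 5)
The paper does not prove this Fact; it simply cites it as a known theorem of Barrington, Immerman and Straubing \cite{Barringtion1990Uniformity}, so there is no in-paper proof to compare against. Your reconstruction is a correct high-level account of the standard argument in that reference: the cycle $1\Rightarrow 3\Rightarrow 2\Rightarrow 1$ together with unravelling a constant-depth polynomial-size circuit into a polynomial-size tree to pass between items $2,3$ and $4,5$. The place where your sketch silently leans on the hardest technical work is the two-way translation between $\fo(\mathbf{BIT})$ on $O(\log n)$-bit strings and deterministic random-access log-time machines, which you invoke in one direction for $3\Rightarrow 2$ (an $\fo(\mathbf{BIT})$-definable direct-connection language is DLOGTIME-decidable) and in the opposite direction for $2\Rightarrow 1$ (the DLOGTIME wiring predicate can be re-expressed in $\fo(\mathbf{BIT})$ without unboundedly increasing quantifier depth); establishing that equivalence is precisely what makes the theorem nontrivial in \cite{Barringtion1990Uniformity}, and a full write-up would have to carry out the random-access Turing-machine simulation rather than just appeal to it. Subject to that caveat, the skeleton you give is the right one.
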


Fig. \ref{circuit2formula} summarizes the relationships between graph properties, general expressions, ordinary circuits and regular circuits. Soon we shall see how to convert a $\fo$-uniform family of constant-depth circuits into a $\fo$-uniform family of succinct regular circuits, which is indicated by ``$\mapsto$''. 
Note that, we haven't given a formal definition for first-order definable family of (succinct) regular circuits yet. But it is clear and similar to those in $\fo$-uniform $\mathrm{AC}^0$. Here we adopt a slightly different notion of regular circuits wherein no such labels are explicitly put in the circuits: a circuit is regular if \textit{there exists} a labelling such that those conditions are met.     

\begin{figure}[]
\includegraphics[trim = 22mm 0mm 0mm 0mm, scale=0.48]{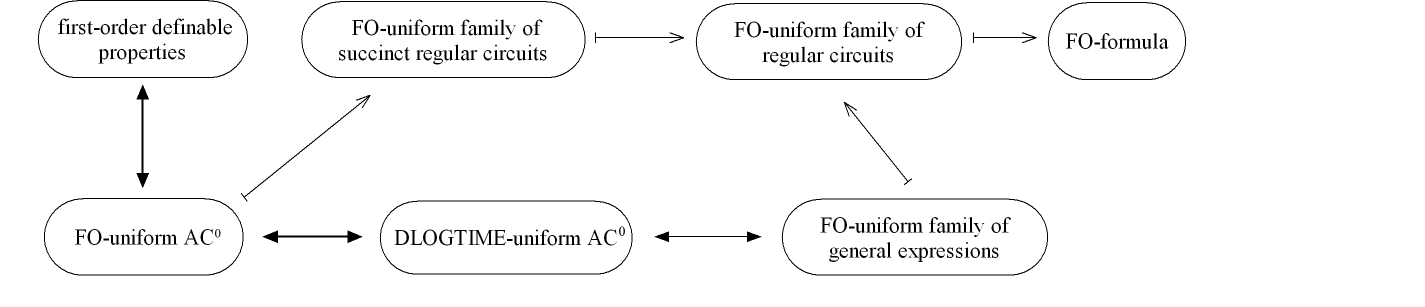}
\caption{Relationships between graph properties, general expressions, ordinary circuits and regular circuits.}
\label{circuit2formula}
\end{figure}

In the following we prove that $n^{k-1}$ gates not suffice to compute $k$-Clique for $k\geq 5$, based on Theorem \ref{thm-k-variable-BIT} and an assumption. To show it, the following is needed and sufficient. 
\begin{proposition}\label{circuits2variables}
For any  graph property, it is describable by a formula in $\fo(\mathbf{BIT})$  using at most $k$ variables if it is recognizable by a first-order definable family of constant-depth, unbounded fan-in circuits of size $O(n^{k})$, provided that $k\geq 5$.  
\end{proposition}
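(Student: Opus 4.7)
My plan is to translate the hypothesised $\fo$-definable family of constant-depth, unbounded fan-in circuits $\{C_n\}$ of size $O(n^{k})$ first into an equivalent $\fo$-definable family of \emph{succinct regular} circuits of size $O(n^{k})$, and then to read an $\fo(\mathbf{BIT})$ formula off the succinct regular circuit while re-using a single variable $x_{i}$ at every quantifier gate whose quantifier-depth from the root equals $i$. The overall flow imitates the classical circuit-to-formula translation underlying Fact~\ref{fo-equal-AC0}; what the succinct regular wrapping buys is that the uniform-children condition makes such a re-use of $x_{i}$ unambiguous, while the paper's weakened symmetry clause (2)(iv)* --- enforced only for order-preserving $f_{w}$ --- is precisely the level of symmetry to which the order-invariant graph property $k$-Clique (and, by analogy, the other properties covered by the statement) is entitled.

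For the conversion, I would replace every $\mathbf{BIT}$- and $=$-input by one of the constants $0,1$ (the paper already observes they carry no graph information), then use the $\fo$-definability of the wire-predicate to arrange the gates into the symmetric labelled form demanded by conditions (1) and (2)(i)--(iv)*. Collapsing logically identical subcircuits, normalising every internal gate to fan-in exactly $2$ or exactly $n$, and eliminating redundant quantifier gates (those whose $n$ children all point to the same DAG node and hence do not actually depend on the quantified value) then yields a succinct regular circuit that still has size $O(n^{k})$ and enjoys the uniform-children condition. The formula is extracted in the natural way: each quantifier gate at quantifier-depth $i$ becomes $\exists x_{i}$ or $\forall x_{i}$, each binary gate becomes $\land$ or $\lor$, and each input leaf becomes the corresponding atomic $E$-literal together with the $\mathbf{BIT}$-literals that encode its position in the family. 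Uniform-children guarantees that the same $x_{i}$ may be bound at all siblings of a given quantifier-depth, and constant depth of $\{C_n\}$ guarantees the extracted formula is a bona fide sentence of $\fo(\mathbf{BIT})$.

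The hard part I expect is the variable count, i.e.\ showing that in a succinct regular circuit of size $O(n^{k})$ with no redundant quantifier gates, no root-to-leaf path carries more than $k$ quantifier gates. The plan is to argue that after the redundancy elimination above, the $n$ child-subcircuits of every quantifier gate compute $n$ pairwise distinct Boolean functions of the input graph and hence give rise to $n$ pairwise distinct DAG nodes at the root of each child-subcircuit; iterating, a root-to-leaf path through $q$ essential quantifier gates forces at least $n^{q}$ distinct DAG root-nodes of leaf-subcircuits, and the size bound $O(n^{k})$ then yields $q\le k$. The delicate point is the bookkeeping of sharing: two siblings may legitimately share a gate that evaluates a sub-formula not depending on the outer variable, so the lower bound $n^{q}\le$ size must be applied to the set of \emph{root} gates of the $n^{q}$ leaf-subcircuits rather than to their internals. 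Carrying out this disjointification-at-the-roots cleanly, while preserving both the $\fo$-uniformity and the $O(n^{k})$ size bound, is the technical core of the proof and is presumably where the unnamed assumption flagged in the paper's introduction enters.
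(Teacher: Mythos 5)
Your overall architecture --- regularize the given circuit family into a first-order uniform family of succinct regular circuits, then read an $\fo(\mathbf{BIT})$ sentence off a consistent wire labelling --- is the same as the paper's, and you are right that the assumption-laden, non-constructive step is the one linking the $O(n^{k})$ size bound to a $k$-variable bound. But the specific counting lemma you propose for that step is false, and the failure survives both your redundancy elimination and the ``disjointification at the roots.''

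Consider a chain sentence $\exists x_1\bigl(\alpha(x_1)\land \exists x_2\bigl(\beta(x_1,x_2)\land \exists x_3\bigl(\gamma(x_2,x_3)\land\cdots\bigr)\bigr)\bigr)$ with $q$ nested existentials in which every atom mentions only two consecutive variables. Every quantifier gate in the succinct regular circuit is \emph{essential} in your sense: for a fixed outer value $b$ of $x_i$, the $n$ children of the $\exists x_{i+1}$-gate compute $\gamma(b,1)\land\chi(1),\ldots,\gamma(b,n)\land\chi(n)$, which are pairwise distinct Boolean functions. Nevertheless the tail subcircuit $\chi(c)=\exists x_{i+2}(\cdots)$ depends only on $c$ and is shared across all the $n^{i}$ quantifier-paths that reach it; the distinct root-gates of leaf subcircuits number only $\Theta(n^{2})$, and the whole succinct regular circuit has size $\Theta(q\cdot n^{2})$, while a root-to-leaf path carries $q$ essential quantifier gates. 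Your inference ``$n^{q}\le$ size'' thus reads $n^{q}\le O(n^{2})$ and is simply wrong once $q>2$. (The proposition's conclusion is of course still correct here --- the chain sentence has width $2$ and rewrites with two or three reused variables --- it is your \emph{intermediate} claim ``at most $k$ quantifier gates per path'' that is too strong.)

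What the proposition actually needs, and what the paper asserts without a detailed proof, is that a root-to-leaf path in the labelled succinct regular circuit carries at most $k$ \emph{distinct variable names}; equivalently, the extracted sentence has width $\le k$ and therefore, after the standard bound-variable renaming, is an $\fo(\mathbf{BIT})$ sentence with at most $k$ variables. That is a weaker statement than a $k$-gate path-length bound, and the chain example shows the two are genuinely different: the same gate can be re-labelled by a re-used variable name once the subcircuit below it is independent of the values assigned higher up. Deriving the width bound from the size bound for a $\fo$-uniform family is exactly where Assumption~\ref{assump-circuitDefinesFormula} and the surrounding regularization discussion are invoked in the paper; that step remains non-constructive and conditional, and it cannot be replaced by the $n^{q}$ counting you sketch.
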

Here a graph property refers to a set of graphs, defined in the usual way.   
We do not talk about other variations such as hypergraphs.  Assume that the circuit depth is bounded by $m$ where $m\geq k\geq 5$. 
We make a \textit{first try} using a natural idea from \cite{Barringtion1990Uniformity,Immerman1999Book}. 
The basic idea is that, using a first-order sentence, we can simulate the function of a circuit by describing the  structure of this circuit, which is defined by the first-order query we used to produce this family of circuits. Hence the sentence is true in the input graph if and only if the circuit accepts the encoding of the graph. 
Assume that there is a first-order query that maps a string $0^n$ to a  circuit $C_n$ of size $O(n^{k})$.  
Then the query may use $2(k+1)$ variables to describe the structure of circuits. In addition, it needs to describe the  relations that associate a node with a label in $\{\land, \lor, \lnot, input, output\}$, which also needs $2(k+1)$ variables. Moreover, we are able to show that the bounded variable hierarchy collapses to $\fo^3$ on pure arithmetic structures (cf. Remark \ref{variable-hierarchy-collapse-arithm-struc}).   By Corollary \ref{collapse-in-BIT-0}, it implies that the query can use at most $2k+5$ variables to do the work, by reusing the variables.  Corollary \ref{collapse-in-BIT-1} gives us a slightly sharper lower bound, i.e. $2(k+1)$ variables.

As usual, we can associate a node of circuit with a unique $(k+1)$-tuple, and translate a circuit into a formula. It means that it needs at most $2(k+1)$ variables to simulate the circuit when we use the method introduced in \cite{Immerman1999Book}.

As have been seen, we still have a big gap between $2(k+1)$ and $k$ variables. It only shows that $O(n^{0.5k-1.5})$ gates not suffice to compute $k$-Clique if $k\geq 5$. 
 Indeed, the first-order sentence, obtained from describing circuit structures via the first-order query defining this circuit family, will have to use more than $k$ variables.  
Hence we need alternative ideas, which allow us to handle with more ``regular'' circuits, or circuits in some ``normal form'', whose structures helps us to know something about the property.    
It is for this reason we resort to succinct regular circuits. 
We introduce a new technique called regularization to reduce the gap between an ordinary circuit and a succinct regular circuit. Nevertheless, we should note that the following proof is not constructive and is based on a reasonable but not well-known assumption. 
Recall that all the circuits in discourse are formatted w.r.t. $n$ and $\langle E\rangle$, and we do not take the inputs as gates. The assumption essentially says that {\em any first-order query, which defines a family of constant-depth unbounded fan-in circuits, implicitly defines the syntactical structure of some first-order sentence}. Here we adopt a variant that considers the uniform-children condition. 
\begin{assumption}\label{assump-circuitDefinesFormula}
For any first-order query $I_0$, which defines a family of constant-depth unbounded fan-in circuits $\{C_n\}$ of size $O(n^k)$, there exists a first-order  query $I_1$, which defines a uniform-children family of constant-depth circuits of size $O(n^k)$ defining the same property, 
implicitly defines the syntactical structure of \textit{some}  first-order sentence: for every $C_n$, its gates are either those whose children correspond to the assigment of a block of (or several block of) relativized quantified variables, or those whose children correspond to distinct subformulas.    
\end{assumption} 

Here we briefly explain why this assumption is reasonable. An \textit{instance of a first-order formula} is a propositional logic sentence where every variable of the first-order formula is replaced by a value assigned to it. 
An ordinary circuit can be divided into several pieces, each of which computes a first-order formula or an instance of it. However, the children of a gate $x$ (for quantifiers), may stand for instances of distinct but logically equivalent formulas. In such case, we can select the simplest one w.r.t. the size of the (sub)circuit that compute it, and replace all the other (sub)circuits computing the same function with this subcircuit. Therefore, the number of gates will not increase. Such a process can make the circuits much more ``regular'' and easier to define. 
At the end, we piece up all the instances of formulas to obtain the sentence that defines the property. 
The case is a little more complicated when the children of a gate (for quantifiers) stand for instances of distinct formulas that are not equivalent. In such case, we should only note that the number of distinct formulas are finite, for otherwise it is not first-order definable, and that the query essentially express that some sort of children (determined by some first-order formula) compute some function, and some sort of children compute another function, and so on. Therefore, we can replace the subcircuit, whose output gate is the gate $x$, with a slightly different but equivalent one: we ``split'' the gate $x$ with several gates (depend on the number of distinct sorts of children), each of which computes a distinct function, and take these gates as the children of a new gate. Then, for each of these subcircuits, we ``regularize'' it using the method introduced in the last case. 
Note that this replacement will not increase the size of circuit significantly. 
So far we show that a family of $\fo$-uniform circuits can be converted into a first-order uniform-children family of circuit without significantly increasing w.r.t. circuit size. The reason we think that the assumption is resonable also relies on the fact that a gate of $n$-children (corresponding to an expression in propositional logic) can be ``interpreted'' as a quantifier in first-order logic. Although the number of children may vary, it should be definable by a first-order formula, for, otherwise, the circuit structure is not first-order definable. Similarly, somehow we can also ``interpret'' the functions of some sorts of gates as first-order formulas (e.g. cf. Example \ref{example-query2circuit}).      

It seems that our argument is constructive. But it isn't. According to Trakhtenbrot's Theorem \cite{Trakhtenbrot1950Decidability}, whether two first-order formulas are equivalent or not is not decidable. It implies that the argument involving repalcements in the last paragraph is not constructive.  Similarly, the following proof of Proposition \ref{circuits2variables} is also not constructive.

\begin{proof}

Recall that an inner gate of a succinct regular circuit has either two children or $n$ children, standing for either a logical operator $\land$ (or $\lor$) or a quantifier. In the latter case, say that a wire is quantifiered by $x$ if it is among one of the $n$ children that correspond to distinct values of the quantified variable $x$ in the sentence.

For any first-order sentence $\phi$, let $f_v(\phi)$ be the number of distinct variables in $\phi$. 
If $\phi=\psi_1\land\psi_2$, then clearly $f_v(\phi)=max\{f_v(\psi_1), f_v(\psi_2)\}$. Similarly,  $f_v(\phi)=max\{f_v(\psi_1), f_v(\psi_2)\}$ if $\phi=\psi_1\lor\psi_2$. 
Note that, once we have a family of succinct regular circuits, it is easy to read their structures and get the corresponding first-order sentence. 
Suppose that $\phi$ is the sentence that correspond to this family of circuits. 
 It implies that any directed path in a succinct regular circuit (of this family) contains wires that are quantified by  at most $k$ distinct variables if the circuits has at most $O(n^k)$ gates.  
 Therefore, $f_v(\phi)\leq k$.      

The difference between ordinary circuits and succinct regular circuits are the follows.
\begin{enumerate}[(1$^\star$)]
\item Ordinary circuits can be more ``succinct'' than the so called succinct regular circuits in the representation of quantifier structures. That is, in an ordinary circuit a gate can use $n^i$ children to denote a block or even several blocks of quantifiers.

\item Even if the fan-in corresponds to one quantified variable, the number of children of a node can be varied, i.e. doesn't have to be $n$. Therefore, in general, a gate can have $\ell$ children where $n^i<\ell<n^{i+1}$ for some $i$. 

For example, consider a very simple $1$-ary first-order query that defines a family of small cicuits. 
Assume that $y$ is an $\land$-gate. Suppose that $x$ is a child of $y$ if $x$ and $y$ satisfy some first-order definable property, say defined by $\eta(z_1,z_2)$. That is, the following is a part of the first-order formula that defines the structure of the circuit. 
\begin{equation}\label{formula-for-local-structure}
\forall x(\eta(x,y)\rightarrow E(x,y))  
\end{equation}

In this case the number of children of $y$ is determined by $\eta(z_1,z_2)$. The question is that $f_v(\eta(z_1,z_2))$ could be arbitrary. It may be much greater than $k$. 

\item In an ordinary circuit, we can represent the conjunction or disjunction of several subformulas succinctly, whereas in a succinct regular circuit an $\land$-gate only has two children in such case. For example, given a quantifier-free subformula $\varphi_1\land \varphi_2\land \varphi_3$, in an ordinary circuit we can use one $\land$-gate with three children to describe it, whereas in a succinct regular circuit we need two $\land$-gates connected in the obvious way, with the subcircuits computing $\varphi_i$ as their children.   

\item Ordinary circuits can use the arithmetic literals in an implicit way, whereas regular circuits have to use them explicitly. For instance, in an ordinary circuit we don't have to represent explicitly that $x_i\neq x_j$. It can be  encoded in the way the children of a gate are distributed (when the number of children is not exactly $n$).    

\item There are no labels in ordinary circuits. In contrast, regular circuits are defined based on proper labelling of wires. 

\item In an ordinary circuit, the children of a gate (for quantifiers) can stand for instances of distinct subformulas (subcircuits). But in a regular one, all the children should compute the same subformula of distinct instances.   
\end{enumerate}

We shall see that succinct regular circuits are not very different from ordinary circuits, and we can convert an ordinary one to a regular one without significantly increasing the number of gates. 

A \textit{regularization} of an ordinary circuit is a process that makes the quantifier structure more explicit in the circuit and a schemetic labelling of its wires consistantly w.r.t. \textit{some} first-order sentence as follows. It replaces a gate that has $n^i$ children with $\sum_{j=0}^{i-1} n^j=\frac{n^i-1}{n-1}$ gates, i.e. replacing the gate by a perfect $n$-ary tree of gates. Note that such a process will not increase the number of gates significantly compared with the size of the circuit, which is very important. In addition, when we regularize a circuit, we first add two constants to its inputs, i.e. $0$ and $1$. We can use these two constant inputs to represent all the arithmetic atoms, i.e. $x=y$, $x\leq y$ and $\mathbf{BIT}(x,y)$. Recall that we don't have to label the wires explicitly in succinct regular circuits. We need only show the existence of such a valid, or consistent, labelling. 
As a consequence, (1$^\star$), (4$^\star$) and (5$^\star$) can be handled easily. Nevertheless, to obtain a slightly better lower bound, here we adapt the definition of succinct regular circuits such that the succinct representation of a block of quantifiers is allowed. That is, in the new definition, we allow the number of children of a gate to be $n^i$ for any $i$.  
 
(6$^\star$) will not make a big difference. We have briefly explained it in the discourse of Assumption \ref{assump-circuitDefinesFormula}.   

(2$^\star$) will not be a problem if $\eta(z_1,z_2)$ is logically equivalent to some formula using bounded number of variables.  
To this end, we show that $\eta(z_1,z_2)$ is indeed equivalent to a formula in $\fo^5$.  
Note that $\eta(z_1,z_2)$ is a Boolean combinations of constant number of formulas that describe intervals or size of intervals, because the atom formulas consist only of $=$, $\leq$ and $\mathbf{BIT}$, each of which can be described by three variables. Cf. Remark \ref{variable-hierarchy-collapse-arithm-struc} (Corollary \ref{collapse-in-BIT-0}) to see why this can be justified. Note that
\eqref{formula-for-local-structure} is equivalent to 
\begin{equation}\label{ex-regularization-eqn-0}
\forall x(\lnot\eta(x,y)\lor E(x,y)).
\end{equation} 
Recall that, all the arithmetic atoms are evaluated directly: their values are represented by those two constant inputs $0$ and $1$. Therefore, the number of gates will not increase significantly.

In general, we need to handle the following situation to ensure that not too much new variables are introduced. 
\begin{equation}\label{ex-regularization-eqn-quant-block-0}
\forall x_1(\eta_1(x_0,x_1)\rightarrow\exists x_2(\eta_2(x_0,x_1,x_2)\land \forall x_3(\eta_3(x_0,x_1,x_2,x_3)\rightarrow \cdots)\cdots))
\end{equation}
But, obviously, our method used to deal with the last simpler case can be applied here. Furthermore, \eqref{ex-regularization-eqn-quant-block-0} can be rewritten in the following form, provided that there are $k$ quantifiers and the last quantifier is $\forall x_k$ (the following formula is similar when the last quantifier is an existential one). 
\begin{equation}\label{ex-regularization-eqn-quant-block-1}
\forall x_1\exists x_2\forall x_3\cdots,\forall x_k(\eta(x_0,x_1,x_2,\cdots,x_k)\rightarrow \cdots)\cdots)
\end{equation}
Recall that $k\geq 5$, by Corollary \ref{collapse-in-BIT-1}, $\eta(x_0,x_1,x_2,\cdots,x_k)$ doesn't add new variables to the number of variables needed in the query. Therefore, the number of gates will not increase. It is for this reason we allow the number of children of a gate (in a succinct regular circuit) to be $n^i$ for any $i\geq 1$ in the subcircuit that computes the quantifier structure. It saves three variables. 

Finally, (3$^\star$) is very easy to handle. We can replace a succinctly represented conjuction or disjunction by an equivalent subcircuit whose gates has two children. Just note that we only need to add constant number of new gates to deal with it, where the number is independent of $n$.  

Suppose that we are given a first-order definable family of ordinary constant-depth unbounded fan-in circuits $\{C_n\mid n\geq 5\}$ where $C_n$ has $O(n^k)$ gates. From a valid regularization of $\{C_n\}$ we can obtain a family of succinct regular circuits $\{C^\prime_n\mid n\geq 5\}$ that recognize the same first-order graph property, and along any path there are at most $k$ wires that are quantified by distinct variables because there are at most $O(n^{k})$ gates in $C^\prime_n$. Then the graph property defined by this family of circuits can be uniformly defined by one first-order sentence with at most $k$ variables. The uniformity of this family of succinct regular circuits  comes from the fact that we use the same process to regularize the ordinary circuits based on the first-order query that defines this family of ordinary circuits. And if one scheme of regularization works for $C_n$, it also works for $C_{n+1}$ because these circuits are very similar except for the value of fan-in:  $n$ for the former and $n+1$ for the latter. To summarize, in principle, we can regularize a first-order definable family of ordinary circuits, provided that Assumption \ref{assump-circuitDefinesFormula} holds.  \end{proof}

In the following we give an example to illustrate how we can obtain a regularization from a first-order query. Different from the last argument, it is constructive, based on the first-order query defined in the following example. 

\begin{example}\label{example-query2circuit}
It was mentioned in \cite{Lynch86Circuit}, and presented in \cite{RossmanStoc}, a simple, and maybe optimal, circuit algorithm that computes $k$-Clique using $n^{k}$ gates: it simply enumerates all the sets (of vertices) of size $k$. Such a circuit family is first-order definable. Here is a $(k+1)$-ary first-order query that defines this circuit family. It is important that such a query is independent of $n$, the order of a circuit. It is for this reason that the following regularization can be achieved: on the one hand, it implies that the size of the circuit after regularization will not increase significantly because the length of the sentence defining the property is independent of $n$; on the other hand, it implies that the circuits after regularization are very similar except for the values of fan-in (e.g. $n$ for $C_n$ and $n+1$ for $C_{n+1}$).  

Note that two variables suffice to define a constant  number, in the presence of a linear order \cite{Dawar96Number}. Hence, the numbers $0$, $1$ and $2$ are first-order definable and we shall use them for free. 
 Let $\bar x:=x_0,x_1,\ldots,x_k$. 

Let $\varphi_{_\lor}(\bar x)$ define the unique $\lor$-gate. 
\begin{equation}
\varphi_{_\lor}(\bar x):=x_0=2\land \bigwedge_{i\in [1,k]} x_i=0
\end{equation} 

Let $\varphi_{_r}(\bar x)$ define the output gate as $\varphi_{_\lor}(\bar x)$. 

Let $\varphi_{_\land}(\bar x)$ define those $\land$-gates.  
\begin{equation}
\varphi_{_\land}(\bar x):=x_0=1\land \bigwedge_{i,j\in [1,k];i\neq j} x_i\neq x_j
\end{equation}
By convention, here we use $x_i\neq x_j$ to denote $\lnot (x_i=x_j)$. 

Let $\varphi_{_\lnot}(\bar x)$ define the $\lnot$-gates.
\begin{equation}
\varphi_{_\lnot}(\bar x):=\mathrm{FALSE}
\end{equation}  

Let $\varphi_{\mathrm{in}}(\bar x)$ define the inputs.
\begin{equation}
\varphi_{_\mathrm{in}}(\bar x):=x_0=0\land x_1\neq x_2\land \bigwedge_{i\in [3,k]}x_i=0
\end{equation}

Let $\varphi_0$ define the universe of a circuit structure. 
\begin{equation} 
\varphi_0(\bar x):=\varphi_{_\lor}(\bar x)\lor \varphi_{_\land}(\bar x)\lor \varphi_{\mathrm{in}}(\bar x)
\end{equation}

Finally, we give the heart of the definition, i.e. those arrows that exhibit the inputs and outputs of the gates (or the edge relation between the gates). Let $\bar y:=y_0,y_1,\ldots,y_k$. 

$\varphi_{_{R}}(\bar x,\bar y):=\psi_0(\bar x,\bar y)\lor \psi_1(\bar x,\bar y)$ where
\begin{align}
\psi_0(\bar x,\bar y) &:=\varphi_{_\lor}(\bar x)\land \varphi_{_\land}(\bar y)\\
\psi_1(\bar x,\bar y) &:=\varphi_{_\land}(\bar x)\land \varphi_{_\mathrm{in}}(\bar y)\land \bigvee_{i,j\in [1,k],i\neq j}(x_i=y_1\land x_j=y_2)
\end{align}
Note that, from $\psi_0(\bar x,\bar y)$ we can obtain the quantifier structure needed in the regularization. 
Together with $\varphi_{_r}(\bar x)$ and $\varphi_{_\lor}(\bar x)$, it tells us that the children $\bar y$ of the output gate is any string of length $k$ where every element is distinct. So clearly these children correspond to a block of $k$ relativized existential quantified variables.  That is, 
$\psi_0(\bar x,\bar y)$ tells us that the sentence is begin with an existential quantifier block in the form:
\begin{equation}\label{ex-regularization-eqn-1}
\exists x_1\exists x_2\cdots\exists x_k \left(\left(\bigwedge_{i,j\in[1,k];i\neq j} x_i\neq x_j\right)\land \xi\right) 
\end{equation} 
Here, for the sake of simplicity, we have used ``$x_1,x_2,\ldots,x_k$'' as the names of the quantified variables, inheriting from the definition of the first-order query.
Note that, along any path, which assigns values to the variables $x_1x_2\cdots x_k$, the subformula $x_1\neq x_2\neq\cdots\neq x_k$ can be evaluated immediately: every path leads to an $\land$-gate; one of its children is either $0$ or $1$ depending on the trueth value of $x_1\neq x_2\neq\cdots\neq x_k$; the other child of this $\land$-gate is the output gate of the subcircuit that computes the subformula $\xi$. Therefore, we do not need to adding new gates to compute those subformulas $x_1\neq x_2\neq \cdots\neq x_k$.    

From $\varphi_{_{R}}(\bar x,\bar y)$ we know that $\psi_1(\bar x,\bar y)$ defines the quantifier-free subformulas of the sentence. There is only one quantifier-free subformula in this example. The formula $\varphi_{_\land}(\bar x)$ tells us that the atom formulas are collected in a conjunction.  The formula  $\psi_1(\bar x,\bar y)$ tells us how to bind the variables with inputs, a process that give names to the variables in a formula.\footnote{Without this process, we don't know the names of variables. Hence the atoms are in the form like $E(?,?)$. That is, we could only know the structure of a sentence, not the complete description of the sentence. Using this process, we can know the precise description, up to renaming.} It is possible to work out the binding directly from the qurey. In particular, the subformula $\bigvee_{i,j\in [1,k],i\neq j}(x_i=y_1\land x_j=y_2)$ tells us that the conjunction is in the form ``$\bigwedge_{i,j\in [1,k],i\neq j}E(x_i,x_j)$'' if we give the quantified variables the name ``$x_1,x_2,\ldots,x_k$'' inheriting from the first-order query. 
We can also know the binding from a technique called ``consistent reverse assignment''. Assume that $\ell$ is a constant greater than the length of the query. 
Having a picture of $C_\ell^\prime$ (i.e. the regularized $C_\ell$) in our mind,  the binding follows the following process. Along a directed path of the quantifier structure, we have a subcircuit standing for the quantifier-free subformula. We first guess a labelling of the wires in the quantifier structure. It is a map from a value in $[n]$ to the name of a variable. Call such a map \textit{reverse assignment}.\footnote{It is so named because usually an assignment maps variables to values. Such kind of map has been used by Denenberg et al., cf. \cite{Denenberg1986circuits}, p.238. But we handle it differently. That is, in general we do not map a value $b$ to $v_b$.} 
Then for each subcircuit we guess a consistent reverse assignment. Because the first-query defines a circuit family that express a first-order property, there exists a \textit{consistent} guess that matches both the guess of the reverse assignment of the subcircuit that computes the quantifier structure\footnote{In this simple example, there is only one subcircuit that computes the unique quantifier structure. In more complicated cases, it could have several such subcircuits that compute different quantifier structures.} and the guesses of the reverse assignment for all the subcircuits that compute the instances of the quantifier-free subformula. 
Such a consistent guess must exist, for otherwise it is in contradiction with Assumption \ref{assump-circuitDefinesFormula}.  
Note that this process is still constructive since we can enumerate all the possible guesses, for up to $k$ distinct  variables. The reason we need just consider $k$ distinct variables is because the quantifier structure is regularized such that every gate in the circuit, except those whose children are inputs, has either $n^i$ (corresponding to $i$ quantifiers) or $2$ children (corresponding to ``$\land$'' or ``$\lor$''), and  there are at most $O(n^{k})$ gates can be reached from any gate following the arrows.  

Last but not least, the family of regularized circuits are first-order definable. The uniformity is inherited from the uniformity of the original family of gates.   

In summary, suppose that we are given a first-order definable family of constant-depth circuits of size $O(n^{k})$, we can convert it to a first-order definable family of succinct regular circuits of size $O(n^{k})$, whose wires are not necessary explicitly labelled, that computes the same first-order property. Then we show that such property can be defined using $k$ distinct variables because the number of children of a gate in a succinct regular circuit, which corresponds to \textit{one} quantifier, is $n$.\footnote{If it is a block or several block of quantifiers, the number of children could be $n^i$ for some $i$.} Moreover, in this example we can even obtain the sentence defining $k$-Clique straightforward from reading the structure of $C_\ell^\prime$, using consistent reverse assignments. Hence, it is completely constructive w.r.t. this first-order query. 
\hfill\ensuremath{\divideontimes}
\end{example}

The following is straightforward, due to Theorem \ref{thm-k-variable-BIT} and Proposition \ref{circuits2variables} (with Assumption \ref{assump-circuitDefinesFormula}). 
\begin{corollary}\label{circuit-lower-bound-1}
On condition that $k\geq 5$ and Assumption \ref{assump-circuitDefinesFormula} holds, $k$-Clique cannot be computed by any first-order definable family of constant-depth unbounded fan-in circuits of size $O(n^{k-1})$. 
\end{corollary}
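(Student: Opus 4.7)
The plan is to argue by contraposition, chaining together the two main results already assembled in the excerpt. Suppose for contradiction that $k$-Clique, for some $k \geq 5$ (and with the standing Assumption~\ref{assump-circuitDefinesFormula} in force), were computable by a first-order definable family of constant-depth unbounded fan-in circuits of size $O(n^{k-1})$. My first step is to feed this hypothetical family into Proposition~\ref{circuits2variables}, applied with the role of its parameter played by $k-1$. This yields a sentence of $\fo(\mathbf{BIT})$ defining $k$-Clique that uses at most $k-1$ distinct variables.

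Next I would invoke Theorem~\ref{thm-k-variable-BIT}, which asserts the lower bound that $k$ variables are necessary to define $k$-Clique in $\fo(\mathbf{BIT})$. The existence of a $(k-1)$-variable defining sentence directly contradicts this bound, and the desired conclusion follows. So the only real work is the bookkeeping about the hypothesis $k \geq 5$: Proposition~\ref{circuits2variables} is stated under a lower-bound hypothesis on the exponent, and the regularization technique in its proof needs enough ``room'' in the number of variables to absorb the constant-size overhead from representing constants, from converting arithmetic literals to the two constant inputs, and from the $\fo^5$ equivalence of subformulas $\eta(z_1,z_2)$ defining the local branching structure. One should verify that when the proposition is applied to a circuit of size $O(n^{k-1})$, the numerical prerequisite is met under the stated $k \geq 5$ (otherwise the corollary must read $k \geq 6$).

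There is essentially no hard mathematical obstacle once Theorem~\ref{thm-k-variable-BIT} and Proposition~\ref{circuits2variables} are granted; both of the heavy steps have already been established. The main thing to be careful about is the direction of the implications: Proposition~\ref{circuits2variables} turns a small circuit family into a sentence with few variables, while Theorem~\ref{thm-k-variable-BIT} rules out few-variable sentences. Composing them is a one-line contrapositive, and I would write the proof as such, explicitly noting the substitution of $k-1$ for the exponent in Proposition~\ref{circuits2variables} and confirming the side condition on $k$. If there is a subtlety it is merely in reconciling the ``$k \geq 5$'' in the corollary with the ``$k \geq 5$'' in the proposition, which I would resolve either by a remark or by tightening the lower bound on $k$ in the corollary's statement.
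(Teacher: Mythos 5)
Your proposal mirrors the paper's argument exactly: by contraposition, Proposition~\ref{circuits2variables} (applied with its exponent equal to $k-1$) converts a hypothetical $\fo$-definable circuit family of size $O(n^{k-1})$ computing $k$-Clique into an $\fo(\mathbf{BIT})$ sentence with at most $k-1$ variables, which then contradicts Theorem~\ref{thm-k-variable-BIT}. This is precisely what the paper calls ``straightforward,'' and you have reconstructed it correctly.

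The caveat you raise about the side condition deserves more than a passing remark, however. Proposition~\ref{circuits2variables} is stated and proved under the hypothesis that its parameter — the exponent in the circuit-size bound — is at least $5$ (this is exactly where its proof invokes Corollary~\ref{collapse-in-BIT-1}, which itself requires that many variables in the quantifier block). Instantiating the proposition with exponent $k-1$ therefore requires $k-1 \geq 5$, i.e., $k \geq 6$, strictly more than the $k \geq 5$ appearing in the corollary's statement. The paper's own discussion immediately after the corollary tacitly concedes this degradation: for small exponents, one must pay extra variables because Corollary~\ref{collapse-in-BIT-1} no longer applies, which is exactly why the paper can only claim the weaker bounds $O(n^2)$ for $k=4$ and $O(1)$ for $k=3$. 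Carrying that same degradation to $k = 5$ (exponent $4$) gives a sentence in $\leq 5$ variables, which does not contradict the $5$-variable lower bound from Theorem~\ref{thm-k-variable-BIT}. So your instinct is right: as proved from Proposition~\ref{circuits2variables}, the corollary is only justified for $k \geq 6$, and the boundary case $k = 5$ would need either a tightened version of Corollary~\ref{collapse-in-BIT-1} or a separate argument that the paper does not supply.
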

It conditionally answers a question raised in \cite{RossmanThesis} (cf. p.71), \cite{RossmanStoc} (cf. p.10), and also in \cite{DawarHowmany} (cf. p.25), based on Assumption \ref{assump-circuitDefinesFormula}. Note that, by our method, $O(n^2)$ not suffice to compute $k$-Clique if $k=4$, because $k+1$  variables are needed instead of $k$ in Proposition \ref{circuits2variables}, and  $O(1)$ not suffice to compute $k$-Clique if $k=3$, because $k+2$ variables are needed. Obviously, this trivial lower bound for the case $k=3$ tells us nothing. It is not clear what are the tight lower bounds in these two special cases.

\section{Conclusions}\label{conclusions}

The study of finite model theory has been mostly motivated by questions in computational complexity theory and database theory. 
It has been one of the major challenges in finite model theory for a long time to establish lower bounds on finite  \textit{ordered} graphs, which stands for a well-known barrier that is in the way when we try to solve open problems in computational complexity using finite model-theoretic toolkit. We developed novel concepts and techniques to this end, for worst-case lower bounds.  

Based on a kind of pebble games (with changing boards) over abstractions and a notion called board history, we provide an alternative proof for the strictness of bounded variable hierarchy in $\fo$, which was first proved by Rossman \cite{RossmanStoc} using tools from circuit complexity.     
Note that our proof is purely constructive. 
That is, 
we construct a pair of extraordinary huge graphs explicitly, use them as the game board and demonstrate the winning strategies for Duplicator with full details. 

Moreover, we use the explicit constructions to prove an optimal lower bound of $k$-Clique, which fully answers a question \cite{DawarHowmany,RossmanStoc,RossmanThesis} that goes back to an early paper of Immerman \cite{Immerman1982Conj}, which represents the challenge. 
Contrary to popular opinion, we find that big size is not a big issue in the constructions.  On the contrary, big size even helps: intuitively, logics tend to exhibit their difference on large enough structures.

This work is extended by introducing other arbitrary arithmetic predicates to $\fo$, other than linear orders. 
Recall that $\mathbf{BIT}$ predicate can be used to define arbitrary arithmetic predicates. 
We show that precisely $k$ variables are necessary and sufficient to describe $k$-Clique in $\fo$ on the class of finite graphs with built-in $\mathbf{BIT}$. 

Afterwards, we apply this result to circuit complexity, which is motivated by the question on the worst-case lower bounds of constant-depth circuits raised in \cite{Lynch86Circuit,DawarHowmany,RossmanStoc,RossmanThesis}. 
It is also related to the question of Immerman since there is a well-known connection between first-order logic and first-order definable families of constant-depth unbounded fan-in circuits. Recall that Rossman's tight lower-bound in average-case is also a unconditional worst-case lower bound, which says $O(n^{\frac{k}{4}})$ gates not suffice to compute $k$-Clique on constant-depth circuits. We improve the state of the art by a unconditional worst-case lower bound $O(n^{\frac{k-3}{2}})$. Then, based on a not well-known but still reasonable assumption (i.e. Assumption \ref{assump-circuitDefinesFormula}), we give the tight worst-case lower bound $O(n^{k-1})$. Certainly, it is not completely satisfactory for it is not unconditional. A proof from circuit complexity using its own, maybe very novel, techniques is expected to show the same tight lower bound without any assumption.

We can also study the $k$-Clique problem or strictness of bounded variable hierarchy in other more expressive logics. 
Moreover, it is interesting to know whether the notions and techniques introduced here can help to improve the state of affairs of other questions in circuit complexity. For instance, could we get better lower bounds for more general problems, say subgraph isomorphism problem, on the $\fo$-uniform $\mathrm{AC}^0$ model?  
Possibly, it is within reach of the techniques introduced in this paper.

\newpage


\newpage

\appendixpage
We put some remarks and proofs in this appendix to help the readers understand and evaluate the ideas.

\begin{remark}\label{remark-expansion}
As it turns out, $\mathfrak{B}_{3,m}^\prime$ is quite large even for moderate $m$. Hence, to deliver some essence of the notion ``structural expansion'', we use the small structure $\mathcal{B}_3$ as the start point of an expansion.      

\begin{figure}[htbp]
\centering
\includegraphics[trim= 0mm 0mm 0mm 0mm, scale=0.5]{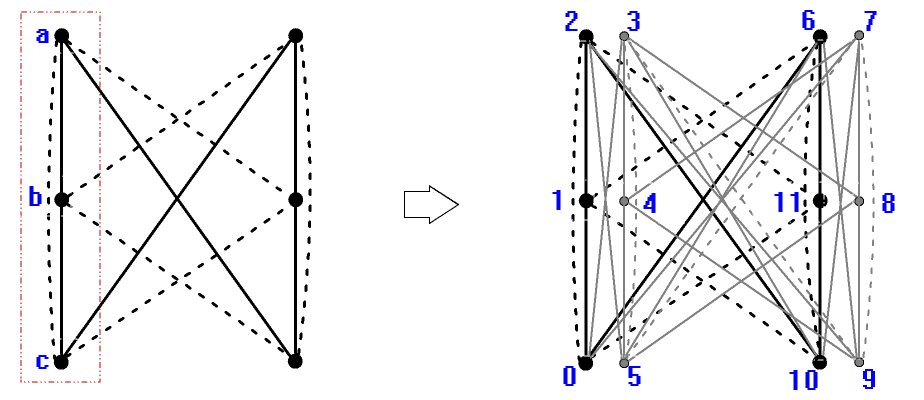}
\caption{ }
\label{expansion-example}
\end{figure}  

In Fig. \ref{expansion-example}, the graph on the left side is $\mathcal{B}_3$. The graph  on the right side is an expansion of it. The ``bricks'' are akin to $\mathcal{B}_3$ except that the adjacency between the three vertices in the first column  can be different. For example, the brick formed by the vertices $0, 1, \ldots,5$ is isomorphic to $\mathcal{B}_3$; whereas the brick formed by $0$, $1$, $4$, $5$, $6$ and $7$ is akin to $\mathcal{B}_3$ except that the subgraph induced by $a$, $b$ and $c$ in $\mathcal{B}_3$ is different from that induced by $0$, $1$ and $6$ in the other. In short, the bricks are bound in such a way  that their first columns  respect the structure (or adjacency) of the graph $\mathcal{B}_3$. Hence, in some sense we can call $\mathcal{B}_3$  an ``abstraction'' (or a skeleton, or a blueprint, whatsoever) of its expansion. Moreover, we can call the latter the first abstraction, and $\mathcal{B}_3$ the second abstraction. 
Note that the universes of $\mathcal{B}_3$ and its (first) expansion, as well as all the bricks, are isomorphic to upright square lattices. The width of the ``bricks'' is $2$. It ``corresponds'' to $\beta_0^1$ in \eqref{beta-func}. 

Obviously, the definition of $\mathfrak{B}_{3,m}^\prime$ is more complicated. But the ideas are similar. We intend to use $\mathbb{X}_i^*$ to denote the $i$-th abstraction. And we can regard $\mathbb{X}_i^*$ as an ``abstraction'' of $\mathbb{X}_{i-1}^*$. Hence $\mathbb{X}_m^*$ is the highest abstraction, on the top of the hierarchy of abstractions; whereas $\mathbb{X}_1^*$ is the lowest one, on the bottom of this hierarchy. The index of a vertex tells us at which stage it is created in the structual expansion. The value $\beta_{m-j}^{m-i}$ tells us the width of the ``bricks'' we will use to build the $i$-th abstraction based on the ``skeleton'' $\mathbb{X}_j^*$, or the number of vertices that are around a vertex in $\mathbb{X}_j^*$. Suppose $(x,y)$ is a vertex of the $p$-th abstraction. The construction should ensure that it is also a vertex in the $q$-th abstraction for any $q<p$. The vertex $(\llparenthesis x\rrparenthesis_i,y)$ is the projection of $(x,y)$ in the $i$-th abstraction. If $x=\llparenthesis x\rrparenthesis_i$, this means that $(x,y)$ is already in the $i$-th abstraction, which in turn implies that the index of $(x,y)$ is at least $i$. The value $[x]_i$ tells us where $(x,y)$ is in the $y$-th row of the $i$-th abstraction, provided that $(x,y)$ is a vertex in the first abstraction.     
\end{remark}

\begin{example}\label{example-congr-label}
We give an example on the concept ``congruence label'', which is easier to illustrate in the ``flat'' structures $\mathfrak{A}_{k,m}^*$ and $\mathfrak{B}_{k,m}^*$ that ``forget'' board histories of vertices. 

From the definition, if $$\mathbf{cl}(x,y)=\underline{0,1;m-2;1;\left\{\underline{0,3;m;2;\emptyset}, \underline{1,0;m-1;0;\{2,2;m;-1;\emptyset\}}\right\}},$$ it means that $\mathbf{cc}([x]_{m-2},y)=0$;  $y=1$; $\mathrm{idx}(x,y)\!=\!m\!-\!2$ (i.e. \!\!$(x,y)\in\mathbb{X}_{m-2}^*-\mathbb{X}_{m-1}^*$); $\mathrm{RngNum}(x,m-2)=1$; 
and $(x,y)$ is not adjacent to any vertex whose congruence label is in  $\left\{\underline{0,3;m;2;\emptyset},  \underline{1,0;m-1;0;\{2,2;m;-1;\emptyset\}}\right\}$.  In other words, $(x,y)$ is not adjacent to any vertex $(u,3)$ whose index is $m$, $\mathbf{cc}([u]_m,3)\!=\!0$ and $\mathrm{RngNum}(u,m)=2$; and  $(x,y)$ is not adjacent to any vertex $(e,0)$ whose index is $m\!-\!1$, $\mathbf{cc}([e]_{m\!-\!1},0)\!=\!1$, $\mathrm{RngNum}(e,m-1)=0$, \textbf{and} $(e,0)$ is not adjacent to  any vertex $(e^{\prime},2)$ whose index is $m$,  $\mathbf{cc}([e^{\prime}]_m,2)\!=\!2$ and $\mathrm{RngNum}(e^\prime,m)=-1$. 
\end{example}

\begin{remark}
Our structures are defined in a natural way. The only seemingly artificial bits are the introducing of $\mathrm{RngNum}$ and $\mathrm{SW}$ functions, which deserve more explanation. Note that 
 $\mathcal{B}_k$ is very symmetric (cf. section \ref{existential-case-section}). Indeed, it is so symmetric that the automorphism group of a $k$-clique  is isomorphic to a subgroup of the automorphism group of  $\mathcal{B}_k$ (when ``forgetting'' the order), which is a cyclic group. We can also use this fact to prove Theorem \ref{existential-case}. 
Note that $\mathfrak{B}_{k,m}^*[\mathbb{X}_i^*]$  resembles $\mathcal{B}_k$ to some extent. 
However, without  $\mathrm{SW}$ function $\mathfrak{B}_{k,m}^*[\mathbb{X}_i^*]$ is not as symmetric as $\mathcal{B}_k$ because of missing of some edges: all the edges satisfying $(\mathbf{cc}([x_i]_{t-1},y_i)-\mathbf{cc}([x_j]_{t-1},y_j))\times(y_i-y_j)<0$ would be missing. Spoiler can use such asymmetry to win the pebble games. But with $\mathrm{SW}$ function in the definition, the structures are sufficiently symmetric and Spoiler cannot exploit the asymmetry anymore. Note that  
$\mathrm{SW}$ function alone will cause a problem: without $\mathrm{RngNum}$ function the structure $\mathfrak{B}_{k,m}^*$, as well as $\mathfrak{B}_{k,m}$, will have $k$-cliques. 
To fully understand $\mathrm{SW}$ function, cf. Lemma \ref{no-missing-edges_xi-1}. And to fully undertand $\mathrm{RngNum}$ function, cf. case (3) in the proof of Lemma \ref{B_k-has-no-k-clique}.
\end{remark}

\begin{remark}\label{remark-linear-orders-1}
Fact \ref{linear-orders-1} comes directly from  the following folklore knowledge \cite{EbbinghausFlum99FiniteM}.\footnote{Cf. Example 2.3.6 on page 22 of the second edition of \cite{EbbinghausFlum99FiniteM}. 
The optimal lower bound used here is from a \href{http://www.cl.cam.ac.uk/teaching/1213/L15/notes.pdf}{course note} of Dawar.} 
 
\textit{For any $m>0$, if $\mathcal{O}_a,\mathcal{O}_b$ be linear orders of length greater than or equal to $2^m-1$, then $\mathcal{O}_a\equiv_m \mathcal{O}_b$.}

The notion ``$\equiv_m$'' is related to the standard Ehrenfeucht-Fra\"iss\' e\xspace games, wherein the players can use arbitrary number of pebbles. 
At the beginning, there is one interval for a linear order, i.e. the linear order itself. 
Recall that, in this paper, whenever we talk about an interval, it is an empty interval (i.e. no pebble is inside the interval), except that it may contain the newly picked vertex. 
That is, all the intervals are not overlapped. 
Duplicator's strategy in such games likes the following.
 In the $i$-th round, where $i\leq m$, assume that an interval $[a,b]$ in $\mathcal{O}_a$ is split into two parts, say $[a,x]$ and $[x,b]$. Duplicator tries to ensure that, the corresponding interval in $\mathcal{O}_b$ is also split into two parts , say $[a^\prime,x^\prime]$ and $[x^\prime,b^\prime]$, such that  
\begin{itemize}
\item if $x-a<2^{m-i}-1$ then $x-a=x^\prime-a^\prime$;
\item if $b-x<2^{m-i}-1$ then $b-x=b^\prime-x^\prime$;
\item if $x-a\geq 2^{m-i}-1$ and $b-x\geq 2^{m-i}-1$ then\\ $x^\prime-a^\prime\geq 2^{m-i}-1$ and $b^\prime-x^\prime\geq 2^{m-i}-1$.
\hfill\ensuremath{(\mbox{apx-}1)} 
\end{itemize}

By a simple induction, we can see that it is a winning strategy of Duplicator in such games. That is, she can ensure (apx-$1$) throughout the game, which implies that $a_i\leq a_j$ if and only if $b_i\leq b_j$, for any $a_i, a_j$ in $\mathcal{O}_a$ and $b_i,b_j$ in $\mathcal{O}_b$, where $a_i\Vdash b_i$ and $a_j\Vdash b_j$.
\end{remark}

\begin{remark}\label{abstract-order-in-main-lemma}
In this remark, we assume that the game board consists of $\mathfrak{A}_{k,m}^*$ and $\mathfrak{B}_{k,m}^*$.  Recall that the players are playing in the associated structures $\mathfrak{A}_{k,m}^*$ and $\mathfrak{B}_{k,m}^*$.
We show that Duplicator can preserve the abstraction-order-condition throughout the game, at the possible price that the value of $\xi$ is decreased by one in a round. Moreover, we show that Duplicator is able to avoid picking any object\footnote{Here, the notion ``object'' (cf. page  \pageref{def-main-object}) is slightly  different from the one introduced before (cf. page  \pageref{def-special-object}, Remark \ref{ExplanationOfAbstraction-specalcase}).} 
that is in the $(\xi-1)$-th abstractoin and that contains a critical point  except for the case wherein no object of this size containing a crtitical point is already ``picked'' (and the exceptions due to her basic strategy B-2., cf. Lemma \ref{main-lemma}), and the case where this object is already ``picked'' (in the $\xi$-th abstraction or below).  

Assume that Spoiler picks $(x,y)$ whose index is $t$, and Duplicator replies with $(x^\prime,y)$ whose index is $t^\prime$.   Both of the vertices are in $\mathbb{X}_1^*$. Note that, in the main part of the proof of Lemm \ref{main-lemma} (from page \pageref{def-virtual-game}), $(x,y)$ and $(x^\prime,y)$ are vertices in $\mathbb{X}_1$. 

When we talk about ``picking $(x^\prime,y)$'', we are interested in $(\llparenthesis x^\prime\rrparenthesis_\xi,y)$ instead. 
Once $(\llparenthesis x^\prime\rrparenthesis_\xi,y)$ is determined, then we can determine $(x^\prime,y)$ based on the approximate hr-copycat condition (cf. (\ref{main-diamond-xi}$^\diamond$), page \pageref{def-xi}). 
In the game over abstractions, assume that $[(a,y),(b,y)]$ is the interval that contains $(\llparenthesis x\rrparenthesis_\xi,y)$  and that $[(a^\prime,y),(b^\prime,y)]$ is the corresponding 
interval in the other structure. Note that $(a,y),(b,y)\in \mathbb{X}_{\xi}^*$. 
Duplicator will pick a vertex $(x^\prime,y)$ such that  $a^\prime\leq \llparenthesis x^\prime\rrparenthesis_\xi\leq b^\prime$. 
In the first round, Duplicator simply mimics Spoiler. 
In the following rounds, if the abstraction-order-condition is always preserved in the $m$-th abstraction, then Duplicator is happy. 
In the sequel, we assume that Duplicator has to resort to lower abstractions and currently they are playing the $\ell_c$-th round where $\ell_c>1$. Therefore, $$m>\xi\geq m-\ell_c+2.$$

Now, we summarize the ideas, and explain briefly how they work. 

Firstly, note that all the unabridged intervals formed in the game either are sufficiently big, i.e. greater than or equal to $2^{m-\ell_c}-1$, or are isomorphic. It ensures the basic requirement for pure (induced) linear orders. 

To force Duplicator to pick a critical point, in the $j$-th round Spoiler should pick $(x,y)$ in $\mathfrak{B}_{k,m}^*$. Assume that the interval $[(a^\prime,y),(b^\prime,y)]$, where $(x^\prime,y)$ should be settled, contains a critical point. 
Consider the following cases.
\begin{enumerate}[(i)]
\item Assume that  $\lfloor \llparenthesis x\rrparenthesis_\xi/l_\xi\rfloor-\lfloor a/l_\xi\rfloor\geq 2^{m-\ell_c}-1$ and $\lfloor b/l_\xi\rfloor-\lfloor \llparenthesis x\rrparenthesis_\xi/l_\xi\rfloor\geq 2^{m-\ell_c}-1$. By induction hypothesis, we know that Duplicator can make it that $\lfloor \llparenthesis x^\prime\rrparenthesis_\xi/l_\xi\rfloor-\lfloor a^\prime/l_\xi\rfloor\geq 2^{m-\ell_c}-1$ and $\lfloor b^\prime/l_\xi\rfloor-\lfloor \llparenthesis x^\prime\rrparenthesis_\xi/l_\xi\rfloor\geq 2^{m-\ell_c}-1$. Then by (\ref{xi-order-requirement-ensured}), the abstraction-order-condition holds (for any $i$-th abstraction where $1\leq i\leq \xi$). Moreover, 
if $\lfloor b^\prime/l_\xi\rfloor-\lfloor a^\prime/l_\xi\rfloor\geq 2^{m-\ell_c+1}-1$, then by Fact \ref{linear-orders-1}, Duplicator can avoid picking any object containing a critical point in this round, since the (unabridged) interval is sufficiently big and more than enough. If $\lfloor b^\prime/l_\xi\rfloor-\lfloor a^\prime/l_\xi\rfloor=2^{m-\ell_c+1}-2$, then Duplicator has to let $\lfloor \llparenthesis x^\prime\rrparenthesis_\xi/l_\xi\rfloor-\lfloor a^\prime/l_\xi\rfloor= 2^{m-\ell_c}-1$. In this case, she is still able to pick an object, but possibly in the $(\xi-1)$th abstraction, that does not contain a critical point, by (\ref{xi-order-requirement-ensured}).  

\item Assume that either $\lfloor \llparenthesis x\rrparenthesis_\xi/l_\xi\rfloor-\lfloor a/l_\xi\rfloor\!<\! 2^{m-\ell_c}\!-\!1$ or $\lfloor b/l_\xi\rfloor-\lfloor \llparenthesis x\rrparenthesis_\xi/l_\xi\rfloor< 2^{m-\ell_c}-1$. Then due to her strategy, Duplicator will pick $(x^\prime,y)$ such that $\lfloor \llparenthesis x^\prime\rrparenthesis_\xi/l_\xi\rfloor\!-\!\lfloor a^\prime/l_\xi\rfloor=\lfloor \llparenthesis x\rrparenthesis_\xi/l_\xi\rfloor\!-\!\lfloor a/l_\xi\rfloor$ or $\lfloor b^\prime/l_\xi\rfloor-\lfloor \llparenthesis x^\prime\rrparenthesis_\xi/l_\xi\rfloor=\lfloor b/l_\xi\rfloor-\lfloor \llparenthesis x\rrparenthesis_\xi/l_\xi\rfloor$ respectively. Consider the following cases. 
\begin{enumerate}
\item If $\lfloor \llparenthesis x^\prime\rrparenthesis_\xi/l_\xi\rfloor-\lfloor a^\prime/l_\xi\rfloor>1$ and $\lfloor b^\prime/l_\xi\rfloor-\lfloor \llparenthesis x^\prime\rrparenthesis_\xi/l_\xi\rfloor>1$, then by (\ref{xi-order-requirement-ensured}), Spoiler cannot force Duplicator to pick an object in the $(\xi-1)$-th abstraction that  contains a critical point, and the abstraction-order-condition holds for the lower abstractions.  

\item Assume that $\lfloor b^\prime/l_\xi\rfloor-\lfloor \llparenthesis x^\prime\rrparenthesis_\xi/l_\xi\rfloor=1$ and $\lfloor \llparenthesis x^\prime\rrparenthesis_\xi/l_\xi\rfloor-\lfloor a^\prime/l_\xi\rfloor\geq 1$. The case when $\lfloor b^\prime/l_\xi\rfloor-\lfloor \llparenthesis x^\prime\rrparenthesis_\xi/l_\xi\rfloor\geq 1$ and $\lfloor \llparenthesis x^\prime\rrparenthesis_\xi/l_\xi\rfloor-\lfloor a^\prime/l_\xi\rfloor=1$ is similar. 
Note that, $(c^\star,y)\in\mathbb{X}_m^*$ for any critical point $(c^\star,y)$. By Lemma \ref{i=0theni-1=0}, $(c^\star,y)\in\mathbb{X}_\xi^*$. 
If $\lfloor c^\star/l_\xi\rfloor\neq\lfloor x^\prime/l_\xi\rfloor$, then obviously Duplicator avoids picking the critical point. Hence, 
assume that $\lfloor c^\star/l_\xi\rfloor=\lfloor x^\prime/l_\xi\rfloor$. Recall that $m>\xi$, it implies that $x^\prime\geq c^\star$, because $[c^\star]_\xi=\llbracket c^\star\rrbracket_\xi^{min}$, which is easy to show (cf. p. \pageref{llbracket-min} for the definition of $\llbracket c^\star\rrbracket_\xi^{min}$).  
Recall that $\xi\geq m-\ell_c+2$. Therefore, for any $(u,y)$ where $\lfloor u/l_\xi\rfloor\neq \lfloor x^\prime/l_\xi\rfloor$ and $u>x^\prime$,
$\lfloor u/l_{\xi-1}\rfloor-\lfloor c^\star/l_{\xi-1}\rfloor\geq \lfloor\frac{([c^\star]_\xi+1)\beta_{m-\xi}^{m-1}-c^\star}{l_{\xi-1}}\rfloor=
\lfloor\frac{\frac{1}{2}\beta_{m-\xi}^{m-1}-\frac{1}{2}\sum_{1<i\leq \xi-1}\beta_{m-i}^{m-1}}{l_{\xi-1}}\rfloor>
\lfloor\frac{\frac{1}{2}\beta_{m-\xi}^{m-1}-\beta_{m-\xi+1}^{m-1}}{l_{\xi-1}}\rfloor\\
=\lfloor\frac{\beta_{m-\xi+1}^{m-1}(\frac{1}{2}\beta_{m-\xi}^{m-\xi+1}-1)}{\beta_{m-\xi+1}^{m-1}\mathpzc{U}_{\xi-1}^*}\rfloor
=\lfloor\frac{2^{\xi-2}\mathpzc{U}_{\xi-1}^*-1}{\mathpzc{U}_{\xi-1}^*}\rfloor>
2^{\xi-2}-1\geq 2^{m-\ell_c}-1$. See Fig. \ref{order-remark-ii-b-1}. Note that black vertices in the figure (the $2$nd, $5$th, $8$th, and $11$th vertices) are those in $\mathbb{X}_{\xi}^*$ and grey vertices (the $1$st, $3$rd, $4$th, $6$th, $9$th, $10$th and $12$th vertices) are those whose indices are $1$. 
Therefore,  $\lfloor b^\prime/l_{\xi-1}\rfloor-\lfloor  c^\star/l_{\xi-1}\rfloor\geq 2^{m-\ell_c}-1$. 

\begin{figure}[]
\hspace*{-3mm}
\includegraphics[trim = -35mm 0mm 0mm 0mm, clip, scale=0.39]{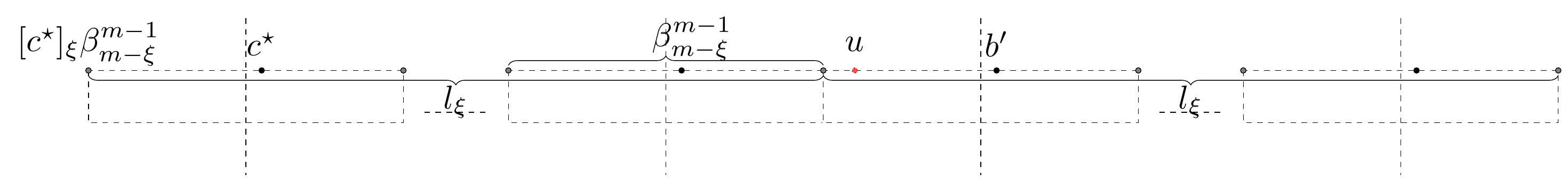}
\caption{  The case (ii) (b): $\lfloor \frac{b^\prime}{l_{\xi-1}}\rfloor-\lfloor\frac{c^\star}{l_{\xi-1}}\rfloor\geq 2^{m-\ell_c}-1$.}
\label{order-remark-ii-b-1}
\end{figure}

On the other hand,  
 $\lfloor c^\star/l_{\xi-1}\rfloor-\lfloor a^\prime/l_{\xi-1}\rfloor>\lfloor \frac{\beta_{m-\xi}^{m-1}}{2l_{\xi-1}}\rfloor=\lfloor\frac{\beta_{m-\xi}^{m-\xi+1}}{2\mathpzc{U}_{\xi-1}^*} \rfloor=2^{\xi-2}\geq 2^{m-\ell_c}$. See Fig. \ref{order-remark-ii-b-2}. 
As a consequence, $\lfloor b^\prime/l_{\xi-1}\rfloor-\lfloor a^\prime/l_{\xi-1}\rfloor\geq 2^{m-\ell_c+1}-1$. 
Therefore, in this case Duplicator has the freedom to avoid picking a critical point, since the (unabridged) interval is big enough.  
\begin{figure}[htbp]
\hspace*{-3mm}
\centering
\includegraphics[trim = -35mm 0mm 0mm 0mm, clip, scale=0.38]{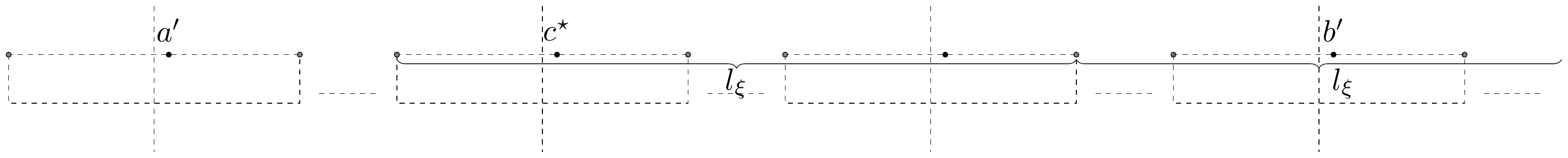}
\caption{  The case (ii) (b): $\lfloor \frac{c^\star}{l_{\xi-1}}\rfloor-\lfloor \frac{a^\prime}{l_{\xi-1}}\rfloor>2^{m-\ell_c}$.}
\label{order-remark-ii-b-2}
\end{figure}

Similarly, $\lfloor b/l_{\xi-1}\rfloor-\lfloor \llparenthesis x\rrparenthesis_\xi/l_{\xi-1}\rfloor\geq 2^{m-\ell_c}-1$ and $\lfloor \llparenthesis x\rrparenthesis_\xi/l_{\xi-1}\rfloor-\lfloor a/l_{\xi-1}\rfloor\geq  2^{m-\ell_c}$.
In other words, in such case, the abstraction-order-condition holds (for any $1\leq i\leq \xi-1$).  

\item Assume that either $\lfloor \llparenthesis x^\prime\rrparenthesis_\xi/l_\xi\rfloor=\lfloor a^\prime/l_\xi\rfloor$ or $\lfloor \llparenthesis x^\prime\rrparenthesis_\xi/l_\xi\rfloor=\lfloor b^\prime/l_\xi\rfloor$. Let $(c^\star,y)$ be any critical point that is settled between $(a^\prime,y)$ and $(b^\prime,y)$. 
If $[a^\prime]_\xi=[c^\star]_\xi$, it means that the pebbled vertex $(x_{a^\prime},y)$, where $\llparenthesis x_{a^\prime}\rrparenthesis_\xi=a^\prime$, is projected to $(c^\star,y)$ in the $\xi$-th abstractions, i.e. $\llparenthesis x_{a^\prime}\rrparenthesis_\xi=c^\star$, because of Lemma \ref{projection} and $[\llparenthesis x_{a^\prime}\rrparenthesis_\xi]_\xi=[x_{a^\prime}]_\xi=[a^\prime]_\xi=[c^\star]_\xi$. 
See Fig. \ref{order-remark-ii-b-3}. 
\begin{figure}[htbp]
\hspace*{1mm}
\centering
\includegraphics[trim = -35mm 0mm 0mm 0mm, clip, scale=0.39]{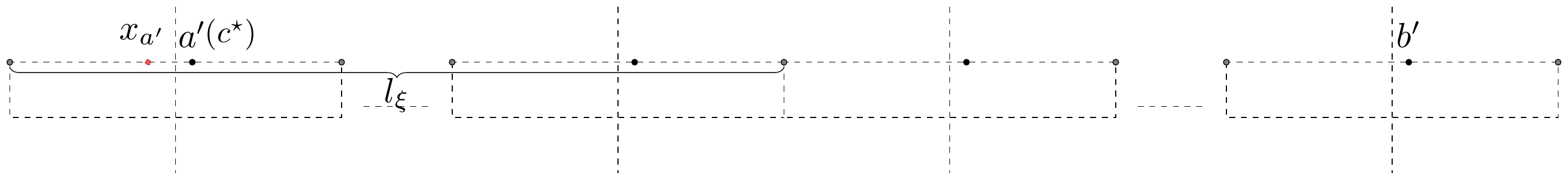}
\caption{  The case (ii) (c): $a^\prime=c^\star$.}
\label{order-remark-ii-b-3}
\end{figure}
If Duplicator is forced to pick $(c^\star,y)$ due to the abstraction-order-condition, i.e. 
$(x^\prime,y)$ is $(c^\star,y)$, then $a^\prime=x^\prime$. Therefore, by the abstraction-order-condition, $x=a$. But, Duplicator wins this round since she wins the last round. 
Similarly, if $[b^\prime]_\xi=[c^\star]_\xi$, Duplicator can also win this round. 
Therefore, we need only consider the case where $[a^\prime]_\xi\neq [c^\star]_\xi\neq [b^\prime]_\xi$. See Fig. \ref{order-remark-ii-b-4}. 
\begin{figure}[htbp]
\hspace*{1mm}
\centering
\includegraphics[trim = -35mm 0mm 0mm 0mm, clip, scale=0.39]{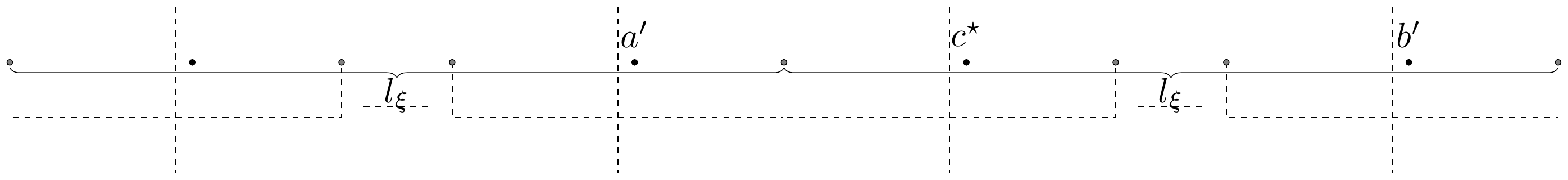}
\caption{  The case (ii) (c): $[a^\prime]_\xi\neq [c^\star]_\xi\neq [b^\prime]_\xi$.}
\label{order-remark-ii-b-4}
\end{figure}
Because $(a^\prime,y),(b^\prime,y)\in\mathbb{X}_\xi^*$, we have  $\lfloor\frac{b^\prime}{l_{\xi-1}}\rfloor-\lfloor\frac{a^\prime}{l_{\xi-1}}\rfloor=\lfloor\frac{[b^\prime]_\xi\beta_{m-\xi}^{m-1}+\frac{1}{2}\sum_{1<i\leq \xi}\beta_{m-i}^{m-1}}{l_{\xi-1}}\rfloor-\lfloor\frac{[a^\prime]_\xi\beta_{m-\xi}^{m-1}+\frac{1}{2}\sum_{1<i\leq \xi}\beta_{m-i}^{m-1}}{l_{\xi-1}}\rfloor=2^{\xi-1}([b^\prime]_\xi-[a^\prime]_\xi)\geq 2^{\xi}>2^{m-\ell_c+1}$. Therefore, Duplicator has the freedom to avod picking any object in the $(\xi-1)$-th abstraction that contains a critical point in such case.

\end{enumerate}

\end{enumerate}

In summary, in all the cases, the abstraction-order-condition holds for any abstractions below the $\xi$-th abstractions. 
Therefore, the abstraction-order-condition works.  
Moreover, Duplicator can also avoid picking critical points in most cases. 

Another issue, which is directly related to linear orders, needs to be mentioned briefly. Recall Strategy \ref{xi-1} in the proof of the main lemma \ref{main-lemma}, we defaultly assume that Lemma \ref{no-missing-edges_xi-1} applies in accordance with the abstraction-order-condition. This is, however, quite obvious, according to the same intuition as presented in the last arguments. That is, a unit of difference in higher abstraction equals huge difference in lower abstractions. Hence, once the game enters lower abstractions, abstraction-order-condition will be ensured automatically. 
\end{remark}

\begin{remark}\label{remark-ommit-mod}
$[\llparenthesis x\rrparenthesis_\xi]_{r}-[x]_r=[x]_\xi\beta_{m-\xi}^{m-r}+\frac{1}{2}\sum_{r<j\leq \xi}\beta_{m-j}^{m-r}-[x]_r<\frac{x}{\beta_{m-\xi}^{m-1}}\beta_{m-\xi}^{m-r}+\frac{1}{2}\sum_{r<j\leq \xi}\beta_{m-j}^{m-r}-\left(\frac{x}{\beta_{m-r}^{m-1}}-1\right)=
\frac{1}{2}\sum_{r<j\leq \xi}\beta_{m-j}^{m-r}+1<\frac{1}{2}\beta_{m-\xi}^{m-r}+\frac{\xi-r-1}{2}\beta_{m-\xi+1}^{m-r}\\<\frac{1}{2}\beta_{m-\xi}^{m-r}+2^{\xi-2}\mathpzc{U}_{\xi-1}^*\beta_{m-\xi+1}^{m-r}=\beta_{m-\xi}^{m-r}$. 

On the other hand, $[\llparenthesis x\rrparenthesis_\xi]_{r}-[x]_r>\left(\frac{x}{\beta_{m-\xi}^{m-1}}-1\right)\beta_{m-\xi}^{m-r}+\frac{1}{2}\sum_{r<j\leq \xi}\beta_{m-j}^{m-r}-\frac{x}{\beta_{m-r}^{m-1}}>-\beta_{m-\xi}^{m-r}$. 

In other words, $\left|[\llparenthesis x\rrparenthesis_\xi]_{r}-[x]_r\right|<\beta_{m-\xi}^{m-r}$. It implies that, we can usually ommit ``mod $\beta_{m-\xi}^{m-r}$''. 
For example, it is not difficult to see that  
\begin{multline*}
\llbracket x\rrbracket_{r}^{min}-[\llparenthesis x\rrparenthesis_\xi]_{r}\equiv\llbracket x^{\prime}\rrbracket_{r}^{min}-
[\llparenthesis x^{\prime}\rrparenthesis_\xi]_{r} \hspace{5pt}(\mbox{mod }\beta_{m-\xi}^{m-r})\Leftrightarrow\\ \llbracket x\rrbracket_{r}^{min}-[\llparenthesis x\rrparenthesis_\xi]_{r}=\llbracket x^{\prime}\rrbracket_{r}^{min}-
[\llparenthesis x^{\prime}\rrparenthesis_\xi]_{r}.
\end{multline*} 
\end{remark}

\begin{remark}
Note that $\mathfrak{A}_{k,m}^*$ is an ordinary ``flat'' graph. By contrast,  $\mathfrak{A}_{k,m}$ has a logical structure (or temporal structure) that reflects evolution of a game, thereby enforce the game evolves \textit{reasonably}. If Spoiler try to breakout current evolution by picking a vertex associated with board configuration that is not in the progress, the game will split into two, each progresses in isolation, with its own evolution of board configurations, since any pair of vertices whose board histories are not in a succession (or one history cannot continue the other) are not adjacent. 
Two evolutions may converge,  if their board configurations are almost the same except for one vertex. However, two board histories that once diverge will never converge again since the initial segments of the histories will never match again, which is formalized by the following Lemma.  
\begin{fact}\label{boardHistory-never-converge-1} 
For any $(x_1,y_1),(x_2,y_2),(x_3,y_3)\in\mathbb{X}_1$, if $(x_1,y_1)\xrightarrow[\mathrm{BC}]{*}(x_3,y_3)$,  $(x_2,y_2)\xrightarrow[\mathrm{BC}]{*}(x_3,y_3)$ and  $\chi(x_1,y_1)\!\!\restriction\mathrm{bc}\neq\chi(x_2,y_2)\!\!\restriction\mathrm{bc}$, then either $(x_1,y_1)\xrightarrow[\mathrm{BC}]{*}(x_2,y_2)$ or $(x_2,y_2)\xrightarrow[\mathrm{BC}]{*}(x_1,y_1)$, but not both.
\end{fact}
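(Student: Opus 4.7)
The plan is to reduce the statement to a standard fact about prefixes of finite sequences. Recall that $(x_i,y_i)\xrightarrow[\mathrm{BC}]{*}(x_j,y_j)$ unfolds to: $\mathrm{i}_{\mathrm{cur}}^{x_j,y_j}-\mathrm{i}_{\mathrm{cur}}^{x_i,y_i}$ one-step evolutions are possible from $\chi(x_i,y_i)\!\!\restriction\!\!\mathrm{BH}$ to $\chi(x_j,y_j)\!\!\restriction\!\!\mathrm{BH}$, and inductively unfolding the definition of $\xrightarrow[\mathrm{BC}]{1}$ shows this is equivalent to asking that (i) $\chi(x_i,y_i)\!\!\restriction\!\!\mathrm{BH}\sqsubseteq\chi(x_j,y_j)\!\!\restriction\!\!\mathrm{BH}$ and (ii) $\mathrm{i}_{\mathrm{cur}}^{x_i,y_i}\leq \mathrm{i}_{\mathrm{cur}}^{x_j,y_j}$ (with the consecutive $\triangleright$ transitions automatically inherited). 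So the first step is to unpack both hypotheses in exactly this form.

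From the two hypotheses we then get
\[
\chi(x_1,y_1)\!\!\restriction\!\!\mathrm{BH}\sqsubseteq\chi(x_3,y_3)\!\!\restriction\!\!\mathrm{BH}\quad\text{and}\quad\chi(x_2,y_2)\!\!\restriction\!\!\mathrm{BH}\sqsubseteq\chi(x_3,y_3)\!\!\restriction\!\!\mathrm{BH}.
\]
Since two initial segments of the same finite sequence are linearly ordered by prefix, one must be a prefix of the other. The discriminating hypothesis $\chi(x_1,y_1)\!\!\restriction\!\!\mathrm{bc}\neq\chi(x_2,y_2)\!\!\restriction\!\!\mathrm{bc}$ says the two histories have distinct actual lengths; hence exactly one prefix relation is strict, say (w.l.o.g.) $\chi(x_1,y_1)\!\!\restriction\!\!\mathrm{BH}\sqsubseteq\chi(x_2,y_2)\!\!\restriction\!\!\mathrm{BH}$ with $\mathrm{i}_{\mathrm{cur}}^{x_1,y_1}<\mathrm{i}_{\mathrm{cur}}^{x_2,y_2}$.

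It remains to check that this prefix relation, together with the length difference, actually delivers $(x_1,y_1)\xrightarrow[\mathrm{BC}]{*}(x_2,y_2)$. This is where one must verify that each of the $\mathrm{i}_{\mathrm{cur}}^{x_2,y_2}-\mathrm{i}_{\mathrm{cur}}^{x_1,y_1}$ intermediate one-step evolutions is legal; but the requisite $\triangleright$ transitions between consecutive configurations are exactly the ones already witnessed inside $\chi(x_3,y_3)\!\!\restriction\!\!\mathrm{BH}$, so legality transfers for free. Finally, ``not both'' is immediate: if both $(x_1,y_1)\xrightarrow[\mathrm{BC}]{*}(x_2,y_2)$ and $(x_2,y_2)\xrightarrow[\mathrm{BC}]{*}(x_1,y_1)$ held, then $\mathrm{i}_{\mathrm{cur}}^{x_1,y_1}\leq \mathrm{i}_{\mathrm{cur}}^{x_2,y_2}\leq \mathrm{i}_{\mathrm{cur}}^{x_1,y_1}$ and the two histories would coincide, contradicting $\chi(x_1,y_1)\!\!\restriction\!\!\mathrm{bc}\neq\chi(x_2,y_2)\!\!\restriction\!\!\mathrm{bc}$.

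The argument is essentially bookkeeping on prefixes, and I do not foresee a hard step; the only place to be careful is the routine verification that $\xrightarrow[\mathrm{BC}]{*}$ is really equivalent to a prefix relation plus a length condition, i.e. that one does not need any additional compatibility beyond what is already stored in the ambient history of $(x_3,y_3)$.
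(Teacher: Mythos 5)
Your proposal is correct and, in its essentials, coincides with the paper's own two-sentence argument: both hinge on the observation that the two board histories are prefixes of $\chi(x_3,y_3)\!\restriction\!\mathrm{BH}$ (hence linearly ordered by $\sqsubseteq$), and that the length condition built into $\xrightarrow[\mathrm{BC}]{*}$ makes the relation directed, giving ``one but not both.''

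The one place you flag as delicate is indeed the only spot that needs care, and it is worth being slightly more precise than ``legality transfers for free from $\chi(x_3,y_3)\!\restriction\!\mathrm{BH}$.'' The board history alone records board \emph{configurations}, whereas $\triangleright$ and hence $\xrightarrow[\mathrm{BC}]{1}$ depend on the actual source \emph{vertex} (via $\ccirc(x_i^\flat,y_i)$). The robust way to close the gap is to use the chain $(x_1,y_1)=(a_0,b_0)\xrightarrow[\mathrm{BC}]{1}\cdots\xrightarrow[\mathrm{BC}]{1}(a_p,b_p)=(x_3,y_3)$ that the hypothesis $(x_1,y_1)\xrightarrow[\mathrm{BC}]{*}(x_3,y_3)$ already supplies: truncate it after $p-q$ steps and replace the last vertex by $(x_2,y_2)$. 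Since $\chi(a_{p-q},b_{p-q})\!\restriction\!\mathrm{BH}$ and $\chi(x_2,y_2)\!\restriction\!\mathrm{BH}$ are prefixes of $\chi(x_3,y_3)\!\restriction\!\mathrm{BH}$ of the same length, they coincide, so $(a_{p-q},b_{p-q})[\mathrm{BC}]=(x_2,y_2)[\mathrm{BC}]$; and $\triangleright$ to a vertex depends only on the target's board configuration and the fixed source vertex $(a_{p-q-1},b_{p-q-1})$, so the last $\xrightarrow[\mathrm{BC}]{1}$ step carries over. This supplies the chain witnessing $(x_1,y_1)\xrightarrow[\mathrm{BC}]{*}(x_2,y_2)$ without invoking a general prefix-implies-$\xrightarrow[\mathrm{BC}]{*}$ equivalence, which is a bit stronger than what the definitions hand you directly.
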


It is obvious by definition. Here we give a summary. 

$(x_1,y_1)\xrightarrow[\mathrm{BC}]{*}(x_3,y_3)$ and $(x_2,y_2)\xrightarrow[\mathrm{BC}]{*}(x_3,y_3)$ say that the initial parts of the board histories of $(x_1,y_1)$ and $(x_2,y_2)$ must be the same, thereby one of the history can evlove to the other. Also note that, by definition, evolution has a direction, akin to time. That is, if board history $H_1$ can evolve to history $H_2$, then $H_2$ cannot evolve to $H_1$.   
\end{remark}

\begin{remark}\label{strategy-sketch}
The proof is involved. We sketch the ideas as the following. Assume that Spoiler picks  $(x,y)$ in $\widetilde{\mathfrak{A}}_{k,m}$, and Duplicator replies with $(x^\prime,y)$ in $\widetilde{\mathfrak{B}}_{k,m}$.  We use $(\widetilde{\mathfrak{A}}_{k,m}^*,\widetilde{\mathfrak{B}}_{k,m}^*)$ to denote the \textit{associated game board} of $(\widetilde{\mathfrak{A}}_{k,m},\widetilde{\mathfrak{B}}_{k,m})$. Hence, if Spoler picks $(x,y)$ in $\widetilde{\mathfrak{A}}_{k,m}$, we take it that he also picks $(x^\flat,y)$ in $\widetilde{\mathfrak{A}}_{k,m}^*$ in an \textit{associated game}.  

First of all, we introduce a sort of imaginary games called\textit{ pebble games over changing board}. It is similar to the usual pebble games except that the game board can be different in each round, i.e. the pair of graphs (``flat'' structures) are continuously changing during the game. Such games are the basis for the so called virtual games. 
We use ``\textit{virtual games}'' to denote the kinds of pebble games over changing board  wherein the players play the games in their mind without really putting pebbles on the board. For simplicity, here we assume that no vertex is pebbled before Spoiler picks $(x,y)$.
A virtual game in such a simple setting consists of $\mathrm{i}_{\mathrm{cur}}^{x,y}-1$ ``virtual rounds''. No vertex is pebbled at the beginning of this virtual game. Spoiler ``picks'' in a structure akin to $\widetilde{\mathfrak{A}}_{k,m}^*$ (varying in each round)  according to $\chi(x,y)\!\!\restriction\!\!\mathrm{BH}(j)$, for $j=1$ to $\mathrm{i}_{\mathrm{cur}}^{x,y}-1$, and Duplicator ``replies'' in the other (changing) structure, i.e.  in a structure akin to $\widetilde{\mathfrak{B}}_{k,m}$. The point is that, in a changing board, for any vertex $(u,v)\in\mathbb{X}_1^*$, if $(u,v)$ is not adjacent to $(x^\flat,y)$ simply because it is in $\chi(x^\flat,y)\!\!\restriction\!\! S$, then it is already pebbled. Similarly for any vertex $(u^\prime,v)$ in the other structure w.r.t. the adjacency to $(x^\prime,y)$. 
Duplicator uses virtual games to determine the board history of the vertex $(x^\prime,y)$ she is going to pick.  

We are able to prove the following claim: Duplicator has a strategy such that, if an edge is forbidden in one structure due to discontinuities, so is the corresponding edge in the other structure; moreover, the orders of the board histories of the pebbled vertices can be properly taken care of by Duplicator.

Therefore, via the virtual games, we can reduce the original game to the associated game over changing board, wherein Duplicator uses strategy over abstractions. 
 Note that the $\xi$-th abstraction is the abstraction that Duplicator would care about at the start of the current round. Suppose we have a version of abstraction-order-condition that is similar to 1$^\diamond$\textapprox 6$^\diamond$ introduced in  page  \pageref{special-winning-condtion-set}. 
The point is that, if Duplicator can win this round in the game over the $\xi$-th abstraction, she can also win this round in lower abstraction. Particularly, it means she can win this round in the first abstraction, i.e. over the game board $(\widetilde{\mathfrak{A}}_{k,m}^*,\widetilde{\mathfrak{B}}_{k,m}^*)$.   
At the beginning of the game, $\xi=m$. Suppose Spoiler picks $(x^\flat,y)$ and Duplcator replies $(x^{\prime\flat},y)$ in the current round. While playing the game, in each round Duplicator first finds the candidate positions that are in accordance with the abstraction-order-condition, and that usually form intervals containing vertices of necessary type labels (using an auxiliary game over linear orders);  afterwards, she determines the type label of $(x^{\prime\flat},y)$ and makes the pick. The following strategy will help her decide the type label of $(x^{\prime\flat},y)$. 
It helps  Duplicator keep the game board in partial isomorphism at the end of a round, not only in the original associated game, but also in the associated game over abstractions. The strategy (i.e. Strategy \ref{play-in-xi-abs}\textapprox Strategy \ref{t<xi}) of Duplicator can be sketched briefly and roughly as  follows.  In short, she needs to ensure that the winning-condition-set is preserved throughout the game. 
\begin{enumerate}
\item $\langle$ Strategy 1 $\rangle$ Assume that Spoiler picks $(x^\flat,y)$ in $\mathbb{X}_\xi^*$. Duplicator first \textit{tries to} use the strategy that works for the $\xi$-th abstraction of the structures. That is, she plays the game over the $\xi$-th abstraction and  pick a vertex in $\mathbb{X}_\xi^*$ s.t. the winning-condition-set holds.   

\item $\langle$ Strategy 2 $\rangle$ Assume that Spoiler picks $(x^\flat,y)$ in $\mathbb{X}_\xi^*$. If Strategy \ref{play-in-xi-abs} does not work, i.e. Duplicator cannot pick a vertex in $\mathbb{X}_\xi^*$ satisfying the winning-condition-set, then Duplicator resorts to the strategy that works for the $(\xi-1)$-th abstraction of the structures and pick a vertex in $\mathbb{X}_{\xi-1}^*-\mathbb{X}_{\xi}^*$. Note that, she can always find such a vertex that ensures a win for her in this round. More precisely, she first find a vertex of index $\xi-1$ such that it is adjacent to the projection of all the pebbled vertices in the $(\xi-1)$-th abstraction (cf. Lemma \ref{no-missing-edges_xi-1} \textit{(4)}); afterwards, she adjust her pick such that it satisfies the winning-condition-set.   

This strategy also works if $\mathrm{idx}(x^\flat,y)=\xi-1$. 
Duplicator will pick a vertex of index $\xi-1$.  
In addition, she ensures that $\mathbf{cc}([x^\flat]_{\xi-1},y)=\mathbf{cc}([x^{\prime\flat}]_{\xi-1},y)$,  $g(x^\flat)=g(x^{\prime\flat})$ and $\mathrm{RngNum}(x^\flat,\xi-1)=\mathrm{RngNum}(x^{\prime\flat},\xi-1)$. 
 
\item $\langle$ Strategy 3 $\rangle$ If Spoiler picks $(x^\flat,y)$ in $\mathbb{X}_t^*-\mathbb{X}_{t+1}^*$, where $t<\xi-1$, then Duplicator regards it as if $(\llparenthesis x^\flat\rrparenthesis_\xi, y)$ is picked, and replies with $(x^{\prime\flat},y)$ such that $(\llparenthesis x^{\prime\flat}\rrparenthesis_\xi, y)$ is the vertex she will pick to respond $(\llparenthesis x^\flat\rrparenthesis_\xi, y)$ using her strategy that works in the $\xi$-th abstraction (or responds with $(\llparenthesis x^{\prime\flat}\rrparenthesis_{\xi-1}, y)$ using her strategy that works in the $(\xi-1)$-th abstraction, if she cannot respond properly in the $\xi$-th abstraction). At the same time, Duplicator ensures that in the original game  (\ref{main-diamond-xi}$^\diamond$) holds, i.e. $x^{\prime\flat}-\llparenthesis x^{\prime\flat}\rrparenthesis_\xi$ is \textit{roughly}\footnote{Just note that a unit of difference in higher abstraction means a huge difference in lower abstractions  w.r.t. distance of first coordinates.} the same as $x^\flat-\llparenthesis x^\flat\rrparenthesis_\xi$ (or $x^{\prime\flat}-\llparenthesis x^{\prime\flat}\rrparenthesis_{\xi-1}$ is roughly the same as $x^\flat-\llparenthesis x^\flat\rrparenthesis_{\xi-1}$ if she cannot respond properly in the $\xi$-th abstraction).  Hence Duplicator is an approximate hr-copycat, which can ensure that lower abstractions are in partial isomorphism if so is some higher abstration. 
Moreover, Duplicator resorts to a sort of game reduction from lower abstraction to higher abstraction to prevent Spoiler from finding difference via linear order (the auxiliary game over linear can only tell her the approximate (candidate) positions she should consider; it cannot avoid $(\varkappa)$). It also helps Duplicator decide the type label of $(x^{\prime\flat},y)$.  In sum, Duplicator can ensure that, without exploring the difference in higher abstraction of the structures, Spoiler is \textit{not} able to find difference between the structures by exploring the lower abstractions at some specific stage of the game. 

\end{enumerate}
\end{remark}

\begin{remark}\label{remark-why-pick-in-cl}
It is obvious that the strategy introduced in the proof of Lemma \ref{main-lemma} preservs (\ref{picking-in-cl}) throughout the games. The reason we need (\ref{picking-in-cl}) is that we depend on (\ref{xi-1-simuluation-1}) to ensure that (\ref{condition-for-DuWin}$^\diamond$) holds in Strategy 2. Moreover, in Strategy 3 we also use it to ensure that (\ref{main-diamond-xi}$^\diamond$) (vii) and (\ref{main-diamond-xi}$^\diamond$) (iv) hold simultaneously, cf. the corresponding remark. 
We haven't introduced the standard concept ``types'' yet. But, $\mathbf{cl}(x^\flat,y)=\mathbf{cl}(u^\flat,v)$ means that the vertices $(x^\flat,y)$ and $(u^\flat,v)$ are \textit{roughly} the same, although they are usually different objects in the linear order, and they do not necessary have the same type. But all the critical points with the same second coordinate have the same type. We introduce the concept ``type label'' as an alternative of ``type'', which gives us some flexibility in the constructions and proofs.   
\end{remark}


\begin{remark}\label{remark-strategy2}
In Strategy \ref{play-in-xi-abs} (cf. the proof of Lemma \ref{main-lemma}), we claim that ``If $(x,y)\twoheadrightarrow (u,v)$ for some pebbled vertex $(u,v)$, then Duplicator simply let $(x^\prime,y)$ be $(u^\prime,v)\!\!\restriction\!\!_\mathrm{H}^{\mathrm{i}_{\mathrm{cur}}^{x,y}}$, and 
we are able to show that (\ref{condition-for-DuWin}$^\diamond$) and (\ref{condition-for-DuWin-*}$^\diamond$) hold''. 
We first show that (\ref{condition-for-DuWin}$^\diamond$) holds. 
 In other words, if $(x,y)\xrightarrow[\mathrm{BH}]{con.}(u,v)\land (x^\flat,y)\in (u,v)[\mathrm{BC}]$ for some pebbled vertex $(u,v)$, then $(x^\flat,y)$ is adjacent to a vertex in $\overrightarrow{c_A}$ if and only if  $(x^{\prime\flat},y)$ is adjacent to the corresponding vertex in $\overrightarrow{c_B}$.
Here we give more explanation. 

We can divide the pebbled vertices in a structure into two sets according to whether their associated board histories are in continuity with the board history of $(x,y)$. By  2 (b) of Definition \ref{iterative-expansion}, we know that, a vertex is not adjacent to $(x,y)$ if their associated histories are not in continuity. 
Recall that $(x,y)\Vdash (x^\prime,y)$ and $(u,v)\Vdash (u^\prime,v)$. 
By Claim \ref{board-history-evolutions}, we know that if $(u,v)$ is not adjacent to $(x,y)$ because of this reason, so is $(u^\prime,v)$ to $(x^\prime,y)$.   
Recall that the set of pebbled vertices, whose histories are in continuity with the history of $(x,y)$, is $\widetilde{c_A}$ and  $\widetilde{c_A}\Vdash\widetilde{c_B}$. By assumption, $(u,v)\in \widetilde{c_A}$. 
    Recall that Duplicator has a winning strategy in the virtual games over changing board that determine the board history of a vertex (this strategy is akin to Strategy \ref{play-in-xi-abs}\textapprox Strategy \ref{t<xi}, but the pebbled vertices at the start of a round may be different).  
Duplicator's strategy only depends on the pair of board configurations at the start of the current virtual round.\footnote{\label{footnote-free-will} It means that, in the virtual rounds of a board history wherein Spoiler picks one vertex, e.g. $(x,y)$, in the pair $(x,y)\Vdash(x^\prime,y)$, Duplicator \textit{always} responds with the other vertex, i.e. $(x^\prime,y)$, in the pair, if the game board is in the same state, i.e. the pair of board configurations is the same. } 
By inductive hypothesis, Duplicator wins the virtual round wherein the pair of board configurations at the start are $(u,v)[\mathrm{BC}]$ and $(u^\prime,v)[\mathrm{BC}]$, and the players pick $(u,v)$, $(u^\prime,v)$ in this virtual round. Therefore, $(x,y)$ is adjacent to $(u,v)$ if and only if $(x^\prime,y)$ is adjacent to $(u^\prime,v)$, according to the definition of virtual games, B-3 and the premise that $(x^\flat,y)\in (u,v)[\mathrm{BC}]$. 
Moreover, for \textit{any} $(u^{\star},v^{\star})\in \widetilde{c_A}$ where $(u^{\star},v^{\star})\Vdash (u^{\star\prime},v^{\star})$, 
if $(u^{\star},v^{\star})\!\twoheadrightarrow\!(x,y)$, then by Claim \ref{board-history-evolutions},  $(u^{\star\prime},v^{\star})\!\twoheadrightarrow\!(x^\prime,y)$, and by the transitivity of $\twoheadrightarrow$,  $(u^{\star},v^{\star})\!\twoheadrightarrow\!(u,v)$. So is  $(u^{\star\prime},v^{\star})\!\twoheadrightarrow\!(u^\prime,v)$. 
By definition, $(x,y)\xrightarrow[\mathrm{BH}]{con.}(u,v)$ and $(x,y)[\mathrm{BC}]\sqsubseteq(u,v)[\mathrm{BC}]$. 
Therefore, 
$(u^{\star\flat},v^\star)\in (x,y)[\mathrm{BC}]\sqsubseteq(u,v)[\mathrm{BC}]$. 
Hence, $(u^\star,v^\star)[\mathrm{BC}]\ccirc (u^{\star\flat},v^\star)\sqsubseteq(u,v)[\mathrm{BC}]$.  
In other words, 
both $(x^\flat,y)$ and $(u^{\star\flat},v^{\star})$ are in $(u,v)[\mathrm{BC}]$. Similarly,  $(x^{\prime\flat},y), (u^{\star\prime\flat},v^{\star})\in (u^\prime,v)[\mathrm{BC}]$. It implies that  $(x^\flat,y)$ is adjacent to $(u^{\star\flat},v^\star)$ if and only if $(x^{\prime\flat},y)$ is adjacent to $(u^{\star\prime\flat},v^\star)$, since Duplicator wins the virtual round wherein 
the pair of board configurations is made of $(u,v)[\mathrm{BC}]$ and  $(u^\prime,v)[\mathrm{BC}]$ at the start of the round, and 
the players pick the pair of vertices $(u^\flat,v)$ and $(u^{\prime\flat},v)$.  
For any vertex that is not in $\widetilde{c_A}$, say $(a,b)$ where $(a,b)\Vdash (a^\prime,b)$, we know that $((a,b),(x,y))\notin E^{A}$ and $((a^{\prime},b),(x^\prime,y))\notin E^{B}$, by Claim \ref{board-history-evolutions} and  Definition \ref{iterative-expansion}. 
 All in all, we have shown that the game board is in partial isomorphism after the players pick $(x,y)$ and $(x^\prime,y)$, if $(x,y)\xrightarrow[\mathrm{BH}]{con.}(u,v)\land (x^\flat,y)\in (u,v)[\mathrm{BC}]$ for some pebbled vertex $(u,v)$. 
 
In the above argument, we show that (\ref{condition-for-DuWin}$^\diamond$) holds, based on the assumption that Duplicator has a winning strategy (cf. Strategy \ref{play-in-xi-abs}\textapprox Strategy \ref{t<xi}) in the virtual games. Note that this strategy is also a strategy over abstractions. In other words, (\ref{condition-for-DuWin-*}$^\diamond$) can also be ensured, using similar argument. 
\end{remark}

\begin{remark}\label{partial-isom-propagate}
If the game board is in partial isomorphism over the $\xi$-th abstraction at the start of the current round, then we can show that this also holds over the $(\xi-1)$-th abstraction. 
Here we give a brief explanation, since many ideas have already been explaned in Strategy \ref{play-in-xi-abs}\textapprox Strategy \ref{t<xi}. 
Firstly, for any pebbled vertex $(u,v)$, we have $(\llparenthesis u^\flat\rrparenthesis_{\xi-1},v)\!\!\restriction\!\!\Omega\Vdash (\llparenthesis u^{\prime\flat}\rrparenthesis_{\xi-1},v)\!\!\restriction\!\!\Omega$. 
The readers can cf. Strategy \ref{t<xi} for the arguments needed (i.e. the arguments for $D\Vdash D^\prime$). The main point is that, if $\mathrm{idx}(\llparenthesis u^\flat\rrparenthesis_{\xi-1},v)<\xi$, it implies that Duplicator has used Strategy \ref{t<xi} in the round where $(u,v)$ is picked. Then by (\ref{main-diamond-xi}$^\diamond$) (i) (ii), we have $\mathrm{idx}(\llparenthesis u^\flat\rrparenthesis_{\xi-1},v)=\mathrm{idx}(\llparenthesis u^{\prime\flat}\rrparenthesis_{\xi-1},v)$, say equal to $i$, and  
$\mathbf{cc}([\llparenthesis u^\flat\rrparenthesis_{\xi-1}]_{i},v)=\mathbf{cc}([\llparenthesis u^{\prime\flat}\rrparenthesis_{\xi-1}]_{i},v)$ if $i\neq \xi$. 
Second, for similar reason, for pebbled pairs $(u,v)\!\Vdash\! (u^\prime,v)$ and $(e,f)\!\Vdash\! (e^\prime,f)$, if either $\mathrm{idx}(\llparenthesis u^\flat\rrparenthesis_{\xi-1},v)<\xi$ or $\mathrm{idx}(\llparenthesis e^\flat\rrparenthesis_{\xi-1},f)<\xi$, then we can show that 
\begin{align}\label{eqn-partial-isom-propagate}
\mathbf{cc}([\llparenthesis u^\flat\rrparenthesis_{\xi-1}]_\ell,v) &=\mathbf{cc}([\llparenthesis u^{\prime\flat}\rrparenthesis_{\xi-1}]_\ell,v) \nonumber\\
\mathbf{cc}([\llparenthesis e^\flat\rrparenthesis_{\xi-1}]_\ell,f) &=\mathbf{cc}([\llparenthesis e^{\prime\flat}\rrparenthesis_{\xi-1}]_\ell,f),
\end{align}
 where $\ell=\min\{\mathrm{idx}(\llparenthesis u^\flat\rrparenthesis_{\xi-1},v),\mathrm{idx}(\llparenthesis u^{\prime\flat}\rrparenthesis_{\xi-1},v)\}$.   
Third, , due to (\ref{main-diamond-xi}$^\diamond$) (iv), $(\llparenthesis u^\flat\rrparenthesis_{\xi-1},v)\!\!\restriction\!\! S\cap (x,y)[\mathrm{BC}]=(\llparenthesis u^{\prime\flat}\rrparenthesis_{\xi-1},v)\!\!\restriction\!\! S\cap (x^\prime,y)[\mathrm{BC}]$, if $v\neq y$. 
Fourth, $\mathbf{BIT}(\mathrm{SW}((\llparenthesis u\rrparenthesis_i,v),(\llparenthesis  x^\flat\rrparenthesis_i,y)),\hat{q}(v,y))$ iff $\mathbf{BIT}(\mathrm{SW}((\llparenthesis u^\prime\rrparenthesis_i,v),(\llparenthesis  x^{\prime\flat}\rrparenthesis_i,y)),\hat{q}(v,y))$ if $\mathrm{idx}(\llparenthesis u^\flat\rrparenthesis_{\xi-1},v)<\xi$, due to (\ref{main-diamond-xi}$^\diamond$) (vi). 
Finally, we need explain one more thing: the adjacency determined by $\mathrm{sgn}((\llparenthesis u\rrparenthesis_i,v),(\llparenthesis e^\flat\rrparenthesis_i,f))$ will not cause a problem when $\xi$ is decreased by $1$, because of (\ref{main-diamond-xi}$^\diamond$) (v). Note that, so far we have only considered the case when $\mathrm{idx}(\llparenthesis u^\flat\rrparenthesis_{\xi-1},v)<\xi$ or $\mathrm{idx}(\llparenthesis e^\flat\rrparenthesis_{\xi-1},f)<\xi$. If both $\mathrm{idx}(\llparenthesis u^\flat\rrparenthesis_{\xi-1},v)\geq \xi$ and $\mathrm{idx}(\llparenthesis e^\flat\rrparenthesis_{\xi-1},f)\geq\xi$, then obviously $(\llparenthesis u^\flat\rrparenthesis_{\xi-1},v)$ is adjacent to $\llparenthesis e^\flat\rrparenthesis_{\xi-1},f)$ iff $(\llparenthesis u^{\prime\flat}\rrparenthesis_{\xi-1},v)$ is adjacent to $\llparenthesis e^{\prime\flat}\rrparenthesis_{\xi-1},f)$, because in such case $(\llparenthesis u^\flat\rrparenthesis_{\xi-1},v)=(\llparenthesis u^\flat\rrparenthesis_{\xi},v)$ and $(\llparenthesis u^{\prime\flat}\rrparenthesis_{\xi-1},v)=(\llparenthesis u^{\prime\flat}\rrparenthesis_{\xi},v)$. 
In short, the game bord is still in partial isomorphism w.r.t. the edges when $\xi$ is decreased by $1$. 

Moreover, we can also show that the game bord is still in partial isomorphism w.r.t. the orders when $\xi$ is decreased by $1$. By Lemma \ref{i=0theni-1=0}, we have $u^\flat=\llparenthesis u^\flat\rrparenthesis_i$ for $1\leq i\leq \mathrm{idx}(u^\flat,v)$. Therefore, for pebbled pairs $(u,v)\!\Vdash\! (u^\prime,v)$ and $(e,v)\!\Vdash\! (e^\prime,v)$, if both $\mathrm{idx}(\llparenthesis u^\flat\rrparenthesis_{\xi-1},v)\geq \xi$ and $\mathrm{idx}(\llparenthesis e^\flat\rrparenthesis_{\xi-1},v)\geq\xi$, then $\llparenthesis u^\flat\rrparenthesis_{\xi-1} \leq \llparenthesis e^\flat\rrparenthesis_{\xi-1}$ if $\llparenthesis u^\flat\rrparenthesis_{\xi} \leq \llparenthesis e^\flat\rrparenthesis_{\xi}$. That is, the following holds:
\begin{equation}\label{eqn-remark-partial-isom-propagate}
(\llparenthesis u^\flat\rrparenthesis_{\xi-1},v)\leq (\llparenthesis e^\flat\rrparenthesis_{\xi-1},v)\Leftrightarrow (\llparenthesis u^{\prime\flat}\rrparenthesis_{\xi-1},v)\leq (\llparenthesis e^{\prime\flat}\rrparenthesis_{\xi-1},v)
\end{equation}
Otherwise, if  either  $\mathrm{idx}(\llparenthesis u^\flat\rrparenthesis_{\xi-1},v)\geq \xi$ or $\mathrm{idx}(\llparenthesis e^\flat\rrparenthesis_{\xi-1},v)\geq\xi$ but not both, then clearly either $\llparenthesis u^\flat\rrparenthesis_{\xi-1} < \llparenthesis e^\flat\rrparenthesis_{\xi-1}$ and $\llparenthesis u^{\prime\flat}\rrparenthesis_{\xi-1} < \llparenthesis e^{\prime\flat}\rrparenthesis_{\xi-1}$, or, $\llparenthesis u^\flat\rrparenthesis_{\xi-1} > \llparenthesis e^\flat\rrparenthesis_{\xi-1}$ and $\llparenthesis u^{\prime\flat}\rrparenthesis_{\xi-1} > \llparenthesis e^{\prime\flat}\rrparenthesis_{\xi-1}$. If $\mathrm{idx}(\llparenthesis u^\flat\rrparenthesis_{\xi-1},v)<\xi$, then it is because $\llparenthesis u^\flat\rrparenthesis_{\xi}-\llparenthesis u^\flat\rrparenthesis_{\xi-1}$ \textit{roughly} equals $\llparenthesis u^{\prime\flat}\rrparenthesis_{\xi}-\llparenthesis u^{\prime\flat}\rrparenthesis_{\xi-1}$ (modulo $\beta_{m-\xi}^{m-\xi+1}$).\footnote{It is equivalent if we ignore a difference in distance up to $\mathpzc{U}_{\xi-1}^*$.} Now suppose that both $\mathrm{idx}(\llparenthesis u^\flat\rrparenthesis_{\xi-1},v)< \xi$ and $\mathrm{idx}(\llparenthesis e^\flat\rrparenthesis_{\xi-1},v)<\xi$. In such case  $\mathrm{idx}(\llparenthesis u^{\prime\flat}\rrparenthesis_{\xi-1},v)=\mathrm{idx}(\llparenthesis u^\flat\rrparenthesis_{\xi-1},v)$ and $\mathrm{idx}(\llparenthesis e^{\prime\flat}\rrparenthesis_{\xi-1},v)=\mathrm{idx}(\llparenthesis e^\flat\rrparenthesis_{\xi-1},v)$. 
If $\mathrm{idx}(\llparenthesis u^\flat\rrparenthesis_{\xi-1},v)\neq \mathrm{idx}(\llparenthesis e^\flat\rrparenthesis_{\xi-1},v)$, then $\llparenthesis u^\flat\rrparenthesis_{\xi-1}\neq\llparenthesis e^\flat\rrparenthesis_{\xi-1}$. If $\llparenthesis u^\flat\rrparenthesis_{\xi}=\llparenthesis e^\flat\rrparenthesis_{\xi}$, then, similar to the last case, \eqref{eqn-remark-partial-isom-propagate} clearly holds. If $\llparenthesis u^\flat\rrparenthesis_{\xi}\neq\llparenthesis e^\flat\rrparenthesis_{\xi}$, then \eqref{eqn-remark-partial-isom-propagate} holds because a ``unit'' of difference in higher abstraction is huge in lower abstraction, i.e. a vertex in higher abstraction corresponds to a very big interval in lower abstractions.\footnote{More precisely, $\beta_{m-\xi}^{m-1}$ is greater than both $|\llparenthesis u^\flat\rrparenthesis_{\xi}-\llparenthesis u^\flat\rrparenthesis_{\xi-1}|$ and $|\llparenthesis u^{\prime\flat}\rrparenthesis_{\xi}-\llparenthesis u^{\prime\flat}\rrparenthesis_{\xi-1}|$.} 
Suppose that $\mathrm{idx}(\llparenthesis u^\flat\rrparenthesis_{\xi-1},v)=\mathrm{idx}(\llparenthesis e^\flat\rrparenthesis_{\xi-1},v)$. Similar to the last case, \eqref{eqn-remark-partial-isom-propagate} clearly holds if $\llbracket\llparenthesis u^\flat\rrparenthesis_{\xi-1} \rrbracket_{\xi-1}\neq \llbracket\llparenthesis e^\flat\rrparenthesis_{\xi-1} \rrbracket_{\xi-1}$. Assume that $\llbracket\llparenthesis u^\flat\rrparenthesis_{\xi-1} \rrbracket_{\xi-1}=\llbracket\llparenthesis e^\flat\rrparenthesis_{\xi-1} \rrbracket_{\xi-1}$,  
\eqref{eqn-remark-partial-isom-propagate} is easy to prove because of  \eqref{eqn2-1-round-game-reduction}.

In summary, the game board is in partial isomorphism over the $(\xi-1)$-th abstraction if it is in partial isomorphism over the $\xi$-th abstraction, at the start of the current round.   
\end{remark}

\begin{remark}\label{variable-hierarchy-collapse-arithm-struc}
It is well-known that the bounded variable hierarchy collapses to $\fo^3$ on coloured linear orders  \cite{Poizat1982ColorOrder}. 
Similarly, we can  prove  that it also collapses to $\fo^3$ on pure arithmetic structures, using similar pebble game type argument (cf., e.g., \cite{DawarHowmany}, p.9\textapprox p.10, or \cite{Immerman1999Book} p.105\textapprox p.107).

For clarity and proofreading, we put this proof here. Note that the proof of Lemma 3 introduced in \cite{DawarHowmany} implicitly relies on transitivity of linear orders. But it is not true for $\mathbf{BIT}$. Therefore, we need adapt the lemma as well as the proof a little bit to ensure that the partial isomorphisms over pairs of small pieces of structures can be merged consistently into one partial isomorphism over a pair of bigger piece. That is, we need to show that Duplicator's strategies in $3$-pebble games can be merged to ensure one partial isomorphism that extends all the partial isomorphisms in the $3$-pebble games. Recall that we assume the structures in discourse to be $\langle \leq, \mathbf{BIT}\rangle$-structures. That is, here we only consider pure arithmetic structures. 

\begin{lemma}\label{lemma-collapse-in-BIT}
Let $s=(a_1,\ldots,a_\ell)$ and $t=(b_1,\ldots,b_\ell)$ be $\ell$-tuples where  $a_i\in |\mathfrak{A}|$, $b_i\in |\mathfrak{B}|$ and $a_i\leq a_{i+1}$, $b_i\leq b_{i+1}$ for any $i$. If $(\mathfrak{A},a_i,a_j)\equiv_m^3 (\mathfrak{B},b_i,b_j)$ for any $1\leq i,j\leq \ell$, then $(\mathfrak{A},s)\equiv_m (\mathfrak{B},t)$.    
\end{lemma}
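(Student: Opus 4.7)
The plan is to prove this by induction on $m$, exploiting the fact that the signature $\langle \leq, \mathbf{BIT}\rangle$ consists only of at most binary relations. For the base case $m=0$, since atomic formulas involve at most two variables, a map on the whole tuple is a partial isomorphism iff its restriction to every pair is a partial isomorphism. Hence the hypothesis $(\mathfrak{A}, a_i, a_j) \equiv_0^3 (\mathfrak{B}, b_i, b_j)$ for all $i, j$ immediately upgrades to $(\mathfrak{A}, s) \equiv_0 (\mathfrak{B}, t)$.

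For the inductive step, suppose Spoiler picks $c \in |\mathfrak{A}|$ (the symmetric case is analogous). First I would locate $c$: take the maximal $i$ with $a_i \leq c$, so $a_i \leq c \leq a_{i+1}$ (with the natural conventions when $c < a_1$ or $c > a_\ell$, using the pair $(a_1,a_2)$ or $(a_{\ell-1},a_\ell)$). Then I would invoke Duplicator's winning strategy in the game witnessing $(\mathfrak{A}, a_i, a_{i+1}) \equiv_m^3 (\mathfrak{B}, b_i, b_{i+1})$, letting Spoiler play $c$, to obtain a response $d$ satisfying $(\mathfrak{A}, a_i, a_{i+1}, c) \equiv_{m-1}^3 (\mathfrak{B}, b_i, b_{i+1}, d)$. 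Preservation of $\leq$ forces $b_i \leq d \leq b_{i+1}$, so inserting $c$ and $d$ at position $i+1$ yields sorted $(\ell+1)$-tuples $s^+$ and $t^+$; Duplicator's answer in the EF game is precisely this $d$.

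To close the induction, I will apply the induction hypothesis at depth $m-1$ to $s^+, t^+$. This requires that every pair in the new tuples is $\equiv_{m-1}^3$-equivalent. Pairs avoiding $(c, d)$ come directly from the given hypothesis together with the trivial implication $\equiv_m^3 \Rightarrow \equiv_{m-1}^3$; the pairs $(a_i, c)$ and $(a_{i+1}, c)$ follow by projecting a pebble out of the triple equivalence obtained above. The technical heart of the argument is to verify $(\mathfrak{A}, a_j, c) \equiv_{m-1}^3 (\mathfrak{B}, b_j, d)$ for every $j \notin \{i, i+1\}$, using the \emph{same} $d$ chosen above, rather than a $j$-dependent response.

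For this merging step I would have Duplicator play the $(m{-}1)$-round 3-pebble game on $(\mathfrak{A}, a_j, c)$ versus $(\mathfrak{B}, b_j, d)$ while running two auxiliary games in parallel: one witnessing $(\mathfrak{A}, a_j, a_i) \equiv_m^3 (\mathfrak{B}, b_j, b_i)$ and one witnessing $(\mathfrak{A}, a_i, c) \equiv_{m-1}^3 (\mathfrak{B}, b_i, d)$, using the third pebble as a pivot at $(a_i, b_i)$. Each Spoiler move in the main game is echoed in both auxiliary games, and the two candidate responses are reconciled via the partial-isomorphism obligations at the pivot; since every atomic fact is binary, the obligations between Spoiler's new vertex and each of $a_j, c$ can be read off from the respective auxiliary game as long as the pivot pebble is present. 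The hard part, which I expect to be the main obstacle, is handling the rounds in which Spoiler moves the pivot itself: unlike $\leq$, the $\mathbf{BIT}$ relation is not transitive, so the pivot cannot be reinstalled for free, and the reconciliation must be done via a nested sub-induction on the remaining rounds that tracks which auxiliary game currently furnishes each required invariant. This is precisely where the full hypothesis — that \emph{every} pair of tuple positions, not merely the consecutive ones, is $\equiv_m^3$-equivalent — is decisively used, because it supplies on-demand replacement pivots of the form $(a_{j'}, a_i)$ or $(a_i, a_{k'})$ whenever Spoiler attacks the current pivot.
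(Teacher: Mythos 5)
Your pivot-merging step has a genuine gap, and you correctly suspect it: the obstacle you name (Spoiler attacking the pivot, $\mathbf{BIT}$ being non-transitive) is fatal, not a technicality. In fact the merge fails already at round~$0$. To even begin the claimed game witnessing $(\mathfrak{A}, a_j, c) \equiv_{m-1}^3 (\mathfrak{B}, b_j, d)$ you must have the partial isomorphism $\mathbf{BIT}(a_j,c) \Leftrightarrow \mathbf{BIT}(b_j,d)$, and neither auxiliary game supplies this: the game on $(a_j,a_i)$ vs.\ $(b_j,b_i)$ never mentions $c$ or $d$, and the game on $(a_i,c)$ vs.\ $(b_i,d)$ never mentions $a_j$ or $b_j$. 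Because $\mathbf{BIT}$ has no compositional structure through an intermediate element (unlike $\leq$), no amount of pivot bookkeeping can conjure the needed atomic fact from the two auxiliary games after $d$ has already been fixed. Your scheme chooses $d$ from the single game on the bracketing pair and then tries to \emph{verify} compatibility with the other $a_j$'s; that verification is impossible with the information you have.

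The paper's proof avoids this by making the full hypothesis bear on the \emph{choice} of $b^\star$ (your $d$), not on a post hoc verification. The observation, which is the crux, is that $\mathbf{BIT}$ is a fixed arithmetic background relation, not a variable interpretation; consequently, in the game on $(a_i,a_{i+1})$ Duplicator does not have a single winning response to $a^\star$ but a whole \emph{family} of them, parameterized precisely by the $\mathbf{BIT}$-values between $b^\star$ and the non-pebbled $b_j$'s (the paper illustrates this with $b_{i-1}$: either bit value leads to a valid response). The same freedom exists symmetrically in the game on every other pair $(a_j,a_{j'})$. The proof then argues that the intersection of these families is nonempty, and any $b^\star$ taken from the intersection simultaneously matches the $\mathbf{BIT}$-pattern of $a^\star$ against every $a_j$ and remains a winning continuation in every one of the $3$-pebble games; the induction hypothesis at depth $m-1$ then closes the argument. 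So the full pairwise hypothesis is used to \emph{constrain the selection} of $b^\star$ up front, which is what your proposal lacks; with that change the structure of your induction (including the base case via binary atoms) is fine.
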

\begin{proof}
The proof is by induction on $m$. We only focus on $\mathbf{BIT}$. For the argument that takes care of linear orders, the readers can cf. e.g. \cite{DawarHowmany} or \cite{Immerman1999Book}. 
 
\textit{Basis:}

Let $f(a_i)=b_i$. If $m=0$, then it is easy to verify that the map $f$ defines a partial isomorphism from $\mathfrak{A}$ to $\mathfrak{B}$. 

We can take it that $(a_i,b_i)$ be the pair of elements that are pebbled in the same round.   

\textit{Induction step:} 

Assume that the claim holds for $m=d$ and that $(\mathfrak{A},a_i,a_j)\equiv_{d+1}^3 (\mathfrak{B},b_i,b_j)$ for any $1\leq i,j\leq \ell$. We need to show that  $(\mathfrak{A},s)\equiv_{d+1} (\mathfrak{B},t)$. In the $(d+1)$-th round, if Spoiler picks a pebbled element $u$, then Duplicator simply picks the other pebbled element in the pair containing $u$ and by induction hypothesis she wins this round. Hence we assume that Duplicator picks a new element in this round. Suppose w.l.o.g. that, in the first round of the Ehrenfeucht-Fra\"iss\' e\xspace game  $\Game_{d+1}((\mathfrak{A},s),(\mathfrak{B},t))$, Spoiler picks $a^\star$ in $\mathfrak{A}$. Moreover, assume that $a_i\leq a^\star\leq a_{i+1}$ for some $i$ (The cases when $a_{\ell}\leq a^\star$ and $a^\star\leq a_1$ are similar).
Duplicator can resort to the strategy that works over $\Game_{d+1}^3((\mathfrak{A},a_i,a_{i+1}),(\mathfrak{B},b_i,b_{i+1}))$. 
The point is that three variables are necessary and sufficient to simulate the Ehrenfeucht-Fra\"iss\' e\xspace game in the $3$-pebble game over a piece of the structures s.t. its strategy can be extended to give an isomorphism on a bigger piece, where $a_i$ and $a_{i+1}$ can be regarded as either constants or pebbled vertices (the pebbles are from the three pairs of pebbles in the pebble game). 
Observe that she really has \textit{a family of} strategies work well in this round, varying on $\mathbf{BIT}(a^\star,a_j)$ (if $a_j\leq a^\star$) or $\mathbf{BIT}(a_j,a^\star)$ (if $a^\star\leq a_j$) for any $a_j$ different from $a_i$ and $a_{i+1}$, only if one of them works well. For example, suppose that $\mathbf{BIT}(a^\star,a_i)$ is true and $\mathbf{BIT}(a_{i+1},a^\star)$ is false. By induction hypothesis, Duplicator can find $b^\star$ such that $\mathbf{BIT}(b^\star,b_i)$ is true and $\mathbf{BIT}(b_{i+1},b^\star)$ is false. Then she has a family of strategies that are in accordance with this condition, but different in other aspects, e.g. she can choose the one that sets $\mathbf{BIT}(b^\star,b_{i-1})$ to true or she can choose the one that sets $\mathbf{BIT}(b^\star,b_{i-1})$ to false. Note that either way leads to  a valid strategy, which is crucial for the following arguments.\footnote{It explains why this proof does not work if the signature of the structures contains one binary relation that is not fixed as $\mathbf{BIT}$ does. Usually we also call $\mathbf{BIT}$  a background relation.} Similarly, for any $j^\prime,j$, we can find a family of strategies work for Duplicator over the pebble game       
$\Game_{d+1}^3((\mathfrak{A},a_j,a_{j^\prime}),(\mathfrak{B},b_j,b_{j^\prime}))$. The point is that the intersection of these family of strategies is not empty only if one of them works. Therefore any strategy in the intersection works over any $3$-pebble games mentioned. Hence, in the first round Duplicator need only choose a strategy in the intersection to respond Spoiler, and this strategy ensures that $(\mathfrak{A},a^\star,a_p)\equiv_d^3 (\mathfrak{B},b^\star,b_p)$ for any $p$.
Let $s^\prime$ be the ordered list that extends $s$ by inserting the element $a^\star$ in appropriate position and similarly for $t^\prime$ by inserting $b^\star$.  
 Then by induction hypothesis, we have $(\mathfrak{A},s^\prime)\equiv_{d} (\mathfrak{B},t^\prime)$. Therefore, $(\mathfrak{A},s)\equiv_{d+1} (\mathfrak{B},t)$.   
\end{proof}    

\begin{figure}[]
\hspace*{-6mm}
\includegraphics[trim = -30mm 0mm 0mm 0mm, scale=0.35]{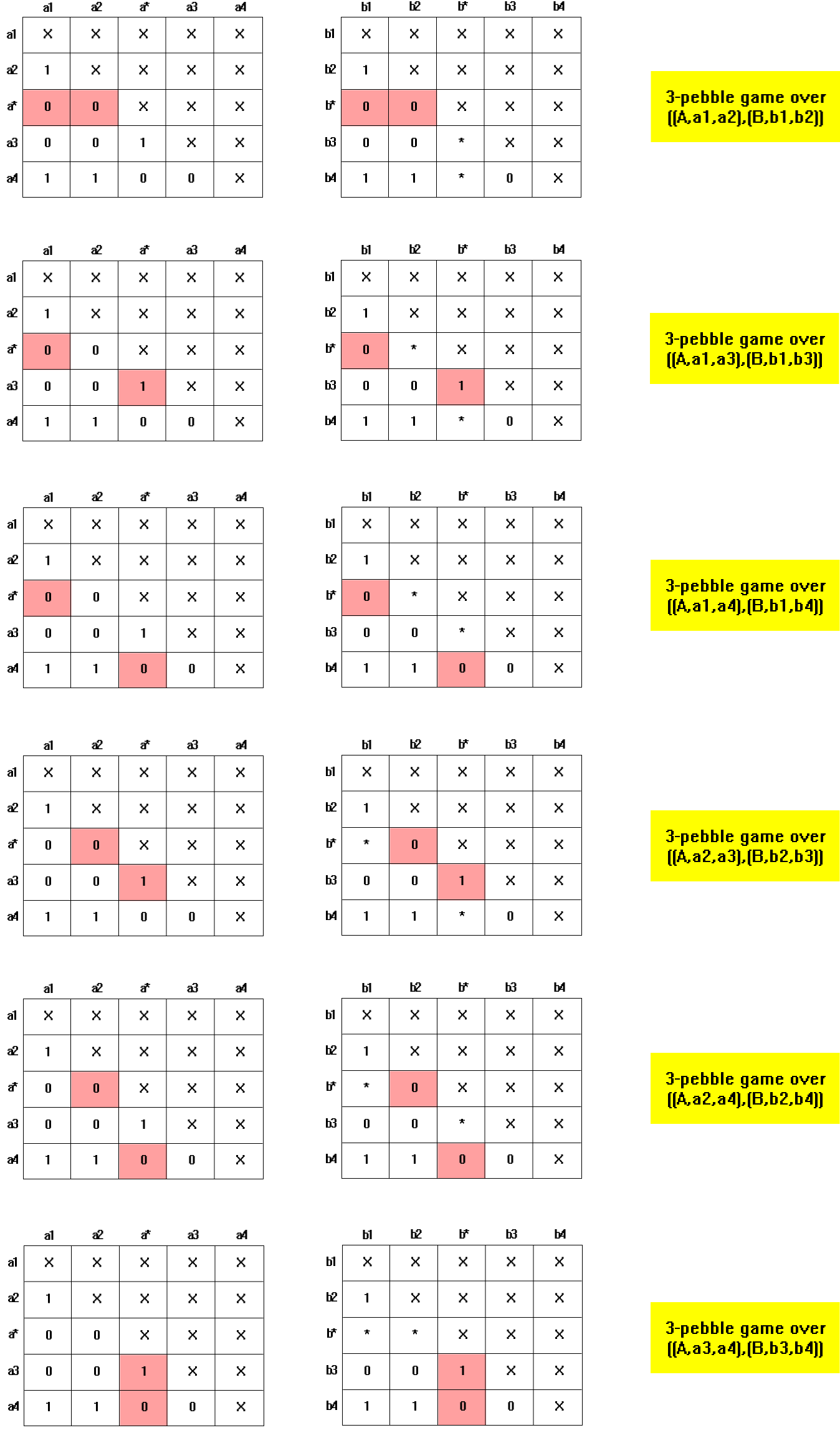}
\caption{This example illustrates that the intersection of the families of strategies is not empty, and Duplcator can compute the intersection to obtain a strategy that works for the game $\Game_m((\mathfrak{A},s),(\mathfrak{B},t))$.}
\label{pic-collapse-in-BIT}
\end{figure}

A diagram may help us to understand the computation of intersection of families of strategies. Here we give a small example to illustrate it. See Fig. \ref{pic-collapse-in-BIT}. Here $s=(a_1,a_2,a_3,a_4)$ and $t=(b_1,b_2,b_3,b_4)$. 
In the $3$-pebble games, We assume w.l.o.g. that Spoiler picks $a^\star$ and Duplicator responds with $b^\star$. The elements are listed according to the linear order. Hence $a^\star$ is the third element in the order. For each entry $(a_i,a_j)$ that has a value $0$ or $1$,  we can see that $a_j\leq a_i$. And this entry tells us whether $\mathbf{BIT}(a_i,a_j)$ is $0$ or $1$. Note that   
``\textasteriskcentered'' in the entries stands for either $0$ or $1$, , i.e. both values are allowed. Pink cells stand for the partial isomorphisms that should be fixed in the corresponding game (indicated on the right side). 

Once Lemma \ref{lemma-collapse-in-BIT} is proved, by the Theorem 4 in \cite{DawarHowmany}, we know that the following holds. 
\begin{corollary}\label{collapse-in-BIT-0}
The bounded variable hierarchy collapses to $\fo^3$ on pure arithmetic structures.
\end{corollary}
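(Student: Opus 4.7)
The plan is to derive the corollary directly from Lemma \ref{lemma-collapse-in-BIT} by a standard Hintikka-type argument, exactly as in the coloured linear order case (Poizat \cite{Poizat1982ColorOrder}). Write $\equiv_m^3$ for the $3$-pebble equivalence of rank $m$ on structures with two distinguished constants, and $\equiv_m$ for ordinary rank-$m$ Ehrenfeucht-Fra\"iss\'e equivalence. The first step is to observe that on $\langle\leq,\mathbf{BIT}\rangle$-structures with two constants there are, for each $m$, only finitely many $\equiv_m^3$-classes, and each class is defined by a single $\fo^3$-formula $\psi_\tau(x,y)$ of quantifier rank at most $m$; this is the standard induction on $m$ using the fact that $\fo^3$-formulas with two free variables, up to logical equivalence, form a finite set at each quantifier rank (the $3$-variable Hintikka hierarchy).

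Next I would reformulate Lemma \ref{lemma-collapse-in-BIT} as saying the following: the rank-$m$ theory of a pure arithmetic structure $\mathfrak{A}$ is determined by the function $\sigma_{\mathfrak{A}}$ that sends each pair of $\equiv_m^3$-types $(\tau,\tau')$ to the truth value of ``there exist $a\leq b$ in $\mathfrak{A}$ such that $(a,a)$ realises $\tau$ and $(a,b)$ realises $\tau'$'' (together with the symmetric clause). Indeed, given $\mathfrak{A}$ and $\mathfrak{B}$ with $\sigma_{\mathfrak{A}}=\sigma_{\mathfrak{B}}$, one can use Duplicator's back-and-forth recipe in the ordinary game: whenever Spoiler picks a new element, insert it into the sorted list of already-played elements, locate the interval $[a_i,a_{i+1}]$ it lies in, and pick a matching element on the other side. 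Lemma \ref{lemma-collapse-in-BIT} applied to the resulting pair of sorted $\ell$-tuples then supplies the partial isomorphism needed to continue; the only information Duplicator needs to make a consistent choice is the $\equiv_m^3$-type of each pair, which is available globally from $\sigma_{\mathfrak{A}}=\sigma_{\mathfrak{B}}$.

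Once this is in place, the collapse follows. Any FO sentence $\phi$ of quantifier rank $m$ is invariant under $\equiv_m$, so $\phi$ is equivalent on pure arithmetic structures to a Boolean combination of statements of the form $\exists x\,\exists y\,(x\leq y\wedge \psi_{\tau,\tau'}(x,y))$, where $\psi_{\tau,\tau'}(x,y)$ is the $\fo^3$-formula asserting that $(x,x)$ realises $\tau$ and $(x,y)$ realises $\tau'$. Since each $\psi_{\tau,\tau'}$ is in $\fo^3$ and the Boolean combination is finite (finiteness of the number of $\equiv_m^3$-types), the resulting sentence lies in $\fo^3$.

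The main obstacle I expect is the bookkeeping in the second step: one has to verify carefully that Duplicator's interval-insertion strategy really produces tuples satisfying the full ``all-pairs'' hypothesis of Lemma \ref{lemma-collapse-in-BIT}, not merely the adjacent-pairs version, and this relies essentially on $\mathbf{BIT}$ being a fixed background relation rather than a free binary symbol (cf.\ the footnote in the proof of Lemma \ref{lemma-collapse-in-BIT} about why the argument does not generalise). In particular, when inserting a new element between $a_i$ and $a_{i+1}$, one must choose its image so that its $\equiv_m^3$-relationship with \emph{every} previously pebbled element is preserved; the freedom to do so is precisely the family-of-strategies observation already exploited in the proof of the lemma, and it is this freedom that makes the intersection across all such constraints nonempty.
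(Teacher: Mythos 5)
Your plan is essentially the paper's own route: the paper's proof of Corollary \ref{collapse-in-BIT-0} is just an invocation of Theorem 4 of \cite{DawarHowmany}, and what you have written out is precisely the Hintikka-type / realized-types argument that theorem packages, transposed from coloured linear orders to the $\langle\leq,\mathbf{BIT}\rangle$ setting via Lemma \ref{lemma-collapse-in-BIT}. Both step one (finitely many $\equiv_m^3$-pair-types, each captured by a single $\fo^3$-formula of rank $m$) and step three (a rank-$m$ FO sentence is a Boolean combination of $\fo^3$-expressible ``there exist $a\leq b$ of type $\tau$'' sentences, at the cost of a constant increase in quantifier rank) are correct and standard.

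The one thing worth tightening is your step two. The interval-insertion back-and-forth you describe, and the ``main obstacle'' you flag about maintaining the all-pairs hypothesis rather than the adjacent-pairs version, is work that is already discharged \emph{inside} the proof of Lemma \ref{lemma-collapse-in-BIT}; you do not need to re-run it on top of the lemma. Once Spoiler has played $a_1$, use $\sigma_{\mathfrak{A}}=\sigma_{\mathfrak{B}}$ (with the types taken at rank $m-1$, to account for the round just spent) to produce $b_1$ with $(\mathfrak{A},a_1,a_1)\equiv_{m-1}^3(\mathfrak{B},b_1,b_1)$. A single application of Lemma \ref{lemma-collapse-in-BIT} at $\ell=1$ then yields $(\mathfrak{A},a_1)\equiv_{m-1}(\mathfrak{B},b_1)$, which \emph{is} Duplicator's winning strategy for the remaining $m-1$ rounds — the interval insertion, the downward rank-decay, and the family-of-strategies intersection argument are all internal to the lemma's induction and are what licenses that conclusion. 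Separating out the first round and then handing the rest to the lemma keeps the $\sigma$-function bookkeeping to the single place it is actually used, namely matching Spoiler's first pick, and avoids the impression that $\sigma$-equality by itself supplies enough local information to guide every round of the game.
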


With a careful analysis of the proof of Lemma \ref{lemma-collapse-in-BIT}, we get a variant of Corollary \ref{collapse-in-BIT-0} as follows. 
\begin{corollary}\label{collapse-in-BIT-1}
For any $k\geq 5$, on pure arithmetic structures, any sentence $Q_1 x_1 Q_2 x_2\cdots Q_k x_k \varphi(x_1,x_2,\cdots,x_k)$ is equivalent to a sentence in $\fo^k$, where $Q_i\in \{\exists,\forall\}$ and $\varphi$ is any first-order formula. 
\end{corollary}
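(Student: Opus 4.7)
The plan is to derive Corollary~\ref{collapse-in-BIT-1} from Lemma~\ref{lemma-collapse-in-BIT} by passing from $m$-equivalence types to defining formulas. Lemma~\ref{lemma-collapse-in-BIT} says that, on pure arithmetic structures, the first-order $m$-equivalence class of an ordered $\ell$-tuple $(a_1,\ldots,a_\ell)$ is already pinned down by the collection of $3$-pebble $m$-equivalence classes of its pairs $(a_i,a_j)$. Translating this from types to formulas (each $3$-pebble $m$-type being definable by an $\fo^3$-formula of quantifier rank at most $m$), it follows that every formula $\varphi(x_1,\ldots,x_k)$ of quantifier rank $m$ is logically equivalent to a Boolean combination
\[
\Phi(x_1,\ldots,x_k)\;=\;\bigvee_{s}\bigwedge_{i<j}\chi^{s}_{ij}(x_i,x_j),
\]
where every $\chi^{s}_{ij}(x_i,x_j)$ is an $\fo^3$-formula with precisely $x_i$ and $x_j$ free.

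The second step is to re-express each $\chi^{s}_{ij}$ so that all of its variables come from $\{x_1,\ldots,x_k\}$. Since $\chi^{s}_{ij}\in\fo^3$ uses at most one additional (bound) variable $y$ besides $x_i,x_j$, we $\alpha$-rename $y$ to $x_\ell$ for some $\ell\in\{1,\ldots,k\}\setminus\{i,j\}$; the hypothesis $k\ge 5$ guarantees that at least three candidate indices $\ell$ are available for every pair $(i,j)$, which supplies the slack needed to carry out the renaming consistently and uniformly (including across the nested occurrences that come up in the companion use in Proposition~\ref{circuits2variables}). Because every occurrence of $y$ inside $\chi^{s}_{ij}$ lies under a quantifier, the renaming merely shadows the outer binding $Q_\ell x_\ell$ for the duration of the evaluation of $\chi^{s}_{ij}$, so the rewritten $\chi'{}^{s}_{ij}$ is logically equivalent to $\chi^{s}_{ij}$ (both depend only on $x_i$ and $x_j$), and the outer value of $x_\ell$ is restored upon exit from the subformula.

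Putting the pieces together, $\Phi$ is equivalent to the matrix $\Phi'$ obtained by replacing each $\chi^{s}_{ij}$ with $\chi'{}^{s}_{ij}$; all variables of $\Phi'$ lie in $\{x_1,\ldots,x_k\}$, and therefore $Q_1 x_1\cdots Q_k x_k\,\Phi'(x_1,\ldots,x_k)$ is an $\fo^k$-sentence equivalent to $Q_1 x_1\cdots Q_k x_k\,\varphi(x_1,\ldots,x_k)$. The main obstacle is the second step: one has to verify that the simultaneous $\alpha$-renaming is benign even though the variables being reused as the ``third'' pebble are themselves bound by the outer quantifier prefix, i.e., that shadowing never corrupts the semantics of the Boolean combination as a whole. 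This is a careful but essentially syntactic variable-hygiene check, and it is precisely the place where $k\ge 5$ is used: it guarantees the existence of three spare indices for every pair, which is what makes the renaming scheme go through uniformly without ever having to reuse a variable that is simultaneously free in a sibling conjunct of $\Phi$.
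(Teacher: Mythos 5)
Your derivation — read off a DNF over pairwise $\fo^3$ $m$-type formulas from Lemma~\ref{lemma-collapse-in-BIT}, then $\alpha$-rename the one spare bound variable of each pair-formula to an index in $\{1,\ldots,k\}\setminus\{i,j\}$ and reinstall the original quantifier prefix — is the natural way to flesh out the paper's terse remark that Corollary~\ref{collapse-in-BIT-1} follows from a "careful analysis" of Lemma~\ref{lemma-collapse-in-BIT}; the paper gives no explicit proof, so there is no official argument to compare against, but the type-to-formula translation and the shadowing analysis are both sound. (Two small imprecisions: the equivalence of $\varphi$ and $\Phi$ holds over pure arithmetic structures only, not as logical equivalence over all structures — you say this once and then lapse into "logically equivalent" twice; and you should also silently disjoin over the order types of $(x_1,\ldots,x_k)$ before invoking the lemma, which is stated for sorted tuples.)

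Where the write-up goes astray is in its account of where $k\geq 5$ is used. You claim three spare indices per pair are needed so that the renaming "goes through uniformly without ever having to reuse a variable that is simultaneously free in a sibling conjunct of $\Phi$." That constraint does not exist. The $\alpha$-renaming of $y$ to $x_\ell$ inside $\chi^{s}_{ij}$ is local to that subformula, $x_\ell$ occurs only bound in $\chi'{}^{s}_{ij}$, and quantifiers inside $\chi'{}^{s}_{ij}$ bind only occurrences within $\chi'{}^{s}_{ij}$; the free occurrences of $x_\ell$ in sibling conjuncts are untouched, so there is no cross-sibling capture to avoid — you in fact say exactly this two sentences earlier when arguing $\chi'{}^{s}_{ij}\equiv\chi^{s}_{ij}$. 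One spare index per pair therefore suffices, and such an index exists as soon as $k\geq 3$. The hypothesis $k\geq 5$ is not consumed by your renaming scheme; it is simply a standing assumption inherited from Proposition~\ref{circuits2variables} where the surrounding argument needs headroom. You should either drop the assertion that the renaming step "is precisely the place where $k\geq 5$ is used," or note explicitly that the bound your argument actually needs is $k\geq 3$.
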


\end{remark}


\textbf{Proof of Lemma \ref{projection}}.
\begin{proof}

Since $(x^{\prime},y)\in\mathbb{X}_i^*$, by definition, we have
\begin{equation*}
\begin{split}
x^{\prime} &=\llparenthesis x^\prime\rrparenthesis_i\\  
           &=[x^{\prime}]_i\beta_{m-i}^{m-1}+\displaystyle\frac{1}{2}\sum_{1<p\leq 
             i}\beta_{m-p}^{m-1}
\end{split}
\end{equation*}

By definition, we also have 
\begin{equation*}
\llparenthesis x\rrparenthesis_i= [x]_i\beta_{m-i}^{m-1}+\displaystyle\frac{1}{2}\sum_{1<p\leq i}\beta_{m-p}^{m-1}
\end{equation*}

Therefore, $x^\prime=\llparenthesis x\rrparenthesis_i$, insomuch as $[x^\prime]_i=[x]_i$.
\end{proof}

\textbf{Proof of Lemma \ref{i=0theni-1=0}}.
\begin{proof}
Since $(x,y)\in\mathbb{X}_i^*$, by definition $x=\llparenthesis x\rrparenthesis_i$.
It is trivial when $i=1, 2$.
For any $2<i\leq m$, we show that $x=\llparenthesis x\rrparenthesis_{i-1}$  
if $x=\llparenthesis x\rrparenthesis_i$. 

By definition, and $\beta_{m-i}^{m-1}/\beta_{m-i+1}^{m-1}=\gamma_{m-i+1}/\gamma_{m-i}>i-2$, we have   
\begin{equation}\label{app-i=0theni-1=0-eqn0} 
\displaystyle\sum_{1<j\leq i-1}\beta_{m-j}^{m-1}\!<\!(i-2) \beta_{m-i+1}^{m-1}\!<\!\beta_{m-i}^{m-1}. 
\end{equation}

By definition, 
\begin{equation}\label{app-i=0theni-1=0-eqn3}
\frac{1}{2}\beta_{m-i}^{m-1} \mbox{ is divisible by } \beta_{m-i+1}^{m-1}. 
\end{equation}

In the following we show that 
\begin{equation}\label{app-i=0theni-1=0-eqn4}
[x]_i\beta_{m-i}^{m-1}+\frac{1}{2}\beta_{m-i}^{m-1}=[x]_{i-1}\beta_{m-i+1}^{m-1}.
\end{equation} 
First, suppose for a contradiction that $[x]_i\beta_{m-i}^{m-1}+\frac{1}{2}\beta_{m-i}^{m-1}>[x]_{i-1}\beta_{m-i+1}^{m-1}$. 

Let $\psi_1:=[x]_i\beta_{m-i}^{m-1}+\frac{1}{2}\beta_{m-i}^{m-1}$.

By $x=\llparenthesis x\rrparenthesis_i$ and $i>2$, we have  
\begin{equation*}
x>\psi_1.
\end{equation*}

Let $\psi_2:=([x]_{i-1}+1)\beta_{m-i+1}^{m-1}$.

Then by the assumption and (\ref{app-i=0theni-1=0-eqn3}), we know that 
\begin{equation*}
x>\psi_1\geq \psi_2. 
\end{equation*}

Note that 
\begin{equation*}
\left\lfloor\frac{x}{\beta_{m-i+1}^{m-1}}\right\rfloor \beta_{m-i+1}^{m-1}\leq x<\left(\!\left\lfloor\frac{x}{\beta_{m-i+1}^{m-1}}\right\rfloor+1\!\!\right) \beta_{m-i+1}^{m-1}=\psi_2.
\end{equation*}

Therefore, 
\begin{equation*}
x>\psi_2>x. 
\end{equation*}
A contradiction occurs. 

Second, suppose that $[x]_i\beta_{m-i}^{m-1}+\frac{1}{2}\beta_{m-i}^{m-1}<[x]_{i-1}\beta_{m-i+1}^{m-1}$. Therefore, 
\begin{eqnarray*}
x&\geq& [x]_{i-1}\beta_{m-i+1}^{m-1}\\
&\geq&  [x]_i\beta_{m-i}^{m-1}+\frac{1}{2}\beta_{m-i}^{m-1}+\beta_{m-i+1}^{m-1} \hspace{67pt} [\mbox{by (\ref{app-i=0theni-1=0-eqn3})}]\\
&\geq& x+\frac{1}{2}\beta_{m-i+1}^{m-1}-\frac{1}{2}\sum_{1<j\leq i-2}\beta_{m-j}^{m-1}.  \hspace{35pt} [\because x=\llparenthesis x\rrparenthesis_i]
\end{eqnarray*}

By (\ref{app-i=0theni-1=0-eqn0}), we have 
\begin{equation*}
\displaystyle\sum_{1<j\leq i-2}\beta_{m-j}^{m-1}<\beta_{m-i+1}^{m-1}. 
\end{equation*}

As a consequence, we have $x>x$. 
A contradiction occurs again. Therefore, (\ref{app-i=0theni-1=0-eqn4}) holds.

Therefore, 
\begin{eqnarray*}
\displaystyle\llparenthesis x\rrparenthesis_{i-1}&=&[x]_{i-1}\beta_{m-i+1}^{m-1}+\frac{1}{2}\sum_{1<j<i-1} \beta_{m-j}^{m-1} \quad [\mbox{by definition}]\\ &=&
[x]_i\beta_{m-i}^{m-1}+\frac{1}{2}\beta_{m-i}^{m-1}+\frac{1}{2}\sum_{1<j<i-1} \beta_{m-j}^{m-1} \quad [\mbox{by (\ref{app-i=0theni-1=0-eqn4})}]\\ &=&
[x]_i\beta_{m-i}^{m-1}+\frac{1}{2}\sum_{1<j\leq i} \beta_{m-j}^{m-1}\\&=&\llparenthesis x\rrparenthesis_i. \hspace{140pt} [\mbox{by definition}]
\end{eqnarray*} 
As a consequence, the claim holds. 
\end{proof}

\textbf{Proof of Lemma \ref{proj-abs}}.
\begin{proof}
By definition, $[\llparenthesis x\rrparenthesis_i]_j=\left\lfloor \frac{[x]_i\beta_{m-i}^{m-1}+\frac{1}{2}\sum_{1<\ell\leq i}\beta_{m-\ell}^{m-1}}{\beta_{m-j}^{m-1}}\right\rfloor$. 

Recall that $\beta_{m-j}^{m-1}/\beta_{m-i}^{m-1}=\beta_{m-j}^{m-i}\in\mathbf{N}^+$. By  \eqref{app-i=0theni-1=0-eqn0}, we have 
\begin{equation}\label{eqn-lattice-point-high-is-lower}
\frac{1}{2}\sum_{1<\ell\leq i}\beta_{m-\ell}^{m-1}<\beta_{m-i}^{m-1}. 
\end{equation}  
Therefore, we have 
\begin{equation}\label{eqn-abs-proj-proof}
[\llparenthesis x\rrparenthesis_i]_j=\left\lfloor \frac{[x]_i\beta_{m-i}^{m-1}}{\beta_{m-j}^{m-1}}\right\rfloor.
\end{equation} 

If $x$ is divisible by $\beta_{m-i}^{m-1}$, then clearly $[\llparenthesis x\rrparenthesis_i]_j=[x]_j$. Assume that $x=c\cdot\beta_{m-i}^{m-1}+\Delta$ where $0<\Delta<\beta_{m-i}^{m-1}$. Hence, $[\llparenthesis x\rrparenthesis_i]_j=\left\lfloor \frac{c\cdot\beta_{m-i}^{m-1}}{\beta_{m-j}^{m-1}}\right\rfloor$. 
By the similar reason for \eqref{eqn-abs-proj-proof}, we have 
$[x]_j=\left\lfloor \frac{c\cdot\beta_{m-i}^{m-1}}{\beta_{m-j}^{m-1}}\right\rfloor$. Therefore, $[\llparenthesis x\rrparenthesis_i]_j=[x]_j$.  

Now we prove \textit{(2)}.  
Just note that, by definition, $\llparenthesis\llparenthesis x\rrparenthesis_i\rrparenthesis_j=[\llparenthesis x\rrparenthesis_i]_j\beta_{m-j}^{m-1}+\frac{1}{2}\sum_{1<\ell\leq j}\beta_{m-\ell}^{m-1}$. 
By \textit{(1)}, we have $\llparenthesis\llparenthesis x\rrparenthesis_i\rrparenthesis_j=[x]_j\beta_{m-j}^{m-1}+\frac{1}{2}\sum_{1<\ell\leq j}\beta_{m-\ell}^{m-1}=\llparenthesis x\rrparenthesis_j$.   
\end{proof}

\textbf{Proof of Lemma \ref{abstraction-strategy-premier}}. 

\begin{proof}
It comes from the intuition that one unit of difference in higher abstraction is huge in lower abstractions. Note that 
\begin{equation}\label{eqn-lattice-distance-0}
\beta_{m-\xi+1}^{m-1}>|a-\llparenthesis a\rrparenthesis_{\xi-1}| \mbox{ and } \beta_{m-\xi+1}^{m-1}>|a^\prime-\llparenthesis a^\prime\rrparenthesis_{\xi-1}|.
\end{equation} 
We shall see that, $a-\llparenthesis a\rrparenthesis_{\xi}\neq a^\prime-\llparenthesis a^\prime\rrparenthesis_{\xi}$ if $\llparenthesis a\rrparenthesis_{\xi}-\llparenthesis a\rrparenthesis_{\xi-1}\neq\llparenthesis a^\prime\rrparenthesis_{\xi}-\llparenthesis a^\prime\rrparenthesis_{\xi-1}$, hence a contradiction occurs.  

For example, assume that $\llparenthesis a\rrparenthesis_{\xi}-\llparenthesis a\rrparenthesis_{\xi-1}>\llparenthesis a^\prime\rrparenthesis_{\xi}-\llparenthesis a^\prime\rrparenthesis_{\xi-1}$ and $a-\llparenthesis a\rrparenthesis_{\xi}<0$. 
The other cases are similar. 
Firstly, $a-\llparenthesis a\rrparenthesis_{\xi}=(a-\llparenthesis a\rrparenthesis_{\xi-1})-(\llparenthesis a\rrparenthesis_{\xi}-\llparenthesis a\rrparenthesis_{\xi-1})$. Note that $a^\prime-\llparenthesis a^\prime\rrparenthesis_{\xi}<0$, for $a-\llparenthesis a\rrparenthesis_{\xi}=a^\prime-\llparenthesis a^\prime\rrparenthesis_{\xi}$. 
By Lemma \ref{proj-greater-index}, both $\llparenthesis a\rrparenthesis_{\xi}$ and $\llparenthesis a\rrparenthesis_{\xi-1}$ are vertices of index greater than or equal to $\xi-1$, which means that both of them are in $\mathbb{X}_{\xi-1}^*$. By Fact \ref{unit-distance}, we have  
\begin{equation}\label{eqn-lattice-distance}
|\llparenthesis a\rrparenthesis_{\xi}-\llparenthesis a\rrparenthesis_{\xi-1})|\geq \beta_{m-\xi+1}^{m-1}>|a-\llparenthesis a\rrparenthesis_{\xi-1}|.
\end{equation} 
Therefore, $\llparenthesis a\rrparenthesis_{\xi}-\llparenthesis a\rrparenthesis_{\xi-1}>0$ 
since $(a-\llparenthesis a\rrparenthesis_{\xi-1})-(\llparenthesis a\rrparenthesis_{\xi}-\llparenthesis a\rrparenthesis_{\xi-1})<0$, and so is $\llparenthesis a^\prime\rrparenthesis_{\xi}-\llparenthesis a^\prime\rrparenthesis_{\xi-1}$. 
In this case observe that either $a\leq \llparenthesis a\rrparenthesis_{\xi-1}<\llparenthesis a\rrparenthesis_\xi$ and  $a^\prime\leq \llparenthesis a^\prime\rrparenthesis_{\xi-1}<\llparenthesis a^\prime\rrparenthesis_\xi$ or
 $\llparenthesis a\rrparenthesis_{\xi-1}\leq a<\llparenthesis a\rrparenthesis_\xi$ and  $\llparenthesis a^\prime\rrparenthesis_{\xi-1}\leq a^\prime<\llparenthesis a^\prime\rrparenthesis_\xi$. Suppose that the former holds. The other case is similar.  
Then, by \eqref{eqn-lattice-distance-0}, $\beta_{m-\xi+1}^{m-1}>||a-\llparenthesis a\rrparenthesis_{\xi-1}|-|a^\prime-\llparenthesis a^\prime\rrparenthesis_{\xi-1}||=|(a-\llparenthesis a\rrparenthesis_{\xi-1})-(a^\prime-\llparenthesis a^\prime\rrparenthesis_{\xi-1})|$. 
Similar to \eqref{eqn-lattice-distance}, by Fact \ref{unit-distance}, $(\llparenthesis a\rrparenthesis_{\xi}-\llparenthesis a\rrparenthesis_{\xi-1})-(\llparenthesis a^\prime\rrparenthesis_{\xi}-\llparenthesis a^\prime\rrparenthesis_{\xi-1})\geq \beta_{m-\xi+1}^{m-1}$, since $\llparenthesis a\rrparenthesis_{\xi}-\llparenthesis a\rrparenthesis_{\xi-1}>\llparenthesis a^\prime\rrparenthesis_{\xi}-\llparenthesis a^\prime\rrparenthesis_{\xi-1}$. Therefore, $(a-\llparenthesis a\rrparenthesis_{\xi})-(a^\prime-\llparenthesis a^\prime\rrparenthesis_{\xi})=((a-\llparenthesis a\rrparenthesis_{\xi-1})-(a^\prime-\llparenthesis a^\prime\rrparenthesis_{\xi-1}))-((\llparenthesis a\rrparenthesis_{\xi}-\llparenthesis a\rrparenthesis_{\xi-1})-(\llparenthesis a^\prime\rrparenthesis_{\xi}-\llparenthesis a^\prime\rrparenthesis_{\xi-1}))<0$. We arrive at a contradiction. 
This shows that (1) holds, which immediately implies that (2) holds. 
\end{proof}

\textbf{Proof of Lemma \ref{cm=depth}}.
\begin{proof}
Since $(x,y)\in\mathbb{X}_i^*$, by definition, $x=\llparenthesis x\rrparenthesis_i=[x]_i\beta_{m-i}^{m-1}+\displaystyle\frac{1}{2}\sum_{1<p\leq i} \beta_{m-p}^{m-1}$. Then for any $1\leq j<i$, 
\begin{eqnarray*}
[x]_{j}&=&\displaystyle\left\lfloor \frac{x}{\beta_{m-j}^{m-1}}\right\rfloor\\ &=&\displaystyle\left\lfloor \frac{[x]_i\beta_{m-i}^{m-1}+\displaystyle\frac{1}{2}\sum_{1<p\leq i} \beta_{m-p}^{m-1}}{\beta_{m-j}^{m-1}}\right\rfloor
\\&=&\displaystyle\left\lfloor \frac{[x]_i\beta_{m-j}^{m-1}\beta_{m-i}^{m-j}+\displaystyle\frac{1}{2}\sum_{1<p\leq i} \beta_{m-j}^{m-1}\beta_{m-p}^{m-j}}{\beta_{m-j}^{m-1}}\right\rfloor\\
&=& [x]_i\beta_{m-i}^{m-j}+\displaystyle\frac{1}{2}\sum_{j<p\leq i}\beta_{m-p}^{m-j}
\end{eqnarray*}

By definition,  both $\beta_{m-i}^{m-j}$ and $\frac{1}{2}\beta_{m-p}^{m-j}$  are divisible by $k-1$ 
for any $j<p\leq i$. Therefore, $[x]_{j}$ is divisible by $k-1$, and $\mathbf{cc}([x]_j,y)=y$ mod $k-1$.
\end{proof}

\textbf{Proof of Lemma \ref{conquer-boundary-strategy}}. 

\begin{proof}

We first show that, for any $i$ where $1\leq i\leq q$, 
\begin{equation}\label{eqn-boundary-strategy} 
[\llparenthesis x\rrparenthesis_p]_i\equiv [\llparenthesis x^\prime\rrparenthesis_p]_i \hspace{3pt}(\mbox{mod }k-1).
\end{equation} 

If $q<min\{\mathrm{idx}(\llparenthesis x\rrparenthesis_p,y),\mathrm{idx}(\llparenthesis x^\prime\rrparenthesis_p,y)\}$, then \ref{eqn-boundary-strategy} holds due to Lemma \ref{cm=depth}. Henceforth we assume that $q=min\{\mathrm{idx}(\llparenthesis x\rrparenthesis_p,y),\mathrm{idx}(\llparenthesis x^\prime\rrparenthesis_p,y)\}$. 

Let $\mathrm{idx}(\llparenthesis x\rrparenthesis_p,y)=\ell$ and  
$\mathrm{idx}(\llparenthesis x^\prime\rrparenthesis_p,y)=\ell^\prime$. 
W.l.o.g we assume that $\ell^\prime\leq \ell$. 
By Lemma \ref{proj-greater-index}, $p\leq \ell^\prime\leq\ell$. 
If $[\llparenthesis x\rrparenthesis_p]_{\ell^\prime}$ mod $k-1\neq 0$, it implies that  $\ell=\ell^\prime$, because, if $\ell^\prime<\ell$ then, by Lemma \ref{cm=depth}, $[\llparenthesis x\rrparenthesis_p]_{\ell^\prime}$ mod $k-1=0$ and $[\llparenthesis x\rrparenthesis_p]_i\equiv [\llparenthesis x^\prime\rrparenthesis_p]_i \equiv 0$ (mod $k-1$) for $1\leq i<\ell^\prime$. Hence \eqref{eqn-boundary-strategy} holds. Suppose that  $[\llparenthesis x\rrparenthesis_p]_{\ell^\prime}$ mod $k-1=0$. By Lemma \ref{cm=depth}, $[\llparenthesis x\rrparenthesis_p]_i\equiv [\llparenthesis x^\prime\rrparenthesis_p]_i\equiv 0$ (mod $k-1$), for $1\leq i\leq \ell^\prime$. 

It is easy to observe that \eqref{eqn-boundary-strategy} implies \eqref{eqn-boundary-strategy-1}, provided that (1) holds throughout the game. 
Briefly speaking, it relies on an observation that the neighbourhoods of vertices of the same index are isomorphic. 
By definition, $\mathrm{idx}(\llparenthesis x\rrparenthesis_p,y)\geq q$. 
If $\mathrm{idx}(\llparenthesis x\rrparenthesis_p,y)>q$, then by Lemma \ref{cm=depth} and \textit{(2)},  $[\llparenthesis x\rrparenthesis_p]_i\equiv [\llparenthesis x^\prime\rrparenthesis_p]_i\equiv 0 \hspace{3pt}(\mbox{mod }k-1)$ for $1\leq i\leq q$. Now suppose that $\mathrm{idx}(\llparenthesis x\rrparenthesis_p,y)=q$.  
By Lemma \ref{abstraction-strategy-premier}, $\llparenthesis x\rrparenthesis_j-\llparenthesis x\rrparenthesis_p=\llparenthesis x^\prime\rrparenthesis_j-\llparenthesis x^\prime\rrparenthesis_p$. 
Hence, 
$$\frac{\llparenthesis x\rrparenthesis_j-\llparenthesis x\rrparenthesis_p}{\beta_{m-i}^{m-1}}=\frac{\llparenthesis x^\prime\rrparenthesis_j-\llparenthesis x^\prime\rrparenthesis_p}{\beta_{m-i}^{m-1}}.
$$
Therefore, \eqref{eqn-boundary-strategy-1} holds because of \eqref{eqn-boundary-strategy}. 
\end{proof}

\textbf{Proof of Lemma \ref{HighOrder-is-LowOrder}}.
\begin{proof}
Let $a:=[x_1]_i$ and $b:=[x_2]_i$. By the assumption $[x_1]_i<[x_2]_i$, hence $a+1\leq b$. 
Note that, by the definition of the floor functions, $ x_1<(a+1)\beta_{m-i}^{m-1}\leq b\beta_{m-i}^{m-1}\leq x_2$. 

Therefore, 
\begin{equation*}
\begin{split}
\frac{x_1}{\beta_{m-i+1}^{m-1}}&<\frac{(a+1)\beta_{m-i}^{m-1}}{\beta_{m-i+1}^{m-1}}\\
&=(a+1)\beta_{m-i}^{m-i+1}\\
&\leq b\beta_{m-i}^{m-i+1}\\
\end{split}
\end{equation*}

Note that, $b\in\mathbf{N}^+$ and $\beta_{m-i}^{m-i+1}\in\mathbf{N}^+$. 

Therefore, 
\begin{equation*}
\begin{split}
[x_1]_{i-1}
&\leq\frac{x_1}{\beta_{m-i+1}^{m-1}}\\
&< \left\lfloor\frac{b\beta_{m-i}^{m-1}}{\beta_{m-i}^{m-1}}\beta_{m-i}^{m-i+1}\right\rfloor\\
&\leq \left\lfloor\frac{x_2}{\beta_{m-i}^{m-1}}\beta_{m-i}^{m-i+1}\right\rfloor\\
\end{split}
\end{equation*}

That is, 
\begin{equation*}
\begin{split}
[x_1]_{i-1}&< \left\lfloor\frac{x_2}{\beta_{m-i+1}^{m-1}} \right\rfloor\\
&=[x_2]_{i-1}.
\end{split}
\end{equation*}
\end{proof}

\textbf{Proof of Lemma \ref{universal-simulator}}.
\begin{proof}
By the modular arithmetic, we immediately have the following observation: for any $\mathfrak{a}\in [k-1]$, $(e^{\prime},f)$ can be such a vertex that $[e^{\prime}]_r+f\equiv \mathfrak{a}$ mod $k-1$. By the definition of $\mathbb{X}_i^*$ and Definition \ref{vertex-index}, there is at most one vertex in $(\llbracket e\rrbracket_{r},f)$ whose index is greater than $r$. And all the other vertices with index $r$ encodes all the vertices in $\mathbb{X}_{r+1}^*$ via ``$\restriction\!\! S$''.  
It explains why there must be such a vertex $(e^{\prime},f)$ that $\mathbf{cl}(e^{\prime},f)=\underline{(\underline{\mathfrak{a}},f);r;\ell;w}$, by Definition \ref{type-label} and Definition \ref{iterative-expansion}.   
Indeed, the vaule of $[e^{\prime}]_r$ mod $\eta_r$ determines the values of $\mathfrak{a}$, $r$ and $w$. By definition, $\ell=\left\lfloor \frac{[e^\prime]_{r}\mbox{ mod }\mathpzc{U}_{r}^*}{\frac{1}{3}\mathpzc{U}_{r}^*} \right\rfloor-1$. We can find a vertex $(e^{\prime\prime},f)$ where $|e^{\prime\prime}-e^\prime|\equiv 0$ (mod $\frac{1}{3}\mathpzc{U}_{r}^*$). Then it is clear that $\mathbf{cl}(e^{\prime\prime},f)$ is similar to $\mathbf{cl}(e^{\prime},f)$ except that they may have different value for $\mathrm{RngNum}(\cdot,\cdot)$. In other words, we can choose $(e^{\prime},f)$ properly such that $\ell$ can be any element in $\{-1,0,1\}$. 
Also note that there are many vertices satisfy the requirements other than $(e^{\prime},f)$.
That is, Lemma \ref{universal-simulator} can be ensured.
\end{proof}



\textbf{Poof of Lemma \ref{flexibility-in-same-abstraction-1}.}
\begin{proof}
We use a binary string $s\in \{0,1\}^{\binom{k-2}{2}}$ to encode $g(x)$ mod $2^{\binom{k-2}{2}}$. We use $(s)_{10}$ to denote the value encoded by $s$. On the other hand, recall that for a natural number $n$, we use $(n)_{2;\binom{k-2}{2}}$ to denote the binary representation of $n$, a $0\textnormal{-}1$ string of length $\binom{k-2}{2}$.   
We use $s\!\downarrow\![i,j]$ to denote the string adjusted from $s$ by turning every bit to $0$ except for the $i$-th bit and the $j$-th bit, as well as the bits between them, which are unchanged. 

Because $v_i\neq v_j$, $\hat{q}(y,v_i)\neq \hat{q}(y,v_j)$. We can give an order $\lessdot$ to the element $(u_i,v_i)$ of $P$ based on $\hat{q}(y,v_i)$ such that $(u_i,v_i)\lessdot (u_j,v_j)$ if and only if $\hat{q}(y,v_i)<\hat{q}(y,v_j)$. Assume that $(u_1^{\lessdot},v_1^{\lessdot}),\ldots (u_l^{\lessdot},v_l^{\lessdot})$ are these elements of $P$ that are in the given order, i.e. $(u_i^{\lessdot},v_i^{\lessdot})\lessdot(u_j^{\lessdot},v_j^{\lessdot})$ if and only if  $i<j$. Note that this order is usually different from the linear orders of the structures. Let $f_\sigma(u_i^{\lessdot},v_i^{\lessdot})=j$ if $u_i^{\lessdot}=u_j$ and $v_i^{\lessdot}=v_j$. Let $P_j:=\{(u_i^{\lessdot},v_i^{\lessdot})\mid 1\leq i\leq j\}$. 
The main idea is that we can adjust the value of $x$ gradually to satisfy the lemma where $P=P_i$ for $i=1$ to $l$, step by step.   

For any $1\leq p\leq \binom{k-2}{2}$, let $trip(p)$ be a $0$-$1$ string of length $\binom{k-2}{2}$ such that all the elements in the string is $0$ except for the $p$-th element, called a ``trip point'', which is $1$.  Recall that the rightmost element of the string is the $0$-th element, i.e. the lowest order bit. 

 At the beginning,  we choose an $x$ such that $g(x)=g(u_1^{\lessdot})$.\footnote{There are many choices. Such freedom is necessary for us to apply the lemma. For the purpose of proving this lemma, we can simply let $x$ be the minimal one that makes $g(x)=g(u_1^{\lessdot})$ hold. In the following, we will talk about adjusting $g(x)$. Such adjusting certainly involves changing the value of $x$.}
 Then we adjust $g(x)$ such that   $g(x):=g(x)\!\!\downarrow\!\![0,\hat{q}(y,v_1^{\lessdot})]$. 
Afterwards, we adjust $g(x)$ if and only if  $$\mathbf{BIT}(\mathrm{SW}((x,y),(u_1^{\lessdot},v_1^{\lessdot})),\hat{q}(y,v_1^{\lessdot}))\neq w_{f_{\sigma}(u_1^{\lessdot},v_1^{\lessdot})}.$$ Assume that it is necessary to adjust $g(x)$, i.e. adjust $x$ when $w_{f_{\sigma}(u_1^{\lessdot},v_1^{\lessdot})}=1$.  
We adjust $x$ such that $g(x)$ is decreased by $(trip(\hat{q}(y,v_1^{\lessdot})))_{10}$ if $g(x)\!\geq\! (trip(\hat{q}(y,v_1^{\lessdot})))_{10}$; 
or $g(x)$ is increased by the same amount, otherwise.
 Now it is straightforward to verify that
$$\mathbf{BIT}(\mathrm{SW}((x,y),(u_1^{\lessdot},v_1^{\lessdot})),\hat{q}(y,v_1^{\lessdot}))=
w_{f_{\sigma}(u_1^{\lessdot},v_1^{\lessdot})}.$$ 
Note that $g(x)<(trip(\hat{q}(y,v_1^{\lessdot})+1))_{10}$. 

Assume that for some $c\!\in\! [1,l-1]$ and for any $ 1\!\leq\! i\!\leq\! c$, 
\begin{equation}\label{SW-proof-eqn1}
\mathbf{BIT}(\mathrm{SW}((x,y),(u_i^{\lessdot},v_i^{\lessdot})),\hat{q}(y,v_i^{\lessdot}))=w_{f_{\sigma}(u_i^{\lessdot},v_i^{\lessdot})}. 
\end{equation}

Let $\mathrm{bit}_{c+1}^x:=\mathbf{BIT}(\mathrm{SW}((x,y),(u_{c+1}^{\lessdot},v_{c+1}^{\lessdot})),\hat{q}(y,v_{c+1}^{\lessdot}))$.

We adjust $x$  not only to preserve (\ref{SW-proof-eqn1})  but also to ensure that
\begin{equation}\label{SW-proof-eqn2}
\noindent \mathrm{bit}_{c+1}^x=w_{f_{\sigma}(u_{c+1}^{\lessdot},v_{c+1}^{\lessdot})}.
\end{equation} 
In other words, we adjust the vaule of $x$ to make one more vertex in $P$ satisfy (\ref{SW-proof-eqn1}), if necessary. 

If  
(\ref{SW-proof-eqn2}) holds,  
then $x$ is unchanged. Otherwise, we fine-tune $x$ for several rounds to satisfy (\ref{SW-proof-eqn2}), meanwhile still preserve (\ref{SW-proof-eqn1}).

First, assume that
$g(u_{c+1}^{\lessdot})\geq g(x)+(trip(\hat{q}(y,v_{c+1}^{\lessdot})))_{10}$. 

Let $\delta_{c+1}:=g(u_{c+1}^{\lessdot})-g(x)$,
and let $$\delta_{c+1}^{\pm}:=\delta_{c+1}+\left(\mathrm{bit}_{c+1}^x-w_{f_{\sigma}(u_{c+1}^{\lessdot},v_{c+1}^{\lessdot})}\right)\times(trip(\hat{q}(y,v_{c+1}^{\lessdot})))_{10}.
$$
Obviously, if $g(x)$ is increased by $\delta_{c+1}^{\pm}$, then (\ref{SW-proof-eqn2}) is ensured. 
Let $\Delta_{c+1}^{x\uparrow}\!:=\!\left((\delta_{c+1}^{\pm})_{2;\binom{k-2}{2}}\!\downarrow\!
[0,\hat{q}(y,v_{c+1}^{\lessdot})]\right)_{10}$. Let $g(x):=g(x)+\Delta_{c+1}^{x\uparrow}$. 
Note that (\ref{SW-proof-eqn2}) still holds because adding $(\delta_{c+1}^{\pm})_{2;\binom{k-2}{2}}\!\downarrow\![\hat{q}(y,v_{c+1}^{\lessdot})+1,\binom{k-2}{2}-1]$ to the sum doesn't influence the $0$-th bit to the $\hat{q}(y,v_{c+1}^{\lessdot})$-th bit of the sum.  

In the following we adjust $g(x)$ to satisfy (\ref{SW-proof-eqn1}) step by step. For $j=1$ to $c$, in the $j$-th round, 
if (\ref{SW-proof-eqn1}) still holds when $i=j$, then go to the next round. Otherwise, we add or minus  $(trip(\hat{q}(y,v_j^{\lessdot})))_{10}$ to $g(x)$ to enforce that (\ref{SW-proof-eqn1}) holds in case of $i=j$, depending on whether it will propagate to influence (\ref{SW-proof-eqn2}) or not. Note that one of the choices will not propagate to influence (\ref{SW-proof-eqn2}). Also note that, if $1<j$, this will not influence (\ref{SW-proof-eqn1}) for $1\leq i<j$. Therefore, (\ref{SW-proof-eqn1}) holds for $1\leq i\leq j$ at the end of the $j$-th round. 

In short, both (\ref{SW-proof-eqn1}) and (\ref{SW-proof-eqn2}) can be ensured when $g(u_{c+1}^{\lessdot})\geq g(x)+(trip(\hat{q}(y,v_{c+1}^{\lessdot})))_{10}$. This inductive argument shows that (\ref{SW-arbitrary}) holds in this case.  

Second, assume that
$$g(u_{c+1}^{\lessdot})< g(x)+(trip(\hat{q}(y,v_{c+1}^{\lessdot})))_{10}.$$ 

If (\ref{SW-proof-eqn2}) is already satisfied, then $x$ is unchanged. 
Otherwise, 
let $g(x)\!:=\!g(u_{c+1}^{\lessdot})+(trip(\hat{q}(y,v_{c+1}^{\lessdot})))_{10}$ if $g(u_{c+1}^{\lessdot})<\binom{k-2}{2}-(trip(\hat{q}(y,v_{c+1}^{\lessdot})))_{10}$ and let $g(x):=g(u_{c+1}^{\lessdot})-(trip(\hat{q}(y,v_{c+1}^{\lessdot})))_{10}$ otherwise. 
Then, similar to the last case, we adjust $g(x)$ gradually to satisfy  (\ref{SW-proof-eqn1}) step by step, 
minusing or adding $(trip(\hat{q}(y,v_{i}^{\lessdot})))_{10}$, depending on whether  (\ref{SW-proof-eqn2}) is still satisfied and $g(x)$ is still in the range $[0,\binom{k-2}{2}-1]$.

In summary, we can gradually adjust the value of $x$ to satisfy (\ref{SW-arbitrary}).
\end{proof}

\textbf{Poof of Lemma \ref{no-missing-edges_xi-1}.}
\begin{proof}
In the following arguments we choose such vertices $(x^\sharp,y)$, $(x^{\prime\prime},y)$ and  $(x,y)$  that  
\begin{equation}
\chi(x^\sharp,y)\!\!\restriction\!\! S=\chi(x^{\prime\prime},y)\!\!\restriction\!\! S=\chi(x,y)\!\!\restriction\!\! S=\emptyset.
\end{equation} 

 The claim \textit{(1)} is  obvious according to  Definition \ref{iterative-expansion}, since, by Lemma \ref{universal-simulator}, $(x^\sharp,y)$ can be such a vertex that 
\begin{itemize}
\item $\mathrm{idx}(x^\sharp,y)=t-1$;
\item $\mathbf{cc}([x^\sharp]_{t-1},y)=c$. By Lemma \ref{cm=depth}, $c$ is different from $\mathbf{cc}([x_i]_{t-1},y_i)=y_i$ mod $k-1$ for any $i$. Moreover, $[x^\sharp]_{t-1}$ mod $k-1\neq 0$ 
because $\mathbf{cc}([x^\sharp]_{t-1},y)=c\neq y$ mod $k-1$, which means that $\chi(x^\sharp,y)\!\!\restriction\!\! \Omega\cap H=\emptyset$. Note that $\chi(x^\sharp,y)\!\!\restriction S\cap H=\emptyset$ because $\chi(x^\sharp,y)\!\!\restriction\!\! S=\emptyset$.  And $(x^\sharp,y)\notin \chi(x_i,y_i)\!\!\restriction\!\! S$ since $\mathrm{idx}(x^\sharp,y)=t-1<\mathrm{idx}(x_i,y_i)$ for any $(x_i,y_i)\in H$.  

\end{itemize}

The claim \textit{(2)} is also obvious according to Definition \ref{iterative-expansion}:  we can choose  $(x^{\prime\prime},y)$ to be such a vertex that $\mathrm{idx}(x^{\prime\prime},y)=t-1$ and $[x^{\prime\prime}]_{t-1}\equiv 0$ (mod $k-1$). The latter means that  
$\mathbf{cc}([x^{\prime\prime}]_{t-1},y)=y$ mod $k-1$, which implies that $\mathbf{cc}([x^{\prime\prime}]_{t-1},y)\neq \mathbf{cc}([x_i]_{t-1},y_i)=y_i$ mod $k-1$ by Lemma \ref{cm=depth}. 
Because $(x^\prime,y)$ is adjacent to $(x_i,y_i)$ for any $(x_i,y_i)\in H$, $(x_i,y_i)\notin \chi(x^{\prime\prime},y)\!\!\restriction\!\! \Omega$. 
Moreover, we can choose $(x^{\prime\prime},y)$ to be such a vertex that $\chi(x^{\prime\prime},y)\!\!\restriction\!\! S=\emptyset$. Hence $(x^{\prime\prime},y)\notin \chi(x_i,y_i)\!\!\restriction\!\! S$ and $(x_i,y_i)\notin \chi(x^{\prime\prime},y)\!\!\restriction\!\! S$ for any $(x_i,y_i)\in H$. 

In the above arguments, we haven't considered $\mathrm{RngNum}$ yet. 
 Note that  $\mathrm{RngNum}(x_i,t-1)=-1$ for any $i$, by Lemma \ref{rngnum-is_-1}. Therefore, we can let $\mathrm{RngNum}(x^\sharp,t-1)=\mathrm{RngNum}(x^{\prime\prime},t-1)=-1$. 

To prove the claim \textit{(3)}, we first apply \textit{(1)} to find $(x^\prime,y)\in\mathbb{X}_{t-1}^*-\mathbb{X}_{t-2}^*$ such that $(x^\prime,y)$ is adjacent to every vertex in $H$; then we apply \textit{(2)} to find $(x,y)\in\mathbb{X}_{t-2}^*-\mathbb{X}_{t-3}^*$ such that $[x]_{t-2}\equiv 0$ (mod $k-1$), $g(x)=0$ and $\mathrm{RngNum}(x,t-2)=-1$. 

To prove the claim \textit{(4)}, we choose $(x,y)\in \mathbb{X}_{t-1}^*-\mathbb{X}_t^*$ to be such a vertex 
that $\mathbf{cc}([x]_{t-1},y)\\=y$ mod $k-1$, i.e. $[x]_{t-1}\equiv 0$ (mod $k-1$). By Lemma \ref{cm=depth}, 
\begin{equation}\label{app-no-missing-edges_xi-1-eqn1}
\mathbf{cc}([x]_{t-1},y)\neq \mathbf{cc}([x_i]_{t-1},y_i).
\end{equation}

Moreover, we can make $(x,y)$ be such a vertex that 
$(x_i,y_i)\notin\chi(x,y)\!\!\restriction\!\!\Omega$ for any  $(x_i,y_i)\in H$. Here is the process.  We first find a vertex $(x^{\dag},y)$ of index $t$ satisfying some conditions s.t. the vertex is adjacent to any vertex in $H$. Afterwards, we find such a vertex $(x,y)$ that 
\begin{enumerate}[(a)]
\item $\mathrm{idx}(x,y)=t-1$;

\item  $[x]_{t-1}\equiv 0$ (mod $k-1$), i.e. $\mathbf{cc}([x]_{t-1},y)=y$ mod $k-1$;

\item $[x^{\dag}]_t=[x]_t$. It means that $\llparenthesis x\rrparenthesis_t=x^{\dag}$, by Lemma \ref{projection};

\item $g(x)=0$;

\item $\mathrm{RngNum}(x,t-1)=-1$; 

\item $\chi(x, y)\!\!\restriction\!\! S=\emptyset$.

\end{enumerate}  

The crutial point is to find such a vertex $(x^{\dag},y)$. 

Let $H_\ell:=\{(\llparenthesis x_i\rrparenthesis_\ell,y_i)\mid (x_i,y_i)\in H\}$. Let $(x^{\dag},y)$ be  such a vertex that for any $t\leq j\leq m$ the following hold:
\begin{enumerate}[(i)]
\item $\mathrm{idx}(\llparenthesis x^{\dag}\rrparenthesis_j,y)=j$;

\item $\chi(\llparenthesis x^{\dag}\rrparenthesis_{j}, y)\!\!\restriction\!\! S=\emptyset$;

\item  $(\llparenthesis x^\dag \rrparenthesis_j, y)$ is adjacent to $(\llparenthesis x_i \rrparenthesis_j, y_i)$  for any $(x_i,y_i)\in H_j$;

\item $\mathrm{RngNum}(\llparenthesis x^{\dag}\rrparenthesis_j,j)=-1$.   

\end{enumerate}
Note that, (i) (ii) is easy to satisfy. So is (iv). 
 It is (iii) that needs some justification.

Firstly, we can ensure that 
\begin{equation}\label{eqn-adjacency-in-jth-abstracton}
\mathbf{cc}([x^{\dag}]_{j},y)\neq \mathbf{cc}([x_i]_{j},y_i)
\end{equation}
 because of $|H_j|\leq k-2$ and 
Lemma \ref{universal-simulator} (it says that we can choose a value for $\mathbf{cc}([x^{\dag}]_{j},y)$ freely). By Lemma \ref{proj-abs},  $\mathbf{cc}([x^{\dag}]_{j},y)=\mathbf{cc}([\llparenthesis x^\dag \rrparenthesis_j]_{j}, y)$ and $\mathbf{cc}([x_i]_{j},y_i)=\mathbf{cc}([\llparenthesis x_i\rrparenthesis_j]_{j},y_i)$. Therefore, 
\begin{equation}
\mathbf{cc}([\llparenthesis x^\dag \rrparenthesis_j]_{j}, y)\neq \mathbf{cc}([\llparenthesis x_i\rrparenthesis_j]_{j},y_i).
\end{equation} 
 Moreover, $\mathrm{idx}(\llparenthesis x_i\rrparenthesis_j,y_i)\geq j$, by Lemma \ref{proj-greater-index}. Consequently, by definition,  
\begin{equation}
\mathrm{sng}((\llparenthesis x^{\dag}\rrparenthesis_j,y),(\llparenthesis x_i\rrparenthesis_j,y_i))=0.
\end{equation}
 Therefore, 
 \begin{equation}
 (\llparenthesis x_i\rrparenthesis_j,y_i)\notin \chi(\llparenthesis x^{\dag}\rrparenthesis_j,y)\!\!\restriction\!\!\Omega.
 \end{equation}

Secondly, by Lemma \ref{cm=depth}, $y\in\{1,\cdots\!,k-2\}-\{\mathbf{cc}([x_i]_{t-1},y_i)\1 (x_i,y_i)\in H\}$ since $y\in\{1,\cdots\!,k-2\}-\{y_i\mid (x_i,y_i)\in H\}$.  By Lemma \ref{flexibility-in-same-abstraction-1}, we can choose $(x^{\dag},y)$ to be such a vertex that, for any $(u,v)\in H_j\cap (\mathbb{X}_j^*-\mathbb{X}_{j+1}^*)$ where $v\in [1,k-2]$,   
 \begin{multline}\label{app-no-missing-edges_xi-1-eqn2}
(\mathbf{cc}([u]_{j},v)-\mathbf{cc}([x^\dag]_{j},y))\times(v-y)\times\\ \hspace{15pt} (-1)^{\mathbf{BIT}(\mathrm{SW}((u,v),(\llparenthesis x^\dag \rrparenthesis_j, y)),\hat{q}(v,y))}\!>\!0.  
\end{multline}

Third, we have chosen $(x^{\dag},y)$ to be such a vertex that $\chi(\llparenthesis x^{\dag}\rrparenthesis_j,y)\!\!\restriction\!\! S=\emptyset$. Hence 
\begin{equation}
\mathbf{cl}(\llparenthesis x_i\rrparenthesis_j,y_i)\notin \chi(\llparenthesis x^{\dag}\rrparenthesis_j,y)\!\!\restriction\!\! S. 
\end{equation}

Note that  the cases can be satisfied simultaneously. Then by definition (iii) holds.  

Note that $\llparenthesis x^{\dag}\rrparenthesis_t=x^{\dag}$. 
Due to (iii), $(x_i,y_i)\notin\chi(x^{\dag},y)\!\!\restriction\!\!\Omega$ because $(\llparenthesis x^\dag \rrparenthesis_j, y)$ is adjacent to $(\llparenthesis x_i \rrparenthesis_j, y_i)$  for any $j$. That is, $\chi(x,y)\!\!\restriction\!\!\Omega\cap H=\emptyset$.

In summary, $(x^\dag,y)$ is adjacent to any vertex in $H$. 
Then by the proof of \textit{(2)}, the claim \textit{(4)} holds.  
\end{proof}

\textbf{Proof of Claim \ref{board-history-evolutions}}.
\begin{proof}
Firstly, we show that \textit{(i)} holds. 
Recall that Spoiler picks $(x,y)$ and Duplicator replies with $(x^\prime,y)$. 
If $\chi(x,y)\!\!\restriction\!\!\mathrm{BH}$ is not a valid board history, then for any $(u,v)$,  neither $(x,y)\xrightarrow[\mathrm{BC}]{*}(u,v)$ nor $(u,v)\xrightarrow[\mathrm{BC}]{*}(x,y)$. In other words, $(x,y)$ is an isolated vertex in $\mathfrak{A}_{k,m}$. 
Duplicator simply picks $(x^\prime,y)$ such that $\chi(x^\prime,y)\!\!\restriction\!\!\mathrm{BH}$ is not a valid board history. Clearly the claim holds. Futhermore, since there are sufficiently many such isolated vertices, she can do it in accordance with the usual condition for winning the Ehrenfeucht-Fra\"iss\' e\xspace games over pure linear orders, i.e. (\mbox{apx-}1) (cf. Remark \ref{remark-linear-orders-1}). Therefore, in the following argument we assume that both $\chi(x,y)\!\!\restriction\!\!\mathrm{BH}$ and $\chi(x^\prime,y)\!\!\restriction\!\!\mathrm{BH}$ are valid.   
 
We first prove the following by induction. That is, $(u^{\prime},v)\twoheadrightarrow(x^{\prime},y)$ if   $(u,v)\twoheadrightarrow(x,y)$, and  $(x^{\prime},y)\twoheadrightarrow(u^{\prime},v)$ if   $(x,y)\twoheadrightarrow(u,v)$. Note that if the players do not take off pebbles in a virtual game then $\twoheadrightarrow$ equals $\xrightarrow[\mathrm{BC}]{*}$. 
Recall that in the virtual games, for any $(a,b)\!\Vdash\! (a^\prime,b)$, if Spoiler takes off one of the pebble, e.g. $(a,b)$, then Duplicator will take off the other one in the pair, e.g. $(a^\prime,b)$. Consequently, We need only prove that $(u^{\prime},v)\xrightarrow[\mathrm{BC}]{*}(x^{\prime},y)$ if $(u,v)\xrightarrow[\mathrm{BC}]{*}(x,y)$, and  $(x^{\prime},y)\xrightarrow[\mathrm{BC}]{*}(u^{\prime},v)$ if $(x,y)\xrightarrow[\mathrm{BC}]{*}(u,v)$. 

\textbf{Basis:} Assume that there is only one pair of pebbled vertices $(u,v)$, $(u^{\prime},v)$ in the game board, and Spoiler picks $(x,y)$ in $\widetilde{\mathfrak{A}}_{k,m}$.  
Since Duplicator sticks to B-3, by the definitions, $(u^{\prime},v)\!\xrightarrow[\mathrm{BC}]{*}\!(x^{\prime},y)$ if   $(u,v)\!\xrightarrow[\mathrm{BC}]{*}\!(x,y)$, and  $(x^{\prime},y)\!\xrightarrow[\mathrm{BC}]{*}\!(u^{\prime},v)$ if   $(x,y)\!\xrightarrow[\mathrm{BC}]{*}\!(u,v)$.  

\textbf{Induction Step:}\label{app-history-order-induction}
There are up to $k-2$ pairs of pebbled vertices $(u_{i},v_{i})$, $(u_{i}^\prime,v_{i})$ in the game board, where the claim holds. 
Let $\ell:=max\{j\mid (u_i,v_i)_{[j]}\xrightarrow[\mathrm{BC}]{*}(x,y)\}$. 
Let $(u_c,v_c)$ be a vertex, not necessary pebbled,  such that $\mathrm{i}_{\mathrm{cur}}^{u_c,v_c}=\ell$ and $(u_c,v_c)\xrightarrow[\mathrm{BC}]{*}(x,y)$. 
And let $(u^\star,v^\star)$ be one of these pebbled vertices, if there is one, that $\mathrm{i}_{\mathrm{cur}}^{u^\star\!\!,v^\star}=min\{\mathrm{i}_{\mathrm{cur}}^{u_i\!,v_i}\mid (x,y)\xrightarrow[\mathrm{BC}]{*}(u_i,v_i)\}$. Since the binary relation $\xrightarrow[\mathrm{BC}]{*}$ is transitive, $(u_c,v_c)\xrightarrow[\mathrm{BC}]{*}(u^\star,v^\star)$. 
Let $(u^\star,v^\star)\Vdash\! (u^{\star\prime},v^\star)$. Moreover, let $(u_c^\prime,v_c)$ be such a vertex that  $(u_c^\prime,v_c)\xrightarrow[\mathrm{BC}]{*} (u^{\star\prime},v^\star)$ and $\mathrm{i}_{\mathrm{cur}}^{u_c^\prime,v_c}=\mathrm{i}_{\mathrm{cur}}^{u_c,v_c}$.
Duplicator simply picks $(x^{\prime},y)$ such that its associated board history mimics that of $(u^{\star\prime},v^\star)$ in the first $\chi(x,y)\!\!\restriction\!\!\mathrm{bc}$ rounds. By definition, $(u_c^\prime,v_c)\xrightarrow[\mathrm{BC}]{*} (x^\prime,y)\xrightarrow[\mathrm{BC}]{*} (u^{\star\prime},v^\star)$. Because the relation $\xrightarrow[\mathrm{BC}]{*}$ is transitive, the claim holds for other pebbled vertices, e.g. $(u^\prime,v)$.\footnote{For example, assume that $(u,v)\Vdash (u^\prime,v)$ and $(u,v)\xrightarrow[\mathrm{BC}]{*} (x,y)$, and that $\chi(u,v)\!\!\restriction\!\!\mathrm{bc}<\chi(u_c,v_c)\!\!\restriction\!\!\mathrm{bc}$. By Fact \ref{boardHistory-never-converge-1},  $(u,v)\xrightarrow[\mathrm{BC}]{*} (u_c,v_c)$ holds. Then  $(u^\prime,v)\xrightarrow[\mathrm{BC}]{*} (u_c^\prime,v_c)$. Hence by the transitivity of $\xrightarrow[\mathrm{BC}]{*}$, we know that $(u^\prime,v)\xrightarrow[\mathrm{BC}]{*} (x^\prime,y)$. In short, $(u^\prime,v)\xrightarrow[\mathrm{BC}]{*} (x^\prime,y)$ if $(u,v)\xrightarrow[\mathrm{BC}]{*} (x,y)$. 
The case when $(u,v)\Vdash (u^\prime,v)$ and $(x,y)\xrightarrow[\mathrm{BC}]{*} (u,v)$ is similar.} 
If there is no such vertex $(u^\star,v^\star)$, Duplicator first search for the pebbled vertices in $\widetilde{\mathfrak{A}}_{k,m}$ (it is similar when Spoiler picks in $\widetilde{\mathfrak{B}}_{k,m}$ in this round) and if she can find a pebbled vertex $(a,b)$ such that $\chi(a,b)\!\!\restriction\!\!\mathrm{IBH}[i]$ is equivalent to $\chi(u_c,v_c)\!\!\restriction\!\!\mathrm{IBH}[i]$ for $i\leq \chi(u_c,v_c)\!\!\restriction\!\!\mathrm{bc}$, then Duplicator picks $(x^\prime,y)$ such that $\chi(x^\prime,y)\!\!\restriction\!\!\mathrm{IBH}[i]$ equals $\chi(a^\prime,b)\!\!\restriction\!\!\mathrm{IBH}[i]$ where $(a,b)\Vdash (a^\prime,b)$; afterwards, she uses the virtual game to determine $\chi(x^\prime,y)\!\!\restriction\!\!\mathrm{BH}(j)$ for $\mathrm{i}_{\mathrm{cur}}^{u_c,v_c}\leq j< \mathrm{i}_{\mathrm{cur}}^{x\!,y}$. 
In this case, let $(u_c^{\prime\flat},v_c)$ be the vertex that Duplicator will pick in the $\mathrm{i}_{\mathrm{cur}}^{u_c,v_c}$-th round of this virtual game. Thus we can get $(u_c^{\prime},v_c)$. 
By definition,  $(u_c^{\prime},v_c)\xrightarrow[\mathrm{BC}]{*}(x^{\prime},y)$ if   $(u_c,v_c)\xrightarrow[\mathrm{BC}]{*}(x,y)$. Then by the transitivity of $\xrightarrow[\mathrm{BC}]{*}$, the claim holds for pebbled vertices.

Now we prove that $(u^{\prime},v)$ and $(x^{\prime},y)$ are not in continuity if $(u,v)$ and $(x,y)$ are not in continuity. 
Note that, to avoid verbos case analysis, we discuss ``$\xrightarrow[\mathrm{BC}]{con.}$ (continuity)''  in the next paragraph instead of ``$\twoheadrightarrow$''. But it is easy to see that similar argument holds in the latter case.

Suppose that $(u,v)$ and $(x,y)$ are not in continuity. 
If $\chi(u,v)\!\!\restriction\!\!\mathrm{bc}=\chi(x,y)\!\!\restriction\!\!\mathrm{bc}$, then $\chi(u^\prime,v)\!\!\restriction\!\!\mathrm{bc}=\chi(x^\prime,y)\!\!\restriction\!\!\mathrm{bc}$. Hence $(u^{\prime},v)$ and $(x^{\prime},y)$ are not in continuity. In the following we assume that  $\chi(u,v)\!\!\restriction\!\!\mathrm{bc}\neq\chi(x,y)\!\!\restriction\!\!\mathrm{bc}$. There are two cases. 
First, suppose that, for some $0<i<min\{\mathrm{i}_{\mathrm{cur}}^{x\!,y},\mathrm{i}_{\mathrm{cur}}^{u\!,v}\}-1$,  $\chi(x,y)\!\!\restriction\!\!\mathrm{IBH}[j]=\chi(u,v)\!\!\restriction\!\!\mathrm{IBH}[j]$ if $0\leq j\leq i$, and $\chi(x,y)\!\!\restriction\!\!\mathrm{BH}(i+1)\neq\chi(u,v)\!\!\restriction\!\!\mathrm{BH}(i+1)$. 
Duplicator can make it that  $\chi(x^\prime,y)\!\!\restriction\!\!\mathrm{IBH}[j]=\chi(u^\prime,v)\!\!\restriction\!\!\mathrm{IBH}[j]$ for $0\leq j\leq i$. Moreover, Duplicator can ensure that $\chi(x^\prime,y)\!\!\restriction\!\!\mathrm{BH}(i+1)\neq\chi(u^\prime,v)\!\!\restriction\!\!\mathrm{BH}(i+1)$. She need only ensure that in the games with the same board configuration she reples with different vertices in a structure if Spoiler picks different vertices in the other structure. This would be made more precise in the following argument for \textit{(ii)} wherein games over pure linear orders are introduced. 
The second case is very similar. Assume w.l.o.g. that  $\chi(u,v)\!\!\restriction\!\!\mathrm{bc}<\chi(x,y)\!\!\restriction\!\!\mathrm{bc}$ and  $\chi(x,y)\!\!\restriction\!\!\mathrm{BH}(j)=\chi(u,v)\!\!\restriction\!\!\mathrm{BH}(j)$ for $0\leq j\leq \chi(u,v)\!\!\restriction\!\!\mathrm{bc}$. Then in the $\mathrm{i}_{\mathrm{cur}}^{u\!,v}$-th virtual round of the board history of $(x,y)$, a vertex that is different from $(u,v)$ is picked. Similarly, Duplicator need only ensure that in the $\mathrm{i}_{\mathrm{cur}}^{u\!,v}$-th virtual round of the board history of $(x^\prime,y)$, a vertex that is different from $(u^\prime,v)$ is picked. How to pick such a vertex will be introduced shortly. 
[Q.E.D]

A concern may probably rise. That is, in the above argument we don't know if this will cause problem w.r.t. the linear order. We shall see why it is not a problem in the following argument. 

Recall that $\overline{c_A}$ is the tuple of pebbled vertices in $\widetilde{\mathfrak{A}}_{k,m}$ and $\overline{c_A}(i)$ is the $i$-th element in the tuple. We can construct a tree of board configurations where the root is the empty board configuration and the leaves are the board configurations associated with vertices in $\overline{c_A}$; each branch of the tree corresponds to a board history associated with $\overline{c_A}(i)$ for some $i$.\footnote{For some other vertices in $\overline{c_A}$, their board histories maybe correspond to a path from the root to some inner node. By Fact \ref{boardHistory-never-converge-1}, such a tree is possible.} 
Hence this tree encodes the board histories of vertices in $\overline{c_A}$. Each arrow represents one step evolution of some board history. 
Such a representation tells us where two board histories diverge in the tree. Clearly, the number of children of a node is bounded by $k-2$ since there are up to $k-2$ leaves (recall that $|\overline{c_A}|\leq k-2$). Some of the children correspond to a board configuration that is obtained by adding a pebbled vertex to its farther board configuration. Some of them correspond to a board configuration that is obtained by removing a pebble from its farther (a board configuration). By the order defined in Definition \ref{iterative-expansion}, the children of the latter case are less than the children of the former case in the order. 
We can also construct a similar tree that represents the board histories of vertices in $\overline{c_B}$. Then by \textit{(i)} and, in particular, by the strategy introduced in page \pageref{app-history-order-induction} (the induction step), we can observe that these two trees are isomorphic, without considering the order. In the following we give the justification for \textit{(ii)} in detail. That is, we show that the trees are isomorphic even in the presence of the order. 
As usual, we study it case by case. Recall that $v=y$.    

If $\chi(u,v)\!\!\restriction\!\! \mathrm{bc}<\chi(x,y)\!\!\restriction\!\! \mathrm{bc}$, then by the definition of the order of histories, $\lfloor u/(\gamma_{m-1}^*\times k)\rfloor$ mod $bh^\# <\lfloor x/(\gamma_{m-1}^*\times k)\rfloor$ mod $bh^\#$. Likewise, $\lfloor u^\prime/(\gamma_{m-1}^*\times k)\rfloor$ mod $bh^\# <\lfloor x^\prime/(\gamma_{m-1}^*\times k)\rfloor$ mod $bh^\#$ since $\chi(u^\prime,v)\!\!\restriction\!\! \mathrm{bc}<\chi(x^\prime,y)\!\!\restriction\!\! \mathrm{bc}$ (recall that $\chi(u^\prime,v)\!\!\restriction\!\! \mathrm{bc}=\chi(u,v)\!\!\restriction\!\! \mathrm{bc}$ and $\chi(x^\prime,y)\!\!\restriction\!\! \mathrm{bc}=\chi(x,y)\!\!\restriction\!\! \mathrm{bc}$). Hence the claim holds. Now 
assume that $\chi(u,v)\!\!\restriction\!\! \mathrm{bc}=\chi(x,y)\!\!\restriction\!\! \mathrm{bc}=\ell_u$.

Assume that $\lfloor x/(\gamma_{m-1}^*\times k)\rfloor$ mod $bh^\#$ $=\lfloor u/(\gamma_{m-1}^*\times k)\rfloor$ mod $bh^\#$. That is, $\chi(u,v)\!\!\restriction\!\! \mathrm{BH}=\chi(x,y)\!\!\restriction\!\! \mathrm{BH}$. 
In such case, Duplicator can ensure that $\lfloor x^\prime/(\gamma_{m-1}^*\times k)\rfloor$ mod $bh^\#$ $=\lfloor u^\prime/(\gamma_{m-1}^*\times k)\rfloor$ mod $bh^\#$, by the strategy that Duplicator used (cf. page \pageref{app-history-order-induction}, the induction step).

Suppose that $\lfloor x/(\gamma_{m-1}^*\times k)\rfloor$ mod $bh^\#$ $>\lfloor u/(\gamma_{m-1}^*\times k)\rfloor$ mod $bh^\#$. Another case where $\lfloor x/(\gamma_{m-1}^*\times k)\rfloor$ mod $bh^\#$ $<\lfloor u/(\gamma_{m-1}^*\times k)\rfloor$ mod $bh^\#$ is very similar.  
 Recall that $\chi(x,y)\!\!\restriction\!\!\mathrm{BH}$ and $\chi(u,v)\!\!\restriction\!\!\mathrm{BH}$ have the same ancestor, i.e. the empty board history (the root of the tree). These two histories must be diverged at some point, i.e. some node in the tree. Suppose that this node is $\mathrm{BC}_0$ (a board configuration) and that $(a,b)$ and $(e,b)$ are two vertices in $\mathrm{BC}_0$. Moreover, assume that in the initial segment of the board history from the beginning empty board configuration to $\mathrm{BC}_0$, the game is evolved into the $\xi$-th abstraction. Then, over the $\xi$-th abstraction, the length of the interval between $(a,b)$ and $(e,b)$ is $|\lfloor a/l_\xi\rfloor-\lfloor e/l_\xi\rfloor|$. 
By \eqref{xi-order-requirement-ensured} and Remark \ref{abstract-order-in-main-lemma}, we know that any interval over the $\xi$-th abstraction, whose length is greater than 1, is sufficiently large over the $(\xi-1)$-th abstraction. Hence it is easy for Duplicator to win the Ehrenfeucht-Fra\"iss\' e\xspace games over such pure linear orders (the intervals over the $(\xi-1)$-th abstraction) in up to $k-2$ rounds, because $m>k$ and the length of the orders are greater than $2^m$. 
The exceptions, where the length of an interval over the $\xi$-th abstraction is 0 or 1, can be handled easily. 
Note that in such cases the pair of intervals in respective structures have the same length, i.e. either 0 or 1. 
By her strategy that deals with order, Duplicator should mimic Spoiler's picking in such intervals. So far, the games are played over the $\xi$-th or the $(\xi-1)$-th abstraction. However, the vertices picked can be in very low abstractions, e.g. in $\mathbb{X}_1^*-\mathbb{X}_2^*$. If, in the $\xi$-th abstraction, a vertex  is picked again, then the corresponding vertex will also be picked again in this abstraction. Or more precisely, their projection in the $\xi$-th abstraction are picked more than once. It remains to show that $(\varkappa)$ will not occur in such case, or more precisely, (\ref{main-diamond-xi}$^\diamond$) (vii) and (\ref{main-diamond-xi}$^\diamond$) (iv) hold simultaneously (when $\mathrm{idx}(x^\flat,y)=\mathrm{idx}(x^{\prime\flat},y)<\xi$). But this is already explained in the proof of Lemma \ref{main-lemma} (cf. Strategy \ref{t<xi}).

Fig. \ref{pic-board-history-isom} illustrates two isomorphic trees that encode the board histories of the vertices in $\overline{c_A}$ and $\overline{c_B}$. The tree on the left is used to represent board histories associated with the vertices in $\overline{c_A}$, whereas  the tree on the right is for those board histories of the vertices in $\overline{c_B}$. Here the board configurations are succinctly represented. The node Root, as well as Root', is the empty board configuration. There is an arrow from Root to the board configuration $(c0,d0)$. It means that the latter can be evolved from the former in one step in the board history that goes through them. Assume that $v=y=b$.  We use $\mathrm{BC}_1$ to denote $(c0,d0)(c1,d1)(a,b)$ , which is associated with a vertex in $\overline{c_A}$. likewise, $\mathrm{BC}_5$ is for $(c0,d0)(c1,d1)(e,b)$ (associated with  a vertex in $\overline{c_A}$). 
 Similarly, $\mathrm{BC}_1^\prime$ stand for the board configuration $(c0^\prime,d0)(c1^\prime,d1)(a^\prime,b)$ and $\mathrm{BC}_5^\prime$ is for $(c0^\prime,d0)(c1^\prime,d1)(e^\prime,b)$. They are associated with a vertex in $\overline{c_B}$. We require that $(c0,d0)\Vdash\!(c0^\prime,d0),\ldots, (e,b)\Vdash\!(e^\prime,b)$. Here ``$\Vdash$'' is determined by a virtual game that combines two sorts of virtual games. The first sort is the imaginary pebble game played on the changing game board (cf. page \pageref{def-virtual-game}). The second sort is the Ehrenfeucht-Fra\"iss\' e\xspace games over the pure linear orders, e.g. the interval between $(a,b)$ and $(e,b)$ over the $\xi$-th abstraction or the $(\xi-1)$-th abstraction. 
 Note that $\mathrm{BC}_i$ is surrounded by a rounded rectangle. We use such a rectangle to denote an $l_\xi$-tuple of successive board configurations that include $\mathrm{BC}_i$ as a member, whereas no other board configuration in it is associated with a vertex in $\overline{c_A}$.   
 Note that the board configurations in an $l_\xi$-tuple can only be distinguished by the last vertex in their representation. 
 In this example, we assume that the third item of $\mathrm{BC}_i$ ($\mathrm{BC}_i^\prime$ resp.)  is $(h_i,b)$ ($(h_i^\prime,b)$ resp.), where $a=h_1<h_2=h_6<h_3=u<h_4<h_5=e$ ($a^\prime=h_1^\prime<h_2^\prime=h_6^\prime<h_3^\prime=u^\prime<h_4^\prime<h_5^\prime=e^\prime$, resp.). Here we use the dashed arrow issued from $\mathrm{BC}_6$ to mean that $\mathrm{BC}_6$ have some children.  
From the figure, we know that, w.r.t. the linear order, the board history indicated by the path from Root to $\mathrm{BC}_i$ is less than that indicated by the path from Root to $\mathrm{BC}_j$ if $i<j$. Moreover, all the board histories that go through $\mathrm{BC}_i$ share the initial segment of them (the first two virtual rounds, indicated by the blue arrows). In this example, if Spoiler picks $(x,y)$ in $\widetilde{\mathfrak{A}}_{k,m}$ s.t. $\chi(x,y)\!\!\restriction\!\! \mathrm{BH}$ goes through $\mathrm{BC}_5$, then Duplicator can pick $(x^\prime,y)$ in $\widetilde{\mathfrak{B}}_{k,m}$ s.t. $\chi(x^\prime,y)\!\!\restriction\!\! \mathrm{BH}$ goes through $\mathrm{BC}_5^\prime$.  

\begin{figure}[]
\hspace*{-12mm}
\includegraphics[scale=0.59]{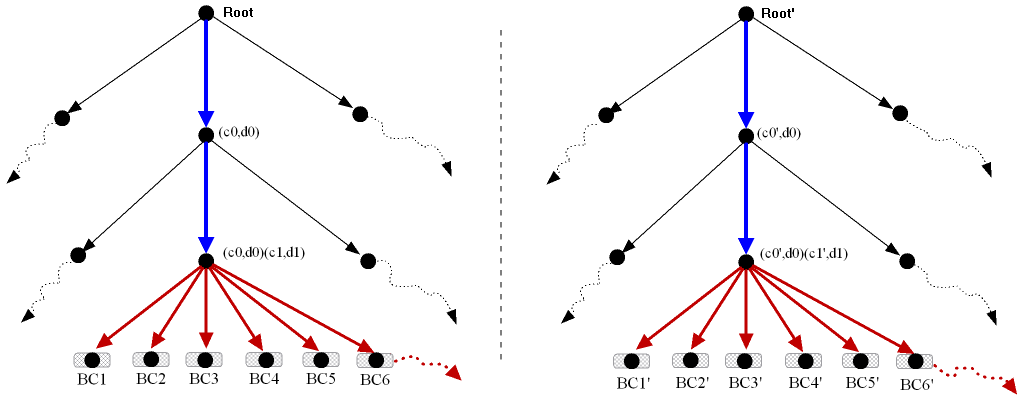}
\caption{Trees that encode the board histories of the vertices in $\overline{c_A}$ and $\overline{c_B}$. Each node is a board configuration.}
\label{pic-board-history-isom}
\end{figure}
 
 We've shown that Duplicator is able to ensure that the orders of an initial segment of histroies (up to $\ell_u$ rounds; associated with $\overline{c_A}$ and $\overline{c_B}$) are isomorphic at the point of divergence, by \eqref{xi-order-requirement-ensured} and Remark \ref{abstract-order-in-main-lemma}. More precisely, for some $i\leq\ell_u$, $\chi(x,y)\!\!\restriction\!\!\mathrm{IBH}[i]>\chi(u,v)\!\!\restriction\!\!\mathrm{IBH}[i]$ if $\chi(x^\prime,y)\!\!\restriction\!\!\mathrm{IBH}[i]>\chi(u^\prime,v)\!\!\restriction\!\!\mathrm{IBH}[i]$, on condition that $\chi(x^\prime,y)\!\!\restriction\!\!\mathrm{BH}(j)=\chi(u^\prime,v)\!\!\restriction\!\!\mathrm{BH}(j)$ for $0\leq j<i$ and $\chi(x^\prime,y)\!\!\restriction\!\!\mathrm{BH}(i)\neq\chi(u^\prime,v)\!\!\restriction\!\!\mathrm{BH}(i)$. 
  Recall that we order the board histories based on lexicographic ordering. Provided that the above holds, it implies that the orders of these board histories (of the same length; associated with the pebbled vertices) are preserved throughout the game, i.e. $\lfloor x^\prime/(\gamma_{m-1}^*\times k)\rfloor$ mod $bh^\#$ $>\lfloor u^\prime/(\gamma_{m-1}^*\times k)\rfloor$ mod $bh^\#$ if $\lfloor x/(\gamma_{m-1}^*\times k)\rfloor$ mod $bh^\#$ $>\lfloor u/(\gamma_{m-1}^*\times k)\rfloor$ mod $bh^\#$, since $\chi(u,v)\!\!\restriction\!\! \mathrm{bc}=\chi(x,y)\!\!\restriction\!\! \mathrm{bc}$.

Recall that Duplicator uses Strategy \ref{play-in-xi-abs}\textapprox Strategy \ref{t<xi} to choose the type label for $(x^\prime,y)$; the order issue, when ignoring board histories, is handled in Remark \ref{abstract-order-in-main-lemma}. Here we just introduced the idea that takes care of the orders of board histories (associated with the pebbled vertices) and that  make up these strategies: it ensures that \textit{(ii)} holds, in accordance with \textit{(i)}. 
\end{proof}

\begin{remark}\label{special-locally-isom}
If the readers have already been confirmed by the intuition stated in Remark \ref{ExplanationOfAbstraction-specalcase}, then there is no need to read the following involved arguments. But, if the readers are still doubt about the claim that ``the subgraph induced by $cex(x_0,y_0,t-1)$ and $cex(x_1,y_1,t-1)$ is isomorphic to the subgraph induced by  $cex(x_0^\prime,y_0,t-1)$ and $cex(x_1^\prime,y_1,t-1)$ if and only if the adjacency between $(x_0,y_0)$ and $(x_1,y_1)$ is the same as that between  $(x_0^\prime,y_0)$ and $(x_1^\prime,y_1)$'', then patience should be paid to getting through the following proofs, by which we show a more general result: it is true even for the structures $\mathfrak{A}_{k,m}^*$ and $\mathfrak{B}_{k,m}^*$ defined in section \ref{section-structures}. It justifies the notion ``abstraction'' in the more general cases ($k\geq 4$), just akin to the intuition explained in Remark \ref{ExplanationOfAbstraction-specalcase}.

Suppose that $\mathrm{idx}(x,y)=r$. 
$\llbracket x\rrbracket_{r}^{min}\!-\![\llparenthesis x\rrparenthesis_\xi]_{r}\!\equiv\llbracket x^{\prime}\rrbracket_{r}^{min}\!-
\![\llparenthesis x^{\prime}\rrparenthesis_\xi]_{r}\!$ 
roughly corresponds to the equation $x^\flat-\llparenthesis x^\flat\rrparenthesis_\xi=x^{\prime\flat}-\llparenthesis x^{\prime\flat}\rrparenthesis_\xi$. On the other hand, note that $x^\flat-\llparenthesis  x^\flat\rrparenthesis_\xi=\llbracket x^\flat\rrbracket_{1}^{min}-[\llparenthesis x^\flat\rrparenthesis_\xi]_{1}$ if $\mathrm{idx}(x^\flat,y)=1$.     
Note that $\llbracket x\rrbracket_{r}^{min}=[x]_r$ if $k=3$ since in this case $\mathpzc{U}_i^*=1$ for any $i$. 
 Therefore, $x-\llparenthesis x\rrparenthesis_\xi=x^\prime-\llparenthesis x^\prime\rrparenthesis_\xi$ implies that 
$\llbracket x\rrbracket_{r}^{min}-[\llparenthesis x\rrparenthesis_\xi]_{r}=\llbracket x^{\prime}\rrbracket_{r}^{min}-[\llparenthesis x^{\prime}\rrparenthesis_\xi]_{r}$.   

The following Lemma roughly says that, modulo some amount, if two vertices are (approximately) in the same position in lower abstraction (i.e. finer scale), then so are they in higher abstraction (i.e. coarser scale). See Fig. \ref{app-copycat-lemmas-1}. Here in the figure vertex $a$, $b$, $a^\prime$ and $b^\prime$ stand for $(x,y)$, $(\llbracket x\rrbracket_r^{min},y)$,  $(x^\prime,y)$ and $(\llbracket x^\prime\rrbracket_r^{min},y)$ respectively. Vertices of different colour stand for vertices of different indices: black vertices have index $\xi$; the grey ones have index $r$; while the blue ones have index strictly between $\xi$ and $r$. Vertices under one black brace stand for a $\mathpzc{U}_r^*$-tuple. The number of vertices above a red brace stand for $\llbracket x\rrbracket_{r}^{min}-[\llparenthesis x\rrparenthesis_\xi]_{r}$. 
\begin{figure}
\centering
\begin{tikzpicture}[scale=0.35]
\tikzstyle{myedgestyle} = [decorate, decoration={brace, amplitude=5pt, raise=1mm}];
\tikzstyle{myedgestyle-mirror} = [decorate, decoration={brace, amplitude=10pt, raise=1mm,mirror}];

 \draw [black,xstep=1,ystep=4, dotted] (0,0) node (v1) {} grid (35,8);
 \foreach \i in {0,1,...,35} {
     \draw [fill=black!45] (\i,4) circle(2pt); 
  }
   
 \foreach \i in {0,1,...,8}{
     \draw [myedgestyle] (\i*4,4.1) --(\i*4+3,4.1);
}
 
\draw [ultra thick, dotted]  plot[smooth, tension=.7] coordinates {(0,8) (v1)};
\draw [blue, thick,dotted] plot[smooth, tension=.7] coordinates {(16,8) (16,0)};
\draw [blue, thick,dotted] plot[smooth, tension=.7] coordinates {(32,8) (32,0)};
\draw [fill=black] (0,0) circle (4pt); \draw [fill=black] (0,4) circle (4pt); \draw [fill=black] (0,8) circle (4pt); 
\draw [fill=blue] (16,0) circle (3pt); \draw [fill=blue] (16,4) circle (3pt); \draw [fill=blue] (16,8) circle (3pt); 
\draw [fill=blue] (32,0) circle (3pt); \draw [fill=blue] (32,4) circle (3pt); \draw [fill=blue] (32,8) circle (3pt); 

\draw [red] [myedgestyle-mirror] (0,4) --(32,4);
\node at (33,3.4) {$a^\prime$};
\node at (32,3.4) {$b^\prime$};


\draw [black,xstep=1,ystep=4, dotted, shift={(0,10)}] (0,0) node (v2) {} grid (35,8);
 \foreach \i in {0,1,...,35} {
     \draw [fill=black!45] (\i,14) circle(2pt); 
  }
   
 \foreach \i in {0,1,...,8}{
     \draw [myedgestyle] (\i*4,14.1) --(\i*4+3,14.1);
}
 
\draw [ultra thick,dotted]  plot[smooth, tension=.7] coordinates {(0,18) (v2)};
\draw [blue, thick, dotted] plot[smooth, tension=.7] coordinates {(16,18) (16,10)};
\draw [blue, thick,dotted] plot[smooth, tension=.7] coordinates {(32,18) (32,10)};
\draw [fill=black] (0,10) circle (4pt); \draw [fill=black] (0,14) circle (4pt); \draw [fill=black] (0,18) circle (4pt); 
\draw [fill=blue] (16,10) circle (3pt); \draw [fill=blue] (16,14) circle (3pt); \draw [fill=blue] (16,18) circle (3pt); 
\draw [fill=blue] (32,10) circle (3pt); \draw [fill=blue] (32,14) circle (3pt); \draw [fill=blue] (32,18) circle (3pt); 

\draw [red] [myedgestyle-mirror] (0,14) --(32,14);
\node at (35,13.6) {$a$};
\node at (32,13.6) {$b$};

\end{tikzpicture}
\caption{\label{app-copycat-lemmas-1} A figure that illustrates Lemma \ref{approxi-copy-cat}.} 
\end{figure}
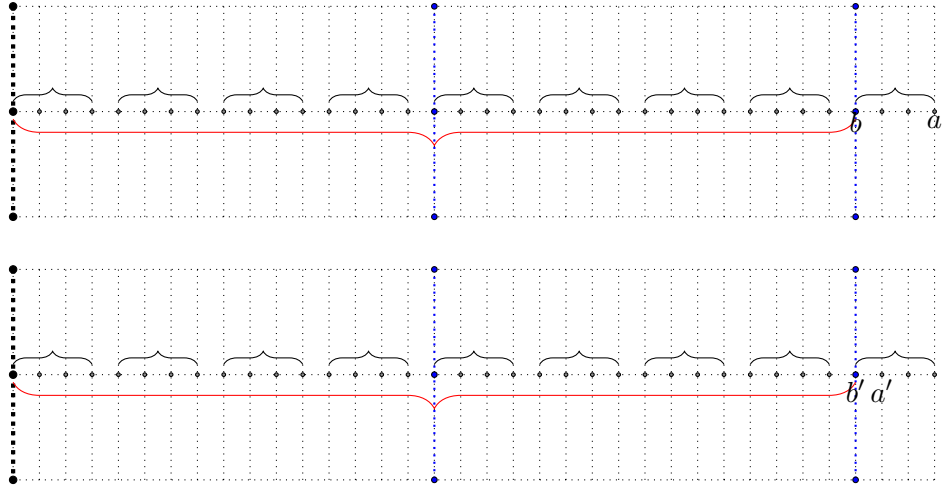

\begin{figure}
\centering
\hspace*{-5mm}
\begin{tikzpicture}[scale=0.35]
\tikzstyle{myedgestyle-1} = [decorate, decoration={brace, amplitude=5pt, raise=1mm}];
\tikzstyle{myedgestyle-mirror-1} = [decorate, decoration={brace, amplitude=10pt, raise=3mm,mirror}];
\tikzstyle{myedgestyle-mirror-2} = [decorate, decoration={brace, amplitude=5pt, raise=2mm,mirror}];

 \draw [black,xstep=1,ystep=4, dotted] (0,0) node (v1) {} grid (35,8);
 \foreach \i in {0,1,...,35} {
     \draw [fill=black!45] (\i,4) circle(2pt); 
  }
   
 \foreach \i in {0,1,...,8}{
     \draw [myedgestyle-1] (\i*4,4.1) --(\i*4+3,4.1);
}
 
\draw [ultra thick, dotted]  plot[smooth, tension=.7] coordinates {(0,8) (v1)};
\draw [blue, thick,dotted] plot[smooth, tension=.7] coordinates {(16,8) (16,0)};
\draw [blue, thick,dotted] plot[smooth, tension=.7] coordinates {(32,8) (32,0)};
\draw [fill=black] (0,0) circle (4pt); \draw [fill=black] (0,4) circle (4pt); \draw [fill=black] (0,8) circle (4pt); 
\draw [fill=blue] (16,0) circle (3pt); \draw [fill=blue] (16,4) circle (3pt); \draw [fill=blue] (16,8) circle (3pt); 
\draw [fill=blue] (32,0) circle (3pt); \draw [fill=blue] (32,4) circle (3pt); \draw [fill=blue] (32,8) circle (3pt); 

\draw [red] [myedgestyle-mirror-1] (0,4) --(24,4);
\draw [yellow] [myedgestyle-mirror-2] (16,4) --(24,4);
\node at (27,3.4) {$a^\prime$};
\node at (24,3.4) {$b^\prime$};
\node at (16,3.4) {$c^\prime$};
\node at (-0.5,3.4) {$d^\prime$};

\draw [black,xstep=1,ystep=4, dotted, shift={(0,10)}] (0,0) node (v2) {} grid (35,8);
 \foreach \i in {0,1,...,35} {
     \draw [fill=black!45] (\i,14) circle(2pt); 
  }
   
 \foreach \i in {0,1,...,8}{
     \draw [myedgestyle-1] (\i*4,14.1) --(\i*4+3,14.1);
}
 
\draw [ultra thick,dotted]  plot[smooth, tension=.7] coordinates {(0,18) (v2)};
\draw [blue, thick, dotted] plot[smooth, tension=.7] coordinates {(16,18) (16,10)};
\draw [blue, thick,dotted] plot[smooth, tension=.7] coordinates {(32,18) (32,10)};
\draw [fill=black] (0,10) circle (4pt); \draw [fill=black] (0,14) circle (4pt); \draw [fill=black] (0,18) circle (4pt); 
\draw [fill=blue] (16,10) circle (3pt); \draw [fill=blue] (16,14) circle (3pt); \draw [fill=blue] (16,18) circle (3pt); 
\draw [fill=blue] (32,10) circle (3pt); \draw [fill=blue] (32,14) circle (3pt); \draw [fill=blue] (32,18) circle (3pt); 

\draw [red] [myedgestyle-mirror-1] (0,14) --(24,14);
\draw [yellow] [myedgestyle-mirror-2] (16,14) --(24,14);
\node at (26,13.6) {$a$};
\node at (24,13.6) {$b$};
\node at (16,13.6) {$c$};
\node at (-0.5,13.6) {$d$};
\end{tikzpicture}
\caption{\label{app-copycat-lemmas-2} A figure that illustrates Lemma \ref{approxi-copy-cat-1} and Lemma \ref{corollary-approxi-copy-cat}.} 
\end{figure}

\begin{lemma}\label{approxi-copy-cat}
For any $1\!\leq\! r\!<\!\xi\!\leq\! m$ and  $(x,y),(x^{\prime},y)\!\in\! \mathbb{X}_r^*-\mathbb{X}_{r+1}^*$, if 
$\llbracket x\rrbracket_{r}^{min}\!-\![\llparenthesis x\rrparenthesis_\xi]_{r}\!\equiv\llbracket x^{\prime}\rrbracket_{r}^{min}\!-
\![\llparenthesis x^{\prime}\rrparenthesis_\xi]_{r}\!$ (mod $\beta_{m-\xi}^{m-r}$), then $[x]_i\equiv [x^{\prime}]_i$ (mod $\beta_{m\!-\!\xi}^{m\!-\!i}$) for any $r<i<\xi$.
\end{lemma}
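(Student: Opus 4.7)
The plan is to prove the lemma by first extracting a clean closed form for $[x]_i$ in terms of $[x]_r$, then reducing both the hypothesis and the target to a common modular statement about $\llbracket x\rrbracket_r^{min}$ and $\llbracket x'\rrbracket_r^{min}$.

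First I would exploit $(x,y)\in\mathbb{X}_r^*-\mathbb{X}_{r+1}^*$ to write $x=\llparenthesis x\rrparenthesis_r=[x]_r\beta_{m-r}^{m-1}+\tfrac{1}{2}\sum_{1<j\le r}\beta_{m-j}^{m-1}$. Dividing by $\beta_{m-i}^{m-1}$ and using $\beta_{m-r}^{m-1}/\beta_{m-i}^{m-1}=1/\beta_{m-i}^{m-r}$, this gives $[x]_i=\lfloor [x]_r/N+c_r^i\rfloor$, where $N:=\beta_{m-i}^{m-r}$ and $c_r^i:=\tfrac{1}{2}\sum_{1<j\le r}1/\beta_{m-i}^{m-j}$. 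The key preliminary observation is that the product $c_r^i\cdot N$ is independent of $i$: indeed $c_r^i\cdot N=\tfrac{1}{2}\bigl(1+\sum_{1<j<r}\gamma_{m-r}^*/\gamma_{m-j}^*\bigr)$. Because $\gamma_{m-j}^*/\gamma_{m-j-1}^*=2^j\mathpzc{U}_j^*\ge 6$, the geometric sum is tiny and one gets $c_r^iN<1$ for every $r\ge 2$. Consequently, for any $x\in\mathbb{X}_r^*-\mathbb{X}_{r+1}^*$, the floor cannot jump and the clean identity $[x]_i=\lfloor [x]_r/N\rfloor$ holds.

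Next I would simplify the hypothesis. An analogous computation, applied to $\llparenthesis x\rrparenthesis_\xi\in\mathbb{X}_\xi^*\subseteq\mathbb{X}_r^*$ via Lemma~\ref{proj-abs}, yields $[\llparenthesis x\rrparenthesis_\xi]_r=[x]_\xi\beta_{m-\xi}^{m-r}+C_{r,\xi}$ where $C_{r,\xi}:=\tfrac{1}{2}\sum_{r<j\le\xi}\beta_{m-j}^{m-r}$ is a constant independent of $x$. Substituting this into the hypothesis, the $[x]_\xi$ and $C_{r,\xi}$ terms disappear modulo $\beta_{m-\xi}^{m-r}$, reducing the assumption to the cleaner form $\llbracket x\rrbracket_r^{min}\equiv\llbracket x'\rrbracket_r^{min}\pmod{\beta_{m-\xi}^{m-r}}$.

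To conclude, set $M:=\beta_{m-\xi}^{m-i}$ so that $MN=\beta_{m-\xi}^{m-r}$, and write $\llbracket x\rrbracket_r^{min}=\alpha_x MN+\beta$, $\llbracket x'\rrbracket_r^{min}=\alpha_{x'}MN+\beta$ for the common residue $\beta\in[0,MN)$, together with $[x]_r=\llbracket x\rrbracket_r^{min}+d_x$, $[x']_r=\llbracket x'\rrbracket_r^{min}+d_{x'}$ and $d_x,d_{x'}\in[0,\mathpzc{U}_r^*)$. The pivotal divisibility facts are: $\mathpzc{U}_r^*$ divides $\beta_{m-r-1}^{m-r}=2^r\mathpzc{U}_r^*$ and hence divides $N$ (for $i\ge r+1$); and $\llbracket x\rrbracket_r^{min}$, being a multiple of $\mathpzc{U}_r^*$, forces $\beta$ to be a multiple of $\mathpzc{U}_r^*$ as well. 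Therefore $\beta\bmod N$ is a multiple of $\mathpzc{U}_r^*$ in $[0,N)$, so $\beta\bmod N\le N-\mathpzc{U}_r^*$. Combined with $d_x<\mathpzc{U}_r^*$ this gives $(\beta\bmod N)+d_x<N$, whence $\lfloor(\beta+d_x)/N\rfloor=\lfloor\beta/N\rfloor$, and likewise for $x'$. Plugging into the formula of the first paragraph, $[x]_i-[x']_i=(\alpha_x-\alpha_{x'})M\equiv 0\pmod{M}$, which is exactly the desired conclusion.

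The main obstacle is the final step: one must rule out a spurious $\pm 1$ difference between $\lfloor(\beta+d_x)/N\rfloor$ and $\lfloor(\beta+d_{x'})/N\rfloor$ coming from the fact that $d_x$ and $d_{x'}$ can be different integers in $[0,\mathpzc{U}_r^*)$. Since any such $\pm 1$ gap would be fatal (as $M>1$), everything rests on the divisibility $\mathpzc{U}_r^*\mid N$ together with $\mathpzc{U}_r^*\mid\beta$, which together guarantee that the window of available $d$-values sits strictly inside a single block of size $N$. The subsidiary obstacle is the numerical verification that $c_r^iN<1$, but this follows cleanly from $\mathpzc{U}_\ell^*\ge 3$.
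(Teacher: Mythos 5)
Your proposal is correct, and it is in the same spirit as the paper's argument: both ultimately control the floor $\lfloor [x]_r/N\rfloor$ with $N=\beta_{m-i}^{m-r}$ and exploit that the perturbations relevant here are bounded by $\mathpzc{U}_r^*<N$. However, your version is genuinely tighter, and it actually repairs a small gap in the paper's proof. The paper writes $[x']_r=[x]_r+a\beta_{m-\xi}^{m-r}+b$ with $|b|<\mathpzc{U}_r^*$ and then sandwiches $[x']_i$ via
\begin{equation*}
\left\lfloor[x']_r\tfrac{\gamma_{m-i}^*}{\gamma_{m-r}^*}\right\rfloor\leq[x']_i< \left\lfloor ([x']_r+1)\tfrac{\gamma_{m-i}^*}{\gamma_{m-r}^*}\right\rfloor,
\end{equation*}
but the strict right-hand inequality is not justified: from $[x']_i<([x']_r+1)/N$ one only gets $[x']_i\le\lfloor([x']_r+1)/N\rfloor$, with equality whenever $N\nmid([x']_r+1)$ (which is the generic case). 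Carrying the non-strict inequality through, the paper's pinching yields only $[x']_i\in\{[x]_i+a\beta_{m-\xi}^{m-i},\,[x]_i+1+a\beta_{m-\xi}^{m-i}\}$, leaving a potential off-by-one that must still be excluded.

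Your proof closes exactly this gap. The clean identity $[x]_i=\lfloor[x]_r/N\rfloor$ (coming from $c_r^iN<1$, with $c_r^iN=\tfrac12\sum_{1<j\le r}\gamma_{m-r}^*/\gamma_{m-j}^*\le 3/5$), together with the rewriting $[\llparenthesis x\rrparenthesis_\xi]_r=[x]_\xi\beta_{m-\xi}^{m-r}+C_{r,\xi}$, reduces the hypothesis to $\llbracket x\rrbracket_r^{min}\equiv\llbracket x'\rrbracket_r^{min}\pmod{\beta_{m-\xi}^{m-r}}$. The decisive step is then the pair of divisibility facts $\mathpzc{U}_r^*\mid N$ and $\mathpzc{U}_r^*\mid\llbracket x\rrbracket_r^{min}$, which force $\beta\bmod N\le N-\mathpzc{U}_r^*$ so that adding $d_x,d_{x'}\in[0,\mathpzc{U}_r^*)$ cannot cross a multiple of $N$. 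This is precisely the structural reason the floor does not jump — the fact the paper is implicitly leaning on but does not articulate. In short, same route, but you have supplied the ingredient that makes the argument airtight. One minor quibble: your numerical verification phrases $c_r^iN<1$ as following from $\mathpzc{U}_\ell^*\ge 3$ alone; what is actually used is $\gamma_{m-r}^*/\gamma_{m-j}^*=\prod_{p=j}^{r-1}(2^p\mathpzc{U}_p^*)^{-1}$, where either the $\mathpzc{U}_p^*\ge 3$ or the $2^p\ge 2$ factor gives a geometric decay that bounds the tail sum strictly below $1$, so the conclusion is correct but the attribution could be stated more precisely.
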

Note that the condition ``$\llbracket x\rrbracket_{r}^{min}\!-\![\llparenthesis x\rrparenthesis_\xi]_{r}\!\equiv\llbracket x^{\prime}\rrbracket_{r}^{min}\!-
\![\llparenthesis x^{\prime}\rrparenthesis_\xi]_{r}\!$ (mod $\beta_{m-\xi}^{m-r}$)'' is similar to the condition 1$^\diamond$ in the proof of Lemma \ref{winning-strategy-in-k=3}. By remark \ref{remark-ommit-mod}, it is equivalent to ``$\llbracket x\rrbracket_{r}^{min}\!-\![\llparenthesis x\rrparenthesis_\xi]_{r}=\llbracket x^{\prime}\rrbracket_{r}^{min}\!-
\![\llparenthesis x^{\prime}\rrparenthesis_\xi]_{r}\!$''. 

The following lemma says something similar: two distances (between vertices) are equivalent w.r.t. coarser scale if they are equivalent w.r.t. finer scale. 
It is similar to Lemma \ref{abstraction-strategy-premier}. 
 See Fig. \ref{app-copycat-lemmas-2}. Similar to the last figure, vertices $a$, $b$, $c$, $d$, $a^\prime$, $b^\prime$, $c^\prime$ and $d^\prime$ stand for $([x]_r,y)$, $(\llbracket x\rrbracket_r^{min},y)$, $([\llparenthesis x\rrparenthesis_i]_{r},y)$, $([\llparenthesis x\rrparenthesis_\xi]_{r},y)$, $([x^\prime]_r,y)$, $(\llbracket x^\prime\rrbracket_r^{min},y)$,  $([\llparenthesis x^\prime\rrparenthesis_i]_{r},y)$ and $([\llparenthesis x^\prime\rrparenthesis_\xi]_{r},y)$ respectively. Vertices of different colour stand for vertices of different indices: black vertices have index $\xi$; the grey ones have index $r$; while the blue ones have index $i$ that is strictly between $\xi$ and $r$. Vertices of the same colour have the same index. The set of vertices under one black brace stand for a $\mathpzc{U}_r^*$-tuple. The number of vertices above a red brace stand for $\llbracket x\rrbracket_{r}^{min}\!-\![\llparenthesis x\rrparenthesis_\xi]_{r}$.
The number of vertices above a yellow brace is $\llbracket x\rrbracket_{r}^{min}\!-\![\llparenthesis x\rrparenthesis_{i}]_{r}$.

\begin{lemma}\label{approxi-copy-cat-1}
For any $1\!\leq\! r\!<\!\xi\!-1\!<\! m$ and $(x,y),(x^{\prime},y)\!\in\! \mathbb{X}_r^*-\mathbb{X}_{r+1}^*$, if 
$\llbracket x\rrbracket_{r}^{min}-[\llparenthesis x\rrparenthesis_\xi]_{r}\!\equiv\llbracket x^{\prime}\rrbracket_{r}^{min}-
[\llparenthesis x^{\prime}\rrparenthesis_\xi]_{r}\!$ (mod $\beta_{m-\xi}^{m-r}$), then 
$\llbracket x\rrbracket_{r}^{min}\!-\![\llparenthesis x\rrparenthesis_{i}]_{r}\!\equiv\!\llbracket x^{\prime}\rrbracket_{r}^{min}\!-
\![\llparenthesis x^{\prime}\rrparenthesis_{i}]_{r}\!\!$ (mod $\beta_{m-\xi}^{m-r}$) for any $r<i<\xi$.
\end{lemma}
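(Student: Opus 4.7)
The plan is to reduce Lemma \ref{approxi-copy-cat-1} to Lemma \ref{approxi-copy-cat} by an explicit arithmetic computation of $[\llparenthesis x\rrparenthesis_i]_r$ in terms of $[x]_i$, plus a cancellation of a term that depends only on $i$ and $r$. Concretely, first I would unfold the defintion to get
\begin{equation*}
\llparenthesis x\rrparenthesis_i \;=\; [x]_i \beta_{m-i}^{m-1} + \tfrac{1}{2}\sum_{1<\ell\leq i}\beta_{m-\ell}^{m-1},
\end{equation*}
and then divide by $\beta_{m-r}^{m-1}$. Since $[x]_i\beta_{m-i}^{m-r}$ is an integer, pulling it outside the floor yields
\begin{equation*}
[\llparenthesis x\rrparenthesis_i]_r \;=\; [x]_i\,\beta_{m-i}^{m-r} + K_{i,r},
\qquad K_{i,r}:=\left\lfloor\tfrac{1}{2}\sum_{1<\ell\leq i}\beta_{m-\ell}^{m-1}/\beta_{m-r}^{m-1}\right\rfloor,
\end{equation*}
so the ``constant'' $K_{i,r}$ depends only on $i$ and $r$, not on $x$. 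The same formula holds with $x$ replaced by $x'$ and with $i$ replaced by $\xi$.

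Next I would subtract the two identities for $x$ and $x'$ to obtain
\begin{equation*}
[\llparenthesis x\rrparenthesis_i]_r-[\llparenthesis x'\rrparenthesis_i]_r \;=\; \bigl([x]_i-[x']_i\bigr)\beta_{m-i}^{m-r},
\end{equation*}
and likewise for $\xi$. Now I would invoke Lemma \ref{approxi-copy-cat}, whose hypothesis is identical to ours, to conclude that $[x]_i\equiv [x']_i \pmod{\beta_{m-\xi}^{m-i}}$ for every $r<i<\xi$. Combined with the multiplicativity $\beta_{m-\xi}^{m-i}\cdot\beta_{m-i}^{m-r}=\beta_{m-\xi}^{m-r}$ (immediate from the definition $\beta_j^i=\gamma_i^*/\gamma_j^*$), this gives
\begin{equation*}
[\llparenthesis x\rrparenthesis_i]_r-[\llparenthesis x'\rrparenthesis_i]_r \;\equiv\; 0 \pmod{\beta_{m-\xi}^{m-r}},
\end{equation*}
and the analogous congruence with $i$ replaced by $\xi$.

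Finally I would rearrange the conclusion of the lemma to the equivalent form
\begin{equation*}
\llbracket x\rrbracket_{r}^{min}-\llbracket x'\rrbracket_{r}^{min} \;\equiv\; [\llparenthesis x\rrparenthesis_i]_r-[\llparenthesis x'\rrparenthesis_i]_r \pmod{\beta_{m-\xi}^{m-r}},
\end{equation*}
and note that the hypothesis of Lemma \ref{approxi-copy-cat-1}, rearranged similarly, says $\llbracket x\rrbracket_{r}^{min}-\llbracket x'\rrbracket_{r}^{min} \equiv [\llparenthesis x\rrparenthesis_\xi]_r-[\llparenthesis x'\rrparenthesis_\xi]_r \pmod{\beta_{m-\xi}^{m-r}}$. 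By the previous step both right-hand sides are congruent to $0$ modulo $\beta_{m-\xi}^{m-r}$, hence equal to each other modulo $\beta_{m-\xi}^{m-r}$, which is exactly the desired conclusion.

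I do not expect any real obstacle; the only mildly delicate step is verifying that the $K_{i,r}$ term survives the floor intact (which it does because $[x]_i\beta_{m-i}^{m-r}$ is an integer) and that the partial sum $\frac{1}{2}\sum_{1<\ell\leq i}\beta_{m-\ell}^{m-1}/\beta_{m-r}^{m-1}$ is a fixed rational number independent of $x$. Everything then follows from Lemma \ref{approxi-copy-cat} and the multiplicative relation among the $\beta$'s, so the argument is essentially algebraic bookkeeping.
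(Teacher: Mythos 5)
Your proof is correct, and it takes a genuinely shorter and more direct route than the paper's own argument. You reduce everything to a single exact identity $[\llparenthesis x\rrparenthesis_i]_r = [x]_i\,\beta_{m-i}^{m-r} + K_{i,r}$ (where the constant $K_{i,r}=\tfrac{1}{2}\sum_{r<\ell\leq i}\beta_{m-\ell}^{m-r}$ is indeed independent of $x$, and the floor does peel off cleanly because the sum of the $\ell\le r$ terms is strictly less than $1$), subtract to cancel $K_{i,r}$, invoke Lemma~\ref{approxi-copy-cat} for the congruence on $[x]_i$, and close with $\beta_{m-\xi}^{m-i}\cdot\beta_{m-i}^{m-r}=\beta_{m-\xi}^{m-r}$. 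A pleasant bonus of your formulation, which you exploit implicitly, is that the $i=\xi$ analogue of the identity shows $[\llparenthesis x\rrparenthesis_\xi]_r-[\llparenthesis x'\rrparenthesis_\xi]_r$ is automatically $\equiv 0 \pmod{\beta_{m-\xi}^{m-r}}$ without any appeal to Lemma~\ref{approxi-copy-cat} (which in any case does not cover $i=\xi$). The paper's proof instead first establishes a cross-difference congruence specifically for $i=\xi-1$ (its equation (A.3), \eqref{app-corollary-copycat-eqn1}), then carries out a separate and fairly lengthy argument to show that $x,\llparenthesis x\rrparenthesis_{\xi-1},\llparenthesis x\rrparenthesis_\xi$ and $x',\llparenthesis x'\rrparenthesis_{\xi-1},\llparenthesis x'\rrparenthesis_\xi$ have the same relative ordering, before concluding; this order comparison is a by-product you never need, since modular cancellation does the whole job. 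Your version handles all $r<i<\xi$ uniformly rather than by the implicit descent from $\xi-1$, and is essentially algebraic bookkeeping on top of Lemma~\ref{approxi-copy-cat}, so it is the more economical of the two.
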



The following lemma says that, modulo some amount, if two vertices are (approximately) in the same position in lower abstraction, then not only so are they in higher abstraction, but the indices of their projections in higher abstractions are precisely the same. See Fig. \ref{app-copycat-lemmas-2}. 
\begin{lemma}\label{corollary-approxi-copy-cat}
For any $1\!\leq\! r\!<\!\xi\!\leq\! m$ and  $(x,y),(x^{\prime},y)\!\in\! \mathbb{X}_r^*\!-\!\mathbb{X}_{r+1}^*$, if 
$\llbracket x\rrbracket_{r}^{min}\!-\![\llparenthesis x\rrparenthesis_\xi]_{r}\!\equiv\llbracket x^{\prime}\rrbracket_{r}^{min}\!-
\![\llparenthesis x^{\prime}\rrparenthesis_\xi]_{r}\not\equiv 0\hspace{2pt}(\mbox{mod }\beta_{m-\xi}^{m-r})$ 
 then for any $i$ where $r<i<\xi$, $$\mathrm{idx}(\llparenthesis x\rrparenthesis_i,y)=\mathrm{idx}(\llparenthesis x^{\prime}\rrparenthesis_i,y),$$ if $\mathrm{idx}(\llparenthesis x\rrparenthesis_i,y)<\xi$ and  $\mathrm{idx}(\llparenthesis x^\prime\rrparenthesis_i,y)<\xi$.
\end{lemma}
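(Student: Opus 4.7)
The plan is to reduce the index equality to an equality of projections, using Lemma \ref{approxi-copy-cat-1} together with size bounds coming from Fact \ref{unit-distance} and Lemma \ref{proj-abs}. By Lemma \ref{proj-abs}(2), for any $i\leq j$ we have $\llparenthesis\llparenthesis x\rrparenthesis_i\rrparenthesis_j=\llparenthesis x\rrparenthesis_j$, so $\mathrm{idx}(\llparenthesis x\rrparenthesis_i,y)$ equals the largest $j\in[i,m]$ with $\llparenthesis x\rrparenthesis_i=\llparenthesis x\rrparenthesis_j$. Writing $j_x:=\mathrm{idx}(\llparenthesis x\rrparenthesis_i,y)$ and $j_{x'}:=\mathrm{idx}(\llparenthesis x'\rrparenthesis_i,y)$, the hypothesis $j_x,j_{x'}<\xi$ means it suffices to show that for every $j$ with $i\leq j<\xi$,
\[
\llparenthesis x\rrparenthesis_j=\llparenthesis x\rrparenthesis_{j+1}\;\Longleftrightarrow\;\llparenthesis x'\rrparenthesis_j=\llparenthesis x'\rrparenthesis_{j+1}.
\]

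First, I would apply Lemma \ref{approxi-copy-cat-1} to obtain, for every $i'\in(r,\xi)$, the congruence $\llbracket x\rrbracket_{r}^{min}-[\llparenthesis x\rrparenthesis_{i'}]_{r}\equiv\llbracket x'\rrbracket_{r}^{min}-[\llparenthesis x'\rrparenthesis_{i'}]_{r}\pmod{\beta_{m-\xi}^{m-r}}$. Subtracting the instances at $j$ and $j+1$ (both still strictly below $\xi$) gives
\[
[\llparenthesis x\rrparenthesis_{j+1}]_{r}-[\llparenthesis x\rrparenthesis_{j}]_{r}\equiv[\llparenthesis x'\rrparenthesis_{j+1}]_{r}-[\llparenthesis x'\rrparenthesis_{j}]_{r}\pmod{\beta_{m-\xi}^{m-r}}.
\]
Since $\llparenthesis x\rrparenthesis_{j},\llparenthesis x\rrparenthesis_{j+1}\in\mathbb{X}_r^*$ (by Lemma \ref{proj-greater-index} together with Lemma \ref{i=0theni-1=0}), their tail expansions $\tfrac{1}{2}\sum_{1<\ell\le r}\beta_{m-\ell}^{m-1}$ cancel, so $\llparenthesis x\rrparenthesis_{j+1}-\llparenthesis x\rrparenthesis_{j}=\bigl([\llparenthesis x\rrparenthesis_{j+1}]_{r}-[\llparenthesis x\rrparenthesis_{j}]_{r}\bigr)\beta_{m-r}^{m-1}$. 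Moreover, as both of these projections lie in $\mathbb{X}_j^*$, Fact \ref{unit-distance} forces $[\llparenthesis x\rrparenthesis_{j+1}]_{r}-[\llparenthesis x\rrparenthesis_{j}]_{r}\in \beta_{m-j}^{m-r}\cdot\mathbb{Z}$, and likewise for $x'$. So each side of the displayed congruence is a multiple of $\beta_{m-j}^{m-r}$.

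The core of the argument is then to upgrade the congruence to an equality. Using the definitions of $\llparenthesis\cdot\rrparenthesis$ and the geometric bound on $[\llparenthesis x\rrparenthesis_{i'}]_r-[\llparenthesis x\rrparenthesis_\xi]_r$ explained in Remark \ref{remark-ommit-mod} (which shows $|[\llparenthesis x\rrparenthesis_{i'}]_r-[x]_r|<\beta_{m-i'}^{m-r}$ and hence that the differences of interest are small compared with $\beta_{m-\xi}^{m-r}$), both $[\llparenthesis x\rrparenthesis_{j+1}]_r-[\llparenthesis x\rrparenthesis_j]_r$ and $[\llparenthesis x'\rrparenthesis_{j+1}]_r-[\llparenthesis x'\rrparenthesis_j]_r$ have absolute values strictly smaller than $\beta_{m-\xi}^{m-r}$; the congruence therefore forces an honest equality. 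In particular each is zero precisely when the other is, which is exactly the equivalence $\llparenthesis x\rrparenthesis_j=\llparenthesis x\rrparenthesis_{j+1}\Leftrightarrow\llparenthesis x'\rrparenthesis_j=\llparenthesis x'\rrparenthesis_{j+1}$. Ranging $j$ over $[i,\xi)$ yields $j_x=j_{x'}$, which is the conclusion. The main obstacle in this proof is carefully carrying out the upgrade from congruence to equality in Step 3: one has to keep track of the precise magnitudes of $[\llparenthesis x\rrparenthesis_{i'}]_r-[\llparenthesis x\rrparenthesis_\xi]_r$ for varying $i'$ and verify that no wrap-around modulo $\beta_{m-\xi}^{m-r}$ can occur, which is where the hypothesis $\not\equiv 0\pmod{\beta_{m-\xi}^{m-r}}$ together with the strict bounds $j_x,j_{x'}<\xi$ are essential.
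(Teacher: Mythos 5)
You take a genuinely different route from the paper, and the high-level strategy is sound. The paper argues by contradiction: it locates the largest level $b$ at which the indices differ, introduces the sequence $\mathrm{IDX}(1),\ldots,\mathrm{IDX}(d)$ of distinct index values between $\llparenthesis x\rrparenthesis_{b+1}$ and $\llparenthesis x\rrparenthesis_\xi$, and, via Claim \ref{app-equal-index-2-equal-len-2} and Facts \ref{app-equal-index-2-equal-len-fac1}--\ref{app-equal-index-2-equal-len-fac3} together with a sign/ordering case analysis (where the hypothesis $\not\equiv 0$ is used to fix $\llparenthesis x\rrparenthesis_\xi\geq x\Leftrightarrow\llparenthesis x'\rrparenthesis_\xi\geq x'$), shows $\left|[\llparenthesis x\rrparenthesis_\xi]_{r+1}-[x]_{r+1}\right|\neq\left|[\llparenthesis x'\rrparenthesis_\xi]_{r+1}-[x']_{r+1}\right|$ and derives a contradiction with Lemma \ref{approxi-copy-cat}. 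Your reduction — observing via Lemma \ref{proj-abs}(2) that $\mathrm{idx}(\llparenthesis x\rrparenthesis_i,y)$ is the first $j\geq i$ at which consecutive projections separate, and then matching the binary pattern of separations — is considerably more transparent.

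There is, however, a genuine gap in the magnitude estimate. The triangle inequality applied to Remark \ref{remark-ommit-mod}'s bound at levels $j$ and $j+1$ gives only $\left|[\llparenthesis x\rrparenthesis_{j+1}]_r-[\llparenthesis x\rrparenthesis_j]_r\right|<\beta_{m-j-1}^{m-r}+\beta_{m-j}^{m-r}$, which in the boundary case $j+1=\xi$ equals $\beta_{m-\xi}^{m-r}+\beta_{m-\xi+1}^{m-r}$ and is therefore \emph{not} strictly below $\beta_{m-\xi}^{m-r}$ — and you do need that case, since $j$ can be as large as $\xi-1$. To fix this you need the sharper geometric fact (from Fact \ref{specialcase-fact-surrounding} via Lemma \ref{proj-abs}(2)) that $\llparenthesis x\rrparenthesis_{j+1}$ sits in the middle of the length-$\beta_{m-j-1}^{m-j}$ interval of $\mathbb{X}_j^*$-vertices that project to it, so that actually $\left|[\llparenthesis x\rrparenthesis_{j+1}]_r-[\llparenthesis x\rrparenthesis_j]_r\right|\le\tfrac{1}{2}\beta_{m-j-1}^{m-r}\le\tfrac{1}{2}\beta_{m-\xi}^{m-r}$. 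Separately, even granting your stated bound, the inference ``$|a|,|b|<L$ and $a\equiv b\ (\mathrm{mod}\ L)$ forces $a=b$'' is not correct — it only gives $a-b\in\{0,\pm L\}$; what you actually use is just $a=0\Leftrightarrow b=0$, which does follow (and with the sharper $\tfrac12$-bound the differences lie in a common window of width $<\beta_{m-\xi}^{m-r}$, so true equality is recoverable). Finally, your argument never visibly invokes the $\not\equiv 0$ hypothesis that the paper's proof relies on for sign bookkeeping; that discrepancy is worth noting when you tighten the bound, either because your route genuinely avoids it or because it is hiding in the reduction.
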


In the following we give the proofs of these lemmas. 

\textbf{Proof of Lemma \ref{approxi-copy-cat}}.
\begin{proof}
 
By definition,  
$\llbracket x\rrbracket_{r}^{min}\!-\![\llparenthesis x\rrparenthesis_\xi]_{r}\!\equiv\llbracket x^{\prime}\rrbracket_{r}^{min}\!-
\![\llparenthesis x^{\prime}\rrparenthesis_\xi]_{r}\!$ (mod $\beta_{m-\xi}^{m-r}$) implies that  $$\mid[x]_r \mbox{ mod } \beta_{m-\xi}^{m-r}-[x^{\prime}]_r \mbox{ mod } \beta_{m-\xi}^{m-r}\mid\, <\mathpzc{U}_{r}^*$$ Let $[x^{\prime}]_r=[x]_r+a\beta_{m-\xi}^{m-r}+b$ where $a,b\in \mathbf{Z}$ and $|b|< \mathpzc{U}_{r}^*$.

By definition, for any $x$ and $i$ (recall that $r<i<\xi$),\\[-8pt] 
$$[x^{\prime}]_i=\left\lfloor \frac{x^{\prime}}{\beta_{m-r}^{m-1}}\times\frac{\beta_{m-r}^{m-1}}{\beta_{m-i}^{m-1}}\right\rfloor=\left\lfloor \frac{x^{\prime}}{\beta_{m-r}^{m-1}}\times\frac{\gamma_{m-i}^*}{\gamma_{m-r}^*}\right\rfloor.$$ 

Note that $[x^{\prime}]_r\leq\frac{x^{\prime}}{\beta_{m-r}^{m-1}}< [x^{\prime}]_r+1$. 

Hence, $\left\lfloor[x^{\prime}]_r\frac{\gamma_{m-i}^*}{\gamma_{m-r}^*}\right\rfloor\leq[x^{\prime}]_i< \left\lfloor \left([x^{\prime}]_r+1\right)\frac{\gamma_{m-i}^*}{\gamma_{m-r}^*}\right\rfloor$. 

In other words,\\[-10pt]  
\begin{equation*}
\left\lfloor\left([x]_r+a\beta_{m-\xi}^{m-r}+b\right)\frac{\gamma_{m-i}^*}{\gamma_{m-r}^*}\right\rfloor\leq[x^{\prime}]_i< \left\lfloor \left([x]_r+a\beta_{m-\xi}^{m-r}+b+1\right)\frac{\gamma_{m-i}^*}{\gamma_{m-r}^*}\right\rfloor.
\end{equation*} 

Similarly, by definition,  
$$[x]_r=\left\lfloor \frac{x}{\beta_{m-i}^{m-1}}\times\frac{\beta_{m-i}^{m-1}}{\beta_{m-r}^{m-1}}\right\rfloor=\left\lfloor \frac{x}{\beta_{m-i}^{m-1}}\times\frac{\gamma_{m-r}^*}{\gamma_{m-i}^*}\right\rfloor.$$ 

Because of $[x]_i\leq\frac{x}{\beta_{m-i}^{m-1}}< [x]_i+1$ and $\frac{\gamma_{m-r}^*}{\gamma_{m-i}^*}\in\mathbf{N}^+$,

$\indent\qquad [x]_i\frac{\gamma_{m-r}^*}{\gamma_{m-i}^*}\leq [x]_r<([x]_i+1)\frac{\gamma_{m-r}^*}{\gamma_{m-i}^*}$. 

Hence 
\begin{multline*}
\left\lfloor\left([x]_i\frac{\gamma_{m-r}^*}{\gamma_{m-i}^*}+
a\beta_{m-\xi}^{m-r}+b\right)\frac{\gamma_{m-i}^*}{\gamma_{m-r}^*}\right\rfloor \leq  [x^{\prime}]_i \\
< \left\lfloor \left(\left([x]_i+1\right)\frac{\gamma_{m-r}^*}{\gamma_{m-i}^*}+a\beta_{m-\xi}^{m-r}+b+1\right)
\frac{\gamma_{m-i}^*}{\gamma_{m-r}^*}\right\rfloor.
\end{multline*} 

Observe that $|(b+1)\times \frac{\gamma_{m-i}^*}{\gamma_{m-r}^*}|<1$, and $\beta_{m-\xi}^{m-r}=\beta_{m-\xi}^{m-i}\times\frac{\gamma_{m-r}^*}{\gamma_{m-i}^*}$. 

Therefore, $[x]_i+a\beta_{m-\xi}^{m-i}\leq [x^{\prime}]_i<[x]_i+1+a\beta_{m-\xi}^{m-i}$. 

Therefore,  $[x]_i\equiv [x^{\prime}]_i$ (mod $\beta_{m\!-\!\xi}^{m\!-\!i}$), for any $r\!<\!i\!< \!\xi$.
\end{proof}

\textbf{Proof of Lemma \ref{approxi-copy-cat-1}}.
\begin{proof}
Firstly, we show that 
\begin{equation}\label{app-corollary-copycat-eqn1}
[\llparenthesis x\rrparenthesis_{\xi-1}]_r\!-\![\llparenthesis x\rrparenthesis_{\xi}]_r\!\equiv\! [\llparenthesis x^\prime\rrparenthesis_{\xi-1}]_r\!-\![\llparenthesis x^\prime\rrparenthesis_{\xi}]_r \hspace{3pt}(\mbox{mod }\beta_{m-\xi}^{m-r})
\end{equation}

By Lemma \ref{approxi-copy-cat}, we have  
\begin{equation*}
[x]_{\xi-1}\equiv [x^\prime]_{\xi-1} \mbox{ mod }\beta_{m-\xi}^{m-\xi+1}.
\end{equation*}

Let $[x]_{\xi-1}:=[x^\prime]_{\xi-1}+a\beta_{m-\xi}^{m-\xi+1}$ for some $a\in\mathbf{N}^+$.

For any $r<i\leq \xi$, 
\begin{equation}\label{app-corollary-copycat-eqn3}
\begin{split}
[\llparenthesis x\rrparenthesis_{i}]_r &=
\left\lfloor\frac{[x]_i\beta_{m-i}^{m-1}+\frac{1}{2}\sum_{1<j\leq i}\beta_{m-j}^{m-1}}{\beta_{m-r}^{m-1}} \right\rfloor \\
&= [x]_i\beta_{m-i}^{m-r}+\frac{1}{2}\sum_{r<j\leq i}\beta_{m-j}^{m-r}
\end{split}
\end{equation}

Let $\psi_0:=([\llparenthesis x\rrparenthesis_{\xi-1}]_r-[\llparenthesis x\rrparenthesis_{\xi}]_r)-([\llparenthesis x^\prime\rrparenthesis_{\xi-1}]_r-[\llparenthesis x^\prime\rrparenthesis_{\xi}]_r)$.

We have 
\begin{equation*}
\begin{split}
\psi_0&=([x]_{\xi-1}\beta_{m-\xi+1}^{m-r}-[x^\prime]_{\xi-1}\beta_{m-\xi+1}^{m-r})-([x]_{\xi}\beta_{m-\xi}^{m-r}-[x^\prime]_{\xi}\beta_{m-\xi}^{m-r})\\
&=\beta_{m-\xi+1}^{m-r}\left([x]_{\xi-1}-[x^\prime]_{\xi-1}\right)- \beta_{m-\xi}^{m-r}\left([x]_{\xi}-[x^\prime]_{\xi}\right)\\
&=a\beta_{m-\xi}^{m-\xi+1}\beta_{m-\xi+1}^{m-r}-
  \beta_{m-\xi}^{m-r}\left([x]_{\xi}-[x^\prime]_{\xi}\right)\\
&=\beta_{m-\xi}^{m-r}\left(a-\left([x]_{\xi}-[x^\prime]_{\xi}\right)\right).
\end{split}
\end{equation*}

Therefore, the claim (\ref{app-corollary-copycat-eqn1}) holds. 

Secondly, we can prove that $x, \llparenthesis x\rrparenthesis_{\xi-1}, \llparenthesis x\rrparenthesis_{\xi}$ has the same order as $x^\prime$, $\llparenthesis x^\prime\rrparenthesis_{\xi-1}$, $\llparenthesis x^\prime\rrparenthesis_{\xi}$ does. For example, $x\geq \llparenthesis x\rrparenthesis_{\xi-1}\geq\llparenthesis x\rrparenthesis_{\xi}$ if and only if  $x^\prime\geq\llparenthesis x^\prime\rrparenthesis_{\xi-1}\geq\llparenthesis x^\prime\rrparenthesis_{\xi}$. Here, 
we only prove the special case $x\geq \llparenthesis x\rrparenthesis_{\xi-1}\geq\llparenthesis x\rrparenthesis_{\xi}$. The other cases are not very different.

By (\ref{app-corollary-copycat-eqn3}), 
\begin{equation}\label{app-corollary-copycat-eqn2}
\begin{split}
[\llparenthesis x\rrparenthesis_\xi]_r &= 
[x]_\xi\beta_{m-\xi}^{m-r}+\frac{1}{2}\sum_{r<i\leq \xi}
\beta_{m-i}^{m-r}\\
&= \beta_{m-r-1}^{m-r}\left(
[x]_\xi\beta_{m-\xi}^{m-r-1}+\frac{1}{2}\sum_{r<i\leq \xi}
\beta_{m-i}^{m-r-1}\right)\\
&= 2^r\mathpzc{U}_r^*\left(
[x]_\xi\beta_{m-\xi}^{m-r-1}+\frac{1}{2}\sum_{r<i\leq \xi}
\beta_{m-i}^{m-r-1}\right)\\
&= 2^r\mathpzc{U}_r^*\!\left(
[x]_\xi\beta_{m-\xi}^{m-r-1}\!+\!\frac{1}{2}\!\sum_{r+1<i\leq \xi}
\!\beta_{m-i}^{m-r-1}\right)+2^{r-1}\mathpzc{U}_r^*
\end{split}
\end{equation}

Note that $\beta_{m-\xi}^{m-r-1}$ is a  natural number, since $\xi>r+1$.  Therefore, $[\llparenthesis x\rrparenthesis_\xi]_r$ is divisible by $\mathpzc{U}_r^*$. Hence, $\llbracket \llparenthesis x\rrparenthesis_\xi\rrbracket_r^{min}=[\llparenthesis x\rrparenthesis_\xi]_r$. Together with the assumption that  
$\llbracket x\rrbracket_{r}^{min}\!-\![\llparenthesis x\rrparenthesis_\xi]_{r}\!\equiv\llbracket x^{\prime}\rrbracket_{r}^{min}\!-
\![\llparenthesis x^{\prime}\rrparenthesis_\xi]_{r}$ (mod $\beta_{m-\xi}^{m-r}$), it
 implies that $\llparenthesis x\rrparenthesis_\xi\leq x$ if and only if $\llparenthesis x^\prime\rrparenthesis_\xi\leq x^\prime$.

Assume for the purpose of a contradiction that $\llparenthesis x^\prime\rrparenthesis_{\xi-1}\!<\! \llparenthesis x^\prime\rrparenthesis_{\xi}$. 

By definition, we have  
\begin{equation*}
\begin{split}
[\llparenthesis x^\prime\rrparenthesis_{\xi-1}]_{\xi-1}&=\left\lfloor\frac{[x^\prime]_{\xi-1}\beta_{m-\xi+1}^{m-1}+\frac{1}{2}\sum_{1<j\leq \xi-1}\beta_{m-j}^{m-1}}{\beta_{m-\xi+1}^{m-1}} \right\rfloor\\
& =[x^\prime]_{\xi-1}.
\end{split}
\end{equation*}

Therefore, 
$[x^\prime]_{\xi-1}=
[\llparenthesis x^\prime\rrparenthesis_{\xi-1}]_{\xi-1}\leq
[\llparenthesis x^\prime\rrparenthesis_{\xi}]_{\xi-1}\leq [x^\prime]_{\xi-1}$, by $\llparenthesis x^\prime\rrparenthesis_{\xi-1}\!<\! \llparenthesis x^\prime\rrparenthesis_{\xi}\leq x^\prime$. 

By definition, for any $1\leq i\leq m$ and $(x^\prime,y)\in\mathbb{X}_1^*$, 
\begin{equation}
(\llparenthesis x^\prime\rrparenthesis_i,y)\in \mathbb{X}_i^*
\end{equation} 

In particular, $(\llparenthesis x^\prime\rrparenthesis_\xi,y)\!\in\! \mathbb{X}_{\xi}^*$.
Then by Lemma \ref{i=0theni-1=0}, $(\llparenthesis x^\prime\rrparenthesis_\xi,y)\!\in\! \mathbb{X}_{\xi-1}^*$. 

Therefore, by Lemma \ref{projection}, $\llparenthesis x^\prime\rrparenthesis_{\xi}=\llparenthesis x^\prime\rrparenthesis_{\xi-1}$. We arrive at a contradiction.

Now assume for the purpose of a contradiction that $x^\prime<\llparenthesis x^\prime\rrparenthesis_{\xi-1}$. Since both $x^\prime\in\mathbb{X}_r^*$ and $\llparenthesis x^\prime\rrparenthesis_{\xi-1}\in\mathbb{X}_r^*$ (by Lemma \ref{i=0theni-1=0}), we have  $[x^\prime]_r<[\llparenthesis x^\prime\rrparenthesis_{\xi-1}]_r$, by definition. 
Let $\varphi_0:=(\llbracket x\rrbracket_{r}^{min}\!-\![\llparenthesis x\rrparenthesis_\xi]_{r})\!\mbox{ mod }\beta_{m-\xi}^{m-r}$.
Then we have 
\begin{equation*} 
\begin{split}
\varphi_0&=(\llbracket x^{\prime}\rrbracket_{r}^{min}\!-
\![\llparenthesis x^{\prime}\rrparenthesis_\xi]_{r})\mbox{ mod }\beta_{m-\xi}^{m-r}\\
&\leq ([x^\prime]_r-[\llparenthesis x^{\prime}\rrparenthesis_\xi]_{r})\mbox{ mod }\beta_{m-\xi}^{m-r}\\
&<([\llparenthesis x^\prime\rrparenthesis_{\xi-1}]_r-[\llparenthesis x^\prime\rrparenthesis_{\xi}]_r)\mbox{ mod }\beta_{m-\xi}^{m-r}. 
\end{split}
\end{equation*}

On the other hand, by an argument akin to (\ref{app-corollary-copycat-eqn2}), we have    $\llbracket \llparenthesis x\rrparenthesis_{\xi-1}\rrbracket_r^{min}=[\llparenthesis x\rrparenthesis_{\xi-1}]_r$. Therefore, 
we also have that 
\begin{equation*}
\varphi_0\geq ([\llparenthesis x\rrparenthesis_{\xi-1}]_{r}-[\llparenthesis x\rrparenthesis_\xi]_{r})\mbox{ mod }\beta_{m-\xi}^{m-r}. 
\end{equation*}
By (\ref{app-corollary-copycat-eqn1}), $\varphi_0\geq [\llparenthesis x^\prime\rrparenthesis_{\xi-1}]_{r}\!-\![\llparenthesis x^\prime\rrparenthesis_\xi]_{r}\mbox{ mod }\beta_{m-\xi}^{m-r}$. A contradiction occurs. 

We have shown that 
\begin{equation}
x\!\geq\! \llparenthesis x\rrparenthesis_{\xi-1}\!\geq\!\llparenthesis x\rrparenthesis_{\xi}\Leftrightarrow x^\prime\!\geq\!\llparenthesis x^\prime\rrparenthesis_{\xi-1}\!\geq\!\llparenthesis x^\prime\rrparenthesis_{\xi}.
\end{equation}
Together with (\ref{app-corollary-copycat-eqn1}) and the assumption that $\llbracket x\rrbracket_{r}^{min}\!-\![\llparenthesis x\rrparenthesis_\xi]_{r}\equiv\llbracket x^\prime\rrbracket_{r}^{min}\!-\![\llparenthesis x^\prime\rrparenthesis_\xi]_{r}\!\hspace{3pt}(\mbox{mod }\beta_{m-\xi}^{m-r})$, the claim of the lemma holds. 
\end{proof}
 
\textbf{Proof of Lemma \ref{corollary-approxi-copy-cat}}.
\begin{proof}

Let $r^+:=r+1$. By definitions, $(\llparenthesis x\rrparenthesis_{r^+},y)\in\mathbb{X}_{r^+}^*$, thereby 
 $\mathrm{idx}(\llparenthesis x\rrparenthesis_{r^+},y)\geq r^+\geq 2$. 

The following proof is by contradiction. Suppose that for some $b$ where $r<b<\xi$, 
\begin{itemize}
\item $\mathrm{idx}(\llparenthesis x\rrparenthesis_{b},y), \mathrm{idx}(\llparenthesis x^{\prime}\rrparenthesis_{b},y)<\xi$;
\item $\mathrm{idx}(\llparenthesis x\rrparenthesis_{b},y)\neq\mathrm{idx}(\llparenthesis x^{\prime}\rrparenthesis_{b},y)$;
\item for any $t$ where $b<t<\xi$, $\mathrm{idx}(\llparenthesis x\rrparenthesis_t,y)=\mathrm{idx}(\llparenthesis x^{\prime}\rrparenthesis_t,y)$, if $\mathrm{idx}(\llparenthesis x\rrparenthesis_t,y)<\xi$ and $\mathrm{idx}(\llparenthesis x^{\prime}\rrparenthesis_t,y)<\xi$.

\end{itemize}
 
Recall that $\mathrm{idx}(\llparenthesis x\rrparenthesis_b,y)\geq b$. 
Suppose that $b=\xi-1$.  Because $\mathrm{idx}(\llparenthesis x\rrparenthesis_b,y)\neq\mathrm{idx}(\llparenthesis x^\prime\rrparenthesis_b,y)$, then either $\mathrm{idx}(\llparenthesis x\rrparenthesis_b,y)\geq \xi$ or $\mathrm{idx}(\llparenthesis x^\prime\rrparenthesis_b,y)\geq \xi$, a contradiction. 
 Therefore, $b<\xi-1$.


Let $\mathrm{idx}(\llparenthesis x\rrparenthesis_{b},y)\!=\!j$ and $\mathrm{idx}(\llparenthesis x^{\prime}\rrparenthesis_{b},y)\!=\!j^{\prime}$. Assume without loss of generality that $j\!>\!j^{\prime}$. The case wherein  $j\!<\!j^{\prime}$ is symmetric.

For any $(u,v)\in \mathbb{X}_r^*$ and any $r+2<s\leq \xi$, assume that  $\mathrm{idx}(\llparenthesis u\rrparenthesis_{s-1},v)\!=\! p$ and $\mathrm{idx}(\llparenthesis u\rrparenthesis_{s},v)\!=\!q\!>\! p$. Observe that, 
there is no $(u^{\prime},v)\in \mathbb{X}_r^*$ such that its index is $p+1$ and it is strictly between $\mathrm{idx}(\llparenthesis u\rrparenthesis_{s-1},v)$ and $\mathrm{idx}(\llparenthesis u\rrparenthesis_{s},v)$. Otherwise, $(u^{\prime},v)$ would be $\mathrm{idx}(\llparenthesis u\rrparenthesis_{s},v)$, a contradiction to the assumption that $u^{\prime}\neq\llparenthesis u\rrparenthesis_{s}$. 
Since $\llparenthesis u\rrparenthesis_{s}, \llparenthesis u\rrparenthesis_{s-1}\in \mathbb{X}_p^*$, 
by definition, for some $l\in [1,\frac{1}{2}\beta_{m-p-1}^{m-p}-\frac{1}{2}]$,  
\begin{equation*}\label{idx-distance-bound3}
\left|[\llparenthesis u\rrparenthesis_{s}]_{r^+}-[\llparenthesis u\rrparenthesis_{s-1}]_{r^+}\right|=l\cdot\beta_{m-p}^{m-r^+}.  
\end{equation*}

Hence, the following holds:                                                                                                                                                                                                                                                                                                                                                                                                                                                                                                                                                                                                                                                                                                                                                                                                                                                                                                                                                                                                                                                                                                                                                                                                                                                                                                                                                                                                                                                                                                                                                                                                                                                                                                                                                                                                                                                                                                                                                                                                                                                                                                                                                                                                                                                                                                       
\begin{eqnarray}
\beta_{m-p}^{m-r^+}&\leq& \left|[\llparenthesis u\rrparenthesis_{s}]_{r^+}-[\llparenthesis u\rrparenthesis_{s-1}]_{r^+}\right|\label{idx-distance-bound1}\\&\leq& \frac{1}{2}\beta_{m-p-1}^{m-r^+}-\frac{1}{2}\beta_{m-p}^{m-r^+} \label{idx-distance-bound2}
\end{eqnarray}

We use the observation that one unit of difference in higher abstraction means a tremendous difference in a lower one  (w.r.t. distance of first coordinates) to prove the following Claim. Before that,  we first prove some observations.

\begin{fact}\label{app-equal-index-2-equal-len-fac1}
If $m>i>i^\prime\geq r^+$, then \\
$\indent\hspace{40pt}\beta_{m-i-1}^{m-r^+}-
\beta_{m-i}^{m-r^+}>\beta_{m-i^\prime-1}^{m-r^+}-\beta_{m-i^\prime}^{m-r^+}$.
\end{fact}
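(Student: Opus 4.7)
The plan is to reduce the inequality to an identity-plus-estimate argument using the multiplicative structure of $\beta_j^i$, so essentially no ``geometric'' content is needed beyond the recursion defining $\gamma_i^*$. First I would record the two ingredients that make everything go through. From the definition $\beta_j^i = \gamma_i^*/\gamma_j^*$ one gets the telescoping identity $\beta_{j}^{i} = \beta_{j}^{\ell}\cdot\beta_{\ell}^{i}$ whenever $j\le\ell\le i$; and from the recursion $\gamma_{j}^* = 2^{m-j}\cdot \mathpzc{U}_{m-j}^*\cdot\gamma_{j-1}^*$ one gets in particular $\beta_{m-i-1}^{m-i} = \gamma_{m-i}^*/\gamma_{m-i-1}^* = 2^{i}\mathpzc{U}_{i}^{*}$ (and the analogous identity with $i$ replaced by $i'$).

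Next I would factor the left-hand and right-hand differences. Using the two ingredients above,
\[
\beta_{m-i-1}^{m-r^+}-\beta_{m-i}^{m-r^+} \;=\; \bigl(\beta_{m-i-1}^{m-i}-1\bigr)\,\beta_{m-i}^{m-r^+} \;=\; (2^{i}\mathpzc{U}_{i}^{*}-1)\,\beta_{m-i}^{m-r^+},
\]
and similarly $\beta_{m-i'-1}^{m-r^+}-\beta_{m-i'}^{m-r^+} = (2^{i'}\mathpzc{U}_{i'}^{*}-1)\,\beta_{m-i'}^{m-r^+}$. Since $i > i' \ge r^+$, I can factor $\beta_{m-i}^{m-r^+} = \beta_{m-i}^{m-i'}\cdot\beta_{m-i'}^{m-r^+}$, so dividing the desired inequality by $\beta_{m-i'}^{m-r^+}$ it suffices to show
\[
(2^{i}\mathpzc{U}_{i}^{*}-1)\,\beta_{m-i}^{m-i'} \;>\; 2^{i'}\mathpzc{U}_{i'}^{*}-1.
\]

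Finally I would bound $\beta_{m-i}^{m-i'}$ from below. The telescoping and the recursion for $\gamma$ give $\beta_{m-i}^{m-i'} = \prod_{\ell=i'}^{i-1} 2^{\ell}\mathpzc{U}_{\ell}^{*}$, and since $i-1\ge i'$ the factor $2^{i'}\mathpzc{U}_{i'}^{*}$ appears in this product, hence $\beta_{m-i}^{m-i'} \ge 2^{i'}\mathpzc{U}_{i'}^{*}$. Combined with the trivial bound $2^{i}\mathpzc{U}_{i}^{*}-1 \ge 1$ (which is immediate since $i\ge 2$ and $\mathpzc{U}_i^*\in\mathbf{N}^+$), this gives
\[
(2^{i}\mathpzc{U}_{i}^{*}-1)\,\beta_{m-i}^{m-i'} \;\ge\; 2^{i'}\mathpzc{U}_{i'}^{*} \;>\; 2^{i'}\mathpzc{U}_{i'}^{*}-1,
\]
which is the reduced inequality and therefore closes the argument.

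There is really no ``hard part'' here — the only thing to be careful about is keeping the strict inequality when passing through the ``$-1$'' terms; that is why I would leave the coarse bound $2^{i}\mathpzc{U}_i^* - 1 \ge 1$ at the end rather than absorbing it into the product, so that the final step produces a strict $>$ from the strict $2^{i'}\mathpzc{U}_{i'}^{*} > 2^{i'}\mathpzc{U}_{i'}^{*}-1$. All the identities invoked (telescoping of $\beta$, the quotient formula $\gamma_j^*/\gamma_{j-1}^* = 2^{m-j}\mathpzc{U}_{m-j}^*$, and $\mathpzc{U}_j^*\in\mathbf{N}^+$) are already established in Section~\ref{section-structures}, so no further preparation is required.
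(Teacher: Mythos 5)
Your proof is correct. It factors both differences as $(\beta_{m-j-1}^{m-j}-1)\,\beta_{m-j}^{m-r^+}$, divides through by $\beta_{m-i'}^{m-r^+}$, and closes with the crude lower bound $\beta_{m-i}^{m-i'}\ge 2^{i'}\mathpzc{U}_{i'}^*$. The paper's own proof instead bounds the difference of the two sides directly: it drops the subtrahend to get $\mbox{LHS}-\mbox{RHS} > \beta_{m-i-1}^{m-r^+}-2\beta_{m-i}^{m-r^+}$ (using, implicitly, that $\beta_{m-i}^{m-r^+}\ge\beta_{m-i'-1}^{m-r^+}>\beta_{m-i'-1}^{m-r^+}-\beta_{m-i'}^{m-r^+}$ since $m-i\le m-i'-1$), then factors out $\beta_{m-i}^{m-r^+}(2^i\mathpzc{U}_i^*-2)>0$. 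Both routes rest on exactly the same two facts — telescoping of $\beta$ and $\gamma_{j+1}^*/\gamma_j^*=2^{m-j-1}\mathpzc{U}_{m-j-1}^*$ — but your factor-and-compare formulation makes every step explicit, whereas the paper's first inequality leaves a monotonicity comparison unstated; your version is arguably easier to audit, at the cost of one extra normalization step (dividing by $\beta_{m-i'}^{m-r^+}$).
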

\underline{\textit{Proof of Fact:}}\\[4pt]
$\left(\beta_{m-i-1}^{m-r^+}-\beta_{m-i}^{m-r^+}\right)-
\left(\beta_{m-i^\prime-1}^{m-r^+}-\beta_{m-i^\prime}^{m-r^+}\right)$
\begin{eqnarray*}
&>& \beta_{m-i-1}^{m-r^+}-2\beta_{m-i}^{m-r^+}\\
&=& \beta_{m-i}^{m-r^+}\left(\beta_{m-i-1}^{m-i}-2\right)\\
&=& \beta_{m-i}^{m-r^+}\left(2^i\mathpzc{U}_i^*-2\right)\\
&>& 0.
\end{eqnarray*}
 
\underline{\textit{Q.E.D. of Fact.}}

\begin{fact}\label{app-equal-index-2-equal-len-fac2}
If $[\llparenthesis x\rrparenthesis_{\xi}]_{r^+}<[x]_{r^+}$, then
\begin{equation*}
[x]_{r^+} \mbox{ mod } \beta_{m-\xi}^{m-r^+}=
[x]_{r^+}-[\llparenthesis x\rrparenthesis_{\xi}]_{r^+}+\frac{1}{2}\sum_{r^+< i\leq \xi}\beta_{m-i}^{m-r^+}.
\end{equation*}  
\end{fact}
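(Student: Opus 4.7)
The identity is purely algebraic, so the strategy is to rewrite both $[\llparenthesis x\rrparenthesis_\xi]_{r^+}$ and $[x]_{r^+}$ in a form where the quotient and remainder modulo $\beta_{m-\xi}^{m-r^+}$ become transparent, and then check that the two representations differ by exactly $\tfrac{1}{2}\sum_{r^+ < i \leq \xi}\beta_{m-i}^{m-r^+}$. I will do this in two steps, following the template already established inside the proof of Lemma~\ref{approxi-copy-cat-1} (equation (3.6)).

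\textbf{Step 1: a closed form for $[\llparenthesis x\rrparenthesis_\xi]_{r^+}$.} Expanding $\llparenthesis x\rrparenthesis_\xi = [x]_\xi \beta_{m-\xi}^{m-1} + \tfrac{1}{2}\sum_{1<i\leq\xi}\beta_{m-i}^{m-1}$ and dividing by $\beta_{m-r^+}^{m-1}$, the terms with $i>r^+$ contribute integer multiples of $\beta_{m-i}^{m-r^+}$, whose halves are still integers since $\beta_{m-r^+-1}^{m-r^+} = 2^{r^+}\mathpzc{U}_{r^+}^*$ has a factor of $2$. The remaining piece $\tfrac{1}{2}\sum_{1<i\leq r^+}\beta_{m-i}^{m-1}/\beta_{m-r^+}^{m-1}$ is strictly less than $1$ by the same estimate used at (3.1) in the proof of Lemma~\ref{i=0theni-1=0}, so the floor removes it. This yields
\[
[\llparenthesis x\rrparenthesis_\xi]_{r^+} \;=\; [x]_\xi\,\beta_{m-\xi}^{m-r^+} \;+\; \tfrac{1}{2}\!\!\sum_{r^+ < i \leq \xi}\!\!\beta_{m-i}^{m-r^+}.
\]

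\textbf{Step 2: a quotient-remainder expression for $[x]_{r^+}$.} Since $\beta_{m-\xi}^{m-1} = \beta_{m-r^+}^{m-1}\cdot \beta_{m-\xi}^{m-r^+}$ and $\beta_{m-\xi}^{m-r^+}\in\mathbf{N}^+$, the standard nested floor identity gives $\lfloor [x]_{r^+}/\beta_{m-\xi}^{m-r^+}\rfloor = \lfloor x/\beta_{m-\xi}^{m-1}\rfloor = [x]_\xi$. Therefore
\[
[x]_{r^+} \;\bmod\; \beta_{m-\xi}^{m-r^+} \;=\; [x]_{r^+} - [x]_\xi\,\beta_{m-\xi}^{m-r^+}.
\]
Substituting the formula from Step 1 into the right-hand side of the desired identity produces exactly $[x]_{r^+} - [x]_\xi\,\beta_{m-\xi}^{m-r^+}$, and the two expressions agree.

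\textbf{Role of the hypothesis and main obstacle.} The assumption $[\llparenthesis x\rrparenthesis_{\xi}]_{r^+} < [x]_{r^+}$ is not needed for the algebraic identity itself; it is a sanity condition inherited from the context in which the fact is invoked (it forces $[x]_{r^+} - [x]_\xi\beta_{m-\xi}^{m-r^+}$ to exceed $\tfrac{1}{2}\sum_{r^+<i\leq\xi}\beta_{m-i}^{m-r^+}$, i.e., the remainder to lie in the ``upper half'' of a block). The only real obstacle is bookkeeping: one has to verify divisibility carefully so that the $\tfrac{1}{2}$-factors actually produce integers (handled by the $2^{r^+}\mathpzc{U}_{r^+}^*$ factor in $\beta_{m-r^+-1}^{m-r^+}$), and one has to check that the leftover sum $\tfrac{1}{2}\sum_{1<i\leq r^+}\beta_{m-i}^{m-1}$ is strictly less than $\beta_{m-r^+}^{m-1}$ so that the floor in Step 1 indeed drops it, for which the geometric-growth bound already appearing in the proof of Lemma~\ref{i=0theni-1=0} applies verbatim.
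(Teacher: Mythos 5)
Your proof is correct and follows essentially the same route as the paper's: derive the closed form $[\llparenthesis x\rrparenthesis_\xi]_{r^+}=[x]_\xi\beta_{m-\xi}^{m-r^+}+\tfrac{1}{2}\sum_{r^+<i\leq\xi}\beta_{m-i}^{m-r^+}$, then express the remainder as $[x]_{r^+}-[x]_\xi\beta_{m-\xi}^{m-r^+}$ and combine. The only presentational difference is in the second step: the paper writes $x=a\beta_{m-\xi}^{m-1}+h$ with $0<h<\beta_{m-\xi}^{m-1}$ and reads off $[x]_{r^+}=a\beta_{m-\xi}^{m-r^+}+[h]_{r^+}$, whereas you invoke the nested-floor identity $\lfloor[x]_{r^+}/\beta_{m-\xi}^{m-r^+}\rfloor=[x]_\xi$ directly; these are trivially equivalent. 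Your remark that the hypothesis $[\llparenthesis x\rrparenthesis_\xi]_{r^+}<[x]_{r^+}$ is not needed for the algebraic identity is also correct: the paper's proof invokes the ambient context only to assert $\llparenthesis x\rrparenthesis_\xi\neq x$ and thereby the strict lower bound $0<[h]_{r^+}$, but that bound plays no role in the final cancellation, so the identity holds unconditionally just as you observe.
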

\underline{\textit{Proof of Fact:}}\\[-4pt]

By definitions, 
\begin{eqnarray*}
[\llparenthesis x\rrparenthesis_{\xi}]_{r^+}&=& \left\lfloor \frac{[x]_\xi\beta_{m-\xi}^{m-1}+\frac{1}{2}\sum_{1<i\leq\xi}\beta_{m-i}^{m-1}}
{\beta_{m-r^+}^{m-1}}\right\rfloor\\
&=& [x]_\xi\beta_{m-\xi}^{m-r^+}+\displaystyle\frac{1}{2}\sum_{r^+< i\leq \xi}\beta_{m-i}^{m-r^+}.
\end{eqnarray*}

Note that $\llparenthesis x\rrparenthesis_\xi\neq x$, because $\llbracket x\rrbracket_{r}^{min}\!-\![\llparenthesis x\rrparenthesis_\xi]_{r}\!\equiv\llbracket x^{\prime}\rrbracket_{r}^{min}\!-
\![\llparenthesis x^{\prime}\rrparenthesis_\xi]_{r}\not\equiv 0\hspace{3pt}(\mbox{mod }\beta_{m-\xi}^{m-r})$ and $\llbracket\llparenthesis x\rrparenthesis_\xi \rrbracket_r^{min}=[\llparenthesis x\rrparenthesis_\xi]_{r}$. 
Hence, we assume that $x=a\beta_{m-\xi}^{m-1}+h$ where $0<h<\beta_{m-\xi}^{m-1}$. That is, $[x]_\xi=a$ and $[x]_{r^+}=a\beta_{m-\xi}^{m-r^+}+[h]_{r^+}$ ($\because \beta_{m-\xi}^{m-r^+}\in\mathbf{N}^+$), where  $0\!<\![h]_{r^+}\!<\!\beta_{m-\xi}^{m-r^+}$. Therefore, 
$[\llparenthesis x\rrparenthesis_{\xi}]_{r^+}\!-[x]_{r^+}=\frac{1}{2}\sum_{r^+< i\leq \xi}\beta_{m-i}^{m-r^+}-[h]_{r^+}$. 
Note that, $[x]_{r^+} \!\mbox{ mod } \beta_{m-\xi}^{m-r^+}=[h]_{r^+}$. Therefore, the fact holds.

\underline{\textit{Q.E.D. of Fact.}}\\[6pt]
Similarly, we can prove the following observation. 
\begin{fact}\label{app-equal-index-2-equal-len-fac3}
If $[\llparenthesis x\rrparenthesis_{\xi}]_{r^+}<[x]_{r^+}$, then
\begin{equation*}
[x^\prime]_{r^+} \mbox{ mod } \beta_{m-\xi}^{m-r^+}=
[x^\prime]_{r^+}-[\llparenthesis x^\prime\rrparenthesis_{\xi}]_{r^+}+\frac{1}{2}\sum_{r^+< i\leq \xi}\beta_{m-i}^{m-r^+}.
\end{equation*}  
\end{fact}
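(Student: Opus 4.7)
The plan is to mirror the proof of Fact \ref{app-equal-index-2-equal-len-fac2} with $x$ replaced by $x^\prime$ throughout, so the only real work is to certify that the two ingredients that proof uses (namely $x^\prime \neq \llparenthesis x^\prime\rrparenthesis_\xi$ and the decomposition $[x^\prime]_{r^+} = a^\prime \beta_{m-\xi}^{m-r^+} + [h^\prime]_{r^+}$ with $0 < [h^\prime]_{r^+} < \beta_{m-\xi}^{m-r^+}$) remain valid for $x^\prime$ under the global hypotheses of Lemma \ref{corollary-approxi-copy-cat}.

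First I would simply unfold the definitions to record
$$[\llparenthesis x^\prime\rrparenthesis_{\xi}]_{r^+} \;=\; [x^\prime]_\xi\, \beta_{m-\xi}^{m-r^+} + \tfrac{1}{2}\sum_{r^+ < i \leq \xi}\beta_{m-i}^{m-r^+},$$
exactly as was done for $x$ in Fact \ref{app-equal-index-2-equal-len-fac2}. Next, to ensure $x^\prime \neq \llparenthesis x^\prime\rrparenthesis_\xi$, I would invoke the ``$\not\equiv 0$'' clause in the hypothesis of Lemma \ref{corollary-approxi-copy-cat}: equality would force $\llbracket x^\prime\rrbracket_r^{min} - [\llparenthesis x^\prime\rrparenthesis_\xi]_r = 0$, which contradicts the stated non-triviality modulo $\beta_{m-\xi}^{m-r}$. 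Then, writing $x^\prime = a^\prime \beta_{m-\xi}^{m-1} + h^\prime$ with $0 < h^\prime < \beta_{m-\xi}^{m-1}$, the fact $\beta_{m-\xi}^{m-r^+} \in \mathbf{N}^+$ gives $[x^\prime]_{r^+} = a^\prime \beta_{m-\xi}^{m-r^+} + [h^\prime]_{r^+}$ with $[h^\prime]_{r^+}$ in the range $[0, \beta_{m-\xi}^{m-r^+})$; substituting into $[x^\prime]_{r^+} - [\llparenthesis x^\prime\rrparenthesis_\xi]_{r^+}$ cancels the $a^\prime$ term and yields
$$[x^\prime]_{r^+} - [\llparenthesis x^\prime\rrparenthesis_\xi]_{r^+} + \tfrac{1}{2}\sum_{r^+ < i \leq \xi}\beta_{m-i}^{m-r^+} \;=\; [h^\prime]_{r^+},$$
which is precisely $[x^\prime]_{r^+} \bmod \beta_{m-\xi}^{m-r^+}$.

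The main obstacle, and the only step that is not a mechanical transcription of Fact \ref{app-equal-index-2-equal-len-fac2}, is making use of the hypothesis $[\llparenthesis x\rrparenthesis_{\xi}]_{r^+} < [x]_{r^+}$ (stated for $x$, not for $x^\prime$) to conclude that $[h^\prime]_{r^+}$ is strictly positive, so that the modular expression above is genuinely equal to the claimed sum rather than to $0$. I would handle this by transporting the congruence $\llbracket x\rrbracket_r^{min} - [\llparenthesis x\rrparenthesis_\xi]_r \equiv \llbracket x^\prime\rrbracket_r^{min} - [\llparenthesis x^\prime\rrparenthesis_\xi]_r \pmod{\beta_{m-\xi}^{m-r}}$ from scale $r$ to the finer scale $r^+$ along the lines of the computation producing (\ref{app-corollary-copycat-eqn1}) in the proof of Lemma \ref{approxi-copy-cat-1}, yielding $[x]_{r^+} - [\llparenthesis x\rrparenthesis_\xi]_{r^+} \equiv [x^\prime]_{r^+} - [\llparenthesis x^\prime\rrparenthesis_\xi]_{r^+} \pmod{\beta_{m-\xi}^{m-r^+}}$. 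Combined with Fact \ref{app-equal-index-2-equal-len-fac2} (which identifies the left side as the nonzero residue $[x]_{r^+} \bmod \beta_{m-\xi}^{m-r^+}$), this forces the right side to be a positive residue as well, whence $[h^\prime]_{r^+} > 0$, and the identity of the fact follows exactly as above.
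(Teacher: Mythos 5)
Your proposal is correct and matches the paper's intent: the paper provides no explicit proof, merely the remark ``Similarly, we can prove the following observation,'' pointing back to the proof of Fact~\ref{app-equal-index-2-equal-len-fac2}, which is exactly the transcription you carry out. Your identification of the one genuinely non-mechanical ingredient --- transporting the case hypothesis from $x$ to $x'$ --- is a fair reading of what the ``Similarly'' leaves implicit.

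Two small remarks. First, your route to the sign transfer (Lemma~\ref{approxi-copy-cat} at scale $r^+$ combined with Fact~\ref{app-equal-index-2-equal-len-fac2} to extract a positive common residue) is sound, but the paper actually supplies a shorter path: immediately after this fact, inside the same proof, it records that the ``$\not\equiv 0$'' hypothesis of Lemma~\ref{corollary-approxi-copy-cat} gives $\llparenthesis x\rrparenthesis_{\xi}\geq x\Leftrightarrow\llparenthesis x^\prime\rrparenthesis_{\xi}\geq x^\prime$, which together with \eqref{corollary-HighOrder-is-LowOrder} transports the side condition directly from $x$ to $x^\prime$; your detour through the modular equality of $[x]_{r^+}$ and $[x^\prime]_{r^+}$ reaches the same place at the cost of one more lemma. (Also, the citation should be to Lemma~\ref{approxi-copy-cat} rather than to the computation producing \eqref{app-corollary-copycat-eqn1}.) Second, your worry that $[h^\prime]_{r^+}=0$ would spoil the identity is overstated: once you have the decomposition $[x^\prime]_{r^+}=a^\prime\beta_{m-\xi}^{m-r^+}+[h^\prime]_{r^+}$ with $0\leq [h^\prime]_{r^+}<\beta_{m-\xi}^{m-r^+}$, both sides of the claimed equation reduce identically to $[h^\prime]_{r^+}$, so the identity holds even at $[h^\prime]_{r^+}=0$; the strict positivity is not needed for the formula itself, though it is what makes the fact \emph{useful} in the ensuing contradiction argument. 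None of this affects correctness.
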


\begin{claim}\label{app-equal-index-2-equal-len-2}
$\left|[\llparenthesis x\rrparenthesis_{\xi}]_{r^+}-[\llparenthesis x\rrparenthesis_{b+1}]_{r^+}\right|$=$\left|[\llparenthesis x^{\prime}\rrparenthesis_{\xi}]_{r^+}-[\llparenthesis x^{\prime}\rrparenthesis_{b+1}]_{r^+}\right|$.
\end{claim}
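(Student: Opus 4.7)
The plan is to transport the claim from the $r^+$-th abstraction back to the $r$-th one, where Lemma \ref{approxi-copy-cat-1} applies directly, and then lift the resulting congruence to an equality of absolute values at the $r^+$ level.

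First I would set up an affine correspondence between the two levels. Since $b+1 > r^+$, Lemma \ref{i=0theni-1=0} gives $\mathbb{X}_{b+1}^*\subseteq\mathbb{X}_{r^+}^*$, so the four relevant projections $\llparenthesis x\rrparenthesis_\xi$, $\llparenthesis x\rrparenthesis_{b+1}$, $\llparenthesis x'\rrparenthesis_\xi$, $\llparenthesis x'\rrparenthesis_{b+1}$ all lie in $\mathbb{X}_{r^+}^*$. Expanding the definition of $\llparenthesis z\rrparenthesis_{r^+}$ and dividing by $\beta_{m-r}^{m-1}$, a direct calculation (in the style of \eqref{app-corollary-copycat-eqn2}) yields a uniform affine relation
\[
[z]_r \;=\; \beta_{m-r^+}^{m-r}\cdot[z]_{r^+} + c,\qquad\forall\, z\in\mathbb{X}_{r^+}^*,
\]
with $c$ depending only on $m$ and $r$. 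Hence $[z_1]_r-[z_2]_r = \beta_{m-r^+}^{m-r}\cdot([z_1]_{r^+}-[z_2]_{r^+})$ for any $z_1,z_2\in\mathbb{X}_{r^+}^*$.

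Second, I would invoke Lemma \ref{approxi-copy-cat-1} with $i=b+1$ (permitted since $r<b+1<\xi$ and $r<\xi-1$) and combine it with the standing hypothesis of Lemma \ref{corollary-approxi-copy-cat} itself. Subtracting the two resulting congruences gives
\[
[\llparenthesis x\rrparenthesis_\xi]_r-[\llparenthesis x\rrparenthesis_{b+1}]_r \;\equiv\; [\llparenthesis x'\rrparenthesis_\xi]_r-[\llparenthesis x'\rrparenthesis_{b+1}]_r \pmod{\beta_{m-\xi}^{m-r}}.
\]
Writing $\beta_{m-\xi}^{m-r}=\beta_{m-\xi}^{m-r^+}\cdot\beta_{m-r^+}^{m-r}$ and using the affine relation to factor $\beta_{m-r^+}^{m-r}$ out of both sides reduces this to the $r^+$-level congruence
\[
[\llparenthesis x\rrparenthesis_\xi]_{r^+}-[\llparenthesis x\rrparenthesis_{b+1}]_{r^+} \;\equiv\; [\llparenthesis x'\rrparenthesis_\xi]_{r^+}-[\llparenthesis x'\rrparenthesis_{b+1}]_{r^+} \pmod{\beta_{m-\xi}^{m-r^+}}.
\]

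Third, I would upgrade this congruence to the desired equality of absolute values. Telescoping each side as $\sum_{s=b+1}^{\xi-1}\bigl([\llparenthesis x\rrparenthesis_{s+1}]_{r^+}-[\llparenthesis x\rrparenthesis_s]_{r^+}\bigr)$ and using \eqref{idx-distance-bound2} at the common index $p_s=\mathrm{idx}(\llparenthesis x\rrparenthesis_s,y)=\mathrm{idx}(\llparenthesis x'\rrparenthesis_s,y)$ (which is part of the current contradiction setup), each summand is bounded in absolute value by $\tfrac{1}{2}(\beta_{m-p_s-1}^{m-r^+}-\beta_{m-p_s}^{m-r^+})$. A geometric-style telescoping bound built from Fact \ref{app-equal-index-2-equal-len-fac1} shows the total magnitude on either side is strictly less than $\beta_{m-\xi}^{m-r^+}$. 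Combined with the monotone progression of both sequences $\llparenthesis x\rrparenthesis_\xi,\dots,\llparenthesis x\rrparenthesis_{b+1}$ and $\llparenthesis x'\rrparenthesis_\xi,\dots,\llparenthesis x'\rrparenthesis_{b+1}$ in the same direction at the $r^+$ level (a consequence of Lemma \ref{projection} and the matching indices), the congruence forces the signed differences—hence their absolute values—to coincide. The main obstacle will be the sign-alignment step: the congruence modulo $\beta_{m-\xi}^{m-r^+}$ alone leaves a sign ambiguity, and ruling out a ``wrap-around'' requires verifying at each level $s$ that the unique move from $\llparenthesis x\rrparenthesis_{s+1}$ to $\llparenthesis x\rrparenthesis_s$ inside its enclosing interval of surrounding vertices mirrors the corresponding move for $x'$; this repeats the bookkeeping already done in the proof of Lemma \ref{approxi-copy-cat-1}.
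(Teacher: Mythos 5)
Your route is genuinely different from the paper's: the paper proves the claim by contradiction, decomposing $[\llparenthesis x\rrparenthesis_{\xi}]_{r^+}-[\llparenthesis x\rrparenthesis_{b+1}]_{r^+}$ along the $\mathrm{IDX}$-sequence and eventually contradicting Lemma~\ref{approxi-copy-cat} through Facts~\ref{app-equal-index-2-equal-len-fac2}--\ref{app-equal-index-2-equal-len-fac3}, whereas you derive a modular congruence directly and then try to upgrade it. Steps 1 and 2 are sound: since $b+1>r^+$ the four projections lie in $\mathbb{X}_{r^+}^*$ by Lemma~\ref{i=0theni-1=0}, the affine relation $[z]_r=\beta_{m-r^+}^{m-r}[z]_{r^+}+\tfrac12\beta_{m-r^+}^{m-r}$ follows from \eqref{app-corollary-copycat-eqn3} with $i=r^+$, Lemma~\ref{approxi-copy-cat-1} applies because $b<\xi-1$, and subtracting its $i=b+1$ instance from the standing hypothesis, then cancelling $\beta_{m-r^+}^{m-r}$ from both sides and from the modulus, gives $A\equiv B\pmod{\beta_{m-\xi}^{m-r^+}}$, where $A,B$ denote the signed $r^+$-level differences for $x,x'$.

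The gap is in Step 3, and you flag it yourself. The bound $|A|,|B|<\beta_{m-\xi}^{m-r^+}$ together with the congruence only gives $A-B\in\{-\beta_{m-\xi}^{m-r^+},\,0,\,\beta_{m-\xi}^{m-r^+}\}$; the wrap-around cases do \emph{not} give $|A|=|B|$, and the ``verify at each level $s$'' plan you sketch would just reproduce the paper's level-by-level argument, so as written this is not yet a proof. Note also that the monotone-progression assertion is not ``a consequence of Lemma~\ref{projection}''; it is itself the substance of the order-preservation part of Lemma~\ref{approxi-copy-cat-1}'s proof. However, the gap closes more cleanly than you anticipate and without any sign analysis. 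Applying \eqref{app-corollary-copycat-eqn3} at level $r^+$ for both $i=\xi$ and $i=b+1$, and writing $[x]_{b+1}=[x]_\xi\,\beta_{m-\xi}^{m-b-1}+d_x$ with $0\le d_x<\beta_{m-\xi}^{m-b-1}$ (the nested-floor identity), the $[x]_\xi\beta_{m-\xi}^{m-r^+}$ terms cancel and one gets the exact identity $A=-d_x\,\beta_{m-b-1}^{m-r^+}+\tfrac12\sum_{b+1<j\le\xi}\beta_{m-j}^{m-r^+}$, and similarly $B=-d_{x'}\,\beta_{m-b-1}^{m-r^+}+(\text{the same additive constant})$. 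Hence $A-B=(d_{x'}-d_x)\,\beta_{m-b-1}^{m-r^+}$ with $|d_{x'}-d_x|<\beta_{m-\xi}^{m-b-1}$, and since $\beta_{m-\xi}^{m-r^+}=\beta_{m-\xi}^{m-b-1}\,\beta_{m-b-1}^{m-r^+}$ your congruence forces $d_x=d_{x'}$, giving $A=B$ outright (a stronger conclusion than the claim asks for). With that replacement for Step 3 the argument is correct and markedly shorter than the paper's contradiction proof.
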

\underline{\textit{Proof of Claim: }}

Assume that there are $d^\prime$ types of vertices, where $0\leq d^\prime\leq \xi-b-2$,  between $(\llparenthesis x\rrparenthesis_{b+1},y)$ and  $(\llparenthesis x\rrparenthesis_{\xi},y)$, whose indices are different and less                                                                                                                                                                                                                                                                                                                                                                                                                                                                                                                                                                                                                                                                                                                                                                                                                                                         than $\xi$ and greater than $\mathrm{idx}(\llparenthesis x\rrparenthesis_{b+1},y)$. And assume that the indices of these vertices (including $(\llparenthesis x\rrparenthesis_{b+1},y)$ and  $(\llparenthesis x\rrparenthesis_{\xi},y)$ are $\mathrm{IDX}(i)$ where 
\begin{itemize}
\item $1\leq i\leq d=d^\prime+2$;
\item $\mathrm{IDX}(i^\prime)<\mathrm{IDX}(i^{\prime}+1)$, where $1\leq i^\prime<d$;
\item $\mathrm{IDX}(1)=\mathrm{idx}(\llparenthesis x\rrparenthesis_{b+1},y)$;
\item $\mathrm{IDX}(d)=\mathrm{idx}(\llparenthesis x\rrparenthesis_{\xi},y)$.

\end{itemize}

Note that $\mathrm{idx}(\llparenthesis x\rrparenthesis_{\mathrm{IDX}(i)}\!,y)=\mathrm{IDX}(i)$, for $1<i<d$.

Assume for the purpose of a contradiction that

 $\hspace{10pt}\left|\left[\llparenthesis x\rrparenthesis_{\xi}\right]_{r^+}-\left[\llparenthesis x\rrparenthesis_{b+1}\right]_{r^+}\right|\neq \left|\left[\llparenthesis x^{\prime}\rrparenthesis_{\xi}\right]_{r^+}-\left[\llparenthesis x^{\prime}\rrparenthesis_{b+1}\right]_{r^+}\right|$. 

Then there must be a $c$, where $1<c\leq d$, such that 
\begin{enumerate}[(1$^\#$)]
\item
$\left|\left[\llparenthesis x\rrparenthesis_{\xi}\right]_{r^+}\!-\!\left[\llparenthesis x\rrparenthesis_{\mathrm{IDX}(c)}\right]_{r^+}\!\right|=\left|\left[\llparenthesis x^\prime\rrparenthesis_{\xi}\right]_{r^+}\!-\!\left[\llparenthesis x^\prime\rrparenthesis_{\mathrm{IDX}(c)}\right]_{r^+}\!\right|$;

\item 
$\left|\left[\llparenthesis x\rrparenthesis_{\mathrm{IDX}(c)}\right]_{r^+}\!-\!\left[\llparenthesis x\rrparenthesis_{\mathrm{IDX}(c-1)}\right]_{r^+}\!\right|\neq
\left|\left[\llparenthesis x^{\prime}\rrparenthesis_{\mathrm{IDX}(c)}\right]_{r^+}\!-\!\left[\llparenthesis x^{\prime}\rrparenthesis_{\mathrm{IDX}(c-1)}\right]_{r^+}\!\right|$.
\end{enumerate}

Because                                                                                                                                                                                                                                                                                                                        $\mathrm{IDX}(1)\geq b+1>b$, and $\mathrm{IDX}(c-1)\geq \mathrm{IDX}(1)$, we have  $\mathrm{IDX}(c-1)>b$. Therefore, 
$\mathrm{idx}(\llparenthesis x\rrparenthesis_{\mathrm{IDX}(c-1)},y)=\mathrm{idx}(\llparenthesis x^{\prime}\rrparenthesis_{\mathrm{IDX}(c-1)},y)$. 

Assume that
\begin{equation}\label{app-app-equal-index-2-equal-len-2-eqn1}
\left|[\llparenthesis x\rrparenthesis_{\mathrm{IDX}(c)}]_{r^+}\!-\![\llparenthesis x\rrparenthesis_{\mathrm{IDX}(c-1)}]_{r^+}\!\right|>
\left|[\llparenthesis x^{\prime}\rrparenthesis_{\mathrm{IDX}(c)}]_{r^+}\!-\![\llparenthesis x^{\prime}\rrparenthesis_{\mathrm{IDX}(c-1)}]_{r^+}\!\right|. 
\end{equation}
The other case is symmetric.

Assume that $\mathrm{idx}(\llparenthesis x\rrparenthesis_{\mathrm{IDX}(c-1)}\!,y)\!=\!n$. Note that $n\!\geq\!\mathrm{IDX}(c\!-\!1)$. 

Furthermore, first assume that\\[-25pt] 

\begin{align}
\left[\llparenthesis x\rrparenthesis_{\mathrm{IDX}(c)}\right]_{r^+}\!&>\!\left[\llparenthesis x\rrparenthesis_{\mathrm{IDX}(c-1)}\right]_{r^+} \label{app-app-equal-index-2-equal-len-2-eqn2}\\
\left[\llparenthesis x\rrparenthesis_{\xi}\right]_{r^+}\!&\geq\!\left[\llparenthesis x\rrparenthesis_{\mathrm{IDX}(c)}\right]_{r^+} \label{app-app-equal-index-2-equal-len-2-eqn3}
\end{align}

Let 
\begin{multline*}
\zeta:=\left(\left[\llparenthesis x\rrparenthesis_{\mathrm{IDX}(c)}\right]_{r^+}\!-\!\left[\llparenthesis x\rrparenthesis_{\mathrm{IDX}(c-1)}\right]_{r^+}\!\right)-
\left|\left[\llparenthesis x^{\prime}\rrparenthesis_{\mathrm{IDX}(c)}\right]_{r^+}\!-\!\left[\llparenthesis x^{\prime}\rrparenthesis_{\mathrm{IDX}(c-1)}\right]_{r^+}\!\right|.
\end{multline*}

By (\ref{app-app-equal-index-2-equal-len-2-eqn1}) and (\ref{app-app-equal-index-2-equal-len-2-eqn2}), $\zeta>0$.

Hence, by (\ref{idx-distance-bound1}) and (\ref{app-app-equal-index-2-equal-len-2-eqn1}), 
\begin{equation*}
\zeta\geq \beta_{m-n}^{m-r^+}.
\end{equation*}

Hence, 
\begin{equation*}
\zeta> 2\times\frac{1}{2}\left(\beta_{m-n}^{m-r^+}-1\right)
\end{equation*}

That is, 
\begin{equation*}
\zeta>\displaystyle 2\times\frac{1}{2}\sum_{i=r+2}^{n} \left(\beta_{m-i}^{m-r^+}-\beta_{m-i+1}^{m-r^+}\right)
\end{equation*}
Therefore, by (\ref{idx-distance-bound2}) and Fact \ref{app-equal-index-2-equal-len-fac1}, we have 
\begin{eqnarray*}
\zeta&>&  \displaystyle\sum_{i=r+2}^{\mathrm{IDX}(c-1)}\left| [\llparenthesis x\rrparenthesis_{i}]_{r^+}-[\llparenthesis x\rrparenthesis_{i-1}]_{r^+}\right|+ \displaystyle\sum_{i=r+2}^{\mathrm{IDX}(c-1)}\left| [\llparenthesis x^\prime\rrparenthesis_{i}]_{r^+}-[\llparenthesis x^\prime\rrparenthesis_{i-1}]_{r^+}\right|.  \nonumber\\
\end{eqnarray*}
Therefore, 
\begin{eqnarray}\label{app-app-equal-index-2-equal-len-2-eqn4}
\zeta&>& \left|\displaystyle\sum_{i=r+2}^{\mathrm{IDX}(c-1)}\left([\llparenthesis x\rrparenthesis_{i}]_{r^+}-[\llparenthesis x\rrparenthesis_{i-1}]_{r^+}\right)\right|+\left|\displaystyle\sum_{i=r+2}^{\mathrm{IDX}(c-1)}\left([\llparenthesis x^\prime\rrparenthesis_{i}]_{r^+}-[\llparenthesis x^\prime\rrparenthesis_{i-1}]_{r^+}\right)\right|\nonumber\\
&=&\left|[\llparenthesis x\rrparenthesis_{\mathrm{IDX}(c-1)}]_{r^+}-[x]_{r^+} \right|+\left|[\llparenthesis x^\prime\rrparenthesis_{\mathrm{IDX}(c-1)}]_{r^+}-[x^\prime]_{r^+} \right|.  
\end{eqnarray} 

The above argument also tell us more about (1$^\#$). That is, 
\begin{equation*}
\left[\llparenthesis x\rrparenthesis_{\xi}\right]_{r^+}\!-\!\left[\llparenthesis x\rrparenthesis_{\mathrm{IDX}(c)}\right]_{r^+}\!=\left[\llparenthesis x^\prime\rrparenthesis_{\xi}\right]_{r^+}\!-\!\left[\llparenthesis x^\prime\rrparenthesis_{\mathrm{IDX}(c)}\right]_{r^+}.
\end{equation*}

Let $\psi:=\left|[\llparenthesis x\rrparenthesis_{\xi}]_{r^+}-[x]_{r^+}\right|$.\\[-20pt]

\begin{multline*}
\psi=\left|\left(\left[\llparenthesis x\rrparenthesis_{\mathrm{IDX}(c)}\right]_{r^+}\!-\!\left[\llparenthesis x\rrparenthesis_{\mathrm{IDX}(c-1)}\right]_{r^+}\!\right)+ \right.\\
\left([\llparenthesis x\rrparenthesis_{\mathrm{IDX}(c-1)}]_{r^+}-[x]_{r^+} \right)
+
\left.\left[\llparenthesis x\rrparenthesis_{\xi}\right]_{r^+}\!-\!\left[\llparenthesis x\rrparenthesis_{\mathrm{IDX}(c)}\right]_{r^+}\!\right|.
\end{multline*}
Recall that, by (\ref{app-app-equal-index-2-equal-len-2-eqn4})
\begin{equation*}
\left[\llparenthesis x\rrparenthesis_{\mathrm{IDX}(c)}\right]_{r^+}\!-\!\left[\llparenthesis x\rrparenthesis_{\mathrm{IDX}(c-1)}\right]_{r^+}\!> \\
\left| [\llparenthesis x\rrparenthesis_{\mathrm{IDX}(c-1)}]_{r^+}-[x]_{r^+} \right|.
\end{equation*}
Therefore, 
\begin{eqnarray*}
\psi&\geq& \left(\left[\llparenthesis x\rrparenthesis_{\mathrm{IDX}(c)}\right]_{r^+}\!-\!\left[\llparenthesis x\rrparenthesis_{\mathrm{IDX}(c-1)}\right]_{r^+}\!\right) - \\
&&\left|[\llparenthesis x\rrparenthesis_{\mathrm{IDX}(c-1)}]_{r^+}-[x]_{r^+} \right|
+\left|\left[\llparenthesis x\rrparenthesis_{\xi}\right]_{r^+}\!-\!\left[\llparenthesis x\rrparenthesis_{\mathrm{IDX}(c)}\right]_{r^+}\!\right|.\qquad [\mbox{by (\ref{app-app-equal-index-2-equal-len-2-eqn3})}]\\
\end{eqnarray*}
Therefore, 
\begin{eqnarray*}
\psi&>& \left|[\llparenthesis x^\prime\rrparenthesis_{\mathrm{IDX}(c-1)}]_{r^+}-[x^\prime]_{r^+} \right|+\left|\left[\llparenthesis x^{\prime}\rrparenthesis_{\mathrm{IDX}(c)}\right]_{r^+}\!-\!\left[\llparenthesis x^{\prime}\rrparenthesis_{\mathrm{IDX}(c-1)}\right]_{r^+}\!\right|+\\
&&\left|\left[\llparenthesis x^\prime\rrparenthesis_{\xi}\right]_{r^+}\!-\!\left[\llparenthesis x^\prime\rrparenthesis_{\mathrm{IDX}(c)}\right]_{r^+}\!\right| \hspace{10pt}[\mbox{by (\ref{app-app-equal-index-2-equal-len-2-eqn4}) and (1$^\#$)}] \\
\end{eqnarray*}
Therefore,
\begin{eqnarray*}
\psi&>& \left|[\llparenthesis x^\prime\rrparenthesis_{\mathrm{IDX}(c-1)}]_{r^+}-[x^\prime]_{r^+} \right.+\left.\left[\llparenthesis x^{\prime}\rrparenthesis_{\mathrm{IDX}(c)}\right]_{r^+}\!-\!\left[\llparenthesis x^{\prime}\rrparenthesis_{\mathrm{IDX}(c-1)}\right]_{r^+}\!\right.+\\
&&\left.\left[\llparenthesis x^\prime\rrparenthesis_{\xi}\right]_{r^+}\!-\!\left[\llparenthesis x^\prime\rrparenthesis_{\mathrm{IDX}(c)}\right]_{r^+}\!\right| \\ 
&=&\left|[\llparenthesis x^\prime\rrparenthesis_{\xi}]_{r^+}-[x^\prime]_{r^+}\right|.
\end{eqnarray*}

Because of the assumption that 
$\llbracket x\rrbracket_{r}^{min}\!-\![\llparenthesis x\rrparenthesis_\xi]_{r}\!\equiv\llbracket x^{\prime}\rrbracket_{r}^{min}\!-
\![\llparenthesis x^{\prime}\rrparenthesis_\xi]_{r}\not\equiv 0\hspace{3pt}(\mbox{mod }\beta_{m-\xi}^{m-r})$, it is easy to see that $\llparenthesis x\rrparenthesis_{\xi}\geq x\Leftrightarrow\llparenthesis x^\prime\rrparenthesis_{\xi}\geq x^\prime$.
Therefore, we have  $[x]_{r^+}\mbox{ mod }\beta_{m\!-\!\xi}^{m\!-\!r^+}\neq [x^{\prime}]_{r^+} \mbox{ mod } \beta_{m\!-\!\xi}^{m\!-\!r^+}$, by Fact \ref{app-equal-index-2-equal-len-fac2} and Fact \ref{app-equal-index-2-equal-len-fac3}.  But by Lemma \ref{approxi-copy-cat}, $[x]_{r^+}\equiv [x^{\prime}]_{r^+}$ mod $\beta_{m\!-\!\xi}^{m\!-\!r^+}$. A contradiction occurs. 

In the last arguments, if some of the assumptions (\ref{app-app-equal-index-2-equal-len-2-eqn1}), (\ref{app-app-equal-index-2-equal-len-2-eqn2}) and (\ref{app-app-equal-index-2-equal-len-2-eqn3}) 
  do not hold, then the arguments need to be revised a little bit, but are very similar. It turns out that, if even number of these three assumptions are violated, then 
$\left|[\llparenthesis x\rrparenthesis_{\xi}]_{r^+}\!-\![x]_{r^+}\right|\!>\!\left|[\llparenthesis x^\prime\rrparenthesis_{\xi}]_{r^+}\!-\![x^\prime]_{r^+}\right|$;
otherwise, $\left|[\llparenthesis x\rrparenthesis_{\xi}]_{r^+}\!-\![x]_{r^+}\right|<\left|[\llparenthesis x^\prime\rrparenthesis_{\xi}]_{r^+}\!-\![x^\prime]_{r^+}\right|$.

In summary, we arrive at a contradiction in all the cases. Therefore, the claim holds.

\underline{\textit{Q.E.D. of Claim}. }\\[-1pt]

Assume that 
\begin{equation}\label{app-app-equal-index-2-equal-len-2-eqn5}
\llparenthesis x\rrparenthesis_{\xi}\geq \llparenthesis x\rrparenthesis_{b+1}\geq \llparenthesis x\rrparenthesis_{b}\geq x.
\end{equation}

We have $\llparenthesis x^\prime\rrparenthesis_{\xi}\geq x^\prime$, since $\llbracket x\rrbracket_{r}^{min}\!-\![\llparenthesis x\rrparenthesis_\xi]_{r}\!\equiv\llbracket x^{\prime}\rrbracket_{r}^{min}\!-
\![\llparenthesis x^{\prime}\rrparenthesis_\xi]_{r}\not\equiv 0 \hspace{3pt}(\mbox{mod }\beta_{m-\xi}^{m-r})$. Note that, $\llbracket x\rrbracket_{r}^{min}\!-\![\llparenthesis x\rrparenthesis_\xi]_{r}\!\equiv\llbracket x^{\prime}\rrbracket_{r}^{min}\!-
\![\llparenthesis x^{\prime}\rrparenthesis_\xi]_{r}\hspace{3pt}(\mbox{mod }\beta_{m-\xi}^{m-r})$ implies that $\llbracket x\rrbracket_{r}^{min}\!-\![\llparenthesis x\rrparenthesis_\xi]_{r}\!=\llbracket x^{\prime}\rrbracket_{r}^{min}\!-
\![\llparenthesis x^{\prime}\rrparenthesis_\xi]_{r}$.

We show that  $x^\prime$ cannot be strictly between $\llparenthesis x^\prime\rrparenthesis_{b+1}$ and $\llparenthesis x^\prime\rrparenthesis_\xi$. Assume for a contradiction that $x^\prime$ is between $\llparenthesis x^\prime\rrparenthesis_{b+1}$ and $\llparenthesis x^\prime\rrparenthesis_\xi$. 
Note that $\mathrm{idx}(\llparenthesis x\rrparenthesis_{b+1})<\xi$ if $\llparenthesis x\rrparenthesis_{b+1}\neq \llparenthesis x\rrparenthesis_\xi$.
Then $\mathrm{idx}(\llparenthesis x\rrparenthesis_{b+1},y)=\mathrm{idx}(\llparenthesis x^\prime\rrparenthesis_{b+1},y)$. Assume that $\mathrm{idx}(\llparenthesis x^\prime\rrparenthesis_{b+1},y)=I_{b+1}$.
Because  $x^\prime$ is strictly between $\llparenthesis x^\prime\rrparenthesis_{b+1}$ and $\llparenthesis x^\prime\rrparenthesis_\xi$, $\frac{1}{2}\beta_{m-I_{b+1}}^{m-r^+}+1\leq [\llparenthesis x^\prime\rrparenthesis_\xi]_{r^+}-[x^\prime]_{r^+}\leq\beta_{m-I_{b+1}}^{m-r^+}-1$, 
and  $\mathpzc{U}_r^*+\frac{1}{2}\beta_{m-I_{b+1}}^{m-r}\leq [\llparenthesis x^\prime\rrparenthesis_\xi]_{r}-\llbracket x^\prime\rrbracket_r^{min}\leq\beta_{m-I_{b+1}}^{m-r}$.
On the other hand, $[\llparenthesis x\rrparenthesis_\xi]_{r^+}-[x]_{r^+}\geq \beta_{m-I_{b+1}}^{m-r^+}+1$ and 
$[\llparenthesis x\rrparenthesis_\xi]_{r}-\llbracket x\rrbracket_{r}^{min                        }\geq \beta_{m-I_{b+1}}^{m-r}+\mathpzc{U}_r^*$, 
because $\mathrm{idx}(\llparenthesis x\rrparenthesis_\xi,y)\in \mathbb{X}_{I_{b+1}}$ (cf. Lemma \ref{i=0theni-1=0}). Therefore, 
$\llbracket x\rrbracket_{r}^{min}\!-\![\llparenthesis x\rrparenthesis_\xi]_{r}\!\neq\llbracket x^{\prime}\rrbracket_{r}^{min}\!-
\![\llparenthesis x^{\prime}\rrparenthesis_\xi]_{r}$. We arrive at a contradiction.  
If $\llparenthesis x^\prime\rrparenthesis_{\xi}<\llparenthesis x^\prime\rrparenthesis_{b+1}$, then $\llparenthesis x^\prime\rrparenthesis_{b+1}$ would be $\llparenthesis x^\prime\rrparenthesis_\xi$ because $(\llparenthesis x^\prime\rrparenthesis_\xi,y)\in\mathbb{X}_{b+1}^*$, again a contradiction. 
Therefore, 
\begin{equation*}
\llparenthesis x^\prime\rrparenthesis_{\xi}\geq \llparenthesis x^\prime\rrparenthesis_{b+1}\geq x^\prime.
\end{equation*}
Moreover, if $\llparenthesis x^\prime\rrparenthesis_{b}>\llparenthesis x^\prime\rrparenthesis_{b+1}$, then $\llparenthesis x^\prime\rrparenthesis_b=\llparenthesis x^\prime\rrparenthesis_{b+1}$, since $(\llparenthesis x^\prime\rrparenthesis_{b+1},y)\in\mathbb{X}_b^*$, still a contradiction. Therefore,
\begin{equation}\label{app-app-equal-index-2-equal-len-2-eqn6}
\llparenthesis x^\prime\rrparenthesis_{\xi}\geq \llparenthesis x^\prime\rrparenthesis_{b+1}\geq \llparenthesis x^\prime\rrparenthesis_b\geq x^\prime.
\end{equation} 

By (\ref{app-app-equal-index-2-equal-len-2-eqn5}), 
\begin{eqnarray*}
\psi
&\geq& \left|[\llparenthesis x\rrparenthesis_{\xi}]_{r^+}-[\llparenthesis x\rrparenthesis_{b+1}]_{r^+}\right|+\left|[\llparenthesis x\rrparenthesis_{b+1}]_{r^+}-[\llparenthesis x\rrparenthesis_{b}]_{r^+}\right|
\end{eqnarray*}

Hence, 
\begin{eqnarray*}
\psi&\geq& \left|[\llparenthesis x\rrparenthesis_{\xi}]_{r^+}-[\llparenthesis x\rrparenthesis_{b+1}]_{r^+}\right|+\beta_{m-j}^{m-r^+}  \qquad [\mbox{ by (\ref{idx-distance-bound1}) }]\\
&\geq& \left|[\llparenthesis x\rrparenthesis_{\xi}]_{r^+}-[\llparenthesis x\rrparenthesis_{b+1}]_{r^+}\right|+\beta_{m-j^\prime-1}^{m-r^+}
\end{eqnarray*}

Therefore, 
\begin{eqnarray*}
\psi &>& \left|[\llparenthesis x\rrparenthesis_{\xi}]_{r^+}-[\llparenthesis x\rrparenthesis_{b+1}]_{r^+}\right|+\sum_{r^+\leq i\leq j^{\prime}}\left(\beta_{m-i-1}^{m-r^+}-\beta_{m-i}^{m-r^+}\right)
\end{eqnarray*}

By Claim \ref{app-equal-index-2-equal-len-2}, we have 
\begin{eqnarray*}
\psi &>& \left|[\llparenthesis x^{\prime}\rrparenthesis_{\xi}]_{r^+}-[\llparenthesis x^{\prime}\rrparenthesis_{b+1}]_{r^+}\right|+\sum_{r^+\leq i\leq j^{\prime}}\left(\beta_{m-i-1}^{m-r^+}-\beta_{m-i}^{m-r^+}\right) 
\end{eqnarray*}

By (\ref{idx-distance-bound2}) and Fact \ref{app-equal-index-2-equal-len-fac1}, we have
\begin{eqnarray*}
\psi &>&\left|[\llparenthesis x^{\prime}\rrparenthesis_{\xi}]_{r^+}-[\llparenthesis x^{\prime}\rrparenthesis_{b+1}]_{r^+}\right|+\displaystyle \sum_{r^+\leq i\leq b}\!\!\left([\llparenthesis x^\prime\rrparenthesis_{i+1}]_{r^+}\!\!-\![\llparenthesis x^\prime\rrparenthesis_{i}]_{r^+}\!\right) \hspace{3pt} \\ 
&=& [\llparenthesis x^{\prime}\rrparenthesis_{\xi}]_{r^+}-[x^\prime]_{r^+}  \qquad\qquad\quad [\mbox{ by (\ref{app-app-equal-index-2-equal-len-2-eqn6}) }]
\end{eqnarray*}
 
Finally, by Fact \ref{app-equal-index-2-equal-len-fac2} and Fact \ref{app-equal-index-2-equal-len-fac3}, $[x]_{r^+}\not\equiv [x^{\prime}]_{r^+}$  mod $\beta_{m-\xi}^{m-r^+}$. We arrive at a contradiction. 

If (\ref{app-app-equal-index-2-equal-len-2-eqn5}) does not hold, the last arguments need to be revised a little bit, but very similar.                                                                                                                                               
We need take care of the order of the vertices $\llparenthesis x\rrparenthesis_\xi, \llparenthesis x\rrparenthesis_{b+1}, \llparenthesis x\rrparenthesis_b$ and $x$. That is, if the order of (\ref{app-app-equal-index-2-equal-len-2-eqn5}) is changed, then the order of (\ref{app-app-equal-index-2-equal-len-2-eqn6}) will be changed accordingly, using similar arguments.  
\end{proof}

\end{remark}

\end{document}